\theoremstyle{definition}
\newtheorem{definition}{Definition}[section]
\newtheorem{theorem}{Theorem}[section]
\newcommand{\eqn}[1]{\begin{equation} #1 \end{equation}}%eq without labels
\newcommand{\al}[1]{\begin{align} #1\end{align}}
\newcommand{\nn}{\nonumber\\}
\newcommand{\be}{\begin{equation}}
\newcommand{\e}{\end{equation}}
\newcommand{\aln}[1]{\begin{align}#1\end{align}}
\newcommand{\ga}[1]{\begin{gather}#1\end{gather}}
\newcommand{\de}{\partial}
\newcommand{\ba}{\begin{eqnarray}}
\newcommand{\ea}{\end{eqnarray}}
\newcommand{\ee}{\end{equation}}
\newcommand{\f}{\frac}
\newcommand{\s}{\sqrt}
\newcommand{\vp}{\varphi}
\newcommand{\ti}{\tilde}
\newcommand{\ap}{\alpha}
\newcommand{\ddd}{\cdot\cdot\cdot}
\newcommand{\no}{\nonumber \\}
\newcommand{\la}{\langle}
\newcommand{\lb}{\rangle}
\newcommand{\bea}{\begin{eqnarray}}
\newcommand{\eea}{\end{eqnarray}}
\newcommand{\bes}{\begin{equation*}}
\newcommand{\beas}{\begin{eqnarray*}}
\newcommand{\eeas}{\end{eqnarray*}}
\newcommand{\bas}{\begin{array*}}
\newcommand{\eas}{\end{array*}}
\newcommand{\ees}{\end{equation*}}
\newcommand{\ep}{\epsilon}
\def\mod{\ {\rm mod}\ }
\renewcommand{\*}{ &=& }
\numberwithin{equation}{section}
\title{Entanglement structure in quantum many-body systems, field theories, and holography}
\author{MORI, Takato}
\keywords{{Entanglement} {PhD Dissertation} {Physics} {The Graduate University for Advanced Studies (SOKENDAI)}}
\begin{document}
\begin{CJK}{UTF8}{ipxm}

\renewcommand{\thechapter}{\Roman{chapter}}

\frontmatter

\maketitle

%\include{Dedication/dedication}
%%\include{Declaration/declaration}
%\include{Acknowledgement/acknowledgement}
%\include{Abstract/abstract}
%\include{Publication/publication}

% ******************************* Thesis Dedidcation ********************************

\begin{dedication} 

I would like to dedicate this thesis to my loving fianc\'{e}e Shiori Homma and my parents and brother who supported me for a long time \dots

\end{dedication}
%\include{Declaration/declaration}
% ************************** Thesis Acknowledgements **************************

\begin{acknowledgements}      
I am deeply grateful to my supervisor Satoshi Iso for their invaluable guidance and support throughout my PhD journey. %He imparted upon me a wealth of knowledge regarding the thought processes and methods of a physicist. 
His passion and integrity in the pursuit of research, without resorting to shortcuts, served as a valuable example to me. 
Additionally, he taught me the significance of setting goals that are oriented towards understanding physical phenomena, without getting sidetracked by fleeting trends, but rather performing research like a deep diver, constantly striving to grasp the main objective.

I am extremely indebted to my host Tadashi Takayanagi and Tomonori Ugajin during my long-term stay at YITP for their support and guidance in expanding my expertise in holography.
%
%I am extremely indebted to my host Tadashi Takayanagi during my atom-type fellow at YITP for his support and guidance in expanding my expertise in holography. %His collaborative approach to research and project management is highly commendable and I was honored to have the opportunity to learn and discuss with such a leading pioneer for such an extended period.
%Furthermore, I am also greatly thankful to Tomonori Ugajin for his kindness, mentorship, and invaluable instruction in the large-$c$ calculations and CFT techniques for holography. He was always willing to share his expertise with me and was a constant source of support during my time at YITP. 
%
I extend my sincere gratitude to Hiroaki Matsueda for his steadfast support and collaboration throughout the past three years. %as we delved into various research avenues in the fields of tensor networks and information geometry. %Despite not yet achieving our original research objective, his patience and instruction in teaching me (not trained in condensed matter physics nor his student!) about condensed matter physics and tensor networks through our discussions has been invaluable. I am deeply thankful for his continued support and guidance.
I am also deeply grateful to Beni Yoshida for his exceptional hospitality and invaluable contributions during my long-term stay at PITP. %as part of the SOKENDAI Student Dispatch Program. 
His mastery of quantum information and holography, his skill in devising thought-provoking research problems, and his utilization of quantum information theoretical arguments are highly commendable. %Additionally, I appreciate his informal discussion style which fostered a stimulating and intellectually enriching environment. I am truly grateful for his guidance and support.
I would like to acknowledge the generous support of the Atsumi Foundation that financed my final PhD year, as well as SOKENDAI for my five years there. Furthermore, the atom-type fellow program provided by YITP and the SOKENDAI Student Dispatch Program by SOKENDAI were essential for the completion of this dissertation and provided invaluable experiences for me. I would also like to express my gratitude to the welcoming and stimulating environment at YITP and PITP. %, where many people from different disciplines actively engage in seminars, informal discussions, and workshops. 
This research is supported in part by Perimeter Institute for Theoretical Physics. Research at Perimeter Institute is supported by the Government of Canada through the Department of Innovation, Science and Economic Development and by the Province of Ontario through the Ministry of Colleges and Universities.

%I would like to extend my sincere thanks to my ex-colleague Katsuta Sakai, for his unwavering support, patience, and guidance over the years. His expertise in quantum field theories and willingness to share their knowledge with me was invaluable. %I am deeply grateful for his frank and honest teaching which helped me to gain a deeper understanding of this complex subject.
%I am also beholden to the support of my colleagues and friends, Takao Suyama, Noburo Shiba, Rinto Kuramochi, and Hiroyuki Adachi, during my PhD journey who provided me with valuable feedback, encouragement and support. %Their contributions are greatly appreciated and have played a vital role in the completition of this dissertation.
%
I would also like to acknowledge people I met and discussed at KEK, YITP, PITP, and workshops as well as on Twitter. Many thanks to 
%Katsuta Sakai, Takao Suyama, Noburo Shiba, Rinto Kuramochi, Hiroyuki Adachi,
%Masahiro Hotta, Guifre Vidal, Alex May, Yuya Kusuki, Yoshifumi Nakata, Zixia Wei, Kotaro Tamaoka, Ali Mollabashi, Ryotaro Suzuki, Keiichiro Furuya, Katsumasa Nakayama, Etsuko Ito, Shi Chen, Akira Matsumoto, Yoshiyasu Ito, Sotaro Sugishita, Hiromasa Watanabe, Kotaro Murakami, Jun Nishimura, Yu-ki Suzuki, Yusuke Taki, Taishi Kawamoto, Naritaka Oshita, Issei Koga, Shin'ya Mizoguchi, Kazunori Kohri, Takahiko Matsubara, Furugori Hideo, Makoto Natsuume, Yasusada Nambu, Hidetaka Manabe, Robert Mann, Eduardo Martin-Martinez, Erickson Tjoa, Bruno de Souza Leao Torres, Masazumi Honda, Masataka Watanabe, Tokiro Numasawa and so many more for all the great times and useful discussions.
Alex May, Yu-ki Suzuki, Taishi Kawamoto, Naritaka Oshita, Hidetaka Manabe, Yoshifumi Nakata, Katsuta Sakai, Takao Suyama, Rinto Kuramochi, Hiroyuki Adachi, Kotaro Tamaoka, Yuya Kusuki, Zixia Wei, Yusuke Taki, Etsuko Ito, Jun Nishimura, Issei Koga, Makoto Natsuume, Erickson Tjoa, Kohdai Kuroiwa and so many more for all the great times and useful discussions.

%I would also like to express my appreciation for the support and camaraderie provided by the members of the choral club at Tsukuba University, Mukudori. During my PhD, it was a great source of relaxation and respite from my research to be able to bond and share a common hobby with friends. The club has played a significant role in my well-being and I am thankful for the positive impact it has had on my PhD journey.
%
Finally, I would like to extend my heartfelt thanks to my family for all their unwavering support and encouragement throughout my entire academic journey. %Without their love and support, completing my PhD would not have been possible.
Last but not the least, most importantly, I am very thankful to my fianc\'ee Shiori. Thank you for being there for me through the challenging and the triumphant moments during my PhD. I am fortunate to have met her here in Tsukuba and to have shared so many wonderful moments over the past four years. Thank you for making my life doubly exponentially fun.

\end{acknowledgements}

% ************************** Thesis Abstract *****************************
% Use `abstract' as an option in the document class to print only the titlepage and the abstract.
\begin{abstract}
The aim of this dissertation is to clarify the structure of entanglement, a type of quantum correlations, in various quantum systems with a large number of degrees of freedom for holography between generic quantum systems and spacetimes toward a quantum description of our universe. Previous examinations of entanglement and holography have focused on specific classes of quantum systems due to the lack of computational techniques in field theory and the inherent limitation of holography. This dissertation informs various methods and formalisms to overcome these difficulties by extending the target quantum systems with mass and interactions, boundaries, and geometric variational ansatze. These approaches provide insights into the generalization of holography from the bottom up. This dissertation initiates a comprehensive study beyond conventional holography by establishing new techniques in quantum field theory, holography, and tensor networks. Focusing on entanglement entropy, we found it is generally expressed in terms of renormalized two-point correlators of both fundamental and composite operators. Beyond entanglement entropy, we found the operational meaning of the entanglement structure in generic tensor networks. Furthermore, we established a correct prescription for the AdS/BCFT correspondence with a local operator quench.

%700-2000 words
\end{abstract}

\chapter*{Publication List}
This Ph.D. dissertation is based on
\begin{enumerate}[label={[\arabic*]}]
    %\cite{Mori:2022xec}
    \item%{Mori:2022xec}
    \textbf{Takato~Mori}, Hidetaka~Manabe and Hiroaki~Matsueda,
    ``Entanglement distillation toward minimal bond cut surface in tensor networks,''
    \href{https://journals.aps.org/prd/abstract/10.1103/PhysRevD.106.086008}{Phys. Rev. D \textbf{106}, no.8, 086008 (2022)
    doi:10.1103/PhysRevD.106.086008}
    \href{https://arxiv.org/abs/2205.06633}{[arXiv:2205.06633 [hep-th]]}.
    %0 citations counted in INSPIRE as of 18 May 2022

    %\cite{Kawamoto:2022etl}
    \item%{Kawamoto:2022etl}
    Taishi~Kawamoto, \textbf{Takato~Mori}, Yu-ki~Suzuki, Tadashi~Takayanagi and Tomonori~Ugajin,
    ``Holographic local operator quenches in BCFTs,''
    \href{https://link.springer.com/article/10.1007/JHEP05(2022)060}{JHEP \textbf{05}, 060 (2022)}\\
    \href{https://link.springer.com/article/10.1007/JHEP05(2022)060}{doi:10.1007/JHEP05(2022)060}
    \href{https://arxiv.org/abs/2203.03851}{[arXiv:2203.03851 [hep-th]]}.
    %1 citations counted in INSPIRE as of 13 May 2022

	%\cite{Iso:2021dlj}
    \item%{Iso:2021dlj}
    Satoshi~Iso, \textbf{Takato~Mori} and Katsuta~Sakai,
    ``Wilsonian Effective Action and Entanglement Entropy,''
    \href{https://www.mdpi.com/2073-8994/13/7/1221}{Symmetry \textbf{13}, no.7, 1221 (2021)
    doi:10.3390/sym13071221}
    \href{https://arxiv.org/abs/2105.14834}{[arXiv:2105.14834 [hep-th]]}.
    %1 citations counted in INSPIRE as of 06 Dec 2021
	
	%\cite{Iso:2021rop}
    \item%{Iso:2021rop}
    Satoshi~Iso, \textbf{Takato~Mori} and Katsuta~Sakai,
    ``Non-Gaussianity of entanglement entropy and correlations of composite operators,''
    \href{https://journals.aps.org/prd/abstract/10.1103/PhysRevD.103.125019}{Phys. Rev. D \textbf{103}, no.12, 125019 (2021)
    doi:10.1103/PhysRevD.103.125019}
    \href{https://arxiv.org/abs/2105.02598}{[arXiv:2105.02598 [hep-th]]}.
    %2 citations counted in INSPIRE as of 06 Dec 2021
	
	%\cite{Iso:2021vrk}
	\item%{Iso:2021vrk}
	Satoshi~Iso, \textbf{Takato~Mori} and Katsuta~Sakai,
	``Entanglement entropy in scalar field theory and $\mathbb{Z}_M$ gauge theory on Feynman diagrams,''
	\href{https://journals.aps.org/prd/abstract/10.1103/PhysRevD.103.105010}{Phys. Rev. D \textbf{103}, no.10, 105010 (2021)
	doi:10.1103/PhysRevD.103.105010}
	\href{https://arxiv.org/abs/2103.05303}{[arXiv:2103.05303 [hep-th]]}.
	%2 citations counted in INSPIRE as of 16 Jun 2021
\end{enumerate}

%The other papers I published during the Ph.D. course are

% *********************** Adding TOC and List of Figures ***********************

\tableofcontents

%\listoffigures

%\listoftables

% \printnomenclature[space] space can be set as 2em between symbol and description
%\printnomenclature[3em]

%\printnomenclature

% ******************************** Main Matter *********************************
\mainmatter

%\renewcommand{\cleardoublepage}{}
%\renewcommand{\clearpage}{}
%\renewcommand{\newpage}{}
%\makeatletter
%\renewcommand\chapter{%
%\if@openright\cleardoublepage\else\fi
%\thispagestyle{plain}%
%\global\@topnum\z@
%\@afterindenttrue
%\secdef\@chapter\@schapter}
%\makeatother

\markboth{Introduction}{Introduction}
\chapter*{Introduction}
\addcontentsline{toc}{chapter}{Introduction}
%\fancyhead{}
%\fancyhead[RO,LE]{Introduction}

%\section*{Introduction}
The theory of quantum gravity has been one of the most important goals in high-energy physics for decades. It is thought to be essential for a consistent description of the origin of our universe and the nature of black holes. A promising approach for quantum gravity is a so-called holographic duality, or more concretely, the AdS/CFT correspondence~\cite{Maldacena:1997re}. It relates quantum gravity around the anti-de Sitter (AdS) spacetime with interacting quantum field theory (QFT) on a fixed background. From this correspondence, it is known that a minimal surface in a bulk spacetime called the anti-de Sitter spacetime corresponds to entanglement entropy, a measure of quantum correlations, of a conformal field theory (CFT) with large degrees of freedom~\cite{Ryu:2006bv,Ryu:2006ef,Hubeny:2007xt}. Motivated by this duality, we aim at extending it to more general quantum systems other than CFTs to understand how various spacetimes including ours, which are beyond AdS, arise from their entanglement (Fig.\ref{fig:dissertation}). We could say this is an entanglement-based approach to resolve the lack of background independence, an obvious obstacle in the AdS/CFT correspondence. It is also essential to study black hole physics under various backgrounds and matter fields as the microscopic description of black hole entropy is intimately related to entanglement entropy~\cite{Solodukhin:2011gn,Almheiri:2020cfm,Almheiri:2019hni,Penington:2019kki}. Besides holographic duality, the study of entanglement in various quantum systems is also motivated from condensed matter physics and lattice QFTs. Entanglement entropy is a candidate for order parameters to describe quantum phase transition and topological orders for its nonlocality \cite{PhysRevA.66.032110,Osterloh_2002,PhysRevLett.90.227902,Jin_2004,Holzhey:1994we,Calabrese:2004eu,PhysRevLett.96.110404,PhysRevLett.96.110405,Ibieta-Jimenez:2019wwo}. It is very important to establish a systematic way to study entanglement in various quantum systems beyond the existing cases.

%Towards the emergence of our universe from quantum correlations, we have attempted various approaches to reveal the nature of entanglement in quantum field theories, holographic theories, and tensor networks (Fig.\ref{fig:dissertation}). 

To achieve this goal, we need to understand the anatomy of holography and entanglement. We attempt to answer the following questions through this dissertation: 
\begin{itemize}
\setlength{\parskip}{0pt} % 段落間
\setlength{\itemsep}{3pt} % 項目間
    \item How can entanglement entropy be evaluated in generic QFTs? (Chapter \ref{ch:1})
    \item Can we add any additional ingredients in the %standard
    AdS/CFT correspondence to realize more nontrivial spacetime and dynamics? How can we formulate it? (Chapter \ref{ch:2}, \ref{ch:2-2})
    \item Can we generalize the formalism of holography itself in light of the entanglement structure? (Chapter \ref{ch:3})
\end{itemize}
From top to bottom, each question will be answered by using a diagrammatic analysis in QFTs, the AdS/BCFT correspondence~\cite{Takayanagi2011} from holography, and tensor networks to describe quantum many-body states, respectively.

%%%%%%%%%%%%%%%%%%%%%%%%%%%%%%%%%%%%

Useful sources to complete this dissertation include
\cite{BB09394904} for fundamental knowledge of quantum information related to entanglement (in Japanese);
\cite{nielsen_chuang_2010} for quantum information and quantum computations;
\cite{1130282270516945024} for entanglement in quantum many-body systems and tensor networks (in Japanese); 
\cite{Bridgeman:2016dhh,Biamonte:2017dgr} for tensor networks; 
\cite{R300000001-I031534480-00,Tomitsuka:2019} for relativistic quantum information theory (in Japanese); 
\cite{Casini:2009sr} for calculation techiniques for entanglement entropy in free field theories;
\cite{Rangamani:2016dms,Rangamani_2017,Nishioka_2009,Wu_2019} and \cite{1020000782220175744} (in Japanese) for holographic entanglement entropy; \cite{RevModPhys.90.035007} for entanglement entropy and renormalization group flow; 
\cite{Banerjee:2018,Natsuume:2014sfa,Natsuume_2015,Aharony:1999ti} for reviews in holography;
\cite{Kaplan:2016} for the bottom-up approaches for the AdS/CFT correspondence; 
\cite{Harlow:2014yka,Hartman:2015} for detailed and pedagogical lectures on black holes, quantum information, and holography including some recent progress; 
\cite{Asplund:2013zba} for the large-$c$ calculation involving heavy states;
\cite{Di_Francesco_1997,polchinski_1998} and \cite{BB19618269,BB3163561X,BB29063466} (in Japanese) for two-dimensional (boundary) CFTs; 
\cite{Takayanagi2011,Fujita:2011fp} for the AdS/BCFT correspondence.

%%%%%%%%%%%%%%%%%%%%%%%%%%%%%%%%%%%%%%%%%%%%%%%%%%%%%

%
\begin{figure}[t]
    \centering
\tikzset{every picture/.style={line width=0.75pt}} %set default line width to 0.75pt     
\begin{adjustbox}{width=1.1\textwidth}
\hspace*{-30pt}
\begin{tikzpicture}[x=0.75pt,y=0.75pt,yscale=-1,xscale=1]
%uncomment if require: \path (0,276); %set diagram left start at 0, and has height of 276
%Straight Lines [id:da12003671728809595] 
\draw  [dash pattern={on 0.84pt off 2.51pt}]  (338,50) -- (423,50) ;
\draw [shift={(426,50)}, rotate = 180] [fill={rgb, 255:red, 0; green, 0; blue, 0 }  ][line width=0.08]  [draw opacity=0] (8.93,-4.29) -- (0,0) -- (8.93,4.29) -- cycle    ;
\draw [shift={(335,50)}, rotate = 0] [fill={rgb, 255:red, 0; green, 0; blue, 0 }  ][line width=0.08]  [draw opacity=0] (8.93,-4.29) -- (0,0) -- (8.93,4.29) -- cycle    ;
%Straight Lines [id:da8087346487910968] 
\draw    (293,65) -- (243.05,146.3) ;
\draw [shift={(242,148)}, rotate = 301.57] [color={rgb, 255:red, 0; green, 0; blue, 0 }  ][line width=0.75]    (10.93,-3.29) .. controls (6.95,-1.4) and (3.31,-0.3) .. (0,0) .. controls (3.31,0.3) and (6.95,1.4) .. (10.93,3.29)   ;
%Flowchart: Connector [id:dp32839695733528596] 
\draw  [color={rgb, 255:red, 208; green, 2; blue, 27 }  ,draw opacity=1 ][line width=1.5]  (283,49.5) .. controls (283,42.04) and (293.52,36) .. (306.5,36) .. controls (319.48,36) and (330,42.04) .. (330,49.5) .. controls (330,56.96) and (319.48,63) .. (306.5,63) .. controls (293.52,63) and (283,56.96) .. (283,49.5) -- cycle ;
%Straight Lines [id:da12614396756383572] 
\draw    (376.5,59) -- (376.01,146) ;
\draw [shift={(376,148)}, rotate = 270.32] [color={rgb, 255:red, 0; green, 0; blue, 0 }  ][line width=0.75]    (10.93,-3.29) .. controls (6.95,-1.4) and (3.31,-0.3) .. (0,0) .. controls (3.31,0.3) and (6.95,1.4) .. (10.93,3.29)   ;
%Straight Lines [id:da8322784246338851] 
\draw    (386.5,60) -- (539.27,148.99) ;
\draw [shift={(541,150)}, rotate = 210.22] [color={rgb, 255:red, 0; green, 0; blue, 0 }  ][line width=0.75]    (10.93,-3.29) .. controls (6.95,-1.4) and (3.31,-0.3) .. (0,0) .. controls (3.31,0.3) and (6.95,1.4) .. (10.93,3.29)   ;
%Shape: Brace [id:dp4420384593224935] 
\draw  [line width=1.5]  (197.63,222.21) .. controls (197.63,226.88) and (199.96,229.21) .. (204.63,229.21) -- (464.82,229.12) .. controls (471.49,229.12) and (474.82,231.45) .. (474.82,236.12) .. controls (474.82,231.45) and (478.15,229.12) .. (484.82,229.11)(481.82,229.11) -- (745,229.02) .. controls (749.67,229.02) and (752,226.69) .. (752,222.02) ;
%Straight Lines [id:da31106950315803306] 
\draw [line width=1.5]    (162,36) -- (162,176) ;
\draw [shift={(162,180)}, rotate = 270] [fill={rgb, 255:red, 0; green, 0; blue, 0 }  ][line width=0.08]  [draw opacity=0] (11.61,-5.58) -- (0,0) -- (11.61,5.58) -- cycle    ;
%Shape: Rectangle [id:dp2366591762148209] 
\draw   (1,1) -- (765,1) -- (765,273) -- (1,273) -- cycle ;
% Text Node
\draw (348,23) node [anchor=north west][inner sep=0.75pt]   [align=left] {AdS/CFT};
% Text Node
\draw (290,41) node [anchor=north west][inner sep=0.75pt]   [align=left] {CFT};
% Text Node
\draw (432,40) node [anchor=north west][inner sep=0.75pt]   [align=left] {AdS};
% Text Node
\draw (8,90) node [anchor=north west][inner sep=0.75pt]   [align=left] {\begin{minipage}[lt]{96.19pt}\setlength\topsep{0pt}
{\fontfamily{ptm}\selectfont extending/generalizing}
\begin{center}
{\fontfamily{ptm}\selectfont holography}
\end{center}
\end{minipage}};
% Text Node
\draw (221,156) node [anchor=north west][inner sep=0.75pt]   [align=left] {QFT {\fontfamily{pcr}\selectfont (Ch.II)}};
% Text Node
\draw (230,13) node [anchor=north west][inner sep=0.75pt]  [color={rgb, 255:red, 208; green, 2; blue, 27 }  ,opacity=1 ] [align=left] {\textbf{Entanglement}};
% Text Node
\draw (440,13) node [anchor=north west][inner sep=0.75pt]  [color={rgb, 255:red, 208; green, 2; blue, 27 }  ,opacity=1 ] [align=left] {\textbf{Geometry}};
% Text Node
\draw (337,156) node [anchor=north west][inner sep=0.75pt]   [align=left] {AdS/BCFT {\fontfamily{pcr}\selectfont (Ch.III\&IV)}};
% Text Node
\draw (205,182) node [anchor=north west][inner sep=0.75pt]  [color={rgb, 255:red, 208; green, 2; blue, 27 }  ,opacity=1 ] [align=left] {\textbf{Entanglement entropy}};
% Text Node
\draw (189,76) node [anchor=north west][inner sep=0.75pt]  [font=\small,color={rgb, 255:red, 0; green, 0; blue, 226 }  ,opacity=1, style={fill=white}] [align=left] {Adding mass\\(and interactions)};
% Text Node
\draw (316,91) node [anchor=north west][inner sep=0.75pt]  [font=\small,color={rgb, 255:red, 0; green, 0; blue, 226 }  ,opacity=1, style={fill=white}] [align=left] {Adding boundary\\and excitation};
% Text Node
\draw (321,246) node [anchor=north west][inner sep=0.75pt]   [align=left] {Holographic realization of our universe?};
% Text Node
\draw (548,135) node [anchor=north west][inner sep=0.75pt]   [align=left] {Quantum many-body systems\\Tensor network {\fontfamily{pcr}\selectfont (Ch.V)}};
% Text Node
\draw (390,183) node [anchor=north west][inner sep=0.75pt]  [color={rgb, 255:red, 208; green, 2; blue, 27 }  ,opacity=1 ] [align=left] {Brane dynamics};
% Text Node
\draw (549,182) node [anchor=north west][inner sep=0.75pt]  [color={rgb, 255:red, 208; green, 2; blue, 27 }  ,opacity=1 ] [align=left] {\textbf{Entanglement structure}};
% Text Node
\draw (440,82) node [anchor=north west][inner sep=0.75pt]  [font=\small,color={rgb, 255:red, 0; green, 0; blue, 226 }  ,opacity=1, style={fill=white}] [align=left] {Various geometries\\as a network of tensors};
% Text Node
\draw (596,201) node [anchor=north west][inner sep=0.75pt]  [color={rgb, 255:red, 208; green, 2; blue, 27 }  ,opacity=1 ] [align=left] {Quantum operations};
\end{tikzpicture}
\end{adjustbox}
    \caption{A schematic figure explaining the motivation and goal of this dissertation and their relation to each study}
    \label{fig:dissertation}
\end{figure}
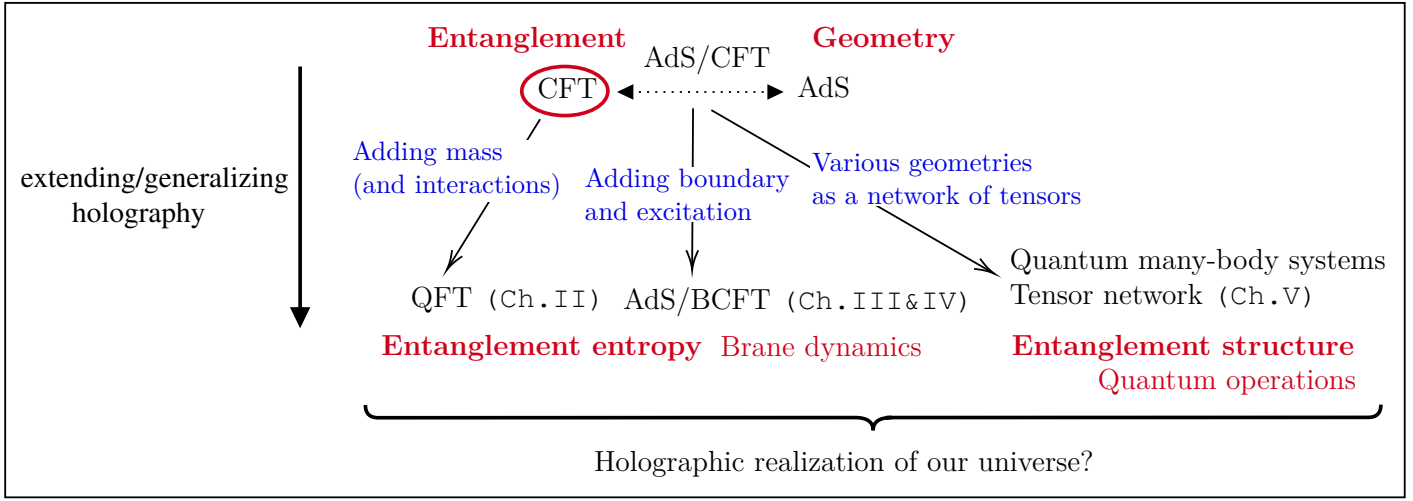
%\vspace{10pt}

The outline of this dissertation is as follows. We begin in Chapter \ref{ch:0} with an introduction to entanglement, in particular, entanglement entropy as its measure. Other than the standard definition from von Neumann entropy, we provide an operational definition as well to motivate a tensor network analysis performed in Chapter \ref{ch:3}. Some useful knowledge of quantum information, such as quantum states, measurements, and channels, is provided in the Appendix \ref{app:quantum}, \ref{app:meas}, \ref{app:quant-channel}, \ref{app:CP-inst}.

Chapter \ref{ch:1} presents a computation of vacuum entanglement entropy in massive, interacting QFTs without relying on conformal symmetry or the AdS/CFT correspondence. 
Section \ref{sec:review-EE-QFT} first reviews the existing methods to calculate entanglement entropy in free field theories and CFTs and point out disadvantages to extend them to interacting or non-conformal cases. Then, Section \ref{sec:orb-int} and subsequent sections present our contributions toward the calculation of entanglement entropy in generic QFTs~\cite{Iso:2021dlj,Iso:2021rop,Iso:2021vrk}. The computation heavily relies on our newly-invented method, the $\mathbb{Z}_M$ lattice-like gauge theory on Feynman diagrams. We show how the perturbative contributions to entanglement entropy are resummed to all orders and how it is expressed in terms of renormalized correlation functions. We also show that there exists a new contribution peculiar to interacting cases due to the splitting of interacting vertices.

Chapter \ref{ch:2} and \ref{ch:2-2} focuses on holography and its extension.
Chapter \ref{ch:2} presents an overview of the AdS/CFT correspondence and its extension to the AdS/BCFT correspondence, in which boundaries are introduced in the CFT. The first part focuses on the top-down model by Maldacena. The middle part describes (asymptotically) AdS spacetimes, CFTs, and the AdS/CFT correspondence in general dimensions. In the latter part, we focus on the AdS$_3$/CFT$_2$ correspondence, in particular, the results from the large-$c$, sparse CFTs. Finally, the last two sections provide some explanation about holographic local quench and the AdS/BCFT correspondence, which become the cornerstone for the next chapter.

Then, in Chapter \ref{ch:2-2}, we investigate a time-dependent setup in the AdS/BCFT correspondence by exciting the BCFT vacuum with a local operator (local operator quench)~\cite{Kawamoto:2022etl}. A naive expectation from a combination of the local operator quench in the AdS/CFT correspondence and the framework of the AdS/BCFT correspondence leads to several puzzles. We present a correct prescription and check it by matching the energy-momentum tensor and entanglement entropy from BCFTs and their gravity duals. We also show that this prescription elegantly resolves the puzzles.

%section 1?
Chapter \ref{ch:3} discusses the structure of entanglement in tensor networks from an operational perspective. Specific classes of tensor networks are widely investigated as finite-dimensional, information-theoretic toy models of holography. Their guiding principle is the Ryu-Takayanagi formula, relating a minimal surface in the bulk to entanglement entropy of the dual quantum system. %In these toy models, entanglement entropy is well-studied. 
To generalize holography, we are interested in the geometry of generic tensor networks and its relation to the structure of entanglement beyond entanglement entropy, which is just one aspect of entanglement. 
Section \ref{sec:tensor-network} together with Appendix \ref{app:TN-rep} provides a concise review of tensor networks for spatially one-dimensional quantum systems, in particular, matrix product states (MPS) and multi-scale entanglement renormalization ansatz (MERA).
In Section \ref{sec:HED} and subsequent sections, we show that in various tensor networks, a geometric operation toward the minimal surface is interpreted as the concentration of entangled states known as entanglement distillation~\cite{Mori:2022xec}. I analytically showed the total amount of entanglement is conserved during the process of distillation while it is being concentrated toward the minimal surface. 

Finally, in Chapter \ref{ch:discussion}, we discuss some possible resolutions for the remaining problems from quantum information and applications of our methods to braneworld, quantum black holes, and quantum computations.
%\markboth{Overview}{Overview}
%\chapter*{Overview}
%\addcontentsline{toc}{chapter}{Overview}
%\include{Introduction/overview}

%\makeatletter
%\renewcommand\chapter{%
%\if@openright\cleardoublepage\else\clearpage\fi
%\thispagestyle{plain}%
%\global\@topnum\z@
%\@afterindenttrue
%\secdef\@chapter\@schapter}
%\makeatother

%\chapter{Entanglement -- Definition in quantum systems}\label{ch:0}
%\include{Chapter0/chapter0}
%\include{Chapter0/appendix0}
%\chapter{Entanglement in quantum field theories}\label{ch:1}
%\include{Chapter1/chapter1}
%\include{Chapter1/appendix1}
%\chapter{Entanglement in holography}\label{ch:2}
%\include{Chapter2/chapter2}
%\include{Chapter2/appendix2}
%\chapter{Entanglement in quantum many-body systems}\label{ch:3}
%\include{Chapter3/chapter3}
%\include{Chapter3/appendix3}
%\chapter{Conclusion and discussion}\label{ch:discussion}
%\include{Conclusion/conclusion}

\chapter{Entanglement -- Definition in quantum systems}\label{ch:0}
%!TEX root = ../thesis.tex
%*******************************************************************************
%*********************************** First Chapter *****************************
%*******************************************************************************

%\chapter{Getting started}  %Title of the First Chapter

\ifpdf
    \graphicspath{{Chapter1/Figs/Raster/}{Chapter1/Figs/PDF/}{Chapter1/Figs/}}
\else
    \graphicspath{{Chapter1/Figs/Vector/}{Chapter1/Figs/}}
\fi
\graphicspath{{./Chapter1/Figs/}}

%Entanglement in QFT
%********************************** %First Section  **************************************
\renewcommand{\thesection}{\thechapter.\arabic{section}}
\setcounter{section}{0}

\textit{In this chapter, we review bipartite entanglement of quantum states and the definition of entanglement entropy. The key sources used here are \cite{BB09394904,nielsen_chuang_2010}.}\\

%\section{Entanglement entropy} %Section - 1.1
One of the main focuses of this dissertation is the structure of entanglement, a type of quantum correlations, in quantum systems. To quantify entanglement, a particularly useful measure is known as \textbf{entanglement entropy}\index{entanglement entropy}. In this chapter, we define it in two ways, either state-based (Section \ref{sec:state-based}) and operation-based definitions (Section \ref{sec:op-aspect-ent} and \ref{sec:EE-op-based}). In Section \ref{sec:examples-ent}, we give some basic examples of entangled states. 
In Appendix \ref{app:quantum}, the notation and the definition for quantum states are summarized. In Appendix \ref{app:meas}, a general formalism for quantum measurements is explained. In Appendix \ref{app:quant-channel}, quantum channels are defined and their representations are introduced. Appendix \ref{app:CP-inst} discusses a comprehensive formalism describing evolutions and measurements called CP-instruments.

\section{Definition -- state-based}\label{sec:state-based}
Given a density matrix $\rho\in \mathcal{S}(\mathcal{H}_A\otimes\mathcal{H}_B)$, where $\mathcal{H}_A$ and $\mathcal{H}_B$ are Hilbert spaces associated to respective subsystems $A$ and $B$, its \textbf{entanglement entropy}\index{entanglement entropy} (EE) of a subregion $A$ is defined as the von Neumann entropy of the reduced density matrix $\rho_A=\Tr_{B}\rho$:
\begin{equation}
    S_A(\rho):=-\Tr\rho_A \log\rho_A.
\end{equation}
For the later analysis, it is useful to introduce the \textbf{singular value decomposition (SVD)}\index{singular value decomposition}\index{SVD|see singular value decomposition }, also known as the \textbf{Schmidt decomposition}\index{Schmidt decomposition}. A bipartite quantum state $\ket{\Psi}_{AB}\in \mathcal{H}_A\otimes\mathcal{H}_B$ is in a one-to-one correspondence with a linear operator from $\mathcal{H}_B$ to $\mathcal{H}_A$. This is called the \textbf{channel-state duality}\index{channel-state duality} or \textbf{Choi-Jamio\l kowski isomorphism}\index{Choi-Jamio\l kowski isomorphism}. Explicitly,
\begin{align}
    \ket{\Psi}_{AB}&=\sum_{I=1}^{d_A} \sum_{\tilde{I}=1}^{d_B} \Psi_{I \tilde{I}} \ket{I}_A\otimes \ket{\Tilde{I}}_B\\
    &\downarrow\nonumber\\
    \hat{\Psi}&=\sum_{I=1}^{d_A} \sum_{\tilde{I}=1}^{d_B} \Psi_{I \tilde{I}} \ket{I}_A \bra{\Tilde{I}}_B,
\end{align}
where $d_A\equiv \dim \mathcal{H}_A, d_B\equiv \dim\mathcal{H}_B$. In the following, we omit the tensor product symbol $\otimes$.
Given the set of basis vectors, the state $\ket{\Psi}$ is completely characterized by a rectangular matrix $\Psi$. Then, via SVD, any rectangular matrix whose rank is $r$ can be written as
\begin{equation}
\Psi_{I\tilde{J}}=\sum_{K,\tilde{M}} U_{IK}\Sigma_{K\tilde{M}}V^\dagger_{\tilde{M}\tilde{J}} \quad (I=1,\cdots , d_A,\ \tilde{J}=1,\cdots , d_B),
\label{eq:svd}
\end{equation}
where $U$ and $V$ are unitary matrices on $\mathcal{H}_A$ and $\mathcal{H}_B$, respectively. $\Sigma$ is known as a \textbf{singular value matrix}. The first $r$ entries of its diagonal components are $\{\sigma_i\}_{i=1,\cdots, r}$ and other components are zero.
$r$ is called \textbf{Schmidt rank}\index{Schmidt rank} or \textbf{Schmidt number}\index{Schmidt number} of the matrix. $\{\sigma_\alpha\}_{\alpha=1,\cdots , r}$ are called the \textbf{singular values}\index{singular values} or the \textbf{Schmidt coefficients}\index{Schmidt coefficient}. They are strictly positive and usually taken to be in the descending order: $\sigma_1\ge \sigma_2 \ge \cdots \ge \sigma_r > 0$.\footnote{We can sort the singular values in this way without loss of generality under suitable permutations by a basis transformation.} For a normalized quantum state, the square of the singular values constitutes a probability distribution as $\sum_{\alpha=1}^r \sigma^2_\alpha=1$. 
When the rank $r$ is less than $d_A$ or $d_B$, the unitaries $U_{IK}$ and $V^\dagger_{\tilde{M}\Tilde{J}}$ can also be regarded as isometries $U_{I\alpha}$ and $V^\dagger_{\alpha\Tilde{J}}$:
\begin{equation}
\Psi_{I\tilde{J}}=\sum_{\alpha=1}^r U_{I\alpha}\sigma_\alpha V^\dagger_{\alpha\tilde{J}}.
\end{equation}
Finally, by absorbing these unitaries by the redefinition of basis, any bipartite states can be written as
\begin{equation}
    \ket{\Psi}_{AB}=\sum_{\alpha=1}^r \sigma_\alpha \ket{\phi_\alpha}_A \ket{\psi_\alpha}_B.
\end{equation}
In terms of the Schmidt coefficients, the eigenvalues of $\rho_A$ are $\{\lambda_\alpha\equiv \sigma_\alpha^2\}_\alpha$. This is nothing but a probability distribution. Then, EE reads
\begin{equation}
    S_A (\dyad{\Psi})= -\sum_{\alpha=1}^r \lambda_\alpha \log \lambda_\alpha.
    \label{eq:EE-SVD}
\end{equation}

In the following, we mostly focus on the case of pure states $\rho=\dyad{\Psi}$. Then, one can show immediately EE is zero if and only if the state is \textbf{separable}\index{separable} $\ket{\Psi}_{AB}=\ket{\phi}_A\otimes\ket{\psi}_B$. Since $-\lambda_\alpha\log\lambda_\alpha \ge 0$ for any $\alpha$, $S_A=0$ is equivalent to $-\lambda_\alpha\log\lambda_\alpha = 0$. This is nothing but $\lambda_1=1$ and otherwise zero. In other words, the Schmidt rank $r$ is 1. This is correct since to compose a pure state, we only need a one-dimensional Hilbert space, whose basis is proportional to the state.

When a bipartite pure state is not separable, it is called \textbf{entangled}\index{entangled}. When the whole system is pure, there is no classical correlation so that one can discriminate whether the state is separable or entangled completely by EE.\footnote{However, we should note that the structure of entanglement like the entanglement spectrum is very rich so that a single entanglement measure like EE is not sufficient for its entire characterization.} Such an appearance of entangled states is unique to quantum systems.

Lastly, let us briefly comment on entanglement of mixed states. A mixed state $\rho_{AB}\in\mathcal{S}(\mathcal{H}_A\otimes \mathcal{H}_B)$ is \textbf{separable}\index{separable} if and only if it can be written as a probabilistic mixture of separable pure states. Equivalently, a separable mixed state is always written as
\begin{equation}
    \rho_{AB}=\sum_i p_i \rho^{(i)}_A \otimes \rho^{(i)}_B,\quad 
    \left(
    \rho^{(i)}_{A}\in \mathcal{S}(\mathcal{H}_A),\ \rho^{(i)}_{B}\in \mathcal{S}(\mathcal{H}_B)
    \right),
\end{equation}
where $\{p_i\}_i$ is a probability distribution.
On the other hand, an \textbf{entangled}\index{entangled} mixed state is a mixed state which is not separable.

\section{Examples in quantum mechanics}\label{sec:examples-ent}
Among various entangled states, pure states whose EE is maximal are called \textbf{maximally entangled states}\index{maximally entangled state}. For a qudit system\footnote{When $d=2$ ($d=3)$, each state is called qubit (qutrit).}, a rank-$d$ ($\le \max(d_A,d_B)$) maximally entangled state in $\mathcal{H}_A\otimes\mathcal{H}_B$ is defined as
\begin{equation}
(U\otimes V)\frac{1}{\sqrt{d}}\sum_{i=0}^{d-1} \ket{i}_A \ket{i}_B
=(UV^T\otimes \mathbf{1}) \sum_{i=0}^{d-1} \ket{i}_A \ket{i}_B
=(\mathbf{1}\otimes VU^T) \sum_{i=0}^{d-1} \ket{i}_A \ket{i}_B,
\end{equation}
where $U$ and $V$ are arbitrary unitaries acting on $\mathcal{H}_A$ and $\mathcal{H}_B$, respectively. From \eqref{eq:EE-SVD}, EE equals $\log d$. This is the maximal possible value of EE for any qudit systems. This can be shown from the non-negativity of a so-called relative entropy for the reduced density matrix with respect to a maximally mixed state. See \cite{nielsen_chuang_2010} for the explanation. %maybe write it in the appendix
One of the maximally entangled states of the form
\begin{equation}
\ket{\mathrm{EPR}}_{AB}\equiv \frac{1}{\sqrt{d}}\sum_{i=0}^{d-1} \ket{i}_A \ket{i}_B
\label{eq:epr-state}
\end{equation}
is known as the rank-$d$ \textbf{Einstein–Podolsky–Rosen (EPR) state (pair)}\index{Einstein–Podolsky–Rosen state}\index{EPR|see Einstein–Podolsky–Rosen state }. From this, we can construct %Particularly simple examples of 
a basis set of a two-qubit system, known as \textbf{Bell states}\index{Bell states} or \textbf{Bell basis}\index{Bell basis}. They are maximally entangled states 
obtained by taking $UV^T=\mathbf{1},X,Y,Z$, where Pauli operators are defined as
\begin{equation*}
	\mathbf{1}=
	\begin{pmatrix}
	\imat{2}
	\end{pmatrix},\quad
	X=\sigma_x=
	\begin{pmatrix}
	0&1\\1&0
	\end{pmatrix},\quad
	Y=\sigma_y=
	\begin{pmatrix}
	0&-i\\i&0
	\end{pmatrix},\quad
	Z=\sigma_z=
	\begin{pmatrix}
	1&0\\0&-1
	\end{pmatrix}.
\end{equation*}
%The Bell states form a basis for the two-qubit system.
%When $d=2$, a single EPR pair has $\log 2$ EE. With the logarithm with base two, EE is $1$. Thus, we call EE ebit.

Another important example of entangled states is the \textbf{thermofield double (TFD) state}\index{thermofield double state}
\begin{equation}
    \ket{\mathrm{TFD}_\beta}_{AB}=\frac{1}{\sqrt{Z(\beta)}}\sum_{n=1}^d e^{-\beta E_n /2} \ket{n}_A \ket{n}_B,\quad Z(\beta)=\sum_n e^{-\beta E_n},
    \label{eq:TFD}
\end{equation}
where $\beta$ is an inverse temperature and $E_n$ is an $n$-th eigenvalue of the system Hamiltonian $H$. The reduced density matrix of the TFD state is nothing but the Gibb state $e^{-\beta H}/Z(\beta)$, whose observable gives the thermal expectation value. In the context of gravitational physics, $\mathcal{H}_A$ is the time reversal (i.e. complex conjugate) of $\mathcal{H}_B$. To manifest this, we sometimes write $\ket{\bar{n}}_B$ instead of $\ket{n}_B$.

%Violation of Bell's inequality
%\mynote{Violation of Bell's inequality}

\section{Operational aspects of entanglement}\label{sec:op-aspect-ent}
In the previous section, we provide a definition of EE based on the density matrix. Then, it has been shown to quantify entanglement of the state. However, in what sense does EE quantify entanglement? Is there any meaning for the value of EE? The answer to these questions is indeed yes! EE has a clear interpretation as the number of ebits (unit information measured by the natural logarithm) from the state in a certain limit. To see this, we first introduce a notion of quantum evolution and measurements and discuss a specific class of operations related to entanglement in this section. The details are described in Appendix \ref{app:meas}, \ref{app:quant-channel}, and \ref{app:CP-inst}.
%
%A quantum evolution, in other words, a most generic time evolution of a quantum state, is a map from a density matrix to another density matrix. In quantum information, it is also known as a \textbf{quantum channel}\index{quantum channel}.\footnote{Note that quantum channels described here do \textit{not} involve measurements. The effects of measurements are considered in Appendix \ref{app:meas}. A most generic measurement process including quantum evolutions is described by a CP-instrument \cite{BB09394904}.} The simplest example is a unitary evolution of a closed quantum system governed by the Schr\"{o}dinger equation. However, when environments are involved, more general evolutions are allowed. %We will describe the most general time evolution in quantum systems on the basis of measurements in quantum theories described in Appendix \ref{app:meas}. 

For now, let us borrow some known results about them to characterize entanglement in an operational manner. For simplicity, we focus on the case when a state evolution is an endomorphism, $\mathcal{H}_A\otimes \mathcal{H}_B\rightarrow \mathcal{H}_A\otimes \mathcal{H}_B$.\footnote{Note that in Appendix \ref{app:meas}, \ref{app:quant-channel}, \ref{app:CP-inst}, $\mathcal{H}_A$ and $\mathcal{H}_B$ denote Hilbert spaces of the input and output, respectively. On the other hand, here and in the rest of this thesis, $\mathcal{H}_A$ and $\mathcal{H}_B$ usually denote subsystems. In particular, when we discuss a bipartite system, $B$ denotes the complement of $A$.} 
A state evolution is described by a Kraus representation
\begin{equation}
    \Lambda(\rho)=\sum_k V_k \rho V_k^\dagger
\end{equation}
in general. Since entanglement is a quantum correlation between two subsystems, let us consider a more restricted process that should not increase entanglement.
It should consist of quantum operations and measurements can be performed \textit{locally} for each subsystem but not globally. We also allow observers in each subsystem to send these measurement results to each other via \textit{classical} communications. These processes are called \textbf{local operations and classical communications (LOCC)}\index{local operations and classical communications}\index{LOCC|see local operations and classical communications }. In this operational context, entanglement is defined as a \textit{quantum} correlation that cannot be generated by LOCC alone as they involve only \textit{classical} communications.

Successive LOCC are described as follows. 
The initial density matrix is given by $\rho\in\mathcal{S}(\mathcal{H}_{AB})$. Let Alice and Bob be an observer who can perform a local \textbf{completely positive trace-preserving (CPTP)}\index{completely positive trace-preserving}\index{CPTP|see completely positive trace-preserving } operation $\Lambda$ on $A$ and $\Gamma$ on $B$, respectively.\footnote{See Appendix \ref{sec:CPTP} for the definition of a CPTP map.}
Alice makes a selective measurement $\Lambda_m$ with outcome $m$. The state transforms under this measurement process:
\begin{equation}
    \rho\mapsto (\Lambda_m \otimes \mathbf{1})[\rho]
\end{equation}
up to a normalization factor. Next, Alice sends the outcome $m$ to Bob via a classical communication like telephone or email (thus the propagation speed of information is bounded by causality). Bob performs an arbitrary CPTP operation given by $\Gamma^{(m)}$. The state is now
\begin{equation}
    (\Lambda_m \otimes \Gamma^{(m)})[\rho]
\end{equation}
up to a normalization factor. 
If Bob performs a selective measurement with outcome $k$, the CPTP map also depends on $k$, i.e. $\Gamma^{(m)}=\Gamma^{(m)}_k$. Then, Bob can send the outcome $k$ to Alice and she can perform another measurement depending on $m$ and $k$. By repeating this procedure, we obtain a general LOCC.

By combining the indices of outcomes for Alice and Bob into a single index $l$, the state after LOCC is in general
\begin{equation}
    \sum_l p_l (\Lambda_l \otimes \Gamma_l) [\rho],
    \label{eq:locc}
\end{equation}
where we recover the normalization $p_m$ associated with the measurement of outcome $m$.
The summation accounts for arbitrary non-selective measurements and local CPTP evolutions. It is evident from \eqref{eq:locc} if we take the initial state to be separable $\rho=\sigma_A\otimes\xi_B$, the state after any LOCC remains separable:
\begin{equation}
    \sum_l p_l \, \Lambda_l[\sigma_A] \otimes \Gamma_l[\xi_B].
    \label{eq:SEP}
\end{equation}
Thus, we conclude LOCC do not generate entanglement. Note that although LOCC can be always written in this form \eqref{eq:SEP}, the converse is not always true. Operations that can be written as \eqref{eq:SEP} are called \textbf{separable operations}\index{separable operations} (SEP). There are some SEP that do not belong to LOCC. The qubit and qutrit examples are discussed in \cite{PhysRevLett.89.147901} and \cite{Bennett:1998ev}, respectively. The essential point is that either Alice or Bob must perform a measurement first in LOCC while both can perform independent local measurements in SEP. Roughly speaking, a locally degenerate state is indistinguishable in LOCC, but not in SEP. Thus, they are different classes of operations. Nevertheless, one can regard LOCC and SEP are the same class of operations when acting them on bipartite pure states \cite{gheorghiu2010separable}.

\section{Definition -- operation-based}\label{sec:EE-op-based}
Understanding fundamental classes of operations related to entanglement, we are now able to define EE in an operational manner.
It is known that one can concentrate and dilute entanglement via LOCC. The rate of concentration or dilution is nothing but EE. In the following, we focus on entanglement distillation and how it is related to EE. 

In \textbf{entanglement distillation}\index{entanglement distillation} (\textbf{entanglement concentration}\index{entanglement concentration}), we consider extracting EPR pairs from copies of a given state $\ket{\psi}^{\otimes n}\rightarrow \ket{\mathrm{EPR}}^{\otimes m}$.

In the following, we demonstrate the case of a two-qubit system with an arbitrary $n$ and $n\rightarrow\infty$. We also briefly comment on a case of a general qudit system.

Let
\begin{equation}
    \ket{\psi}=\sqrt{p}\ket{00}_{AB}+\sqrt{1-p}\ket{11}_{AB}
\end{equation}
with $p\in [0,1]$. Note that one can always write any pure states in a two-qubit system in this way via the SVD. EE for Alice or Bob is given by the \textbf{binary entropy}\index{binary entropy}
\begin{equation}
    h(p)\equiv -p\log p - (1-p)\log (1-p).
\end{equation}
To distill EPR pairs, consider its $n$-copy state\footnote{This is known as an \textbf{independent and identically distributed (i.i.d.) state}\index{independent and identically distributed state}\index{i.i.d. state|see independent and identically distributed state }.}
\begin{align}
    &\phantom{=}
    \ket{\psi}^{\otimes n} \\
    &=\left[\sqrt{p}\ket{00}+\sqrt{1-p} \ket{11} \right]^{\otimes n} _{AB} \\
    &=  \sqrt{\binom{n}{0}} p^{n/2} \frac{1}{\sqrt{\binom{n}{0}}}\ket{0\cdots 0}_A \ket{0\cdots 0}_B \nonumber \\
    &\phantom{=} + \sqrt{\binom{n}{1}} p^{\frac{n-1}{2}} (1-p)^{1/2} 
    \frac{1}{\sqrt{\binom{n}{1}}}
    \left(\ket{10\cdots 0}+ \ket{010\cdots 0}+\cdots +\ket{0\cdots 01} \right)_A \nonumber \\
    &\phantom{= + \sqrt{\binom{n}{1}} p^{\frac{n-1}{2}} (1-p)^{1/2} 
    \frac{1}{\sqrt{\binom{n}{1}}}}
    \otimes \left(\ket{10\cdots 0}+ \ket{010\cdots 0}+\cdots +\ket{0\cdots 01} \right)_B \nonumber \\ 
    &\phantom{=} +\cdots + \sqrt{\binom{n}{n}} (1-p)^{n/2} \frac{1}{\sqrt{\binom{n}{n}}}\ket{1\cdots 1}_A \ket{1\cdots 1}_B \\
    &= \sum_{k=0}^n \sqrt{\binom{n}{k}} p^{\frac{n-k}{2}} (1-p)^{k/2} \ket{\# =k}_{AB},
\end{align}
where we combine terms with the same coefficient into a single term. Note that EE of $\ket{\psi}^{\otimes n}$ is $n\, h(p)$, just $n$ times that of $\ket{\psi}$. We defined a normalized basis
\begin{equation}
    \ket{\#=k}_{AB} \equiv \frac{1}{\sqrt{\binom{n}{k}}} \sum_{\substack{\text{all possible}\\ \text{permutation}\\ \text{of qubits}}} \ket{\text{$k$ ones and $(n-k)$ zeros}}_A \ket{\text{$k$ ones and $(n-k)$ zeros}}_B.
    \label{eq:proj-epr}
\end{equation}
By labeling each element of the permutation, this is a rank-$\binom{n}{k}$ EPR state!

Now, let Alice make a selective measurement whose outcome is the number of $1$'s in the Alice qubits.\footnote{General measurement process described by POVM is explained in Appendix \ref{app:meas}.} The POVM elements for Alice are given by
\begin{equation}
    E_m = \sum_{\substack{\text{all possible}\\ \text{permutation}\\ \text{of qubits}}} \dyad{\text{$m$ ones and $(n-m)$ zeros}}_A.
\end{equation}
Explicitly, they are
\begin{equation}
\begin{cases}
    E_0 &= \dyad{0\cdots 0}_A\\
    E_1 &= \dyad{10\cdots 0}+ \dyad{010\cdots 0}+\cdots +\dyad{0\cdots 01}_A \\
    &\vdots\\
    E_n &= \dyad{1\cdots 1}_A
\end{cases}
.
\end{equation}
From
\begin{equation}
    (E_m\otimes \mathbf{1}) \ket{\# = k}_{AB} = \delta_{mk} \ket{\# = k}_{AB},
\end{equation}
the probability of outcome $m$ measured by Alice is
\begin{equation}
    \mathrm{Prob}\left(m \left| (\dyad{\psi})^{\otimes n}\right.\right) = \Tr \left[(E_m\otimes \mathbf{1}) (\dyad{\psi})^{\otimes n}\right] = \binom{n}{m} p^{(n-m)} (1-p)^m.
    \label{eq:prob-epr}
\end{equation}
Recall that if Bob performs a suitable basis transformation to \eqref{eq:proj-epr} according to the outcome, the state after each measurement becomes a tensor product of the rank-$\binom{n}{m}$ EPR state and remaining separable states. Thus, we can distill a rank-$\binom{n}{m}$ EPR pair with probability \eqref{eq:prob-epr} from $\ket{\psi}^{\otimes n}$ via LOCC.\footnote{Note that one will obtain a rank-$1$ EPR state with probability $p^n$ or $(1-p)^n$. In this specific case, the state is actually separable and the distillation fails. This reflects EE is zero in this case.} The averaged amount of entanglement distilled as EPR pairs is 
\begin{align}
    \overline{S_{\mathrm{EPR}}}&= \sum_{m=0}^{n} \mathrm{Prob}\left(m \left| (\dyad{\psi})^{\otimes n}\right.\right) \log\binom{n}{m} \nonumber\\
    &= \sum_{m=0}^{n}\exp\left[\log\binom{n}{m}+(n-m)\log p +m\log (1-p)\right]\log\binom{n}{m}
    \label{eq:ave-EE}
\end{align}
as EE of each rank-$\binom{n}{m}$ EPR pair equals $\log \binom{n}{m}$.

Let us evaluate the averaged EE in the large $n$ limit (\textbf{i.i.d. limit}\index{independent and identically distributed limit}\index{i.i.d limit|see independent and identically distributed limit }). 
From Stirling's formula $\log n! \sim n \log n -n$,
\begin{align}
     \log\binom{n}{m} 
     & \sim n\log n -m \log m -(n-m) \log (n-m) \\
     & \sim m\log \frac{n}{m} +(n-m)\log\frac{n}{n-m}.
\end{align}
The `free energy' is
\begin{equation}
    \log \mathrm{Prob}\left(m \left| (\dyad{\psi})^{\otimes n}\right.\right) \sim m \log \frac{(1-p)\, n}{m}+(n-m)\log\frac{p\, n}{n-m}.
    \label{eq:free-ene-distil}
\end{equation}
From the law of large numbers, the saddle point approximation should be valid in $n\rightarrow \infty$. Plugging the stationary point of \eqref{eq:free-ene-distil} $m=(1-p)n$ in $\overline{S_{\mathrm{EPR}}}$ \eqref{eq:ave-EE}, we obtain
\begin{equation}
    \overline{S_{\mathrm{EPR}}}\sim n\left[-p \log p - (1-p) \log (1-p)\right]
\end{equation}
in the $n\rightarrow \infty$ limit. This is nothing but the EE of $\ket{\psi}^{\otimes n}$. This indicates that on average the number of EPR pairs per copy which can be distilled from $\ket{\psi}^{\otimes n}$ is equivalent to EE of the original state $\ket{\psi}$. In other words, we can perform entanglement distillation via LOCC in $n\rightarrow \infty$,
\begin{equation}
    \ket{\psi}^{\otimes n} \mapsto \ket{\mathrm{EPR}}^{\otimes n S_A},
    \label{eq:distillation}
\end{equation}
where $S_A=h(p)$ is entanglement entropy of $\ket{\psi}$.
Although we do not discuss this in detail here, it is known that this rate of distillation is optimal through the technique of quantum data compression \cite{BB09394904}.

So far our discussion is limited to the case of a bipartite entanglement between qubits. A generalization to qu\textit{d}its can be done in the same manner. Let us consider a two-qudit state
\begin{equation}
    \ket{\phi}=\sum_{i=0}^{d-1} \sqrt{p_i}\ket{ii}_{AB},
\end{equation}
where $p_i\in [0,1]$ and $\sum_i p_i =1$. Note that any pure states in a finite-dimensional quantum system can be reduced to this expression via the SVD.\index{singular value decomposition} %For the moment, we take the base of logarithm to be 2. 
A term with the same multinomial coefficient in the $n$-copy state $\ket{\phi}^{\otimes n}$ is regarded as a basis. Then, we can perform LOCC to obtain \eqref{eq:distillation} for a general two-qudit system.

Note that in this LOCC protocol, we essentially deal with two qudits. In the case of bipartite quantum many-body states, which will be described in Chapter \ref{ch:3}, the local operations are in fact not so local, i.e. they act on a number of qubits in the subregion. There is still room to investigate a better entanglement distillation protocol that specializes in quantum many-body systems. We will work on this issue in Chapter \ref{ch:3} from the tensor network point of view.

%A {\em \LaTeX{} class file}\index{\LaTeX{} class file@LaTeX class file} is a file, which holds style information for a particular \LaTeX{}.

\nomenclature[a-dim]{$d$}{a spatial dimension of a configuration space of a quantum field theory}
\nomenclature[a-dim]{$X,Y,Z$}{a Pauli operator}
\nomenclature[z-quant]{QFT}{quantum field theory}
\nomenclature[z-quant]{EE}{entanglement entropy}
\nomenclature[z-quant]{SVD}{singular value decomposition}
\nomenclature[z-quant]{EPR}{Einstein–Podolsky–Rosen}
\nomenclature[z-quant]{TFD}{thermofield double}
%\nomenclature[g-p]{$\pi$}{ $\simeq 3.14\ldots$}                                             % first letter G is for Greek Symbols
\nomenclature[x-i]{$\mathbf{1}$}{an identity operator} % first letter X is for Other Symbols

\chapter{Entanglement in quantum field theories}\label{ch:1}
%!TEX root = ../thesis.tex
%*******************************************************************************
%*********************************** First Chapter *****************************
%*******************************************************************************

%\chapter{Getting started}  %Title of the First Chapter

\ifpdf
    \graphicspath{{Chapter1/Figs/Raster/}{Chapter1/Figs/PDF/}{Chapter1/Figs/}}
\else
    \graphicspath{{Chapter1/Figs/Vector/}{Chapter1/Figs/}}
\fi
\graphicspath{{./Chapter1/Figs/}}

\renewcommand{\thesection}{\thechapter.\arabic{section}}
\setcounter{section}{0}

\textit{This chapter contains both reviews and original materials. Useful reviews on various computational techniques for EE include~\cite{Casini:2009sr,RevModPhys.90.035007,Solodukhin:2011gn}. Section \ref{sec:orb-int} and subsequent sections follow my own work with my supervisor Satoshi Iso and ex-colleague Katsuta Sakai~\cite{Iso:2021dlj,Iso:2021rop,Iso:2021vrk}. The author of this dissertation has contributed to proposing the calculation method, perturbative calculations, and writing papers.}\\

Entanglement entropy (EE)\index{entanglement entropy} provides important information about the bipartite correlations of a given 
state. In particular, EE of a ground state or vacuum between two spatially separated regions, which is sometimes called geometric EE~\cite{PhysRevD.34.373,Srednicki:1993im,Callan:1994py,Holzhey:1994we}, 
has been widely discussed in quantum information, condensed matter physics and, even in quantum gravity, {cosmology,} and high energy physics \cite{PhysRevA.66.032110,Osterloh_2002,PhysRevLett.90.227902,Jin_2004,Calabrese:2004eu,Ryu:2006bv,Ryu:2006ef,Hubeny:2007xt,Almheiri:2020cfm,Almheiri:2019hni,Penington:2019kki,Nambu:2008my}.
Despite its  significance, practical computations of EE in field theories are not easy.
%In this chapter, we provide a perturbative method to compute vacuum EE of a half-space subregion for massive, interacting QFTs, its resummation to all orders, and the consequence from Wilsonian renormalization.

In this chapter, we give a systematic study of vacuum EE of half space in interacting field theories~\cite{Iso:2021vrk,Iso:2021rop,Iso:2021dlj}. Section \ref{sec:review-EE-QFT} is devoted to the review of previous calculation techniques. We introduce the replica trick and its generalization, the orbifold method for later discussion. Besides, we mention several other methods and explain why they are not suited for our purpose. Next, in Section \ref{sec:orb-int}, we introduce the $\mathbb{Z}_M$ gauge theory on Feynman diagrams, a novel formalism to perform the orbifold method for interacting field theories, and discuss orbifolding in terms of twists. Then, in Section \ref{sec:pert-EE} we demonstrate perturbative calculations of EE from diagrams with a single twist and show that there are two types of contributions, namely, propagator contributions and vertex contributions. In Section \ref{sec:nonpert-EE}, we perform the resummation of these perturbative contributions to all orders and find EE in terms of renormalized correlators. In Section \ref{sec:unified}, we unify the propagator and vertex contributions and present a universal formula applicable to various interactions including higher spins. Notably, we show that the vertex contributions can be written in terms of renormalized two-point functions of composite operators. In the last two sections Section \ref{sec:num-mult} and Section \ref{sec:wilson-rg}, we discuss the remaining multiple twist contributions from numerics and Wilsonian renormalization group (RG) flow. In particular, by considering the infrared (IR) limit, we expect to find a universal form of EE independent of the ultraviolet (UV) theories. 
%Finally, we give a conclusion of this chapter in Section \ref{s:discussion}. 
In Appendix \ref{app:thermal-EE}, we show the equivalence between the half-space EE and the thermodynamic entropy.
In Appendix \ref{app:spin}, we present partition functions for scalar, vector, and fermionic fields. In Appendix \ref{app:spinor}, we discuss how the Dirac fermion transforms under $SO(2)$ rotation and the counting of the number of components for each spin.
In Appendix \ref{appenarea}, we prove the area law for R\'{e}nyi entropy and
the capacity of entanglement~\cite{PhysRevLett.105.080501,deBoer:2018mzv}.
In Appendix \ref{app:twist}, we summarize the position space representation of a twisted propagator, which appears in EE.
In Appendix \ref{appencomp}, we give a proof that all the single twist contributions from  vertices are written 
in the one-loop type expression of composite operators. This is a generalization of the proof for the 
propagator contributions based on the 2PI formalism.

%********************************** %Second Section  *************************************
\section{Replica trick and its generalization}
\label{sec:review-EE-QFT}%Section - 1.2
In this section, we review the replica trick and several other methods and generalize an analytic continuation of the replica trick toward the computation of EE in interacting QFTs.

\subsection{Replica trick}\label{sec:replica}
The \textbf{replica trick}\index{replica trick} is a way to compute EE.\footnote{This replica trick originated from spin glass theory~\cite{mezard1987spin}.} This method is mathematically equivalent to the original definition (despite a subtlety due to analytical continuation), nevertheless, it is very useful for the computation in field theories.

Consider a Hilbert space that consists of two subspaces 
corresponding to the physical subsystems of interest $A$ and $\bar{A}$: 
$\mathcal{H}_{\mathrm{tot}}=\mathcal{H}_A\otimes\mathcal{H}_{\bar{A}}$. 
The EE of $A$ is defined as 
\[ S_{A}=-\Tr_{A}\rho_{A}\log\rho_{A},\]
where $\rho_A=\Tr_{\bar{A}}\rho_\mathrm{tot}$ is a reduced density matrix of 
the total one, $\rho_\mathrm{tot}$. 
The replica trick \cite{Holzhey:1994we,Calabrese:2004eu}:
\aln{
	S_A%=-\Tr_A \left(\rho_A\log \rho_A \right)
	:=\lim_{n\rightarrow 1} S_n
	= -\lim_{n\rightarrow 1} \pdv{n} \left[\Tr_A\rho_A^n\right],
	\label{eq:replica}
}
where $S_n$ is the \textbf{R\'{e}nyi entropy}\index{R\'{e}nyi entropy}, defined by 
\begin{equation}
    S_n := \frac{1}{1-n}\log \Tr\rho_A^n.
\end{equation}
For EE to be uniquely determined by the replica trick, we assume the analytical continuation of $n\in\mathbb{Z}_{>1}$ to $\mathbb{R}$. This formula holds for general $A$ and $\rho_\mathrm{tot}$ as long as such an analytical continuation exists. The uniqueness of the analytical continuation is supported by the \textbf{Carlson's theorem}\index{Carlson's theorem}~\cite{Casini:2009sr,Witten:2018xfj}.\footnote{The theorem requires $\abs{S_n}$ as a function of the complex $n$ to be bounded by an exponential function, $C \exp (C^\prime \abs{n})$ for some constants $C$ and $C^\prime$ throughout $\Re{n}\ge 1$ (and one more condition for $\Re{n}= 1$). Then, this excludes ambiguities of adding a function that is zero in the $n\rightarrow 1$ limit; $\sin \pi n$, for example.}

Note that all we need to compute here is just a trace of the $n$-fold density matrix and there is no logarithm. This is extremely useful in field theories as it has a clear interpretation in terms of the Euclidean path integral.

In this chapter, we are interested in the vacuum EE when $A$ is a half space on a time slice in a $(d+1)$-dimensional  Minkowski spacetime,
$A=\{x^0=0,x_\perp\ge 0, \forall x_\parallel\}$, where $x_0$ is a Lorentzian temporal coordinate while 
$x_\perp$  is a one-dimensional normal direction and $x_\parallel$ are the rest $(d-1)$-dimensional
parallel directions %directions 
to $\partial A$ %respectively 
(Fig.\ref{fig:subregion}).
%%%%%%%%%%%%%%%%%%%%%%
\begin{figure}[t]
	\centering
	\includegraphics[width=10cm,clip]{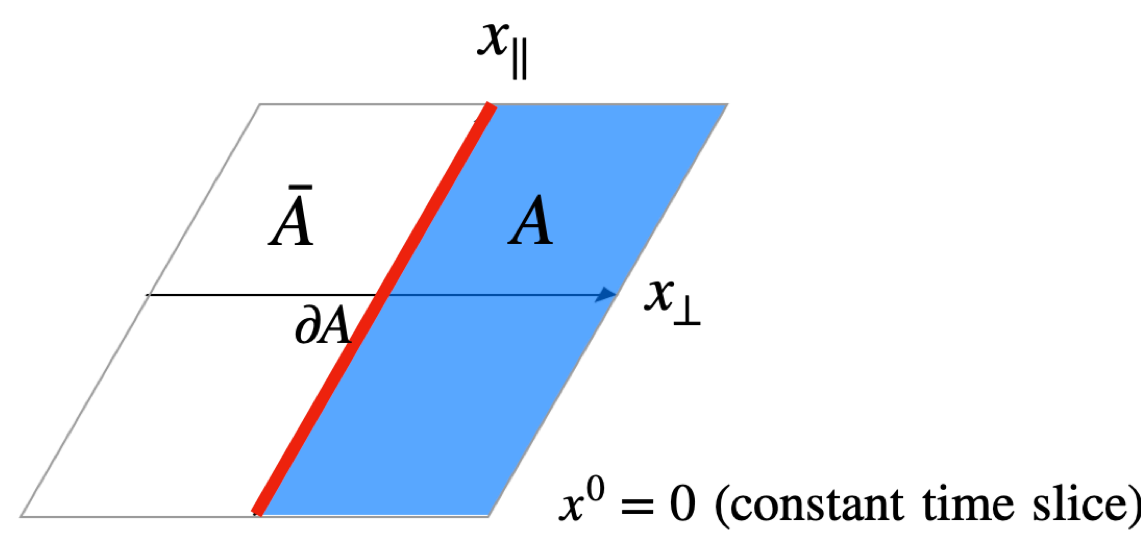}
	\caption{Our choice of the subregion $A$ and its complement $\bar{A}$. It is a half space given by $A=\{x^0=0,x_\perp\ge 0, \forall x_\parallel\}$. The boundary of the subregion is parametrized as $\partial A=\{x^0=0,x_\perp=0,\forall x_\parallel\}$.}
	\label{fig:subregion}
\end{figure}
%%%%%%%%%%%%%%%%%%%%%%

%%%%%%%%%%%%%%%%%%%%%
\begin{figure}[t]
    \centering
    \includegraphics[width=7cm,clip]{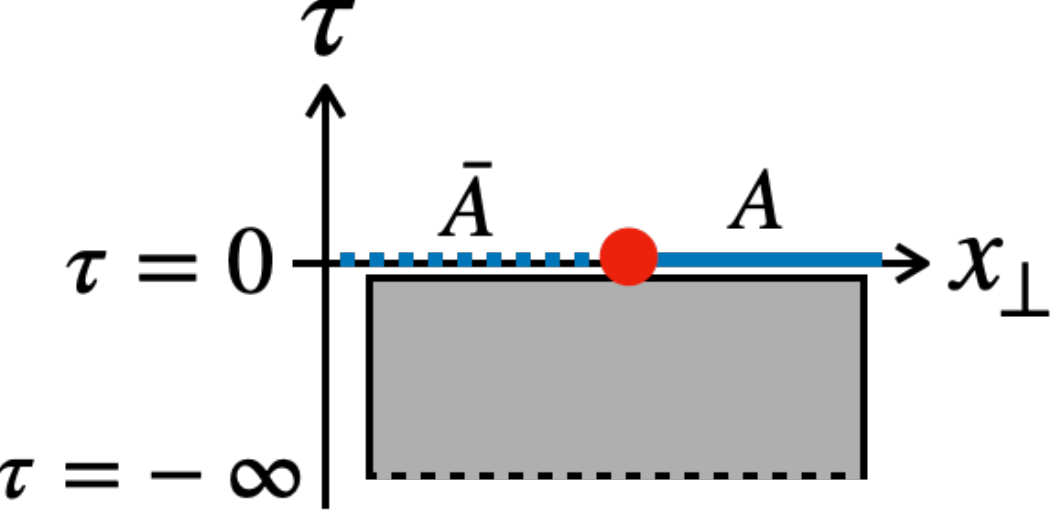}
    \caption{The (nondegenerate) vacuum in the Euclidean path integral. The red dot denotes the subregion boundary $\partial A$. The $x_\parallel$ direction is omitted.}
    \label{fig:ket-vac}
\end{figure}
%%%%%%%%%%%%%%%%%%%%

%%%%%%%%%%%%%%%%%%%%%%
\begin{figure}[t]
	\begin{tabular}{c}%prevent line break
		\hspace*{-0.05\linewidth}
		\begin{minipage}{0.5\hsize}%align figs horizontally
			\centering
			\includegraphics[width=\linewidth]{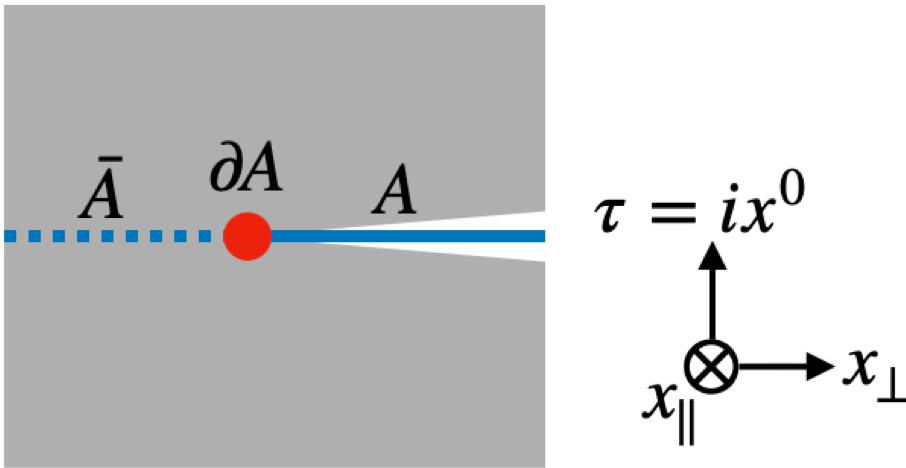}
		\end{minipage}
		\hspace*{0.05\linewidth}
		\begin{minipage}{0.3\hsize}
			\centering
			\includegraphics[width=\linewidth]{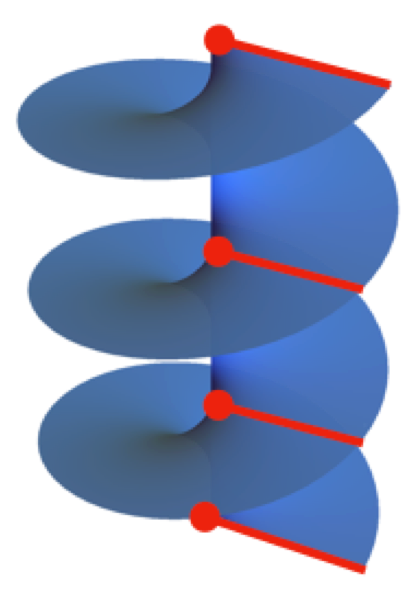}
		\end{minipage}
	\end{tabular}
	\caption{
		The Euclidean path integral representation of our reduced density matrix $\tilde{\rho}_A$ (left) and its $n$-fold cover $\tilde{\rho}_A^n$ ($n=3$) (right).
		}
	\label{fig:euclid-pathint}
\end{figure}
%%%%%%%%%%%%%%%%%%%%%%

Toward the path integral interpretation of EE \eqref{eq:replica}, let us define an unnormalized density matrix $\tilde{\rho}_\mathrm{tot}$ such that
$\rho_\mathrm{tot}=\tilde{\rho}_\mathrm{tot}/Z_1$, 
%\[\rho_\mathrm{tot}=\frac{\tilde{\rho}_\mathrm{tot}}{Z_1},\]
where $Z_1$ is a partition function of the total system on $\mathbb{R}^2\times\mathbb{R}^{d-1}$, 
where $\mathbb{R}^2$ is spanned by the Euclidean time $\tau=ix^0$ and $x_\perp$ with respect to $\partial A$ and the rest $\mathbb{R}^{d-1}$ is spanned by $x_\parallel$. 
In the \textbf{Euclidean path integral}\index{Euclidean path integral}, the unnormalized vacuum $\ket{\Psi}$ are prepared by the infinite Euclidean time evolution (Fig.\ref{fig:ket-vac}), i.e the wave function is given by
\[
\braket{\varphi}{\Psi}=\int^{\phi=\varphi} \mathcal{D}\phi e^{-S_E[\phi]},
\]
where $S_E[\phi]$ is the Euclidean action of the theory. 
Then, it defines the unnormalized reduced density matrix 
\[
\mel{\phi^\prime_A}{\tilde{\rho}_A}{\phi_A} := \Tr_{A} \tilde{\rho}_\mathrm{tot} = \int \mathcal{D}\phi_{\bar{A}} \braket{\Psi}{\phi_A, \phi_{\Bar{A}}} \braket{\phi^\prime_A, \phi_{\Bar{A}}}{\Psi}.
\]
Pictorically this means gluing the bra and ket vectors at $\Bar{A}$ (Fig.\ref{fig:euclid-pathint} to the left). 
$\Tr \tilde{\rho}_A^n$ is regarded as a partition function of the theory on $\Sigma_n\times\mathbb{R}^{d-1}$, where 
$\Sigma_n$ is an $n$-folded cover of a two-dimensional plane spanned by  %the Euclidean time 
$\tau$ and $x_\perp$ %, and thus, 
or equivalently a two-dimensional cone with an excess angle $2\pi(n-1)$.
The Euclidean path integral representation of the replicated reduced density matrix is shown in Fig.\ref{fig:euclid-pathint} to the right. The EE \eqref{eq:replica} can be rewritten as 
\begin{align}
	S_A &=-\pdv{n} \left.\left(\Tr \frac{\tilde{\rho}_A^n}{(Z_1)^n} \right)\right\vert _{n\rightarrow 1}\nonumber\\
	&=-\pdv{n}\left.\left(\frac{Z_n}{(Z_1)^n}\right)\right\vert _{n\rightarrow 1}\nonumber\\
	&=\left.\left(\frac{Z_n}{(Z_1)^n} \log Z_1 -\frac{1}{(Z_1)^n} Z_n \pdv{n} \log Z_n\right)\right\vert _{n\rightarrow 1}\nonumber\\
	&=\log Z_1 - \left.\left(\pdv{n} \log Z_n \right)\right\vert _{n\rightarrow 1}\nonumber\\
	%&=\left. \pdv{n} (n\log Z_1 -\log Z_n)\right \vert_{n\rightarrow 1} \nonumber\\
	&=  \left.\pdv{F_n}{n} \right|_{n\to1}-F_1
	\label{e:EE_n}
\end{align}
 in terms of the free energy  on $\Sigma_n \times \mathbb{R}^{d-1}$
\[
F_n:=-\log \Tr_A\tilde{\rho}_A^{\,n}.
\] 

\subsection{Heat kernel method}\label{sec:heat-kernel}
To compute the free energy $F_n$ directly, it seems we need to deal with quantum field theory (QFT) in curved spacetime, however, it is not always the case! %Since EE is defined on an original spacetime without a conical singularity, we expect it is expressed in terms of the original flat spacetime quantities. 
When the subregion is taken to be a half space as in the previous subsection, we can use the so-called \textbf{heat kernel method}\index{heat kernel method}~\cite{Vassilevich:2003xt} to show EE in free QFTs can be expressed in terms of the original flat space quantities. In the heat kernel method, we can compute $F_n$, the free energy on a cone with an excess angle $2\pi(n-1)$. Although we work in the Minkowski spacetime, it also works in a curved spacetime like black hole backgrounds by utilizing the short distance expansion called the heat kernel expansion \cite{Fursaev:1994in,Solodukhin:2011gn,Casini:2010kt,Lewkowycz:2012qr,Hertzberg:2010uv,Herzog:2013py}.

The heat kernel method is applicable for a one-loop effective action. In the saddle point approximation, one can expand the action to quadratic order in fluctuations around the saddle point $\phi=\phi_{cl}$. By performing the Gaussian integral, the one-loop partition function is given by
\begin{equation}
Z=e^{-S[\phi_{cl}]} {\det} ^{-1/2} \hat{\mathcal{D}} .
\label{eq:part-fn-general}
\end{equation}
$\hat{\mathcal{D}}$ is the inverse of the propagator as a differential operator. For simplicity, let us focus on a bosonic field theory with the bare mass $m_0$. $\hat{\mathcal{D}}$ is equal to
\begin{equation}
G^{-1} \equiv -\Box +m_0^2,
\end{equation}
where $\Box$ is the Laplace-Beltrami differential operator.
Then, the free energy apart from the classical contribution is given by
\begin{equation}
    F= \frac{1}{2}\log \det G^{-1} = \frac{1}{2} \tr \log G^{-1}.
\end{equation}
Since
\begin{equation}
    \log \frac{A^2}{\varepsilon^2} = - \int_{\varepsilon^2}^{\infty} \frac{\dd s}{s} e^{-s A^2}, \quad
    \varepsilon \ll 1
\end{equation}
in the Schwinger parametrization, the free energy is written as a worldline partition function, i.e.
\begin{equation}
    F = -\frac{1}{2} \int_{\epsilon^2}^{\infty} \frac{\dd s}{s} \tr e^{-s G^{-1}} =: -\frac{1}{2} \int_{\epsilon^2}^{\infty} \frac{\dd s}{s} e^{-s m_0^2} \tr K(s), 
    \label{eq:free-ene-heat}
\end{equation}
where 
\begin{equation}
    K(x,y,s)\equiv \mel{x}{K(s)}{y}=\mel{x}{e^{s\, \Box}}{y}
\end{equation}
is called the \textbf{heat kernel}\index{heat kernel}.\footnote{The reason for the name ``heat kernel'' is that it satisfies a local heat equation $(\partial_t - \Box)_x K(x,y,s)=0$ with the initial condition $K(x,y,0)=\delta^{d+1}(x-y)$.}
The free energy is regulated by the heat kernel regularization with the UV cutoff $\epsilon^{-1}$. 

\eqref{eq:free-ene-heat} tells us that all we need to calculate is the heat kernel. Our goal was to calculate EE via the free energy on a cone. Let us consider a cone $C_\alpha$ with a conical angle $2\pi\alpha$ and the heat kernel $K_\alpha (x,y,s)$ on it. For simplicity we focus on a two-dimensional cone without transverse directions. As the rotational symmetry follows from the Lorentz invariance, $K_\alpha$ only depends on the difference of the angle $\Delta\theta$. 
Omitting the radial direction, let us denote the kernel by
\[
K_{\alpha}(\Delta\theta, s) := K_\alpha(x,y,s)
\]
where $x=(r\cos\theta,r\sin\theta)$, $y=(r^\prime\cos\theta^\prime,r^\prime\sin\theta^\prime)$, and $\Delta\theta=\theta-\theta^\prime$.
By applying the \textbf{Sommerfeld formula}\index{Sommerfeld formula}, the heat kernel on the cone can be constructed from the flat-space ($2\pi$-periodic) one~\cite{10.1112/plms/s1-28.1.395}:
\begin{equation}
    K_\alpha (\Delta\theta,s)=K(\Delta\theta,s)+\frac{i}{4\pi\alpha}\int_{\Gamma} \dd w \cot\frac{w}{2\alpha} K(\Delta\theta +w, s),
\end{equation}
where $\Gamma=(-\pi- i \infty, -\pi+i\infty)\cup (\pi+i\infty, \pi-i\infty)$. 
Its derivation requires deformations in the complex plane. See e.g. Section 4.7 of \cite{Fursaev:2011zz} for details. It follows from the formula that the heat kernel on the cone is written in terms of the kernel in the flat space.

The heat kernel method relies on the expression of a Laplace-Beltrami type of differential operator. This indicates that it works well for free fields, whose Green function is diagonalizable in the momentum space. The explicit calculation gives $K_\alpha$ in terms of the Bessel function of the first kind and its trace is easily calculated~\cite{Casini:2010kt}. Although the modification of the Green function is permissible~\cite{Solodukhin:2011gn}, it is generally difficult to take interactions into account unless we make some approximation like a coincidence limit~\cite{Hertzberg:2012mn}. Thus, we need an alternative approach to it.

\subsection{Various approaches to entanglement entropy}
Other than the replica trick, there are several different approaches to the computation of EE~\cite{Casini:2009sr}. However, for our purpose, they have limited applicability, especially when one tries to extend them to interacting cases. Nevertheless, since it is instructive to introduce some approaches, we address results from three cases:
(i) free fields, (ii) a half-space subregion, and (iii) an interval in $(1+1)$-dimensional (perturbed) conformal field theories (CFTs). Computations in (i) and (ii) rely on other approaches than the replica trick. The derivation of EE in (iii) uses the replica trick, though, we introduce this case here as well since this uses the conformal invariance, different from our studies in the subsequent sections in this chapter.

The first case (i) can be dealt with by the \textbf{real-time formalism}\index{real-time formalism for entanglement entropy} of the Gaussian EE~\cite{PhysRevD.34.373,PhysRevA.67.052311,Casini:2009sr,Chen:2020ild}:\footnote{The equivalence to the replica trick is discussed in \cite{Casini:2010kt}.}
\begin{equation}
    S_A = \Tr_A \left[\left(\sigma+\frac{1}{2}\right)\log \left(\sigma+\frac{1}{2}\right) - \left(\sigma-\frac{1}{2}\right) \log \left(\sigma-\frac{1}{2}\right) \right],
\end{equation}
where $\sigma=\sqrt{XP}$, $\mel{x}{X}{y}=\expval{\phi(x)\phi(y)}$, and $\mel{x}{P}{y}= \expval{\pi(x)\pi(y)}$ for a bosonic free field. Refer~\cite{PhysRevA.70.052329,Katsinis:2017qzh,Bianchi:2019pvv,Lewkowycz:2012qr,Buividovich:2018scl} for examples.\footnote{The formula can be applied even for non-commutative fields such as the BFSS matrix model~\cite{Buividovich:2018scl}.} A similar formula is found for a free fermionic field as well~\cite{Herzog:2013py}. 
This formula is written in terms of the correlators due to the so-called Williamson theorem~\cite{10.2307/2371062,BB28192157}.
This formula is applicable to any subregions and explicitly written in terms of correlators in the original Minkowski spacetime. The derivation of the formula strongly relies on the Gaussian nature of the ground state. Any Gaussian states remain Gaussian after the partial trace and the reduced density matrix can be diagonalized into a sum of mode-wise thermal states. It is remarkable in the sense that one does not restrict the shape of the subregion and the information of the target state or correlators suffices. On the other hand, this is peculiar to Gaussian states and a generalization to interacting cases is difficult (although see~\cite{Chen:2020ild} for few-order perturbations). Furthermore, it is usually difficult to derive an analytic form after the integration $\Tr_A$. Even for a Gaussian QFT, the partial trace in the position space is much harder than that in the momentum space.

The second case (ii) can be dealt with by the explicit form of the \textbf{modular Hamiltonian}\index{modular Hamiltonian} $K_A = -\log \rho_A -\log Z$. $K_A$ of half space in the Minkowski spacetime is given by \cite{10.21468/SciPostPhysCore.2.2.007}
\begin{equation}
    K_A=2\pi \int_{x^0=0.x_\perp\ge 0} \dd[d-1]{x_\perp} x_1 T^{00}.
    \label{eq:modular-ham-half}
\end{equation}
This is a consequence of the \textbf{Bisognano-Wichmann theorem}\index{Bisognano-Wichmann theorem} in QFT \cite{Bisognano:1976za}. The modular Hamiltonian is actually a boosted Hamiltonian. This is nothing but the Unruh effect and EE becomes the thermal entropy with the Hamiltonian $K_A$. Refer to Appendix \ref{app:thermal-EE} for the equivalence between the EE and thermal entropy. EE is given by
\eqn{S_A=\expval{K_A}+\log Z.}
Thus, it is also written in terms of a correlator (the vacuum expectation value of the Hamiltonian).
However, the explicit evaluation is difficult since it involves a position-space integration and an explicit position dependence in the integrand. The analytic evaluation in this way becomes much more difficult in the presence of interactions.

In the third case (iii), we have an explicit formula for EE for an interval of length $l$ in a $(1+1)$-dimensional CFT with central charge $c$.\index{conformal field theory} This calculation relies on the replica trick and the conformal invariance, which will be discussed in Section \ref{sec:replica-cft}. 
%which we review in the next chapter.
%\mynote{maybe review the calculation of EE in CFT in the next chapter?}
This is known as the \textbf{Calabrese-Cardy formula}\index{Calabrese-Cardy formula}~\cite{Holzhey:1994we,Calabrese:2004eu}, in which EE is calculated from a two-point function of so-called twist operators. The result is
\begin{equation}
    S_A=\frac{c}{3}\log\frac{l}{\epsilon},
\end{equation}
where $l$ is the length of the subregion $A$ and $\epsilon$ is the lattice spacing, which is reciprocally related to the UV cutoff $\Lambda$. In a renormalized QFT perturbed around a CFT, since the only dimensionful coupling is the mass $m$, which is roughly the inverse of the correlation length $\xi$, EE for an interval $l\gg \xi\gg \epsilon$ is given by~\cite{Calabrese:2004eu}\footnote{Note that if we take the subregion $A$ to be a semi-infinite line, $S_A=\frac{c}{6}\log\frac{\xi}{\epsilon}$ since $A$ has only a single endpoint.}
\begin{equation}
    S_A=\frac{c}{3}\log\frac{\xi}{\epsilon}=-\frac{c}{6}\log(m\epsilon).
    \label{eq:EE-massive}
\end{equation}
This result is also obtained from a holographic calculation~\cite{Nishioka:2009un}. (Holographic EE will be discussed in the latter part of Section \ref{sec:CFT2}.)
Again, \eqref{eq:EE-massive} indicates EE is expressed in terms of correlators through the correlation length.
The advantage of this method is that we can compute EE of various shapes of subregions through the Euclidean path integral representation~\cite{Calabrese:2004eu,Calabrese:2009qy,Ruggiero:2018hyl,Hung:2014npa,Casini:2010kt}.
However, the obvious drawback of this formula is that we need to start from a CFT and study a perturbation from it~\cite{Rosenhaus:2014zza,Rosenhaus:2014woa,Rosenhaus:2014ula}. Higher-order contributions are usually very difficult to compute~\cite{Rosenhaus:2014zza}. It is not obvious for the theory far from the fixed point like a theory in the middle of the renormalization group (RG) flow. Furthermore, the higher dimensional generalization of twist operators is not well established.

\subsection{Some known approaches to entanglement entropy in interacting QFTs and remaining issues}
Compared to free or conformal field theories, we have little understanding of EE for general interacting QFTs, 
apart from exactly solvable cases \cite{Donnelly:2019zde}. 
In some supersymmetric theories, the localization method enables an exact calculation of the free energy and EE, 
\cite{Nishioka:2009un,Jafferis:2011zi,Pufu:2016zxm,Nishioka:2013haa,RevModPhys.90.035007}. 
EE in interacting theories is also 
discussed in perturbative \cite{Hertzberg:2012mn,Chen:2020ild,Rosenhaus:2014zza,Rosenhaus:2014woa}, nonperturbative \cite{PhysRevB.80.115122,Akers:2015bgh,Cotler:2015zda,Fernandez-Melgarejo:2020utg,Fernandez-Melgarejo:2021ymz,Bhattacharyya:2017pqq} 
or lattice \cite{Wang_2014,Buividovich:2008kq,Buividovich:2008gq,Itou:2015cyu,Rabenstein:2018bri} approaches. 
Nonperturbative studies have taken advantage of the large-$N$ analysis and the RG flow in the $O(N)$ vector model \cite{PhysRevB.80.115122,Whitsitt:2016irx,Akers:2015bgh,Hampapura:2018uho} or variational trial wave functions \cite{Cotler:2015zda,Fernandez-Melgarejo:2020utg,Fernandez-Melgarejo:2021ymz} or instanton formalism~\cite{Bhattacharyya:2017pqq}. 
These works have partly grasped the behavior of EE relevant to renormalization and beyond free theories. 
Despite these studies, there are several issues yet to be understood.

Since field theories contain infinitely many degrees of freedom, EE suffers from UV divergences and 
appropriate regularization and renormalization are necessary to obtain finite results.\footnote{If the theory is massless like the Gaussian CFTs, it may cause IR divergence. However, we consider a massive theory in this chapter so that the IR divergences are assumed to be absent.} %universal terms. 
For free theories, the UV-divergent EE can be regularized by suitably renormalizing parameters in the background gravity \cite{Cooperman:2013iqr,Barrella:2013wja,Taylor:2016aoi,Taylor:2020uwf,Liu:2012eea,Liu:2013una}.\footnote{There are some subtleties from dimensionality. See~\cite{Demers:1995dq,deAlwis:1995cia,Kim:1996bp}.}\footnote{It may sound puzzling that the UV divergence in EE is absorbed into gravitational couplings even if the theory lives in the flat space~\cite{Pang:2017grr}. Basically, this is because this divergence is regulated by a geometric cutoff; the tip of the cone is smoothed in the replica calculation.} This type of regularized EE is called renormalized EE and is consistent with the treatment for the black hole entropy \cite{Susskind:1994sm,Kabat:1995eq,Larsen:1995ax,Jacobson:1994iw,Fursaev:1994ea,Solodukhin:1995ak,Frolov:1996aj}. 
There are  additional UV divergences  in interacting field theories due to the non-Gaussianity of the vacuum wave function. They should be dealt with  the usual flat space renormalization. 
A perturbative treatment of 
this renormalization was discussed  \cite{Hertzberg:2012mn}.
In the following sections, we focus on the latter divergence. We attempt to separate the non-Gaussian effect due to interactions from the Gaussian contribution after the flat space renormalization. 

\subsection{Orbifold method}\label{s:ZM}
Toward the calculation of EE in interacting QFTs, we consider an analytic continuation of the replica trick, which we call the \textbf{orbifold method}\index{orbifold method}. This was first discussed for free fields in~\cite{Nishioka_2007,He:2014gva}. In addition, this method is also important for the calculation of EE in string theory~\cite{Dabholkar:1994ai,He:2014gva,Witten:2018xfj,Dabholkar:2022mxo}.
Unlike other methods, the orbifold enables us to calculate EE within the ordinary flat-space QFTs but with identifications. In the following, we review the method and how it works for free QFTs.

In the orbifold method, we analytically continue $n$ to $1/M$ with an integer $M$ to obtain the theory on the \textbf{orbifold}\index{orbifold} $\mathbb{R}^2/\mathbb{Z}_M\times \mathbb{R}^{d-1}$ instead of a cone. \eqref{e:EE_n} is then rewritten in terms of the free energy $F^{(M)}=F_{1/n}$ as\index{entanglement entropy}
\aln{
	S_A=-\frac{\partial \left(M F^{(M)}\right) }{\partial M}\bigg|_{M\to 1},
	\label{eq:EE_M}
}
provided that $M\in\mathbb{Z}_{>1}$ can be analytically continued to 1.
A state on the orbifold can be obtained by acting the $\mathbb{Z}_M$ projection operator \cite{Gersdorff_2008}, 
\[
\hat{P}=\sum_{m=0}^{M-1}\frac{ \hat{g}^{\, m} }{M},
\]
on a state in an ordinary  flat plane, 
where $\hat{g}$ is a $2\pi/M$ rotation operator {around the origin},
\aln{
\hat{g}\left|x,\bar{x},x_\parallel\right>=\left|e^{2\pi i/M}x,e^{-2\pi i/M}\bar{x},x_\parallel\right>.
} 
In the following discussion, we will call this $\mathbb{Z}_M$ rotation 
 $\hat{g}^m$ as an  $m$-\textit{twist}, where $ m \in \mathbb{Z}\mod M$.\index{twist} 
%Here 
$\bm{x}=(x,\bar{x})$ are complex coordinates for the perpendicular directions, $x=x_\perp+i\tau,\, \bar{x}=x_\perp-i\tau$. 
%Hereafter they are denoted by $\bm{x}$. 
%In order to perform the summation over $m$ from $1$ to $M-1$, %in a way such that $M$ is analytically continued to $\mathbb{R}$, 
%it is essential to turn it into a complex integral, picking up the residues $\{e^{2\pi i m/M}\}_{m=1, \cdots, M-1}$ \cite{doi:10.1063/1.531345}. By this treatment, the analytic continuation is unique due to the Carlson's theorem \cite{Casini_2009,Witten:2018xfj}.

Let us first consider a free real scalar field theory on the $\mathbb{Z}_M$ orbifold without a nonminimal coupling
to the curvature. Since scalar fields have no spin and are singlet under the spatial rotation, the $\mathbb{Z}_M$ action $\hat{g}$ on the internal space of the fields is trivial. Its explicit action is given as follows: 
	\eqn{
		\hat{g} [\phi (x)]=\phi (\hat{g} x)=\phi (\hat{g}\bm{x},x_\parallel)=\phi( e^{2\pi i/M} x_\perp, e^{-2\pi i/M} \bar{x}_\perp, x_\parallel).
	}
By the abuse of notation, both the $\mathbb{Z}_M$ action on the coordinates and fields are represented by $\hat{g}$.

%\medskip
%By using the orbifold method, EE can be easily calculated for free  theories \cite{Nishioka_2007}. 
%In the case of a real scalar field theory, 
The free energy on the $\mathbb{Z}_M$ orbifold takes the following form,
\aln{
F_\text{free}^{(M)}&=\frac{1}{2}\mathrm{Tr}\,\mathrm{log}[\hat{P}(-\Box+m^2_{0})]\\
&=\frac{1}{2}\int\frac{d^2\bm{k}}{(2\pi)^2}\frac{d^{d-1}k_\parallel}{(2\pi)^{d-1}}\mathrm{log}(k^2+m_{0}^2)\left<\bm{k},k_\parallel\right|\hat{P}\left|\bm{k},k_\parallel\right>. 
\label{eq:free-energy-free-scalar}
}
The trace of $\hat{P}$ is computed as follows.
The diagonal matrix element of $\hat{g}^m$ is given by
\aln{
		\ev{\hat{g}^m}{\bm{k},k_\parallel}&=\bra{{k}_\parallel}\ket{{k}_\parallel}\ev{\hat{g}^m}{k,\bar{k}} %\nonumber\\
	%	&=V_{d-1}\bra{\omega^{-m}k,\omega^m \bar{k}}\ket{ k,\bar{k}}
	%	\nonumber\\
         =(2\pi)^2V_{d-1}\delta(\omega^mk-k)\delta(\omega^{-m}\bar{k}-\bar{k}) ,
   %      \nonumber
     }
where $V_{d-1}=(2\pi)^{d-1} \delta^{d-1}(k_\parallel-k_\parallel)$ is a transverse $(d-1)$-dimensional volume 
 and $\omega=e^{2\pi i/M}$.
For $m\neq 0$, this becomes
 \aln{
		\ev{\hat{g}^m}{\bm{k},k_\parallel}=(2\pi)^2 V_{d-1} \delta^{2}(\bm{k}) \frac{1}{\omega^m-1}\frac{1}{\omega^{-m}-1}.
		\label{eq:twist-k}}
For $m\neq0$, it is proportional to $V_{d-1}$, the area of the boundary $\partial A$. On the other hand, for
$m=0$, it is proportional to 
$V_{d-1}\times\delta^2(\bm{0})\propto V_{d+1}$, the volume of the 
whole region of the path integral. % $A$
From this, we  see that  twisting a propagator with $m \neq 0$ 
constrains the normal components of the momentum zero, ${\bm k}=0$.
%In Appendix \ref{app:twist}, we will give an alternative interpretation in the position space to the twisted propagator. 

The summation over $m$ from $1$ to $M-1$ can be performed as follows.
Given a holomorphic function $f(z)$, its summation is given by
\begin{equation}
    \sum_{m=1}^{M-1} f \left(\omega^{m}\right)=\oint_C
\frac{\dd{z}}{2\pi i} p(z) f(z),
    \label{eq:complex-int}
\end{equation}
where
\eqn{p(z)\equiv \frac{Mz^{M-1}}{z^M -1}-\frac{1}{z-1}}
has simple poles at $\omega^m =e^{2\pi m  i/M} $ with $m=1, \cdots, M-1$. The integration contour $C$ is chosen to surround these poles \cite{doi:10.1063/1.531345} (Fig.\ref{fig:contour}).
\begin{figure}
	\centering
	\includegraphics[width=7cm,clip]{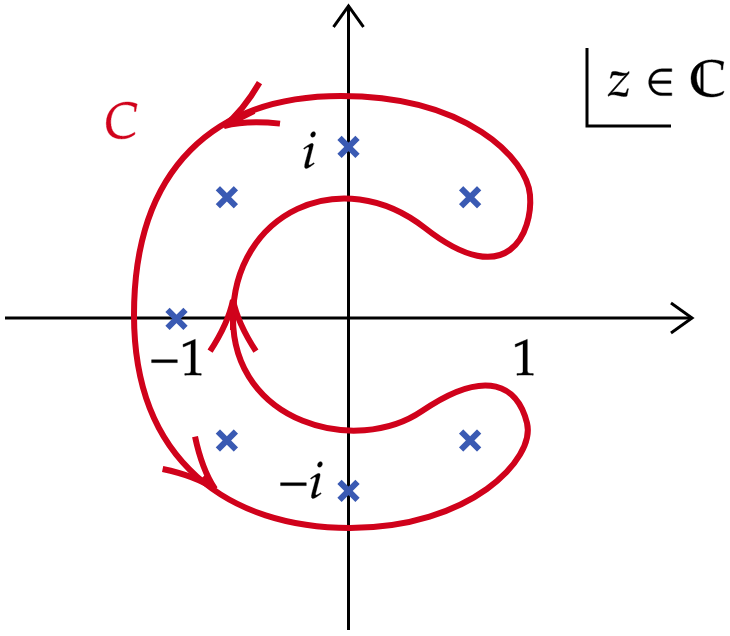}
	\caption{The contour $C$ (red curve) and simple poles of $p(z)$ (blue cross marks) in \eqref{eq:complex-int} ($M=8$, as an example).}
	\label{fig:contour}
\end{figure}
Then the summation of \eqref{eq:twist-k} is calculated as
\al{
	\sum_{m=1}^{M-1} \frac{1}{\omega^m-1}\frac{1}{\omega^{-m}-1}
	 %&    =\oint_C \frac{\dd{z}}{2\pi i} p(z) \frac{1}{z-1} \frac{1}{z^{-1}-1}\nonumber\\
	=\oint_{C_1}\frac{\dd{z}}{2\pi i} \frac{z p(z)}{(z-1)^2} 
%	&=\left.\pdv{z}\left(z p(z)\right)\right\vert_{z=1}\nonumber\\
	=\frac{M^2-1}{12},
	\label{eq:twist-sum}}
where $C_1$ is a counter-clockwise circle around $z=1$. 
It is written in a compact form,
\aln{
\sum_{m=1}^{M-1}\frac{1}{\sin^2\left(\frac{\pi m}{M}\right)}=\frac{M^2-1}{3}.
\label{e:summ}
}

Plugging \eqref{eq:twist-sum} into \eqref{eq:free-energy-free-scalar}, we obtain
%Thus, we have
\aln{
F_{\text{free}}^{(M)} 
&= \frac{1 }{2 M} \int \frac{d^2 \bm{k} \ d^{d-1}k_{\parallel}} {(2 \pi)^{d-1}}   \log (k ^2 + m^2_{0}) \left( \frac{V_{d+1}}{(2\pi)^2}+  V_{d-1}\frac{M^2-1}{12}  \delta^2(\bm{k}) \right) .
\label{eq:1-loop-F}
}
The first term  proportional to $V_{d+1}$ vanishes in the calculation of EE in \eqref{eq:EE_M}. 
We will see later that such property generally holds even in presence of interactions. 
On the other hand, %we have 
the second term is proportional to the area $V_{d-1}$ survives in \eqref{eq:EE_M}
due to an additional $M$-dependence of  $M^2 -1$. 
Consequently, 
EE for a free scalar theory is given by
\aln{
S_{\text{1-loop}}=-\frac{V_{d-1}}{12}\int^{1/\epsilon}\frac{d^{d-1}k_\parallel}{(2\pi)^{d-1}}\log\left[\bigl(\tilde{G}_0^\text{bdry}(k_\parallel)\bigr)^{-1}\epsilon^{2}\right].
\label{EE-1loop}
} 
%which obeys the area law of EE. %Here we have introduced a UV cutoff scale $\epsilon$ is introduced. 
Here we have introduced a UV cutoff given by the lattice spacing $\epsilon$. In the following, we will sometimes omit $1/\epsilon$ in the integral for simplicity. %, which is naturally identified as a lattice spacing for a lattice system, 
$\tilde{G}_0^{\text{bdry}}$ is the momentum space representation of the boundary propagator,
\[
\bigr(\tilde{G}_0^\text{bdry}(k_\parallel)\bigr)^{-1}=k_\parallel^2+m^2_{0}.
\]
This gives an effective squared mass on the two-dimensional plane with nonzero transverse momentum $k_\parallel$.
Note that the EE decreases as the mass increases. \eqref{EE-1loop} is consistent with the CFT result \eqref{eq:EE-massive} when $m=0$ (Gaussian fixed point).

The calculation for the scalar theory can be easily generalized to bosonic and fermionic higher spin theories \cite{He:2014gva}. 
In the bosonic case, a state is parametrized by $|{\bm x}, x_\parallel; s \rangle$, where $s$ is a spin of $SO(2)\subset SO(d+1)$ rotation around the $x_\parallel$ axis. 
Then the action of the two-dimensional rotation $\hat{g}$ is given by
\begin{equation}
	\hat{g}\left|x,\bar{x},x_\parallel;s \right>= e^{2\pi s i/M }
	\left|e^{2\pi i/M}x,e^{-2\pi i/M}\bar{x},x_\parallel; s \right>
	\label{eq:spin-twist}
\end{equation}
and the sum over $m \neq 0$ in \eqref{e:summ}
 is replaced by\footnote{When we sum over the internal space, the positive and negative spins appear in pairs. In total, the contribution from each spin is given by \eqref{e:summ-highers}. Details are explained in Appendix \ref{app:spinor}.} 
\aln{
\sum_{m=1}^{M-1}\frac{ \cos (\frac{2\pi ms}{M} ) }{\sin^2\left(\frac{\pi m}{M}\right)}=
\frac{1}{3} \left[M^2-1+6M^2\left( \left\{ s/M\right\}^2 -  \left\{ s/M\right\}\right) \right],
\label{e:summ-highers}
} 
where $\{ x \}$ is a fractional part of $x$. 
For fermionic generalizations, we need special care since $2\pi$ rotation
gives an extra minus sign, $\hat{g}^M=-1$, and it cannot be regarded as $\mathbb{Z}_M$ orbifold. 
To overcome this difficulty, the authors in  \cite{He:2014gva} take  an odd $M$ and consider 
$\hat{g}^2$ as the generator of $\mathbb{Z}_M$ orbifold on a double cover of the Riemann surface. Then, this works both for bosonic and fermionic higher spin fields. 
\eqref{eq:free-energy-free-scalar} with the internal basis for spins is\footnote{For gauge fields, we consider the free energy after fixing a gauge. See Appendix \ref{app:spin} for an example.}
\aln{
	F_\text{free}^{(M)}&=(-1)^F \frac{1}{2}\mathrm{Tr}\,\mathrm{log}[\hat{P}_{odd}(-\Box+m^2_{0})]\\
	&=\frac{1}{2}\sum_{\text{internal dof}} \int\frac{d^2\bm{k}}{(2\pi)^2}\frac{d^{d-1}k_\parallel}{(2\pi)^{d-1}}\mathrm{log}(k^2+m_{0}^2)\left<\bm{k},k_\parallel; s\right|\hat{P}_{odd}\left|\bm{k},k_\parallel; s\right>
}
with
\begin{equation}
	\hat{P}_{odd}= \sum_{m=1}^{M-1}\frac{\hat{g}^{2m}}{M},
\end{equation}
assuming the odd $M$. The summation over the internal degrees of freedom means we sum each spin contribution with weights counting its number of components. See Appendix \ref{app:spinor} and~\cite{He:2014gva} for details.

Another subtlety  in higher spin generalizations
in  analytical continuation of $M$ since \eqref{e:summ-highers} contains a
non-analytic function, $\left\{ s/M\right\}$, and we need to constrain the value of $s$ within $[-M, M]$ for fermions
or $[0,2M]$ for bosons. Thus the calculation of EE for higher spins than 3/2 may have subtlety in the orbifold method. 
For more details, see  \cite{He:2014gva}. 

\section{Orbifold method for interacting QFTs}\label{sec:orb-int}

\subsection{Interacting QFTs on an orbifold}\label{sec:int-orb}
In the previous section, we reviewed the one-loop calculation of EE by~\cite{Nishioka_2007,He:2014gva}. This method basically works for free fields but not for interacting cases. In~\cite{Iso:2021vrk,Iso:2021rop,Iso:2021dlj}, we extended the orbifold method\index{orbifold method} in a way applicable to interacting cases. To see this, we first discuss interacting QFTs on the $\mathbb{Z}_M$ orbifold in general. For simplicity, we focus on a scalar field theory, however, the higher spin generalization will be discussed in Section \ref{sec:spin-general}.

Consider, for a moment, a $\phi^4$ scalar field theory on  a $\mathbb{Z}_M$ orbifold without a non-minimal coupling
to the curvature.   
The action for a scalar field theory on $\mathbb{Z}_M$ orbifold\index{scalar field theory on $\mathbb{Z}_M$ orbifold} is given by 
\aln{
	I=	\int_{\mathbb{R}^2} \frac{d^2x}{M} \int_{\mathbb{R}^{d-1}} d^{d-1}x_\parallel \left[\frac{1}{2}
	%\phi \hat{P} 
	(\hat{P} \phi) \left(-\Box+m^2_{0} \right)
	(\hat{P}\phi )
	+V(\hat{P}\phi)\right]
	\label{eq:action} 
}
in terms of a field $\phi(x)$ in flat space $\mathbb{R}^2\times\mathbb{R}^{d-1}$ %and 
but with the projection operator,
where
\aln{
	\hat{P}\phi(\bm{x},x_\parallel):=\frac{1}{M}\sum_{m=0}^{M-1}\phi(\hat{g}^m\bm{x},x_\parallel).
}%--------------------------
In the following, we consider the $\phi^4$ theory, $V(\phi)=\frac{\lambda}{4}(\phi)^4$, for simplicity.
From the action \eqref{eq:action}, 
the inverse propagator of the orbifold theory in flat space is given by
\aln{
	\hat{G}_0^{-1\, (M)}= \frac{\hat{P}\hat{G}_0^{-1}\hat{P}}{M} =  \frac{\hat{P} \left(-\Box+m^2_{0} \right)   \hat{P}}{M} ,
	\label{propagator-orbifold}
}
%Using the relation
%\[
%\hat{G}_0^{-1\, (M)} \frac{1}{M} \hat{G}_0^{(M)}=\hat{G}_0^{(M)} \frac{1}{M} \hat{G}_0^{-1\, (M)}=\hat{P},
%\]
and  the propagator, which satisfies $\hat{G}_0^{-1\, (M)} \hat{G}_{0}^{(M)}=\hat{P} $, is then written as
\aln{
	G_{0}^{(M)} (x,y) &= M\bra{x}(\hat{P} \hat{G}_0 \hat{P})\ket{y} 
	=\sum_{m=0}^{M-1} G_0(\hat{g}^m x,y) ,
	% \nonumber \\
	%&=  \sum_{n=0}^{M-1} 
	%\nonumber
	\label{eq:green}
}
where 
\aln{ 
	G_0(\hat{g}^m x, y) %= \bra{\hat{g}^n x} G_0 \ket{y} 
	&=  \int\frac{d^{d+1} p}{(2\pi)^{d+1}}\frac{e^{i p \cdot(\hat{g}^m x-y)}}{p^2+m^2_{0}} %\nonumber\\&
	= \int\frac{d^2\bm{p}}{(2\pi)^2}\frac{d^{d+1} p_\parallel}{(2\pi)^{d+1}}\frac{e^{i \bm{p} \cdot(\hat{g}^m \bm{x}-\bm{y})+ip_\parallel\cdot(x_\parallel-y_\parallel)}}{p^2+m^2_{0}}.
}
The $\mathbb{Z}_M$ rotation on $y$ 
has been eliminated since a projection operator $\hat{P}$ commutes with $\hat{G}_0$.
From the identity ${\bm p}\cdot \hat{g}^m {\bm x}=\hat{g}^{-m} {\bm p} \cdot {\bm x}$, we see that
the flow-in momentum from the propagator at a vertex $x$ is given by the twisted momentum\index{twisted momentum} $(\hat{g}^{-m} {\bm p}, \ p_\parallel)$. 
In the momentum space representation, the propagator is written as 
\aln{
	\left<\bm{p},p_\parallel\right|G^{(M)}\left|\bm{q},q_\parallel\right>=\sum_{m=0}^{M-1}\frac{1}{p^2+m^2_{0}}(2\pi)^{d+1}\delta^2(\hat{g}^m\bm{p}-\bm{q})\delta^{d-1}(p_\parallel-q_\parallel)
}
with $-m$ redefined as $m$. In the position space, the twisted propagator is interpreted as a ``pinned'' correlator across the entangling surface $\partial A$. In particular, if there is only a single twist, we can make a following replacement:
\begin{equation}
    G_0(\hat{g}^mx-y)\rightarrow\frac{1}{4\sin^2\frac{m\pi}{M}}\int\frac{d^{d-1}k_\parallel}{(2\pi)^{d-1}}
    \tilde{G}_0^\text{bdry}(k_\parallel) {e^{ik_\parallel\cdot r_\parallel}}
    \delta^2(\bm{X})
    \equiv\frac{1}{4\sin^2\frac{m\pi}{M}}G_0^{\text{bdry}}(r_\parallel)\delta^2(\bm{X}).
    \label{e:efftwistprop-body}
\end{equation}
See Appendix \ref{app:twist} for details. The interpretation in the position space is in a reasonable agreement with the intuitive picture of entanglement; a quantum correlation between $A$ and $\bar{A}$.

%On the other hand, 
The interaction vertex in the Euclidean $\phi^4$ theory 
read off from the action \eqref{eq:action} is
\begin{equation}
    -3!\frac{\lambda}{M}.
	\label{eq:vertex-M}
\end{equation}
%This is 
The $x$ integration gives 
the ordinary momentum conserving delta functions
\aln{ \delta^2 \left(\sum_i \hat{g}^{m_i} \bm{p}_i \right)\ \delta^{d-1} \left(\sum_i p_{i\parallel} \right)
	\label{momentum-conserving-deltafuntion}
}
with twisted flow-in momenta.
%Note that the projection operators 
%in the interacting term in the action Eq.(\ref{eq:action} )
%can be absorbed into those in the propagators by using the relation $\hat{P}^2=\hat{P}$.

%%%%%%%%%%%%%%%%%%%%%%%%%%%%%%%%%%%%%%%%%%%%%%%%
\subsection{Redundancies in twisted diagrams}
Due to the $\mathbb{Z}_M$ symmetry, not all twists in a Feynman diagram are independent. The twisted part of \eqref{momentum-conserving-deltafuntion}
\begin{align}
    \delta^2 \left(\sum_i \hat{g}^{m_i} \bm{p}_i \right) 
    &= \delta^2 \left(\hat{g}^{m_{i_0}}\bm{p}_{i_0} + \sum_{i\neq i_0} \hat{g}^{m_i} \bm{p}_i \right)\\
    &= \delta^2 \left(\bm{p}_{i_0} + \sum_{i\neq i_0} \hat{g}^{m_i-m_{i_0}} \bm{p}_i \right).
\end{align}
Note that a twist $\hat{g}^n$ is just a phase factor. If there is no $\hat{g}^{i_0}$ factor in the diagram other than the delta function, then we can redefine the twist $m_i-m_{i_0}\rightarrow m_i$ for $i\neq i_0$; the $i_0$-twist is eliminated. The summation over $i_0$ gives an overall factor $M$. Fig.\ref{fig:phi43loops} is an example of such diagrams. Note that $\tilde{G}(\hat{g}^m \bm{p}, p_\parallel)=\tilde{G}(\bm{p}, p_\parallel)$ since the propagator only depends on the squared momentum. We can also interpret Fig.\ref{fig:phi43loops} as a kind of the one-loop diagram (Fig.\ref{fig:phi43loops-2}) and the only independent twist is $m_1+m_2+m_3-m_4$. It indicates a one-loop diagram has only one independent twist. 

After all, we can eliminate the redundant twists by using the invariance of the vertices under $\mathbb{Z}_M$ rotations. 
This procedure reminds us of gauge fixing in an ordinary gauge theory. In the next subsection, we will show that this analogy works well in the investigation and that we can extract independent twists systematically. We call this methodology \textbf{$\mathbb{Z}_M$ (lattice-like) gauge theory on Feynman diagrams}\index{$\mathbb{Z}_M$ gauge theory on Feynman diagrams}\index{$\mathbb{Z}_M$ lattice-like gauge theory on Feynman diagrams}. 

%%%%%%%%%%%%%%%%%%%%%%%%%%%%%%%%%%
\begin{figure}[t]
	\centering
	\subfloat{\includegraphics[width=7cm,clip]{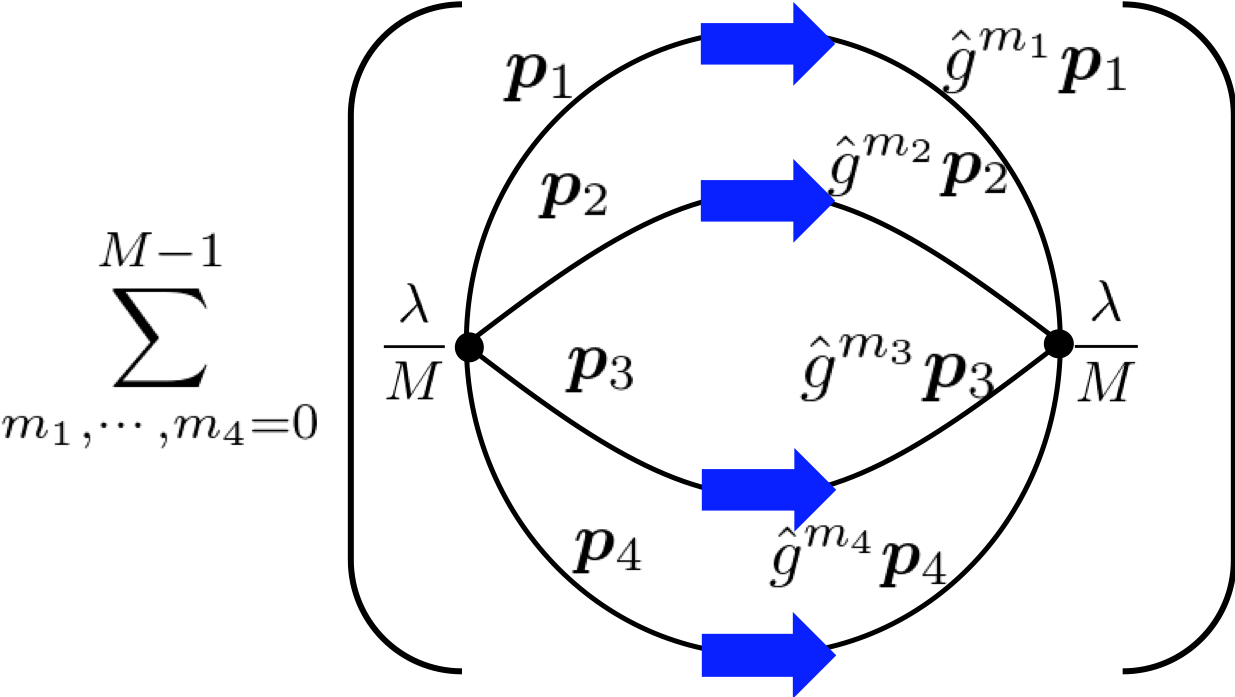}\label{fig:bubble1}}
	\hspace{1cm}
	\subfloat{\includegraphics[width=7cm,clip]{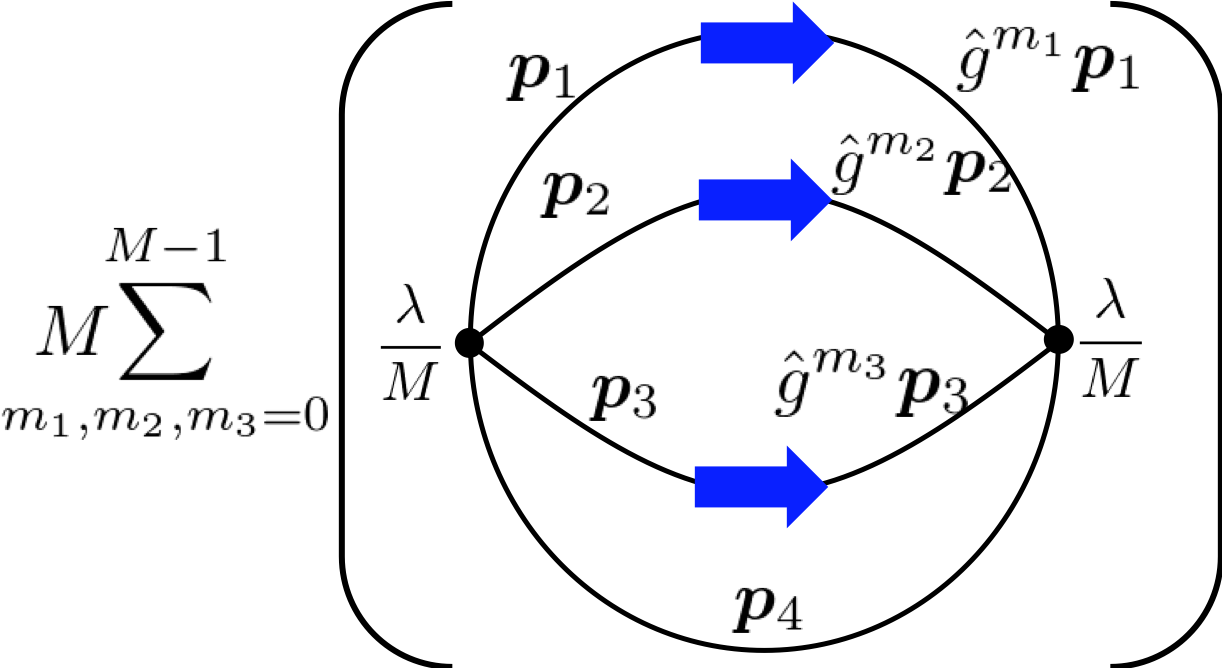}\label{fig:bubble2}}
	\caption{Two equivalent configurations of twists in the three-loop diagram. 
	%Red dashed lines denote the propagators with the momentum twisted. 
	Blue arrows denote twisted momenta with a twist $m_i$. 
The bottom propagator in the right is made untwsited by a $\mathbb{Z}_M$ rotation at a vertex.}
	\label{fig:phi43loops}
\end{figure}
%%%%%%%%%%%%%%%%%%%%%%%%%%%%%%%%%%
%%%%%%%%%%%%%%%%%%%%%%%%%%%%%%%%%%
\begin{figure}[t]
	\centering
	\includegraphics[width=9cm,clip]{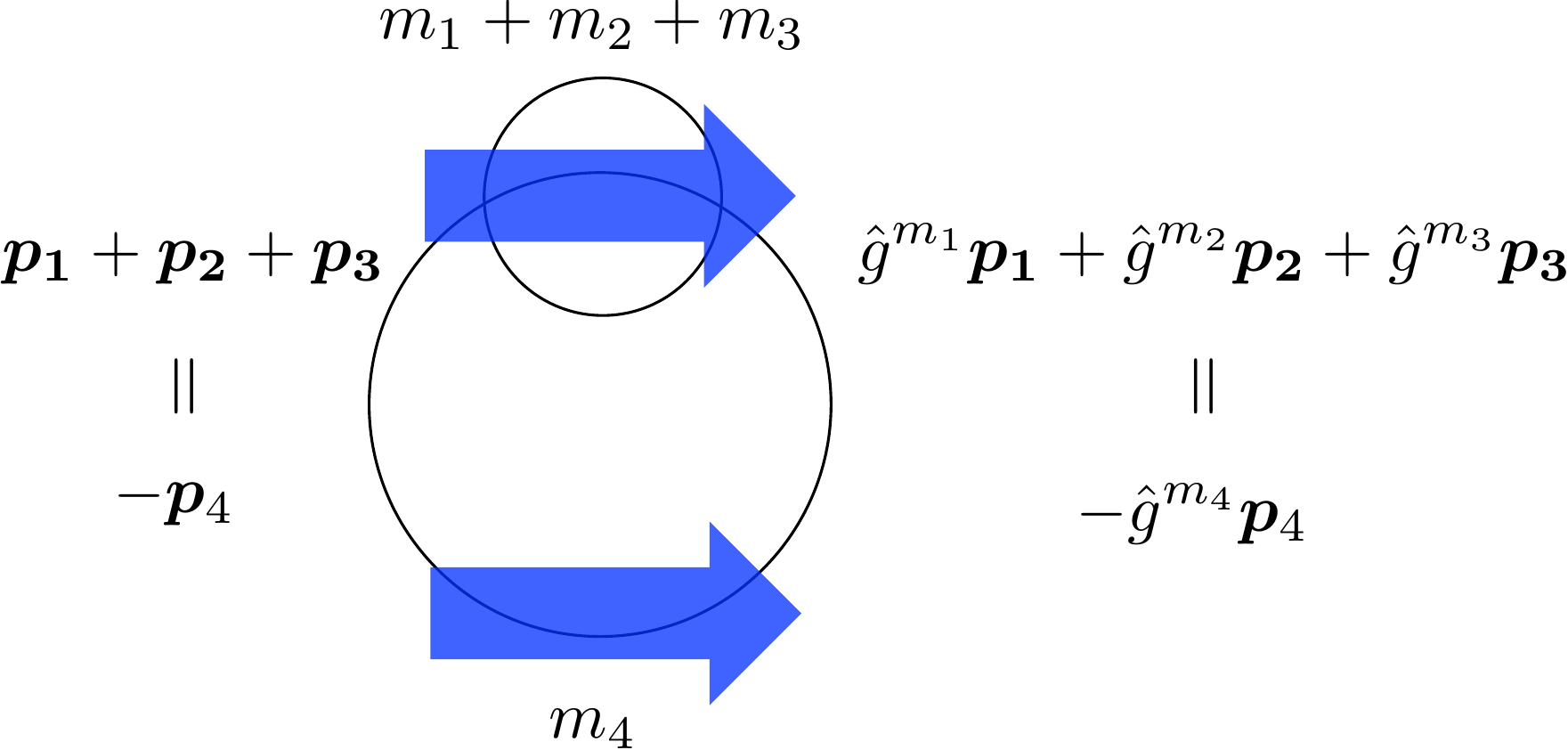}
	\caption{A diagram equivalent to Fig.\ref{fig:phi43loops}, where the three twisted propagators and two vertices are regarded as a single twisted object with $m_1+m_2+m_3$. The redundancy indicates there is only a single independent twist $m_1+m_2+m_3-m_4$ for this `one-loop' diagram.}
	\label{fig:phi43loops-2}
\end{figure}
%%%%%%%%%%%%%%%%%%%%%%%%%%%%%%%%%%

%This type of redundancies leads us to regard twisted diagrams as the \textit{$\mathbb{Z}_M$ lattice-like gauge theory on Feynman diagrams}.

\subsection{$\mathbb{Z}_M$ lattice-like gauge theory on Feynman diagrams}\label{s:ZM-gauge}

In this section, we interpret the orbifold method in interacting field theories as a gauge theory to calculate the free energy
of the $\mathbb{Z}_M$ orbifold.  
Namely,  assign  $\mathbb{Z}_M$ twists on  each link (i.e., on a propagator) and  
define  $\mathbb{Z}_M$ gauge transformations  on each vertex, and take a summation
over all the twists modulo $\mathbb{Z}_M$ gauge transformations. 
Then, a gauge-invariant  configuration of twists  is characterized 
by a set of fluxes of twists on each plaquette of each Feynman diagram. 
%Within this framework, we can easily prove the area law of  EE.
This framework is not only telling us a correct prescription to compute calculable ($\sim$dominant) contributions to EE, but also useful in proving the area law to all orders, which has not been done for general interacting QFTs before our work.

%%%%%%%%%%%%%%%%%%%%%%%%%%%%%%%%%%%%%%%%
\begin{figure}[h]
	\begin{tabular}{c}%prevent line break
		\hspace*{0.15\linewidth}
		\begin{minipage}{0.3\hsize}%align figs horizontally
			\centering
			\includegraphics[width=\linewidth]{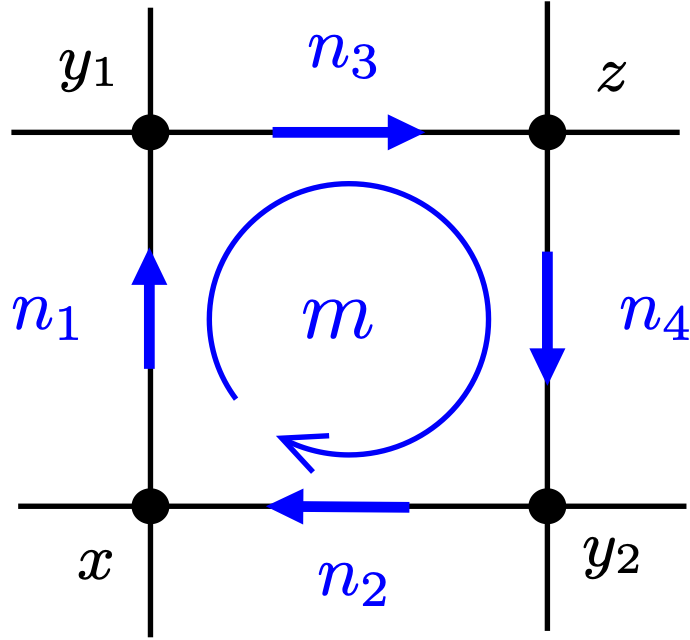}
		\end{minipage}
	\hspace*{0.1\linewidth}
		\begin{minipage}{0.35\hsize}
			\centering
			\includegraphics[width=\linewidth]{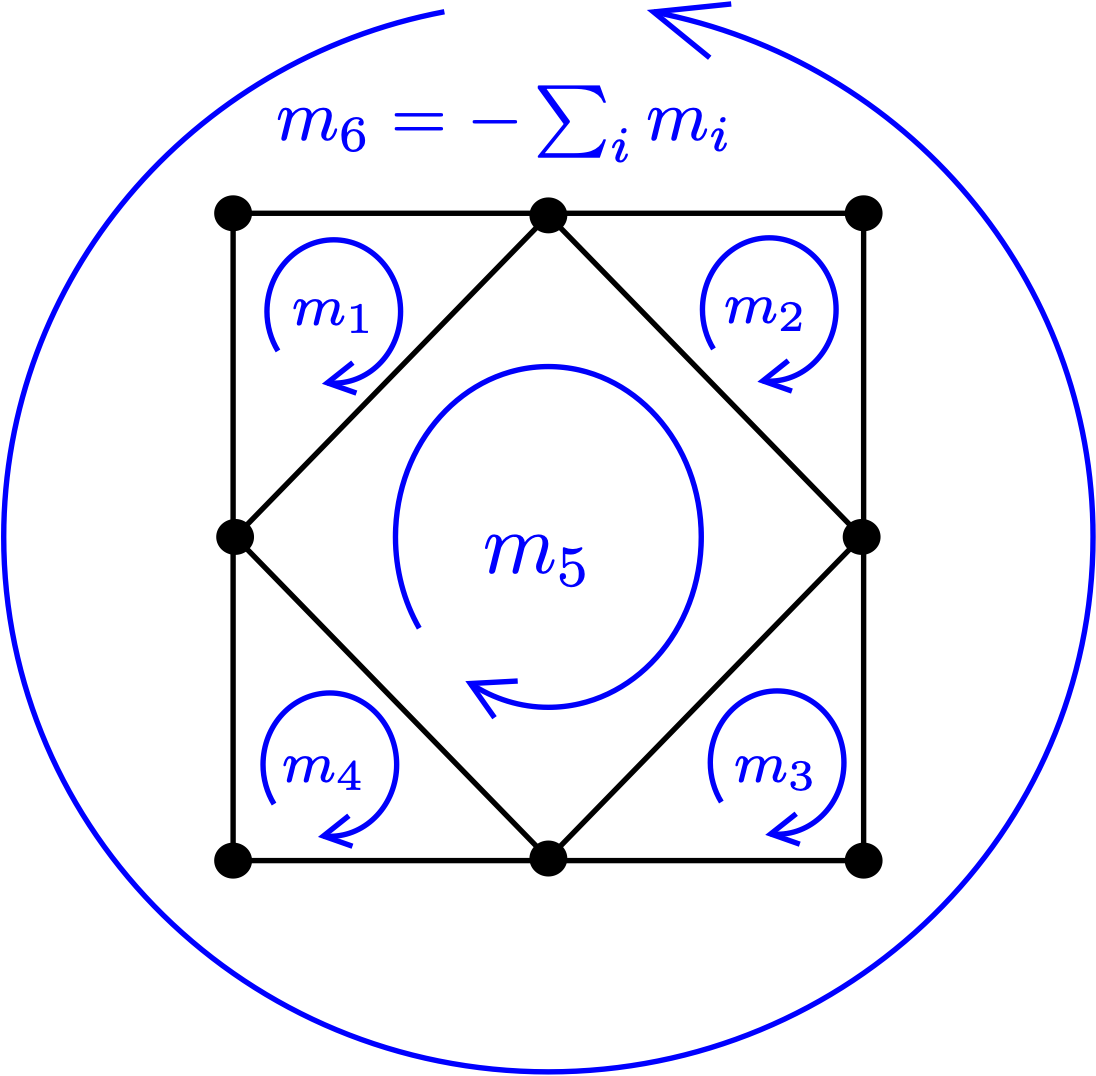}
		\end{minipage}
	\end{tabular}
\caption{
$\mathbb{Z}_M$ gauge theory on Feynman diagrams: $\{n_i\}$ are twists on links (propagators), and $m=\sum_i n_i \mod M$ is 
a flux of twists around the plaquette and invariant under  $\mathbb{Z}_M$ gauge transformations on vertices.
The right figure is a set of $\mathbb{Z}_M$ invariant fluxes of twists on plaquettes.}
\label{Fig1}
\end{figure}
%%%%%%%%%%%%%%%%%%%%%%%%%%%%%%%%%%%%%%%%%%%
On a $\mathbb{Z}_M$ orbifold, each propagator in a Feynman diagram is twisted as in \eqref{eq:green}. 
The notion of $\mathbb{Z}_M$ gauge symmetry appears since we can rotate away 
some of the twists of propagators by the $\mathbb{Z}_M$ gauge transformations on vertices of the Feynman diagram.
In general, a rotation of the coordinates at the vertex $x$ in the left figure of Fig.\ref{Fig1} by $2\pi l/M$ shifts $n_1$ by $l$, and $n_2$ by $-l$; 
therefore, {the sum} of twists around a plaquette\index{plaquette} $m=\sum_{i=1}^4 n_i \mod M$, which we call a \textit{flux}\index{flux}, is invariant under $\mathbb{Z}_M$ rotations at vertices.
In other words, we can classify
independent configurations of twists
up to  $\mathbb{Z}_M$ transformations in terms of 
$\mathbb{Z}_M$ fluxes of twists on plaquettes. 
For instance, for a Feynman diagram on the right of Fig.\ref{Fig1}, a set of $L (=5)$ fluxes of twists on each plaquette specifies an independent configuration of twists in a $\mathbb{Z}_M$-invariant way. Since the independent configurations are characterized by twist numbers of plaquettes, the number of independent twists coincides with the number of loops $L$. We can assign the complement twist number to the outer circle ($m_6=- \sum_{i=1}^5 m_i$ in the right figure of Fig.\ref{Fig1}). Its minus sign comes from the opposite direction of the twist when the diagram is put on a sphere. As the twist of the outer loop is not an independent one, we will omit writing it in the following.

In the language of gauge theory, twists correspond to gauge fields. Vertices in a Feynman diagram are topologically connected ``sites'' on a lattice.  
Since a twist on a propagator is defined between the two vertices, it can be regarded as a link variable associated with the relative phase of the vertices. Then, the %invariance
$\mathbb{Z}_M$ rotation on each vertex is interpreted as a local change of the phase. It is nothing but a gauge transformation, 
but the angle  is restricted to $2\pi m/M$ with $m=0,\cdots,M-1$. 
As a result, it is understood as the \textbf{$\mathbb{Z}_M$ (lattice-like) gauge theory on Feynman diagrams}\index{$\mathbb{Z}_M$ gauge theory on Feynman diagrams}\index{$\mathbb{Z}_M$ lattice-like gauge theory on Feynman diagrams}. 
A flux in a plaquette is invariant under $\mathbb{Z}_M$ rotations at vertices and characterizes  distinct configurations. It is a counterpart of the Wilson loop, a gauge-invariant object in a gauge theory. 
%A difference is that the internal space in the present case is phases of the coordinates or momenta. 
%Accordingly, a twist of a plaquette is also regarded as a twist of the corresponding loop momentum. 
A flux of twists is defined as a sum of the twists  of propagators in a counterclockwise direction along a plaquette. 
The flux is defined modulo $M$; i.e. $-m$ flux is equivalent to $M-m$ flux.

%Appendix \ref{appenarea} for area law.

\section{Perturbative calculation of entanglement entropy}\label{sec:pert-EE}
The procedure %to calculate a contribution to 
to calculate EE  is straightforward in principle. 
(1) Perform momentum integrations of each bubble Feynman diagram 
with a fixed configuration of twists, i.e. fluxes $\{m_l\}$ for plaquettes. (2) Sum up them over all the twist configurations  of $m$'s to obtain a bubble diagram on the $\mathbb{Z}_M$ orbifold.
(3) Sum all the connected bubble diagrams to obtain the free energy on the $\mathbb{Z}_M$ orbifold $F^{(M)}$. (4) From the $M$-dependence of the free energy, one can obtain EE from \eqref{eq:EE_M}.

Among various flux configurations, we have a trivial bubble with no twist $m_1=m_2=\cdots=0$. The free energy without any twists is nothing but that in flat space $F^{(1)}$, which is proportional to the volume $(2\pi)^{d+1}\delta^{d+1}(0)=V_{d+1}$. The $M$-dependence in an arbitrary bubble diagram is given by
\begin{equation}
    \left(\frac{1}{M}\right)^{N_V} M^{N_P-L}=\frac{1}{M}
\end{equation}
in terms of the number of vertices $N_V$, propagators (edges) $N_P$, and loops $L$. 
The first factor comes from vertices \eqref{eq:vertex-M}. The second factor comes from the summation over the redundant twists. The equality follows from the Euler's polyhedral formula. Remarkably, since the free energy of the zero twist configuration is proportional to $1/M$, it does not contribute to EE \eqref{eq:EE_M}! In fact, this identity plays an important role in proving the \textbf{area law}\index{area law} of EE. In Appendix \ref{appenarea}, we proved the area law of EE, R\'{e}nyi entropy\index{R\'{e}nyi entropy}, and capacity of entanglement~\cite{PhysRevLett.105.080501,deBoer:2018mzv}\index{capacity of entanglement} of half space in arbitrary QFTs with local interactions to all orders.

Since the configuration of trivial twists does not contribute to EE, we are interested in configurations where at least one of the twists are nonzero. In the following sections, we denote a part in the free energy contributing to EE by $\tilde{F}^{(M)}$, i.e.
\begin{equation}
    F^{(M)}=\frac{F^{(1)}}{M}+\tilde{F}^{(M)}.
\end{equation}
The evaluation of  Feynman diagrams with nonvanishing twists is in general very involved due to multiple discrete summations (or contour integrals). 
Thus our strategy is,  instead of considering general configurations of twists, 
to focus on dominant contributions to EE.
In particular, we consider two specific types of configurations, contributions from a single {\it twisted propagator} or {\it twisted vertex}, and the all-order resummation of them. %In terms of fluxes, these diagrams contain multiple fluxes.
We will later discuss in Section \ref{sec:wilson-rg} why they will give dominant contributions to EE
and how  the rest of contributions are incorporated in the Wilsonian renormalization picture. %\footnote{Although in Sec. \ref{s:prop} and \ref{s:vert}, flux configurations that cannot be attributed to a single twist of either a propagator or a vertex remain uncalculated, we can address this issue via the Wilsonian renormalization group, which will be discussed in Sec. \ref{s:discussion}.}

\subsection{One-loop calculation revisited}

%%%%%%%%%%%%%%%%%%%%%%%%%%%%%%%%%%%%%%%%
\begin{figure}
	\centering
	\includegraphics[width=0.25\linewidth]{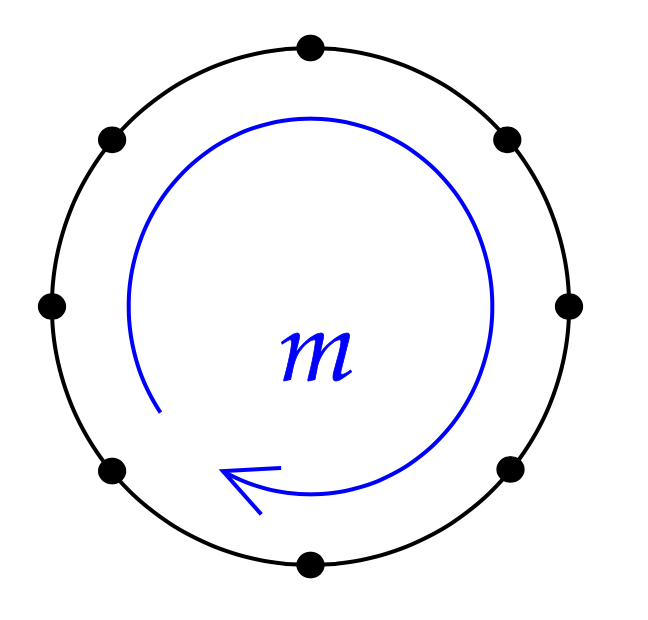}
	\caption{
		There is a single twist for a one-loop diagram.
		The number of independent twists is unchanged
		even if we divide  the propagator into multiple connecting pieces as drawn in the figure.
	}
	\label{Fig2}
\end{figure}
%%%%%%%%%%%%%%%%%%%%%%%%%%%%%%%%%%%%%%%%%%%

%%%%%%%%%%%%%%%%%%%%%%%%%%%%%%%%%%%
We already computed EE at the one-loop level in Section \ref{s:ZM}, however, let us reconsider it from the $\mathbb{Z}_M$ gauge theory perspective. A one-loop bubble diagram (Fig.\ref{Fig2}) is a simple example of the $\mathbb{Z}_M$ invariant configuration. The one-loop diagram has only one plaquette and the $\mathbb{Z}_M$ invariant twist is given by the flux {$m$}.
This property is essential for the later discussion to prove our main result, in which EE can be written as a sum of one-loop type diagrams of various composite operators.

In the perturbative approach, the one-loop bubble is given by the summation of the Feynman diagrams with {“two-point vertices,”}
\aln{
		{F}^{(M)}_\text{1-loop}
		= \frac{1}{2}\Tr \log [\hat{P}G_0^{-1}\hat{P}]
		=\frac{1}{2}\int_{{\epsilon^2}}\frac{ds}{s}\Tr e^{-sG_0^{(M)}/M}
		=\frac{1}{2}\int_{\epsilon^2}\frac{ds}{s}\sum_{n=0}^{\infty}\frac{(-s)^n}{n!}\Tr\left[\left(G_0^{(M)}\frac{1}{M}\right)^n\right].
\label{e:1loopexpand}
} 
%They are exceptional 
These diagrams are exceptional in the sense that they are composed of %consist of 
a {single} chain of the propagators {connected by the two-point vertices of $(1/M)$}.\footnote{%-----------------------
	{Reflecting the orbifold action in (\ref{eq:action}), the path integral measure is given by 
		\aln{
			\int \mathcal{D}(\delta\phi)\, e^{-\frac{1}{2M}\int d^{d+1}x(\delta\phi)^2}=1
		} 
		so that the 
		one-loop part of the free energy
		is given by $(1/2)\mathrm{Tr}\,\mathrm{ln}( \hat{P} G_0^{-1} \hat{P}) = (1/2)\mathrm{Tr}\,\mathrm{ln}(M(G_0^{(M)})^{-1})$
		rather than $(1/2)\mathrm{Tr}\,\mathrm{ln}((G_0^{(M)})^{-1})$. 
		See \eqref{propagator-orbifold}. 
		This is responsible for the coefficient $(1/M)$ of the “two-point vertex”  in \eqref{e:1loopexpand}. 
	}
} %--------------------
%(Fig.\ref{Fig2}). 
There is only one configuration with an $m$-flux  for the center plaquette. 
It can be regarded as a twist of a {\it single} propagator among
$n$ propagators in the expansion \eqref{e:1loopexpand}.
 %\footnote{%-------------------
%Note that they are in the same kind of configurations as Fig.\ref{f:proptwist} with the complementary twist assigned to the outer circle. 
%} %--------------------- 
From the viewpoint of operators, it stems from the idempotency of the projection: $\hat{P}^2=\hat{P}$. Thus, the correct, twisted one-loop free energy is
\begin{equation}
    {F}^{(M)}_\text{1-loop}
	= \frac{1}{2M} \int_{\epsilon^2}\frac{ds}{s}\sum_{n=0}^{\infty}\frac{(-s)^n}{n!}\mathrm{Tr}\,[G_0^{n-1}\,G_0^{(M)}].
	\label{eq:gauge-fix-1-loop}
\end{equation}
%Therefore, $G_0^{(M)}$'s in Eq.(\ref{e:1loopexpand}) can  be replaced with an untwisted propagator $G_0$ 
%except one at each diagram. 

It is very important to notice that \eqref{e:1loopexpand} is not equal to the naive free energy obtained by summing every single twisted propagator\index{twisted propagator} in the diagram,
\begin{equation}
\begin{split}
    F^{(naive)\, (M)}_\text{1-loop}&\equiv 
    \sum_{k=0}^{n-1}
	\frac{1}{2}\int_{\epsilon^2}\frac{ds}{s}\sum_{n=0}^{\infty}\frac{(-s)^n}{n!}\Tr\left[
	\left(G_0\frac{1}{M}\right)^{k}\left(G_0^{(M)}\frac{1}{M}\right)\left(G_0\frac{1}{M}\right)^{n-k-1}\right] \\
	&=\frac{1}{2M}\int_{\epsilon^2}\frac{ds}{s}\sum_{n=0}^{\infty}\frac{(-s)^n}{n!} n\Tr\left[
	G_0^{n-1}\, G_0^{(M)}\right] \\
	&\neq {F}^{(M)}_\text{1-loop}.
\end{split}
\label{eq:1-loop-overcount}
\end{equation}
This is an overcounting of the diagrams by a factor $n$ for a one-loop bubble with $n$ propagators. To obtain the correct result, we need to divide the naive twisted bubble with $n$ propagators by its multiplicity $n$. This way of calculation corresponds to performing summation first and then dividing by the gauge volume. This is in contrast to the calculation \eqref{eq:gauge-fix-1-loop}, where we fix a particular gauge. In this section, we discuss perturbative contributions by fixing a gauge, i.e. specify where to twist in the diagram. However, later we employ the ``divide-by-multiplicity'' method\index{divide-by-multiplicity method} to resum the perturbative contributions to all orders.

%Since twists do not affect transverse $(d-1)$-dimensional space, we write only the 2-dimensional 
%momenta (or coordinates) in the following, unless otherwise noted. 
%The free energy with twist $n$ is easily calculated \cite{Nishioka_2007} by noting 
%that $\bra{\bm{k}} \hat{g}^n \ket{\bm{k}}= (2\pi)^2 \delta^2 (\bm{k}) /4 \sin^2(n \pi /M)$ for $n \neq 0$. 

\subsection{Propagator contributions}\label{s:prop}
Among various configurations of  twists, we first focus on the configurations that a single propagator is twisted. 
Consider a configuration where  two plaquettes with a nonvanishing flux of twists 
share a  propagator 
and their fluxes are given by $m$ and $-m$ respectively.
For such a diagram, both of the fluxes  can be attributed to the 
$m$-twist of the shared propagator  (Fig.\ref{f:proptwist}) and we can 
interpret such a flux configuration as a twist of the propagator.
%%%%%%%%%%%%%%%%%%%%%%%%%%%%%%%%%%%%%%%%
\begin{figure}[t]
\centering
\includegraphics[width=9cm]{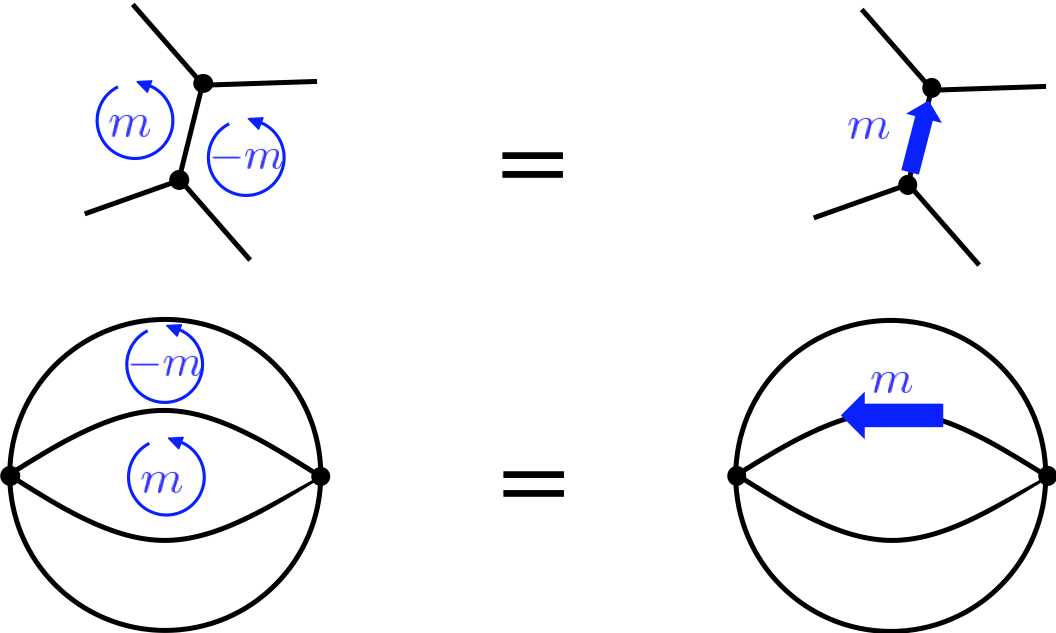}
\caption{
If fluxes of plaquettes straddling a shared propagator are given by $m$ and $-m$, 
such a configuration is interpreted  as a  twist of the shared %common
propagator.   
The upper figures show a relevant part with the  twisted propagator  in general diagrams.
%denote a part of such a diagram  are inserted in general diagrams.
}
\label{f:proptwist}
\end{figure}
%%%%%%%%%%%%%%%%%%%%%%%%%%%%%%%%%%%%%%%%%%
The contributions to EE from such a class of diagrams are then understood as propagator contributions. %, and 
We will investigate it both in the perturbative and nonperturbative approaches.

%%%%%%%%%%%%%%%%%%%%%%%%%%%%%%%%%%%%%%%%
\begin{figure}[t]
\centering
\includegraphics[width=0.35\linewidth]{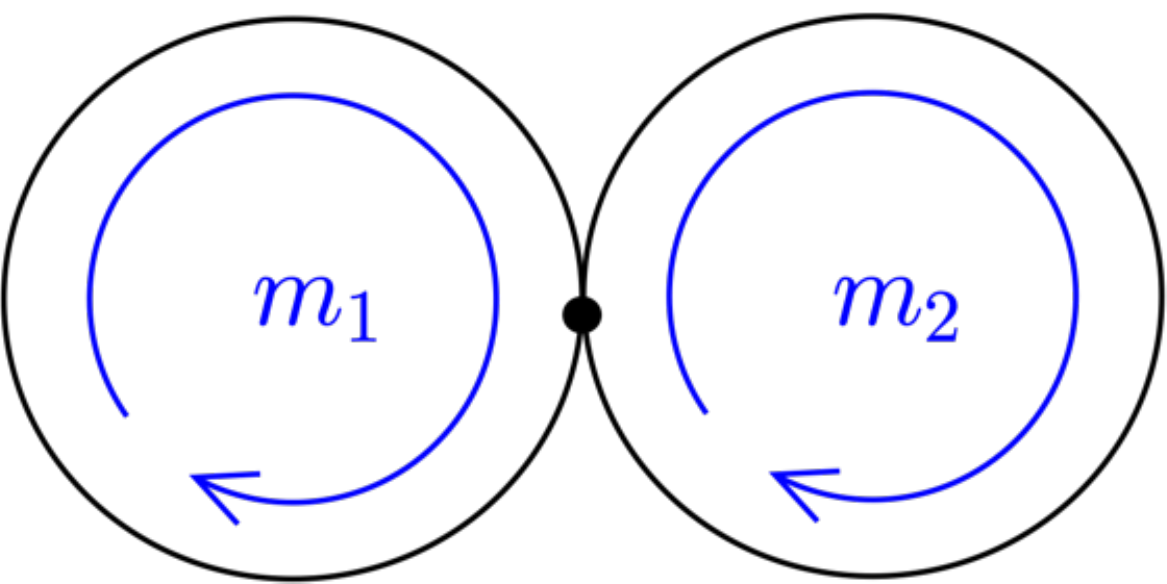}
\caption{
A 2-loop diagram with twist $(m_1, m_2)$. %The twist $(m,  \pm m)$ can be interpreted 
%as a twist of the 4-point vertex by decomposing it into two 3-point vertices. % by $\delta^{d+1}(x-y)$.
}
\label{Fig3}
\vspace{15pt}
\includegraphics[width=9cm]{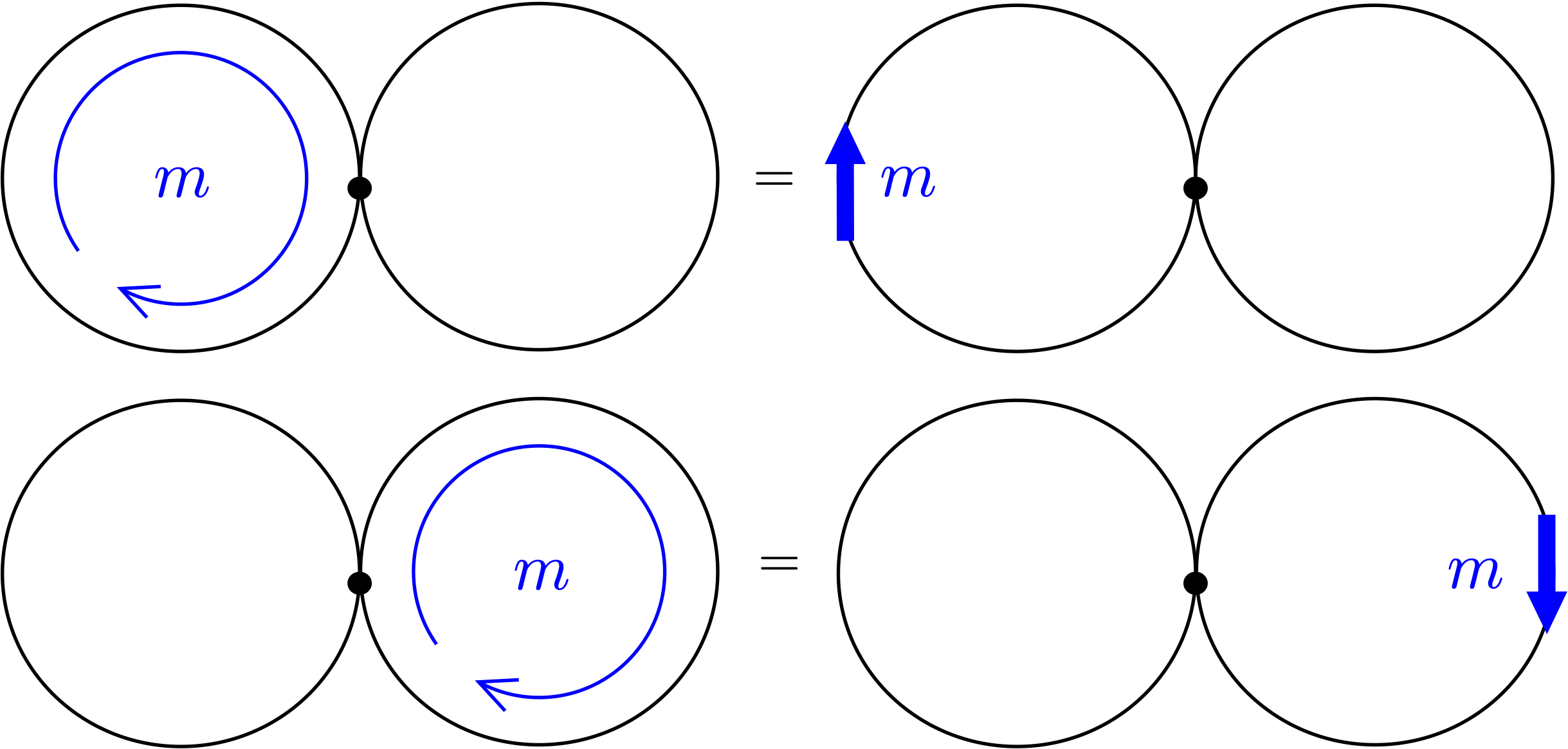}
\caption{
 2-loop diagram with twists $(m_1, m_2)=(m,0), (0,m)$ with $m\neq0$ (left). 
 They are interpreted as a twist of the corresponding propagators (right).
}
\label{Fig3-1}
\end{figure}
%%%%%%%%%%%%%%%%%%%%%%%%%%%%%%%%%%%%%%%%%%%

For a one-loop diagram (Fig.\ref{Fig2}), since the omitted outer flux is $-m$, the diagram is also a part of the propagator contributions. 

Next, we study contributions to EE from multi-loops. Flux configurations of the 2-loop figure-eight diagram are characterized by
 twists $(m_1, m_2)$ on the two plaquettes (Fig.\ref{Fig3}). 
Its contribution to the free energy is given by 
\aln{
\tilde{F}_{\text{2-loop}}^{(M)}= 
\sum_{m_1,m_2} \frac{3 \lambda}{4M} \int d^{d+1} x \ G_0 (\hat{g}^{m_1} x, x) G_0(\hat{g}^{ m_2} x, x) .
\label{8figure}
}

Specific configurations of twists, $(m, 0)$ and $(0, m)$ with $m\neq 0$, correspond to a twist of each propagator (Fig.\ref{Fig3-1}). They consist of the two-loop propagator contributions. By using (\ref{e:summ}) and (\ref{e:efftwistprop-body}), 
their contributions to the free energy and EE are computed respectively as
\aln{
\tilde{F}_\text{2-loop, prop}^{(M)}&=2\times\frac{3\lambda}{4M}\sum_{m=1}^{M-1}\int d^{d+1}x\frac{1}{4\sin^2\theta_m}G_0^{\text{bdry}}(0)\delta^2(\bm{x})G_0(0)\nonumber\\
&=V_{d-1}\frac{3\lambda(M^2-1)}{24M}G_0(0)G_0^{\text{bdry}}(0),\\
S_\text{2-loop, prop}&=-\frac{V_{d-1}}{12}G^{\text{bdry}}_0(0)\,[3\lambda G_0(0)]\nonumber\\
&=
-\frac{V_{d-1}}{12}\int\frac{d^{d-1}k_\parallel}{(2\pi)^{d-1}}\tilde{G}_0^{\text{bdry}}(k_\parallel)[3\lambda G_0(0)]. 
\label{e:S2loopprop}
}
Note that the repulsive (positive $\lambda$) interaction lessens EE. 
It is consistent with an expectation that the degrees of freedom must be reduced by introducing a positive $\lambda$
(otherwise the system becomes unstable) interaction. 

\eqref{e:S2loopprop} indicates that this contribution to EE can be attributed to the  mass renormalization 
to the one-loop contribution of \eqref{EE-1loop}:
\begin{gather}
S_\text{1-loop}+S_\text{2-loop, prop}=
-\frac{V_{d-1}}{12}\int^{1/\epsilon}\frac{d^{d-1}k_\parallel}{(2\pi)^{d-1}}\mathrm{ln}\,\left[\bigl(\tilde{G}_1^{\text{bdry}}(k_\parallel)\bigr)^{-1}\epsilon^{2}\right],
\\
\tilde{G}_1^{\text{bdry}}(k_\parallel)=\frac{1}{k_\parallel^2+m_{r1}^2},~~~~
m_{r1}^2=m^2_{0}+ 3\lambda G_0(0).
\nonumber
\label{e:renormalizem}
\end{gather}
The above equalities hold up to $O(\lambda^1)$. This was also suggested in~\cite{Hertzberg:2012mn} to $O(\lambda^1)$. 
Although all the higher order calculation of the propagator contributions can be done in the same manner order by order, we will later see these contributions can be resummed to all orders in a nonperturbative way.

\subsection{Vertex contributions}\label{sec:vertex-pert}
Besides the propagator contributions, there exist a new class of contributions to EE, namely vertex contributions. Such configurations with a simple interpretation are given by 
a set of flux configurations that {\it straddles a vertex} instead of a propagator. %If we took a special gauge and assigned twists on  particular links of Feynman diagrams, we would not find them since they are hidden in twisting multiple links. 
Consider a diagram with twists given schematically in Fig.\ref{f:verttwist}.
%%%%%%%%%%%%%%%%%%%%%%%%%%%%%%%%%%%%%%%%%%  
\begin{figure}[t]
	\centering
	\includegraphics[width=10cm]{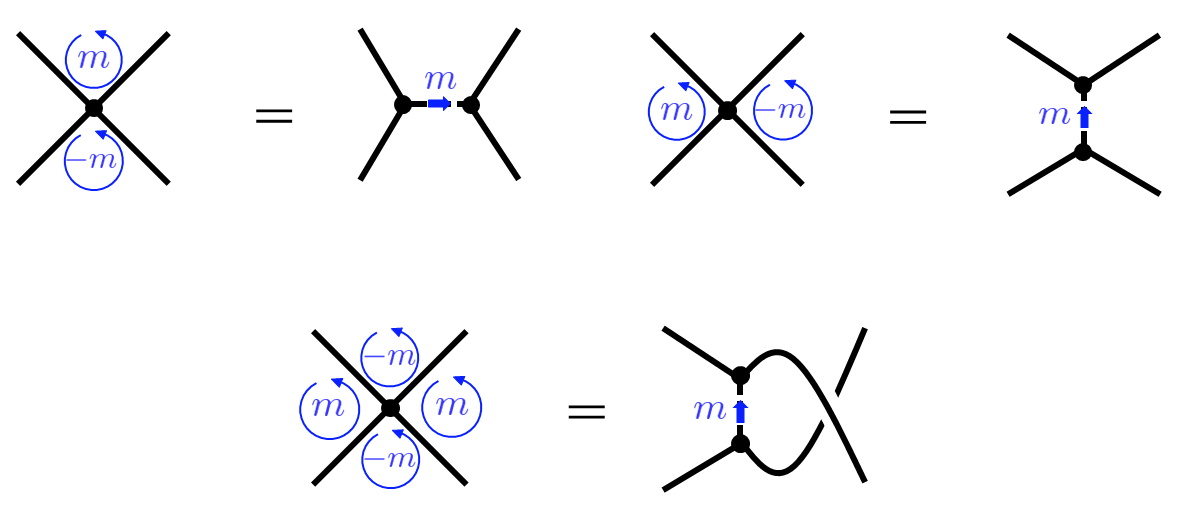}
	\caption{
Twisting a vertex: these three types of 
configurations can be attributed to a twist of a vertex. The dotted lines in the figures on the right-hand sides are delta functions to open the vertex. 
The twist of a vertex is interpreted as a twist of the dotted propagator. 
Each set of figures represent the three channels, 
$s$-channel (upper left figures), $t$-channel (upper right figures) and
$u$-channel (lower figures) respectively.
	}
	\label{f:verttwist}
\end{figure}
%%%%%%%%%%%%%%%%%%%%%%%%%%%%%%%%%%%%%%%%%%
%There, 
In these configurations, 
plaquettes with nonvanishing fluxes of twists 
meet at a vertex, and there are three types of such configurations. 
 We can interpret these configurations as a configuration of a single twisted vertex\index{twisted vertex} in the $s$, $t$, and $u$-channel respectively.
 This interpretation can be 
  realized by ``opening'' the vertex with a delta function. 
   For example, the four-point vertex can be rewritten as
\aln{
	\frac{\lambda}{4}\int d^{d+1}x\,\phi(x)^4=\frac{\lambda}{4}\int d^{d+1}x\,d^{d+1}y\,\phi(x)^2\phi(y)^2\delta^{d+1}(x-y).
}
Then,  we can understand a twisted vertex as %means 
an opened vertex with a twist on the separated two coordinates as
\aln{
	\frac{\lambda}{4}\int d^{d+1}x\,d^{d+1}y\,\phi(x)^2\phi(y)^2\delta^{d+1}(\hat{g}^mx-y).
} 
%The upper and lower figures in Fig.\ref{f:verttwist} correspond to the $s$- and $u$-channel openings of the vertex, respectively. 
%The $t$-channel opening is given in the same manner
%as the $s$-channel. 
The upper left, upper right, and lower figures in Fig.\ref{f:verttwist} correspond to the $s$, $t$, and $u$-channel openings of the vertex, respectively. 
As we have 
demonstrated for a single twisted propagator \eqref{e:efftwistprop-body}, 
we can replace the twisted delta function (to be exact, its two-dimensional part) in the diagram as 
\aln{
	\delta^{2}(\hat{g}^{m}\bm{x} - \bm{y}) =
	e^{\cot \theta_n \hat{R}_{\bm{X}} /2} \frac{ \delta^2 (\bm{X})}{4 \sin^2\theta_m}  \rightarrow \frac{\delta^2 (\bm{X})}{4 \sin^2\theta_m}. 
}
The \textit{twisted vertex}\index{twisted vertex} is thus 
interpreted as a vertex symmetrically splitted with two loose ends and also  with its center coordinate being fixed at the boundary.

Let us evaluate these vertex contributions up to the 3-loop level. The 2-loop vertex contributions stem
 from the figure-eight diagram with two types of configurations of twists, as shown in Fig.\ref{Fig3-2}. 
%%%%%%%%%%%%%%%%%%%%%%%%%%%%%%%%%%%%%%%%
\begin{figure}[t]
	\centering
	\includegraphics[width=0.8\linewidth]{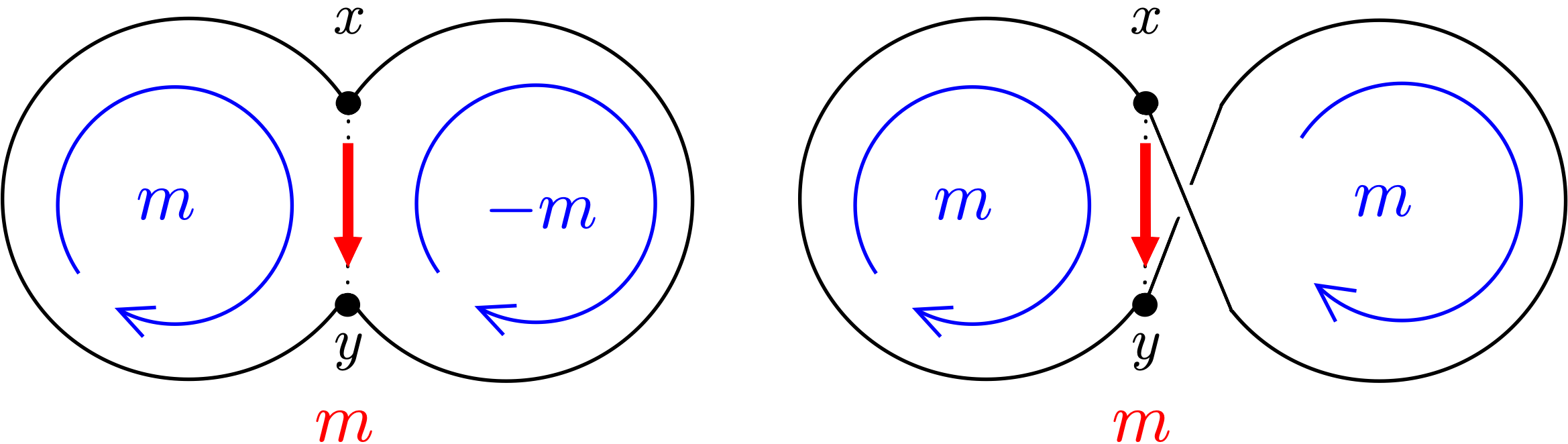}
	\caption{
		{
			2-loop figure-eight diagrams with twists $(m_1, m_2)=(m,\mp m)$. These flux configurations of  twists can be interpreted 
			as a twist of the 4-point vertex by decomposing it into two 3-point vertices. % by $\delta^{d+1}(x-y)$.
		}
	}
	\label{Fig3-2}
\end{figure}
%%%%%%%%%%%%%%%%%%%%%%%%%%%%%%%%%%%%%%%%%%%
Note that the configuration of the $s$-channel opening is absent in the figure-eight diagram
because the vertex in the figure-eight diagram is surrounded by essentially three plaquettes, %: 
two circles, and one outer circle.
% and in the $s$-channel diagram two pairs of plaquettes with simultaneous twists adjacent to each other, necessary for a vertex twist, is absent.  
Their contributions to the free energy and EE are calculated as
\aln{
	\tilde{F}^{(M)}_{\text{2-loop.vert}}&=2\times\sum_{m=1}^{M-1}\frac{3\lambda}{4M}\int d^{d+1}x\,d^{d+1}y\,G_0(x-y)^2\delta^{d-1}(x_\parallel-y_\parallel)\delta^2(\hat{g}^m\bm{x}-\bm{y})\nonumber\\
	&=V_{d-1}\lambda\frac{M^2-1}{8M}\int d^2\bm{r}G_0(\bm{r},0)^2,\\
	S_{\text{2-loop, vert}}&=-V_{d-1}\frac{\lambda}{4}\int d^2\bm{r}G_0(\bm{r},0)^2.
	%\nonumber\\&=-V_{d-1}\frac{\lambda}{4}\int d^{d+1}r\,G_0(\bm{r},0)^2\delta^{d-1}(r_\parallel).
	\label{e:EE2loopvert}
}
%In the scond line of Eq.(\ref{e:EE2loopvert}), we have rewritten the contribution for later use. 
{The vertex correction to EE is negative for repulsive (positive $\lambda$) interaction.} 
Note that in the real $\phi^4$ theory, %the dotted lines are indistinguishable between different channels
different channels are indistinguishable 
and a summation of different channels give just an additional  numerical factor in front. 
In the next section, we will consider an extended model in which a different channel gives a different 
type of contribution.

The 3-loop contributions come from two diagrams shown in Fig.\ref{f:3-loop-3} and Fig.\ref{Fig5}. %One of them is illustrated in Fig.\ref{f:3-loop-3}. 
%There, 
For a diagram illustrated in Fig.\ref{f:3-loop-3}, the vertex contributions stem from the four configurations: $(m_1,m_2,m_3)=(m,\pm m,0)$, $(0,m,\pm m)$. We see them as $t$- and $u$-channel opening of the two vertices. $s$-channels are absent because each vertex is surrounded by two plaquettes and one outer circle, not four independent ones. The contributions from these configurations to the free energy and EE are given by 
\aln{
		\tilde{F}^{(M)}_{\text{3-loop, vert1}}&=4\times\biggl(-\frac{9\lambda^2}{4M}\sum_{m=1}^{M-1}\int d^{d+1}x_1\,d^{d+1}x_2\,d^{d+1}y\,G_0(x_1-x_2)\,G_0(x_1-y)\nonumber\\*
		&\hspace{6cm}\times G_0(x_2-y)\,G_0(0)\,\delta^{d+1}(\hat{g}^mx_1-x_2)\biggr)\nonumber\\*
		&=-V_{d-1}\frac{3\lambda^2(M^2-1)}{M}\int d^2\bm{x}\,d^2\bm{y}\,d^{d-1}r_\parallel\ G_0(2\bm{x},0)\,G_0(\bm{x}-\bm{y},r_\parallel)\nonumber\\*
		&\hspace{8cm}\times G_0(\bm{x}+\bm{y},r_\parallel)\,G_0(\bm{0},0),\nonumber\\*
		&=-V_{d-1}\frac{3\lambda^2(M^2-1)}{4M}\int d^2\bm{r}\,d^2\bm{s}\,d^{d-1}r_\parallel\ G_0(\bm{r},0)\,G_0(\bm{r}-\bm{s},r_\parallel)G_0(\bm{s},r_\parallel)\,G_0(\bm{0},0),\\*
		S_{\text{3-loop, vert1}}&=V_{d-1}\,\frac{3}{2}\lambda^2\int d^2\bm{r}\,d^2\bm{s}\,d^{d-1}r_\parallel\ G_0(\bm{r},0)\,G_0(\bm{r}-\bm{s},r_\parallel)\,G_0(\bm{s},r_\parallel)\,G_0(\bm{0},0),\nonumber\\*
		&=V_{d-1}\,\frac{3}{2}\lambda^2\int d^{d+1}r\,G_0(r)\,G_0(0)\left[\int d^2\bm{s}\,G_0(\bm{s},0)\,G_0(\bm{s}-\bm{r},r_\parallel)\right].
	\label{e:EE3loopvert2}
}
%%%%%%%%%%%%%%%%%%%%%%%%%%%%%%%%%%%%%%%%
\begin{figure}[t]
	\includegraphics[width=6cm]{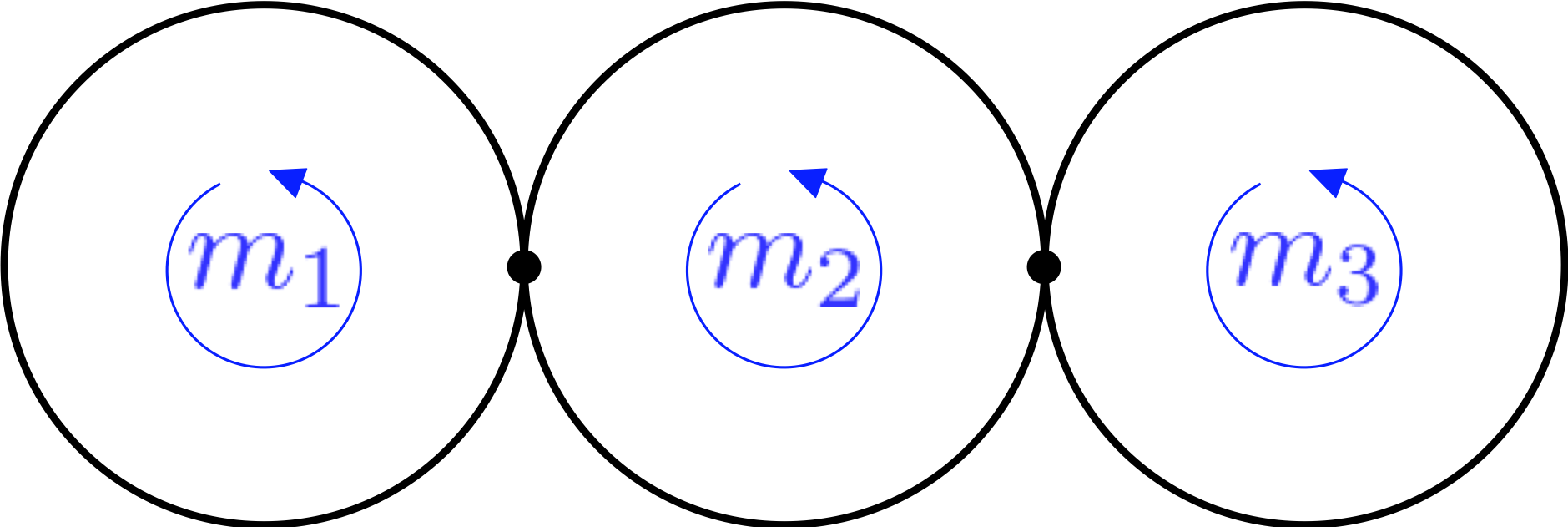}
	\centering
	\caption{
		A 3-loop diagram. Four types of flux configurations,  $(m_1,m_2,m_3)=(m,\pm m,0)$, $(0,m,\pm m)$,
		can be interpreted as twisting vertices. 
		Opening vertices are done in the same manner as in the 2-loop diagrams.  
	}
	\label{f:3-loop-3}
\end{figure}
%%%%%%%%%%%%%%%%%%%%%%%%%%%%%%%%%%%%%%%%%%%

Another 3-loop diagram is given by the leftmost diagram in Fig.\ref{Fig5}.
%%%%%%%%%%%%%%%%%%%%%%%%%%%%%%%%%%%%%%%%
\begin{figure}[t]
\centering
	\begin{tabular}{c}%prevent line break
		\hspace*{-0.05\linewidth}
		\begin{minipage}{0.45\hsize}%align figs horizontally
			\centering
			\includegraphics[width=\linewidth]{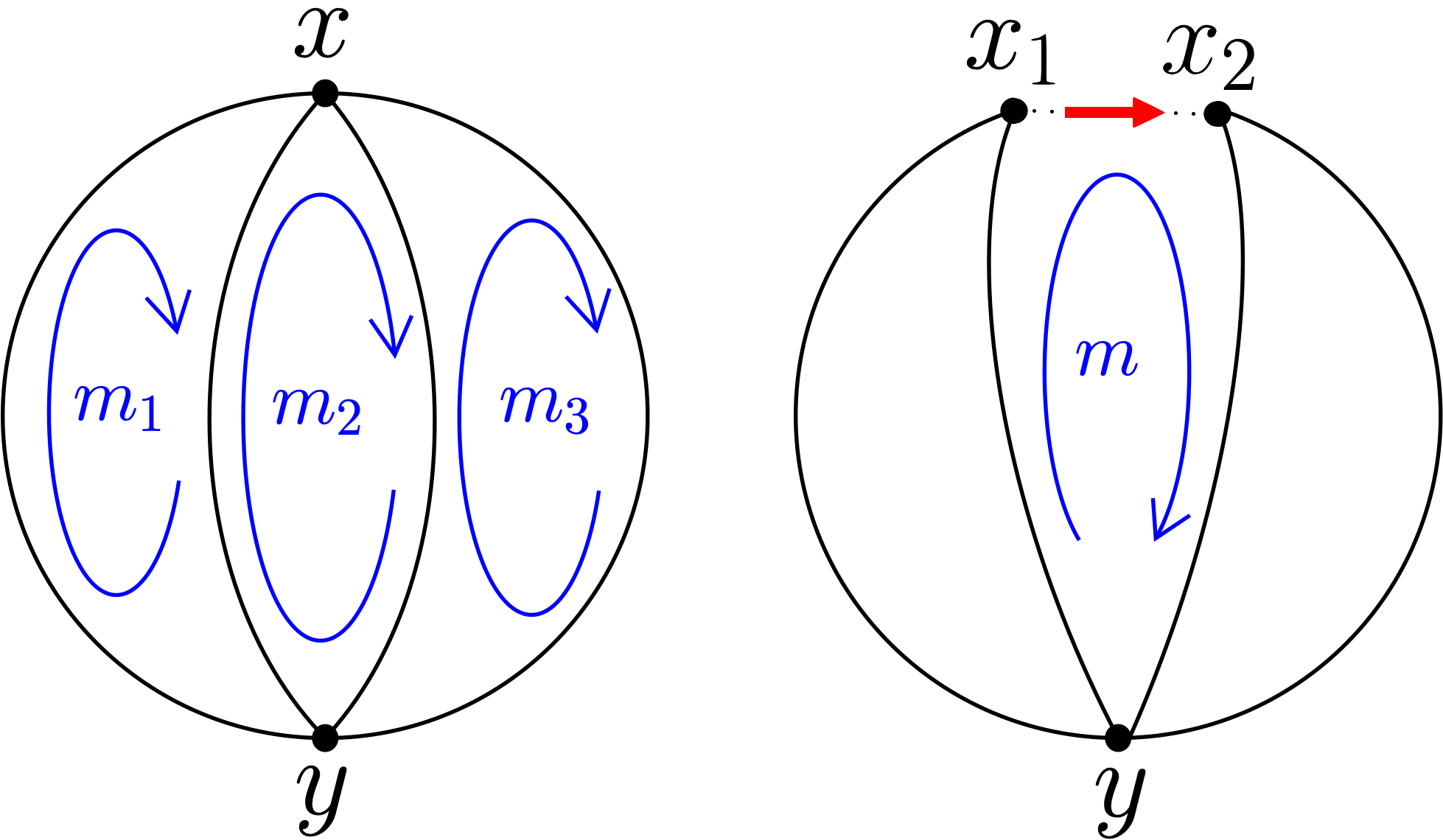}
		\end{minipage}
		\hspace*{0.03\linewidth}
		\begin{minipage}{0.45\hsize}
			\centering
			\includegraphics[width=\linewidth]{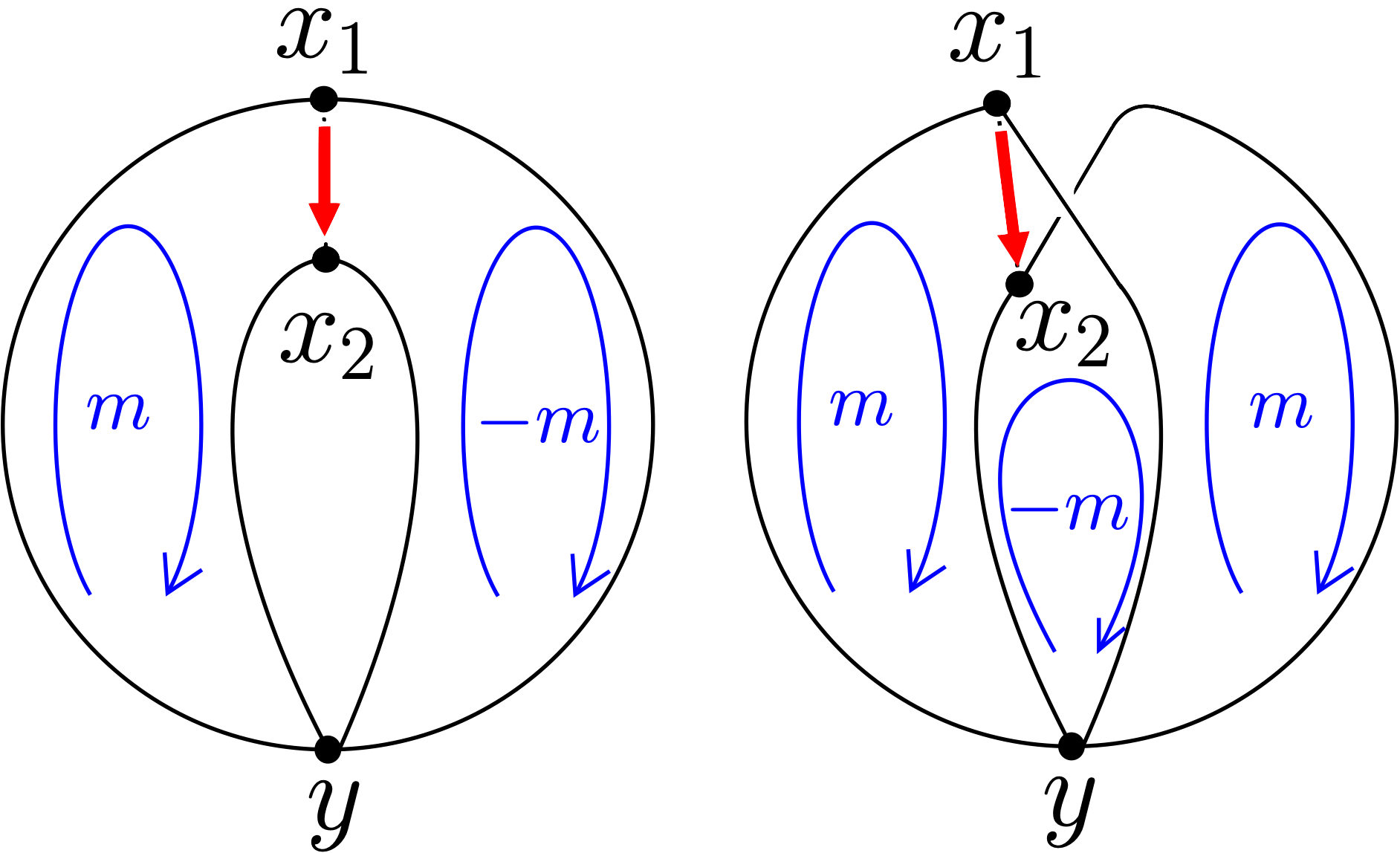}
		\end{minipage}
	\end{tabular}
	%\centering
	%\includegraphics[width=0.8\linewidth]{Fig5.png}
	\caption{
		Another 3-loop diagram with twists $(m_1, m_2, m_3)$ {(leftmost)}.
		A particular configuration $(0,m,0)$ corresponds to twisting a vertex, as well 
		as $(m,0,-m)$ and $(m,-m,m)$ (three diagrams on the right).  They generate a twist in the delta function  
		$\delta^2({\bm x}_1-{\bm x}_2)$. We can also open the vertex at ${\bf y}$ instead of ${\bm x}$, 
		and they have two different interpretations of twisting vertices, analogous to the propagator case (Fig.\ref{Fig2}).
		These vertex contributions should not be double-counted. 
		% of twists, $(m,0,\pm m)$ and $(0, m, 0)$,  corresponding to twisting a vertex (right).
	}
	\label{Fig5}
\end{figure}
%%%%%%%%%%%%%%%%%%%%%%%%%%%%%%%%%%%%%%%%%%%
The following three types of configurations of twists correspond to twists of a vertex:  $(0,m,0)$, $(m,0,-m)$, and $(m,-m,m)$. 
We can assign a flux of twist $-m$, $0$, and $-m$ on the outer circle of the plaquette respectively. 
They are equivalent to the $t$-, $s$- and $u$-channel opening of the vertex.   
In this diagram,  we face a similar problem as we discussed for a one-loop diagram; the failure of one-to-one correspondence. 
There are two ways to attribute the flux configurations to twisting either an upper or lower vertex. 
These two attributions are not independent and 
we can only twist one of them.  
These three channels give the same contributions in the $\phi^4$ theory. %and 
Then, the corresponding 3-loop contributions from Fig.\ref{Fig5} are computed as
\aln{
		\tilde{F}^{(M)}_{\text{3-loop, vert2}}&=3\times\biggl(-\frac{3\lambda^2}{4M}\sum_{m=1}^{M-1}\int d^{d+1}x_1\,d^{d+1}x_2\,d^{d+1}y\,G_0(x_1-y)^2G_0(x_2-y)^2
		%   \nonumber\\* &\hspace{8.4cm}\times
		\delta^{d+1}(\hat{g}^mx_1-x_2)\biggr)\nonumber\\* 
		&=-V_{d-1}\frac{3\lambda^2(M^2-1)}{4M}\int d^2\bm{x}\,d^2\bm{y}\,d^{d-1}		r_\parallel\,G_0(\bm{x}-\bm{y},r_\parallel)^2\,G_0(\bm{x}+\bm{y},r_\parallel)^2,\nonumber\\
		&=-V_{d-1}\frac{3\lambda^2(M^2-1)}{16M}\int d^2\bm{r}\,d^2\bm{X}\,d^{d-1}r_\parallel\,G_0(\bm{r},r_\parallel)^2\,G_0(\bm{X},r_\parallel)^2,\\
		S_{\text{3-loop, vert2}}&=V_{d-1}
		\frac{3\lambda^2}{8} \int d^2\bm{r}\,d^2\bm{s}\,d^{d-1}r_\parallel\,G_0(\bm{r},r_\parallel)^2\,G_0(\bm{s},r_\parallel)^2,\nonumber\\
		&=V_{d-1}
		\frac{3\lambda^2}{8}
		\int d^{d+1}rG_0(r)^2\left[
		\int d^2\bm{X}\,G_0(\bm{X},r_\parallel)^2\right].
	\label{e:EE3loopvert}
}
In contrast to the twisting of propagators, both of the contributions of \eqref{e:EE3loopvert2} and \eqref{e:EE3loopvert} 
essentially originate from the non-Gaussianity of the vacuum. 
We also emphasize the importance of the covariant viewpoint as $\mathbb{Z}_M$ gauge theory on Feynman diagrams.  
If we take a special gauge and assign twists on specific links (propagators), we could not find out vertex contributions since they are hidden in the configurations with multiple twisted links.

While \eqref{e:EE3loopvert2} can be interpreted as a contribution from the figure-eight diagram with the renormalized propagator \eqref{e:renormalizem}, \eqref{e:EE3loopvert} cannot be absorbed into the renormalization of the propagator nor the vertex. 
The situation is different from the propagator contributions, nonetheless, it is consistent with the ordinary renormalization structure in another viewpoint. 
In the following, we will show that the above vertex contributions can be %identified to 
summarized as those from renormalized two-point functions of composite operators. 

\section{Nonperturbative EE via resummation}\label{sec:nonpert-EE}
In the previous section, we perturbatively evaluated propagator and vertex contributions to EE. In this section, we address how to resum these contributions to all orders and obtain nonperturbative expressions in terms of renormalized quantities.

\subsection{Resummation of propagator contributions by two-particle irreducible (2PI) formalism}\label{sec:prop-2PI}
%%%%%%%%%%%%%%%%%%%%%%%%%%%%%%%%%%%%%%%%
\begin{figure}[b]
\centering
\includegraphics[width=4cm]{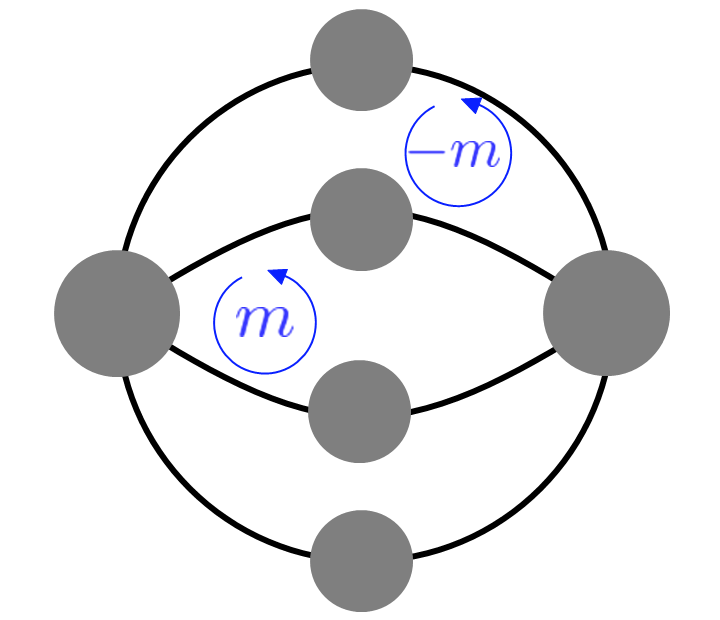}
\caption{
An example of a configuration of fluxes $(m,-m)$ that has multiple  interpretations 
in terms a twist of a {\it bare} propagator. 
The gray blobs %denote 
represent 1PI subdiagrams. 
This configuration can be interpreted as a twist of  either 
one of the two shared (bare) propagators, but not both. }
\label{f:ambiguity1}
\end{figure}
%%%%%%%%%%%%%%%%%%%%%%%%%%%%%%%%%%%%%%%%%%%

One might suspect whether  a configuration of fluxes like Fig.\ref{f:proptwist} has a one-to-one correspondence
to a configuration of  a twist of the propagator   in general diagrams. 
Indeed, we need careful treatment for particular diagrams. 
Consider a diagram like Fig.\ref{f:ambiguity1} 
where two  plaquettes with nonzero fluxes meet at two or more propagators. 
In this case,  the configuration of fluxes $(m,-m)$ 
corresponds to a twist of either propagator, but not to both. We already saw the simplest example Fig.\ref{Fig2} in the one-loop calculation as well as \eqref{eq:divide-by-mult-prop} in the propagator contributions.
Fig.\ref{f:ambiguity1} shows that such a configuration of fluxes can be interpreted as a twist of a single, full propagator. This is what we confirmed perturbatively up to the first order in \eqref{e:renormalizem}. It is not a trivial fact, but physically natural since EE is a measure of entanglement among microscopic degrees of freedom and should be related to the low-energy observables through renormalization. This observation motivates us to pursue the following analysis that EE (or at least its universal term) is expressed in terms of renormalized correlation functions in the \textbf{two-particle irreducible (2PI) formalism}\index{two-particle irreducible formalism|see  2PPI formalism }\index{2PI formalism}~\cite{PhysRevD.10.2428,Berges:2004yj}.

The 2PI formalism allows us to systematically study a relationship between renormalization of propagators and EE. 
Combined with the orbifold analysis, we will confirm that the Gaussian contributions to EE are completely expressed in terms of the renormalized two-point function of the fundamental field in the following. 

The \textbf{2PI effective action}\index{2PI effective action} is given, in addition to the classical action, by
\aln{
\Gamma[G]  &{=F[G]=-\log Z} \nonumber\\
&=  -\frac{1}{2} \tr \log G + \frac{1}{2} \tr(G_0^{-1} G-1) +\Gamma_2[G],
\label{2PIEA}
}
where $G$ is a full propagator, namely, a renormalized two-point function\index{renormalized two-point function}. $\Gamma_2[G]$ (or $\Phi$ in some literature) is minus the sum of connected 2PI bubble diagrams which consist of the full propagators $G$'s as internal lines. We assume that the one-point function vanishes: $\left<\phi\right>=0$. In this formalism, $G$ is determined self-consistently by its nonperturbative equation of motion called a \textbf{gap equation}\index{gap equation} or \textbf{Dyson equation}\index{Dyson equation}: \aln{
\frac{\delta \Gamma[G]}{\delta G}=0~~~\Leftrightarrow~~~G^{-1}=G_0^{-1}+2\frac{\delta \Gamma_2}{\delta G}[G].
\label{e:gapeq} 
} 
With the solution to \eqref{e:gapeq}, $G=\bar{G}[G_0]$, $\Gamma[\bar{G}]$ coincides with the 1PI free energy\index{1PI free energy}.
\begin{equation}
    \Sigma \equiv -2\frac{\delta \Gamma_2}{\delta G}[G=\bar{G}]
    \label{eq:self-energy}
\end{equation}
is nothing but the 1PI self-energy\index{self-energy}. 

In the 2PI analysis, 
since  $G(x,y)$ itself is composed of  propagators as  internal loop corrections, 
we  distinguish the following two types of twistings.
The first type of twistings is denoted by $\delta_m G(x,y)$, which
represents a variation of the internal structure induced by twisting. 
The second type is simply given by $G(\hat{g}^mx,y)$, which represents the
twisting of the full propagator in the same way as previously. 
Namely, the projection operator $\hat{P}$ is acted from outside. 
We will show that the first type of twistings is canceled by the gap equation.
Moreover, we will prove that 
there are further cancellations among 2PI diagrams and the second term of the 2PI effective action in \eqref{2PIEA}. 
The gap equation is responsible for  the cancellations, but special care is necessary 
for such diagrams in Fig.\ref{f:ambiguity1}. 

First, let us see that twistings inside the full propagators  are canceled and
contributions from $\delta_m G(x,y)$ vanish. 
It is simply because of the gap equation;
\aln{
\Gamma[\bar{G}]_\text{prop,int}&=\sum_{m=1}^{M-1}\int d^{d+1}x\,d^{d+1}y\,\frac{1}{2}\left(-\bar{G}^{-1}+G_0^{-1}+2\frac{\delta\Gamma_2}{\delta G}[\bar{G}]\right)_{yx}\delta_mG(x,y)\nonumber\\
&=0.
}
Thus we can safely forget about the internal structure of the full propagator. 

Next, we look at the twisting of the full propagator itself.  
As expected, most configurations with a single twisted propagator
 are canceled due to the gap equation, except for diagrams like Fig.\ref{f:ambiguity1}
where a configuration of fluxes of $(m,-m)$ can be attributed to twisting  one of the propagators
straddled by the plaquettes. 
In the 2PI formalism,  such diagrams  are included only in the first term in \eqref{2PIEA}
 because all  diagrams with such property are not 2PI (see Fig.\ref{f:ambiguity1}) and not included in other terms.\footnote{
The second term is not 2PI, but $G_0^{-1}$ is a local operator and it is sufficient to twist the propagator $G$ in the trace. }
 Then, we can separately consider contributions from the first term 
 and those from the second and third term in \eqref{2PIEA}. 

From the first logarithmic term of \eqref{2PIEA}, it is straightforward to see that 
we have 
\aln{
S^{\text{2PI}}_{\text{prop, ext,1}} =- \frac{V_{d-1} }{12} 
 \int^{1/\epsilon} \frac{d^{d-1}k_{\parallel}} {(2 \pi)^{d-1}}  
  \log \left[ \tilde{G}^{-1}(\bm{0}; k_\parallel) \epsilon^2 \right],
\label{EE-1loop2PI,1}
}
where $\tilde{G}(\bm{k}; k_\parallel)$ is a Fourier transform of the
renormalized Green function, $G(\bm{x}; x_\parallel)$. $\tilde{G}(\bm{0}; k_\parallel)$ is a renormalized counterpart of $\tilde{G}^{\text{bdry}}$. 
Note that,  though
$\tilde{G}_0^{\text{bdry}}(k_\parallel)$ describes a propagation in a $(d-1)$-dimensional theory, 
the renormalization of $\tilde{G}(\bm{0}; k_\parallel)$ itself is performed in 
the $(d+1)$-dimensional space, as in \eqref{e:renormalizem}.
\eqref{EE-1loop2PI,1} is in the same situation as the previous one-loop calculation of EE. Using the gap equation \eqref{e:gapeq}, \eqref{eq:self-energy}, the difference from the free scalar case \eqref{EE-1loop} is given by
\begin{align}
    S^{\text{2PI}}_{\text{prop, ext,1}} - S_{\text{1-loop}} &=- \frac{V_{d-1} }{12} 
 \int^{1/\epsilon} \frac{d^{d-1}k_{\parallel}} {(2 \pi)^{d-1}}  
  \log \left[ 1-\tilde{G}_0 \Sigma\right](\bm{0}; k_\parallel) \nonumber\\
  & = \frac{V_{d-1} }{12} 
 \int^{1/\epsilon} \frac{d^{d-1}k_{\parallel}} {(2 \pi)^{d-1}}  
 \sum_{n=1}^\infty \frac{1}{n}
  \left[ \tilde{G}_0 \Sigma\right]^n (\bm{0}; k_\parallel).
  \label{eq:divide-by-mult-prop}
\end{align}
The factor $1/n$ comes from the divide-by-multiplicity method. This prevents the overcounting from a naive chain of the bare propagators and self-energies with a twist.

Other contributions to EE follow
 from the second and third terms in \eqref{2PIEA}. Their contributions to EE are given by
\aln{
S^{\text{2PI}}_\text{prop,ext,2+3}&=\sum_{m=1}^{M-1} 
\int d^{d+1}x \,d^{d+1}y\,
\left (\frac{1}{2} G_0^{-1} +  \frac{\delta \Gamma_2}{\delta G}[\bar{G}] \right)_{yx} \bar{G}(\hat{g}^{m}x, y)\nonumber\\
&= \sum_{m=1}^{M-1} 
\int d^{d+1}x d^{d+1}y
\left (\frac{1}{2} \bar{G}^{-1} \right)_{yx} \bar{G}(\hat{g}^{m}x, y)%G(\hat{g}^{n}x, y).
\label{trivialEE}
}
Since this is nothing but a twist of $\tr (G^{-1} G) /2$, i.e. just a variation of unity, $S^{\text{2PI}}_\text{prop,ext,2+3}$ is a trivial constant and %we can drop it
can be dropped. 

By combining %the above results 
\eqref{EE-1loop2PI,1} and \eqref{trivialEE}, we obtain the contribution to EE from twisting a propagator in terms of %from 
the renormalized two-point function nonperturbatively:
\aln{
S^{\text{2PI}}_{\text{prop}} =- \frac{V_{d-1} }{12} 
 \int^{1/\epsilon} \frac{d^{d-1}k_{\parallel}} {(2 \pi)^{d-1}}  
  \log \left[ \tilde{G}^{-1}(\bm{0}; k_\parallel) \epsilon^2 \right]. 
\label{e:EE2PIprop}
}
%This completes a proof of the 
Previously we made a conjecture that the total propagator contribution to EE could be represented as renormalization of the propagator. 
The above argument completes the proof. %Note that 
The Gaussian contribution %from the Gaussian nature 
is all %organized 
summarized in the above form.\footnote{%-------------------------------------------
When we  compare \eqref{e:EE2PIprop} to the ordinary perturbative calculation, 
since all the diagrams in \eqref{2PIEA} are written in terms of the full propagator $G$, 
we have to expand  each diagram in the comparison. 
As a result, diagrams consisting of $G_0$'s are included in all the three terms in \eqref{2PIEA} 
and the correct coefficients can be obtained by taking all these terms into account.
\label{fn:pertnonpert}
 } %-------------------------------- 
Note that it is consistent with the leading order result of perturbative calculations in \cite{Hertzberg:2012mn,Chen:2020ild}.

\subsection{Resummation of vertex contributions -- auxiliary field method}\label{sec:auxiliary}
In contrast to the propagator contribution, which is written in terms of renormalized propagators, the vertex contribution does not seem to be simply written in terms of renormalized vertices as discussed in Section \ref{sec:vertex-pert}. Even if it comes from the renormalized vertices, it is difficult to study the vertex contribution similarly as the propagator contribution since the $n$-PI formalism is not known for a general $n$. Thus, we take a different route, namely, interpreting the opened vertices as corresponding auxiliary fields and calculating EE from their propagators.

%%%%%%%%%%%%%%%%%%%%%%%%%%%%%%%%%%%%%%%
In order to formulate the ``opening of a vertex'' more systematically, it is instructive to consider a model where opening each vertex leads to distinct $s$-, $t$- and $u$-channels. One of such models is described by two complex scalars, whose action is given by 
%written as follows:
\aln{
I=\int\frac{d^{d+1}x}{M}\left[\sum_{i=1}^2\bar{\phi}_i(-\Box+m^2_{0})\phi_i+\frac{\lambda}{4}(\bar{\phi}_1\phi_1)(\bar{\phi}_2\phi_2)\right].
\label{e:actionphiphi}
}
Here and in the following, $\mathbb{Z}_M$ projections on fields are written implicitly. 
%symbols to denote the fields such as $\phi_1$ implicitly means a projected fields $\hat{P}\phi_1$. 
Each vertex contribution involves three configurations of twists as mentioned in Fig.\ref{f:verttwist}. 
It is now almost clear that 
each twist of a vertex in $s$, $t$, and $u$-channels 
can be regarded as a twist of the propagator of the corresponding auxiliary field\index{auxiliary field}. 
 With the auxiliary field, the action has %The field has 
a three-point interaction vertex and reproduces the original four-point one when integrated out.

%In a correspondence
Corresponding to the above three ways for the opening of vertices, 
we can rewrite the action \eqref{e:actionphiphi} into the following three forms:
\aln{
I_s&=\int\frac{d^{d+1}x}{M}\left[\sum_{i=1}^2\bar{\phi}_i(-\Box+m^2_{0})\phi_i+c_1c_2+i\frac{\sqrt{\lambda}}{2}c_1(\bar{\phi}_2\phi_2)+i\frac{\sqrt{\lambda}}{2}c_2(\bar{\phi}_1\phi_1)\right],
\label{e:actions}\\
I_t&=\int\frac{d^{d+1}x}{M}\left[\sum_{i=1}^2\bar{\phi}_i(-\Box+m^2_{0})\phi_i+\bar{d}d+i\frac{\sqrt{\lambda}}{2}\bar{d}\phi_1\phi_2+i\frac{\sqrt{\lambda}}{2}d\bar{\phi}_1\bar{\phi}_2\right],
\label{e:actiont}\\
I_u&=\int\frac{d^{d+1}x}{M}\left[\sum_{i=1}^2\bar{\phi}_i(-\Box+m^2_{0})\phi_i+\bar{d'}d'+i\frac{\sqrt{\lambda}}{2}\bar{d'}\,\bar{\phi}_1\phi_2+i\frac{\sqrt{\lambda}}{2}d'\,\bar{\phi}_2\,\phi_1\right].
\label{e:actionu}
} 
We have introduced three pairs of auxiliary fields: real scalars $(c_1, c_2)$, and complex scalars $(d, \bar{d}), (d', \bar{d'})$.\footnote{%---------------------------
The path integral contour for them should be chosen so that %as to make 
the partition function is convergent and thus the apparent violation of the reality or boundedness %boundness
in the above actions does not produce pathology.
} %-------------------------------------
Of course, each of \eqref{e:actions}-\eqref{e:actionu} is equivalent to \eqref{e:actionphiphi} after integrating the auxiliary fields out.
Consequently, if we sum up the bubble diagrams from  all three models, 
we will encounter an overcounting at the level of free energy. 
However, when we consider configurations of twists, 
there is a one-to-one correspondence between vertex contributions of three channels and 
propagator contributions of each auxiliary field in these three models. 
In this sense, as far as a single twist of vertices is concerned, 
the vertex contributions we consider can be regarded as 
the propagator contributions from these three auxiliary fields.
As in Fig.\ref{Fig5}, 
four-point vertex contributions to EE with the  flux configurations, 
% $(m,0,-m,0)$, $(0,m,0,-m)$ or $(m,-m,m,-m)$
$(0,m,0)$, $(m,0,-m)$, or $(m,-m,m)$,  corresponds 
 to a propagator contribution of the associated auxiliary fields
 given by  \eqref{e:actions}-\eqref{e:actionu}, respectively, for any bubble diagrams of the action \eqref{e:actionphiphi}.

Every vertex in the bubbles generated by \eqref{e:actionphiphi} gets the contributions 
from the three channels\footnote{Figure-eight diagram is an exception and there is no $s$-channel.}.  
They coincide respectively with the contributions from a twisted propagator in the equivalent diagrams generated either by \eqref{e:actions}-\eqref{e:actionu}. 
Here we have the same problem of the one-to-one correspondence 
between fluxes of twists in the plaquettes and twists of vertices, 
as mentioned in the previous subsection (Fig.\ref{Fig5}). 
In terms of the auxiliary fields, this problem is easily resolved by using the same logic
as in the propagator contributions. 
2PI diagrams do not have this kind of problem, and only one-loop diagrams of the auxiliary fields
need care. See Fig.{\ref{f:ambiguity2}} as an example. 
As a result, the problem  is translated into the same problem for the twisted propagator of the auxiliary field. 
%%%%%%%%%%%%%%%%%%%%%%%%%%%%%%%%%%%%%%%%%%  
\begin{figure}[t]
\centering
\includegraphics[width=10cm]{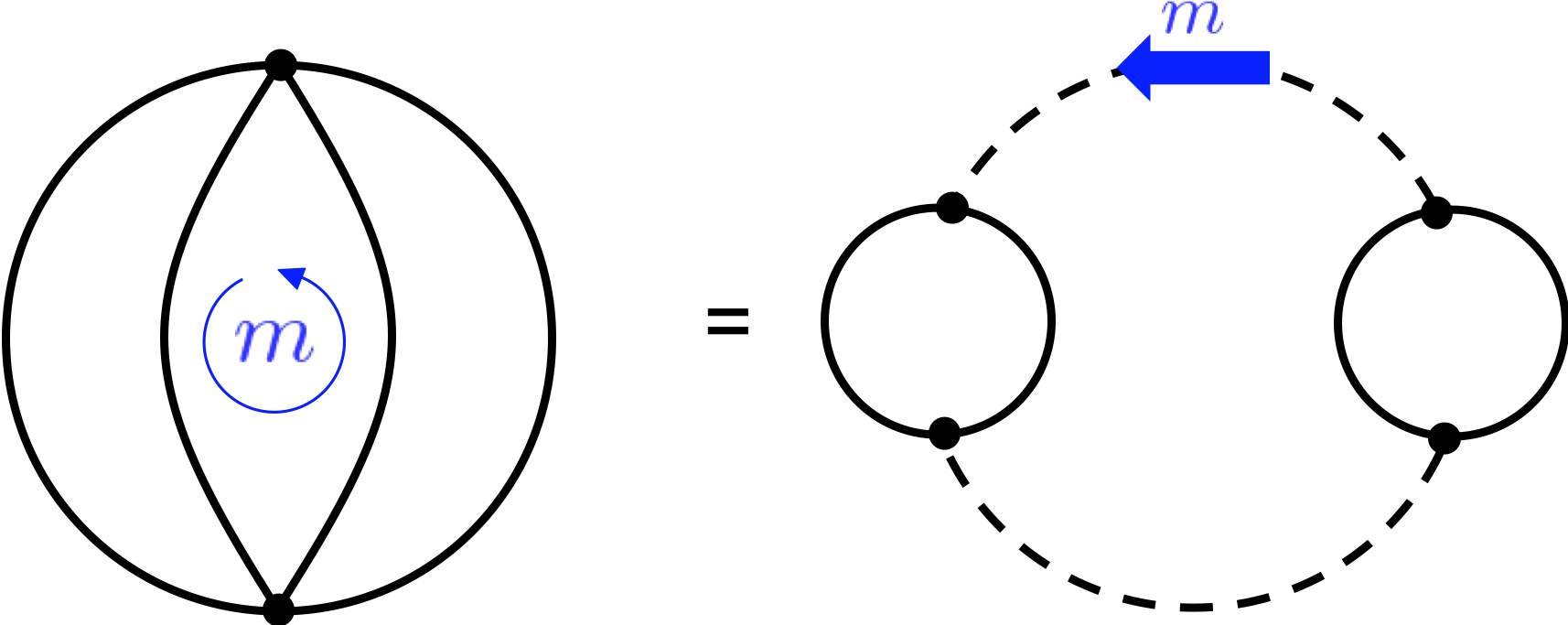}%{ambiguity2_ver2.pdf}
\caption{The right non-2PI diagram is obtained by opening two vertices in the left in terms of the auxiliary fields.
%An example of configurations with the one-to-one correspondence problem
%for a twist of a vertex. 
We can regard a flux of the center plaquette
as a twist of either upper or lower vertex, but not both.  
In terms of the auxiliary field,  
it is nothing but the phenomena  explained in Fig.\ref{f:ambiguity1}.
% in the non-2PI diagrams (right). 
}
\label{f:ambiguity2}
\end{figure}
%%%%%%%%%%%%%%%%%%%%%%%%%%%%%%%%%%%%%%%%%%

The above observation leads us to express EE in the 2PI formalism with the auxiliary fields. 
Although we cannot rewrite the action itself by using all the auxiliary fields simultaneously, 
the vertex contributions to the free energy and EE can be written as a %direct 
sum of the contributions from these three. 
%It is because of the one-to-one correspondence explained above. 
The result is given by\footnote{%------------------------------------
Diagrams with tadpoles (one-point functions)  are cancelled  due to to the equation of motion. 
Namely, in calculating the 1PI free energy, 
 an appropriate source term is introduced 
depending on $M$ so that the equation of motion is always satisfied. 
} %--------------------------------------
\aln{
S^{\text{2PI}}_{\text{vert}}=&- \frac{V_{d-1} }{12} 
 \biggl(\int^{1/\epsilon} \frac{d^{d-1}k_{\parallel}} {(2 \pi)^{d-1}}  
 \tr \log \left[\tilde{G}_{c}^{-1}(\bm{0}; k_\parallel)   \right]
  \nonumber\\
 %&\hspace{1.6cm}
 &+2\int^{1/\epsilon} \frac{d^{d-1}k_{\parallel}} {(2 \pi)^{d-1}}  
  \log \left[ \tilde{G}_d^{-1}(\bm{0}; k_\parallel) \right] %\nonumber\\
% & \hspace{1.6cm}
 +2\int^{1/\epsilon} \frac{d^{d-1}k_{\parallel}} {(2 \pi)^{d-1}}  
  \log \left[ \tilde{G}_{d^\prime}^{-1}(\bm{0}; k_\parallel) \right]\biggr).
\label{e:EE2PIvertC}
}
Here, $(\tilde{G}_{c})_{ij}$, $\tilde{G}_d$ and $\tilde{G}_{d'}$ is the Fourier transformations of the two-point functions $\left<c_i(x)c_j(y)\right>$, $\left<d(x)\bar{d}(y)\right>$, and $\left<d'(x)\bar{d'}(y)\right>$ %, respectively. 
and the first, second, and third terms in \eqref{e:EE2PIvertC} 
represent the vertex contributions from the $s$-, $t$- and $u$-channel openings, respectively. The coefficients ``2'' in the second and third lines come from the fact that $(d,\bar{d})$ and $(d', \bar{d'})$ are complex fields. 
$(c_1,c_2)$ are real fields, but its propagator is written as a $2\times2$ matrix and has two degrees of freedom.  
The $\tr$ is the trace taken over this $2\times2$ matrix. 

%%%%%%%%%%%%%%%%%%%%%%%%%%%%%%%%%%%%%

The above model is simple in the sense that the auxiliary field of each $s$, $t$, and $u$-channel is different and 
the correspondence between twisting a vertex and twisting  propagator  of each auxiliary field is clear. 
Let us then consider a less easy (though seemingly easier) case, namely the $\phi^4$ theory with a single real scalar. 
The action written with an auxiliary field $c$  takes the following form:
\aln{
I_{stu}=\int\frac{d^{d+1}x}{M}\left[\frac{1}{2}\phi(-\Box+m^2_{0})\phi+\frac{1}{2}c^2+i\sqrt{\frac{\lambda}{2}}\,c\,\phi^2\right]. 
\label{e:actionc}
}
In order to reproduce the vertex contributions to EE in the 
original  $\lambda\phi^4/4$ theory, we need to sum all the contributions from the three different channels for $c$. 
If we use the above action,  the free energy in flat space can be reproduced, 
but not the free energy of the orbifold theory. 
Thus, we cannot use the renormalized two-point function of $c$ via $\log G_c^{-1}$ to express the correct
amount of vertex contributions to EE. 
EE in $\phi^4$ theory 
 is neither expressed by a single auxiliary field $c$ nor by triple copies of it
because the three channels coincide and  get mixed among them. 

%%%%%%%%%%%%%%%%%%%%%%%%%%%%
%%%%%%%%%%%%%%%%%%%%%%%%%%%%

\section{Unified formula for propagator and vertex contributions and its generalization to various interactions and higher spins}\label{sec:unified}
In Section \ref{sec:prop-2PI}, we resum the propagator contribution by the 2PI formalism. In Section \ref{sec:auxiliary}, we resum the vertex contribution by regarding it as propagator contributions of auxiliary fields. There are two unsatisfactory points. 1) We need to introduce an auxiliary field for
each channel of opening a vertex appropriately. This makes the analysis theory-dependent. Especially, this method does not work for the $\phi^4$ theory with a single real scalar. 2) The propagator and vertex contributions are treated separately in a different manner. To resolve these issues, we write EE in terms of composite operators and introduce a notion of a \textit{generalized} 1PI\index{generalized 1PI} to unify the framework of the propagator and vertex contributions. See also Appendix \ref{appencomp} for a detailed proof. These formulation allows us to consider various models including derivative interactions and higher-spin fields.

\subsection{Vertex contribution as propagator contribution of composite operators}
The expression of EE from each auxiliary field \eqref{e:EE2PIvertC} has a remarkable interpretation. Note that we can regard the auxiliary fields as degrees of freedom of composite operators\index{composite operators}:
\aln{
c_1\sim \bar{\phi}_2\phi_2,~&~~c_2\sim \bar{\phi}_1\phi_1,\\
d\sim\bar{\phi}_1\bar{\phi}_2,~&~~\bar{d}\sim\phi_1\phi_2,\\
d'\sim\bar{\phi}_2\,\phi_1,~&~~\bar{d'}\sim\bar{\phi}_1\phi_2.
}
They are justified in various ways,  for instance, the vacuum expectation values of both sides coincide. From this viewpoint, \eqref{e:EE2PIvertC} indicates that vertex contributions are in fact understood as propagator contributions of the composite operators. 
 From the actions \eqref{e:actions}, \eqref{e:actiont}, and \eqref{e:actionu}, the propagators of auxiliary fields are 
written in terms of correlation functions of the above composite operators as
\aln{
\tilde{G}_{cij}  &=(\sigma_x)_{ij} -\frac{\lambda}{4} \tilde{G}_s(\bm{k},k_\parallel)_{ij},
\\
\tilde{G}_{d} &=1 -\frac{\lambda}{4} \tilde{G}_t(\bm{k},k_\parallel),
\\
\tilde{G}_{d^\prime} &=1- \frac{\lambda}{4} \tilde{G}_u(\bm{k},k_\parallel),
}
where 
\aln{
\tilde{G}_s(\bm{k},k_\parallel)_{ij}&=\int d^2\bm{r}\,d^{d-1}r_\parallel\,e^{-i(\bm{k}\cdot\bm{r}+ik_\parallel\cdot r_\parallel)}\left<[\bar{\phi}_j \phi_j](\bm{r};r_\parallel)~[\bar{\phi}_i\phi_i](\bm{0};0)\right>,\\
\tilde{G}_t(\bm{k},k_\parallel)&=\int d^2\bm{r}\,d^{d-1}r_\parallel\,e^{-i(\bm{k}\cdot\bm{r}+i k_\parallel\cdot r_\parallel)}\left<[\bar{\phi}_1\bar{\phi}_2](\bm{r};r_\parallel)~[\phi_1\phi_2](\bm{0};0)\right>,\\
\tilde{G}_u(\bm{k},k_\parallel)&=\int d^2\bm{r}\,d^{d-1}r_\parallel\,e^{-i(\bm{k}\cdot\bm{r}+ik_\parallel\cdot r_\parallel)}\left<[\bar{\phi}_2\phi_1](\bm{r};r_\parallel)~[\bar{\phi}_1\phi_2](\bm{0};0)\right>.
}
The square brackets  $[{\cal O}]$ represent the normal ordering of an operator ${\cal O}$. 
The resulting contributions to EE, including both of those from the propagators and vertices, are given by 
\aln{
S^{\text{2PI}}_{\text{prop\&vert}}=&- \frac{V_{d-1} }{6} 
 \biggl( \sum_{i=1}^2\int^{1/\epsilon} \frac{d^{d-1}k_{\parallel}} {(2 \pi)^{d-1}}  
  \log \left[ \tilde{G_{\phi_i}}^{-1}(\bm{0}; k_\parallel) \epsilon^2 \right]
  \nonumber\\
 & \hspace{1.6cm}
 {-} \frac{1}{2} \int^{1/\epsilon} \frac{d^{d-1}k_{\parallel}} {(2 \pi)^{d-1}}  
 \tr  \log \left[ \sigma_x -\frac{\lambda}{4}\tilde{G_s}(\bm{0}; k_\parallel) \right]
  \nonumber\\
 & \hspace{1.6cm}
{-} \phantom{\frac{1}{2}} 
 \int^{1/\epsilon} \frac{d^{d-1}k_{\parallel}} {(2 \pi)^{d-1}}  
  \log \left[ 1-\frac{\lambda}{4} \tilde{G_t}(\bm{0}; k_\parallel)  \right]
  \nonumber\\
 & \hspace{1.6cm}
{-} \phantom{\frac{1}{2}} 
 \int^{1/\epsilon} \frac{d^{d-1}k_{\parallel}} {(2 \pi)^{d-1}}  
  \log  \left[1-\frac{\lambda}{4}\tilde{G_{u}}(\bm{0}; k_\parallel) \right] \biggr) ,
\label{e:EE2PIC}
} 
where the $\tr$ in the second line is a trace over the $ 2 \times 2$ matrix.

%%%%%%%%%%%%%%%%%%%%%%%%%%%%%%%%%%%%%%%%%%%%%%%%
%%%%%%%%%%%%%%%%%%%%%%%%%%%%%%%%%%%%%%%%%%%%%%%%
Remarkably, this composite operator approach works for the $\phi^4$ theory with a single real scalar while the auxiliary field approach failed to give a correct contribution to EE as we discussed at the end of Section \ref{sec:auxiliary}.
As the previous observation indicates, we will now focus on the following correlation function,
	\aln{
		G_{\phi^2\phi^2}(x-y):= \left<\,[\phi^2](x)\, [\phi^2](y)\,\right>.
		\label{eq:composite-green}
	}
	The vertex contributions to EE in the $\phi^4$ theory is expected to be given by 
	\aln{
		S^{\text{2PI}}_{\text{vert}}=
		\frac{V_{d-1}}{12}
		\int^{1/\epsilon}\frac{d^{d-1}k_\parallel}{(2\pi)^{d-1}}\mathrm{log}\,\left[1-\frac{3}{2}\lambda \,\tilde{G}_{\phi^2\phi^2}(\bm{0},k_\parallel)
		\right] .
		\label{e:EEcomp}
	}
	Here, the coefficient $-3\lambda/2$ is understood as $-\lambda/4\times 6$ %the former of which is 
	where $-\lambda/4$ is the coefficient in front of the interaction vertex (the same coefficient as in \eqref{e:EE2PIC}) 
	and the coefficient 6 is the combinatorial factor for separating four $\phi(x)$'s into a pair of  two $\phi(y)$'s. 
	The unity in the logarithm in \eqref{e:EEcomp} means that 
	the composite operator does not have any new degrees of freedom  in the free field limit and does not contribute to EE. 
	The overall factor is not $1/6$ but $1/12$ since the composite operator is real. 
%	To summarize, We expect that Eq.(\ref{e:EEcomp}) is a consistent counterpart to Eq.(\ref{e:EE2PIvertC}) (or equivalently, the second term and the followers in Eq.(\ref{e:EE2PIC})), which describes the vertex correction in the $\phi^4$ theory and $1-(3\lambda/2)G_{stu}$ should correspond to the full propagator of the necessary degrees of freedom for opening vertices, even though we cannot equip it at the level of action. 

Since we cannot introduce the auxiliary field and use the conventional  2PI formalism, we do not know at this point how to justify that the above expression of \eqref{e:EEcomp} gives the correct vertex contribution to EE. This will be proved by considering a one-loop diagram consisting of 1PI diagrams for the composite operator (generalized 1PI diagrams). We will discuss this in the next section \S\ref{s:vertexEE} and Appendix \ref{appencomp}. 
Instead, we will perturbatively check its correctness up to  $\lambda^2$ in the following. 
The two-point function of the composite operator can be evaluated as 
	\aln{
		G_{\phi^2\phi^2}=2A-6\lambda A^2-12\lambda B+O(\lambda^2),
		\label{e:comppropexpand}
	}  
	where
	\aln{
		A &:=G_{0}(x-y)^2, \nn
		B& :=\int d^{d+1}z\,G_{0}(x-y)G_{0}(x-z)G_{0}(z-y)G_{0}(0).
	}
	In \eqref{e:comppropexpand},  
	the product of operators represents a convolution; 
	\[XY(x-y)=\int d^{d+1}z X(x-z)Y(z-y).\] 
	By substituting \eqref{e:comppropexpand} into \eqref{e:EEcomp}, and using the identity
	\[\int\frac{d^{d-1}k_\parallel}{(2\pi)^{d-1}}\tilde{f}(\bm{0},k_\parallel)=\int d^2\bm{r}f(\bm{r},0),\]
	we can expand \eqref{e:EEcomp} up to $O(\lambda^2)$ as 
	\aln{
		S^{\text{2PI}}_{\text{vert}}&=\frac{V_{d-1}}{12}\int d^{2}\bm{r}\left[\log\,\left(1-\frac{3}{2}\lambda G_{\phi^2\phi^2}\right)\right](\bm{r},0)\nonumber\\
		&=\frac{V_{d-1}}{12}\int d^{2}\bm{r}\left[-\frac{3}{2}\lambda G_{\phi^2\phi^2}-\frac{9}{8}\lambda^2 G_{\phi^2\phi^2}^2\right](\bm{r},0)
		+ {\cal O}(\lambda^3)
		\nonumber\\
		&=\frac{V_{d-1}}{12}\int d^{2}\bm{r}\left[-3\lambda A+18\lambda^2B+\frac{9}{2}\lambda^2 A^2\right](\bm{r},0)
		+ {\cal O}(\lambda^3) \nonumber\\
		&=-\frac{V_{d-1}}{4}\lambda\int d^2\bm{r}\,G_{0}(\bm{r},0)^2\nonumber\\
		&~~+\frac{3V_{d-1}}{2}\lambda^2\int d^2\bm{r}\,d^2\bm{s}\,d^{d-1}r_\parallel\, G_0(\bm{r},0)\, G_0(\bm{r}-\bm{s},r_\parallel)\,G_0(\bm{s},r_\parallel)G(\bm{0},0)\nonumber\\
		&~~+\frac{3V_{d-1}}{8}\lambda^2\int d^2\bm{r}\,d^2\bm{s}\,d^{d-1}r_\parallel\, G_{0}(\bm{r},r_\parallel)^2\,G_0(\bm{s},r_\parallel)^2
+ {\cal O}(\lambda^3)	.}
	These three terms indeed coincide with \eqref{e:EE2loopvert}, \eqref{e:EE3loopvert2}, and \eqref{e:EE3loopvert}, respectively.

\subsection{Unification of propagator and vertex contributions using generalized 1PI diagrams}
\label{s:vertexEE}
In the previous section, we discussed the composite operator approach works well over the auxiliary field method. In the same manner as the propagator contribution, we need to replace a vertex in the bubble diagrams with a twisted one to compute EE. When we sum up these twisted diagrams, we need to take care of multiplicities by the divide-by-multiplicity method. Just as we did in \eqref{eq:divide-by-mult-prop}, we show that the composite operator approach is indeed correct to all orders in Appendix \ref{appencomp}.
The key is to replace the usual 1PI self energy with a \textbf{generalized 1PI diagram}\index{generalized 1PI}, the self-energy for a composite operator.

As shown in Fig.\ref{f:G-1PI}, 
the Green function of the composite operator can be written as
\aln{
G_{\phi^2\phi^2} = \,&\Sigma^{(g)}_{\phi^2\phi^2}
+\left(\frac{-3\lambda_4 }{2} \right)(\Sigma^{(g)}_{\phi^2\phi^2})^2 + 
\left(\frac{-3\lambda_4 }{2} \right)^2 (\Sigma^{(g)}_{\phi^2\phi^2})^3 + \cdots
\nn
=\, & \frac{\Sigma^{(g)}_{\phi^2\phi^2}} {1-  \left(\frac{-3\lambda_4 }{2} \right) \Sigma^{(g)}_{\phi^2\phi^2}},
\label{G=sumofg1PI}
}
where $\Sigma^{(g)}_{\phi^2\phi^2}$ is the 1PI self-energy of $[\phi^2]$  %and 
{in a \textit{generalized} sense.} We call it g-1PI\index{{g-1PI}}. 
Namely,  the quantity with the superscript $(g)$ 
 does not contain a diagram like Fig.\ref{f:g1PI}
that is separable by cutting an arbitrary vertex in the middle. 
We call such a diagram a beads diagram:
%that is 
1PI in the ordinary sense but not in the generalized sense. Thus these beads diagrams 
are not included in g-1PI diagrams. %$\Sigma^{(g)}$.
%%%%%%%%%%%%%%%%%%%%%%%%%%%%%%%%%%%%%%%%
\begin{figure}[h]
\centering
\includegraphics[width=\linewidth]{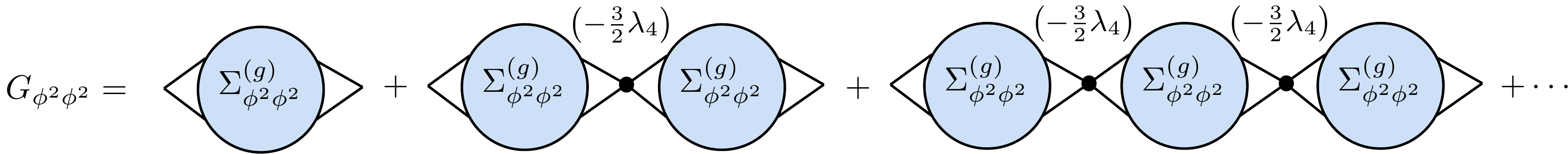}
\caption{
A Green function of a composite operator can be written in terms of the generalized self-energy $\Sigma^{(g)}_{\phi^2\phi^2}$,
which is 1PI with respect to the propagator of the composite operator at the vertex as well as the fundamental field. 
}
\label{f:G-1PI}
\end{figure}
%%%%%%%%%%%%%%%%%%%%%%%%%%%%%%%%%%%%%%%%%%%
%%%%%%%%%%%%%%%%%%%%%%%%%%%%%%%%%%%%%%%%%%  
\begin{figure}[h]
\centering
\includegraphics[width=7cm]{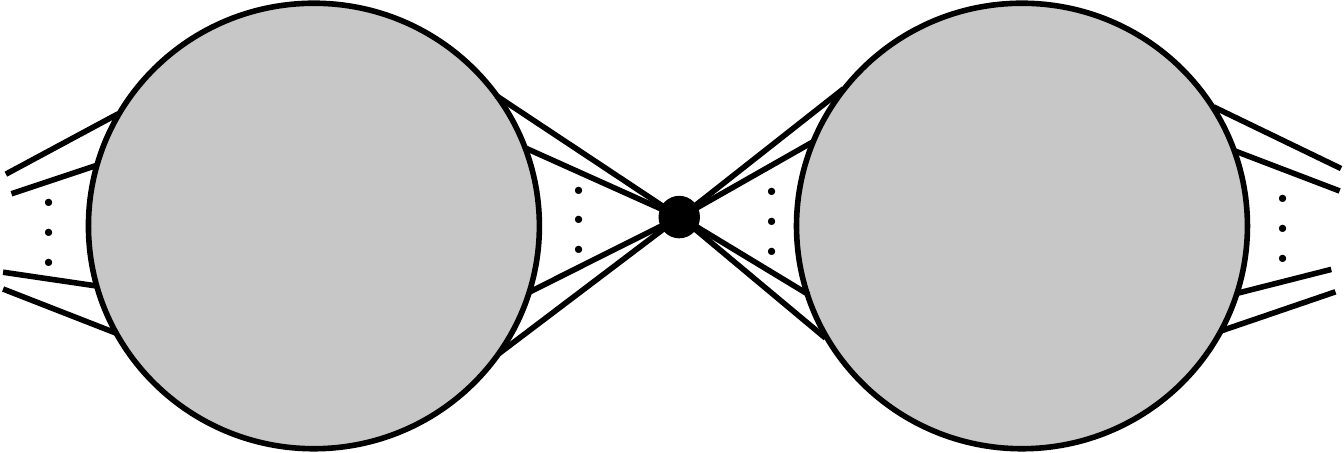}%{ambiguity2_ver2.pdf}
\caption{
Beads diagram: this diagram is NOT  1PI in the {\it generalized} sense 
since it is separable by cutting the  propagator at the opened vertex.
}
\label{f:g1PI}
\end{figure}
%%%%%%%%%%%%%%%%%%%%%%%%%%%%%%%%%%%%%%%%%%

From the diagrammatic analysis (see Appendix \ref{appencomp}), we can show that
the vertex contribution to EE is written in terms of a correlation function of composite operators. Since we consider various interactions in this section, here and in the following, we denote the coupling constant of the $\phi^4$ theory as $\lambda_4$, i.e. the interaction term in the Lagrangian is given by ${\cal L}_{pot}=\lambda_4 \phi^4 / 4$.
In the case of the $\phi^4$ theory, the vertex contribution is given by
\aln{
		S_{\text{vertex}}= 
		\frac{V_{d-1}}{6}
		\int^{1/\epsilon}\frac{d^{d-1}k_\parallel}{2(2\pi)^{d-1}}\mathrm{log}\,\left[1 - \frac{3}{2}\lambda_4 \,
		G_{\phi^2 \phi^2}(\bm{0},k_\parallel)
		\right] ,
		\label{e:EEcomp-ph4}
	}
where $G_{\phi^2\phi^2}$ is given by \eqref{eq:composite-green}.
In the following, the cutoff $\epsilon$ is not explicitly written for notational simplicity, 
as it can be recovered by dimensional analysis. 
The coefficient $-3\lambda_4 /2$ is a product of  $-\lambda_4 /4$ %\times 6$ 
and $6$, 
where $-\lambda_4/4$ is the coefficient in front of the interaction vertex 
and the coefficient $6$ is a combinatorial factor for separating four $\phi(x)$'s into 
{a pair of $\phi^2(x)$ and $\phi^2(y)$.} %a pair of  two $\phi(y)$'s. 

By using \eqref{G=sumofg1PI},
we can rewrite \eqref{e:EEcomp-ph4} as
\aln{
S_{\text{vertex}}= 
		 - \frac{V_{d-1}}{6}
		\int^{1/\epsilon}\frac{d^{d-1}k_\parallel}{2(2\pi)^{d-1}}\mathrm{log}\,\left[1 -  \left( -\frac{3}{2}\lambda_4 \right) \,
		 \Sigma^{(g)}_{\phi^2\phi^2}
		\right] .
		\label{e:EEcomp2}
}
In the following equations including \eqref{e:EEcomp2}, the argument $(\bm{k}=\bm{0},k_\parallel)$ of the integrand for the $k_\parallel$ integral is implicit.
Now we can write 
both the propagator and vertex contributions in \eqref{e:EE2PIprop} and \eqref{e:EEcomp2}
in a unified matrix form as
\aln{
S_{EE}^{(\phi^4)} = - \frac{V_{d-1}}{6}
		\int^{1/\epsilon}\frac{d^{d-1}k_\parallel}{2(2\pi)^{d-1}}\mathrm{tr}\,\mathrm{log}
	\left[\hat{G}_0^{-1} - \hat\lambda \hat{\Sigma}^{(g)} \right],
	\label{eq:EE_phi4}
}
where
\aln{
  \hat{G}_0 =&   \left( \begin{array}{cc} G_{0} & 0 \\0 & 
  1  \end{array}\right)  , 
  \   \ 
  \hat\lambda = \left(\begin{array}{cc}1  & 0   \\ 0  & -3 \lambda_4/ 2 \end{array}\right) ,   
  \   \
 \hat{\Sigma}^{(g)} =    \left(\begin{array}{cc}  \Sigma^{(g)}    &{0}   %\red{\epsilon^{2}}
  \\ 0  & \Sigma^{(g)}_{\phi^2\phi^2}\end{array}\right) .
  \label{GLS1}
 }

In the following, we generalize these results to include higher-point vertices whose composite operators %will be
are mixed in a complicated way. 
It is important to note that the form of \eqref{eq:EE_phi4} is convenient for a unified description in the following discussions, but it is 
always possible to go back to the form like \eqref{e:EEcomp-ph4}, where 
the vertex contributions are written  in terms of the ordinary renormalized propagators without the superscript ${}^{(g)}$. 
Also, note that all the single twist contributions from a vertex can be written in the above one-loop type formula, \eqref{e:EEcomp-ph4} or \eqref{eq:EE_phi4}.

\subsection{$\phi^6$ scalar field theory}
In the subsequent sections, we extend the analysis of vertex contributions to EE from the $\phi^4$ interaction to more general cases.
First, let us consider the $\phi^6$ interaction,
\aln{
{\cal L}_{\mathrm{pot}}=\frac{\lambda_6}{6 } \phi^6.
}
In this case, we  have  two types of vertex configurations\footnote{
{
The $\phi^6$ interaction will induce $\phi^4$ interaction by contracting
two $\phi$'s, but in this section, we simply set it zero by renormalization
and do not consider contributions to EE from such diagrams as the vertex contributions at this stage.
A model containing both of $\phi^4$ and $\phi^6$
interaction vertices are studied in the next section. 
%It should be understood that the $\phi^6$ interaction vertex is normal-ordered and 
%the renormalized $\phi^4$ interaction vertex is set to zero here. A theory with both of $\phi^4$ 
%and $\phi^6$ interaction vertices are considered in the next section.
%We take contributions with the singular four-point vertex $-(5\lambda_6/2)\,G_0(0)\,\phi^4$ into $\Sigma^{(g)}$.
 }
}
 as drawn in Fig.\ref{f:6point-vertex}
%%%%%%%%%%%%%%%%%%%%%%%%%%%%%%%%%%%%%%%%%%  
\begin{figure}[t]
\centering
\includegraphics[width=7cm]{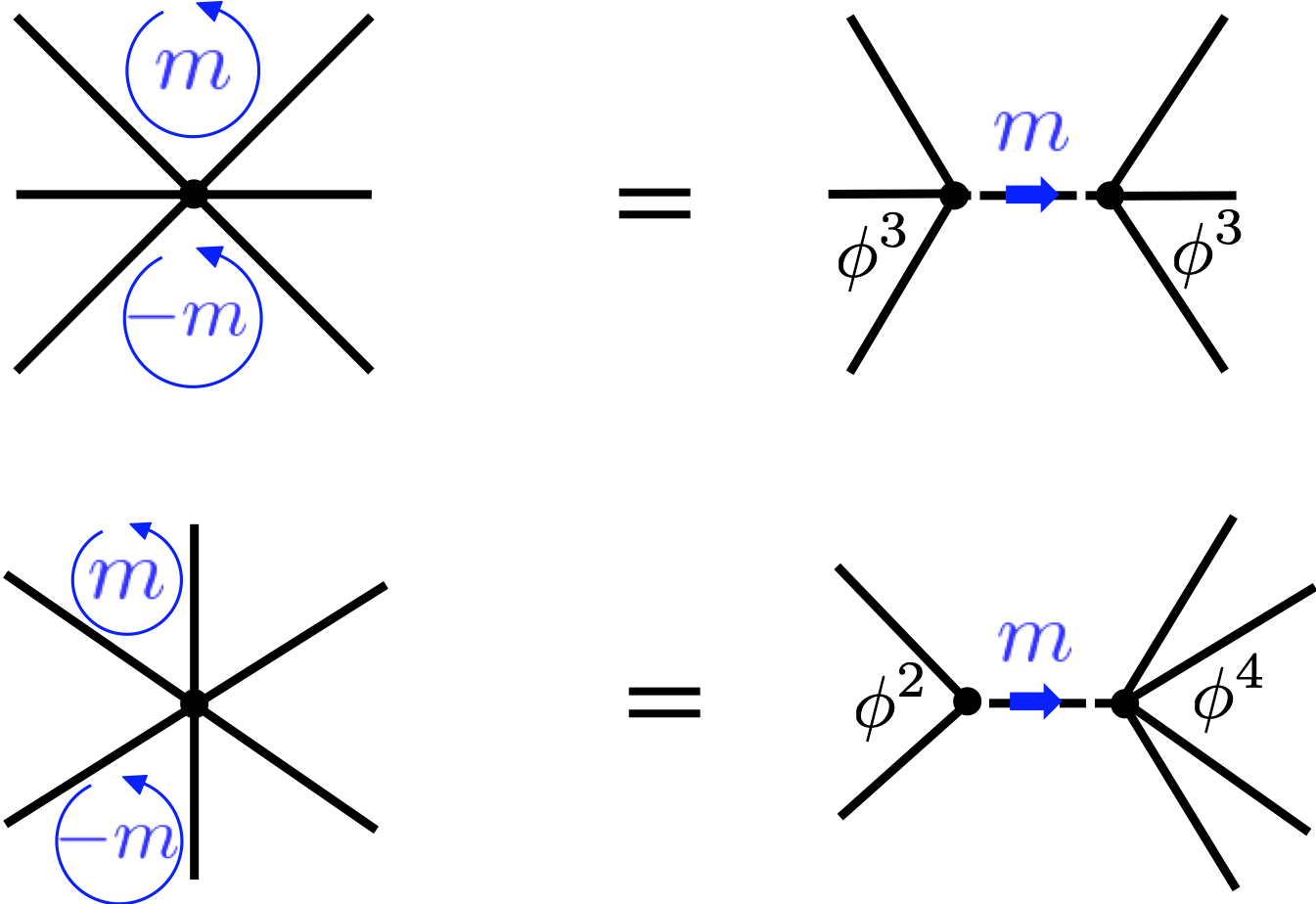}%{ambiguity2_ver2.pdf}
\caption{
Two different composite operators appear by
opening the $\phi^6$ vertex. Each flux configuration
corresponds to twisting the propagator of the respective composite operator. 
}
\label{f:6point-vertex}
\end{figure}
%%%%%%%%%%%%%%%%%%%%%%%%%%%%%%%%%%%%%%%%%%
and need to introduce three types of composite operators, $\phi^2$, $\phi^4$, and $\phi^3$, 
to extract all the vertex contributions to EE. Since the theory has $\mathbb{Z}_2$ invariance under $\phi \rightarrow -\phi$, 
the $\mathbb{Z}_2$-even operators,  $\phi^2$ and $\phi^4$, are mixed with themselves 
while the $\mathbb{Z}_2$-odd operator $\phi^3$ is mixed with the fundamental field $\phi$.
Therefore,  the propagator 
contribution in \eqref{e:EE2PIprop} needs a  modification. 

First, let us consider the modified propagator contributions in the $\phi^6$ theory. 
%%%%%%%%%%%%%%%%%%%%%%%%%%%%%%%%%%%%%%%%%%  
\begin{figure}[t]
\centering
\includegraphics[width=5cm]{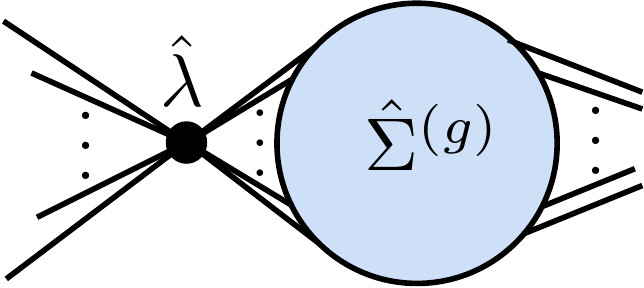}%{ambiguity2_ver2.pdf}
\caption{
A Schwinger-Dyson type diagram to represent  mixings between different operators. All possible composite operators
are assigned to each dotted part. $\hat\lambda$ is a matrix-valued vertex  
and $\Sigma^{(g)}$ is a generalized 1PI (g-1PI) self-energy with respect to
the composite operators. 
}
\label{f:13mixing}
\end{figure}
%%%%%%%%%%%%%%%%%%%%%%%%%%%%%%%%%%%%%%%%%%
Such contributions come from one-loop type diagrams of
mixed correlations of $\phi$ and $\phi^3$ operators. 
They are given by (Fig.\ref{f:13mixing})
\aln{
 S_{\mathbb{Z}_2\text{-odd}}
 =   &  - \frac{V_{d-1} }{6} 
 \int \frac{d^{d-1}k_{\parallel}} {2(2 \pi)^{d-1}} \tr \log   
 \left[\hat{G}_0^{-1} - \hat\lambda\hat{\Sigma}^{(g)}  \right],
 \label{EEphi6-Z2odd}
 }
 where %$\tilde{G}_g^{-1} = \tilde{G}_g^{(0)-1} -\Sigma_g$  and
 \aln{
  \hat{G}_0 =&   \left( \begin{array}{cc} G_0  & 0 \\0 & 
  1  \end{array}\right)  , 
  \   \ 
 \hat\lambda= \left(\begin{array}{cc}1  & 0   \\ 0  & -10 \lambda_6  /3 \end{array}\right) ,   
  \   \
  \hat{\Sigma}^{(g)} =    \left(\begin{array}{cc}  \Sigma^{(g)}  \, & \Sigma^{(g)}_{\phi\phi^3}
  \\ \Sigma^{(g)}_{\phi^3\phi} & \Sigma^{(g)}_{\phi^3\phi^3}\, \end{array}\right) .
  \label{GLS}
 }
 It is a natural generalization of \eqref{GLS1} including an operator mixing. 
The diagonal component of $\hat{G}_0$ is the bare propagators of $\phi$ and $\phi^3$ operators, respectively.
$\hat\lambda$ is a matrix whose matrix element 
represents the coefficients of opening the $\phi^6$ vertex.
{The coefficient for $\phi^3$ to $\phi^3$ in $\hat\lambda$ is given by $1/6 \times _6 C_3$.} 
$\hat{\Sigma}_{}^{(g)} = \hat{\Sigma}_{}^{(g)} ({\bm k}=\bm{0}, k_\parallel)$ is the $\mathbb{Z}_2$-odd
g-1PI\index{{g-1PI}} function.\footnote{
{In the $\mathbb{Z}_2\text{-odd}$ set of operators, 
the $[\phi^5]$ operator does not appear in the  mixing, though
the $\phi^6$ vertex can be decomposed into $\phi$ and $\phi^5$. 
%It is because  the g-1PI diagram is 1PI in the ordinary sense and 
% a diagram with $\langle \,\phi\,[\phi^5]\,\rangle$ is not 1PI.
It is because a diagram with $\langle\, \phi\, [\phi^5]\, \rangle$ is not 1PI while the g-1PI is 1PI as well in the ordinary sense. } 
}  
Namely, it consists of 1PI diagrams that do not contain beads diagrams shown
 in Fig.\ref{f:g1PI}.
 Such a generalization of the 1PI concept is mandatory since, in calculating 
the vertex contributions to EE,  
we need to open a vertex to take account of various  channel contributions 
and  special care of the beads diagram in Fig.\ref{f:g1PI} is necessary. 
This is the reason why we have generalized the concept of 1PI.

The above discussions can be straightforwardly extended to the contributions from 
$\mathbb{Z}_2$-even operators,  $\phi^2$ and $\phi^4$. 
This case is simpler because the bare Green function is unity; $G^{(g)}=1$. 
Then, we have the same matrix form
\aln{
S_{\mathbb{Z}_2\text{-even}} = &  - \frac{V_{d-1} }{6} 
 \int^{1/\epsilon} \frac{d^{d-1}k_{\parallel}} {2(2 \pi)^{d-1}} \tr \log (\hat{1}-
 \hat\lambda 
 \hat{\Sigma}^{(g)} ),
 \label{EEphi6-Z2even}
}
where,  in this case,  matrices are given by
\aln{
 \hat\lambda
  = &
 \left(\begin{array}{cc}0 & -5 \lambda_6/2  \\-5 \lambda_6/{2} & 0\end{array}\right), \ \ 
 \hat{\Sigma}^{(g)} =
\left(\begin{array}{cc}\Sigma^{(g)}_{\phi^2\phi^2}&  \Sigma^{(g)}_{\phi^2\phi^4}
\\ \Sigma^{(g)}_{\phi^4\phi^2}& \Sigma^{(g)}_{\phi^4\phi^4}  \end{array}\right).
\label{X}
}
The coefficient comes from $5/2 =1/6 \times {}_6 C_2$. 
It is a $2 \times 2$ matrix generalization of \eqref{e:EEcomp-ph4}.
The g-1PI self-energy $\hat\Sigma^{(g)}$ does not contain 
beads diagrams, especially diagrams connected by 
the $\phi^6$ vertex decomposed into $\phi^2$ and $\phi^4$. 
%\footnote{\red{ Due to the $\mathbb{Z}_2$ invariance, 
%beads diagrams connected by $\phi^6$ vertex decomposed into
% two $\phi^3$'s vanish in the $\mathbb{Z}_2$ even sector.}}

Note that  EE of \eqref{EEphi6-Z2odd} and \eqref{EEphi6-Z2even} written in terms of the g-1PI functions 
can be rewritten  in terms of the renormalized correlation functions
as in the $\phi^4$ case of  \eqref{e:EEcomp-ph4}.
The only difference is that we now have operator mixings and the relationship becomes more complicated. 
Let us explicitly check it for the $\mathbb{Z}_2$-odd case of \eqref{EEphi6-Z2odd}.
It is rewritten as
\aln{
S_{\mathbb{Z}_2\text{-odd}}&=\frac{V_{d-1}}{6}\int^{1/\varepsilon}\frac{d^{d-1}k_\parallel}{2(2\pi)^{d-1}}\mathrm{tr}\,\mathrm{ln}\left(\hat{G}_0\,\frac{1}{\hat{1}-\hat{\lambda}_{}\hat{\Sigma}_{}^{(g)}\hat{G}_0}\right)\nonumber\\
&=\frac{V_{d-1}}{6}\int^{1/\varepsilon}\frac{d^{d-1}k_\parallel}{2(2\pi)^{d-1}}\mathrm{tr}\,\mathrm{ln}\left(\hat{G}_0+\hat{G}_0\hat{\lambda}_{}\hat{\Sigma}_{}^{(g)}\hat{G}_0+\hat{G}_0\hat{\lambda}_{}\hat{\Sigma}_{}^{(g)}\hat{G}_0\,\hat{\lambda}_{}\hat{\Sigma}_{}^{(g)}\hat{G}_0+\cdots\right).%\nonumber\\
%&=\frac{V_{d-1}}{6}\int^{1/\varepsilon}\frac{d^{d-1}k_\parallel}{2(2\pi)^{d-1}}\mathrm{tr}\,\mathrm{ln}\left[\hat{G}_0+\lambda_0\hat{G}_0\hat{\Sigma}_{}^{(g)}\hat{G}_0\left(1+\lambda_{}\hat{\Sigma}_{}^{(g)}\hat{G}_0
%+(\lambda_{}\hat{\Sigma}_{}^{(g)}\hat{G}_0)^2+\cdots\right)\right].
%&=\frac{V_{d-1}}{6}\int^{1/\varepsilon}\frac{d^{d-1}k_\parallel}{2(2\pi)^{d-1}}\mathrm{tr}\,\mathrm{ln}\left[\tilde{G}\right],
}
Writing the inside of the parenthesis as $\tilde{G}$, its matrix elements are given by
\aln{
(\tilde{G})_{11}&=G_0+G_0\Sigma^{(g)}G_0+G_0\Sigma^{(g)}_{\phi\phi^3}\left(-\frac{10}{3}\lambda_6\right)\Sigma^{(g)}_{\phi^3\phi}G_0+G_0\Sigma^{(g)}G_0\Sigma^{(g)}G_0+\cdots,\\
(\tilde{G})_{12}&=G_0\Sigma^{(g)}_{\phi\phi^3}+G_0\Sigma^{(g)}G_0\Sigma^{(g)}_{\phi\phi^3}+G_0\Sigma^{(g)}_{\phi\phi^3}\left(-\frac{10}{3}\lambda_6\right)\Sigma^{(g)}_{\phi^3\phi^3}+\cdots,\\
(\tilde{G})_{21}&=\left(-\frac{10}{3}\lambda_6\right)\left(\Sigma^{(g)}_{\phi^3\phi}G_0+\Sigma^{(g)}_{\phi^3\phi}G_0\Sigma^{(g)}G_0+\Sigma^{(g)}_{\phi^3\phi^3}\left(-\frac{10}{3}\lambda_6\right)\Sigma^{(g)}_{\phi^3\phi}G_0+\cdots,\right),\\
(\tilde{G})_{22}&=1+\left(-\frac{10}{3}\lambda_6\right)\left(\Sigma^{(g)}_{\phi^3\phi^3}+\Sigma^{(g)}_{\phi^3\phi}G_0\Sigma^{(g)}_{\phi\phi^3}+\Sigma^{(g)}_{\phi^3\phi^3}\left(-\frac{10}{3}\lambda_6\right)\Sigma^{(g)}_{\phi^3\phi^3}+\cdots,\right).
}
We can explicitly see that the sum of g-1PI's in each matrix element is combined into 
 the ordinary 1PI functions {$\Sigma$'s}, and hence can be written by the 
 renormalized correlation functions as
\aln{
(\tilde{G})_{11}&=G_0+G_0\Sigma G_0+G_0\Sigma G_0\Sigma G_0+\cdots    %\nonumber\\
=G,\\
(\tilde{G})_{12}&=(G_0+G_0\Sigma G_0+G_0\Sigma G_0\Sigma G_0+\cdots)\Sigma_{\phi\phi^3} %\nonumber\\
=G_{\phi\phi^3},\\
(\tilde{G})_{21}&=\left(-\frac{10}{3}\lambda_6\right)\Sigma_{\phi^3\phi}(G_0+G_0\Sigma G_0+G_0\Sigma G_0\Sigma G_0+\cdots) 
%\nonumber\\ &
=-\frac{10}{3}\lambda_6G_{\phi^3\phi},\\
(\tilde{G})_{22}&=1+\left(-\frac{10}{3}\lambda_6\right)\left(\Sigma_{\phi^3\phi^3}+\Sigma_{\phi^3\phi}G\Sigma_{\phi\phi^3}\right)
%\nonumber\\ &
=1-\frac{10}{3}\lambda_6G_{\phi^3\phi^3}.
}
As a result, \eqref{EEphi6-Z2odd} can be summarized as
\aln{
 S_{\mathbb{Z}_2\text{-odd}}
 =   &  \frac{V_{d-1} }{6} 
 \int^{1/\epsilon} \frac{d^{d-1}k_{\parallel}} {2(2 \pi)^{d-1}} \tr \log   
 \left[ \tilde{I}+ \hat\lambda_{}\hat{G}\right],
 \label{eq:gen-to-ren}
}
where
\aln{
\tilde{I}=\left(\begin{array}{cc}
0 & 0 \\
0 & 1
\end{array}\right)%,~~~
, 
\   \ 
{\hat\lambda= \left(\begin{array}{cc}1  & 0   \\ 0  & -10 \lambda_6  /3 \end{array}\right) ,   }
\   \
\hat{G}=\left(\begin{array}{cc}
G                   & G_{\phi\phi^3} \\
G_{\phi^3\phi}  & G_{\phi^3\phi^3}
\end{array}\right).
}
The same discussion can be applied to \eqref{EEphi6-Z2even}.
This gives  an alternative, unified formula for EE in terms of the renormalized Green functions. 
%%%%%%%%%%%%%%%%%%%%%%%%%%%%%%%%%%%
\subsection{$\phi^4 +\phi^6$  theory and further generalizations}
Let us generalize a bit more and consider a case when the Lagrangian contains two interaction terms
\aln{
{\cal L}_{pot}=\frac{\lambda_4}{4 } \phi^4 + \frac{\lambda_6}{6 } \phi^6 .
}
As in the $\phi^6$ theory, we need to consider three composite operators, $\phi^2$, $\phi^4$, and $\phi^3$,
in order to take into account contributions to EE from these vertices. 
Again, we have $\mathbb{Z}_2$ invariance and EE is a sum of $\mathbb{Z}_2$-even and odd contributions. 
The $\mathbb{Z}_2$-odd contribution is given by
\aln{
 S_{\mathbb{Z}_2\text{-odd}}
 =  & - \frac{V_{d-1} }{6} 
 \int^{1/\epsilon} \frac{d^{d-1}k_{\parallel}} {2(2 \pi)^{d-1}} \tr \log   
 \left[ 
 \hat{G}_0^{-1} -
 % \right.     \nn    &  \left.
\left(\begin{array}{cc}1  & 0%- \lambda_4  
\\ 0 %  - \lambda_4 
  & -10 \lambda_6/3 \end{array}\right)
 \left(\begin{array}{cc}  \Sigma^{(g)}_{}  & \Sigma^{(g)}_{\phi\phi^3}
  \\  \Sigma^{(g)}_{\phi^3\phi}  & \Sigma^{(g)}_{\phi^3\phi^3} \end{array}\right) 
 \right],
  \label{EEphi46-Z2odd}
 }
 where $\hat{G}_0$ is the same as in \eqref{GLS}
 while $\mathbb{Z}_2$-even contribution is given by 
 \aln{ 
S_{ \mathbb{Z}_2\text{-even}} = & - \frac{V_{d-1} }{6} 
 \int^{1/\epsilon} \frac{d^{d-1}k_{\parallel}} {2(2 \pi)^{d-1}} 
 \tr \log \left[ \hat{1}-
 \left(\begin{array}{cc} -3\lambda_4/2 & -5 \lambda_6/2  \\-5 \lambda_6/2 & 0\end{array}\right)
\left(\begin{array}{cc} \Sigma^{(g)}_{\phi^2\phi^2}  &  \Sigma^{(g)}_{\phi^2\phi^4}
\\ \Sigma^{(g)}_{\phi^4\phi^2}  & \Sigma^{(g)}_{\phi^4\phi^4} \end{array}\right)  
 \right]  .
 \label{EEphi46-Z2even}
}

Now a generalization to e.g.  $\phi^{2n}$ vertices with higher $n$ is evident.
The propagator and vertex contributions to EE are unified to be written in a  matrix form as 
\eqref{EEphi6-Z2odd}:
\aln{
 S_{EE}
 =   & - \frac{V_{d-1} }{6} 
 \int^{1/\epsilon} \frac{d^{d-1}k_{\parallel}} {2(2 \pi)^{d-1}} \tr \log   
 \left[ (\hat{G}_0^{-1} -  \hat\lambda \hat{\Sigma}^{(g)}  ) ({\bm k}=0, k_\parallel) \right].
 \label{EE-generalform2}
 }
 The size of matrices becomes larger as a larger number of operators are mixed and 
each set of mixed operators forms a block diagonal component. 
$\hat{G}_0$ is a diagonal matrix whose entry is mostly 1 except the fundamental field. 
$\hat{\lambda}$ represents a mixing among operators via vertices while $\hat{\Sigma}^{(g)}$ 
represents amputated correlators of all the fundamental and composite operators. 
The notion of the g-1PI is also extended to exclude all the beads diagrams 
constructed by all the vertices along with the ordinary non-1PI diagrams. 
This form of EE contains all the contributions from the propagators and the vertices. {We provide the derivation in Appendix \ref{appencomp}.}

An essential point is that we can rewrite \eqref{EE-generalform2} in terms of the renormalized correlation functions in the same manner as in {\eqref{eq:gen-to-ren}} as %Eq.(\ref{EEphi6-Z2odd}),
\aln{
S_{EE}=\frac{V_{d-1}}{6}\int^{1/\varepsilon}\frac{d^{d-1}k_\parallel}{2(2\pi)^{d-1}}\mathrm{tr}\,\mathrm{ln}\left(\tilde{I}+{\hat{\lambda}} %\lambda
 \hat{G}\right).
\label{EEren}
}
Here, $\tilde{I}=\text{diag}(0,1,\cdots,1)$, $\hat{G}$ is the matrix form of the correlators of operators, and we have arranged the elements of the matrices so that the first line and first column involve the fundamental field $\phi$. 
The size of the matrices is finite as far as there is a finite number of vertices. 
In the $\phi^n$-theory,
 we need to consider only the composite operators $[\phi^j]$ with $j\leq n-2$, which appear to open vertices. %The asymmetric structure with respect to the fundamental and composite operators reflects the Gaussian and non-Gaussian nature of EE.

%%%%%%%%%%%%%%%%%%%%%%%%%%%%%%%%%%%
\subsection{Derivative interactions}
Special care is necessary for generalizations with derivative interactions 
since composite operators with Lorentz indices appear. 
Let us consider the following interaction as an example, 
\aln{
{\cal L}_{pot}= \frac{\lambda_{{\partial}}}{4}(\phi \partial \phi)^2 . 
}
In this case,  the two types of scalar composite operators,  $[\phi^2]$ and $[(\partial \phi)^2]$,  as well as a
spin-1 operator $[\phi \partial_\mu \phi]$ appear from an opened vertex. 
Since the spin-1 operator does not mix with either $\phi$ or $[\phi^2]$ 
or $[(\partial \phi)^2]$, we can separately study its contribution to EE.  
Thus we have three block-diagonal sectors.

The spin-0 sectors can be treated as  before. 
Thus let us focus on the spin-1 sector. The formula \eqref{EE-generalform2} gets a bit modified since EE of 
% a spinning field is different from a scalar field because $\mathbb{Z}_M$ twist induces the rotation
% of the internal spin 
 a spinning field is different from that of a scalar field due to the rotation
of the internal spin induced by $\mathbb{Z}_M$ twist 
 and hence an extra phase appears in evaluating EE \cite{He:2014gva}. 
 The operator $J_\mu:= [\phi \partial_\mu \phi]$ is decomposed into its two-dimensional part ${\bm J}$
and $(d-1)$-dimensional part $J_i$. The latter is a scalar on %$d=2$ space-time 
{the two-dimensional spacetime normal to the boundary} and can be treated 
as in \ref{EE-generalform2}. 
On the other hand, the contribution to EE from the 2-dimensional vector ${\bm J}$ is modified. 
From (2.21) in \cite{He:2014gva}, the coefficient of EE is proportional to 
\aln{
c_{\rm eff}^{\rm boson}(s)=\frac{1}{4} \frac{\partial J(s, M)}{\partial M} \Big|_{M=1} =\frac{1}{6} -\frac{|s|}{2}
}
for a bosonic field with spin {$s$}%$2s$
. This coefficient $c_{\rm eff}$ replaces the coefficient of
$1/6$ in front of \eqref{e:EE2PIprop}. 
Thus for $(d+1)$-dimensional vector $J_\mu$, the total coefficient is 
given by $(d-1)/6+2(1/6-1/2)=(d-5)/6$.\footnote{See Appendix \ref{app:spinor} for more detail.}
Therefore the propagator and vertex contributions to 
EE with this derivative interaction is given by either of the following two forms, 
\aln{
 S_{EE}
 =   & - %V_{d-1} 
 {\frac{V_{d-1}}{6}}
 \int^{1/\epsilon} \frac{d^{d-1}k_{\parallel}} {2(2 \pi)^{d-1}} \tr \left( S \log   
 \left[ \hat{G}_0^{-1}-  \hat{\lambda} \hat{\Sigma}^{(g)}  \right] \right)\nonumber\\
 = & %V_{d-1} 
 {\frac{V_{d-1}}{6}}
 \int^{1/\epsilon} \frac{d^{d-1}k_{\parallel}} {2(2 \pi)^{d-1}} \tr \left( S \log   
 \left[ \tilde{I}+  \hat{\lambda} \hat{G}  \right] \right),
 \label{EE-derivative} 
 }
 where 
 \aln{
S&= %\frac{1}{6} 
\left(\begin{array}{cccc}1 & 0 & 0 & 0 \\0 & 1 & 0 & 0 \\0 & 0 & 1 & 0 \\0 & 0 & 0 & (d-5) \end{array}\right), \ 
\tilde{I}= \left(\begin{array}{cccc} 0 & 0 & 0 & 0 \\0 & 1 & 0 & 0 \\0 & 0 & 1 & 0 \\0 & 0 & 0 & 1\end{array}\right),
\ \ 
\hat{\lambda } = \left(\begin{array}{cccc}1 & 0 & 0 & 0 \\0 & 0 & -\lambda_{{\partial}} & 0 \\0 & -\lambda_{{\partial}} & 0 & 0 
\\0 & 0 & 0 & -\lambda_{{\partial}}/2\end{array}\right), \nn 
 \hat{G}^{}  &= \left(\begin{array}{cccc}
 {G} & 0 & 0 & 0 \\
 0 & G_{\phi^2\phi^2}  & G_{\phi^2(\partial\phi)^2} & 0 \\
 0 &  G_{(\partial\phi)^2\phi^2} & G_{(\partial\phi)^2(\partial\phi)^2} & 0 \\
 0 & 0 & 0 & G_{(\phi\partial_\mu\phi)(\phi\partial^\mu\phi)}\end{array}\right) .
}
$S$ is an additional coefficient due to the spin. Here we have summed over $(d+1)$-dimensional vector contributions,
but generally speaking,  it is more convenient to write a matrix corresponding to 
each irreducible representation of the 2-dimensional rotation with spin $s$. 
According to \cite{He:2014gva}, the coefficient $c$ for fermions with odd half-integer spin $s$ is given by
$c_{\rm eff}^{\rm fermion}(s) = -1/3$. Thus if we treat each 2-dimensional spin component as an independent
field, the diagonal component of the matrix $S{/6}$ is given by $c_{\rm eff}^{\text{boson/fermion}}(s)$
for each spin $s$ field. 
%\red{It is straightforward to rewrite Eq.(\ref{EE-derivative}) in terms of the renormalized two-point functions as Eq.(\ref{EEren}), taking account $S$.
% It is less trivial whether }

\subsection{Interactions with higher spin fields -- Generalities}\label{sec:spin-general}
General interactions involving higher spin fields similarly add an additional coefficient due to spins. 
A twisted propagator with a spin-$s$ field $\varphi_s$ is accompanied with a rotation in the internal space:
\aln{
G_{\varphi_s\,0}^{(M)}(x,y)=\sum_{m=0}^{M-1}e^{-2i\theta_m\mathcal{M}^{(s)}_{1,d+1}}\,G_{\varphi_s\,0}(\hat{g}^mx-y). 
}
Here, $\mathcal{M}_{1,d+1}^{(s)}$ is one of the generators of $SO(d+1)$ in the spin-$s$ representation, which drives a rotation on a plane spanned by $x_\perp$ (1-direction) and $\tau$ ($(d+1)$-direction). For example, the propagator for a Dirac fermion is given by
\aln{
G_{\text{fermion}}^{(M)}(x,y)=\sum_{m=0}^{M-1}e^{\theta_m \gamma_1\gamma_{d+1}}\int\frac{d^2\bm{k}}{(2\pi)^2}\frac{d^{d-1}k_\parallel}{(2\pi)^{d-1}}\frac{i\bm{k}\cdot\bm{\gamma}+ik_\parallel\cdot\gamma_\parallel-m_{0}}{k^2+m^2_{0}}e^{i(\bm{k}\cdot\hat{g}^m\bm{x}-\bm{k}\cdot\bm{y}+k_\parallel\cdot(x_\parallel-y_\parallel))}
}%Note that 
%If 
with $\bm{\gamma}=(\gamma_1,\gamma_{d+1})$ and $\gamma_\parallel=(\gamma_2,\cdots,\gamma_{d-2})$.\footnote{See Appendix \ref{app:spinor} for details.} 

In a bubble diagram, each propagator has such an additional rotational factor. However, since an interaction vertex is rotationally invariant, 
it is still invariant under $\mathbb{Z}_M$ rotation and consequently invariant under
 an overall twist of the adjacent propagators.\footnote{%---------------------------------
A simple example is a vertex in the $U(1)$ gauge theory, $(\gamma_\mu)_{\alpha\beta}$. It has one vector field and two spinor fields
 and is invariant under simultaneous rotations of the fields. 
} %---------------------- 
Suppose that we have a multi-point vertex of fields with spins $s_q$ ($q=1,2, \cdots$) and the coefficient is given by $C_{i_1i_2\cdots}$.
The $\mathbb{Z}_M$ invariance of the vertex is written as 
\aln{
C_{i_1i_2\cdots}\delta^2(\bm{p}_1+\bm{p}_2\cdots)&=(e^{2i\theta_m \mathcal{M}_{1,d+1}^{(s_1)}})_{i_1}^{~j_1}(e^{2i\theta_m \mathcal{M}_{1,d+1}^{(s_2)}})_{i_2}^{~j_2}\cdots C_{j_1j_2\cdots}\delta^2(\hat{g}^{m}(\bm{p}_1+\bm{p}_2\cdots)).
}
By decomposing each field  into irreducible representations of $SO(2)$, this simply means that  a sum of $SO(2)$ spins vanish
at each vertex. Due to the invariance, the basic framework of $\mathbb{Z}_M$ gauge theory on Feynman diagrams is not
changed. Namely, we can classify $\mathbb{Z}_M$ invariant configurations of twists in terms of  fluxes in plaquettes as before.
The additional phase associated with spins can be calculated by taking a special gauge of $\mathbb{Z}_M$ fluxes because of their gauge invariance.
Furthermore, for fermions, we have to replace the twist operator $\hat{g}$ with $\hat{g}^2$ due to the anti-periodic boundary condition. 
In this case, $M$ should be considered as an odd integer. 

As far as the contributions from the propagators and vertices are concerned, 
it is sufficient to consider a  twist of a particular propagator of fundamental or composite operators. For the propagator contributions from a general bosonic or fermionic field $\varphi_s$ with spin $s$, we can formally write down the free energy:
\aln{
\tilde{F}^{\text{2PI}}_{\varphi_s,\text{prop}}&=\frac{V_{d-1}}{2M}\sum_{m=1}^{M-1}\frac{1}{4\sin^2\theta_m}\mathrm{Tr}\left[e^{2i\theta_m \mathcal{M}^{(s)}_{1,d+1}}\,\int\frac{d^{d-1}k_\parallel}{(2\pi)^{d-1}}\mathrm{log}\,\tilde{G}_{\varphi_s}(\bm{0}, k_\parallel)\right]~~(\text{for bosons}),
\label{e:freeenergyspinbos}\\
\tilde{F}^{\text{2PI}}_{\varphi_s,\text{prop}}&=-\frac{V_{d-1}}{2M}\sum_{m=1}^{M-1}\frac{1}{4\sin^22\theta_m}\mathrm{Tr}\left[e^{4i\theta_m \mathcal{M}^{(s)}_{1,d+1}}\,\int\frac{d^{d-1}k_\parallel}{(2\pi)^{d-1}}\mathrm{log}\,\tilde{G}_{\varphi_s}(\bm{0}, k_\parallel)\right]~~(\text{for fermions}).
\label{e:freeenergyspinfer}
}
``$\mathrm{Tr}$'' here represents the trace over the internal space. For vertex contributions, we need to pay further attention as they involve composite operators. When one considers a general composite operator such as $:\!\varphi_s\varphi'_{s'}\!:$, it is generically in a reducible representation of $SO(d+1)$. 
We should first decompose it into irreducible components, each of which corresponds to a different composite operator, then assign the rotational factor due to spins for each representation.

If we reduce \eqref{e:freeenergyspinbos} and \eqref{e:freeenergyspinfer} to the free field cases, we can easily evaluate the trace because both the rotational factor and $\tilde{G}_{\varphi\,0}(\bm{0};k_\parallel)$ are diagonalized in the basis of the eigenstates for $SO(2)$. 
The resulting EEs coincide with those in \cite{He:2014gva}. 
On the other hand, for interacting cases, $\tilde{G}_{\varphi}(\bm{0};k_\parallel)$ has off-diagonal components 
and we need to take a trace  of {the product of the rotational factor and the matrix-valued logarithmic terms} in a nontrivial way. While the computation becomes technically cumbersome, we can still conclude that the non-Gaussian part in EE is understood as contributions from renormalized two-point functions of composite operators while the Gaussian part is  a contribution from the fundamental fields.

\section{Multiple twist contributions}\label{sec:num-mult}
\subsection{Numerical computation for a figure-eight diagram}
So far we have focused on contributions from a single twist of either a propagator or a vertex. For a complete evaluation of EE in generic QFTs, we also need to consider multiple twist configurations. Naively, each summation of twists gives an $(M^2-1)$ factor and in total $O\left((M^2-1)^2\right)$ for multiple twist configurations. Then, these do not contribute to EE. However, they cannot be independent in general; this expectation fails. 
In this section, we present a numerical result for the figure-eight diagram (Fig.\ref{Fig3}) with nonzero $m_1$ and $m_2$ and a fixed integer $M$ to see how the multiple twist contribution behaves as a function of $M$.

Performing the integration of \eqref{8figure}, we have
 \aln{
&\phantom{=}  \int d^{d+1} x \ G_0(\hat{g}^{m_1} x, x) G_0(\hat{g}^{m_2} x, x)     \nonumber \\
& = 
{
\frac{V_{d-1}  }{16 \pi}
 \int \frac{d^{d-1}k_\parallel d^{d-1}p_\parallel}{(2\pi)^{2(d-1)}}% \frac{d^{d-1}p_\parallel }{(2\pi)^{d-1}}
	\frac{1}{\sin^2\theta_{m_1}  M_{k_\parallel}^2 - \sin^2\theta_{m_2} M_{p_\parallel}^2}
	\log\left(\frac{\sin^2\theta_{m_1} M_{k_\parallel}^2}{\sin^2\theta_{m_2} M_{p_\parallel}^2}  \right).
	    }
\label{8fig-twist}
}
%%%%%%%%%%%%%%%%%%%%%%%%%%%%%%%%%%%%%%%%
\begin{figure}[h]
\centering
\includegraphics[width=0.75\linewidth]{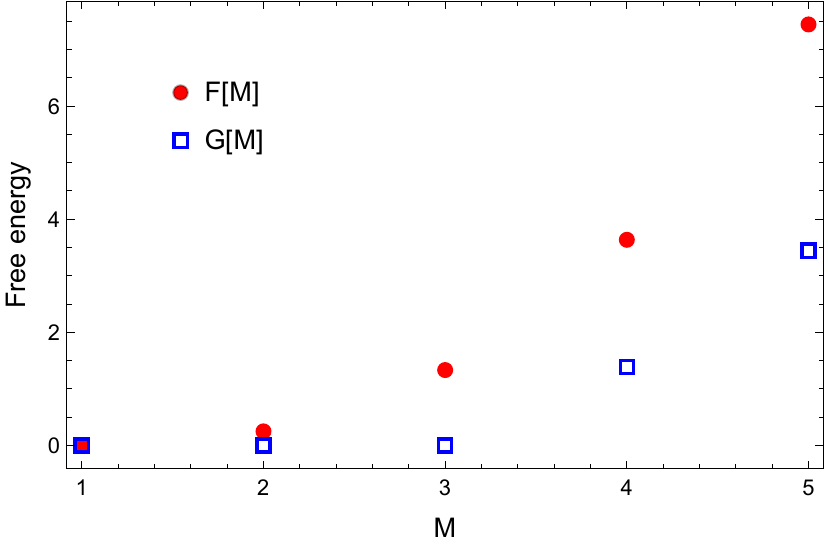} % 
\caption{
$F(M)$ is a sum of the %coefficients 
{integrand} of \eqref{8fig-twist} over  $m_1, m_2 = 1 \cdots M-1$ {for $d=1$}.
 Vertex contributions  $(m, \pm m)$ are subtracted in $G(M)$.
}
\label{8fig}
\end{figure}
%%%%%%%%%%%%%%%%%%%%%%%%%%%%%%%%%%%%%%%%%%%
EE is obtained by the analytical continuation of $M$ and calculating the coefficient of the first derivative at $M=1$. 
To see the behavior of $M$-dependence of \eqref{8fig-twist}, let us focus on the $d=1$ case for simplicity.
Summation over {nonzero} $m_1$ and $m_2$ can be 
explicitly evaluated and plotted in Fig.\ref{8fig}. 
$F(M)$ in Fig.\ref{8fig} is a sum of the integrand of \eqref{8fig-twist} over 
$m_1, m_2 =1, \cdots, M-1$. They include 2-loop vertex corrections {$(m_1,m_2)=(m,\pm m)$}. 
$G(M)$ is plotted without the vertex corrections. %$(m, \pm m)$. 
If we can simply interpolate the free energy to continuous $M$ near $M=1$, 
the first derivative seems to be dominated by the vertex contributions.  
Of course, it is not sufficient but we expect that 
EE of the figure-eight diagram is
dominantly given by twisting the propagators, $(m,0)$ and $(0,m)$, 
and the vertex $(m, \pm m)$. As long as we calculate the multiple twist contributions for a fixed integer $M$, we cannot be sure whether these contributions are absent in EE or not. In the next section, we consider the EE in the effective field theory in the IR limit instead of the explicit computation of the multiple twist contributions. In this limit, we expect these multiple twist contributions become less and less dominant.

\section{Wilsonian RG and entanglement entropy in effective field theories}\label{sec:wilson-rg}
We discussed that the non-Gaussian contributions to EE can be understood in terms of the two-point functions of composite operators. 
As explained in Appendix \ref{app:twist}, a twisted propagator is pinned with loose ends reflecting quantum correlations between two spatial regions. 
From this observation, it is tempting to expect that 
the vertex contributions to EE from composite operators reflect the emergent vertices in the low-energy effective theory.
Indeed, in the framework of the \textbf{Wilsonian RG}\index{Wilsonian renormalization group}, the effective action (EA) changes as the energy scale is changed, and the EA contains infinitely many vertices. 
Thus EE would also follow the same RG flow. This is similar to the statement of the entropic $c$-function, where the central charge $c$ in a $d=2$ CFT or analogous quantities in higher dimensions is defined from EE. However, the usual entropic $c$-theorem is a weak version, i.e. the $c$-function is compared between the UV and IR fixed point CFTs and not in the middle.\footnote{The existing proof of monotonicity of the entropic $c$-function along the RG flow relies on the strong subadditivity of EE. Since this is the monotonicity with respect to the length scale of the subregion, not the UV cutoff scale itself, it is not clear if this really proves the strong version of the $c$-theorem far from the conformal fixed points.} Our formula of EE allows us to compute from the Lagrangian and the QFT is not limited to CFTs. While it is interesting to study the monotonicity of EE using our formula, there are remaining, not evaluated contributions from multiple twists. Therefore, we consider in this section if there is any regime along the RG flow such that these multiple twist contributions are negligible. In this section, we give a conjecture that 
the IR part of EE is exhausted by summing all
the vertex contributions (together with propagator contributions)
constructed from the IR Wilsonian effective action. 
%%%%%%%%%%%%%%%%%%%%%%%%%%%%%%%%%%%%%%%%%
\subsection{More properties of vertex {contributions} to entanglement entropy}
First, %let 
note that  the leading order term of the vertex contribution corresponding to 
composite operators $\{{\cal O}_n\}$ %in Eq.(\ref{EE-derivative}) 
 is perturbatively given by expanding the logarithm as
\aln{
S_{\rm vertex} &= \frac{V_{d-1}}{12} \int  \frac{d^{d-1}k_{\parallel}} {(2 \pi)^{d-1}} \tr \hat\lambda   \hat\Sigma^{(g)} ({\bm k}=0, k_\parallel)
\nn
&= \frac{V_{d-1}}{12} \int  d^{2}{\bm r} \sum_{m,n} (\hat\lambda)_{mn}   \langle{\cal O}_n(-\frac{{\bm r}}{2}, x_\parallel=0) 
{\cal O}_m (\frac{{\bm r}}{2}, x_\parallel=0) \rangle^{(g)} .
}
The integral in the second line reflects the property of a twisted propagator
that its center coordinate is pinned at the boundary $x_\parallel=0$ with two loose ends. 
%For an operator, e.g.,  
{For instance, when ${\cal O}=[\phi^2]$,} the leading perturbative term %in this correlator 
is given by using the renormalized  propagator $G$ of the fundamental field $\phi$ as 
\aln{
S_{\rm vertex} \sim 
\frac{V_{d-1}}{6} \int  d^{2} {\bm r} \left(-\frac{3\lambda_4}{2}\right)  \ G({\bm r},0)^2 .
\label{EE-G2}
}
$-3\lambda_4/2$ is the component of $\hat{\lambda}$ which associates $[\phi^2]$ to $[\phi^2]$.
If we consider an operator such as ${\cal O}=[\phi^n]$, %with a larger $n$, 
the integrand is proportional to $G({\bm r},0)^n$ and decays faster for a larger $n$.  
This means that at least perturbatively, 
higher-dimensional composite operators tend to %provide less contributions 
contribute less to EE. 

Another important point to note in \eqref{EEren},  particularly for its vertex part, is that
if some composite operators in $\hat{\lambda} \hat{G}$ dominates $1$ in the logarithm in a strong coupling region, 
the contribution from the composite operator can be approximated as
\aln{
\mathrm{tr}\,\log (\hat{1}+ \hat\lambda \hat{G}) \sim   \mathrm{tr}\,\log (\hat{G})
} 
%Here we have assumed a simple case that there is no mixing of operators and we do not need to worry about noncommutativity of $\lambda$ and $\Sigma_g$. 
up to a constant depending on the coupling constant. 
%The first term of $\ln (\lambda)$ is just a constant and can be dropped \red{正定値性...?}. %The second term is analogous to the propagator contribution in  Eq.(\ref{e:EEprop}), if $\langle {\cal O}\,{\cal O} \rangle_g$ is identified with $\hat{G}^{-1}$.
Then, EE can be written as a logarithm of renormalized correlators similar to the fundamental field. 
There is no explicit dependence on the coupling constant {other than the overall factor} and
its dependence is only given through the renormalization of correlators. 

%%%%%%%%%%%%%%%%%%%%%%%%%%%%%%%%
\subsection{Wilsonian RG and entanglement entropy: free field theories}
Now we discuss the issue of other contributions to EE besides the propagators and vertices. 
For this purpose, it is convenient to utilize the concept of the 
 \textbf{Wilsonian RG}\index{Wilsonian renormalization group}  to the effective field theory in the IR region \cite{Wilson:1973jj,Polchinski:1983gv}.\footnote{A modern approach for the Wilsonian RG is given by the functional RG, also known as the exact RG, method \cite{Wetterich:1992yh, Morris:1993qb}.}
In the Wilsonian RG, we first divide the momentum domain into low and high regimes.
Schematically, 
\aln{
{%\bm 
	k} \in [0, \Lambda] = [0, e^{-t}\Lambda] + [e^{-t}\Lambda, \Lambda]
}
with $t>0$ and then, integrate quantum fluctuations over the high regimes. 
%A modern approach for the Wilsonian RG is given by the functional RG method \cite{Wetterich:1992yh, Morris:1993qb}.
Then, we rescale the momentum ${%\bm 
	k} \rightarrow {%\bm 
	k^\prime} = e^{t}{%\bm 
	k}$  so that 
${%\bm 
	k^\prime} \in [0, \Lambda]$. In this procedure, the original parameters in the action
are renormalized, e.g., 
\aln{
m \rightarrow m^\prime, \hspace{5mm} \lambda_4 \rightarrow \lambda_4^\prime. 
}
%and,  
In addition, new interaction terms appear, e.g., in {the $\phi^4$ theory in} $(3+1)$ dimensions, %al theory, such as
\aln{
\lambda_6  \frac{\phi^6}{\Lambda^2},  \hspace{5mm}  \lambda_{{\partial}} \frac{(\phi \partial \phi)^2}{\Lambda^2}, 
\hspace{5mm}  \lambda_8  \frac{\phi^8}{\Lambda^4}  \cdots .
}

%%%%%%%%%%%%%%%

First, let us look at what happens for a free theory. For a free scalar field with a mass $m$, EE is simply given by 
\aln{
S_{\text{EE}}(\Lambda) & =- \frac{V_{d-1} }{12} 
 \int^{\Lambda} \frac{d^{d-1}k_{\parallel}} {(2 \pi)^{d-1}}  
  \log \left[ (k_\parallel^2 +m^2) / \Lambda^2 \right] .
  \label{e:EEprop-free}    
 }
 By integrating the high momentum region, nothing happens except fluctuations of that region
 are discarded: 
 \aln{
 S_{\text{EE}} (e^{-t}\Lambda) & =- \frac{V_{d-1} }{12} 
 \int^{e^{-t}\Lambda} \frac{d^{d-1}k_{\parallel}} {(2 \pi)^{d-1}}  
  \log \left[ e^{2t}(k_\parallel^2 +m^2) / \Lambda^2 \right] .
 }
 Then, we rescale the momentum as $k^\prime = e^t k$ to obtain 
 \aln{
  S_{\text{EE}}^\prime (\Lambda)  =- 
 \frac{V_{d-1} }{12} 
 \int^{\Lambda} \frac{d^{d-1}k^\prime_{\parallel}}{e^{(d-1)t} \times (2 \pi)^{d-1}}  
  \log \left[ (k^{\prime 2}_\parallel + e^{2t} m^{ 2}) / \Lambda^2 \right] .
  \label{EE-IR-free}
 }
Of course, for a free field theory, it is equal to \eqref{e:EEprop-free} with the integration range $[0, e^{-t}\Lambda]$. 
For an interacting theory, it is different since high and low momentum {modes} are entangled. 
We continue the integration over high momentum modes until $e^{-t}\Lambda = m$. 
Then, EE is given by \eqref{e:EEprop-free} with the integration range $[0,  m]$.
It gives the  IR part of the EE at the scale $m$, and
the discarded parts in higher momentum are  UV cut-off dependent.  
By performing the momentum integration, 
{the EE at the scale $m$} is now given by 
\aln{
  S_{\text{EE}}^{\rm IR} (m) 
  	& {\equiv - \frac{V_{d-1} }{12} 
  	\int^{m} \frac{d^{d-1}k_{\parallel}} {(2 \pi)^{d-1}}  
  	\log \left[ (k_\parallel^2 +m^2) / \Lambda^2 \right] } \\
  & =
 \frac{N_{\rm eff} V_{d-1} }{12}
  \,  m^{d-1}\log \left[  \tilde\Lambda^2/ m^2 \right] ,
  \label{EE-IR-free2}
 }
where $\tilde{\Lambda}$ is proportional to the  UV cutoff 
as $\tilde{\Lambda}= \Lambda \exp[\Phi(-1,1,\frac{d+1}{2})/2] /\sqrt{2}$. $\Phi(z,s,\alpha)\equiv \sum_{n=0}^\infty z^n/(n+\alpha)^s$ is the Lerch transcendent.
For example, in $d=3$, it is given  by $\tilde{\Lambda}={e^{1/2} \Lambda}/{2}$. 
%{for even space-time dimensions. }
{\eqref{EE-IR-free2} coincides with the ordinary universal term in even spacetime dimensions.} 
 $V_{d-1}$ is the area of the boundary and 
\aln{
 N_{\rm eff}=  \left(\frac{1}{2}\right)^{d-1} \frac{1}{ \pi^{(d-1)/2} \Gamma((d+1)/2)} 
} 
is the effective {number of} degrees of freedom that can contribute to  EE {in the IR}. %, and 
%{we have absorbed some constants into $\tilde\Lambda$.
%\red{For $d=3$, it is given as $\tilde{\Lambda}={e^{1/2} \Lambda}/{2}$.} 
The result of \eqref{EE-IR-free2} 
%shows
{indicates} that the universal part of EE originates in the quantum correlations
of fields whose length scale is larger than the typical correlation length $\xi=1/m$ of the system.
$S_{\text{EE}}^{\rm IR} (m)$ becomes larger for smaller masses $m$. %}
%%%%%%%%%%%%%%ここまで%%%%%%%%%%%%%%%%%%%%%%%%%%%%%%%%%%%%%%%%%%%%%
%%%%%%%%%%%%%%%%%%%%%%%%%%%%%%%%
\subsection{Wilsonian RG and entanglement entropy: interacting field theories}
In the free case, the Wilsonian RG  can extract the {IR} behavior of EE that is independent of the UV
cutoff. 
In the Wilsonian RG,  quantization is gradually performed from high momentum to low, and
 in the IR limit, all fluctuations are integrated out so that
all the loop effects are incorporated in  the \textbf{Wilsonian effective action (EA)}\index{Wilsonian effective action}\index{effective action}\index{EA |see effective action}.\footnote{Strictly speaking, the EA explained here is a 1PI EA instead of Wilsonian EA. The tree-level diagrams gives the exact free energy in the former while we need to still perform the path integral in the latter. Nevertheless, the difference must be negligible in the IR limit since we are considering massive QFTs and they always have a finite IR cutoff. See~\cite{Burgess:2007pt} for further details.}
{The Wilsonian EA} becomes more and more complicated as radiative corrections are gradually taken into account. 
Thus we can expect that all the contributions to EE are encoded in the Wilsonian EA.
We conjecture that EE is given by a sum of all the {propagator and} vertex contributions in the Wilsonian EA.  

%%%%%%%%%%%%%%%%%%%%%%%%%%%%%%%%%%%%%%%%%%  
\begin{figure}[t]
\centering
\hspace*{-1.2cm}
\includegraphics[width=\linewidth]{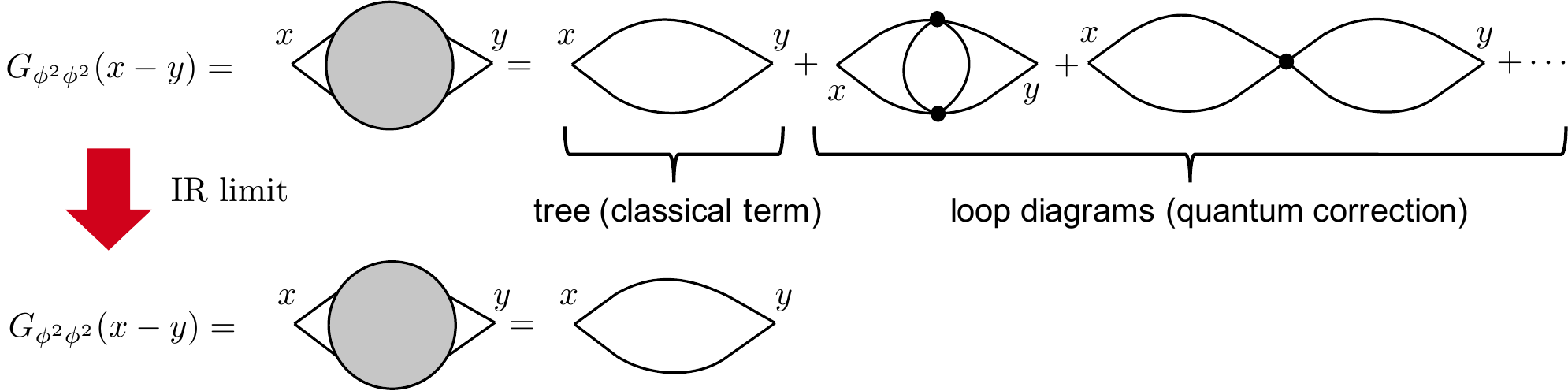}%{ambiguity2_ver2.pdf}
\caption{
The upper figure shows Feynman diagrams constituting $G_{\phi^2\phi^2}$.
Only the leading diagram survives in the IR limit. 
}
\label{f:vertexWilson}
\end{figure}
%%%%%%%%%%%%%%%%%%%%%%%%%%%%%%%%%%%%%%%%%%
In the following, we focus on the IR limit of the Wilsonian EA. 
Let us recall a simple case in \eqref{e:EEcomp2}. 
The correlator $G_{\phi^2\phi^2}$ is graphically given by the upper figure of 
Fig.\ref{f:vertexWilson} and
the first term is  given by \eqref{EE-G2}. 
This diagram is present even in the IR limit where all the fluctuations are integrated out since
it is simply connected by the propagators of the fundamental field. 
The other terms vanish in the IR limit of the Wilsonian RG
since they are quantum corrections to the first classical term. 
After all the fluctuations are integrated out, further quantum corrections should be absent because
such effects are already absorbed in the Wilsonian EA. 
%Thus we expect that  the vertex contributions of, e.g.,  the composite operator
%$\phi^2$ are 
Thus we expect that the vertex contributions of the composite operator, e.g. $\phi^2$, are drastically simplified in the IR limit in which we can 
replace the Green function $G_{\phi^2\phi^2}$ by the leading diagrams as shown in the lower figure of Fig.\ref{f:vertexWilson}.
After all, the vertex contribution in \eqref{e:EEcomp2} becomes 
\aln{
S_{\text{vertex}}= 
		 - \frac{V_{d-1}}{12}
		\int^{\Lambda}\frac{d^{d-1}k_\parallel}{(2\pi)^{d-1}}\mathrm{log}\,\left[1 + 
		3 \lambda_4  
		%\int d^2 {\bm r} G({\bm r},0)^2
	       \int\frac{d^{d+1}p}{(2\pi)^{d+1}}G(p)G(-\bm{p},k_\parallel-p_\parallel)   
		%\phi^2 \phi^2のcontractionから2通り
		\right] 
		\label{e:EEvertexRG}
}
in the IR limit. The coupling constant $\lambda_4$ is the renormalized one since it is a coefficient of 
Wilsonian EA in the IR limit.\footnote{
We have already taken quantum fluctuations into account and eliminated the UV divergences in coupling constants and observables in the IR limit, but another UV divergences appear in the calculation of EE since 
we need to sum all the momentum modes. It is also necessary even for the free theory 
and indeed we extracted the IR universal part by subtracting cutoff dependent terms. }
As in the free case, we separate the vertex contributions into  %universal and non-universal parts
{IR and UV parts}.
The IR part is defined similarly by restricting the integration range from $k_\parallel \in [0, \Lambda]$
to $[0, m]$.  
Instead, we may integrate  up to $1/\xi_{\phi^2}$
where  $\xi_{\phi^2}$ is the correlation length of the operator $[\phi^2]$. 
The difference is a matter of definition of the IR universal part of EE and we need a 
precise prescription to subtract the cutoff dependent terms in EE. 
For example, we may  take a variation with respect to the mass $m$ and then integrate to obtain the universal 
part of EE. In this definition, we need to know how $\xi_{\phi^2}$ and $m$ are related. 
%It is now under investigation.
{This issue should be explored in future.}

In general, of course, we need to take operator mixings into account
but the generalization is straightforward. 
%We will investigate more detailed behaviors of EE in the infrared limit of the Wilsonian EA for a concrete model.
The final question is whether there are contributions to EE other than the  vertex contributions in the Wilsonian EA. 
In the formulation of EE based on the $\mathbb{Z}_M$ gauge theory on Feynman diagrams, vertex contributions are only a part of all the contributions to EE. But, in the IR limit of Wilsonian RG, all the quantum fluctuations are integrated out and  we would not need to evaluate loop diagrams: all the Feynman diagrams are tree diagrams. 
Thus, the vertex contributions, as well as the propagator contributions, to EE must suffice {for} the IR behavior of EE.

\chapter{Holography, AdS/CFT, and AdS/BCFT}\label{ch:2}
\renewcommand{\thesection}{\thechapter.\arabic{section}}
%!TEX root = ../thesis.tex
%*******************************************************************************
%****************************** Second Chapter *********************************
%*******************************************************************************

%Entanglement in holography
%\chapter{My second chapter}

\ifpdf
    \graphicspath{{Chapter2/Figs/Raster/}{Chapter2/Figs/PDF/}{Chapter2/Figs/}}
\else
    \graphicspath{{Chapter2/Figs/Vector/}{Chapter2/Figs/}}
\fi

\renewcommand{\thesection}{\thechapter.\arabic{section}}
\setcounter{section}{0}
\textit{This chapter reviews holography from a bottom-up perspective by inspecting both sides. The purpose of this chapter is to make this dissertation self-contained about holography with a sufficient number of references and let readers prepare for the next chapter.
Starting from the original proposal of the AdS/CFT correspondence, we review the bottom-up AdS$_3$/CFT$_2$ correspondence and the AdS/BCFT correspondence by adding a boundary. We also review holographic entanglement entropy formula for each. Useful reviews on the AdS/CFT correspondence include~\cite{Kaplan:2016,Banerjee:2018,1020000782220175744,Harlow:2014yka,Hartman:2015}. We review CFT and some results are presented without a derivation to avoid unnecessary complications. See \cite{Di_Francesco_1997} for more details; in particular, \cite{BB29063466} for higher-dimensional CFTs and \cite{BB19618269} for two-dimensional CFTs.}\\

In this chapter, we provide an extensive review of holography, mainly focusing on the bottom-up approaches by inspecting the gravity side and field theory side, respectively. 
Section \ref{sec:overview-hol} gives the outline of holography. It also explains the original proposal by Maldacena. While his conjecture is top-down, we avoid using string theory as much as possible and only discuss the parameter dependence and its consequence.
Section \ref{sec:AdS-spacetime} and Section \ref{sec:asympt-AdS} review various asymptotically anti-de Sitter (AdS) spacetimes. In particular, the coordinate transformations among them will be important in the next chapter.
Section \ref{sec:CFT-gen} reviews conformal field theory (CFT) in general dimensions.
Based on the last three sections, Section \ref{sec:AdS-CFT-general-dim} presents the (traditional) AdS/CFT correspondence between the large-$N$ gauge theory and Einstein gravity on AdS. The topics included here are: the dictionary between these two; some examples of deformations; the parameter dependence of operators; the reason why interactions are negligible in AdS; the geodesic approximation to the two-point function of heavy operators; holographic method for stress tensor; holographic entanglement entropy. The last three subsections Section \ref{sec:geodesic-approx}, \ref{sec:hol-EM-FG}, and \ref{sec:HEE-CFT} are relevant to Chapter \ref{ch:2-2}.
Next, in Section \ref{sec:AdS3/CFT2}, we focus on the AdS$_3$/CFT$_2$ case and provide explanations on why these dimensions are special and important. Features we discussed are: uniqueness of three-dimensional gravity solution; all solutions with a constant negative cosmological constant are locally AdS; eliminating higher curvature corrections; symmetry enhancement in CFT$_2$; conformal blocks; operator product expansion (OPE) with the stress tensor; replica trick in CFT$_2$; first law of entanglement entropy. Various results from this section, in particular the CFT part, will be used in Chapter \ref{ch:2-2}.
Then, in Section \ref{sec:hol-CFT2} we define two-dimensional holographic CFTs, which are often characterized differently compared to higher-dimensional cases (Section \ref{sec:def-holCFT}). Then, in the second subsection Section \ref{sec:detail-holCFT}, we discuss its consequence in more detail. This involves the result of the semiclassical conformal blocks, which will be used in Chapter \ref{ch:2-2}. Furthermore, the Ryu-Takayanagi formula is demonstrated in a simple example in the last subsection Section \ref{sec:HEE-2dCFT}.
The last two sections are applications and generalizations of the AdS/CFT correspondence, which will be very important in Chapter \ref{ch:2-2}. In Section \ref{sec:local-op-quench}, we review the gravity dual of the local operator quench in CFT$_2$. This gives us the foundation for the discussion in Chapter \ref{ch:2-2}. We avoid using CFT techniques, instead, we derive the black hole threshold and explain the reason why heavy operators are dual to black hole from thermodynamics, in particular, the eigenstate thermalization hypothesis (ETH).
In Section \ref{sec:AdS-BCFTgen}, we explain the AdS/BCFT correspondence, i.e. the AdS/CFT correspondence with boundaries. We review the boundary CFT (BCFT) briefly in the first subsection Section \ref{sec:BCFT} and then we define the AdS/BCFT correspondence (Section \ref{sec:AdS-BCFT-state}), where the end-of-the-world (EOW) brane is introduced. Finally, in Section \ref{sec:HEE-BCFT}, we discuss how the RT formula is extended in this case. The remaining last subsection addresses some known subtleties regarding the AdS/BCFT correspondence.
In Appendix \ref{app:worldline}, we show QFT two-point function can be written as the transition amplitude of a relativistic point particle.
In Appendix \ref{app:geodesic-approx}, we explicitly confirm the geodesic approximation gives the correct CFT two-point function.
In Appendix \ref{app:grav-action}, we present a standard AdS gravity action including the boundary term and counterterms. We discuss the meaning of counterterms in the context of holographic renormalization.
In Appendix \ref{app:conf-ward}, we derive the conformal Ward identity. 
In Appendix \ref{app:replica-calc-EM}, we calculate the energy-momentum tensor in the replicated space.

\section{Holography -- overview}\label{sec:overview-hol}
\subsection{Generalities}
\textbf{Holographic principle}\index{holographic principle}, or in short \textbf{holography}\index{holography}, is a duality between a $(d+1)$-dimensional quantum gravity defined on a manifold $\Sigma$ and a $d$-dimensional QFT on its boundary $\partial\Sigma$~\cite{tHooft:1993dmi,Susskind:1994vu}.\footnote{\textbf{Quantum gravity}\index{quantum gravity} here means a perturbative quantum theory of gravity on a fixed background like a perturbative string theory. We are still yet to achieve a nonperturbative formulation of quantum gravity, in which the background independence of quantum gravity is manifest.} The gravitating spacetime is often called a \textit{bulk}\index{bulk} while the dual QFT is said to live on the \textit{boundary}\index{boundary}. It simply means the dual QFT is defined on a fixed background spacetime, which coincides with the boundary of the bulk. 
Holography is motivated by the study of black hole, where its degrees of freedom counted by entropy comes from its area rather than its volume. This is known as the \textbf{Beckenstein-Hawking entropy}\index{Beckenstein-Hawking entropy}. We will briefly revisit this using a microscopic analysis based on holography in Section \ref{sec:ETH-Cardy}. For more details, see~\cite{BHentropy} and references therein.

A concrete realization was provided first by Maldacena, now known as the \textbf{AdS/CFT correspondence}\index{AdS/CFT correspondence} or \textbf{gauge/gravity duality}\index{gauge/gravity duality}~\cite{Maldacena:1997re,Gubser:1998bc}.\footnote{See \cite{Aharony:1999ti,Witten:1998qj} and \cite{Klebanov:2000me} for early-time reviews and lecture notes.} In the AdS/CFT correspondence, a gravitational theory (e.g. string theory) on $(d+1)$-dimensional asymptotically anti-de Sitter (AdS) spacetime is dual to a $d$-dimensional CFT on the boundary.\footnote{\textit{Caveat}: Recently, people found an ensemble average of \textit{many} theories %which are CFTs in low energy 
is dual to the \textbf{Jackiw-Teitelboim (JT) gravity}\index{Jackiw-Teitelboim gravity} in a two-dimensional (nearly) AdS spacetime. Without averaging, we do not have a semiclassical geometric dual~\cite{Saad:2021rcu}. Some people point out we need some type of random CFTs for pure AdS${}_3$ gravity~\cite{Cotler:2020ugk,Chandra:2022bqq,Belin:2020hea}. % on the torus times interval 
In this dissertation, we do not consider any ensembles.} 
The original proposal by Maldacena states that \textbf{$\boldsymbol{\mathcal{N}=4}$ super-Yang-Mills (SYM)}\footnote{$\mathcal{N}=4$ means we have four spinor supercharges.} (i.e. non-Abelian gauge) \textbf{theory}\index{$\mathcal{N}=4$ super-Yang-Mills theory} with the $SU(N)$ gauge group in four dimensions is dual to \textbf{type-IIB superstring theory on AdS${}_5\times$S${}^5$}\index{superstring theory}\index{string theory |see superstring theory }.\footnote{The vanishing conformal anomaly of the string worldsheet CFT requires the \textbf{Ricci-flat condition}\index{Ricci-flat condition}. This is not contradicting with having AdS since the Ricci-flatness is imposed for a non-compact six-dimensional cone over the five-dimensional compactified space~\cite{Govindarajan:2022}.}
Later, the original proposal has been extended to various dimensions, collectively denoted as the \textbf{AdS${}_{d+1}$/CFT${}_d$ correspondence}. In this chapter, we mainly focus on the lower dimensional case, namely, AdS${}_3$/CFT${}_2$ correspondence and its extension. The readers may wonder why we consider such a low-dimensional case although our spacetime dimension is four. In fact, this low-dimensional setup leads to various simplifications retaining rich, nontrivial spacetime structures. It is a good playground to investigate holography in detail and guess higher-dimensional extensions.

We will first %sketch 
give a brief overview of %the derivation of 
the original proposal by Maldacena. Then, we introduce the AdS${}_3$/CFT${}_2$ correspondence and illustrate its difference from the simplification. Finally, we consider its extension by introducing boundaries.

\subsection{Maldacena's proposal}
In this section, we will briefly review the original proposal of the AdS/CFT correspondence by Maldacena~\cite{Maldacena:1997re}.\footnote{Here we focus on the limits of parameters and do not follow the actual derivation from D-branes\index{D-branes} as they do not play an important role in our discussion for the AdS${}_3$/CFT${}_2$ correspondence.} The CFT is given by $\mathcal{N}=4$ $SU(N)$ SYM\index{$\mathcal{N}=4$ super-Yang-Mills theory} in four dimensions, whose gauge field part of the action is given by
\begin{align}
    \mathcal{L}_{SYM} &=-\frac{1}{2g_{YM}^2} F_{\mu\nu a} F^{\mu\nu}_b \tr [t^a t^b ]\\
    &=-\frac{1}{4g_{YM}^2} F_{\mu\nu a} F^{\mu\nu}_a. 
\end{align}
This is conjectured to be dual to type-IIB superstring\index{type-IIB superstring theory} on AdS${}_5\times S^5$, whose worldsheet action (in the Nambu-Goto form) is given by
\begin{equation}
    I_\text{string}=-\frac{1}{2\pi\alpha^\prime}\int\dd[2]{\sigma} \sqrt{-h},
\end{equation}
where $h$ is the determinant of the worldsheet metric and $\sigma$'s are the worldsheet coordinates. The prefactor $1/(2\pi\alpha^\prime)\equiv m_s^2 /(2\pi)$ equals the string tension\index{tension (string)}. The gravitational sector of its ten-dimensional effective action is given by
\begin{equation}
    I_{SUGRA}=\frac{1}{(2\pi)^7 \alpha^{\prime\, 4}}\int\dd[10]{x} \sqrt{-g_{10}} e^{-2\phi} \left[\mathcal{R}_{10} +O(\alpha^\prime)\right],
    \label{eq:sugra}
\end{equation}
where $g_{10}$ is the ten-dimensional metric determinant from the \textbf{supergravity}\index{supergravity} solution and its Ricci scalar is given by $R_{10}$. $\phi$ is the dilaton\index{dilaton} field, whose vacuum expectation value $\phi_0$ gives string coupling constant $g_s \sim e^{\phi_0}$. In this AdS${}_5$/CFT${}_4$ case, which is derived from D3-branes, the on-shell dilaton value is constant over AdS~\cite{Aharony:1999ti}. Thus, it can be treated as a true constant independent of coordinates. $O(\alpha^\prime)$ contains higher curvature terms like $\mathcal{R}_{10}^2$.

%Since the stringy effect is suppressed in the planar limit, the description in the string theory side can be well approximated by supergravity, in which the Newton's constant $G_N\equiv \kappa^2 \sim (g_s \alpha^{\prime\, 2})^2$. The AdS spacetime appears as a solution in supergravity. In the next subsection, we will briefly explain why the AdS spacetime shows up by considering the string dynamics near the so-called D-branes.

%\subsection{Why AdS?}
%To understand why the gauge theory is dual to the string theory on AdS${}_5\times S^5$ in Maldacena's orignal proposal, we need to introduce D-branes~\cite{Dai:1989ua,Horava:1989ga}.\footnote{Note that this is just to explain the original proposal. In the lower dimensions like AdS${}_3$/CFT${}_2$, symmetry rules them all and the correspondence can be explained without bringing D-branes.} See~\cite{Duff:1999rk,Polchinski:1996na,1997KJ00004709097,BA78793344,Hashimoto:2009,BB23240886,BA56959443} for reviews. A D$p$-brane is a spatially $p$-dimensional soliton in string theory. An open string can end on the D$p$-brane with the Dirichlet condition fixing its location for the rest codimension $p+1$. The vacuum expectation value of the dilaton $\phi$ in the effective D-brane action (known as the Dirac-Born-Infeld action) is related to the string coupling constant as $g_s\sim e^{\phi}$. Thus, the mass density of the D-brane is proportional to $1/g_s$.\footnote{This is why the D-branes are often referred as nonperturbative objects.} Thus, when $g_s$ is very small, the D-brane is treated as an non-dynamical, heavy source.

To gain an intuition of why this correspondence can hold, one very enlightening piece of evidence %for this duality 
is symmetry. The $\mathcal{N}=4$ SYM is invariant under the \textbf{conformal group}\index{conformal group} $SO(4,2)$ as we can see from its vanishing $\beta$-function\index{$\beta$-function} in the 't Hooft limit (explained later)
\begin{align}
    \beta(g_{YM})& \equiv \mu \dv{g_{YM}}{\mu}\\
    &=-\frac{g_{YM}^3 N}{48\pi^2}\left(11-2n_f -\frac{1}{2}n_s\right)\\
    &=0, \quad (\because n_f=4,\ n_s=6)
\end{align}
where $n_f$ is the number of Weyl fermions and $n_s$ is the number of real scalars.
AdS${}_5$ also exhibits the same symmetry. It is the only solution with the $SO(4,2)$ isometry, in which the metric does not change locally under its group. %The additional extra dimensions come from supersymmetry. 
Besides, we need a five-dimensional compactified spacetime in addition for a consistent string theory. $S^5$ has a global symmetry $SO(6)$. This is reflected in the fact that $\mathcal{N}=4$ SYM has a global $SU(4)\simeq SO(6)$ $R$-symmetry\index{$R$-symmetry}.\footnote{These symmetries combined with 16 supercharges constitute a superconformal group $PSU(2,2|4)$.}

This correspondence is achieved by taking two limits, \textbf{the large-$
\bm{N}$ ('t Hooft) limit}~\cite{tHooft:1973alw}\index{large-$N$ limit}\index{'t Hooft limit} and the \textbf{large 't Hooft coupling}\index{'t Hooft coupling} $\lambda\equiv g_{YM}^2 N=g_s N$. In the large-$N$ limit, it suppresses the quantum effect due to the worldsheet branching controlled by $g_s$ in string theory (gravity) while it suppresses non-planar diagrams, which cannot be put on a sphere, in gauge theory.\footnote{This limit is also good for D-branes\index{D-branes}. Since their mass is proportional to $1/g_s$, they become non-dynamical as $g_s\rightarrow 0$.} This is a saddle-point approximation valid under $N\rightarrow \infty$ or $g_{YM}\rightarrow 0$ with $\lambda$ fixed, suppressing the higher-genus terms. In the large 't Hooft coupling limit, it suppresses the massive modes peculiar to string oscillations controlled by $\alpha^\prime$ in both classical and quantum string theory\footnote{Since $\alpha^\prime=l_s^2$, a square of string length is dimensionful, its suppression must be compared to some characteristic length scale like the curvature radius of spacetime. In this case, this is given by the \textbf{AdS radius}\index{AdS radius} $R$, which appears from supergravity. See the next footnote.} while it leads to a strongly-coupled ($\lambda\rightarrow \infty$) theory in gauge theory, in which diagrams with many vertices nonperturbatively contribute.\footnote{In string theory, this corresponds to the low energy limit for an observer at \textit{infinity}. Near a stack of D3-branes, the supergravity solution becomes AdS. All modes are blueshifted as they approach the center of AdS due to its hyperbolicity. This results in the full string theory in AdS.} By taking these two limits simultaneously, it suggests a duality between semiclassical gravity like Einstein gravity and strongly-coupled large-$N$ gauge theory. 

Indeed, these two limits lead to the semiclassical approximation of Einstein gravity. Since any higher curvature contributions from the $\alpha^\prime$ correction are suppressed in the large $\lambda$ limit and the dilaton is constant, the effective supergravity\index{supergravity} action \eqref{eq:sugra} becomes \textbf{Einstein gravity}\index{Einstein gravity}
\begin{equation}
    \frac{1}{16\pi G_N^{(10)}}\int \dd[10]{x}\sqrt{-g_{10}} \mathcal{R}_{10},
    \label{eq:sugra-2}
\end{equation}
where the ten-dimensional Newton's constant\index{Newton's constant} scales as $G^{(10)}_N\sim g_s^2 \alpha^{\prime\,4}$. Next, let us consider its dimensional reduction to five dimensions. The ten-dimensional metric (near the D3-branes) is AdS${}_5\times S^5$:
\begin{equation}
    ds^2_{10} = \frac{\bar{r}^2}{R^2}\eta_{\mu\nu}dx^\mu dx^\nu +\frac{R^2}{\bar{r}^2}d\bar{r}^2 +R^2 d\Omega^2_5,
    \label{eq:AdS5}
\end{equation}
where $\eta_{\mu\nu}=\mathrm{diag}(-1,1,1,1)$ and the AdS radius is related to the parameters of the theory as $R=\sqrt[4]{\lambda \alpha^{\prime\, 2}}$ . Since the radius of the compactified $S^5$ is $R$, the gravity action \eqref{eq:sugra-2} after the compactification is
\begin{equation}
    \frac{R^5 \Omega_5}{16\pi G_N^{(10)}}\int \sqrt{-g_5} (\mathcal{R}_5 +\cdots),
\end{equation}
where the five-dimensional Newton's constant is $G_N^{(5)}\equiv G_N^{(10)}/(R^5 \Omega_5)=g_s^2 \alpha^{\prime\,4}/(R^5 \Omega_5)$. Since
\begin{equation}
    \frac{G_N^{(5)}}{R^3}  \sim g_s^2 \left(\frac{\alpha^{\prime\, 2}}{R^4}\right)^2 = \frac{1}{N} \quad \left(\because \frac{R^4}{\alpha^{\prime\, 2}}=\lambda=g_s N \right) ,
\end{equation}
the large-$N$ limit is equivalent to the valid \textbf{semiclassical approximation}\index{semiclassical} $G_N^{(5)}\ll R^3$.

The AdS/CFT correspondence is very powerful and promising as we could investigate a very difficult parameter regime in gravity such as quantum gravity correction from a finite-$N$ analysis and highly stringy correction from a perturbative analysis.
For further information, refer to to~\cite{Aharony:1999ti} and related literature.

If one does not restrict the bulk theory to the Einstein gravity, the dual CFTs are not necessarily strongly coupled; an ensemble average of free, Narain CFTs\index{Narain CFTs} and the $U(1)$ gravity~\cite{Afkhami-Jeddi:2020ezh,Maloney:2020nni}\index{$U(1)$ gravity}; higher-spin gravity\index{higher-spin gravity} in AdS${}_4$ (Vasiliev theory\index{Vasiliev theory}) and $O(N)$ vector model~\cite{Klebanov:2002ja}\index{$O(N)$ vector model}.\footnote{It is worth noting that the three-dimensional Vasiliev theory is dual to the $W_N$ minimal model~\cite{Gaberdiel:2010pz}\index{$W_N$ minimal model}.}

%In particular, the large $N$ limit~\cite{tHooft:1973alw} of

%In the following, we review the AdS spacetime and CFT respectively, then introduce some important relations in the AdS${}_3$/CFT${}_2$ correspondence.

\section{Anti-de Sitter (AdS) spacetime}\label{sec:AdS-spacetime}
In this section, we define the AdS spacetime and introduce some useful choices of its coordinate systems. \textbf{AdS}\index{anti-de Sitter spacetime}\index{AdS} is the maximally symmetric\index{maximally symmetric} spacetime with constant negative curvature. We will review what ``maximally symmetric'' means later in Section \ref{sec:AdS3}. To see the global nature of AdS, it is useful to employ the \textbf{embedding formalism}\index{embedding formalism}, i.e. embedding AdS into one higher dimension. AdS${}_{d+1}$ is a codimension-one hypersurface in a $(d+2)$-dimensional flat spacetime with two temporal directions $\mathbb{R}^{2,d}$:
\begin{align}
    ds^2 & = -dX_0^2 - dX_{d+1}^2 + \sum_{i=1}^d dX_i^2 \label{eq:AdS-metric}\\
    -R^2 & = -X_0^2 - X_{d+1}^2 + \sum_{i=1}^d X_i^2,
\end{align}
where $R$ is the AdS radius\index{AdS radius}, which is related to the Ricci scalar $\mathcal{R}$ and the cosmological constant $\Lambda$ as\footnote{We omit the dimension index of curvatures hereinafter.}
\begin{equation}
    \mathcal{R}=-\frac{d(d+1)}{R^2}=\frac{2(d+1)}{d-1} \Lambda,
\end{equation}
which is indeed a negative constant.
The embedding coordinates are useful to compute coordinate-independent quantities such as geodesic length (Appendix \ref{sec:Embedd}).
After the Wick rotation $X_{d+1}^{(E)}\equiv iX_{d+1}$, the Euclidean AdS\index{Euclidean AdS} is a hyperbolic space $H^{d+1}$, a space with the constant negative curvature $\mathcal{R}$. $H^{d+1}$ is invariant under rotation $SO(1,d+1)$. Back to the original signature, AdS is invariant under $SO(2,d)$. Its generators are given by $L^A_B=X^A\pdv{X_B}-X^B\pdv{X^A}\ (A,B=0,\cdots,d+1)$.

The AdS asymptotic boundary\index{asymptotic boundary} is obtained by sending $X^A$'s homogeneously to infinity. More precisely, it is given by the $\epsilon\rightarrow 0$ limit with a fixed null projective cone\index{null projective cone}, whose coordinates are $P^A\equiv \epsilon X^A$.
It is convenient to introduce various coordinate systems so that the location of the asymptotic boundary becomes more intuitively understandable and the boundary topology is fixed.
\subsection{Global coordinates}
In \textbf{global coordinates}\index{global coordinates}, the embedding coordinates are parametrized as
\begin{equation}
    \begin{split}
        X_0 &= R \cosh \tilde{\rho} \cos \tau\\
        X_{d+1} &= R \cosh \tilde{\rho} \sin \tau\\
        X_i &= R \sinh\tilde{\rho}\, \Omega_i \quad (i=1,\cdots,d),
    \end{split}
    \label{eq:global-coords}
\end{equation}
where $\Omega_i$ is the spherical coordinate of $S^{d-1}$ and $\sum_{i=1}^d \Omega_i^2 =1$. To cover the entire hyperboloid once, we need $\rho\ge 0$ and $\tau\in [-\pi,\pi)$. However, we consider the universal covering of AdS by taking $\tau\in (-\infty,\infty)$ to obtain a causal spacetime, avoiding closed timelike curves. This is commonly referred to as \textbf{global AdS}\index{global AdS}.

The AdS metric \eqref{eq:AdS-metric} in global coordinates \eqref{eq:global-coords} is 
\begin{equation}
    ds^2=R^2\left(-\cosh^2\tilde{\rho} d\tau^2+d\tilde{\rho}^2 + \sinh^2\tilde{\rho} d\Omega^2_{d-1}\right).
\end{equation}
The asymptotic boundary is at $\tilde{\rho}\rightarrow \infty$, where the metric diverges. The boundary topology is $\mathbb{R}_\tau \times S^{d-1}$.

The constant-$\tau$ surface is a hyperbolic space. Defining $d\Theta=d\tilde{\rho}/\sinh\tilde{\rho}$, i.e. $\tan(\Theta/2)=\tanh(\tilde{\rho}/2)$, we can put the entire AdS on a finite disk. This is known as the \textbf{Poincar\'e disk}\index{Poincar\'e disk} representation.

It is also common to reparametrize $\tilde{\rho}$ by $r=R\sinh\tilde{\rho}$. 
The corresponding embedding coordinates are
\begin{equation}
    \begin{split}
        X_0 &= \sqrt{R^2+r^2} \cos \tau\\
        X_{d+1} &= \sqrt{R^2+r^2}  \sin \tau\\
        X_i &= r\, \Omega_i \quad (i=1,\cdots,d).
    \end{split}
    \label{eq:global-coords2}
\end{equation}
The transformed metric is
\begin{equation}
    ds^2=-(r^2+R^2) d\tau^2 + \frac{R^2}{r^2+R^2}dr^2 + r^2 d\Omega^2_{d-1}.
    \label{eq:global-static-coords}
\end{equation}
This is sometimes called \textbf{static coordinates}\index{static coordinates}. The asymptotic boundary is given by $r\rightarrow\infty$.

\subsection{Conformal coordinates}
Just like the Penrose diagram of Minkowski spacetime, we can put the entire AdS to a finite region by $r=\tan \tilde{\theta}$ $(0\le\tilde{\theta}<\pi/2)$. The metric after this coordinate transformation is
\begin{equation}
    ds^2=\frac{1}{\cos^2\tilde{\theta}}(-d\tau^2+d\tilde{\theta}^2+\sin^2\tilde{\theta} d\Omega^2_{d-1}).
\end{equation}
These are called \textbf{conformal coordinates}\index{conformal coordinates}. The AdS boundary is now located at a finite distance, $\tilde{\theta}=\pi/2$. After a local conformal transformation removing the Weyl factor $\frac{1}{\cos^2\tilde{\theta}}$, the resulting topology is that of a solid cylinder, whose boundary is again $\mathbb{R}_\tau \times S^{d-1}$. This means AdS acts like a finite box with a spatial boundary. It makes AdS special compared to Minkowski spacetime, which is an infinite box with null infinities. For example, a large Schwarzschild black hole in AdS has a positive heat capacity and is thermodynamically stable while it is not in Minkowski; most importantly, AdS has a spatial boundary where we can define a physically reasonable `clock' and the dual field theory. A Cauchy surface is not enough as initial data to describe the entire AdS; the boundary condition at its asymptotic boundary must be specified.

\subsection{Poincar\'e coordinates}
We can parametrize the embedding coordinates differently. In \textbf{Poincar\'e coordinates}\index{Poincar\'e coordinates}, they are expressed as
\begin{equation}
    \begin{split}
        X_0 &= R \frac{\alpha^2+z^2+\bm{x}^2-t^2}{2\alpha z}\\
        X_{d+1} &= R \frac{t}{z}\\
        X_{\bar{i}} &= R \frac{x^{\bar{i}}}{z} \quad (\bar{i}=1,\cdots,d-1) \\
        X_d & = R \frac{-\alpha^2+z^2+\bm{x}^2-t^2}{2\alpha z},
    \end{split}
    \label{eq:poincare-coords}
\end{equation}
where $\bm{x}^2=\sum_{\bar{i}} (x^{\bar{i}})^2$ and $x^{\bar{i}}$ is Cartesian coordinates. $\alpha$ parametrizes a particular AdS isometry $SO(2,d)$. We have already seen Poincar\'e coordinates in a slightly different form. \eqref{eq:poincare-coords} is equivalent to the first five-dimensional part of \eqref{eq:AdS5} by rewriting $\tilde{r}=R^2/z$.

Since $X_0-X_d=R\alpha/z\ge 0$, Lorentzian Poincar\'e AdS only covers half of the global one. $t\rightarrow \pm \infty$ slices are causal horizons\index{causal horizon}. Note that despite Lorentzian Poincar\'e AdS does not cover the entire global AdS, Euclidean Poincar\'e AdS ($t\rightarrow t_E=it$) is the same space as Euclidean global AdS ($\tau\rightarrow \tau_E=i\tau$)~\cite{Aharony:1999ti}.

The Poincar\'e AdS has the metric
\begin{equation}
    ds^2= R^2\frac{-dt^2+d\bm{x}^2+dz^2}{z^2}.
    \label{eq:poincare-metric}
\end{equation}
The asymptotic boundary is now located at $z\rightarrow 0$. The boundary topology is $\mathbb{R}^{1,d-1}$. This metric is particularly useful as various geometric calculations are simplified.

In Poincar\'e coordinates, the scale invariance is manifest. The metric \eqref{eq:poincare-metric} is invariant under
\begin{equation}
    t\rightarrow \lambda t;\ \bm{x}\rightarrow \lambda \bm{x};\ z\rightarrow \lambda z.
\end{equation}
This means rescaling $t$ and $\bm{x}$ requires the same rescaling in the radial coordinate $z$. This leads to the idea of the scale/radius duality in the AdS/CFT correspondence, where the IR cutoff\index{IR cutoff} of AdS $z\ge\epsilon$ is equivalent to the UV cutoff\index{UV cutoff} of CFT $\Lambda=1/\epsilon$ at its asymptotic boundary with the topology $\mathbb{R}^d$~\cite{Susskind:1998dq}.

\section{Asymptotically AdS spacetime}\label{sec:asympt-AdS}
The AdS/CFT correspondence can also deal with \textbf{asymptotically AdS spacetimes}\index{asymptotically AdS spacetimes}. They approach pure AdS near the asymptotic boundary. In particular, when $d+1=3$, we impose particular fall-off/boundary conditions for them~\cite{Brown:1986nw,Henneaux:1985tv}. In general dimensions, the asymptotically AdS spacetimes are described by the Fefferman-Graham coordinates\index{Fefferman-Graham coordinates}, which will be explained in Section \ref{sec:hol-EM-FG}.
Let us review some important examples for now as asymptotically AdS spacetimes include many interesting geometries as we review in the following.

\subsection{Schwarzschild-AdS black hole}
A Schwarzschild black hole in AdS (\textbf{Schwarzschild-AdS black hole}\index{Schwarzschild-AdS black hole}) is described by~\cite{Nozaki:2013wia,Jahn:2017xsg,Charmousis:2009}
\begin{equation}
    ds^2=-\left(r^2+R^2-\frac{M_{d+1}}{r^{d-2}}\right)d\tau^2+\frac{R^2}{r^2+R^2-\frac{M_{d+1}}{r^{d-2}}}dr^2+r^2d\Omega_{d-1}^2,
    \label{eq:AdS-Sch-BH}
\end{equation}
where the mass parameter $M_{d+1}$ is related to the black hole ADM mass\index{ADM mass} $m$ as~\cite{Nozaki:2013wia,Jahn:2017xsg,Ashtekar:1999jx}
\begin{equation}
   M_{d+1}=\frac{16\pi G_N^{(d+1)}R^2}{(d-1)\Omega_{d-1}} m =\frac{8\Gamma\left(\frac{d}{2}\right)G_N^{(d+1)} R^2}{(d-1)\pi^{d/2-1}} m.
    \label{eq:mass-BH}
\end{equation}
The event horizon is defined from the outer radius $r=r_+$ satisfying $r^2+R^2-\frac{M_{d+1}}{r^{d-2}}=0$. 
Only in $d+1=3$, we need $M>R^2$ for the Schwarzschild-AdS black hole solution.\footnote{This three-dimensional black hole with a negative cosmological constant is called the Ba{\~{n}}ados-Teitelboim-Zanelli (BTZ) black hole~\cite{Banados:1992wn}.}
By inspecting the periodicity of $\tau_E=i\tau$ near the horizon, we can find the Hawking temperature\index{Hawking temperature}~\cite{Harlow:2014yka,Hartman:2015}
\begin{equation}
    \beta_H=\frac{1}{T_H}=\frac{4\pi R^2 r_+}{(d-2)R^2 + d r_+^2}.
\end{equation}
When $d+1=3$, 
\begin{equation}
    \beta_H=2\pi R^2/r_+=2\pi R^2/\sqrt{M-R^2} \quad (M\equiv M_3).
    \label{eq:BH-temp-3d}
\end{equation}

In the large mass limit, the horizon approaches the asymptotic boundary and the black hole becomes planar. The \textbf{planar Schwarzschild-AdS black hole}\index{planar Schwarzschild-AdS black hole} is given by (rewriting $r\rightarrow \tilde{r}$ from \eqref{eq:AdS-Sch-BH})
\begin{align}
    ds^2 & = -\left(\tilde{r}^2-\frac{M_{d+1}}{\tilde{r}^{d-2}}\right)d\tau^2+\frac{R^2}{\tilde{r}^2-\frac{M_{d+1}}{\tilde{r}^{d-2}}}d\tilde{r}^2+\tilde{r}^2d\Omega_{d-1}^2 \\
    & = \frac{R^2}{z^2}\left(-\frac{R^{2d}-M_{d+1}z^d}{R^{2d-2}} d\tau^2 + R^2 \frac{R^{2d-2} dz^2}{R^{2d}-M_{d+1}z^d} + R^2 d\Omega_{d-1}^2\right) \quad (\tilde{r}=R^2/z).
\end{align}

When the black hole mass is small enough, a different phase called \textbf{thermal AdS}\index{thermal AdS} is preferred. This can be understood by comparing each gravitational free energy. Such a thermodynamic phase transition is known as the \textbf{Hawking-Page transition}\index{Hawking-Page transition}~\cite{Hawking:1982dh}. The metric of thermal AdS is the same as the pure AdS, however, Euclidean time has a periodicity corresponding to the inverse temperature. We will not go into details here.

In higher dimensions $d>3$, a horizon can be non-spherical. Such a solution is called a black $p$-brane\index{black $p$-brane}. $p=0$ corresponds to black hole; $p=1$ corresponds to black string\index{black string}, etc.

We can also consider rotating and/or charged black holes. They are important in the program of microstate counting~\cite{Strominger:1996sh}. A charged black hole is called the Reissner-Nordstr{\o}m black hole\index{Reissner-Nordstr{\o}m black hole} and a rotating black hole is called the Kerr black hole\index{Kerr black hole}. A charged, rotating black hole is called the Kerr-Newman black hole\index{Kerr-Newman black hole}. 

\subsection{Conical deficit geometry}
When $d+1=3$, $r^{d-2}=1$. In such a case, there can be no solution for the horizon. This happens when $M\equiv M_3 <R^2$. This case is interpreted as a massive particle rather than a black hole.\footnote{We restrict to three dimensions just for the later purpose. Of course, there exists a conical deficit solution itself in higher dimensions~\cite{Bayona:2010sd}.}
The metric after taking into account the gravitational backreaction of this massive particle is given by
\ba
ds^2=-(r^2+R^2-M)d\tau^2+\frac{R^2}{r^2+R^2-M}dr^2+r^2d\theta^2,   \label{GBmet}
\ea
where $\theta$ has the periodicity $2\pi$ as $-\pi\leq \theta<\pi$. This is the same as \eqref{eq:AdS-Sch-BH} when $d+1=3$ but with $M<R^2$.
The mass parameter $M$ is related to the mass $m$ via
\ba
M=8G_N R^2 m  \label{relam}
\ea
as in the case of black hole \eqref{eq:mass-BH}. Note that the temperature \eqref{eq:BH-temp-3d} becomes pure imaginary, in this case, $M-R^2<0$.

The geometry (\ref{GBmet}) can be transformed into the metric\footnote{This can also be worked out directly from a particular parametrization of embedding coordinates~\cite{Banerjee:2018}.}
\ba
ds^2=-(\ti{r}^2+R^2)d\ti{\tau}^2+\frac{R^2}{\ti{r}^2+R^2}d\ti{r}^2+\ti{r}^2d\ti{\theta}^2,
\label{deficitg}
\ea
via the map
\ba
\ti{\tau}=\chi \tau,\ \ \ti{\theta}=\chi \theta,\ \  \ti{r}= \frac{r}{\chi},
\quad\text{where}\quad
\chi=\s{\frac{R^2-M}{R^2}}.
\label{teha}
\ea
Even though this looks like a global AdS$_3$, there is a deficit angle\index{conical deficit geometry} at $r=0$ since the periodicity of the new spatial coordinate is  $\ti{\theta}$ is 
\ba
%-\pi \s{\frac{R^2-M}{R^2}}\leq \ti{\theta}<\pi \s{\frac{R^2-M}{R^2}}.
-\chi \pi \leq \ti{\theta}<\chi \pi, \quad \chi<1.
\label{tildtheta}
\ea

\section{Conformal field theory (CFT)}\label{sec:CFT-gen}
In this section, we review some basic properties of CFTs. Important properties which will not be addressed here include CFTs as UV/IR fixed points of renormalization group flow. For these directions, see~\cite{RevModPhys.90.035007} and references therein.

\subsection{Conformal symmetry}
A $d$-dimensional \textbf{conformal field theory (CFT)}\index{conformal field theory}\index{CFT |see conformal field theory } is highly constrained by its \textbf{conformal symmetry}\index{conformal symmetry} $SO(2,d)$. Its generators are
\begin{equation}
    J_{AB}=
    \mqty(0 & D & \frac{1}{\sqrt{2}}(P_\mu-K_\mu) \\ -D & 0 & \frac{1}{\sqrt{2}}(P_\mu+K_\mu) \\ \frac{1}{\sqrt{2}}(P_\mu-K_\mu) & \frac{1}{\sqrt{2}}(P_\mu+K_\mu) & L_{\mu\nu})
    ,
    \label{eq:conf-gen}
\end{equation}
where 
\begin{equation}
    \begin{split}
    D &= x^\mu \pdv{x^\mu} \qq{(dilatation)}\\
    P_\mu &= -i \pdv{x^\mu} \qq{(translation)}\\
    K_\mu &= -i \left(2x_\mu x^\nu \pdv{x^\nu} - x^\nu x_\nu \pdv{x^\mu}\right) \qq{(special conformal transformation)}\\
    L_{\mu\nu} &= i\left(x_\mu \pdv{x^\nu} - x_\nu \pdv{x^\mu}\right) \qq{(Lorentz transformation $\mathfrak{so}(1,d-1)$)}
    \end{split}
    \label{eq:gen-diff}
\end{equation}
as differential operators\index{special conformal transformation}\index{dilatation}. $\mu,\nu=0,\cdots, d$ are Lorentz indices for the $d$-dimensional flat spacetime where the CFT lives.

$J_{AB}$ \eqref{eq:conf-gen} satisfies the $\mathfrak{so}(2,d)$ commutation relation
\begin{equation}
    \comm{J_{AB}}{J_{CD}}=i (\eta_{AD} J_{BC} + \eta_{BC} J_{AD} - \eta_{AC} J_{BD} - \eta_{BD} J_{AC}),
\end{equation}
where $\eta_{AB}=\mathrm{diag}(-1,-1,1,\cdots,1)$, two temporal coordinates $(-1)$'s and $d$ spatial coordinates $1$'s.
We can easily check this by directly plugging \eqref{eq:gen-diff} and \eqref{eq:conf-gen} into the commutator. Note that $P_\mu$ and $K_\mu$ are transformed as vectors under $L_{\nu\rho}$ and $\comm{K_\mu}{P_\nu}$ can be calculated from the Jacobi identity.

Since the special conformal transformation is a composition of inversion-translation-inversion\index{inversion}, a CFT is naturally defined on spacetime including infinity. This is the reason why we can map an infinite plane to a sphere and vice versa. (The bulk counterpart will be the coordinate transformation between Poincar\'e and global AdS.) All the conformal transformations preserve the `shape' of the transformed object; angles are locally preserved. In other words, a conformal transformation is a \textbf{Weyl transformation}\index{Weyl transformation}, $g_{\mu\nu}(x)\rightarrow \Lambda (x) g_{\mu\nu}(x)$. This is the reason why they are `conformal' transformations.

\subsection{Primary operator}
In CFT, it is useful to work in the eigenbasis of dilatation $D$. An operator $O_\Delta$ is called a \textbf{primary operator}\index{primary operator} or in short, \textbf{primary}\index{primary} when
\begin{equation}
    D\, O_\Delta(0) \ket{0}= \Delta\, O_\Delta(0)\ket{0},\quad K_\mu \, O_\Delta (0)\ket{0}=0.
    \label{eq:primary-d-dim}
\end{equation}
The eigenbasis $\ket{O_\Delta}\equiv O_\Delta(0)\ket{0}$ is called a \textbf{primary state}\index{primary state}.
$\Delta\ge 0$ is the \textbf{scaling/conformal dimension}\index{scaling dimension}\index{conformal dimension}. This is because a primary operator transforms under scaling as
\begin{equation}
    O(x)\rightarrow O'(\lambda x)= \lambda^{-\Delta} O(x).
    \label{eq:scaling}
\end{equation}
(When $d=2$, as there is a separation into left and right moving sectors, the scaling dimension is given by the sum of holomorphic and anti-holomorphic \textbf{conformal weights}\index{conformal weight}: $\Delta=h+\bar{h}$.) $O_\Delta(0) \ket{0}$ is called a primary state. A primary operator and a primary state are in a one-to-one correspondence because of the state/operator correspondence\index{state/operator correspondence}, i.e. dilatation can shrink arbitrary path-integrated region defining a state to a point defining an operator.

$K_\mu$ acts like a lowering operator. $P_\mu$ acts like a raising operator:
\begin{equation}
    D\left(P_\mu\, O_\Delta (0)\right)\ket{0} = (\Delta+1)\left(P_\mu\, O_\Delta(0)\right)\ket{0}.
\end{equation}
States with $P_\mu$'s acting on a primary state, $P_\mu P_\nu \cdots O(0) \ket{0}$, are called \textbf{descendant states}\index{descendant states}.

\subsection{Correlation functions}
From conformal symmetry, correlation functions are highly constrained. Since correlators of descendant operators can be obtained by differentiating those of primary operators, we focus on primary operators. Furthermore, for simplicity, we consider only scalars. For more details such as cases involving higher spin fields, see~\cite{Di_Francesco_1997,BB29063466,BB19618269}.

A two-point function of primary operators is completely fixed by conformal symmetry as
\begin{equation}
    \ev{O_{\Delta_1} (x) O_{\Delta_2} (y)} = \frac{\delta_{\Delta_1 \Delta_2}}{\abs{x-y}^{\Delta_1+\Delta_2}}
    \label{eq:CFT-2pt}
\end{equation}
up to normalization. This can be derived by using scaling \eqref{eq:scaling} and translational invariance. Similarly, a three-point function is also fixed except for the operator product expansion coefficient described below.

Higher-point functions are not fully determined by symmetry. However, its functional form is still constrained. For example, a four-point function is written as
\begin{equation}
    \ev{O_1(x_1)O_2(x_2)O_3(x_3)O_4(x_4)}=\frac{g(u,v)}{x_{12}^{\Delta_1+\Delta_2}x_{34}^{\Delta_3+\Delta_4}} \qty(\frac{x_{24}}{x_{14}})^{\Delta_{12}} \qty(\frac{x_{14}}{x_{13}})^{\Delta_{34}}
    \quad (O_i\equiv O_{\Delta_i}),
    \label{eq:4pt-fn}
\end{equation}
where $x_{ij}\equiv \abs{x_i-x_j}$, $\Delta_{ij}=\Delta_i-\Delta_j$, $u$ and $v$ are \textbf{cross ratios}\index{cross ratios} defined as
\begin{equation}
    u=\frac{x_{12}^2 x_{34}^2}{x_{13}^2 x_{24}^2},\quad v=\frac{x_{14}^2 x_{23}^2}{x_{13}^2 x_{24}^2},
    \label{eq:cross-ratios}
\end{equation}
and $g(u,v)$ is the only component that is not determined from symmetry. In CFT, an \textbf{operator product expansion (OPE)}\index{operator product expansion}\index{OPE |see operator product expansion}
\begin{equation}
    O_1(x_1) O_2 (x_2) = \sum_{O_k \in \mathrm{primaries}} c_{12k} (x_{12}, i P_\mu) O_k (x_2)
\end{equation}
is possible as a consequence of the state/operator correspondence. $c_{12k} (x_{12}, i P_\mu)$ is the OPE coefficient\index{OPE coefficient}, which is completely determined from conformal symmetry~\cite{BB29063466}. By applying the OPE to the four-point function \eqref{eq:4pt-fn}, we obtain
\begin{equation}
    g(u,v) = \sum_{O_k} c_{12k} c_{34k} g_{\Delta_k, s_k}^{\Delta_{12},\Delta_{34}}(u,v),
\end{equation}
where $s_k$ is the spin of the operator $O_k$ appearing in the intermediate state of OPE. $g_{\Delta_k, s_k}^{\Delta_{12},\Delta_{34}}(u,v)$ is called the \textbf{conformal block}\index{conformal block}.\footnote{Note that one can always map four points onto some two-dimensional plane by a suitable conformal transformation. Then, we can take a complex coordinate on it such that $x_1=0$, $x_2=1$, $x_3=z$, $x_4=\infty$. The cross ratios are related to $z$ simply by $u=\abs{z}^2$ and $v=\abs{1-z}^2$.} In the above expansion, we applied OPE to $O_1\cdot O_2$ and $O_3\cdot O_4$ but expanding in a different channel $O_1\cdot O_3$ and $O_2\cdot O_4$ should also give the same result. This is known as the \textbf{crossing symmetry}\index{crossing symmetry} and the constraint equation is called the \textbf{conformal bootstrap equation}\index{conformal bootstrap equation}. The conformal bootstrap\index{conformal bootstrap} is a program for investigating the conformal block using the constraint equations.

\section{AdS/CFT correspondence in general dimensions}\label{sec:AdS-CFT-general-dim}
In this section, we discuss some basic statements and known results of the AdS/CFT correspondence\index{AdS/CFT correspondence} in general dimensions. 

\subsection{GKP-Witten relation}
The \textbf{AdS/CFT correspondence}\index{AdS/CFT correspondence} is a duality between theories. The \textbf{Gubser-Klebanov-Polyakov-Witten (GKP-Witten) relation}\index{GKP-Witten relation} claims the gravitational (on-shell) partition function of the AdS spacetime perfectly matches the corresponding CFT (which we call \textbf{holographic CFT}\index{holographic CFT}) partition function~\cite{Gubser:1998bc,Witten:1998qj}:\footnote{There is no $\sqrt{-g}$ since the CFT lives on flat spacetime.}\footnote{Although we focus on the zero-temperature case, a finite-temperature case can be also dealt with by considering a (mixed-state) black hole~\cite{Witten:1998zw}. Its purification is the TFD state \eqref{eq:TFD} and dual to the \textbf{two-sided eternal black hole}\index{two-sided eternal black hole}~\cite{Maldacena:2001kr}. This is the \textbf{Hartle-Hawking(-Israel) wave functional}\index{Hartle-Hawking state}\index{Hartle-Hawking(-Israel) wave functional} prepared by a Euclidean path integral~\cite{Hartle:1976tp,Israel:1976ur}.}
\begin{equation}
    Z_{AdS}[\phi_i|\phi_i(z=\epsilon)\propto\phi_{0\, i}] = \expval{e^{i\int \dd[d+1]{x} O_i(x) \phi_{0 i}(x)}}_{CFT},
    \label{eq:GKPW}
\end{equation}
where $i$ denotes the Lorentz index $\mu$ (e.g. $J_\mu$) or the flavor index (e.g. $\phi_i$ in the $O(N)$ vector model).\footnote{This is different from the color index (e.g. $SU(N)$ for $\mathcal{N}=4$ SYM. Since $\phi_i$ in gravity is gauge singlet, $O_i$ must also be singlet like $\Tr F^2$.} $\phi_{0\, i}$ is the external source field and specifies the boundary condition of the bulk field at the asymptotic boundary. We will later make a more precise statement including the prefactor. 
By differentiating the free energy with respect to $\phi_{0i}$, we can obtain correlators of $O_i$.
In the following, we mostly consider single-flavor cases.\\
\emph{Caveat: Although \eqref{eq:GKPW} is written in the Lorentzian signature, there is a subtlety in this signature compared to the Euclidean one~\cite{Marolf:2004fy}. Precisely speaking, states should be prepared in the Euclidean signature and they are analytically continued to the Lorentzian signature at the time-reflection symmetric slice~\cite{Wakeham:2022wyx}.}

For simplicity, let us restrict to a scalar operator.
An operator $O$ in a holographic CFT couples to a scalar field $\phi$ in the bulk \eqref{eq:GKPW}. Recall that $\Delta$ gives the scaling dimension of the operator. In the AdS/CFT correspondence, the bulk field $\phi\sim z^\Delta$ as it approaches the asymptotic boundary $z\rightarrow 0$ ($r\rightarrow 0$) in the same way as the CFT operator $O$ does. By plugging this into the bulk equation of motion (assuming free QFT in the bulk\footnote{This assumption will be justified later in Section \ref{sec:int-AdS}.}), the conformal dimension\index{conformal dimension} $\Delta$ of the CFT operator $O$ must be either
\begin{equation}
    \Delta_{\pm}\equiv\frac{d+1}{2}\pm\sqrt{(mR)^2+\qty(\frac{d+1}{2})^2} \ \Leftrightarrow\ (mR)^2=\Delta_\pm (\Delta_\pm -(d+1)),
    \label{eq:delta-mass}
\end{equation}
where $m$ is the mass of the bulk scalar field $\phi$.
The conformal dimension $\Delta$ is constrained by the \textbf{unitarity bound}\index{unitarity bound}, e.g. $\Delta\ge (d-1)/2$ for a scalar operator\footnote{This comes from the non-negativity of norms.}~\cite{Andrade:2011dg,BB29063466}. %We have $\Delta\ge (d-1)/2$ for a scalar operator.\footnote{From the conformal block expansion, the operator saturating the unitarity bound is generalized free~\cite{Ohl:2012bk}.} 
When the mass is above the \textbf{Breitenlohner-Freedman (BF) stability bound}\index{Breitenlohner-Freedman stability bound}\index{BF bound |see Breitenlohner-Freedman stability bound}~\cite{Breitenlohner:1982jf,Breitenlohner:1982bm}
\begin{equation}
    (mR)^2 \ge -\qty(\frac{d+1}{2})^2,
    \label{eq:BF-bound}
\end{equation}
we can take either $\Delta_-$ or $\Delta_+$ as $\Delta$ without violating the unitarity bound.\footnote{Note that unless $d=1$, bulk fields within the BF bound are tachyonic. When $d=1$, the operator dual to the bulk massless scalar can have a conformal dimension $\Delta=\Delta_-=0$. This is the only operator satisfying the BF bound.} 
This allows various possibilities for boundary conditions, however, let us focus on the case $\Delta=\Delta_+$, which does not violate the unitarity bound for any values of the mass.
As the bulk field approaches the asymptotic boundary, we have a linear combination of the two independent solutions (as the equation of motion is second-order)
\begin{equation}
    \phi(z,x)\rightarrow z^{\Delta_-} A(x) + z^{\Delta_+} B(x),\quad z\rightarrow 0.
    \label{eq:asympt-field-boundary}
\end{equation}
The first term dominates as $z\rightarrow 0$. This is referred to as the \textbf{non-normalizable mode}\index{non-normalizable mode}. The second term is referred to as the \textbf{normalizable mode}\index{normalizable mode}.
The boundary condition is now fixed as
\begin{equation}
    \phi(z=\epsilon)=\epsilon^{\Delta_-}\phi_0,
    \label{eq:bc1}
\end{equation}
where we put the UV cutoff as $z=\epsilon \ll R$ (the metric is fixed by the Dirichlet boundary condition).
This means we have
\begin{equation}
    A(x)=\phi_0(x).
    \label{eq:bc2}
\end{equation}
The non-normalizable mode\index{non-normalizable mode} diverges as $z\rightarrow 0$. It cannot be quantized and should be treated as a source
\begin{equation}
    \Delta S = \int \dd[d+1]{x} O(x) \phi_{0}(x).
    \label{eq:source-ads}
\end{equation}
Such a modification of the theory is called \textbf{non-normalizable deformation}\index{non-normalizable deformation}.
Given the boundary condition, \eqref{eq:bc1} or \eqref{eq:bc2}, we can write the bulk field $\phi(x,z)$ by dressing with the \textbf{bulk-to-boundary propagator}\index{bulk-to-boundary propagator} $K_\Delta (z,x;y)$\footnote{This is different from the smearing function in the global bulk reconstruction. See Section 3.2 in~\cite{Harlow:2018fse}.}
\begin{equation}
    \phi(x,z)=\int \dd[d+1]{y} K_\Delta (z,x;y) A(y),
\end{equation}
where the bulk-to-boundary propagator is obtained as
\begin{equation}
    K_\Delta (z,x;y)= \frac{\Gamma(\Delta)}{\pi^{(d+1)/2}\Gamma(\Delta-(d+1)/2)}\qty(\frac{z}{z^2+(x-y)^2})^\Delta
\end{equation}
from the Green function~\cite{Aharonov:1988xu,ammon_erdmenger_2015,1020000782220175744}.
Under the non-normalizable deformation \eqref{eq:source-ads}, the bulk-to-boundary propagator $K(z,x;y)$ is completely determined. By expanding $\phi(x,z)$ around $z=0$, we obtain
\begin{equation}
    B(x)=\ev{O(x)}_{A(x)},
\end{equation}
where $\ev{O(x)}_{A(x)}$ is the one-point function in the presence of the source $A(x)$.

This is not the end of the story. Since there are two independent solutions \eqref{eq:asympt-field-boundary}, we can actually choose the coefficient of the normalizable mode\index{normalizable mode} \textit{independently} by adding another normalizable mode $z^{\Delta_+} \tilde{B}(x)$~\cite{nastase_2015}. Since the normalizable mode can be quantized and does not deform the theory at the asymptotic boundary, it corresponds to deforming a state without modifying the theory, i.e. an excited state with finite energy. In particular, we call it a \textbf{local operator quench}\index{local operator quench}\index{quench} if the state is excited by a local (external) operator~\cite{Nozaki:2014hna,Nozaki:2016mcy}.\footnote{Other types of local quenches include \textbf{joining/splitting quenches}\index{joining/splitting quenches}, which involve merging and splitting states. They have been studied both in CFTs~\cite{Calabrese:2007mtj} and holography~\cite{Shimaji:2018czt}. We may also consider a \textbf{global quench}\index{global quench}, where a ground state is suddenly evolved by a different Hamiltonian~\cite{Calabrese:2007rg,Calabrese:2016xau,Calabrese:2006rx,Hartman:2013qma}.}

In the absence of the source, $\phi\rightarrow z^\Delta B(x)$ as $z\rightarrow 0$. $B(x)$ is finite even at the asymptotic boundary and is directly related to the one-point function if the source is present. This indicates the dual CFT operator can be alternatively defined from the bulk field at the operator level as
\begin{equation}
    O(t,x)=\lim_{\epsilon\rightarrow 0} \left.\frac{\phi(t,z,x)}{z^\Delta}\right|_{z=\epsilon}
    \label{eq:dict}
\end{equation}
instead of the correlators from the GKP-Witten relation~\eqref{eq:GKPW}.
This is known as the \textbf{extrapolate dictionary}\index{extrapolate dictionary} or the \textbf{Banks-Douglas-Horowitz-Martinec (BDHM) formula}\index{BDHM formula}~\cite{Banks:1998dd}. The same correlators can be obtained either from the generating functional in the GKP-Witten relation \eqref{eq:GKPW} or the extrapolate dictionary \eqref{eq:dict}. For its proof, see~\cite{Harlow:2011ke} as well as Section 12.1 in~\cite{Kaplan:2016}.\footnote{It is notable these two approaches can give a different answer in the dS/CFT correspondence~\cite{Strominger:2001pn,Witten:2001kn}, an analytical-continued version of the AdS/CFT correspondence~\cite{Harlow:2011ke}.}

After all, the GKP-Witten relation \eqref{eq:GKPW} holds beyond the vacuum case by evaluating the right-hand side with a nontrivial excited state (but in the original theory). The problem is what is the gravity dual to evaluate the on-shell action in the saddle-point approximation of the left-hand side. The geometry is asymptotically AdS but not pure AdS due to the backreaction from the excitation. In the later sections, we discuss such a case explicitly in lower dimensions.

\subsection{Deformation from CFT}
The source term can deform the CFT action to make the theory non-conformal. The bulk geometry can be drastically different depending on the dimensionality of the deformed operator. 
When $\Delta<d$, $O$ is relevant\index{relevant}; the deformation becomes stronger as we go deep into the IR. This means we still have an asymptotically AdS spacetime, however, the bulk interior is different from the pure AdS. From \eqref{eq:delta-mass}, this case corresponds to a tachyonic bulk field ($m^2<0$).\footnote{A tachyonic field is allowed above the BF stability bound \eqref{eq:BF-bound}.} Recently, relevant deformations are important in discussing traversable wormholes\index{traversable wormholes}. See an original proposal by Gao, Jafferis, and Wall~\cite{Gao:2016bin} as well as its boundary CFT analog~\cite{May:2020tch}.
When $\Delta=d$, $O$ is marginal\index{marginal}.\footnote{Recently, it has been discussed that a marginal deformation can add strongly-coupled matter fields in the bulk without being suppressed as $G_N\rightarrow 0$~\cite{Apolo:2022pbq}.} Finally, when $\Delta>d$, $O$ is irrelevant\index{irrelevant}; the deformation is strong in the UV. This implies the UV theory must be renewed. In the bulk, it means the deformation grows as we approach the asymptotic boundary and the asymptotic geometry can be very distinct from pure AdS. %While such an irrelevant deformation changes the UV theory, which is usually the starting point, one might think such an irrelevant deformation is useless. Actually it is not. 
To list a few examples, the \textbf{$\bm{T\overline{T}}$ deformation}\index{$T\overline{T}$ deformation} proposed to be dual to a finite cutoff AdS, in which the CFT lives on a boundary at a finite distance $z>\epsilon$~\cite{McGough:2016lol,Cottrell:2018skz}; a mass deformation of the $\mathcal{N}=4$ SYM by the Konishi operator\index{Konishi operator} $\Tr(\phi^I\phi^I)$ ($I$ is the $SO(6)$ index), whose conformal dimension $\Delta\sim m_s R\sim R/{\alpha'} \sim\lambda^{1/4}$ (from \eqref{eq:delta-mass}) is large in the usual holographic limit~\cite{Aharonov:1988xu}.\footnote{The Konishi operator does not belong to the short representation of the superconformal algebra. Thus, it is not dual to a supergravity field around pure AdS and is expected to correspond to a massive mode in the string spectrum.}

\subsection{Generalized free field}
From the extrapolate dictionary \eqref{eq:dict}, CFT correlators can be calculated from bulk field correlators. To calculate the bulk correlators, we need to consider QFT on AdS, which is what we expect from holography as an effective description of quantum gravity. When the bulk QFT is free, the CFT correlator must obey Wick's theorem\index{Wick's theorem}, in which any multi-point functions factorize into a product of two-point functions. This is remarkable from the CFT perspective! We know a strongly-coupled theory does not usually have such a simple factorization. In the large-$N$ theory,\footnote{For large-$N$ theories, \cite{Moshe:2003xn} provides an excellent, comprehensive review.} it can be shown that all the non-factorized terms are suppressed as $N\rightarrow\infty$ for any operators with $\Delta\sim O(N^0)$ and the number of such operators is $O(N^0)$~\cite{El-Showk:2011yvt,Harlow:2018fse}. For example, see Section 6.6 in~\cite{Kaplan:2016} for an illustration. Wick's theorem holds when $N$ is strictly infinite.\footnote{Note that there is also a subtle difference in the strictly infinite $N$ versus a large but finite $N$ from the von Neumann algebra of QFT~\cite{Chandrasekaran:2022eqq,Leutheusser:2021qhd,Ghosh:2017gtw}. See~\cite{Witten:2021jzq,Witten:2018zxz} for reviews.} This is known as the \textbf{large-$\bm{N}$ factorization}\index{large-$N$ factorization}.

The restriction $\Delta\sim O(N^0)$ is very important. These operators dual to free bulk fields are called \textbf{generalized free}\index{generalized free} fields. The corresponding CFT sector is called the generalized free CFT. Note that the generalized free CFT cannot be a complete theory alone. What about $\Delta\sim O(N)$? Such operators cannot be dual to free bulk fields on pure AdS. They are heavy enough to backreact on the geometry. The bulk is no longer pure AdS but an asymptotically AdS like a black hole. In the next section, we will see an explicit example of such situations in the AdS${}_3$/CFT${}_2$ correspondence.

Since there is a large `gap' between the low-energy operators with $\Delta\sim O(N^0)$ and high-energy (stringy) ones with $\Delta\sim O(N)$, holographic CFTs are also called the \textbf{large-$\bm{N}$ \emph{gapped} CFTs}\index{large-$N$ gapped CFTs}.\footnote{In addition, we need the theory to be strongly-coupled to have a large gap~\cite{Hartman:2015,Wakeham:2022wyx}.} This is also referred to as a \textbf{sparse (low-lying) spectrum}\index{sparse spectrum}.\footnote{This terminology is mostly used in two-dimensional holographic CFTs, in which sparseness is more precisely defined~\cite{Hartman:2014oaa,Hartman:2013mia}.} 
This is expected from the Hawking-Page transition\index{Hawking-Page transition}~\cite{Hawking:1982dh}, in which the thermodynamic entropy counting the number of microstates jumps from $O(N^0)$ to $O(N)\sim O(1/G_N)$. For more details, see Section 1.1.4 of \cite{Wakeham:2022wyx}.

\subsection{Interactions in AdS}\label{sec:int-AdS}
In the above discussion, we have neglected interactions among the bulk fields without justifications. (Gravitational interactions are an exception. They are already taken into account as backreaction.) We can show that interacting terms are suppressed in the small-$G_N$ (large-$N$) limit for the semiclassical\index{semiclassical} gravity. Let us assume there is no dimensionful coupling constant other than mass and Newton's constant. Recalling that the mass dimension of a scalar field $\phi$ is $(d-1)/2$ and that of the gravitational constant $G_N$ is~$-(d-1)$, the action is given by~\cite{Kajuri:2020vxf}
\begin{equation}
    I=\frac{1}{16\pi G_N} \int\dd[d+1]{x} \sqrt{-g} \left[R + G_N \left((\nabla \phi)^2+m^2 \phi^2\right) + \lambda_3 G_N^{3/2} \phi^3 + \cdots + \lambda_k G_N^{(k+2)/2} \phi^k +\cdots \right],
\end{equation}
where $\lambda_k$ denotes a dimensionless coupling constant for the $k$-th interaction. Although we restricted to a scalar QFT for simplicity, a similar expansion can be done for other fields. From the above equation, we can straightforwardly see the effect of interactions is suppressed in the small-$G_N$ (i.e. large-$N$) limit. Thus, as long as we focus on the leading order in $G_N$, we can neglect interactions in AdS.

\subsection{Geodesic approximation}\label{sec:geodesic-approx}
The AdS/CFT correspondence tells us the CFT two-point function is calculated from the bulk two-point function. In the \textbf{worldline formalism}\index{worldline formalism} (cf. Appendix \ref{app:worldline}), it can be written as a propagation of a relativistic point particle. Since its action is proportional to mass, which is dual to the conformal dimension, the two-point function of heavy fields\index{heavy} $\Delta\gg 1$ can be well approximated by the saddle point, the geodesics. In this limit, $\Delta\approx mR$ from \eqref{eq:delta-mass}. It follows that the Euclidean two-point function of bulk fields is given by~\footnote{The prefactor is determined from the normalization of the Green's function. The Klein-Gordon operator in AdS is essentially equal to the quadratic Casimir operator of $SO(2,d)$, from which we can compute the Green's function with this prefactor. For more details, see Section 3.2 of~\cite{Penedones:2016voo}.}
\begin{equation}
    \ev{\phi(X_A)\phi(Y_B)}\approx \frac{\Gamma(\Delta)}{\Gamma(\Delta+1-d/2)}e^{-\Delta d(A,B)/R},
    \label{eq:geodesic-approx}
\end{equation}
where $d(A,B)$ is the geodesic distance between $A$ and $B$ in Euclidean AdS (Appendix \ref{sec:Embedd}). This is called the \textbf{geodesic approximation}\index{geodesic approximation}~\cite{Balasubramanian:1999zv,Louko:2000tp}. Appendix \ref{app:geodesic-approx} shows that this correctly reproduces the CFT two-point function.

\subsection{Holographic stress tensor}\label{sec:hol-EM-FG}
Since the energy-momentum tensor\index{energy-momentum tensor} comes from the variation with respect to the metric perturbation, we first need to introduce coordinates describing an asymptotically AdS spacetime, namely, the \textbf{Fefferman-Graham (FG) coordinates}\index{Fefferman-Graham coordinates}\index{FG coordinates |see Fefferman-Graham coordinates}~\cite{AST_1985__S131__95_0,Fefferman:2007rka}.

Following the notation in~\cite{Nozaki:2013wia}, an asymptotically AdS spacetime\index{asymptotically AdS spacetimes} is expressed by the following metric\footnote{It is often expressed as 
\begin{equation}
    \dd{s}^2= R^2 \frac{\dd{\rho}^2}{4\rho^2}+\frac{g_{ab}(\rho,x)\dd{x}^a\dd{x}^b}{\rho}
    \label{eq:metric-FG}
\end{equation}
by taking $\rho=z^2/R^2$~\cite{ammon_erdmenger_2015,1020000782220175744}.}:
\begin{equation}
    \dd{s}^2=R^2\frac{\dd{z}^2 + g_{ab}(z,x) \dd{x}^a \dd{x}^b}{z^2} \equiv R^2 \frac{\dd{z}^2}{z^2}+\gamma_{ab}\dd{x}^a \dd{x}^b,
\end{equation}
where the asymptotic boundary is located at $z\rightarrow 0$. The metric perturbation behaves around $z=\epsilon\rightarrow 0$ like
\begin{equation}
    g_{ab}(z,x)=\eta_{ab}+t_{ab}(x)\, z^d +O(z^{d+1}).
    \label{eq:FG-expansion}
\end{equation}
Coefficients of the FG expansion \eqref{eq:FG-expansion} can be calculated order by order from the Einstein's equation.

By plugging the FG expansion into the action, we can compute the counterterm action to remove divergences as $\epsilon\rightarrow 0$. This is known as \textbf{holographic renormalization}\index{holographic renormalization}. For more details, see Appendix \ref{app:grav-action}. The entire gravity action is given by the sum of the Einstein-Hilbert term\index{Einstein-Hilbert term}, the Gibbons-Hawking boundary term\index{Gibbons-Hawking boundary term}, and the counterterm at the asymptotic boundary. By taking its variation with respect to the metric, we obtain the expectation value of the energy-momentum tensor (we omit the bracket)\footnote{We define the energy-momentum tensor by \begin{equation}
    T_{ab}=-\frac{2}{\sqrt{-\gamma}}\frac{\delta I}{\delta \gamma^{ab}},\quad T^{ab}=\frac{2}{\sqrt{-\gamma}}\frac{\delta I}{\delta \gamma_{ab}}.
    \label{eq:EM-tensor-def}
\end{equation}
}:
\begin{equation}
    T_{ab}=\lim_{z\rightarrow 0} \qty(\frac{R^{d-1}}{8\pi G_N z^{d-1}}(K_{ab}-\gamma_{ab}K)-\frac{2}{\sqrt{-\gamma}}\frac{\delta I_{ct}}{\delta \gamma^{ab}}).
\end{equation}
The contribution from the Einstein-Hilbert action vanishes from the equation of motion. The counterterm contribution depends on the spacetime dimension. At least up to $d=4$, they are a linear combination of the induced metric $\gamma_{ab}$ and the Einstein tensor for the induced metric. When the asymptotic boundary is flat like the FG coordinates, which we are interested in now, the Einstein tensor (proportional to the boundary cosmological constant) and higher curvature terms vanish. This yields a universal expression
\begin{equation}
    T_{ab}=\lim_{z\rightarrow 0} \qty[\frac{R^{d-1}}{8\pi G_N z^{d-1}}\qty(K_{ab}-\gamma_{ab}K - \frac{d}{R}\gamma_{ab})].
\end{equation}
By plugging \eqref{eq:metric-FG} with \eqref{eq:FG-expansion} into the above expression, we obtain the \textbf{holographic stress tensor}\index{holographic stress tensor}
\begin{equation}
    T_{ab}=\frac{(d+1)R^d}{16\pi G_N} t_{ab}.
    \label{eq:hol-EM-tensor}
\end{equation}
From the FG expansion, $t_{ab}$ should have mass dimension $d$. This can be confirmed from the dimensional analysis as follows. First, $T_{ab}$ integrated over a time slice of the asymptotic boundary must have mass dimension of energy.\footnote{
Let us denote the bulk Cauchy surface by $\Sigma$ and its asymptotic boundary counterpart by $\partial\Sigma$.
Given the Arnowitt-Deser-Misner (ADM) decomposition\index{ADM decomposition} of the metric on $\partial$AdS $\gamma_{ab}\dd{x}^ab\dd{x}^b = - N^2 \dd{t}^2 + \sigma_{ij} (\dd{x}^i + N^i \dd{t})(\dd{x}^j + N^j \dd{t})$, where $N$ is a lapse and $N^i$ are shift functions and $\sigma_{ij}$ is the space metric on $\partial \Sigma$,
the ADM energy\index{ADM energy} is defined as
\begin{equation}
    \int_{\partial \Sigma}\dd[d-1]x \sqrt{\sigma} N \epsilon,
\end{equation}
where $\epsilon$ is a timelike unit normal vector to $\partial\Sigma$~\cite{Balasubramanian:1999re}.}
This means $[T_{ab}]=M^{-1-(d-1)}=M^d$. Recalling $[G_N]=M^{-(d-1)}$, \eqref{eq:hol-EM-tensor} yields $[t_{ab}]=M^d$. We will use \eqref{eq:hol-EM-tensor} later for a consistency check of holography.

\subsection{Holographic entanglement entropy -- formula}\label{sec:HEE-CFT}
In the AdS/CFT correspondence, EE can be calculated in the bulk as the area of an extremal surface homologous to the asymptotic boundary subregion~\cite{Ryu:2006bv,Ryu:2006ef,Hubeny:2007xt}.\footnote{If there are several extremal surfaces, one chooses the minimum-area surface.} The original proposal is called the \textbf{Ryu-Takayanagi (RT) formula}\index{Ryu-Takayanagi formula}\index{RT formula |see Ryu-Takayanagi formula }~\cite{Ryu:2006bv,Ryu:2006ef} and derived in~\cite{Lewkowycz:2013nqa}. This states
\begin{equation}
    S_A = \min_{\gamma} \frac{\mathrm{Area}(\gamma)}{4G_N},
    \label{eq:HEE-RT}
\end{equation}
where the minimization is taken over all possible codimension-two spacelike surfaces such that $\partial\gamma=\partial A$ and homologous to the subregion $A$. The minimal surface $\gamma$ is called the \textbf{RT surface}\index{RT surface}\index{Ryu-Takayanagi surface} and the region enclosed by the RT surface with the asymptotic boundary is called the \textbf{entanglement wedge}\index{entanglement wedge}~\cite{Headrick:2014cta,Jafferis:2015del}. The RT formula works only when the spacetime is static or the RT surface is evaluated at the time reflection symmetric time slice; otherwise one needs to use its covariant version, \textbf{Hubeny-Rangamani-Takayanagi (HRT) formula}\index{HRT formula |see Hubeny-Rangamani-Takayanagi formula }\index{Hubeny-Rangamani-Takayanagi formula}~\cite{Hubeny:2007xt}. When the bulk matter fields are also present, we need to extremize both the area term and the matter entropy at the same time. These two terms are called \textbf{generalized entropy}\index{generalized entropy}~\cite{Bekenstein:1972tm} and extremizing over the generalized entropy is the most general scheme so far.\footnote{In lower dimensions, there is some recent progress such as islands or replica wormholes~\cite{Almheiri:2019qdq,Penington:2019kki}.} This is called \textbf{quantum extremal surface prescription}\index{quantum extremal surface prescription}~\cite{Engelhardt:2014gca}. These holographic calculations of EE are collectively referred to as \textbf{holographic EE (HEE)}\index{holographic entanglement entropy}\index{HEE |see holographic entanglement entropy }.
A series of related references can be found in Section 2 of~\cite{Bousso:2022ntt}.
Also, there are several equivalent formulations such as bit thread\index{bit thread}~\cite{Freedman:2016zud,Headrick:2017ucz,Agon:2018lwq} or the maximin construction\index{maximin construction}~\cite{Wall:2012uf,Akers:2019lzs}.

%\subsection{Shockwave}
%Shockwave can be created both by heavy and light operators. In the large $c$ limit, the latter gravitional backreaction is suppressed. To create shockwave from light operators, the wave packet must be complexified. To calculate these correlation functions, we should not use OPE but instead work in the Regge limit. For more detail, see~\cite{Afkhami-Jeddi:2017rmx} (See also the first paper discussing the CFT dual for a shockwave~\cite{Horowitz:1999gf}).

\section{Toward AdS${}_3$/CFT${}_2$ correspondence}\label{sec:AdS3/CFT2}
In this dissertation, we focus on low-dimensional holography, in particular, the \textbf{AdS${}_3$/CFT${}_2$ correspondence}\index{AdS${}_3$/CFT${}_2$ correspondence}. It is a good %playground
laboratory for higher-dimensional holography 
%starting point 
as it is nontrivial enough to contain black holes, branes, etc. while it is much simpler than higher-dimensional cases.\footnote{In higher dimensions, supersymmetry\index{supersymmetry} constrains the possible forms of interactions~\cite{McGreevy:2009xe}. In the AdS${}_3$/CFT${}_2$ correspondence, supersymmetry is not mandatory as the enhanced conformal symmetry is strong enough for computations in many cases.} By utilizing the (enhanced) symmetry, we can compute various quantities analytically, and obtain universal results independent from the details of the theory. In this section, we start with inspecting AdS and CFT separately (neglecting the effect of compactified spacetimes) although it is worth noting that the Poincar\'e AdS${}_3$ can appear from string theory by taking the near horizon limit of the D1-D5 system~\cite{David:2002wn}.

\subsection{AdS${}_3$}\label{sec:AdS3}

Three-dimensional gravity is simpler than higher-dimensional gravity as there are no propagating degrees of freedom like graviton (except at the asymptotic boundary, depending on the boundary condition), i.e. the Weyl tensor\index{Weyl tensor} vanishes. This implies the Riemann tensor is completely fixed by the Ricci tensor. As a consequence, the Riemann tensor is always written as
\begin{equation}
    \mathcal{R}_{\mu\nu\rho\sigma}=
    \Lambda (g_{\mu\rho}g_{\nu\sigma}-g_{\mu\sigma}g_{\nu\rho})\quad (\mathcal{R}_{\mu\nu}=2\Lambda g_{\mu\nu}),
    \label{eq:max-sym}
\end{equation}
where $\Lambda$ is a real-valued constant representing the cosmological constant.
A spacetime satisfying \eqref{eq:max-sym} is called \textbf{maximally symmetric spacetime}\index{maximally symmetric}.

What is special about three-dimensional gravity is that these maximally symmetric spacetimes are \emph{unique solutions} to Einstein's equation. They are classified into \textbf{AdS} ($\Lambda<0$), \textbf{dS} ($\Lambda>0$), and \textbf{Minkowski} ($\Lambda=0$). For holography, we consider AdS. Since Einstein's equation determines the local structure, all three-dimensional gravity solutions with $\Lambda<0$ are locally AdS. Nevertheless, there are nontrivial as we have seen in Section \ref{sec:asympt-AdS} since their global structure can be different by identifications (quotienting)~\cite{Banados:1992wn,Banados:1992gq}.

Since all solutions with $\Lambda<0$ are locally AdS, we can always perform a coordinate transformation so that the metric becomes that of pure AdS. This greatly simplifies various calculations. Furthermore, this indicates AdS isometry\index{AdS isometry} $SO(2,d)$ is enhanced. Indeed, the Killing vector fields of the asymptotic symmetry generate an infinite-dimensional algebra~\cite{Brown:1986nw,Balasubramanian:1998sn}. %This is crucial for the CFT${}_2$ at the asymptotic boundary as we see in the next section.

One may worry about suppressing higher curvature terms from stringy corrections. However, as we have discussed, three-dimensional gravity can be entirely written as non-Riemann curvatures \eqref{eq:max-sym} and they can be removed via field redefinitions and renormalization of $\Lambda$, resulting in pure Einstein-Hilbert action with cosmological constant and \textbf{(gravitational) Chern-Simons action}\index{gravitational Chern-Simons action}\footnote{In the presence of matter fields, there can be a gauge Chern-Simons term. Supersymmetry\index{supersymmetry} can relate the coefficient of these Chern-Simons terms to the cosmological constant. See references in~\cite{Gupta:2007th}.}~\cite{Gupta:2007th,Witten:2007kt,David:2007ak}
\begin{equation}
    I_{CS}=K \int \dd[3]{x} \epsilon^{\mu\nu\rho} \qty(\frac{1}{2} \Gamma^\tau_{\mu\sigma} \partial_\nu \Gamma^\sigma_{\rho\tau} +\frac{1}{3} \Gamma^\tau_{\mu\sigma} \Gamma^\sigma_{\nu\kappa} \Gamma^\kappa_{\rho\tau} ),
\end{equation}
where $K$ is a constant (typically of order $O(\alpha')$).
When the AdS radius is constant, the variation of the Chern-Simons action vanishes. Thus, it does not play any role as long as we stick to the on-shell action.

\subsection{CFT${}_2$}\label{sec:CFT2}
In the following subsections, we review some basic properties of two-dimensional CFTs. These notes are based on~\cite{BB19618269,1020000782220175744} but all the basic materials are also available in standard textbooks like~\cite{Di_Francesco_1997,Rangamani_2017}.

%\subsection{Conformal transformation in $(1+1)$ dimensions}
Two-dimensional CFT is very special. By taking a complex coordinate for $\mathbb{R}^2\simeq \mathbb{C}$ together with $\cup \{\infty\}$ (known as $\mathbb{CP}^1$), the holomorphic ($z$) and anti-holomorphic ($\bar{z}$) parts of any conformal transformations\index{conformal transformations} (more precisely, the Weyl transformation\index{Weyl transformation}) are factorized. %\footnote{For consistency throughout this dissertation, we denote the complex coordinates on plane by $w$ instead of $z$, which is more commonly used (e.g.~\cite{BB19618269}).} 
Thus, all holomorphic/anti-holomorphic coordinate transformations are conformal. This fact greatly simplifies the calculation of correlation functions among arbitrary spacetime points and on various path integral geometries.
In the following discussion, we mostly focus on the holomorphic sector unless noted.

The miraculous property of two-dimensional CFT is not only the factorization. The conformal symmetry is enhanced to infinite-dimensional \textbf{Virasoro symmetry}\index{Virasoro symmetry}. Originally, conformal transformations are just $SO(2,2)\simeq SL(2,\mathbb{C})$. These are \textbf{\emph{global} conformal transformations}\index{global conformal transformations}, where they are holomorphic on the entire plane. In particular, finite global conformal transformations on $\mathbb{CP}^1$ are written as the \textbf{M\"obious transformation}\index{M\"obious transformation}
\begin{equation}
    z%= x+it_E 
    \mapsto z'=\frac{az+b}{cz+d},
    \label{eq:global-conf-trf}
\end{equation}
where $a,b,c,d\in\mathbb{C}$ and $ad-bc=1$. On the other hand, we can consider \textbf{\emph{local} conformal transformations}\index{local conformal transformations} $z\mapsto z'=f(z);\quad \bar{z}\mapsto \bar{z}'=\bar{f(z)}\equiv \bar{f}(\bar{z})$. They are responsible for the enhancement of symmetry.

\subsubsection{Operators}
Now, let us take a closer look at the operator transformation.
As we have mentioned, since conformal transformations are factorized into holomorphic and anti-holomorphic sectors, we can also treat a (primary) operator\index{primary operator} \emph{as if} it is factorized into a product $O(z,\bar{z})=O(z)\bar{O}(\bar{z})$ as long as we consider a conformal transformation.\footnote{We do not mean the operator is really factorized. It is just a convenient choice for calculations.} The conformal dimension\index{conformal dimension} is also split into holomorphic and anti-holomorphic \textbf{conformal weights}\index{conformal weights}: $\Delta=h+\bar{h}$. For operators with zero spins, $s=h-\bar{h}=0$ thus $h=\Delta/2$.

In two-dimensional CFT, a primary operator\index{primary operator} is an operator which transforms as\footnote{We sometimes call $\qty(\frac{dz'}{dz})^h$ a conformal/Weyl factor\index{conformal factor}\index{Weyl factor}.}
\begin{equation}
    O(z)=O'(z')\qty(\frac{dz'}{dz})^h\quad (h:\text{conformal weight of } O)
    \label{eq:primary-2d}
\end{equation}
under an \emph{arbitrary} holomorphic function (i.e. local conformal transformation). When \eqref{eq:primary-2d} holds only for global conformal transformations, $O$ is called a \textbf{quasi-primary operator}\index{quasi-primary operator}.
\eqref{eq:primary-2d} reproduces \eqref{eq:scaling}, the transformation rule for primary operators in CFT${}_d$. Note that \eqref{eq:primary-d-dim} is the necessary and sufficient condition for $O$ to be primary in CFT${}_{d\ge 3}$ but not in CFT${}_2$. They are only the necessary condition. Quasi-primary operators do satisfy \eqref{eq:primary-d-dim} but not primary since by definition \eqref{eq:primary-2d} does not hold for them unless $f(z)$ is a global conformal transformation.

Let us consider when the local conformal transformation\index{local conformal transformations} is infinitesimal, $z' \equiv z+ \epsilon^z(z)$, where $\epsilon^z(z)$ is an infinitesimally small holomorphic function. If we choose $\epsilon^z(z)=\epsilon l_n \equiv \epsilon z^{n+1}$, i.e.
\begin{equation}
    z' = z+\epsilon z^{n+1}\ (n\in\mathbb{Z}),
\end{equation}
where $\epsilon\ll 1$ is an infinitesimal parameter, the transformation law \eqref{eq:primary-2d} becomes
\begin{equation}
    \delta_{\epsilon} O (z) \equiv O'(z')-O(z) = \epsilon\qty[ z^{n+1} \partial_z O(z) +h(n+1) z^n O(z) ].
    \label{eq:infinitesimal-trf}
\end{equation}

\subsubsection{Virasoro algebra}
From the infinitesimal local conformal transformations \eqref{eq:infinitesimal-trf}, we can construct an algebra generating them. They are called the \textbf{Virasoro algebra}\index{Virasoro algebra}. The constituting operators are \textbf{Virasoro operators}\index{Virasoro operators} $\{L_n\}$, which follow \eqref{eq:infinitesimal-trf}:
\begin{equation}
    \comm{L_n}{O(z)}= z^{n+1} \partial_z O(z) +h(n+1) z^n O(z).
    \label{eq:infinitesimal-trf2}
\end{equation}
Since the infinitesimal transformation of translation corresponds to $n=-1$, that of rotation and dilatation corresponds to $n=0$, and that of special conformal transformation corresponds to $n=1$, $L_{\pm 1}$ and $L_0$ generate global conformal transformations.

We can now examine the Virasoro algebra by considering how a primary operator transforms with the action $\comm{L_m}{L_n}$. One would expect the commutation relation is given by
\begin{equation}
    \comm{L_m}{L_n} \stackrel{?}{=} (m-n) L_{m+n}. 
\end{equation}
Is this the Virasoro algebra? No! This is known as the Witt algebra\index{Witt algebra}. The Virasoro algebra is actually its central extension. We know adding c-numbers to the element does not change \eqref{eq:infinitesimal-trf2}. This means we are free to add a c-number on the right-hand side of the commutation relation. 
%To decide the central extension term, recall that an infinitesimal transformation must be generated by a conserved charge. Since the transformation we are considering is the Weyl transformation, the Noether charge should be given by an integral of the energy-momentum tensor.
To determine the central extension term, we need to know which operator in CFT generates the variation of fields under conformal transformations. From the \textbf{conformal Ward identity}\index{conformal Ward identity} (Appendix \ref{app:conf-ward}), we know
the energy-momentum tensor\index{energy-momentum tensor} generates a conformal transformation:
\begin{equation}
    \delta_\epsilon O(z_i) = \frac{1}{2\pi i} \oint_{z_i} \dd{z} \epsilon^z (z) T(z) O(z_i).
    \label{eq:CWI-body}
\end{equation}
Plugging in \eqref{eq:CWI-body} a specific transformation $\epsilon^z(z)=\epsilon z^{n+1}$ should give us the Virasoro operator $L_n$ in terms of the energy-momentum tensor. If we always take the radial ordering, we have $\delta_\epsilon O = L_n O$ for the choice of $\epsilon$. Equating these two, they are related at the operator level as
\begin{equation}
    L_n = \frac{1}{2\pi i} \oint \dd{z} z^{n+1} T(z),\quad \forall n\in \mathbb{Z}.
    \label{eq:virasoro-em}
\end{equation}
This indicates $\{L_n\}$ are nothing but the Laurent expansion coefficients of $T(z)$. The correct commutation relation defining the Virasoro algebra is derived from the commutator of the energy-momentum tensor. The details will be discussed later in this subsection. For a moment, we just introduce Virasoro algebra\index{Virasoro algebra}; the commutation relation is
\begin{equation}
    \comm{L_m}{L_n}= (m-n)L_{m+n} +\frac{c}{12}m(m^2-1)\delta_{m+n,0}.
\end{equation}
$c$ is called the \textbf{central charge}\index{central charge}\index{$c$}. It is known that $c$ counts the number of degrees of freedom, thus it is a monotonically decreasing function along the RG flow ($c$-theorem\index{$c$-theorem})~\cite{Zamolodchikov:1986gt}.\footnote{For example, $c=N$ for the $N$ free boson system.} We will comment on this  aspect more in the next section.

\subsubsection{States}
Since $L_0$ corresponds to dilatation, the (quasi-)primary state\index{primary state} $\ket{O}\equiv O(0)\ket{0}$ should satisfy \eqref{eq:primary-d-dim}. In Virasoro algebra, this is translated to
\begin{equation}
    L_0 \ket{O}= L_0 O \ket{0}= (\comm{L_0}{O}+OL_0)\ket{0} = \lim_{z\rightarrow 0}(z\partial_z O(z) +h O(z)) \ket{0} = h\ket{O}
\end{equation}
and
\begin{equation}
    L_{1}\ket{O}=0. \label{eq:spe-conf-2d}
\end{equation}
Note that the CFT vacuum $\ket{0}$ is invariant under the global conformal transformation\index{global conformal transformation}, i.e. $L_0\ket{0}=0$. In fact, \eqref{eq:spe-conf-2d} can be extended to $\forall n\ge 1$.

In CFT${}_2$, it is convenient to use the \textbf{radial quantization}\index{radial quantization}, in which operators are quantized according to radially growing time on the complex plane. By a conformal transformation $z=e^{iw}$, the origin $z=0$ is mapped to $w=i\infty$; the infinity $z=\infty$ is mapped to $w=-i\infty$. If we parametrize $w$ by $w=\theta-i\tau_E$, we have the following correspondence:
\begin{align}
    z=0\ \text{(origin at plane)} &\Leftrightarrow \tau_E = -\infty \ \text{(infinite past on cylinder)} \\
    z=\infty\ \text{(infinity at plane)} &\Leftrightarrow \tau_E = \infty \ \text{(infinite future on cylinder)}.
\end{align}
Note that since $z=e^{iw}$ has a $2\pi$-periodicity along $\theta$, $w$-coordinates live on a cylinder $\mathbb{R}\times S^1$.
From the correspondence, it is manifest that the radial quantization is equivalent to the usual quantization on the cylinder. The ket state is defined by the Euclidean path integral from infinite past on the cylinder, i.e. $z=0$ in the original coordinates as we have seen. On the other hand, the bra state is defined from infinite future, i.e. $z=\infty$. By mapping to its reciprocal, we can define the bra vector at the origin. Back to the original coordinates, we need the conformal factor corresponding to inversion:
\begin{equation}
    \bra{O}\equiv\lim_{z\rightarrow \infty} z^{2h} \bra{0} O(z).
\end{equation}
Note that $\bra{0}$ satisfies $\bra{0}L_n=0$ for $\forall n \le -1$. This can be understood from holomorphicity around $z=\infty$. From \eqref{eq:virasoro-em}, if there is no operator insertion, the residue at $z=\infty$ becomes zero whenever $n \le -1$. A similar argument holds for $\ket{0}$ by discussing $L_n$ around $z=0$.

\subsubsection{Primary operators}
Correlators of primary operators\index{primary operator} are similar to the higher dimensional cases. The two-point function of primary operators is
\begin{equation}
    O(z)O(w)\sim \frac{1}{(z-w)^{2h}}.
    \label{eq:2pt-2d}
\end{equation}
The three-point function is completely fixed except for the OPE coefficients.

As we have seen, the four-point function is governed by two cross ratios. This is also true for the two-dimensional case.
We can always move three points to a desired location by a global conformal transformation\index{global conformal transformation}. The canonical choice of four-point correlator is
\begin{equation}
    \ev{O_1(z_1,\bar{z}_1)O_2(z_2,\bar{z}_2)O_3(z_3,\bar{z}_3)O_4(z_4,\bar{z}_4)}\mapsto G(z,\bar{z}) \equiv \ev{O_4(\infty)O_3(z,\bar{z})O_2(1)O_1(0)}.
\end{equation}
This is possible by
\begin{equation}
    z\mapsto \frac{z_3-z_1}{z_3-z_4}\frac{z-z_4}{z-z_1}\equiv z'.
\end{equation}
Indeed, this is a global conformal transformation\index{global conformal transformation} \eqref{eq:global-conf-trf}. $z$ is related to the cross ratios \eqref{eq:cross-ratios}, $\abs{z'}^2=u^{-1}$ and $\abs{1-z'}^2=v^{-1}$.\footnote{One can remove the inverse by an additional global conformal transformation.}
Furthermore, using OPE twice, we can expand into conformal blocks\index{conformal block} as we have discussed earlier. The difference from previous cases is that the holomorphic and anti-holomorphic sectors are factorized~\cite{Belavin:1984vu,Caputa:2014vaa}:
\begin{equation}
    G(z,\bar{z})=\sum_p a_p \mathcal{F}(c,h_p,h_i,1-z) \mathcal{F}(c,\bar{h}_p,\bar{h}_i, 1-\bar{z}),
    \label{eq:conf-block-gen}
\end{equation}
where $p$ labels the intermediate operator, $h_p$ is its conformal weight, $h_i$ collectively denotes all the conformal weights of external operators $O_{1,2,3,4}$, and $a_p$ is a product of OPE coefficients $c_{12p}c_{34p}$. $\mathcal{F}$ is known as the \textbf{Virasoro conformal block}\index{Virasoro conformal block} or \textbf{conformal partial wave}\index{conformal partial wave}.

\subsubsection{Energy-momentum tensor}
%Ward identity
Other important correlators are those involving the energy-momentum tensor (also known as stress (energy) tensor)\index{energy-momentum tensor}\index{stress (energy) tensor}.
The trace of the energy-momentum tensor vanishes classically due to the Weyl invariance\index{Weyl invariance}. However, at the quantum level, it can have a finite value depending on the spacetime curvature
\begin{equation}
    \ev{T^\mu_\mu}\equiv \ev{-g^{\mu\nu}\frac{2}{\sqrt{-g}}\frac{\delta I}{\delta g^{\mu\nu}}}=g^{\mu\nu}\frac{2}{\sqrt{-g}}\frac{\delta W}{\delta g^{\mu\nu}} =\frac{1}{2\pi}\frac{c}{12}\mathcal{R},
    \label{eq:weyl-anomaly}
\end{equation}
where $W=-\log Z$.
This is known as the \textbf{trace} or \textbf{Weyl anomaly}\index{trace anomaly}\index{Weyl anomaly}. It is proportional to the central charge\index{central charge} $c$.\index{$c$} From this, it follows that
\begin{align}
    \ev{\delta I} &= \frac{1}{2} \int \dd[d]{x} \sqrt{-g} \ev{T^{\mu\nu}} \delta g_{\mu\nu}\\
    & = \int \dd[d]{x} \sqrt{-g} \sigma (x) \frac{c}{24\pi}\mathcal{R}(x)
    \label{eq:weyl-anomaly-2d}
\end{align}
for a \textbf{Weyl transformation}\index{Weyl transformation} $\delta g_{\mu\nu}= 2 \sigma(x) g_{\mu\nu}(x)$.\footnote{It is worth noting that when $\sigma(x)$ is constant, \eqref{eq:weyl-anomaly-2d} is proportional to $c$ times the Euler number\index{Euler number} $\chi$ since $\int \sqrt{-g} \mathcal{R}=4\pi \chi$. This is common for four dimension cases, which will be briefly commented in Section \ref{sec:hol-CFT2}.}
Thus, even if the trace part of the energy-momentum tensor vanishes on the flat spacetime, it can be nonzero after a local conformal transformation, which distorts the metric.
%For example, a thermal CFT corresponding to the Euclidean path integral on a cylinder has a negative (one-point function of) energy-momentum tensor. This is nothing but the Casimir energy. More explicitly, it is a consequence of Schwarzian in the transformation rule for the energy-momentum tensor.

A slight generalization of \eqref{eq:weyl-anomaly} leads to
\begin{equation}
    \ev{T^\mu_\mu X }=-\frac{c}{12} \mathcal{R}\ev{X}.
\end{equation}
for some field $X$. By rewriting the curvature in terms of the metric perturbation, we can derive OPEs\index{operator product expansion} for $T(z)O(z')$, $T(z)T(z')$, etc~\cite{BB19618269}. The correlation function made of multiple energy-momentum tensors can be generated by taking a variation with respect to the metric. Without derivations, we list a few important OPEs related to the energy-momentum tensor:
\begin{align}
    T(z)O(w) &\sim \frac{h}{(z-w)^2}+\frac{\partial_w O(w)}{z-w}\qq{($O$: primary operator)} \label{eq:TO-OPE}\\
    T(z) T(w) &\sim \frac{c/2}{(z-w)^4} + \frac{2T(w)}{(z-w)^2} + \frac{\partial_w T(w)}{z-w},\label{eq:TT}
\end{align}
where $\sim$ indicates the expansion when $z\approx w$ and all regular terms are omitted. We may define the central charge with \eqref{eq:TT}.
From \eqref{eq:TT}, we know the energy-momentum tensor is not primary but quasi-primary since its two-point function does not obey \eqref{eq:2pt-2d}. Thus, the transformation rule for the energy-momentum tensor also deviates from \eqref{eq:primary-2d}. As a local conformal transformation is generated by the energy-momentum tensor, we can read off its finite transformation rule from \eqref{eq:TT}:
\begin{equation}
    \langle T(w) \rangle = \left(\frac{dw^\prime}{dw}\right)^2 \left( \langle T'(w^\prime) \rangle -\frac{c}{12}\{w;w^\prime\} \right),
    \label{eq:EM-trf-rule}
\end{equation}
where
\ba
\{w;w^\prime\}\equiv\frac{\de_{w^\prime}^3 w}{\de_{w^\prime} w} - \frac{3}{2} \left(\frac{\de_{w^\prime}^2 w}{\de_{w^\prime} w}\right)^2
\ea
is the \textbf{Schwarzian derivative}\index{Schwarzian derivative}\index{Schwarzian |see Schwarzian derivative}.\footnote{The Schwarzian satisfies
\begin{equation}
    \{w;z\}=-\qty(\frac{dw}{dz})^2 \{z;w\}
\end{equation}
so that \eqref{eq:EM-trf-rule} can be reversed.} The Schwarzian term represents the deviation from the transformation law of primary operators, i.e. the Schwarzian derivative vanishes for global conformal transformations. From this, even if the energy-momentum tensor vanishes on the plane,\footnote{This is because the vacuum on the plane satisfies
\[L_{n\ge -1}\ket{0}=0 = \bra{0} L_{n\le 1} \Leftrightarrow \mel{0}{L_{\forall n}}{0}=0.\] 
This implies $\mel{0}{T(z)}{0}=0$ since $T(z)=\sum_n L_n/z^{n+2}$.} it may not vanish after a local conformal transformation. One example is $z=e^{iw}$, mapping from plane to cylinder. From \eqref{eq:EM-trf-rule}, we have
\begin{equation}
    \ev{T'(w)}=-z^2 \ev{T(z)}+\frac{c}{24} = \frac{c}{24}.
\end{equation}
Then, the energy density on a cylinder is\footnote{Here the circumference is $2\pi$ since $\Re w\sim \Re w +2\pi$. If the circumference is taken to be $2\pi R$, we need a further rescaling $w\mapsto w'=R w$. Then, the total vacuum energy density becomes $-\dfrac{c}{12}\dfrac{1}{2\pi R^2}$ and the total energy is given by $-\dfrac{c}{12R}$.}
\begin{equation}
    \ev{T_{\tau\tau}}=\ev{T_{ww}}+\ev{\bar{T}_{\bar{w}\bar{w}}}= -\frac{1}{\pi}\ev{T(w)} 
    =-\frac{1}{2\pi}\frac{c}{12},
    \label{eq:CFT-vac-ene}
\end{equation}
where we used $\ev{T(w)}=-2\pi\ev{T_{ww}}=\ev{\bar{T}(\bar{w})}=-2\pi \ev{\bar{T}_{\bar{w}\bar{w}}}$ for the second equality. The energy density is negative as we expect from the Casimir energy\index{Casimir energy}.

\subsection{Replica trick in two-dimensional CFTs}\label{sec:replica-cft}
As we have discussed in Section \ref{sec:replica}, EE can be calculated from $\Tr\rho_A^n$. In CFT, it is useful to reduce the problem to the calculation of correlators, which are highly constrained by symmetry. In fact, $\Tr\rho_A^n$ can be mapped to a two-point function on a single complex plane after a local conformal transformation. The following review is based on~\cite{Calabrese:2004eu,Calabrese:2009qy,Casini:2009sr}.

Instead of considering an $n$-sheeted Riemann surface, let us consider replica fields\index{replica fields} on a single complex plane. We consider the case when the subregion is an interval, whose endpoint is $z=u$ and $z=v$. Then, $\Tr\rho_A^n$ is interpreted as a Euclidean partition function on a single complex plane with $n$ replica fields $\{\phi^{(k)}\}_{k=1,\cdots,n}$ with the boundary conditions
\begin{equation}
\begin{cases}
    \phi^{(k)}(e^{2\pi i}(z-u)) &= \phi^{(k+1)}(z-u)\\
    \phi^{(k)}(e^{-2\pi i}(z-v)) &= \phi^{(k+1)}(z-v)
\end{cases}
\quad (k=1,\cdots,n),
\end{equation}
where $\phi^{(n+1)}\equiv\phi^{(1)}$.
These boundary conditions can be imposed as a monodromy\index{monodromy} around operators inserted at each endpoint $z=u,v$. Such operators are called \textbf{twist operators}\index{twist operators}, $\sigma_n$ and $\tilde{\sigma}_n$. Using them, it follows that
\begin{equation}
    \Tr\rho_A^n = \ev{\sigma_n(u)\tilde{\sigma}_n(v)}.
\end{equation}
To compute this two-point function, we need to know the conformal dimension of twist operators. Recall that conformal weights can be read from the OPE with the energy-momentum tensor $T(z)$ \eqref{eq:TO-OPE}. There are $n$ replica fields and each has the corresponding $T(z)$. Since the theory is CFT${}^n/\mathbb{Z}_n$,\footnote{We need to divide by $\mathbb{Z}_n$ since we assume all the replicas are identical (the replica symmetry). Note that if we take $n\rightarrow 0$ instead of $n\rightarrow 1$ to compute free energy, it is known that the replica symmetry breaking (RSB) happens in the spin-glass theory~\cite{mezard1987spin,sherrington1998spin,castellani2005spinglass}. For interested readers in quantum gravity community, see~\cite{Engelhardt:2020qpv} for a recent discussion in the context of gravitational path integral.} we can just consider a single energy-momentum tensor and multiply by $n$.\footnote{This means the central charge of $n$-replicated theory is $cn$.} In the original picture, it is just a single insertion of the energy-momentum tensor in the $n$-sheeted Riemann surface $\Sigma_n$, we have
\begin{equation}
    \ev{T(z)}^{\text{(with replicas)}}_{\mathbb{C}}\equiv \frac{\ev{T(z) \sigma_n (u)\tilde{\sigma}_n (v)}_{\mathbb{C}}}{\ev{\sigma_n (u)\tilde{\sigma}_n (v)}_{\mathbb{C}}} = n \ev{T(z)}_{\Sigma_n},
    \label{eq:replica-EM-equation}
\end{equation}
where we denoted the complex coordinates on $\Sigma_n$ by $w$. To compute the right-hand side, we need to find a conformal map from $\Sigma_n$ to $\mathbb{C}$. The glued interval can be mapped to a semi-infinite line by moving one endpoint to the origin and the other endpoint to infinity:
\begin{equation}
    z\mapsto \frac{z-u}{z-v}.
\end{equation}
The resulting Euclidean path integral looks like the right of Fig.\ref{fig:euclid-pathint}. Then, we can just take the $n$-th root so that the periodicity around the origin is $2\pi$. To summarize, the conformal map
\begin{equation}
    z\mapsto \qty(\frac{z-u}{z-v})^{1/n}\equiv w
\end{equation}
maps $\Sigma_n$ to $\mathbb{C}$.
Applying this conformal map to \eqref{eq:EM-trf-rule}, we obtain
\begin{equation}
    \ev{T(z)}_{\Sigma_n} = \frac{c}{24}\qty(1-\frac{1}{n^2}) \qty(\frac{u-v}{(z-u)(z-v)})^2.
    \label{eq:replica-EM}
\end{equation}
For a more detailed calculation, see Appendix \ref{app:replica-calc-EM}.

Plugging \eqref{eq:replica-EM} into \eqref{eq:replica-EM-equation}, we obtain
\begin{equation}
    \ev{T(z) \sigma_n (u)\tilde{\sigma}_n (v)}_{\mathbb{C}} = \frac{c}{24}\qty(n-\frac{1}{n}) \qty(\frac{u-v}{(z-u)(z-v)})^2 \ev{\sigma_n (u)\tilde{\sigma}_n (v)}_{\mathbb{C}}.
\end{equation}
By taking $z\rightarrow u$ or $v$, we can compare the equation to the OPE \eqref{eq:TO-OPE} to find 
\begin{equation}
    \Delta_n= h_n + \bar{h}_n = \frac{c}{12}\qty(n-\frac{1}{n})
\end{equation}
for both $\sigma_n$ and $\tilde{\sigma}_n$.

Finally, EE\index{entanglement entropy} \eqref{eq:replica} is calculated as
\begin{align}
    S_{\mathrm{EE}} &= -\lim_{n\rightarrow 1} \pdv{n} \left[\Tr_A\rho_A^n\right] \\
    &= -\lim_{n\rightarrow 1} \pdv{n} \ev{\sigma_n(u)\tilde{\sigma}_n(v)}\\
    & = -\lim_{n\rightarrow 1} \pdv{n} (l/\epsilon)^{-\frac{c}{6}\qty(n-\frac{1}{n})} \\
    &= \frac{c}{3}\log\frac{l}{\epsilon},
    \label{eq:calabrese-EE}
\end{align}
where $\abs{u-v}=l$. This completes the derivation of EE of a single interval in CFT${}_2$, known as the \textbf{Calabrese-Cardy formula}\index{Calabrese-Cardy formula}.

Although we have calculated EE on an infinite line, we can compactify it to $S^1$ with a finite circumference $L$ by a conformal map $z\mapsto \tan (\pi z/ L)$. EE is given by~\cite{Calabrese:2004eu}
\begin{equation}
    S_{\mathrm{EE}}=\frac{c}{3}\log \qty[\frac{L}{\pi\epsilon}\sin\qty(\frac{\pi l}{L})].
\end{equation}
If one increases the interval size from $l=0$, EE grows from zero and becomes maximal at $l=L/2$, when the interval size is just a half of the total size. Then, EE decreases and eventually goes to zero at $l=L$, which is expected from the pure state nature. Such behavior is called a \textbf{Page curve}\index{Page curve}.

\subsection{First law of entanglement entropy}\label{sec:first-law}
While EE is an information-theoretic quantity, it can be related to a physical quantity, energy. A primitive example is thermodynamic entropy. It can be regarded as EE by considering a TFD state. We know the first law of thermodynamics relates entropy to energy: $\Delta S = \beta \Delta E$. 
This idea can be generalized to entanglement entropy as well. 
The Bisognano-Wichmann theorem\index{Bisognano-Wichmann theorem}~\cite{Bisognano:1976za} tells us the modular Hamiltonian for a half-space is the boost Hamiltonian \eqref{eq:modular-ham-half}
\begin{equation}
    K_A=2\pi \int_{x^0=0.x_\perp\ge 0} \dd[d-1]{x_\perp} x_1 T^{00}.
\end{equation}
This indicates entanglement entropy is thermodynamic entropy for an accelerated observer (\textbf{Unruh effect}\index{Unruh effect}). Can we go beyond this specific subregion choice? The answer is yes for CFTs. In CFTs (not limited to two dimensions), one can conformally map a half space to a spherical region.
This is known as the \textbf{Casini-Huerta-Myers (CHM) map}\index{Casini-Huerta-Myers map}~\cite{Casini:2011kv}. 

While these arguments can relate EE itself to the canonical energy, we need to consider their difference from a reference state like the vacuum to discuss the \textbf{first law of EE}\index{first law of entanglement entropy}. Originally, the first law of EE was proposed in~\cite{Bhattacharya:2012mi} based on the holographic calculation. Their conjecture was 
\begin{equation}
    \Delta S_A = \frac{2\pi}{d+1} l \Delta E_A
    \label{eq:first-law-hol}
\end{equation}
for a sufficiently small subregion $A$. $\Delta$ means a difference between the vacuum and a slightly excited state. $S_A$ is EE for a spatial subregion $A$ on a $(d-1)$-dimensional time slice and $E_A$ is the canonical energy defined as
\begin{equation}
    \Delta E = \int_A\dd[d-1]{x} \ev{T_{00}}.
\end{equation}
While this seems to be true, it was not clear whether we can discuss the first law without taking a small subregion limit. 

Later, a more general version of the first law was discussed by~\cite{Blanco:2013joa} based on quantum information. More specifically, their conclusion stems from the non-negativity of \textbf{relative entropy}\index{relative entropy}, \begin{equation}
    S(\rho+\delta\rho||\rho) = -\tr \qty[(\rho+\delta\rho)[\log\rho - \log(\rho+\delta\rho)]] ,
\end{equation}
and the first law was generalized for generic states $\rho$ and its perturbation $\delta\rho$ to first order.

We will briefly review the information-theoretic first law by~\cite{Blanco:2013joa}, then see how it is consistent with \eqref{eq:first-law-hol} proposed by~\cite{Bhattacharya:2012mi} by using the modular Hamiltonian of a spherical subregion~\cite{Casini:2011kv}. In the following, we omit the subregion subscript $A$.

The \textbf{modular Hamiltonian}\index{modular Hamiltonian} is defined as $K=-\log\rho$. We denote its expectation value under the variation of the state by
\begin{equation}
    \Delta\ev{K} \equiv \tr\qty[(\delta\rho)K].\label{eq:MH-var}
\end{equation}
EE is defined as $S=\tr(\rho K)$, thus its variation is given by
\begin{align}
    \Delta S &\qty(= -\tr\qty[(\rho+\delta\rho)\log(\rho+\delta\rho)-\rho\log\rho])\nonumber\\
    & =\tr \qty[(\delta\rho) K] + \tr (\rho\Delta K) \nonumber \\
    & = \Delta\ev{K} +\ev{\Delta K}.\label{eq:EE-var}
\end{align}
The relative entropy is given by
\begin{align}
    S(\rho+\delta\rho||\rho) &= \tr\qty[(\rho+\delta\rho)\qty(\log(\rho+\delta\rho)-\log\rho)]\\
    &= \tr (\rho\Delta K) + \tr \qty[(\delta\rho) \Delta K].
\end{align}
From \eqref{eq:MH-var} and \eqref{eq:EE-var},
\begin{align}
    S(\rho+\delta\rho||\rho) &= \Delta S - \Delta\ev{K} + \tr \qty[(\delta\rho) \Delta K]\\
    &= \Delta S - \Delta\ev{K} + O\qty((\delta\rho)^2).
\end{align}
Since the relative entropy is non-negative (see~\cite{nielsen_chuang_2010} for its proof using Klein's inequality\index{Klein's inequality}), it must be positive for both positive and negative $\delta\rho$ to the first order. This means the first-order relative entropy $\Delta S - \Delta\ev{K}$ must vanish. This is the \textbf{first law of EE}\index{first law of entanglement entropy}. 

In general, the modular Hamiltonian is very complicated, however, using a special conformal transformation as well as translation, one can map a half space to a spherical subregion. \eqref{eq:modular-ham-half} is mapped to~\cite{Casini:2011kv}
\begin{equation}
    K=2\pi \int_A\dd[d-1]{x} \frac{l^2-r^2}{2l} T_{00},
\end{equation}
where $l$ is the radius of the spherical subregion $A$. Now we would like to evaluate this with respect to an excited state and the vacuum. Since the vacuum energy on a plane is zero, $\ev{K}=\Delta\ev{K}$. From the first law of EE, we have
\begin{equation}
    \Delta S = 2\pi \int_A \dd[d-1]{x} \frac{l^2-r^2}{2l} \ev{T_{00}}.
\end{equation}
When $\ev{T_{tt}}$ is approximately constant, we can evaluate the above integral. This happens when the subregion size is sufficiently small. 
Under this approximation,
\begin{equation}
    \Delta S \approx 2\pi \ev{T_{00}} \int_A \dd[d-1]{x} \frac{l^2-r^2}{2l} = 2\pi \frac{\Delta E}{|A|} \frac{l|A|}{d+1} =\frac{2\pi}{d+1} l \Delta E,
    \label{eq:first-law-rel}
\end{equation}
where we introduced the canonical energy $\Delta E \equiv \int_A\dd[d-1]{x} \ev{T_{00}}\approx |A| \ev{T_{00}}$. $|A|$ is the volume of the spherical subregion, i.e. $|A|=l^{d-1}\Omega_{d-2}/(d-1)$.
\eqref{eq:first-law-rel} is nothing but \eqref{eq:first-law-hol}.
The first law of EE is often useful for a consistency check for holographic duality as EE and energy-momentum tensor are calculated separately (both in the bulk and boundary).

\section{Holographic CFT${}_2$}\label{sec:hol-CFT2}

Holographic CFT is characterized by a large degree of freedom and strong coupling. When its spacetime dimension is two, it is more useful to characterize it by the large central charge $c\gg 1$ and a sparse spectrum~\cite{Hartman:2014oaa}. This is a necessary and sufficient condition for two-dimensional holographic CFTs. This enables us to discuss quite generic features of holography from CFT without assuming the existence of gauge symmetry.% Such a simple characterization is possible because of the strong symmetry enhancement in two dimensions. 

\subsection{Defining holographic CFT$_2$}\label{sec:def-holCFT}
\textbf{Holographic CFT}\index{holographic CFT} is said to be \textbf{large-$\bm{c}$}\index{large-$c$} and \textbf{sparse}\index{sparse spectrum}. In this subsection, we explain what these jargons mean and why they appear as conditions for holographic theories.
\subsubsection{Large $c$}
The intuitive explanation why the large-$c$ limit corresponds to the large-$N$ limit is that both count the number of degrees of freedom of the theory. This becomes clear by considering how the four-dimensional central charge $a$\index{a} analogous to the two-dimensional central charge $c$ is related to $N$. Recall that $c$ appears as a coefficient in the Weyl anomaly\index{Weyl anomaly} \eqref{eq:weyl-anomaly-2d} in CFT${}_2$. To find $a$, it is reasonable to similarly consider the Weyl anomaly.
In four dimensions, it is characterized by two parameters $a$ and $c$ called the Weyl anomaly coefficients:
\begin{equation}
    \delta S = \int \dd[4]{x} \sqrt{-g} \sigma (a \mathrm{Euler} - c \mathrm{Weyl}^2) \quad (\delta g_{\mu\nu}=2\sigma g_{\mu\nu}),
\end{equation}
where $\mathrm{Euler}=\mathcal{R}_{\mu\nu\rho\sigma}^2-4\mathcal{R}_{\mu\nu}^2+\mathcal{R}^2$ is the Euler density\index{Euler density}, whose volume integral equals the Euler number\index{Euler number}, and $\mathrm{Weyl}^2 = \mathcal{R}_{\mu\nu\rho\sigma}^2-2\mathcal{R}_{\mu\nu}^2+\mathcal{R}^2/3$ is the squared Weyl tensor\index{Weyl tensor}, which multiplied by $\sqrt{-g}$ is conformal invariant. The \textbf{$a$-theorem}\index{$a$-theorem}~\cite{Cardy:1988cwa,Komargodski:2011vj} tells us that $a$ is a monotonically decreasing function along the RG flow like $c$ in two dimensions. Furthermore, the coefficient of the universal, logarithmic term of EE is proportional to $a$ as the two-dimensional EE is proportional to $c$. Thus, it is reasonable to identify $a$ as the number of degrees of freedom of a four-dimensional CFT. Since the $a=(N^2-1)/4$ for the $\mathcal{N}=4$ $SU(N)$ SYM, the large-$N$ limit is equivalent to large-$a$. Thus, the corresponding limit in CFT$_2$ should be the large-$c$ limit.

Similar to the case of large $N$, the large-$c$ limit validates the semiclassical approximation. For the AdS${}_3$/CFT${}_2$ correspondence, it can be shown that the asymptotic symmetry\footnote{Since the bulk diffeomorphism\index{diffeomorphism} is not a gauge symmetry at the asymptotic boundary, it constitutes the global asymptotic symmetry. Thus, diffeomorphism at the asymptotic boundary maps a physical state in the dual CFT to another inequivalent one.} generator of AdS${}_3$ constitutes two copies of Virasoro algebra (corresponding to holomorphic and anti-holomorphic sectors)~\cite{Brown:1986nw}. See Section 3.3 of~\cite{Banerjee:2018} for a review. From the equivalence, we can read off a relation between (both chiral and anti-chiral) Virasoro central charge\index{central charge} $c$\index{$c$} and three-dimensional Newton's constant~$G_N$:
\begin{equation}
    c(=\bar{c})=\frac{3R}{2G_N}.
    \label{eq:brown-henneaux}
\end{equation}
This is the celebrated \textbf{Brown-Henneaux central charge}\index{Brown-Henneaux central charge}\index{Brown-Henneaux relation}. From this, it is apparent the large-$c$ limit is equivalent to the small-$G_N$ limit ($G_N\ll R$).

\subsubsection{Sparse spectrum}
The density of low-lying states with $\Delta<\frac{c}{12}\sim O(c)$ is given by~\cite{Hartman:2014oaa}\footnote{Note that $E=\Delta-c/12$ in~\cite{Hartman:2014oaa}.}
\begin{equation}
    \rho\qty(\Delta<\frac{c}{12}) \lesssim e^{2\pi\Delta}<O(c)
\end{equation}
while $\rho(\Delta\ge c/12)\sim O(e^c)$ (black hole microstates). We call such a spectrum sparse\index{sparse spectrum}. For a sparse spectrum, we generally expect a strongly-coupled theory otherwise there are too many light operators.\footnote{For example, the free scalar theory has infinitely many conserved integer spin currents~\cite{Maldacena:2011jn,Kirilin:2018qpy}.} 
Sparseness comes from the fact the thermodynamic entropy in the microcanonical ensemble has a phase transition from $O(1)$ to $O(c)$ at $\Delta=c/12$. This is exactly what we mentioned in Section \ref{sec:AdS-CFT-general-dim} regarding the gap between generalized free fields and others. We will revisit the calculation of the thermodynamic entropy from an interplay between the microcanonical and canonical ensemble in Section \ref{sec:local-op-quench}.

\subsection{Consequence of large $c$ and sparseness}\label{sec:detail-holCFT}
\subsubsection{Operators}
As we have discussed in Section \ref{sec:AdS-CFT-general-dim}, we can classify operators of holographic CFTs based on their order in $c$.
Operators with $\Delta=O(c^0)\equiv O(1)$ are \textbf{generalized free}\index{generalized free}, whose correlation functions are factorized into a product of two-point functions. We will later see that the factorization property can be explicitly confirmed from the large-$c$ conformal block. The other class of operators has a conformal dimension that scales with the central charge $c$, i.e. $\Delta/c$ is held fixed. We further classify them into two subclasses: (i) \textbf{light operators}\index{light operators} such that $1 \ll \Delta \ll c$ and (ii) \textbf{heavy operators}\index{heavy operators} such that $1 \ll \Delta \sim O(c)$.\footnote{Sometimes we refer to both generalized free and light operators as simply `light'.} An example of light operators is a twist operator $\sigma_n$ around $n=1$ since its conformal dimension vanishes at $n=1$ although it is proportional to $c$. %We will discuss this in detail in Section \ref{sec:HEE-2dCFT}.

\subsubsection{Large-$c$ conformal block}
In the large-$c$, sparse CFTs, we can say a bit more about the conformal blocks \eqref{eq:conf-block-gen} appearing in the expansion of a four-point function. There are mainly two approaches~\cite{Fitzpatrick:2014vua}: (i) direct summing over the intermediate states using the so-called \textbf{graviton basis}\index{graviton basis} and (ii) the \textbf{monodromy method}\index{monodromy method}. %We are interested in (ii) for our purpose later.  
We briefly review them in the following. A more detailed summary is provided in Section 4.2 of~\cite{KusY:2021} and Section 1 of~\cite{Piatek:2021aiz}.

The first method (i) can deal with generalized free fields or %not-so-heavy 
light fields, i.e. %$h_i/c\rightarrow 0$ 
$h_i/c\ll 1$ while $c\rightarrow \infty$. One can focus on either the \textbf{identity (vacuum) conformal block}\index{identity conformal block}\index{vacuum conformal block}~\cite{Fitzpatrick:2014vua}, where the intermediate operator $O_p$ is identity or its descendants, or the \textbf{global conformal block}\index{global conformal block} (also known as semiclassical conformal block\index{semiclassical conformal block})~\cite{Dolan:2000ut,Dolan:2003hv}, where we only consider $SL(2,\mathbb{C})$ (global conformal transformation) descendants, corresponding to neglecting backreaction and taking only AdS isometry into account. They are expanded in the graviton ``basis'' and the identity Virasoro block\index{identity Virasoro block} has been calculated in this regime~\cite{Fitzpatrick:2014vua,Caputa:2014vaa}.\footnote{The reason we quoted ``basis'' is that these graviton states are orthogonal to each other only when $c\rightarrow \infty$.} 
%Thus, the conformal block is dominated by an identity intermediate state in the large-$c$ limit. It is called the vacuum or identity conformal block and takes a following universal form (See the first term of (11) in~\cite{Zamolodchikov:1984eqp}\footnote{In the paper, the central charge is denoted by $C$.}):
%\begin{equation}
%    F_O(b|z)\simeq z^{\Delta_b-2\Delta_O} {}_2F_1 (\Delta_b,\Delta_b,2\Delta_b,z),
%\end{equation}
%where ${}_2F_1(a,b,c,z)$ is the hypergeometric function.
%For its application to entanglement entropy calculated from the $n$-sheeted Riemann surface, see~\cite{Caputa:2014vaa}.

The second method (ii) is what we focus on later~\cite{Asplund:2014coa,Hartman:2013mia}. It deals with the large-$c$ limit with $h_i/c$ and $h_p/c$ fixed. Although it has not been proven rigorously, it is believed that in this limit the Virasoro conformal block exponentiates~\cite{1987TMP....73.1088Z,Belavin:1984vu}:
\begin{equation}
    \mathcal{F}(c,h_p,h_i,x)\approx \exp\qty[-\frac{c}{6} f\qty(\frac{h_p}{c}, \frac{h_i}{c}, x)].
\end{equation}
One piece of evidence for this exponentiation is that this happens in the semiclassical limit of the Liouville theory. 

Assuming the exponentiation, we need to sum over these exponentials in the four-point function. At least when $z\rightarrow 1$, the largest contribution is dominated by the identity block, i.e. $O_p=\bm{1}\Rightarrow a_p=1$. %(and its descendants). 
Since the identity operator corresponds to the vacuum state via the operator/state correspondence, this observation is called the vacuum block dominance.
The problem is reduced to solving for $f_0(h_i/c,x)\equiv f(0,h_i/c,x)$.
When we discuss EE, the conformal dimension of the twist operator becomes light. Then, it enables us to perform perturbation theory in $\Delta_n/c$ $(n\rightarrow 1)$ but nonperturbatively for the other external operator appearing in the conformal block. This is known as the \textbf{heavy-heavy-light-light (HHLL) identity block}\index{heavy-heavy-light-light block}\index{HHLL block}~\cite{Fitzpatrick:2014vua,Asplund:2014coa}.

From the conformal Ward identity \eqref{eq:CWI} for a so-called level-2 null field $\psi(z)$, we obtain a second-order ordinary differential equation for $\psi(z)$. By analyzing the monodromy property around its singularity, i.e. the choice of OPE channels, we can obtain the functional form of the conformal block. See Section 5.2 of~\cite{Banerjee:2018} for a brief review and see appendices of~\cite{Fitzpatrick:2014vua} for comprehensive reviews on conformal blocks. 

After all, the holomorphic part of the HHLL correlator ($h_1=h_4=h_O$ and $h_2=h_3=h_n \ll c$) becomes~\cite{Asplund:2014coa}
\begin{align}
    G_n(z) &\approx \mathcal{F}\qty(c,0,\frac{h_O}{c},\frac{h_n}{c},z)\\
    &\approx \exp\qty[-\frac{c}{6}f_0\qty(\frac{h_O}{c},\frac{h_n}{c},1-z)]\\
    &= \qty[\frac{z^{1-\alpha_O}(1-z^{\alpha_O})^2}{\alpha_O^2}]^{-h_n+O\qty((n-1)^2)},
\end{align}
where
\begin{equation}
    \alpha_O = \sqrt{1-\frac{24 h_O}{c}}.
    \label{eq:alpha-O-monod}
\end{equation}
The first approximation comes from picking identity only and the second approximation is about the exponentiation, which should be accompanied by an $O(c^0)$ correction in the exponent.
Instead of explaining all the details of the monodromy method deriving this result, we will give another explanation for why  \eqref{eq:alpha-O-monod} appears from a different perspective in the next section (Section \ref{sec:local-op-quench}). Surprisingly, this is related to switching between microcanonical and canonical ensembles.

%\subsubsection{Holographic stress tensor}
%Variation with respect to metric

%read out from metric perturbation

%In the presence of local operator quench,

\subsection{Holographic entanglement entropy -- demonstration}\label{sec:HEE-2dCFT}
In Section \ref{sec:HEE-CFT}, we discussed the holographic entanglement entropy\index{holographic entanglement entropy} proposal, in particular, the RT formula\index{RT formula |see Ryu-Takayanagi formula } \eqref{eq:HEE-RT}.
As a simple example, let us demonstrate how the minimal surface in AdS${}_3$ leads to EE in CFT${}_2$ \eqref{eq:calabrese-EE}. The codimension-two minimal surface in AdS${}_3$ is nothing but geodesics. Suppose the subregion is an interval from $x=-l/2$ to $x=l/2$. Since the AdS geodesics is given by $z^2+x^2=(l/2)^2$ in Poincar\'e coordinates, its length $\gamma_A$ is
\begin{align}
    \gamma_A &= 2\int_\epsilon^{l/2}\dd{z} \frac{ds}{dz} \\
    &= 2R\int_\epsilon^{l/2}\dd{z} \frac{\sqrt{1+x'(z)^2}}{z}\\
    &= 2R\int_\epsilon^{l/2}\dd{z} \frac{1}{z}\sqrt{1+\frac{z^2}{(l/2)^2-z^2}}\\
    &=2R\frac{l}{2}\int_\epsilon^{l/2}\dd{z} \frac{1}{z\sqrt{(l/2)^2-z^2}}\\
    &= 2R\log \frac{l}{\epsilon} +O(1).
\end{align}
Then, from the Brown-Henneaux relation\index{Brown-Henneaux relation} \eqref{eq:brown-henneaux}, holographic EE in the leading order in $\epsilon$ is
\begin{equation}
    \frac{\gamma_A}{4G_N}=\frac{c}{3}\log\frac{l}{\epsilon}.
\end{equation}
This is indeed vacuum EE of an interval $[-l/2,l/2]$ in CFT${}_2$ \eqref{eq:calabrese-EE}.

In other coordinates, the geodesic distance can be a bit complicated. In such a case, it is sometimes clearer to use the embedding space formalism \index{embedding space formalism}. See Appendix \ref{sec:Embedd}.

\section{Holographic local operator quench}\label{sec:local-op-quench}
\subsection{Gravity dual of heavy primary operator}
In this section, we consider the gravity dual of a heavy operator insertion in CFT. Although we consider only the AdS${}_3$/CFT${}_2$ correspondence here for simplicity, higher dimensional discussions are also available~\cite{Horowitz:1999gf}.

First, let us consider a gravity dual of the CFT primary state $\ket{O_\Delta}$. Identifying the radial quantizing direction as a global time in global coordinates, the state is time-translation invariant. Thus, it is an energy eigenstate in global coordinates. Since we want to consider a nontrivial gravitating spacetime, we assume the conformal dimension is large enough $\Delta\sim O(c)$. Then, the dual bulk geometry must be backreacted from pure AdS. Furthermore, since the state is rotationally invariant, the bulk geometry must be something in the center of AdS. We already know such bulk geometries. In Section \ref{sec:asympt-AdS}, we discussed the conical deficit (massive particle) geometry\index{conical deficit geometry} \eqref{GBmet} and the Schwarzschild-AdS black hole\index{Schwarzschild-AdS black hole} \eqref{eq:AdS-Sch-BH}. Both have a massive object in the center of AdS and are rotationally invariant.

How can we justify they are really dual to heavy primary states? The key observation is the geodesic approximation\index{geodesic approximation} (Section \ref{sec:AdS-CFT-general-dim}). Since we are considering a heavy state, the geodesic approximation must be valid. The ket/bra primary state is located at the past/future infinity. Thus, the geodesic connecting these infinities is just a straight line going up in the global coordinates! The straight line should be placed in the center to preserve the rotational symmetry although it can be moved by AdS isometry.\\
\emph{Caveat: One might worry whether a nice gravity dual exists for a single energy eigenstate $\ket{E=\Delta-c/{12R}}$. What we should really do is averaging over microstates with the same energy. However, followed by the eigenstate thermalization hypothesis (ETH)\index{eigenstate thermalization hypothesis}\index{ETH |see eigenstate thermalization hypothesis } explained later, the deviation from the mean value is exponentially suppressed.}

The conical deficit geometry and Schwarzschild-AdS black hole have a transition point $M_{crit}=R^2$. Beyond this point, a horizon is formed. In terms of the particle mass, it equals $m_{crit}=1/(8G_N)$. From the holographic dictionary \eqref{eq:delta-mass}, mass is related to the conformal dimension\index{conformal dimension} of the dual operator as $\Delta=mR$.\footnote{Since $\Delta\sim O(c)$, we dropped a subleading $O(1)$ constant.} Using this relation, the critical conformal dimension is $\Delta_{crit}=R/(8G_N)=c/12$. This is known as the \textbf{black hole threshold}\index{black hole threshold}. The black hole temperature \eqref{eq:BH-temp-3d} is related to the mass and conformal dimension by
\begin{equation}
    \beta=\frac{2\pi R^2}{\sqrt{M-R^2}}=\frac{2\pi R}{\sqrt{8G_N m-1}}= \frac{2\pi R}{\sqrt{\frac{12\Delta}{c}-1}}.
    \label{eq:BH-temp-conf-dim}
\end{equation}

How can we understand the transition from CFT? One way is just comparing various quantities from AdS and CFT such as the energy-momentum tensor and EE~\cite{Nozaki:2013wia}. This approach will be one of the main calculations in this chapter.
%our work on BCFTs as well~\cite{Kawamoto:2022etl} (Section \ref{sec:quench-BCFT}-). 
In this section, we try to clarify the physical meaning of the transition and the relation between $\Delta$ and $\beta$ \eqref{eq:BH-temp-conf-dim} without being mystified by a cumbersome calculation.

\subsection{Eigenstate thermalization hypothesis and Cardy formula}\label{sec:ETH-Cardy}
It is mysterious at first glance that both a finite-temperature CFT and an energy eigenstate can have the same gravity dual -- black hole. This suggests that in the large-$c$ (small-$G_N$) saddle point approximation, these CFT states are indistinguishable. The \textbf{eigenstate thermalization hypothesis (ETH)}\index{eigenstate thermalization hypothesis}\index{ETH |see eigenstate thermalization hypothesis } helps us to discuss this on firm ground. It has been argued that ETH holds for large-$c$, sparse CFTs in the thermodynamic limit~\cite{Basu:2017kzo}\footnote{Here we only consider \emph{global} ETH while recent studies focus on the subsystem ETH (ETH within a subregion)~\cite{Lashkari:2016vgj,Lashkari:2017hwq}.} (see~\cite{Karlsson:2021duj} for the large-$N$ CFTs and~\cite{Faulkner:2017hll} for analysis in the finite-$c$ regime). 

ETH\index{eigenstate thermalization hypothesis}\index{ETH |see eigenstate thermalization hypothesis } is a conjecture in which (almost)\footnote{Strong ETH refers to all energy eigenstates while weak ETH refers to almost all energy eigenstates.} all energy eigenstates look identical to the corresponding (microcanonical) thermal states when they are probed by few-body operators~\cite{PhysRevA.43.2046,Srednicki:1995pt}. (See~\cite{DAlessio:2015qtq,Srednicki_1999} for reviews.)
More explicitly, it postulates that a matrix element of a local operator $\mathcal{O}$ in the energy eigenbasis is almost diagonal up to a nonperturbative correction, i.e.
\begin{equation}
    \mel{E_i}{\mathcal{O}}{E_j}=\mathcal{O}(E_i)\delta_{ij} + e^{-S\qty(\frac{E_i+E_j}{2})/2}\, f(E_i,E_j)\, R_{ij},
    \label{eq:ETH}
\end{equation}
where $\mathcal{O}(E)\equiv \mel{E}{\mathcal{O}}{E}$ and $f(E,E')$ are some real smooth functions and $R_{ij}$ is a complicated, complex-valued $O(1)$ function. Since the second term is $O(e^{-S/2})$, it is exponentially suppressed in the semiclassical limit $S\sim O(1/G_N)\sim O(c)$. $\mathcal{O}(E)$ is related to the equilibrium value in a canonical ensemble $\ev{\mathcal{O}}_\beta$ by
\begin{equation}
    \ev{\mathcal{O}}_\beta \equiv \frac{\tr \qty(e^{-\beta H} \mathcal{O})}{\tr e^{-\beta H}}= \frac{\int\dd{E} \rho(E) e^{-\beta E} \mathcal{O}(E)}{\int\dd{E} \rho(E) e^{-\beta E}} + O\qty(e^{-S/2}),
    \label{eq:ETH2}
\end{equation}
where we used \eqref{eq:ETH} in the last equality. $\rho(E)$ is the density of states within a small energy window around $E$ and related to the microcanonical (Boltzmann) entropy as\footnote{Precisely speaking, the delta function must be smeared to avoid divergence.}
\begin{equation}
    \rho(E)\equiv e^{S(E)}=\sum_{i} \delta(E-E_i).
\end{equation}
After all, ETH postulates any local operators in the energy eigenbasis are almost diagonal and their elements are given by the microcanonical average.\footnote{From this, it immediately follows that its long-time average thermalizes (although this does not explain all thermalization phenomena), i.e. the long-time averaged one-point function obeys the microcanonical ensemble and the variance after averaging is suppressed exponentially.} In other words, combining \eqref{eq:ETH} and \eqref{eq:ETH2}, 
\begin{equation}
    \mel{E}{\mathcal{O}}{E}=\ev{\mathcal{O}}_{\beta(E)} +O(e^{-S/2}),
\end{equation}
where $\beta(E)$ is determined from $\beta$ such that $\ev{H}_\beta=E$.

Now we would like to find explicit formulae for the microcanonical thermodynamic entropy in CFT${}_2$ (also known as the Cardy formula) and the relation between $\beta$ and $E$ (canonical and microcanonical ensemble).
As we have discussed, the partition function is calculated as
\begin{equation}
    Z(\beta)=\sum_i e^{-\beta E_i}= \sum_i \int \dd{E} \delta(E-E_i) e^{-\beta E} = \int \dd{E} \rho(E) e^{-\beta E}= \int \dd{E} e^{S(E)} e^{-\beta E}.
\end{equation}
In the equilibrium case, this partition function is calculated in the saddle point approximation: Suppose the saddle point is $E=E_\ast(\beta)$, the partition function is approximated as\footnote{We slightly changed the definition for the free energy as $F=-\frac{1}{\beta}\log Z$ as usual in thermodynamics from $F=-\log Z$ in Chapter \ref{ch:1}.}
\begin{equation}
    e^{-\beta F(E_\ast)}=Z(\beta)=e^{S(E_\ast)-\beta E_\ast}.
\end{equation}
By taking logarithms on both sides, we obtain the familiar Legendre transform in thermodynamics bridging microcanonical and canonical ensembles. The saddle point equation defining $E_\ast(\beta)$ is given by
\begin{equation}
    S'(E_\ast)=\beta.
    \label{eq:saddle-ETH}
\end{equation}
(Note that in the original setup, the parameter we give first is the energy $E_\ast$ of the energy eigenstate, not the inverse temperature $\beta$. In this case, we should regard \eqref{eq:saddle-ETH} as a defining equation for $\beta=\beta(E_\ast)$.)

To proceed with the saddle-point analysis, we need a functional form of $S(E)$. For a moment, we focus on the low and high temperature (energy) limits with $c$ fixed (\textbf{Cardy limit}\index{Cardy limit}). Since we are interested in the $O(c)$ transition in entropy, it must be sufficient to compare those two extreme limits.\footnote{This is actually different from the usual large-$c$ limit, where $\Delta\sim O(c)$ is fixed. Nevertheless, the same logic applies to the usual limit and in the holographic case, we have a sharp transition at $\Delta=E+c/12=c/12$. See Section 4.2 of~\cite{Hartman:2014oaa} for more details.}
In a two-dimensional CFT, we can easily calculate the partition function in the high-temperature limit from the low-temperature limit using the modular invariance\index{modular invariance}.\footnote{Although it has been much less explored, modular invariance indeed exists in higher dimensions as well. \cite{Belin:2016yll} argues that the modular invariance put a stringent constraint on a possible holographic CFT.} The following analysis is based on Section 25.3 of~\cite{Hartman:2015}.

In the low-temperature limit $\beta\rightarrow \infty$, the partition function is dominated by the vacuum contribution:
\begin{equation}
    Z(\beta)\approx e^{-\beta E_{vac}},
\end{equation}
where $E_{vac}=-\frac{c}{12 R}$ \eqref{eq:CFT-vac-ene}.
Since CFT${}_2$ has modular invariance, the partition functions in the low-temperature limit and high-temperature limit are related to each other by $\beta\leftrightarrow 4\pi^2 R^2/\beta$. The partition function in the high-temperature limit ($\beta\rightarrow 0$) is dominated by
\begin{equation}
     Z(4\pi^2 R^2/\beta)\approx \exp\qty[- \frac{4\pi^2 R^2}{\beta}E_{vac}]=\exp\qty[\frac{\pi^2 c R}{3\beta}].
\end{equation}
From these, the thermodynamic entropy\index{thermodynamic entropy}\footnote{Surprisingly, the thermodynamic entropy in the Cardy limit equals the Beckenstein-Hawking entropy\index{Beckenstein-Hawking entropy} for the BTZ black hole~\cite{Strominger:1997eq}.} and energy are calculated as follows:
\begin{equation}
    (S(\beta),E(\beta))=
    \begin{cases}
    \qty(O(1), -\dfrac{c}{12 R}) \quad & (\beta\rightarrow\infty) \\
    \qty(\dfrac{2\pi^2 c R}{3\beta}, \dfrac{\pi^2 c R}{3\beta^2}) \quad & (\beta\rightarrow 0)
    \end{cases}
    ,
    \label{eq:cardy-ene}
\end{equation}
where
\begin{equation}
    S=(1-\beta\partial_\beta)\log Z;\quad E=-\partial_\beta \log Z.
\end{equation}
By eliminating $E$ from \eqref{eq:cardy-ene}, we obtain
\begin{equation}
    S(E)=
    \begin{cases}
    O(1),\quad &\beta\rightarrow \infty\\
    2\pi \sqrt{\dfrac{c}{3}RE} \quad &\beta\rightarrow 0
    \end{cases}
    .
\end{equation}
This is equivalent to solving the saddle-point equation \eqref{eq:saddle-ETH}.
In the Cardy limit, where $E\rightarrow\infty$ with $c$ fixed, we can clearly see the phase transition of entropy as expected. $S(E)\sim 2\pi \sqrt{\frac{c}{3}RE}$ in the high energy limit is known as the \textbf{Cardy formula}\index{Cardy formula}.

Our goal was to express the canonical inverse temperature $\beta$ in terms of energy $E$ in a microcanonical ensemble. By inverting \eqref{eq:cardy-ene}, we have
\begin{equation}
    \beta=\pi\sqrt{\frac{cR}{3E}}.
\end{equation}
In CFT${}_2$, $E_R=L_0-c/24=h-c/24,\ E_L=\bar{L}_0-\bar{c}/24$, and $E=E_L+E_R$. As the operator (state) we are considering is non-chiral, $E_L=E_R$. On the energy eigenstate $\ket{O_\Delta}$, we have $E=\Delta-c/12$. Finally, we obtain
\begin{equation}
    \beta=\pi\sqrt{\frac{cR}{3\qty(\Delta-\frac{c}{12})}} = \frac{2\pi R}{\sqrt{\frac{12\Delta}{c}-1}}.
\end{equation}
This is exactly equal to \eqref{eq:BH-temp-conf-dim}. 

To summarize, from ETH, we could explicitly show the CFT heavy primary state is the energy eigenstate dual to the microcanonical black hole\index{microcanonical black hole} and its energy is related to the canonical temperature through the Legendre transformation. When $\Delta\le c/12$, the canonical ensemble is no longer applicable as the temperature becomes imaginary. (This is what we have already discussed in the conical deficit geometry!) The relation of the temperature and conformal dimension can be purely derived from the conformal block\index{conformal block} as well~\cite{Asplund:2014coa}.\footnote{\textit{Caveat:} The Cardy limit used here and the large-$c$ limit are actually slightly different. The high energy regime in the Cardy limit is much higher than the black hole regime. Nevertheless, it can be shown the lower bound for its validity is extended and this comparison is justified~\cite{Hartman:2014oaa}. This led to the microscopic derivation of the black hole entropy by Strominger~\cite{Strominger:1997eq}.} The underlying principle for this miraculous matching is the ETH.%\footnote{ETH indicates that there might be Note that there is another realization of global quench by regulating a boundary state via an Euclidean time evolution $e^{-\tau H}\ket{B}$ (We will define the boundary state in Section \ref{sec:boundary-state}). While the single-sided black hole dual to a heavy primary contains the entire black hole interior, the gravity dual of the regulated boundary state terminates on a brane inside the horizon. This indicates the boundary state masks the details of the black hole microstate as the brane.}

\subsection{Gravity dual of local operator quench}
Writing the two-dimensional coordinates as $(t,x)$, we define a local operator quench\index{local operator quench} by a state 
\begin{equation}
    e^{-\alpha H}O(x=0, t_E=0)\ket{0} \quad (t_E=it),
    \label{eq:local-quench-state}
\end{equation}
where the conformal dimension of the operator is $O(c)$ and $\alpha$ is a regularization parameter~\cite{Nozaki:2014hna,He:2014mwa,Caputa:2014vaa,Asplund:2014coa}. This is a locally excited state smeared over $O(\alpha)$ width. Its gravity dual was discussed by~\cite{Nozaki:2013wia}. The key is again the geodesic approximation\index{geodesic approximation}. The unnormalized partition function $\ev{Oe^{-2\alpha H}O}$ is a two-point correlator separated by $2\alpha$. In the Euclidean Poincar\'e AdS, its gravity dual is a massive particle propagating along $z^2+t_E^2=\alpha^2$. By a Wick rotation, it gives the trajectory 
\begin{equation}
    z^2-t^2=\ap^2,\ \ \ x=0,
    \label{eq:traj-massive-part-local}
\end{equation}
of a massive particle dual to the time-evolved state $e^{-iHt}e^{-\alpha H}O(0)\ket{0}$. The mass of the particle is again related to $O$'s conformal dimension by $\Delta\simeq mR$. Recalling the gravity dual of a heavy primary state, the dual geometry of a local operator quench must be an infalling massive particle or an infalling black hole along \eqref{eq:traj-massive-part-local}.

In general, the backreaction of a bulk infalling particle in the Poincar\'e AdS can be obtained by pulling back a spacetime with a massive static particle in global AdS \cite{Horowitz:1999gf}.
Consider the AdS$_3$ in the Poincar\'e metric \eqref{eq:poincare-coords}, 
\ba
ds^2=R^2\left(\frac{dz^2-dt^2+dx^2}{z^2}\right).   \label{Pmet}
\ea
This is transformed into the global AdS$_3$  with the metric \eqref{eq:global-static-coords}
\ba
ds^2=-(r^2+R^2)d\tau^2+\frac{R^2}{r^2+R^2}dr^2+r^2d\theta^2,  \label{Gmet}
\ea
via the following map:\footnote{Here we identified the angular coordinates as $\Omega_1=\sin\theta$ and $\Omega_2=\cos\theta$ in \eqref{eq:poincare-coords}.}
\ba
\s{R^2+r^2}\cos\tau &=& \frac{R\ap^2+R(z^2+x^2-t^2)}{2\ap z},\no
\s{R^2+r^2}\sin\tau &=& \frac{Rt}{z},\no
r\sin\theta &=& \frac{Rx}{z},  \no
-r\cos\theta &=& \frac{-R\ap^2+R(z^2+x^2-t^2)}{2\ap z}.  \label{mappo}
\ea
We chose the range $-\pi \le \theta<\pi$ for the spatial coordinate of the global AdS. The parameter $ \alpha$ is an arbitrary real number, corresponding to a particular isometry of AdS. In this map, 
the time slice $t=0$ in Poincar\'e AdS is mapped into $\tau=0$ in global AdS.

Using this map \eqref{mappo} we can map the asymptotically global, conical deficit geometry \eqref{GBmet}
\ba
ds^2=-(r^2+R^2-M)d\tau^2+\frac{R^2}{r^2+R^2-M}dr^2+r^2d\theta^2,  
\ea
into an asymptotically Poincar\'e AdS background~\cite{Horowitz:1999gf}.
As identified in~\cite{Nozaki:2013wia}, this is a time-dependent background with a local operator insertion at 
$t=x=0$ i.e. given by the local operator quench state \eqref{eq:local-quench-state}. 
The parameter $\alpha$ in \eqref{mappo} is exactly equal to the regularization parameter in \eqref{eq:local-quench-state} such that the particle trajectory \eqref{eq:traj-massive-part-local} to the center of AdS $r=0$ for any $\tau$.
Alternatively, the entire metric can be transformed into the pure one \eqref{deficitg} by a coordinate rescaling \eqref{teha}. Instead, it will make the periodicity of the angular coordinate to be $2\pi$ to $2\chi\pi<2\pi$. In this way, the conical deficit around the massive particle is manifest.

 Moreover, as can be found by taking the AdS boundary limit $r\to \infty$ and $z\to 0$, the above transformation (\ref{mappo}) corresponds to the following  conformal transformation in the CFT dual
\ba
t \pm x=\ap\tan\left(\frac{\tau\pm \theta}{2}\right).  \label{confmap}
\ea
This indeed maps local operators at $t_E=it=\pm\alpha$, $x=0$ to $\tau_E=i\tau=\pm\infty$, $\theta=0$, where primary operators for the bra and ket states are located.

Holographic EE is evaluated by calculating geodesics in this deficit background~\cite{Asplund:2014coa}. See Appendix \ref{sec:Embedd} for the geodesic calculation in general. We will perform almost identical calculations in the presence of a brane later (Section \ref{sec:connected-ent}). Note that we need to choose an appropriate OPE channel when we approximate the four-point function by the vacuum block. This corresponds to the choice of the geodesic windings around the heavy operators when the twist operators come close to each other~\cite{Asplund:2014coa}. This is an important point in the calculation in the Euclidean signature. We should be also careful about the choice of the branch when we analytically continue the result to the Lorentzian signature. There is also subtlety in the order of limits\index{order of limits}, $c\rightarrow \infty$ and $n\rightarrow 1$. This is discussed in Section 2.2 of~\cite{Hartman:2013mia} and Section 5.2 of~\cite{Fitzpatrick:2015zha}.

\section{AdS/BCFT correspondence}\label{sec:AdS-BCFTgen}
We are interested in the \textbf{AdS/BCFT correspondence}\index{AdS/BCFT correspondence}~\cite{Takayanagi2011,Fujita:2011fp,Karch:2000gx} as it is capable of describing various spacetimes as \textbf{brane worlds}\index{brane world}~\cite{Randall:1999ee,Randall:1999vf}. In particular, we can realize the big-bang/big-crunch universe described by the Friedman-Robertson-Walker (FRW) metric\index{ Friedman-Robertson-Walker metric} on the brane dual to the so-called regulated boundary state. 
This is why the AdS/BCFT correspondence is of interest to the author of this dissertation. 
The gravity picture was already addressed by~\cite{Maldacena:2004rf} even before the discovery of the AdS/BCFT correspondence or the Ryu-Takayanagi formula. After a decade, \cite{Hartman:2013qma} first proposed this picture realizing it in the AdS/BCFT correspondence. Later on, this setup has been investigated in detail, e.g. the SYK model\index{SYK model}~\cite{Kourkoulou:2017zaj} and the double trace deformation~\cite{Almheiri:2018ijj}.
Recently, Raamsdonk and his collaborators clarified the induced metric on the brane takes the form of FRW metric describing the big-bang/big-crunch universes in general dimensions and tension (a parameter in the AdS/BCFT correspondence)~\cite{Cooper:2018cmb,VanRaamsdonk:2021qgv}. Further extensions to address gravity localization with an AdS-Reissner-Nordstr\"om black hole~\cite{Antonini:2019qkt}.

Another reason for the importance of the AdS/BCFT correspondence is that the \textbf{brane-world  holography} predicts that the dynamics of \textbf{end-of-the-world (EOW) branes}\index{end-of-the-world brane}\index{EOW brane |see end-of-the-world brane}\index{ETW brane |see end-of-the-world brane} is dual to that of quantum gravity \cite{Randall:1999ee,Randall:1999vf,Gubser:1999vj,Karch:2000ct}. These two different interpretations of the EOW branes: the AdS/BCFT and the brane world. They are expected to be equivalent. This is manifest in the calculation of entanglement entropy, i.e. the holographic entanglement entropy formula for AdS/BCFT in the presence of EOW branes \cite{Takayanagi2011,Fujita:2011fp} takes the identical form as the \textbf{island formula}\index{island formula} \cite{Penington:2019npb,Almheiri:2019psf,Almheiri:2019hni,Almheiri:2019qdq,Penington:2019kki} which computes the entanglement entropy in the presence of gravity. This correspondence can be regarded as an equivalence between a BCFT and a CFT coupled to gravity, as systematically studied recently in \cite{Suzuki:2022xwv} (see also \cite{Numasawa:2022cni,Kusuki:2021gpt} for related studies from conformal field theoretic approaches). In this way, these ideas of EOW branes are deeply connected at the bottom and their further understanding is expected to be a key ingredient to %fully understand 
quantum gravity.

Before stating the claim of the AdS/BCFT correspondence, we first review boundary conformal field theory (BCFT). Then, we discuss the proposed gravity dual to BCFTs and give an example of holographic EE in BCFTs. We will also make several remarks specific to the AdS/BCFT correspondence.
Although we focus on the bottom-up construction of the AdS/BCFT correspondence in this dissertation, it is worth noting that there exists a construction of the AdS/BCFT correspondence from string theory~\cite{Fujita:2011fp,VanRaamsdonk:2021qgv,Martinec:2022ofs,Reeves:2021sab}.

\subsection{Boundary conformal field theory (BCFT)}\label{sec:BCFT}
\textbf{Boundary CFT (BCFT)}\index{boundary conformal field theory}\index{BCFT |see boundary conformal field theory} is CFT with conformal boundaries. It was originally considered by Cardy~\cite{Cardy:1984bb,Cardy:1986gw}. For reviews, see~\cite{Cardy:2004hm,Recknagel:2013uja,BB05998456}. BCFT is constructed from CFT by adding boundaries without breaking conformal symmetry as much as possible. In the following, let us consider a BCFT defined on the upper half plane (UHP) $\Im z\ge 0$ of a complex plane $z\in\mathbb{C}$. The boundary extending on the real axis $\Im z=0$ obviously breaks the translational symmetry perpendicular to the boundary. But the introduced boundary does not break conformal symmetries parallel to the boundary. In general, if a $p$-dimensional \textbf{conformal defect}\index{conformal defect} is introduced to (Euclidean) CFT, i.e. \textbf{defect CFT (DCFT)}\index{defect conformal field theory}\index{DCFT |see defect conformal field theory}\footnote{When two or more different CFTs are glued at some interfaces, they are called interface CFTs (ICFT). When all the CFTs are the same, ICFT reduces to DCFT.}, the conformal symmetry becomes partially broken:
\begin{equation}
    SO(1,d+1)\rightarrow SO(1,p+1)\times SO(d-p).
\end{equation}
When $p=d-1$, the DCFT becomes BCFT and the defect is called the \textbf{conformal boundary}\index{conformal boundary}.

\subsubsection{Boundary condition and doubling trick}
Since the boundary should be mapped to itself under any infinitesimal local conformal transformations, the mapping function must be a real, analytic function on the boundary. Furthermore, the reality condition for the field transformation implies
\begin{equation}
    T(z)=\bar{T}(\bar{z}) \Leftrightarrow T_{x t_E}=0 \label{eq:conf-bc}
\end{equation}
on the boundary ($\Im z=0$) from \eqref{eq:CWI} and its anti-holomorphic counterpart.
This implies no energy flux flow across the boundary. \eqref{eq:conf-bc} is called the \textbf{conformal boundary condition}\index{conformal boundary condition}.

From the Schwarz reflection principle\index{Schwarz reflection principle}, a holomorphic function $f(z)$ on UHP which is real-valued on the real axis can be extended to the entire complex plane by
\begin{equation}
    f(\bar{z})=\overline{f(z)}.
\end{equation}
Since infinitesimal conformal transformations are real on the real axis, they can be extended to the entire complex plane. This is known as the \textbf{doubling trick}\index{doubling trick}~\cite{Cardy:2004hm}. For example,
\begin{equation}
    \ev{O(z)}_{\mathrm{UHP}}=\ev{O(z)O(\bar{z})}^{\mathrm{chiral}}_{\mathbb{C}}\propto \frac{1}{(2\Im z)^{2h}}
\end{equation}
for an ambient\index{ambient}\footnote{Here we used ``ambient'' to denote that the operator is not located on the boundary ($\Im z>0$). Usually, such an operator is called bulk primary, however, we avoid this terminology as it may be confusing with the bulk in holography.} primary operator $O(z)$.
As for the conformal Ward identity\index{conformal Ward identity} for $n$ primary fields on UHP, the contour for $T(z)$ in UHP is reflected in lower half plane (LHP). Since the mirrored contour is clockwise, one can merge it with the original contour, which is counterclockwise. Then, the big single contour surrounds $2n$ (reflection-symmetric) chiral primary fields, i.e.
\begin{equation}
\begin{split}
    &\left\langle T(z) \prod_j O(z_j,\bar{z}_j) \right\rangle_{\mathrm{UHP}} \\
    = &\sum_j \left(
    \frac{h}{(z-z_j)^2} + \frac{1}{z-z_j} \de_{z_j} + \frac{\bar{h}}{(z-\bar{z}_j)^2} + \frac{1}{z-\bar{z}_j} \de_{\bar{z}_j}
    \right) \left \langle \prod_j O(z_j,\bar{z}_j) \right \rangle_{\mathrm{UHP}} .
\end{split}
\label{yyy}
\end{equation}

In free theory, the Dirichlet and Neumann boundary condition satisfy the conformal boundary condition \eqref{eq:conf-bc}, however, their two-point functions and energy-momentum tensor, which can be obtained from the conformal Ward identity\index{conformal Ward identity} in UHP \eqref{yyy}, are different since a two-point function on UHP becomes a chiral four-point function on $\mathbb{C}$ via the doubling trick and it is not completely fixed by conformal symmetry. This will be explicitly demonstrated in the latter part of Section \ref{sec:op-exc}.

\subsubsection{Boundary operators}
The conformal boundary condition \eqref{eq:conf-bc} yields the holomorphic sector equal to the anti-holomorphic sector on the boundary. Then, there is only one (chiral) Virasoro algebra for the operators on the boundary.
By inserting a \textbf{boundary operator}\index{boundary operator}, one can change the boundary condition. For example, a primary state can change the boundary condition $a$ on $\Re z <0, \Im z=0$ to $b$ on $\Re z>0, \Im z=0$. Such an operator is also called the \textbf{boundary (condition) changing operator}\index{boundary (condition) changing operator}.

The boundary operator is unavoidable when we discuss the radial quantization. %The dilatation becomes a symmetry only on the boundary. 
The dilatation on the boundary generates time translation on an infinite strip. The (ADM) energy\index{ADM energy} in global coordinates should be read from the eigenvalue of the dilatation operator, equal to the conformal dimension\index{conformal dimension} (defined as twice the chiral dimension\index{chiral dimension} also known as the boundary scaling dimension $h$\index{boundary scaling dimension}). Since any ambient states are generated from a dilatation on the boundary state, any ambient operators can be expanded as a linear combination of eigenstates of dilatation, i.e. boundary operators. This is known as the \textbf{boundary operator expansion (BOE)}\index{boundary operator expansion}\index{BOE |see boundary operator expansion }.\footnote{This is nothing but a consequence of the operator/state correspondence\index{operator/state correspondence}.} The BOE coefficients\index{BOE coefficients} are related to the ambient OPE coefficients\index{OPE coefficients} via the calculation of correlation functions.
Note that when a boundary operator is involved in correlation functions, it should not be doubled when the doubling trick\index{doubling trick} is applied~\cite{Recknagel:2013uja}. For more details, see~\cite{Sully:2020pza,Reeves:2021sab,Biswas:2022xfw} and Section 3.1 of~\cite{Wakeham:2022wyx}.

\subsubsection{Boundary entropy}\label{sec:boundary-state}
In the radial quantization picture, the Euclidean path integral on a unit disk $D$ with a boundary condition\index{boundary condition} $a$ defines a transition amplitude
\begin{equation}
    g_a \equiv \braket{0}{B_a},
\end{equation}
where $\ket{B_a}$ is a \textbf{boundary state}\index{boundary state} satisfying the boundary condition,
\begin{equation}
    \left.\qty(T(z)-\bar{T}(\bar{z}))\right |_{z\in \partial D} \ket{B_a} = 0.
\end{equation}
The \textbf{boundary entropy}\index{boundary entropy} is defined as $S_{bdy}^{(a)}=\log g_a$.
It is named `entropy' as it arises as a constant offset in thermodynamic entropy from the cylinder amplitude (known as the `overlap') due to the boundary~\cite{Affleck:1991tk,Cardy:2004hm}.

When we compute EE for an interval of length, the same contribution appears in addition to the usual size-dependent term from the Calabrese-Cardy formula. Provided the boundary is located at $x=0$ in the two-dimensional BCFT and the subregion $A$ is an interval $[0,l]$, EE\index{entanglement entropy} is given by~\cite{Calabrese:2004eu}
\begin{equation}
    S_A=\frac{c}{6}\log \frac{2l}{\epsilon} + S_{bdy}.
    \label{eq:BCFT-EE}
\end{equation}
Since EE in the BCFT is related to that in the chiral CFT\index{chiral CFT} via the doubling, the length dependence must be $2l$ while the overall factor must be half of \eqref{eq:calabrese-EE}.
Since $S_{bdy}$ is just a constant, we mostly focus on the first term in this dissertation.

%\subsubsection{Boundary states}\label{sec:boundary-state}

\subsection{Statement of the AdS/BCFT correspondence}\label{sec:AdS-BCFT-state}
Writing the two-dimensional coordinates as $(t,x)$ of the CFT, we define our BCFT by restricting the spacetime in the right half plane $x>x_0$. Its gravity dual can be found by inserting an end-of-the-world brane (EOW or ETW brane) appropriately in the 
three-dimensional asymptotically AdS. This prescription is called the AdS/BCFT correspondence~\cite{Takayanagi2011,Fujita:2011fp}. In the following, we consider the gravity sector only, i.e. constant matter fields.
For a discussion involving bulk matter fields, see \cite{Fujita:2011fp,Suzuki:2022xwv,Kusuki:2022ozk} (see also~\cite{Biswas:2022xfw,Miyaji:2022dna} for a nontrivial matter or defect on the brane, which may resolve some issues regarding intersecting branes). The review in this section is mostly based on~\cite{Fujita:2011fp}.

The \textbf{AdS/BCFT correspondence} states that the $d$-dimensional BCFT on a manifold $\mathcal{M}$ is dual to the $(d+1)$-dimensional asymptotically AdS spacetime on a manifold $\mathcal{N}$ bounded by an \textbf{EOW brane}\index{end-of-the-world brane}\index{ETW brane |see end-of-the-world brane }\index{EOW brane |see end-of-the-world brane } $Q$. The BCFT lives on the asymptotic boundary $\mathcal{M}$ such that the EOW brane ends on the BCFT boundary, $\partial\mathcal{M}=\partial\mathcal{Q}$, when there is no other boundary to end. The EOW brane bounds the AdS spacetime such that $\partial \mathcal{N}=\mathcal{M}\cup Q$. See Fig.1 of~\cite{Fujita:2011fp} for the illustration. Note that $Q$ is not necessarily located on the asymptotic boundary. This means $Q$ is a gravitating system in contrast to the boundary in the conventional AdS/CFT correspondence. The boundary data in the BCFT roughly corresponds to the brane data in the AdS.

The (Lorentzian) gravity action is given by
\begin{equation}
    I=\frac{1}{16\pi G_N } \int_{\mathcal{N}}\dd[d+1]{x} \sqrt{-g} (\mathcal{R}-2\Lambda) + \frac{1}{8\pi G_N}\int_Q \dd[d]{x} \sqrt{-h} (K-{\cal T}),
    \label{eq:action-AdS/BCFT}
\end{equation}
where $h_{ab}$ is the induced metric on $Q$, $K_{ab}=\nabla_{a} n_b$ is the \textbf{extrinsic curvature}\index{extrinsic curvature} on $Q$ ($n^a$ is a normal vector on $Q$), and $\cal T$ is the \textbf{tension}\index{tension (brane)} (cosmological constant) of the brane $Q$.\footnote{We can say this comes from a matter action $I_Q$ with the energy-momentum tensor
\begin{equation}
    T_{ab}^Q=\frac{2}{\sqrt{-h}}\frac{\delta I_Q}{\delta h_{ab}}=\frac{\cal T}{8\pi G_N}.
\end{equation}
} 
We omitted the Gibbons-Hawking term\index{Gibbons-Hawking term} at the asymptotic boundary and the counterterm\index{counterterm} (cf. Appendix \ref{app:grav-action}). $\Lambda$ is the AdS cosmological constant, which is related to the AdS radius $R$ as $\Lambda=-d(d-1)/(2R^2)$.

By varying the action \eqref{eq:action-AdS/BCFT}, the bulk part of the action can be eliminated by the equation of motion after the integration by parts. The remaining part is given by
\begin{equation}
    \delta I = \frac{1}{16\pi G_N}\int_Q \dd[d]{x} \sqrt{-h} (K_{ab}-K h_{ab}+{\cal T}h_{ab})\delta h^{ab}.
    \label{eq:action-variation-bcft}
\end{equation}
While we impose the Dirichlet boundary condition $\delta h_{ab}=0$ at the asymptotic boundary, we impose the \textbf{Neumann boundary condition}\index{Neumann boundary condition} on $Q$ so that the brane profile is dynamically determined.\footnote{It is also possible to impose different boundary conditions. See~\cite{Chu:2021mvq} and references therein. Here we follow the original proposal by Takayanagi.} The Neumann boundary condition read from \eqref{eq:action-variation-bcft} is given by
\begin{equation}
    K_{ab}-K h_{ab} = - {\cal T} h_{ab}.
    \label{eq:EOW}
\end{equation}
By taking its trace, we obtain
\begin{equation}
    K=\frac{d}{d-1} {\cal T}.
\end{equation}
Combined with \eqref{eq:EOW}, it implies
\begin{equation}
    K_{ab}=\frac{1}{d-1} {\cal T} h_{ab}.
    \label{eq:tension-K}
\end{equation}

To find the solution, it is useful to work in the \textbf{Gaussian normal coordinates}\index{Gaussian normal coordinates}
\begin{equation}
    ds^2 = \pm dn^2 + h_{ab} dx^a dx^b,
\end{equation}
where $\{x^a\}$ are coordinates on $Q$.
The sign of the first term is $-$ if $Q$ is spacelike while $+$ if $Q$ is timelike.\footnote{The spacelike brane corresponds to the pre/post-section in CFT by a boundary state~\cite{Numasawa:2016emc,Antonini:2022sfm,Akal:2021dqt,Akal:2020wfl,Cooper:2018cmb,VanRaamsdonk:2021qgv}.} In our cases, we focus on a timelike $Q$ so the sign is $+$.
The extrinsic curvature in the Gaussian normal coordinates is known to be~\cite{Wald:1984rg}
\begin{equation}
    K_{ab}=\frac{1}{2}\frac{\partial h_{ab}}{\partial n}.
    \label{eq:ex-curv-gauss}
\end{equation}

From the Poincar\'e metric \eqref{Pmet}, we can bring the metric into
\begin{equation}
    ds^2=d\rho^2+\cosh^2 \frac{\rho}{R} \qty(R^2 \frac{\eta_{ab}dx^a dx^b}{y^2})
    \label{eq:metric-slicing-AdS}
\end{equation}
by the following AdS slicing:\footnote{Note that $\rho>0$ corresponds to $x<x_0$ region due to the sign. We chose this sign since we want to parametrize the allowed bulk region as $\rho=-\infty$ to some point. This is opposite to the setup in~\cite{Takayanagi2011,Fujita:2011fp}.}
\begin{equation}
    z=\frac{y}{\cosh(\rho/R)},\quad x-x_0= - y \tanh \frac{\rho}{R},
    \label{eq:rho-coords}
\end{equation}
where $y\ge 0$ and $\rho\in(-\infty,\infty)$.
In these coordinates, we can show $\rho=\rho_\ast$ (constant) solves the Neumann boundary condition \eqref{eq:EOW}. 
Then, \eqref{eq:metric-slicing-AdS} is indeed in Gaussian normal coordinates with
\begin{equation}
    h_{ab}=\frac{R^2}{y^2} \cosh^2 \frac{\rho}{R} \eta_{ab}.
\end{equation}
We can find its extrinsic curvature \eqref{eq:ex-curv-gauss} to be
\begin{equation}
    K_{ab}=\frac{1}{R} \tanh\qty(\frac{\rho_\ast}{R})h_{ab},
\end{equation}
from which the tension $\cal T$ and the brane location $\rho=\rho_\ast$ are related to each other from \eqref{eq:tension-K} as
\begin{equation}
    {\cal T}R = (d-1)\tanh \frac{\rho_\ast}{R}.
\end{equation}

In the original Poincar\'e coordinates \eqref{Pmet}, the brane profile $\rho=\rho_\ast$ is given by
\begin{equation}
    x-x_0=-\lambda z, \quad \lambda=\sinh\frac{\rho_\ast}{R}.
    \label{eomb}
\end{equation}
The tension\index{tension (brane)} $\cal T$ is related to the slope $\lambda$ as
\ba
{\cal T}R=(d-1)\frac{\lambda}{\s{1+\lambda^2}}.
\ea
In the following discussions, we set $x_0=0$ for simplicity unless noted.
The EOW brane (\ref{eomb}) corresponds to the following profile in the global AdS$_3$ (\ref{Gmet}):
\ba
r\sin\theta=-\lambda R,
\ea
which is depicted in Fig.\ref{mapfig2}. Note that the brane always extends to the antipodal point $\theta=\pi$ in this time-independent setup. A brane cannot bend more because of the holographic $g$-theorem, which can be proven by imposing the \textbf{null energy condition}\index{null energy condition} on $Q$~\cite{Fujita:2011fp}.

\begin{figure}
  \centering
  \includegraphics[width=12cm]{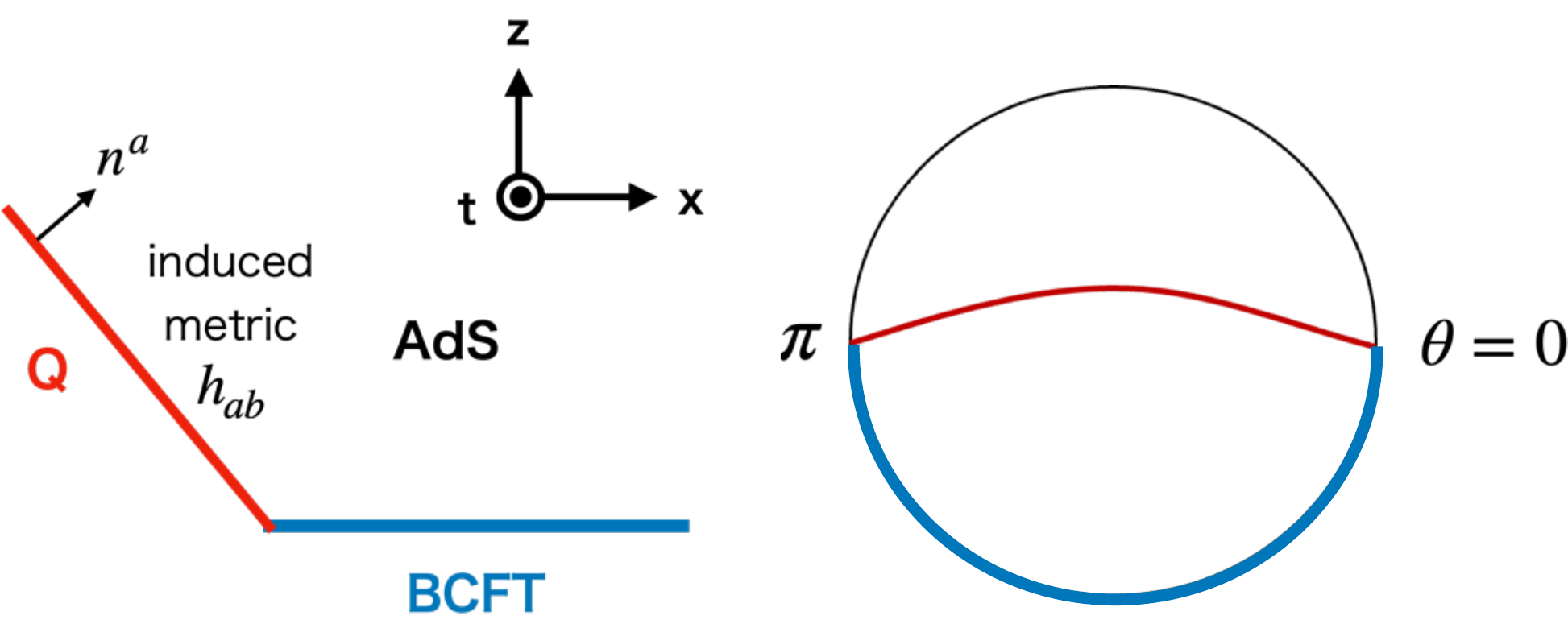}
  \caption{A constant time slice of the gravity dual of BCFT (the blue curve) in the Poincar\'e coordinates (left) and global coordinates (right). The EOW brane is drawn as a red curve.}
\label{mapfig2}
\end{figure}

\subsection{Holographic entanglement entropy in the AdS/BCFT correspondence}\label{sec:HEE-BCFT}
When CFT has no boundary, the RT formula \eqref{eq:HEE-RT} tells us EE is given by the length of the minimal geodesic. The essence does not change even after introducing boundaries, however,
we need to take into account geodesics which end on the EOW brane \cite{Takayanagi2011,Fujita:2011fp}. This leads to multiple candidates of geodesics for holographic EE (Fig.\ref{fig:HEE-AdSBCFT}). One is the ``connected geodesic'' $\Gamma^{con}_{AB}$, which literary connects these two boundary points $A$ and $B$. The other is the ``disconnected geodesics'' $\Gamma^{dis}_{AB}$ which consist of two disjoint pieces, one connects the boundary point $A$ and a point on the EOW brane, and the other connects $B$ and ends on the EOW brane. The net result for the holographic entanglement entropy\index{holographic entanglement entropy} is given by taking the minimum of these two contributions,
\begin{equation}
    S_{AB} ={\rm Min} \left\{ S^{con}_{AB},  S^{dis}_{AB}\right\},
    \label{eq:HEE-BCFT}
\end{equation}
where
\begin{equation}
    S^{con}_{AB}=\frac{L(\Gamma^{con}_{AB})}{4G_N}, \quad
    S^{dis}_{AB}=\frac{L(\Gamma^{dis}_{AB})}{4G_N}.
    \label{eq:HEE-phase}
\end{equation}
$L(\Gamma)$ denotes the length of the geodesic $\Gamma$.
By this modification to the RT formula, the result agrees with the BCFT calculation \eqref{eq:BCFT-EE} when the state is vacuum~\cite{Calabrese:2004eu}.\footnote{In the case of non-vacuum states, matching the BCFT result with gravity dual calculations is very nontrivial and needs fine-tuning~\cite{Sully:2020pza}.}
Each phase can be computed in the BCFT as a chiral four-point function of twist operators. A particular choice of OPE channels corresponds to each phase.

\begin{figure}
  \centering
  \includegraphics[width=12cm]{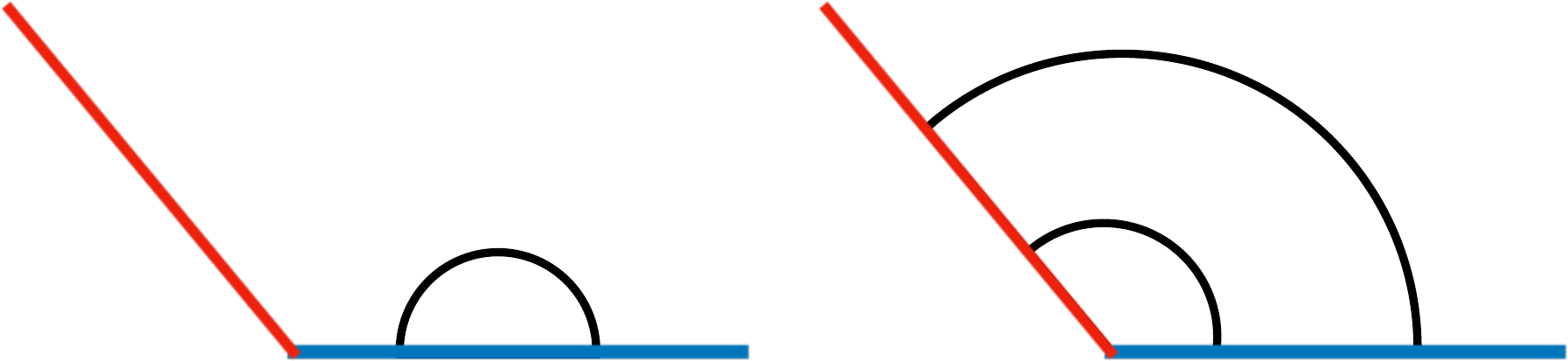}
  \caption{Two phases of holographic EE in the AdS/BCFT correspondence: connected (left) and disconnected (right). Since the smaller one gives EE, a small subregion corresponds to the connected geodesics and a large subregion corresponds to the disconnected geodesics.}
\label{fig:HEE-AdSBCFT}
\end{figure}

In the following, we consider a case when the subregion $I$ is an interval, whose boundary point is $x=0$ and $x=l$, with the BCFT boundary at $x=0$. In this case, the disconnected geodesic is just a single curve emanating from $x=l$.
We restrict $\cal T$ to a certain parameter region where the connected geodesic does not have an intersection with the EOW brane as it may cause a problem~\cite{Kusuki:2022wns}.
We demonstrate below that the holographic calculation in AdS$_3$ reproduces the BCFT result \eqref{eq:BCFT-EE}. The geodesic is given by $z^2+x^2=l^2$. This is translated to the AdS slicing coordinates \eqref{eq:rho-coords} as
\begin{equation}
    y=l.    
\end{equation}
Then, the UV cutoff $z=\epsilon$ on the (disconnected) geodesic is equivalent to
\begin{equation}
    \rho_\infty = R\, \mathrm{arccosh}\frac{l}{\epsilon} = R \log \frac{2l}{\epsilon} \quad (l/\epsilon\rightarrow\infty).
\end{equation}
Since the geodesic is exactly lying on the constant-$y$ (and $t$) surface, holographic EE is easily calculated as
\begin{equation}
    S_A=\frac{1}{4G_N}\int_{-\rho_\infty}^{\rho_\ast}d\rho = \frac{\rho_\ast+\rho_\infty}{4G_N}= \frac{c}{6}\log\frac{l}{\epsilon}+ S_{bdy},
\end{equation}
where we used the Brown-Henneaux relation\index{Brown-Henneaux relation} \eqref{eq:brown-henneaux}. This is always smaller than the connected entropy $\frac{c}{3}\log \frac{l}{\epsilon}$ \eqref{eq:calabrese-EE}.
Thus, the holographic EE perfectly reproduces \eqref{eq:BCFT-EE} by defining the boundary entropy\index{boundary entropy} as
\begin{equation}
    S_{bdy}=\frac{\rho_\ast}{4G_N}=\frac{c}{6}\mathrm{arctanh}(R{\cal T})= \frac{c}{6}\mathrm{arcsinh}\lambda.
    \label{eq:bdy-ent-formula}
\end{equation}
This definition can be confirmed by other definitions of the boundary entropy like the disk partition function.
Geometrically, the boundary entropy\index{boundary entropy} comes from the length of the geodesic to the left of the center up to the brane. This is perfectly consistent with the intuition of the boundary entropy counting the boundary degrees of freedom. The location of the brane with no $S_{bdy}$ corresponds to the center and the boundary degrees of freedom is describing the region going beyond the center.

\subsection{Subtleties regarding the AdS/BCFT correspondence}
In this subsection, we briefly review some known subtleties which will not be mentioned in detail in the following sections. Despite these subtleties, the AdS/BCFT correspondence shows universal results regarding EE. Thus, we focus on its gravitational sector only in this dissertation.

%\subsubsection{Which BCFT is holographic?}
It is not trivial which BCFT has a holographic dual a priori. In~\cite{Sully:2020pza,Reeves:2021sab,Wakeham:2022wyx}, it has been studied what is necessary for a BCFT to have a gravity dual in the spirit of~\cite{Heemskerk:2009pn} (see a conjectured definition for holographic BCFTs given in~\cite{Reeves:2021sab}). In fact, holographic BCFTs\index{holographic BCFTs} are very special among BCFTs. According to these papers, both the ambient and boundary channel vacuum-block dominance are required and it leads to a particular sparse spectrum. Furthermore, the conformal bootstrap assuming vacuum dominance yields a particular randomness in the OPE coefficients from ETH \eqref{eq:ETH} as well as a similar sparseness condition~\cite{Kusuki:2021gpt} (see also~\cite{Kusuki:2022wns}). 

The specialty of (purely gravitational) holographic BCFTs also stems from the vanishing one-point function~\cite{Suzuki:2022xwv}. 
This is because the excitation we consider is only the metric in gravity. This means the only nonzero expectation value (one-point function) is the stress tensor~\cite{Horowitz:1999gf} and otherwise zero:
\[\ev{O}=0\quad (O\neq T_{\mu\nu}).\]
From the geodesic approximation, the one-point function should be described by a geodesic extending from the asymptotic boundary to somewhere else. Unless the field is sourced by the EOW brane (cf. Section 8.1 of~\cite{Fujita:2011fp}), the one-point function should vanish. When the EOW brane sources the matter field, we need to add the matter field on the brane. This is highly model-dependent. Said conversely, the validity of the geodesic approximation constrains the field dependence on the brane~\cite{Kastikainen:2021ybu}.

%\subsection{Double holography}

%\subsection{Constructing AdS/DCFT or AdS/ICFT}

\chapter{Holographic local operator quench in BCFTs}\label{ch:2-2}
\renewcommand{\thesection}{\thechapter.\arabic{section}}
%!TEX root = ../thesis.tex
%*******************************************************************************
%****************************** Second Chapter *********************************
%*******************************************************************************

%Entanglement in holography
%\chapter{My second chapter}

\ifpdf
    \graphicspath{{Chapter2/Figs/Raster/}{Chapter2/Figs/PDF/}{Chapter2/Figs/}}
\else
    \graphicspath{{Chapter2/Figs/Vector/}{Chapter2/Figs/}}
\fi

\renewcommand{\thesection}{\thechapter.\arabic{section}}
\setcounter{section}{0}
\textit{This chapter follows my own work with members at YITP, Tadashi Takayanagi, who is my host faculty during the atom-type fellow, Tomonori Ugajin, whom I mainly performed CFT calculations with, and their graduate students Taishi Kawamoto and Yu-ki Suzuki~\cite{Kawamoto:2022etl}. 
%For the notation consistency, we will denote time in Poincar\'e coordinates by $t$ and its Wick rotation by $t_E$ while time in global coordinates by $\tau$.}
%
%\textit{
It is recommended to read Chapter \ref{ch:2} before this chapter, yet, it is still possible to gain comprehension by consulting referred sections as needed.
The author of this dissertation has contributed to realizing and justifying the modified relation between the deficit angle and the conformal dimension of the inserted operator, the overall BCFT calculation including the energy-momentum tensor and entanglement entropy, and writing the corresponding part of the paper.}\\

In Chapter \ref{ch:1}, we considered vacuum entanglement entropy in massive, interacting QFTs, which are non-conformal. In this chapter, we begin with holography but add additional factors: conformal boundaries and excitation. This leads to a time-dependent setup in contrast to the vacuum case while we can make use of a part of conformal symmetries and holographic formulae. However, a naive extension of the original AdS/CFT correspondence involves several problems. We provide a correct prescription for the gravity dual of the operator local quench (excitation) in BCFTs and argue it perfectly resolves the problems.

This chapter is organized as follows.
In Section \ref{sec:quench-BCFT}, we explain the setup of holographic local operator quench in BCFTs. 
In Section \ref{sec:coords-trasform}, we present our gravity dual of the local operator quench by introducing a localized excitation in the AdS/BCFT. This is achieved by a combination of coordinate transformations.
We also describe the folding brane problem and the induced boundary problem due to a naive application of the AdS/CFT dictionary. We propose a resolution to eliminate an induced boundary by rescaling.
In Section \ref{sec:m-less-r}, we calculate its holographic stress tensor in this model and show the result matches a naive expectation using a correct prescription. We also examine a coordinate transformation that helps us to identify its BCFT dual. Importantly, we show our prescription modifies the black hole threshold so that the folding problem is avoided.
In Section \ref{sec:holoEE}, we compute the holographic entanglement entropy in our holographic local quench model. In Section \ref{sec:BCFT-EM}, we provide the BCFT description of our local operator quench and compute the stress tensor. We compare the results in the gravity dual with those in the BCFT. In Section \ref{sec:BCFT-EE-calc}, we compute entanglement entropy in the BCFT and observe a consistent result with the identified gravity dual.
%
%Appendix
In Appendix \ref{sec:Embedd}, we compute the geodesic distance in the embedding space formalism.
In Appendix \ref{app:curve}, we derive \eqref{disco} by analyzing the branch, from which we obtain the correct black hole threshold.
In Appendix \ref{app:branch}, we perform the monodromy analysis to determine the branch of cross ratios in \eqref{eq:conn-EE-CFT} and \eqref{eq:dis-next-br}.

\section{Setup}\label{sec:quench-BCFT}
In this chapter, we consider timelike boundary cases but with a large and inhomogeneous excitation. This is because studies of EOW branes so far have mainly been limited to holographic setups at zero or finite temperatures. This raises a basic question of whether the holographic duality of AdS/BCFT works successfully in time-dependent backgrounds. It is also interesting for our ultimate goal of realizing our universe since this setup is time-dependent and some nontrivial backreactions like a conical defect and black hole are involved.

We focus on an analytical model where the excitation is created by a massive particle in  three-dimensional AdS geometry. Via the AdS/BCFT duality, this is dual to a local operator quench\index{local operator quench} in the holographic two-dimensional BCFT:\footnote{
In this section, our two-dimensional BCFT has either a single boundary at $x=0$ or two boundaries (one at $x=0$ and the other at a specific time-dependent location $x=Z(t)$). Although in the former case, in which $H|0\lb=0$ holds, the definition (\ref{LOS}) is equivalent to the usual definition of the local quench $|\Psi(t)\lb=e^{-itH}e^{-\ap H}O(x=x_a)|0\lb$, we need to modify this for the latter case, in which $H|0\lb\neq 0$, as the time evolution is not unitary but isometry as in \cite{Cotler:2022weg}.
As we will discuss later, \eqref{LOS} is the correct definition for the local quench dual to what is discussed in this chapter. (For an advanced reader: precisely speaking, the operator $O$ must be treated as a boundary operator and its chiral dimension is related to the bulk mass~\cite{Kusuki:2022ozk}. However, our result is shown to be consistent with this treatment.)
%Since we employ the path-integral description later, we will not get into details of this.
}
\ba
|\Psi(t)\lb=e^{-itH}O(x=x_a,t_E=-\alpha)|0\lb,  \label{LOS}
\ea
where $H$ is the CFT Hamiltonian, $t_E=it$, and $\ap$ is a regularization parameter \cite{Nozaki:2014hna,He:2014mwa,Caputa:2014vaa,Asplund:2014coa} (refer also to \cite{Guo:2015uwa} for an earlier analysis of 
local operator quench in BCFTs). Our gravity model is obtained by introducing an EOW brane\index{end-of-the-world brane} in the holographic local quench model~\cite{Nozaki:2013wia} discussed in Section \ref{sec:local-op-quench}. The shape of the EOW brane is deformed by the local excitation via the gravitational backreaction, which gives novel dynamics of the local quench in the BCFT. Via the double holography, this model is also closely related to the local quench in a two-dimensional gravity studied in \cite{Goto:2020wnk} as a model of black hole evaporation.
We analyze the system both from the gravity dual calculations and from the direct computations in BCFT. 

Our local quench model is analytically tractable both from the gravity side and the BCFT side. The former can be found by finding the correct asymptotically AdS spacetime with an EOW brane and a massive particle. This gravity dual geometry allows us to calculate the holographic stress tensor \cite{Balasubramanian:1999re} and holographic entanglement entropy \cite{Ryu:2006bv,Ryu:2006ef,Hubeny:2007xt}.  
On the other hand, the latter can be analyzed via  direct conformal field theoretic computations by employing a suitable conformal map. We will note that a careful choice of the coordinate system in the gravity dual is important when we compare the results of the former with those of the latter. Eventually, we will find complete matching between the gravity dual results and the BCFT ones in the large-$c$ limit.

\section{Coordinate transformations}\label{sec:coords-trasform}
In this section, we perform coordinate transformations to find a gravity dual of CFT with both boundaries and a heavy excitation.
\subsection{Recap: Coordinates for holographic local quench} 
Let us briefly recap what we have seen in Section \ref{sec:asympt-AdS} and \ref{sec:local-op-quench}.
By \eqref{mappo}, we can derive the geometry dual to the local operator quench in BCFT by pulling back the backreacted global coordinates \eqref{GBmet}
\ba
ds^2=-(r^2+R^2-M)d\tau^2+\frac{R^2}{r^2+R^2-M}dr^2+r^2d\theta^2,   
\ea
where $\theta$ has the periodicity $2\pi$ as $-\pi\leq \theta<\pi$,
to the asymptotically Poincar\'e AdS describing the infalling particle (Fig.\ref{mapfig}). 
The line $x=0$ at the boundary $z=0$ in Poincar\'e AdS is corresponds to $\theta=0$. Since the excitation is at $\tau=\pm \ap$ on the asymptotic boundary $z=0$, the massive particle is dual to a local operator excitation given by (\ref{LOS}) with $x_a=0$, where the mass $m$ is related to the conformal dimension $\Delta$ of the primary operator $O(x)$ via 
$\Delta\simeq mR$ as identified in \cite{Nozaki:2013wia}.

\begin{figure}
  \centering
  \includegraphics[width=8cm]{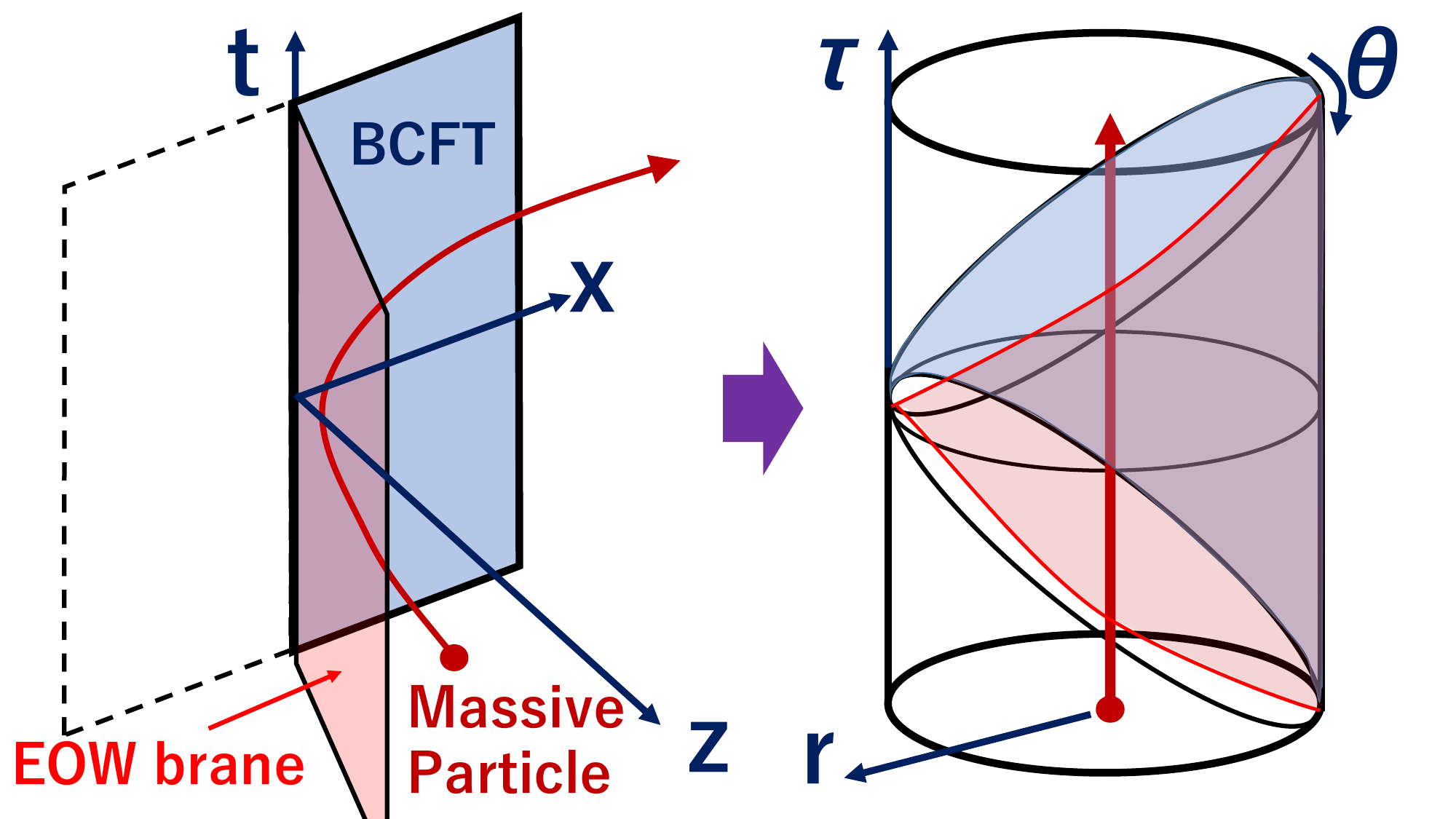}
  \caption{A sketch of the coordinate transformation from the Poincar\'e AdS into a global AdS in the 
presence of a massive particle (the red arrow) and an EOW brane (the red surface).}
\label{mapfig}
\end{figure}

\subsection{Profiles of EOW branes}\label{sec:m-zero}
We are interested in the effects of the presence of a boundary in the aforementioned local quench, i.e. the profile of the EOW brane and its dynamics. 
In the following sections, we focus on the case ${\cal T}\neq 0$ so that the falling particle is off the EOW brane.
Since the backreacted metric due to the local quench is locally pure global AdS in the tilded global coordinates \eqref{deficitg},
it is convenient to use the map \eqref{teha}. For $0<M<R^2$, it rescales to the global AdS with a conical deficit \eqref{deficitg}
\ba
ds^2=-(\ti{r}^2+R^2)d\ti{\tau}^2+\frac{R^2}{\ti{r}^2+R^2}d\ti{r}^2+\ti{r}^2d\ti{\theta}^2.
\ea
The new angular coordinate $\ti{\theta}$ has a deficit angle $2\pi\chi$ as we have seen in \eqref{tildtheta}. For $M>R^2$, the geometry (\ref{GBmet}) describes a BTZ black hole, where the horizon is situated as $r=\s{M-R^2}$.

Recall that the EOW brane (\ref{eomb}) corresponds to the following profile in the global AdS$_3$ (\ref{Gmet}):
\ba
r\sin\theta=-\lambda R,
\ea
where we set the boundary location $x_0=0$.
We can specify the brane profiles in the time-dependent geometry by applying the chain of diffeomorphisms. We begin with the profile of the EOW brane in the transformed geometry (\ref{deficitg}),
\ba
\ti{r}\sin \ti{\theta}=-\lambda R.  \label{profti}
\ea
Thus for $0\leq M<R^2$, the EOW brane profile is found in the original coordinates as 
\ba
r\sin\left(\chi\theta\right)=-\lambda \s{R^2-M}.
\label{beoms}
\ea
%Thus 

For $\lambda>0$, the EOW brane extends from $\theta=0$ to $\theta=-\frac{\pi}{\chi} \simeq 2\pi-\frac{\pi}{\chi}$ as
 depicted in the middle panel of Fig.\ref{deformationfig}. When $M=0$, the brane intersects with the asymptotic boundary at $\theta=0$ and $\theta =\pi$. As we increase the mass  of the bulk particle, the brane profile is eventually  bent, and the coordinate distance between the two endpoints gets closer.
 
 For $\lambda<0$, the EOW brane extends from $\theta=0$ to $\theta=\frac{\pi}{\chi}\simeq \frac{\pi}{\chi}-2\pi$ as shown in Fig.\ref{deformationfig2}. In contrast to the $\lambda>0$ cases, the dual gravity region does not include the falling particle but is affected by its backreaction.
 
 %In order to have a non-trivial effect due to the massive particle, we assume 
 %$ {\cal T}>0$ below. This is because the center $r=0$, where the massive particle is situated, is included in the gravity dual spacetime only when $ {\cal T}>0$. 
 %Moreover, 
 We would like to note that if $M$ is large enough such that $M>\frac{3}{4}R^2$, then the EOW brane gets folded as in the right panel of Fig.\ref{deformationfig} and the gravity dual does not seem to make sense. As discussed in \cite{Cooper:2018cmb} (see also \cite{Miyaji:2021ktr,Geng:2021iyq}), this kind of folded solution implies that the hard wall approximation is no longer valid in the presence of strong backreaction. Nevertheless, as we will see later, suitable treatment of the AdS/BCFT yields $M$ is always less than or equal to $\frac{3}{4}R^2$, given a conformal dimension less than the black hole threshold.

\begin{figure}
  \centering
  \includegraphics[width=12cm]{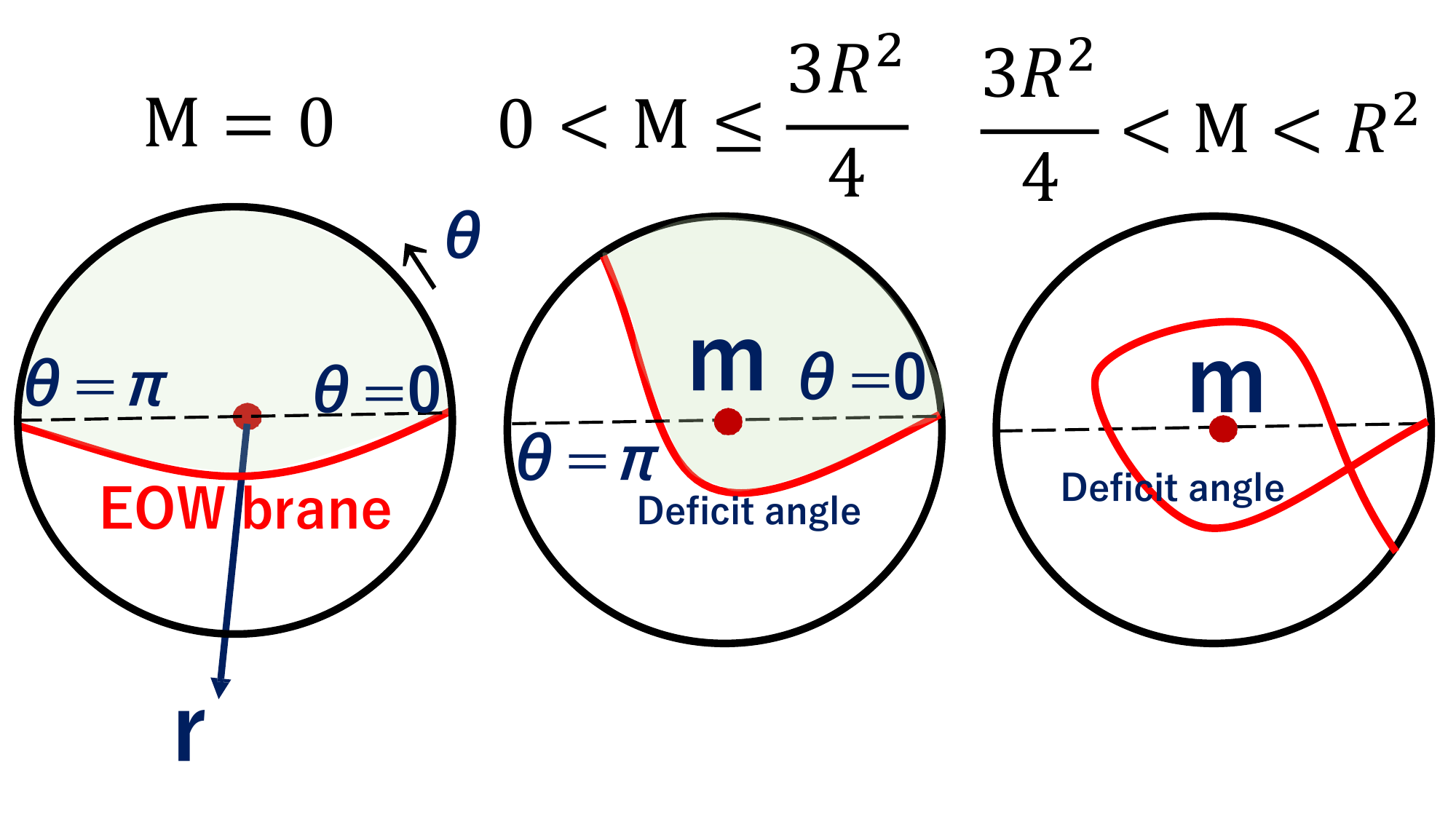}
  \caption{Cross sections at constant $\tau$ for the backreacted geometry with a mass and a positive tension $\lambda>0$. We depicted the EOW brane as red curves. The light green regions are the gravity duals in the AdS/BCFT. Though for $M>\frac{3}{4}R^2$, the EOW brane gets folded and the gravity dual does not make sense, we do not need this range of the mass when we consider the BCFT dual as we explain around (\ref{relasd}).}
\label{deformationfig}
%\end{figure}
%
%\begin{figure}
%  \centering
  \includegraphics[width=12cm]{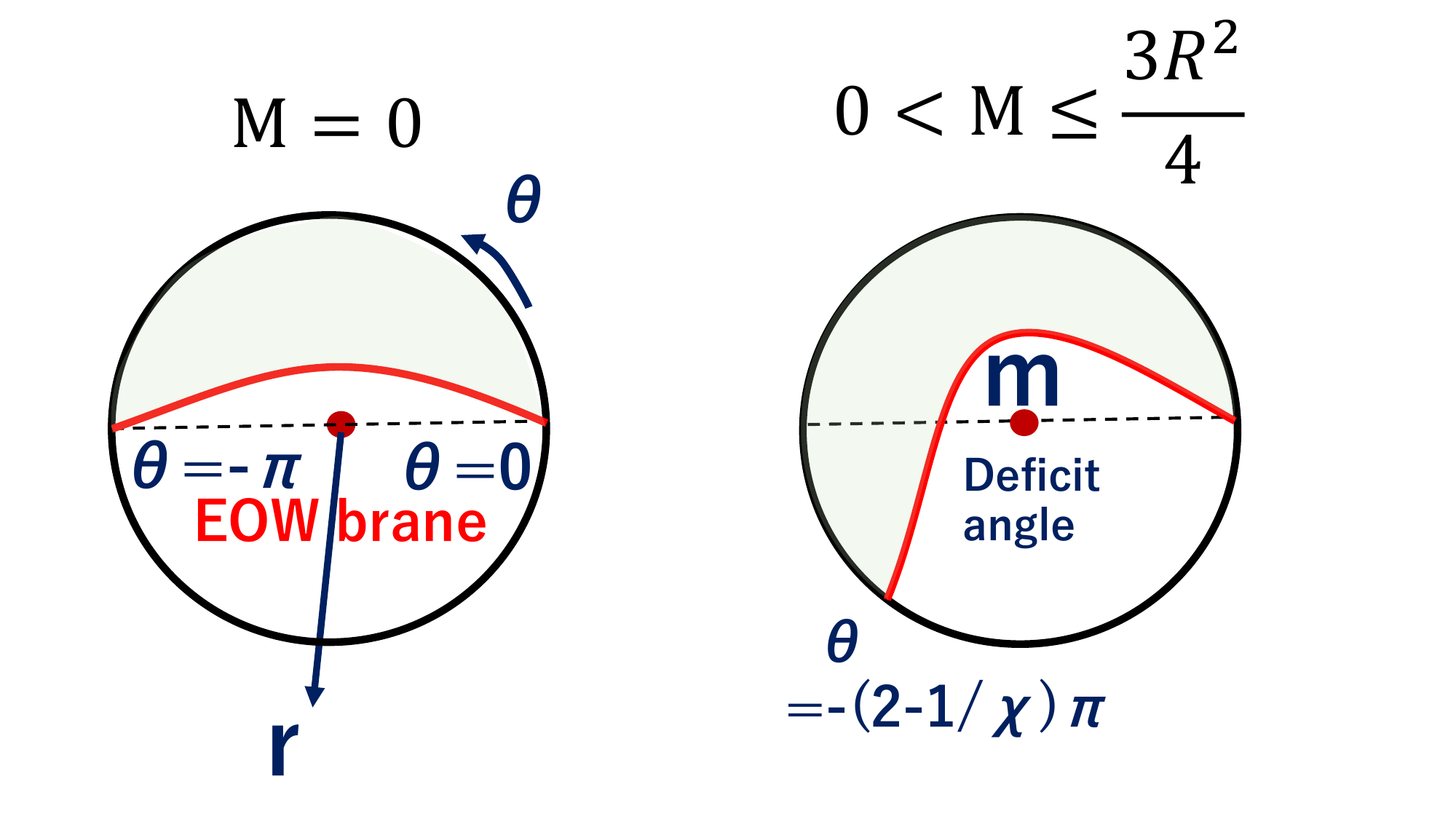}
  \caption{Cross sections at constant $\tau$ for the backreacted geometry with a mass and a negative tension $\lambda<0$. We depicted the EOW branes as red curves. The light green regions are the gravity duals of the BCFT.}
\label{deformationfig2}
\end{figure}

\begin{figure}
\centering
    \includegraphics[width=5cm]{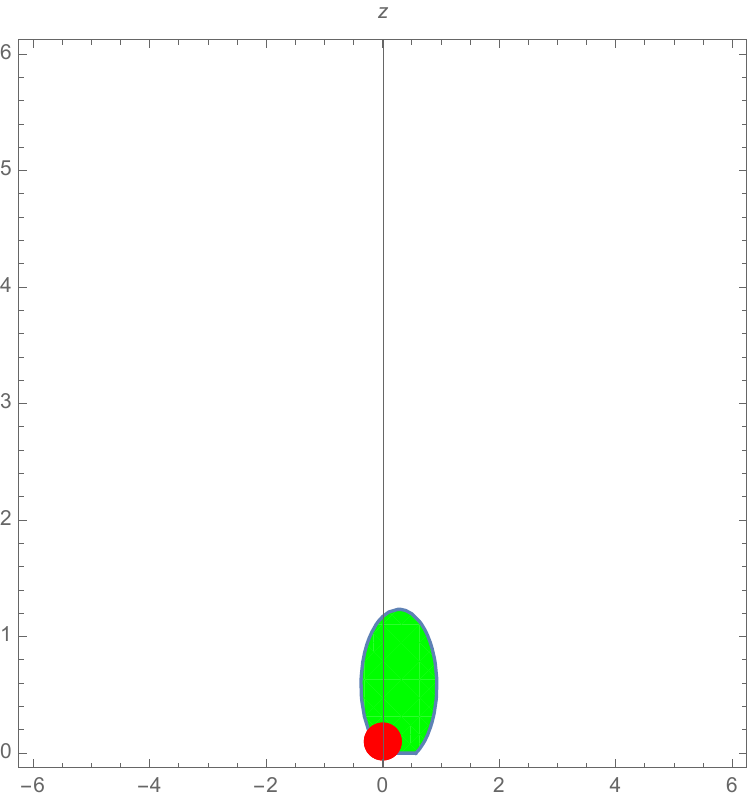}
    \vspace{1cm}
    \includegraphics[width=5cm]{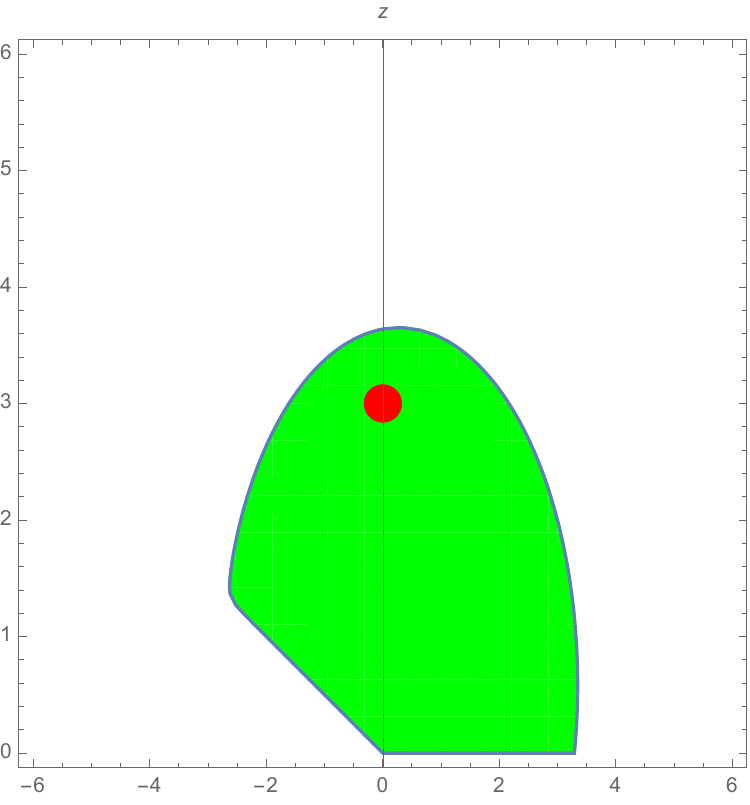}
    \vspace{1cm}
     \includegraphics[width=5cm]{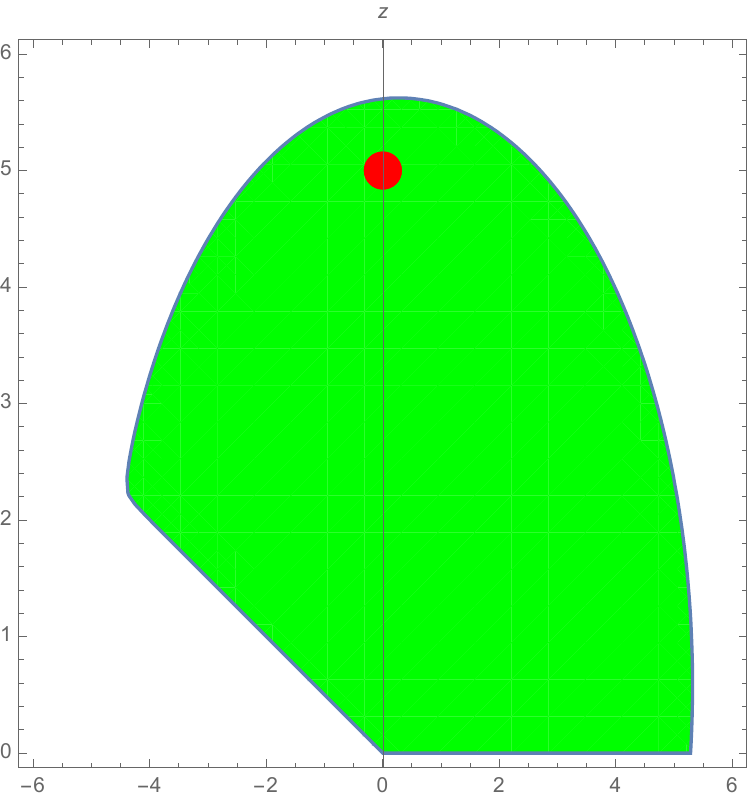}
  \caption{Time evolution of an EOW brane and a massive particle in the Poincar\'e coordinates  with  $\alpha=0.1$, $\chi=0.9$, and $\lambda=2$. The vertical line is $z$ axis (the bulk direction) and the horizontal line is $x$ axis (the boundary spatial direction). The green region represents the gravity dual region of the BCFT  and the red dot represents the position of the massive particle. The EOW brane attaches at the boundary at $x=0$ and $x=Z(t)$. We chose the time to be $t=0,3$ and $5$ in the left, middle and right panels. As time evolves the EOW brane probes deeper in the bulk.}
  \label{timeetw135}
\end{figure}

By applying the map (\ref{mappo}), assuming $0<M<\frac{3}{4}R^2$, %and ${\cal T}>0$ , 
we find that the BCFT dual to the Poincar\'e patch has two boundaries which are the two intersections of  the AdS boundary $z=0$ and the EOW brane: $x=0$ and 
\ba
x=\pm\left( \frac{\ap}{\gamma}+\s{\ap^2\left(1+\frac{1}{\gamma^2}\right)+t^2} \right)\equiv \pm Z(t), \quad Z(t) >0,\label{disco}
\ea
where 
\ba
\gamma=\tan\left(\frac{\pi}{\chi}\right). 
\ea
The $+$ sign corresponds to the $\lambda>0$ case while the $-$ sign corresponds to the $\lambda<0$ case. The BCFT lives on the spacetime defined by $0\le x \le Z(t)$ for $\lambda>0$ and $\{ 0\le x, x \le Z(t) \}$ for $\lambda<0$. In the following discussions, we focus on the $\lambda>0$ case, though, the $\lambda<0$ case is treated in the same manner.

We plot the time evolution of a massive particle and the EOW brane, see Fig.\ref{timeetw135}. We can see that the EOW brane bends toward the conformal boundary due to the massive particle.

In summary, the BCFT lives on the spacetime defined by $0\le x \le Z(t)$ (for $\lambda>0$).
This geometry at the AdS boundary is depicted in the left panel of Fig.\ref{deformationfig}. Note that  the location of the second boundary is time-dependent and at late times it almost expands at the speed of light. As we will see later, this geometry naturally arises from a conformal transformation in a two-dimensional BCFT.

\begin{figure}[h]
  \centering
  \includegraphics[width=8cm]{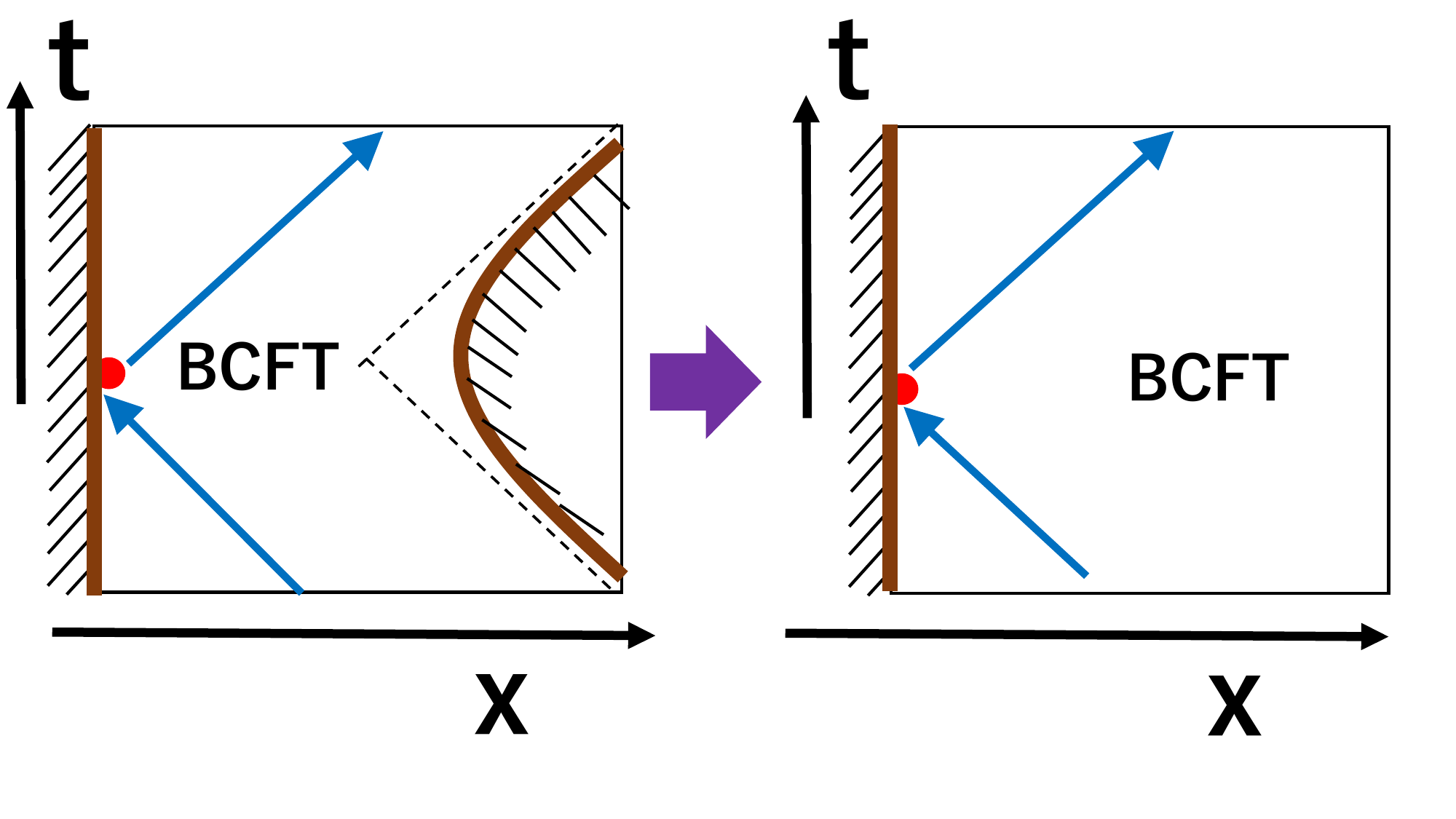}
  \caption{The left picture sketches the setup of BCFT in the presence of the two boundaries: $x=0$ and (\ref{disco}). We may try to remove the right boundary by a coordinate transformation (right picture).}
\label{setupsbdyfig}
\end{figure}

In principle, it is also possible to shift the location of the EOW brane for  $x_0<0$ in (\ref{eomb}).
It is given by the surface
\ba
\frac{\ti{r}}{R}\sin\theta+\frac{x_0}{R\ap} (\s{R^2+\ti{r}^2}\cos\ti{\tau}+\ti{r}\cos\ti{\theta})+\lambda=0.
\ea
This is dual to a local operator quench (\ref{LOS}) at $x_a=-x_0$, by shifting the location of the boundary from $x=x_0$ to $x=0$. Since this gives a complicated time-dependent spacetime, we will not discuss this further. 

On the other hand, for $M>R^2$, the spacetime (\ref{GBmet}) describes a BTZ black hole. We can still analytically continue the expression (\ref{teha}) and (\ref{profti}), we obtain
\ba
r\sinh\left(\frac{\s{M-R^2}}{R}\theta\right)=-\lambda \s{M-R^2}.
\ea
This EOW brane extends from the AdS boundary to the black hole horizon. We will not get into this black hole setup in more detail as our comparison with the CFT result can be done with the deficit angle geometry.

\subsection{Coordinate transformation}
As we have found in the previous section, 
for $0<M<\frac{3}{4}R^2$, the AdS/BCFT setup is given by the boundary surface (\ref{beoms})
in the deficit angle geometry (\ref{GBmet}). Via the coordinate transformation 
(\ref{mappo}), it is mapped into the asymptotically Poincar\'e AdS geometry whose boundaries consist of two segments $x=0$ and $x=Z(t)$ (\ref{disco}) as depicted in the right of Fig.\ref{setupsbdyfig}. The appearance of the second boundary may not be surprising because the massive particle in the center of the global AdS attracts the EOW brane toward the center and this backreaction bends the brane such that its intersection with the AdS boundary gets shifted toward the first boundary $x=0$.  
Originally, however, we have intended a local operator quench in a BCFT on a half place (the right panel of Fig.\ref{setupsbdyfig}), 
instead of the region surrounded by two boundaries 
(the left one of  Fig.\ref{setupsbdyfig}).

To resolve this issue, we would like to perform the following rescaling of the global AdS coordinates:
\ba
\theta'=\eta\theta,\ \ \tau'=\eta\tau,\ \ r'=r/\eta.  \label{rescaleex}
\ea
Applying this to the metric \eqref{GBmet} gives
\ba
ds^2=-(r'^2+R^2-M')d\tau'^2+\frac{R^2}{r'^2+R^2-M'}dr'^2+r'^2d\theta'^2,
\ea
where
\ba
M'=\frac{M}{\eta^2}+R^2\left(1-\eta^{-2}\right).  \label{newmass}
\ea
Note that this is equivalent to 
\begin{equation}
    \chi=\sqrt{\frac{R^2-M}{R^2}}\rightarrow \frac{\chi}{\eta}=\sqrt{\frac{R^2-M^\prime}{R^2}}.
    \label{eq:chi-transform}
\end{equation}
Since the asymptotically AdS region surrounded by the surface $Q$ (\ref{beoms}) is 
given by $0<\theta<\left(2-\frac{1}{\chi}\right)\pi$, if we choose
\ba
\eta\geq \eta_0\equiv \frac{1}{2-{1}/{\chi}},  \label{mineta}
\ea
then the range of the new angular coordinate $\theta'$ takes 
$0<\theta'<\theta'_{max}$ with $\theta'_{max}\geq \pi$  on the asymptotically AdS boundary.\footnote{Notice that even when $\lambda<0$, this rescaling by $\eta\geq\eta_0$ takes the second boundary $\theta=-\left(2-\frac{1}{\chi}\right)\pi$ to $\theta^\prime\leq-\pi$. Thus, by the same coordinate transformation, the BCFT with a negative tension becomes a half space.} When $\eta=\eta_0$, we have $\theta'_{max}=\pi$.

Therefore, if we apply the coordinate transformation\footnote{Here we mean $r'\sin\theta'=\frac{Rt'}{z'}$ for example.} 
(\ref{mappo}) with $(r,\theta,\tau)$ and $(z,x,t)$ replaced with 
$(r',\theta',\tau')$ and $(z',x',t')$ with $\eta\geq\eta_0$, then the resulting asymptotically Poincar\'e AdS, given by the coordinate $(z',x',t')$,
includes only a single boundary $x'=0$ since the Poincar\'e patch only covers the regime $-\pi\le\theta^\prime< \pi$.  While the falling particle trajectory remains the same $z'=\s{t'^2+\ap^2}$, the shape of EOW brane gets modified. In this way,  we can realize the gravity setup dual to the local operator quench on a half plane by choosing $\eta\geq \eta_0$. Especially when $\eta=\eta_0$, the entire BCFT and its gravity dual region is covered even after the Poincar\'e patch. We will justify this correspondence after the rescaling by comparing the bulk calculation with that in the holographic CFT in the subsequent sections.

\section{Holographic stress tensor}\label{sec:m-less-r}

One way to check the validity of the dual CFT interpretation is to compute the holographic stress tensor \eqref{eq:hol-EM-tensor}~\cite{Balasubramanian:1999re}.
 The value of the holographic stress tensor in our setup before we perform the coordinate transformation (\ref{rescaleex})
 is exactly the same as that without the boundaries, which was computed in 
\cite{Nozaki:2013wia}:
\ba
T_{--} (M)=\frac{M\ap^2}{8\pi G_N R\left((t-x)^2+\ap^2\right)^2},\ \ 
T_{++} (M)=\frac{M\ap^2}{8\pi G_N R\left((t+x)^2+\ap^2\right)^2}, \label{eflux}
\ea
where $x^\pm=x\pm t$.
The mass $M$ is related to the mass $m$ of the particle via (\ref{relam}).
One might expect that the conformal dimension of the dual operator $O(x)$ for the local quench is given by 
$mR$ via the familiar correspondence rule. However, this is not completely correct due to the reason we will explain soon later. Instead, we introduce $\Delta_{AdS}$ to distinguish this from the correct conformal dimension $\Delta_O$ of the dual primary operator in the BCFT and write as follows:
\ba
mR\simeq \Delta_{AdS}.  \label{dimfo}
\ea

Indeed, the result of energy fluxes may look confusing at first because the holographic expression  (\ref{eflux}) looks identical to that without any EOW brane inserted. For a local operator quench in the presence of a boundary, we actually expect that the energy fluxes will be doubled due to the mirror charge effect as explained in Fig.\ref{doublefluxfig}. As we will confirm from the CFT calculation later, even though the boundary $x=0$ produces the doubled flux, the presence of the other boundary (\ref{disco}) reduces the energy fluxes, which is analogous to the Casimir effect.

This also gives another motivation for performing the previous coordinate transformation (\ref{rescaleex}) as this removes the extra boundary from the dual BCFT.  The energy flux after the transformation is simply given by (\ref{eflux}) with $M$ replaced with $M'$ in (\ref{newmass}). This leads to 
a class of asymptotically Poincar\'e AdS solutions with an EOW brane which is specified by the two parameters 
$M$ and $\eta$. The energy flux is obtained by replacing $M$ in (\ref{eflux}) with $M'$ given by (\ref{newmass}), which is a monotonically increasing function of $\eta$.  Note that 
$M$ is related to $\Delta_{AdS}$ via (\ref{dimfo}) and (\ref{relam}):\footnote{It is useful to note that $\chi=\sqrt{\frac{R^2-M}{R^2}}=\sqrt{1-12\frac{\Delta_{AdS}}{c}}$ from \eqref{relationm}.}
\ba
\frac{M}{R^2}=12\frac{\Delta_{AdS}}{c}, \label{relationm}
\ea
which leads to the following holographic stress tensor in the BCFT language:
\ba
&& T_{\pm\pm} (M^\prime)=s_{AdS}(\Delta_{AdS},\eta)\cdot\frac{\ap^2 }{\pi \left((t\pm x)^2+\ap^2\right)^2},\no
&& s_{AdS}(\Delta_{AdS},\eta)\equiv\frac{\Delta_{AdS}}{\eta^2}+\frac{c}{12}
\left(1-\frac{1}{\eta^2}\right).
\label{efluxx}
\ea
It is also useful to note that the energy density $T_{tt}$ is given by\footnote{In previous chapters, we denoted the energy density by $T_{00}$. In this chapter, we denote it by $T_{tt}$ to emphasize that it is the energy with respect to time in global coordinates.} 
\ba
&& T_{tt}=T_{++}+T_{--}=2s_{AdS}(\Delta_{AdS},\eta)\cdot H(t,x),\no
&& H(t,x)=\frac{\ap^2\left((t^2+x^2+\ap^2)^2+4t^2x^2\right)}{\pi\left((x^2-t^2-\ap^2)^2+4\ap^2x^2\right)^2}.\label{energyd}
\ea

On the other hand, the other parameter $\eta$ describes the degrees of freedom of gravitational excitations dual to those of descendants in the BCFT and this affects the shape of EOW brane. In other words, $\eta$ corresponds to a conformal transformation.
Since the cylinder coordinates and the plane coordinates are related by the conformal transformation 
(\ref{confmap}), changing $\eta$ induces further conformal transformation:
\ba
t'\pm x'=\ap\tan\left(\frac{\tau'\pm \theta'}{2}\right)=\ap\tan\left(\frac{\eta(\tau\pm \theta)}{2}\right).
\label{conf-map-eta}
\ea
The stress tensor is generated from this via the Schwarzian derivative term.
This is dual to the descendant excitations of the two-dimensional CFT. Note that if the two solutions $(M_1,\eta_1)$ and $(M_2,\eta_2)$ satisfies
\ba
\frac{R^2-M_1}{\eta_1^2}=\frac{R^2-M_2}{\eta_2^2},\label{relpa}
\ea
or equally $\chi_1/\eta_1=\chi_2/\eta_2$, then 
the stress tensors become identical. This means that the asymptotic metric near  the AdS boundary is the same.  However they are different globally,  because of the different deficit angle due to the massive particle.  

In particular, when $\eta=\eta_0$ (\ref{mineta}), 
the energy flux gets minimized among those dual to a geometry with a single boundary $x=0$ and thus we expect $\eta=\eta_0$ corresponds to the BCFT with a local operator excitation without any other excitations. In this case, the stress tensor is given by (\ref{efluxx}) with
$s_{AdS}(\Delta_{AdS},\eta_0)$ takes 
\ba
s_{AdS}(\Delta_{AdS},\eta_0)=\frac{c}{3}\s{1-\frac{12\Delta_{AdS}}{c}}
\left(1-\s{1-\frac{12\Delta_{AdS}}{c}}\right).
\label{efluxy}
\ea
We argue that in this case $\eta=\eta_0$, $s_{AdS}$ is directly related to the conformal dimension of 
the primary $O(x)$ in the dual BCFT by
\ba
s_{AdS}(\Delta_{AdS},\eta_0)=2\Delta_O.
\label{eq:AdS EM tensor}
\ea
This is because this setup is the BCFT defined on the right half plane $x>0$ and the energy flux should be simply twice of that in the same CFT without any boundary. We will show that this is true from the explicit CFT calculation in \S\ref{sec:EMtensor-holCFT}. It is also useful to note that 
this relation between $\Delta_O$ and $\Delta_{AdS}$ can be expressed as follows: 
\ba
2\s{1-\frac{12\Delta_{AdS}}{c}}-1=\s{1-\frac{24\Delta_O}{c}},\label{relasd}
\ea
and when $\Delta_O,\Delta_{AdS}\ll c$, we find 
\ba
\frac{\Delta_{AdS}}{c}=\frac{\Delta_O}{c}+3\left(\frac{\Delta_O}{c}\right)^2+\ddd.
\ea
It is also helpful to note that the standard range of primary operator below the black hole threshold given by $0<\Delta_O<\frac{c}{24}$ corresponds\footnote{Notice that $\Delta_O$ is the total conformal dimension i.e. the sum of chiral and anti-chiral conformal dimension.} to the range $\frac{1}{2}<\s{1-\frac{12\Delta_{AdS}}{c}}=\chi<1$. This is consistent with the previous observation\footnote{This bound is equivalent to $\Delta_{AdS}<c/16$ given in \cite{Geng:2021iyq} by replacing $\Delta_{\mathrm{bcc}}$ in their paper with $\Delta_{AdS}$.} that the EOW brane configuration makes sense only when $0<M<\frac{3}{4}R^2$  in order to avoid the self-intersection of the brane (folding problem).\footnote{The self-intersection problem also appears in higher dimensions \cite{Fallows:2022ioc}. Although our analysis focus on $d=2$, a similar analysis may circumvent the problem likewise.}
One important difference, which has been confused previously, is that this bound $\Delta_O<\frac{c}{24}$ is exactly equivalent to the black hole threshold for $2\Delta_O$ through the correct relation \eqref{relasd}. Indeed, the dimension gets actually doubled in the presence of the boundary due to the mirror effect as we will explain from the BCFT viewpoint in section 4. Notice also that we can in principle extend to the heavier excitation with $\Delta_O\geq \frac{c}{24}$ by analytically continuing the formula (\ref{relasd}), where $\Delta_{AdS}$ and therefore the mass $M$ gets complex valued.

\begin{figure}
  \centering
  \includegraphics[width=8cm]{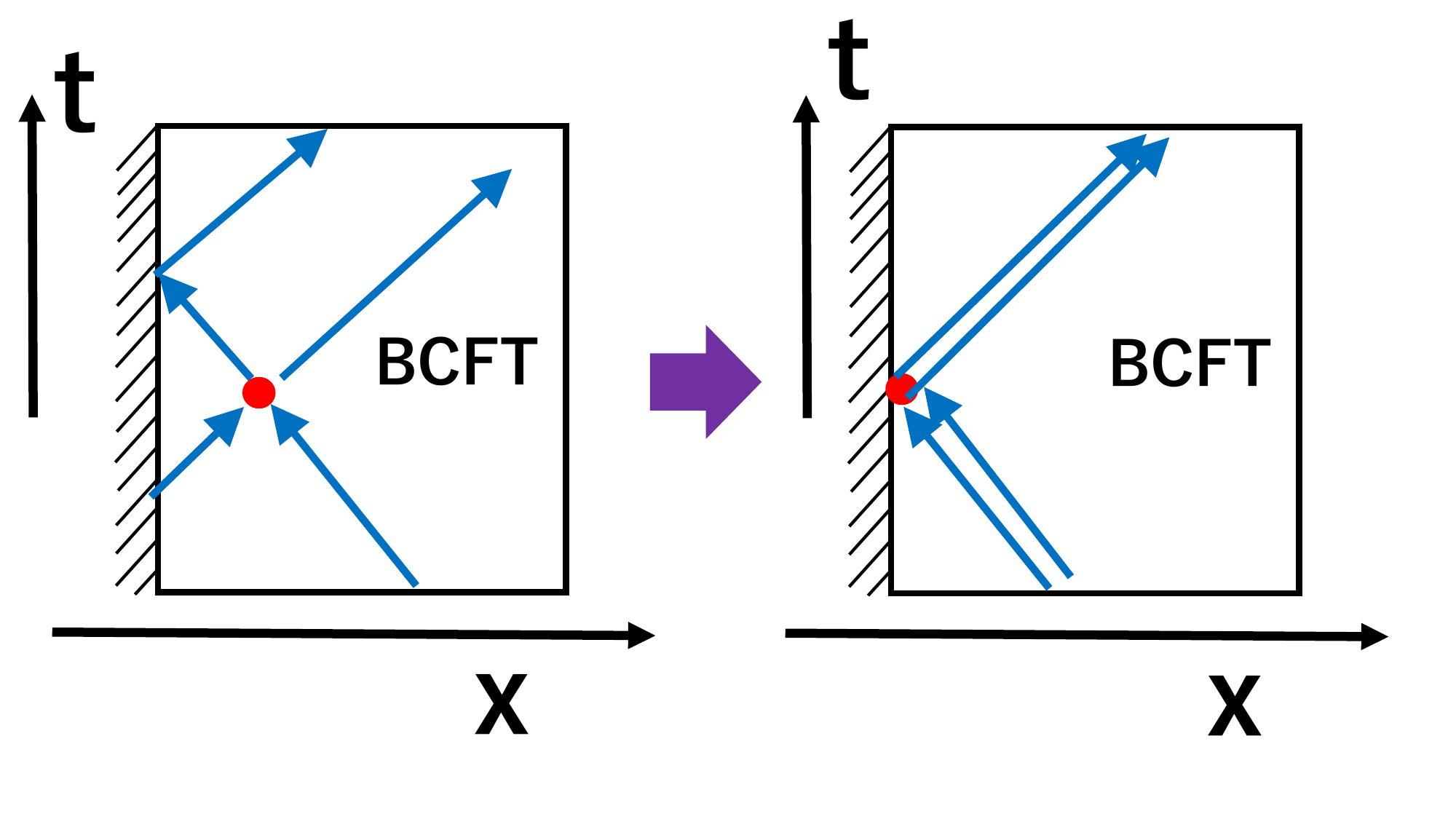}
  \caption{The energy fluxes from an excitation (red point) in the presence of 
  a boundary. When the excitation coincides with the boundary, the energy flux is doubled as in the right picture.}
\label{doublefluxfig}
\end{figure}

\section{Time evolution of holographic entanglement entropy}\label{sec:holoEE}

In this section, we calculate the time evolution of the holographic entanglement entropy in our gravity dual of the local operator quench in the BCFT. We can analytically obtain results since the holographic setup is related to a global AdS$_3$ with an EOW brane and a deficit angle through the chain of coordinate transformations, as we saw in the previous section. 

Our goal is to calculate the holographic entanglement entropy in the asymptotically Poincar\'e AdS$_3$ background. On this boundary, we define a subsystem $I$ to be an interval $I$ between two points $A$ and $B$ at each time $t$. We write the spatial coordinate of $A$ and $B$ as $x_A$ and $x_B$, respectively, assuming $x_{A}< x_{B}$. Then we follow the time evolution of the entanglement entropy $S_{AB}$ in our BCFT as a function of the boundary time $t$. 

The entanglement entropy $S_{AB}$ is defined as usual by first introducing the (time-dependent) reduced density matrix $\rho_{AB}(t)$ by tracing out the complement $I^c$ of the interval $I=[x_A,x_B]$ from the density matrix for the operator local quench state (\ref{LOS}):
\ba
\rho_{AB}(t)=\mbox{Tr}_{I^c}\left[|\Psi(t)\lb\la\Psi(t)|\right].
\label{eq:reduced-rho}
\ea
The entanglement entropy is defined by the von-Neumann entropy as a function of time $t$:
\ba
S_{AB}(t)=-\mbox{Tr}[\rho_{AB}(t)\log\rho_{AB}(t)].
\ea

\subsection{Holographic entanglement entropy in AdS/BCFT}
As we have reviewed in the latter part of Section \ref{sec:hol-CFT2},
the holographic entanglement entropy in AdS$_3/$CFT$_2$ is computed by the length of geodesics which anchor the boundary points $A:(z=\epsilon,x=x_{A},t)$ and $B:(z=\epsilon,x=x_{B},t)$, where $\epsilon$ is the UV cutoff surface~\cite{Ryu:2006bv,Ryu:2006ef,Hubeny:2007xt}.
We discussed in Section \ref{sec:HEE-BCFT} that the formula is modified in the AdS/BCFT setup so that the geodesics can end on the EOW brane. There are two phases and we need to take the minimum of those~\eqref{eq:HEE-BCFT} (Fig.\ref{HEEfig}).

Below we separately compute these two contributions, then find the actual value of the entropy.  We initially  do not perform the rescaling (\ref{rescaleex}), i.e. we set $\eta=1$, where the gravity dual is given by the BCFT on $0<x<Z(t)$, where 
$Z(t)$ is given in (\ref{disco}). Later, we will extend our analysis to $\eta>1$ case.

\begin{figure} 
  \centering
  \includegraphics[width=6cm]{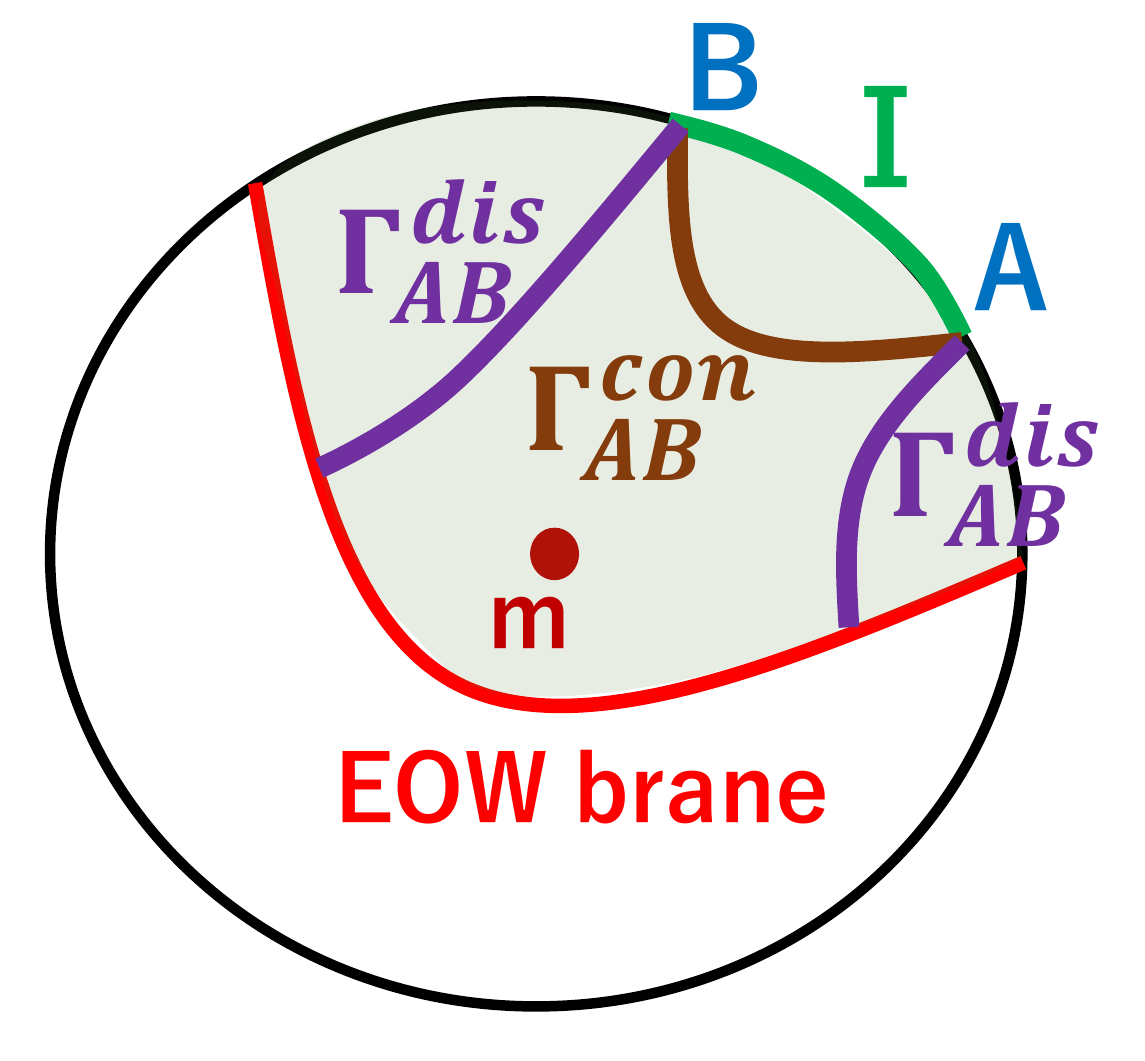}
  \caption{A sketch of calculation of holographic entanglement entropy in AdS/BCFT with local quench. We showed that the connected geodesic $\Gamma^{con}_{AB}$ (brown curve) and the disconnected geodesics $\Gamma^{dis}_{AB}$ (purple curves) for a boundary subsystem given by an interval $I$ between $A$ amd $B$.}
\label{HEEfig}
\end{figure}

\subsection{Connected entropy}\label{sec:connected-ent}

First, we would like to calculate the connected entropy $S^{con}_{AB}$.
As we review in Appendix \ref{sec:Embedd}, an efficient  way to compute the length of a geodesic in $AdS_{3}$ is using its embedding to flat space $\mathbb{R}^{2,2}$, where we derive the detailed formula. Below we show only the  results. The geodesic length connecting two boundary points $ A: (\theta_{A}, \tau_{A}, r_{A})$  and $B: (\theta_{B}, \tau_{B}, r_{B})$ in the metric \eqref{GBmet} is obtained by rescaling  (\ref{AppendixA EE}) by \eqref{teha}. The corresponding holographic entanglement entropy reads
\begin{equation}
    S^{con}_{AB} = \frac{c}{6}\log{\left[ \frac{2r_A r_B}{R^2 \chi^2}(\cos(\chi(\tau_A-\tau_B))-\cos(\chi(\theta_A-\theta_B))\right]},
    \label{connected HEE}
\end{equation}
where
\ba
\chi = \sqrt{\frac{R^2-M^2}{R^2}}= \s{1-\frac{12\Delta_{AdS}}{c}},
\ea
as we have introduced \eqref{teha} and $c$ is the central charge $c= \frac{3R}{2G}$ of the dual CFT \eqref{eq:brown-henneaux}.

Although we  presented the formula for the holographic entanglement entropy in global coordinates, ultimately we are interested in its expression in the Poincar\'e coordinates where the setup of the local quench is  introduced.  By restricting the coordinate transformation (\ref{mappo}) at the AdS boundary $z=\ep$, we obtain the boundary map:
\ba
&& e^{i\tau}=\frac{\ap^2+x^2-t^2+2i\ap t}{\s{(x^2-t^2)^2+2\ap^2(t^2+x^2)+\ap^4}},\no
&& e^{i\theta}=\frac{\ap^2-x^2+t^2+2i\ap x}{\s{(x^2-t^2)^2+2\ap^2(t^2+x^2)+\ap^4}},\no
&& r=\frac{R}{2\ap \ep}\s{(x^2-t^2)^2+2\ap^2(t^2+x^2)+\ap^4}. \label{thrmap}
\ea
Mapping the end points $A=(x_{A},t_{A},\epsilon)$ and $B$ in the Poincar\'e coordinates by the above transformation to those in the global AdS coordinates, we can calculate the geodesic length from the formula 
(\ref{connected HEE}). Note that we need to choose $A$ and $B$ within the region of BCFT i.e. $0<x_A<x_B<Z(t)$. Refer to Fig.\ref{fig:regionplot} for an explicit plot.

\begin{figure}[t]
    \centering
    \includegraphics[width = 60mm]{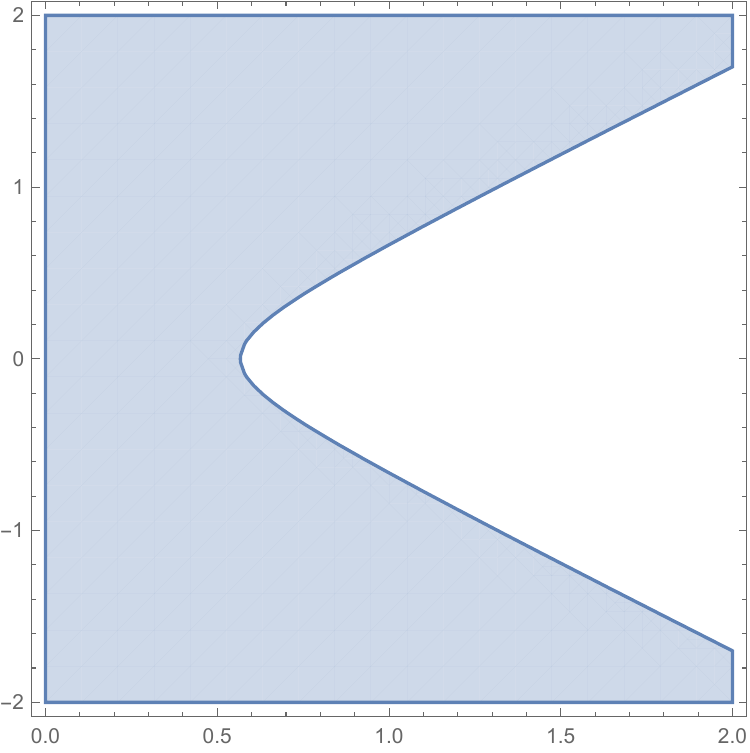}
    \hspace{1cm}
    \includegraphics[width = 60mm]{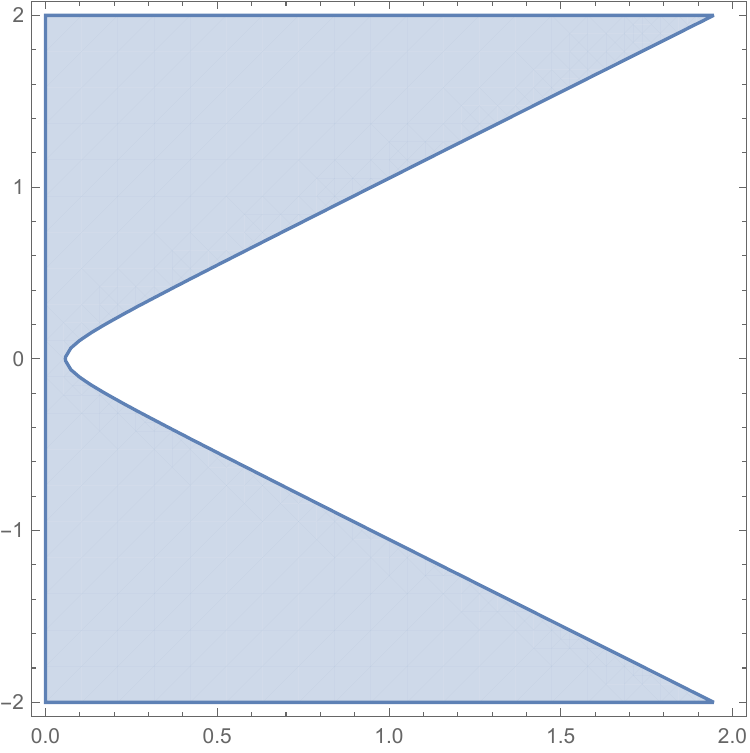}
    \caption{Plot of the physical region $0<x<Z(t)$ of our BCFT in $(x,t)$ plane. We chose $\chi=0.9$ in the left plot and $\chi=0.6$ in the right plot.
    Again we chose $R=1$ and $\ap=0.1$.}
    \label{fig:regionplot}
\end{figure}

We plot the time evolution of the connected entanglement entropy with a given interval for $\chi=0.9$ and $\chi=0.6$ in Fig.\ref{fig:connected entropy}. 
We can see that there is a peak when the shock wave from the falling particle hits the center of the interval $I$ i.e. $\s{\ap^2+t^2}\simeq \frac{x_A+x_B}{2}$ \cite{Nozaki:2013wia}. Moreover the final value is given by the vacuum entanglement entropy \cite{Holzhey:1994we}
\begin{equation}
    S_{AB}= \frac{c}{3}\log\left[ \frac{x_B-x_A}{\epsilon}\right].
\end{equation}

It is also useful to examine the first law of entanglement entropy \cite{Bhattacharya:2012mi,Blanco:2013joa}, which states that the growth of the entanglement entropy $\Delta S_{AB}$, defined by the difference of the entanglement entropy of an excited state and that of the CFT vacuum, is directly related to the energy density $T_{tt}$ in the small subsystem limit $|x_A-x_B|\to 0$ via
\ba
T_{tt}(x_A,t)=\lim_{|x_A-x_B|\to 0}\frac{3 }{\pi|x_A-x_B|^2}\cdot\Delta S_{AB}(x_A,x_B,t).  \label{firstL}
\ea
In this short subsystem limit, we find after some algebra that our holographic entanglement entropy (\ref{connected HEE}) behaves as 
\ba
&& \Delta S^{con}_{AB}=\frac{c}{6}\log{\left[ \frac{\cos(\chi(\tau_A-\tau_B))-\cos(\chi(\theta_A-\theta_B))}{\chi^2\left(\cos(\tau_A-\tau_B)-\cos(\theta_A-\theta_B)\right)}\right]}\no
&&\simeq \frac{c}{18}(1-\chi^2)H(t,x_A)(x_A-x_B)^2,
\ea
where $H(t,x)$ is defined in (\ref{energyd}). Thus, we can confirm that the first law (\ref{firstL}) perfectly reproduces the stress tensor (\ref{efluxx}) and (\ref{energyd}) at $\eta=1$.

\begin{figure}[h]
    \centering
    \includegraphics[width = 70mm]{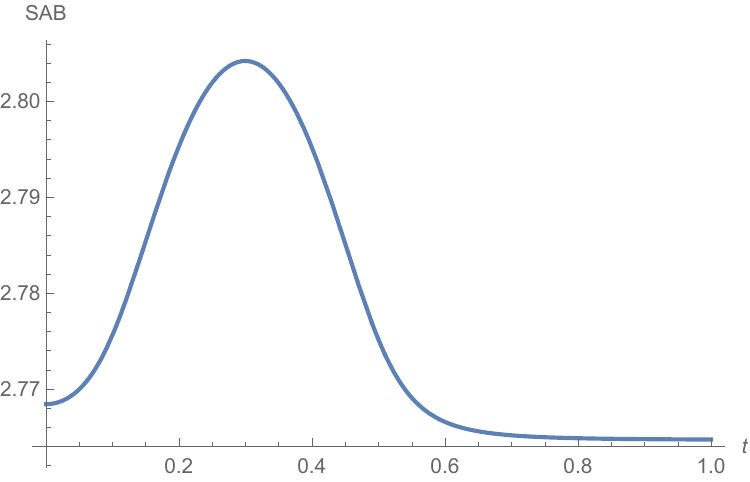}
    \hspace{1cm}
    \includegraphics[width = 70mm]{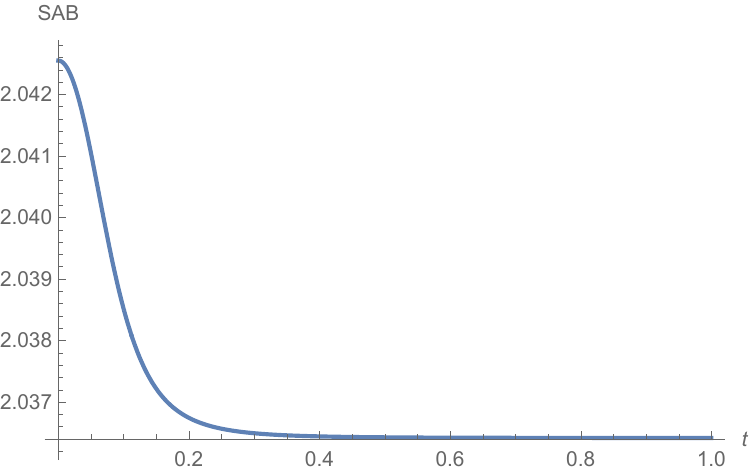}
    \caption{Time evolution of the connected entanglement entropy $S^{con}_{AB}$. In the left plot, we chose $\chi=0.9$ and $(x_A,x_B)=(0.1,0.5)$. In the right one, we chose $\chi=0.6$ and $(x_A,x_B)=(0.005,0.05)$. 
    In both, we took $R=c=1$, $\ep=0.0001$ and $\ap=0.1$.
    }
    \label{fig:connected entropy}
\end{figure}

\subsection{Disconnected entropy}

Next, we will consider the disconnected contribution $S^{dis}_{AB}$. We will work in the tilde coordinates again. This time, we consider the geodesics that stick to the EOW brane perpendicularly. The formula for the disconnected entropy is obtained in the tilde Poincar\'e coordinates, where the EOW brane is simply given by 
the plane $\tilde{x}=-\lambda \tilde{z}$. In %the latter 
this tilde coordinates, we can employ the known result \cite{Takayanagi2011,Fujita:2011fp}
\ba
S^{dis}_{AB}=\frac{c}{6}\log\frac{2l}{\ep}+S_{bdy},
\ea
for each of the disconnected geodesics, where $S_{bdy}$ is the boundary entropy given by 
$ \frac{c}{6}\sinh^{-1}\lambda$ \eqref{eq:bdy-ent-formula}. 
However, we have to keep in mind that in this tilde Poincar\'e coordinates we need to care about the periodicity of $\theta$ due to the conical defect. A careful consideration results in the following expression,
\begin{equation}\label{disconnected HEE}
  S^{dis}_{AB}= \frac{c}{6} \log{\left(\frac{2r_A}{R\chi} \sin(\chi\theta_A^{\text{min}})\right)}+\frac{c}{6} \log{\left(\frac{2r_B  }{R\chi} \sin(\chi\theta_B^{\text{min}})\right)}+\frac{c}{3}\sinh^{-1}\lambda,
\end{equation}
where $\theta^{\text{min}}$ is defined as the smaller angle measured from $x=0$ or $x=Z(t)$:
\begin{equation}
    \theta^{\text{min}}  = \min{\left[\theta,\left(2-\frac{1}{\chi}\right)\pi - \theta\right] }.
\end{equation}
We note that we should take the minimum of $\theta$ because in the disconnected case we have two extremal values of the geodesics due to the deficit angle at the center.

The resulting holographic entanglement entropy is plotted in Fig.\ref{fig:Sdis1} ($\chi=0.9$) and Fig.\ref{fig:Sdis2} ($\chi=0.6$). Under the time evolution, $S^{dis}_{AB}$ gets initially increasing since the EOW brane extends toward the inner region and 
the disconnected geodesics get longer. We can also note a peak around the time $t\simeq \s{x_B^2-\ap^2}$ since the falling particle crosses the disconnected geodesics which extend from $B$.
However, the actual holographic entanglement entropy is dominated by $S^{con}_{AB}$ after some critical time. In early time regime, $S^{dis}_{AB}$ is smaller and this is analogous to the setup of global quantum quenches
\cite{Calabrese:2005in}. We can also understand the plots of holographic entanglement entropy with respect to the subsystem size. It grows initially and reaches a maximum in a middle point, during which $S^{con}_{AB}$ dominates. After that it starts decreasing, and is eventually dominated by $S^{dis}_{AB}$. If we choose $x_A=0$, then we will end up with $S^{dis}_{AB}=0$ and this may be a sort of the Page curve behavior, because the total system is defined as an interval $0<x<Z(t)$.

\begin{figure}[h]
    \centering
    \includegraphics[width = 70mm]{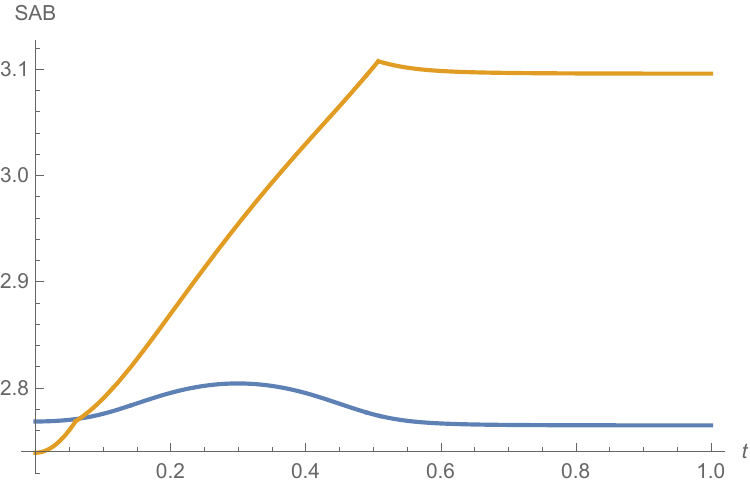}
     \hspace{0.5cm}
    \includegraphics[width = 70mm]{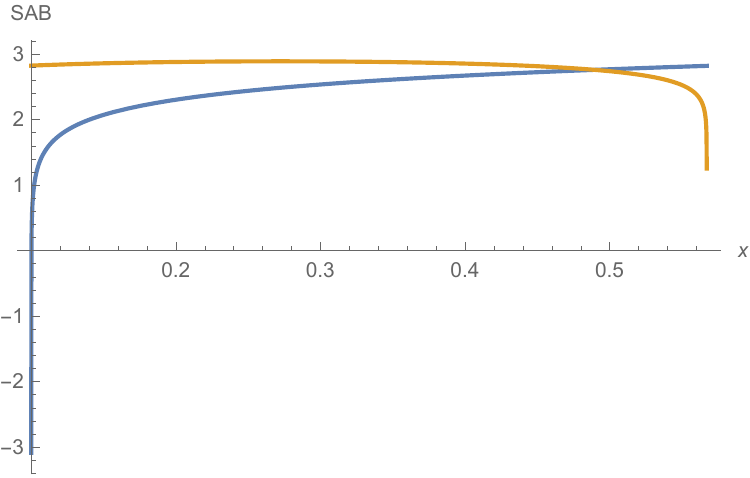}
    \caption{Plots of connected entropy $S^{con}_{AB}$ (blue curves) and disconnected entropy $S^{dis}_{AB}$ (orange curves) for $\chi=0.9$. The left panel shows the time evolution of them for the interval $(x_A,x_B)=(0.1,0.5)$.
    The right one describes their behaviors as functions of 
    $x$ when we chose the subsystem to be $(x_A,x_B)=(0.1,x)$ at $t=0$. In both, we took $R=c=1$, $\ep=0.0001$, $\lambda=1$ and $\ap=0.1$.
    }
    \label{fig:Sdis1}
\end{figure}

\begin{figure}[h]
    \centering
    \includegraphics[width = 70mm]{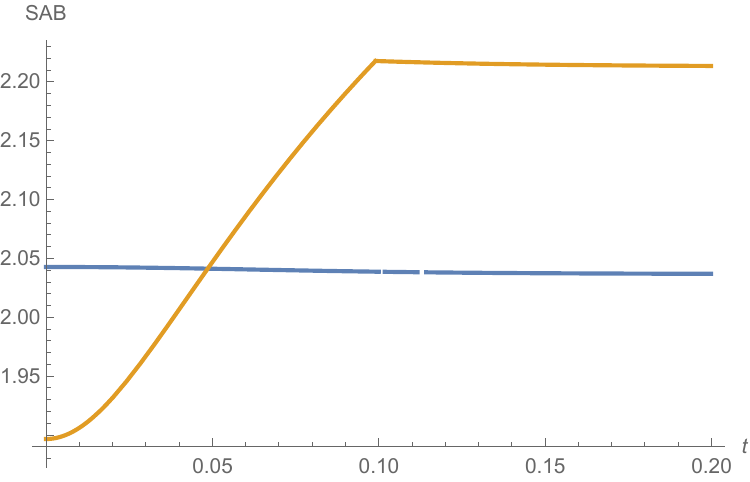}
    \hspace{0.5cm}
    \includegraphics[width = 70mm]{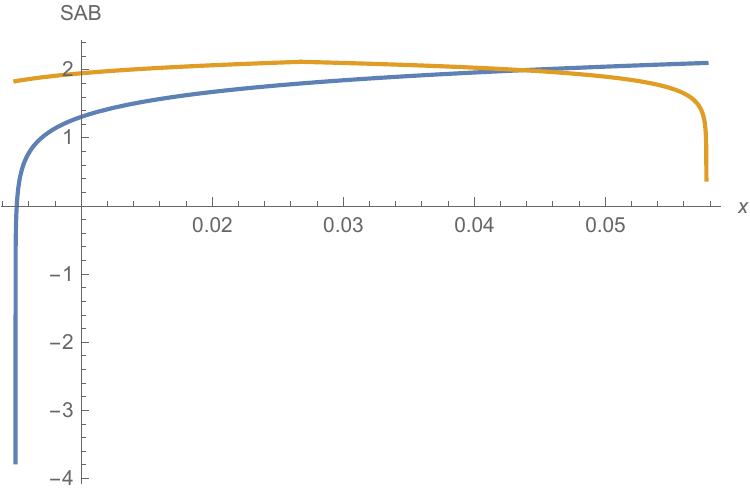}
    \caption{Plots of connected entropy $S^{con}_{AB}$ (blue curves) and disconnected entropy $S^{dis}_{AB}$ (orange curves) for $\chi=0.6$. The left panel shows the time evolution of them for the interval $(x_A,x_B)=(0.005,0.05)$.
    The right one describes their behaviors as functions of 
    $x$ when we chose the subsystem to be $(x_A,x_B)=(0.005,x)$ at $t=0$. In both, we take $R=c=1$, $\ep=0.0001$, $\lambda=1$ and $\ap=0.1$.}
    \label{fig:Sdis2}
\end{figure}

\subsection{Consideration of the parameter $\eta$}

As we have seen, We can introduce the parameter $\eta$ in addition to the mass parameter $M$, via the coordinate transformation
 (\ref{rescaleex}). We can calculate the holographic entanglement entropy with $\eta\neq 1$ by shifting $\chi$ into $\chi\vert_{M\rightarrow M^\prime} = \chi/\eta$ in (\ref{connected HEE}) and (\ref{disconnected HEE}). This allows us to realize a gravity dual of local operator quench on a half plane $x>0$, by pushing the second boundary to $x=\infty$. For example, it is straightforward to confirm that the first law relation (\ref{firstL}) perfectly reproduces the stress tensor (\ref{efluxx}) for any $\eta$.
 
 When  $\eta=\eta_0$, we plotted the behavior of the holographic entanglement entropy in Fig.\ref{fig:etaonee}. This setup is expected to be dual to the BCFT only with the excitation by a local operator at $x=0$ and $t=0$.  The profile of $S^{con}_{AB}$ is qualitatively similar to that at $\eta=1$. On the other hand, $S^{dis}_{AB}$ at $\eta=\eta_0$ has two peaks at $t=\s{x_A^2-\ap^2}$
 and  $t=\s{x_B^2-\ap^2}$, which is because the falling massive particle crosses each of the disconnected geodesic. By taking the minimum,  $S^{dis}_{AB}$ is favored. This matches the BCFT dual because one part of the entangled pair created by the local excitation is reflected at the boundary $x=0$ and merges with the other of the pair. Since both parts come together to the subsystem, the entanglement entropy for the subsystem does not increase except that there is a width $\ap$ of flux of excitations. This means that the entanglement entropy can increase only at the end points $A$ and $B$ for a short time of order $\ap$. This fits nicely with the behavior of $S^{dis}_{AB}$.

\begin{figure}[ttt]
    \centering
    \includegraphics[width = 70mm]{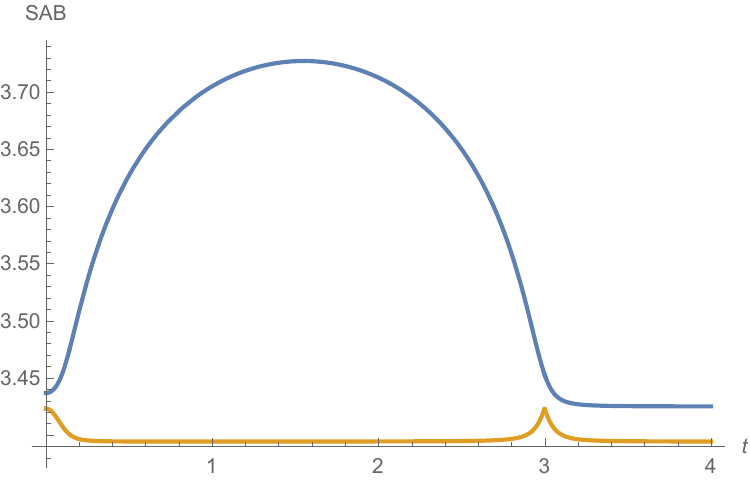}
    \hspace{0.5cm}
    \includegraphics[width = 70mm]{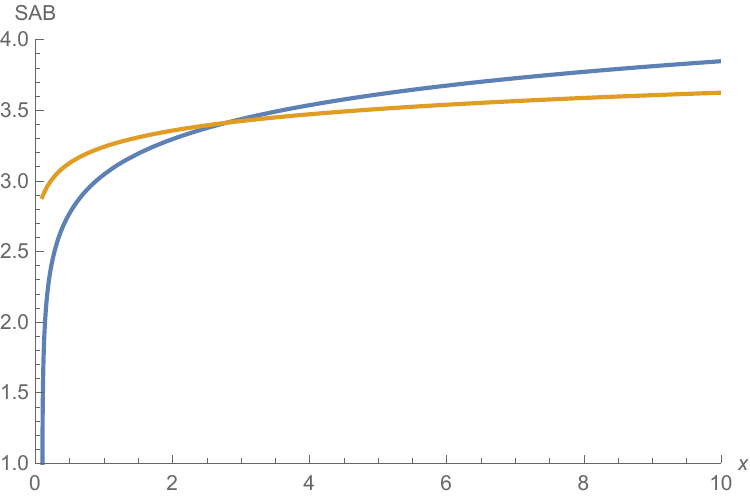}
    \caption{Plots of $S^{con}_{AB}$ (blue) and $S^{dis}_{AB}$ (orange)   for $\eta=\eta_0(=1.125)$ and $\chi=0.9$.    The left panel shows the time evolution of them for the interval $(x_A,x_B)=(0.1,3)$.
    The right one describes their behaviors as functions of 
    $x$ when we chose the subsystem to be $(x_A,x_B)=(0.1,x)$ at $t=0$. In both, we took $R=c=1$, $\ep=0.0001$, $\lambda=1$ and $\ap=0.1$.}
    \label{fig:etaonee}
\end{figure}

As we noted in (\ref{relpa}), there is one parameter family of $(M,\eta)$ which gives the same stress tensor.  As such an example, we can consider the case $\chi=0.99$ and $\eta=1.1\eta_0$, which should have the same energy flux in the case $\chi=0.9$ and $\eta=\eta_0$. Indeed, we plotted in Fig.\ref{fig:etaoneea}, the connected entropy $S^{con}_{AB}$ is precisely identical to that in Fig.\ref{fig:etaonee}.
This confirms that the three-dimensional metric in a neighborhood of the AdS boundary coincides. However, this is not actually physically equivalent because the global structure, especially the monodromy around the massive particle is different. This is simply because the mass of the particle is determined by $M$.  Therefore, if we consider a geodesic which goes around the particle, its geodesic length depends on the value of $M$. On the other hand if we consider a geodesic which does not go around the particle, its length does not depend on $M$. This monodromy clearly affects only the disconnected geodesics. This explains the reason why $S^{dis}_{AB}$ in Fig.\ref{fig:etaoneea} is more enhanced than that in  Fig.\ref{fig:etaonee}. In the latter, the disconnected geodesic length is reduced due to the larger deficit angle.
By taking a minimum between them, the resulting holographic entanglement entropy gets largely increased in the early time regime. In the BCFT side, this enhancement is due to the descendant excitations.

\begin{figure}[ttt]
    \centering
    \includegraphics[width = 70mm]{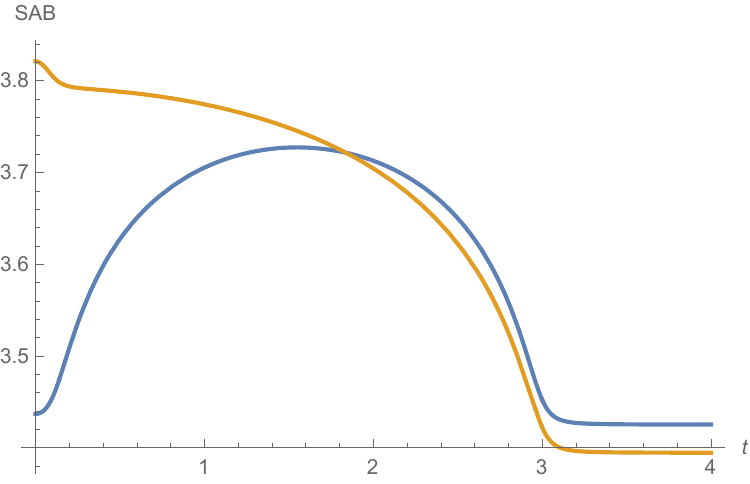}
    \hspace{0.5cm}
    \includegraphics[width = 70mm]{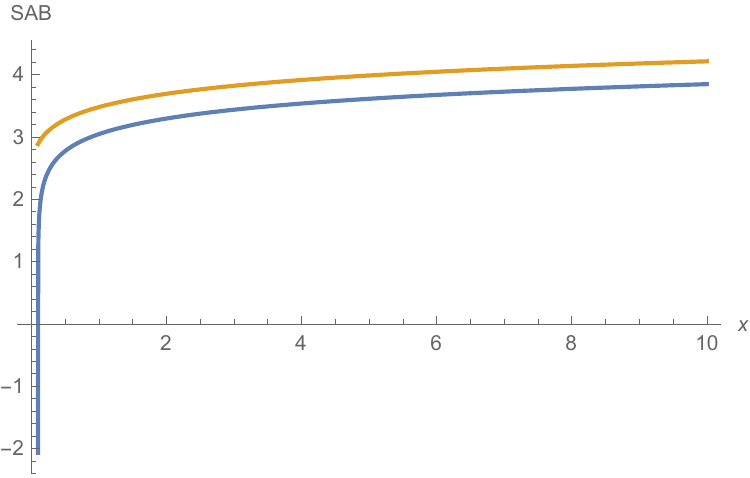}
    \caption{Plots of $S^{con}_{AB}$ (blue) and $S^{dis}_{AB}$ (orange)   for $\eta=1.1\eta_0(\simeq 1.12375)$ and $\chi=0.99$.  The graph of 
    $S^{con}_{AB}$ is identical to that in Fig.\ref{fig:etaonee}, while $S^{dis}_{AB}$ is not.  The left panel shows the time evolution of them for the interval $(x_A,x_B)=(0.1,3)$. 
    The right one describes their behaviors as functions of 
    $x$ when we chose the subsystem to be $(x_A,x_B)=(0.1,x)$ at $t=0$. In both, we took $R=c=1$, $\ep=0.0001$, $\lambda=1$ and $\ap=0.1$.}
    \label{fig:etaoneea}
\end{figure}

 \section{The BCFT calculation -- stress tensor}\label{sec:BCFT-EM}
 
 In the last section we studied  the local quench process in the presence of boundaries through the dual gravitational setup. 
 In this section, we study the same quench process by purely boundary  CFT means. In particular, we perform the BCFT calculations of the stress tensor.
 In the calculations of the previous section, we made use of the fact that the bulk geometry with an infalling particle is related to the pure $AdS_{3}$  via a simple diffeomorphism. In this section and the next we study the quantities of our interest directly in the BCFT side by relating the  Euclidean setup with the time-dependent boundary to the upper half plane via a conformal mapping.  This makes explicit the one-to-one correspondence between  each step of calculations on the gravity side and the CFT side.

 Since the BCFT setup involves an additional boundary $x=Z(t)$ \eqref{disco}, we consider the BCFT between these two boundaries. This setup is  related to the upper half plane (UHP), as we summarize in Fig.\ref{setupsbdyfig}). To see this, we conformally map the original region %, which is a semi-disc-like region (SD) in the Euclidean signature,
 to a wedge-like region (the third figure in Fig.\ref{setupsbdyfig}), which is further mapped to the UHP. %We denote $\alpha=R e^\beta$ in the following calculations.
 
 \begin{figure}
    \centering
    \hspace*{-1.8cm}
    \includegraphics[width = 1.2\linewidth]{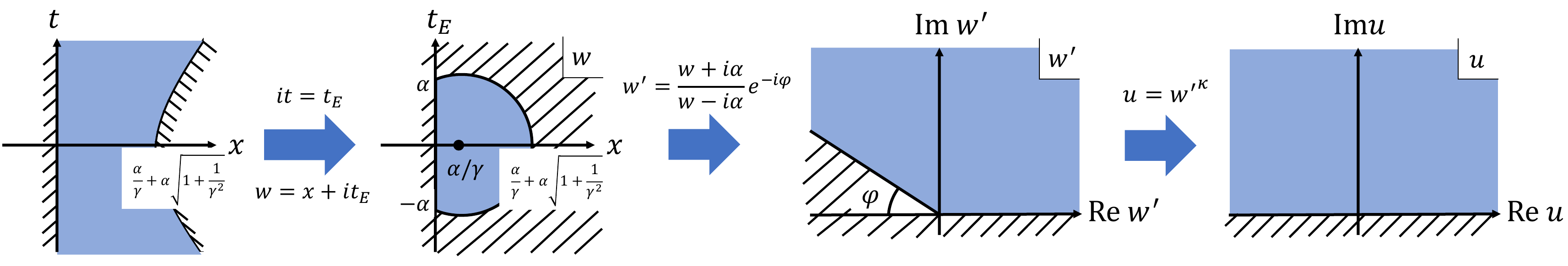}
    \caption{A sequence of conformal maps from the original semi-disk region to the upper half plane: The leftmost figure shows the original Lorentzian region of the BCFT. The second leftmost figure shows the corresponding region on the $w$-plane after Wick rotation. The third figure from the left shows the wedge-like region on the $w^\prime$-plane obtained by a global conformal transformation. The rightmost figure shows the upper half plane (UHP) on the $u$-plane obtained by a further local conformal transformation. A similar sequence of conformal transformations is discussed in \cite{Geng:2021iyq}.}
    \label{fig:conf-map}
\end{figure}
 
\subsection{Conformal map to the upper half plane}
%\color{red}Request: Since I want to use $u$ as a complex variable, I want to write the ingoing/outgoing null coordinates as $x^{+}$ and $x^{-}$ instead of $v=t+x$ and $u=t-x$ in \eqref{eflux}\color{black}

The region on which we define the BCFT is surrounded by two boundaries. The left boundary is static
\ba
x = 0,
\ea
whereas the right boundary $x=Z(t)$ is time dependent, given by
\ba
\left(x-\frac{\alpha}{\gamma}\right)^2 = \alpha^2\left(1+\frac{1}{\gamma^2}\right)+t^2.
\label{bcft-region1}
\ea
This corresponds to the leftmost figure in Fig.\ref{fig:conf-map}.

A BCFT  on this region can be efficiently studied by mapping the region to the UHP. To do this, we perform the Wick rotation $it=t_E$ to make the hyperbolic boundary into a semi-circle. \eqref{bcft-region1} is now given by
\ba
\left(x-\frac{\alpha}{\gamma}\right)^2+t_E^2 =
\alpha^2\left(1+\frac{1}{\gamma^2}\right)
\ea
and is depicted in the second leftmost figure in Fig.\ref{fig:conf-map}. We denote this region by SD (semi-disk). The semicircle intersects with $x=0$ at $t_{E} =\pm \alpha$. 

We then  perform two conformal transformations to map the SD to the UHP. First, we consider a global conformal transformation such that in the complex coordinates $w=x+ i t_{E}, \bar{w}=x- i t_{E}$,  one of the intersections of two boundaries  $w=i\alpha$ is mapped to the infinity, the other $w=-i\alpha$ is mapped to the origin. We also require  the circle boundary comes  to the real axis. Such a map is given by
\ba
w^\prime=\frac{w+i\alpha}{w-i\alpha}e^{-i\varphi},
\label{conf-map-w}
\ea
where
\ba
\tan \varphi=\gamma.
\ea
The resulting region is shown in the second rightmost figure of Fig.\ref{fig:conf-map}. %We denote this wedge region by wedge.
Although there is an ambiguity of choosing $\varphi$, we choose $0\le \varphi <\pi$ so that the $\gamma\rightarrow 0 \Leftrightarrow M\rightarrow 0$ limit covers the whole region as expected. As we discussed in Section \ref{sec:m-less-r} (and \eqref{restrictionM}), we have $1/2< \chi=\sqrt{(R^2-M^2)/R^2} \le 1$.\footnote{Note that due to this condition, we always have a finite region after this conformal transformation.} This determines the value of  $\varphi$ uniquely as
\ba
\varphi=\pi\left(\sqrt{\frac{R^2}{R^2-M}}-1\right)=\pi\left(\frac{1}{\chi}-1\right).
\label{eq:vp}
\ea

We then map the wedge region to UHP. This is achieved by
\ba
u=w^{\prime\, \kappa}
\label{conf-map-u}
\ea
by suitably choosing $\kappa$.
In order to obtain a unique $\kappa$, we need to specify the branch. Let us take $0\le \arg u \le \pi$. Then, since $0\le \varphi<\pi$, $\kappa$ is determined as
\ba
\kappa (\pi-\varphi)=\pi\quad \Leftrightarrow\quad \kappa=\frac{1}{1-\frac{\varphi}{\pi}}.
\label{eq:kappa}
\ea
Note that we simply find $\kappa=\eta_0$, where $\eta_0$ was introduced as a critical rescaling parameter in (\ref{mineta}). The conformal map (\ref{conf-map-u}) creates a deficit angle at $w=0$ and $w=\infty$, which indeed corresponds to the intersection of the deficit angle in the AdS$_3$ with the AdS boundary in our gravity dual.

Finally, we comment on the case with $\lambda<0$. Even in this case, the same conformal transformation can be used. Since the original BCFT region with $\lambda<0$ is given by the complement region with the parity in the $x$ direction reversed in Fig\ref{fig:conf-map}, the resulting region becomes the lower half plane instead of the UHP. Therefore, the calculation of the stress tensor and entanglement entropy in the subsequent sections is exactly the same for $\lambda<0$.

%\subsection{Explicit map from $(t,x)$ to the UHP ($u$)}
%In this subsection, we discuss a map between the original coordinates $(t,x)$ and that of UHP ($u$) in order to calculate entanglement entropy. As we restrict the argument of each complex variable, we inspect it in a careful manner.

%We consider $w=x+i t_E, \bar{w}=x-it_E$.%, where $t_E=it$ and $\delta$ is a UV cutoff for the local quench. When we insert an operator, we take $\delta=\ap$. Momentarily we just absorb $\delta$ into the real part of $t_E$.

%The conformal map is given by
%\ba
%u=w^{\prime\, \kappa}=\left(\frac{w+i\alpha}{w-i\alpha} e^{-i\varphi}\right)^\kappa,
%\label{eq:conf-map}
%\ea
%where $\kappa=1/(1-\varphi/\pi)$ and $0\le \varphi/\pi=\sqrt{R^2/(R^2-M)}-1 < 1$.
%%%%%%%%%%%%%%%%%%%%%%%%%%%%
%Followings are probably wrong since $\bar{w}$ is not a mere conjugation.
%%%%%%%%%%%%%%%%%%%%%%%%%%%%
%Let us denote $w=re^{i\zeta}$, where $r^2=(x-t)^2+\delta^2$. For each complex variable, $w$, $w^\prime$, and $u$, we choose a principle value for their arguments. We would like to find $\arg w^{\prime}$ within $(-\pi,\pi]$.

\subsection{Stress tensor without a local operator}
In this section, we  compute  the one-point function of the stress tensor at $(w,\bar{w})=(x-t,x+t)$ in the SD and intend to compare it with the  holographic result \eqref{eflux}. We only take the boundary effect into account in this section and no local operator is being inserted. The conformal transformations (Fig.\ref{fig:conf-map}) yield
\ba
\langle T(w) \rangle_\mathrm{SD}=
\left(\frac{dw^\prime}{dw}\right)^2
\langle T(w^\prime) \rangle_\mathrm{wedge}
\ea
%global conformal transformation has a vanishing Schwarzian
and
\ba
0=\langle T(u) \rangle_\mathrm{UHP}=\left(\frac{dw^\prime}{du}\right)^2 \left( \langle T(w^\prime) \rangle_\mathrm{wedge} -\frac{c}{12}\{u;w^\prime\} \right),
\ea
where
\ba
\{u;w^\prime\}=\frac{\de_{w^\prime}^3 u}{\de_{w^\prime} u} - \frac{3}{2} \left(\frac{\de_{w^\prime}^2 u}{\de_{w^\prime} u}\right)^2
\ea
is the Schwarzian derivative.
From this we can read off the holomorphic part of the stress tensor as
\ba
\langle T(w) \rangle_\mathrm{SD}=\frac{c}{6}\ap^2 (\kappa^2-1) \frac{1}{((x-t)^2+\ap^2)^2},
\ea
%where $w=x+it_E=x-t$, $\ap=Re^\beta$, and $\kappa$ is given in \eqref{eq:kappa}.

Next, we would like to compare it with the holographic result \eqref{eflux} for $ T_{--}$. The holomorphic part of the stress tensor  computed above is related to the holographic convention by  $\langle T(w) \rangle = -2\pi T_{--}$.\footnote{The stress tensor is here defined as $T_{\mu\nu}=\frac{2}{\sqrt{|g|}}\frac{\delta W}{\delta g^{\mu\nu}}$ \eqref{eq:EM-tensor-def}, where $W$ is the free energy defined via the partition function $Z=e^{-W}$. The convention here follows \cite{Di_Francesco_1997} and same as \cite{Nozaki:2013wia}.} 
%Since $w=x-t=-u$ and $\bar{w}=x+t=v$, $T_{ww}=T_{uu}$.
Therefore, we find the energy flux reads
\ba
  T^{{\rm BCFT}}_{\pm\pm}=-\frac{c}{12}(\kappa^2-1)\cdot \frac{\ap^2}{\pi((x\pm t)^2+\ap^2)^2}.  \label{eq:EMtensor-bdy}
\ea
Note that the coefficient is negative and this is because the existence of two boundaries leads to the Casimir energy. The functional form of (\ref{eq:EMtensor-bdy}) agrees with the holographic result (\ref{eflux}). We will see in the next subsection, the coefficient also perfectly agrees with the holographic result.

%%%%  I omitted the plot of energy flux as it is rather trivial %%%%%%%
%%%%                                                            %%%%%%%
%Plots of $T_{--}/c$ in terms of $t-x$ with varying parameters $\ap> \epsilon$ and $\chi= %\sqrt{\frac{R^2-M}{R^2}}>1/2$ are shown in Fig.\ref{fig:plots-EMtensor1}.
%\begin{figure}
%    \centering
%    \begin{minipage}{0.9\textwidth}
%    \centering
%    \includegraphics[width=\textwidth]{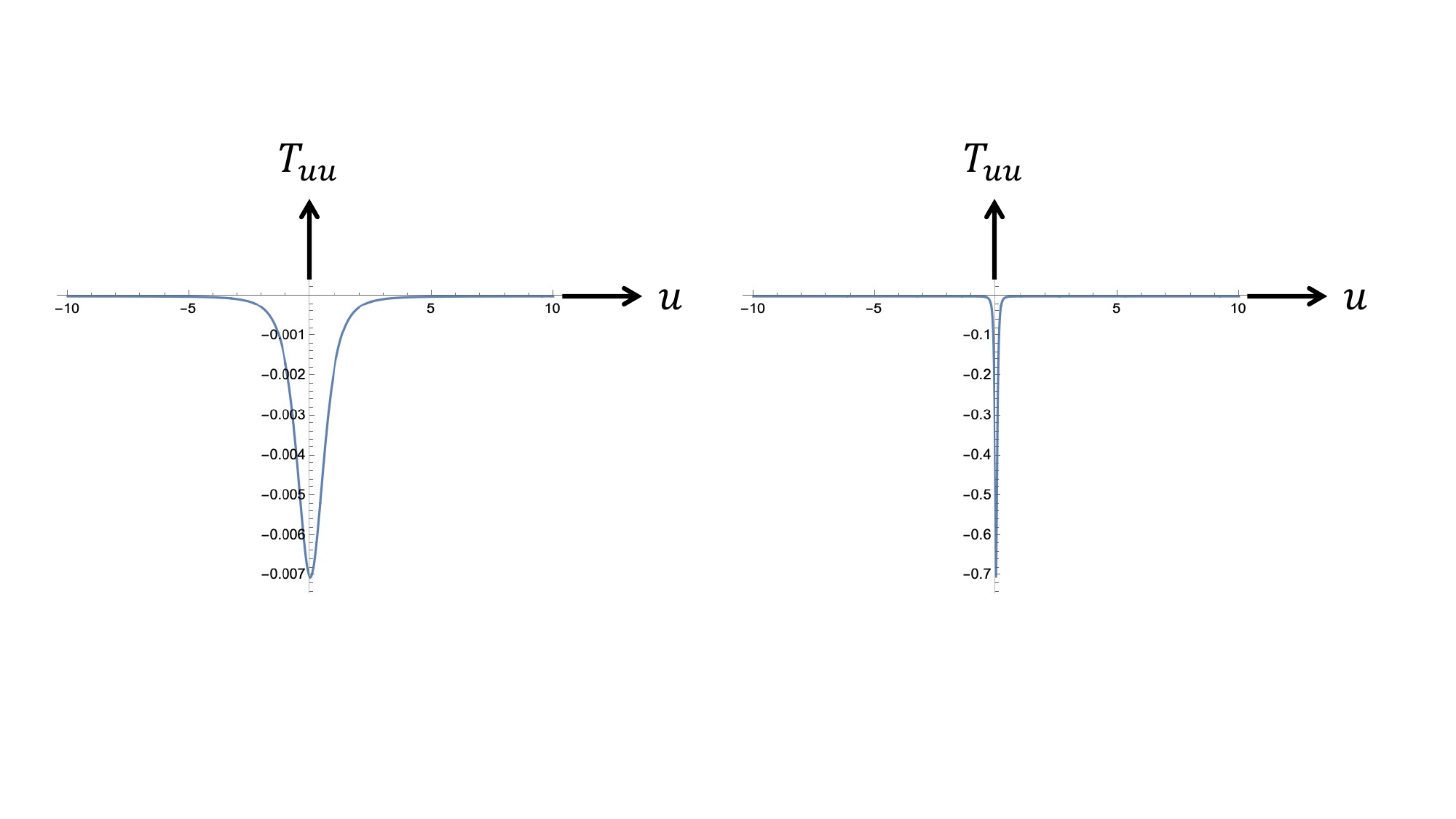}
%    \end{minipage}\hfill
%    \begin{minipage}{0.9\textwidth}
%    \centering
%    \includegraphics[width=\textwidth]{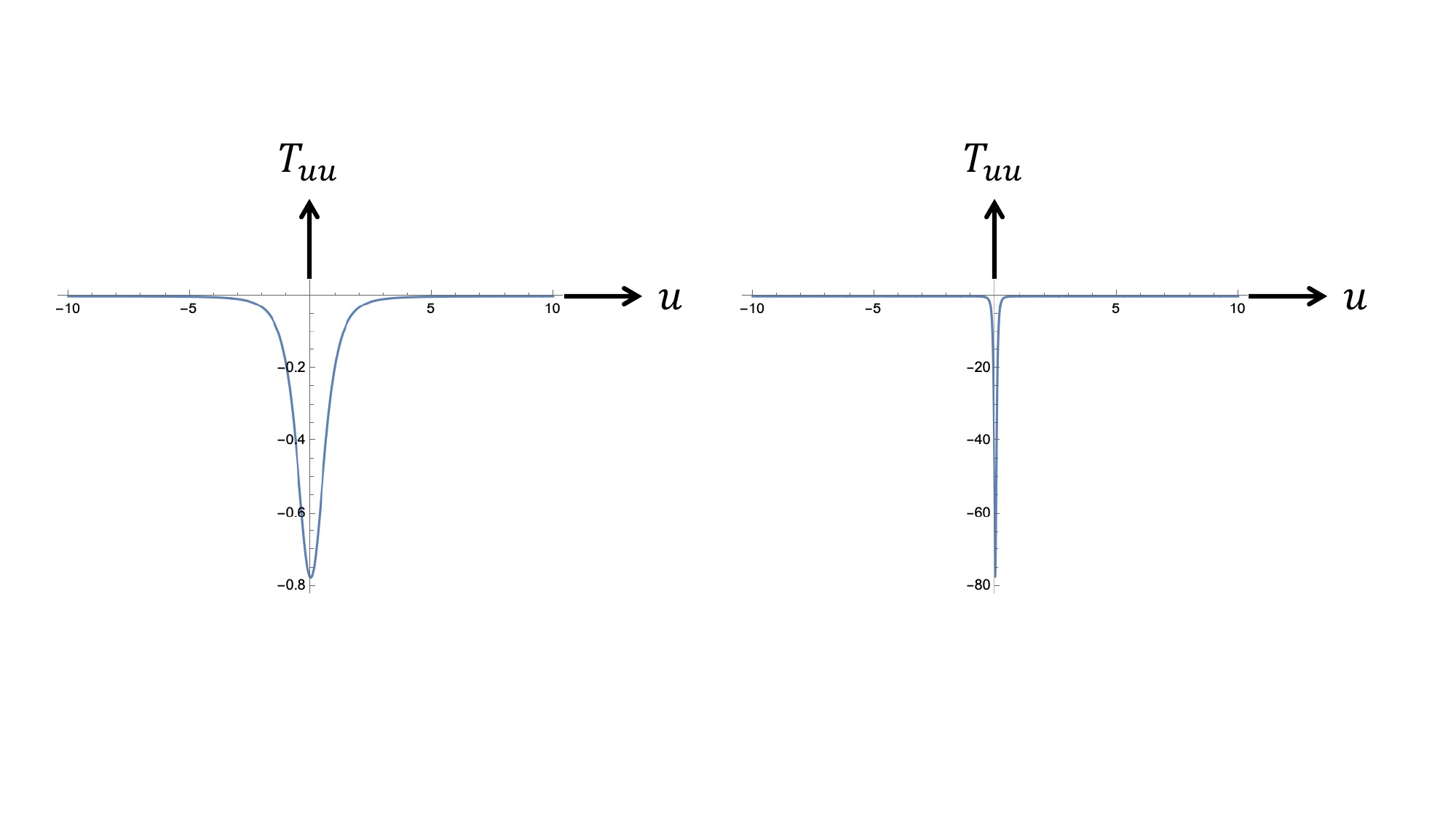}
%    \end{minipage}
%    \caption{Plots of $T_{uu}$ \eqref{eq:EMtensor-bdy}, which only takes the boundary %effect from the EOW brane into account with respect to $u=t-x$. The top-left, top-right, %bottom-left, bottom-right plots are when $(\chi,\ap)=(0.9,1.0),(0.9,0.1),(0.55,1.0)$ and %$(0.55,0.1)$, respectively.}
%    \label{fig:plots-EMtensor1}
%\end{figure}

\subsection{Stress tensor with a local excitation}\label{sec:op-exc}
In the previous subsection, we computed the one-point function of the stress tensor without any insertions of local operators. In this subsection, we evaluate the one-point function of the stress tensor of an excited state by a primary operator $O(x)$, in the BCFT setup.  We denote  its conformal weight  $(h,\bar{h})$. For simplicity, we focus on a scalar operator, i.e. $h=\bar{h}=\Delta_O/2$. %where $\Delta=1+\sqrt{m^2 R^2 +1}$. %In the present setup, the inserted local operator should be irrelevant and not deform the theory as it gives a finite energy excitation. For the normalizable deformation to be dominant, we require $\Delta\gg 1$. 
Such a state has the following form,
\begin{align}
    |\Psi (t)\rangle &= \mathcal{N} e^{-itH} e^{-\ap H} O(t=0,x=\ti{\epsilon}) e^{\ap H} |0\rangle \nonumber\\
    &= \mathcal{N} e^{-itH} O(t_E=-\ap,x=\ti{\epsilon}) |0\rangle, %\nonumber\\
    %&=\mathcal{N} O(-t+i\ap,x=\ti{\epsilon}) |0\rangle, 
\end{align}
where  $\mathcal{N}$ is a  normalization coefficient to ensure $\langle \Psi (t) | \Psi(t) \rangle=1$, and $\alpha$ plays a role of a UV regulator. Note that as mentioned in the footnote 1, the UV regulator $\ap$ for the quench is written as the imaginary time position of the operator while the (Lorentzian) time evolution cannot as $H|0\rangle\neq 0$ due to the existence of the time-dependent boundary.\footnote{This is apparent if we consider the Euclidean path integral representation of the state.} The spatial location of the local operator is arbitrary, however  for a comparison with the holographic setup,  we place it at $x=\ti{\epsilon} \ll 1$, as we are interested in the limit where the local operator is inserted at the boundary $x=0$. 
Thus the location of the operator insertion on the Euclidean SD is given by $w_2=\ti{\epsilon}+it_{E\, 2}=\ti{\epsilon}-i\ap$. Similarly, the bra state is given by
\begin{align}
    \langle \Psi (t)| &= \mathcal{N} \langle 0 | e^{\ap H} O (t=0,x=\ti{\epsilon}) e^{-\ap H} e^{itH}  \nonumber\\
    &=\mathcal{N}  \langle 0 | O(t_E=\ap,x=\ti{\epsilon}) e^{itH}.
\end{align}
We denote the insertion point corresponding to this as $w_1=\ti{\epsilon}+ i\ap$. Let us consider measuring the holomorphic energy momentum tensor at $w=x-t$, that is, discussing its time evolution in the Heisenberg picture.
By employing  the series of conformal transformations, we have
\begin{align}
    %&
    &\langle \Psi (t=0) |T(w=x-t)|\Psi(t=0)\rangle \nonumber\\
    =&\frac{\langle O(w_1,\bar{w}_1) T(w) O(w_2,\bar{w}_2)\rangle_{\mathrm{SD}}}{\langle O(w_1,\bar{w}_1) O(w_2,\bar{w}_2)\rangle_{\mathrm{SD}}} \nonumber\\
    =&
    \left( \frac{dw^\prime}{dw} \right)^2 %\nonumber\\
    %&\times 
    \left[
    \left( \frac{du}{dw^\prime} \right)^2 \frac{\langle O(u_1,\bar{u}_1) T(u) O(u_2,\bar{u}_2) \rangle_{\mathrm{UHP}}}{\langle O(u_1,\bar{u}_1) O(u_2,\bar{u}_2) \rangle_{\mathrm{UHP}}}
    +\frac{c}{12} \{u;w^\prime\}
    \right],
\end{align}
where $u_i=u(w_i)$ and $\bar{u}_i=\bar{u}(\bar{w}_i)$.
To compute the first term in the brackets, we employ the conformal Ward identity in the UHP~\cite{Cardy:2004hm}
\begin{align}
    &\left\langle T(u) \prod_j O(u_j,\bar{u}_j) \right\rangle_{\mathrm{UHP}} \nonumber\\
    &= \sum_j \left(
    \frac{h}{(u-u_j)^2} + \frac{1}{u-u_j} \de_{u_j} + \frac{\bar{h}}{(u-\bar{u}_j)^2} + \frac{1}{u-\bar{u}_j} \de_{\bar{u}_j}
    \right) \left \langle \prod_j O(u_j,\bar{u}_j) \right \rangle_{\mathrm{UHP}} .\label{yyy}
\end{align}
%Alternatively, we can compute the expectation value from a three point function since $\langle TO_1O_2 O_3 O_4\rangle\sim \langle TO_1O_2\rangle \langle O_3 O_4\rangle + \langle O_1O_2\rangle \langle T O_3 O_4\rangle$ using the large-$c$ factorized form $\langle O_1O_2 O_3 O_4\rangle \sim \langle O_1O_2\rangle \langle O_3 O_4\rangle$ as discussed below.

This expression is further simplified by using the doubling trick~\cite{Cardy:1984bb,Recknagel:2013uja}. This  trick relates  a correlator on the UHP  to a chiral correlator on the entire plane $\mathbb{C}$.  For example, we have
\ba
\langle O(u_{1},\bar{u}_{1})  O(u_{2},\bar{u}_{2})\rangle_{\mathrm{UHP}} =\langle O(u_1) O(\bar{u}_1) O(u_2) O(\bar{u}_2) \rangle_{\mathbb{C}}^{c}.
\ea
The superscript $c$ means the correlator is chiral.

Using this relation as well as the detailed expression of the conformal map,  we get
%since $\frac{du}{dw^\prime}=\kappa u^{1-1/\kappa}, \frac{dw^\prime}{dw}=-2i\ap u^{1/\kappa}/(w^2+\ap^2)$, and $\{u;w^\prime\}=-\frac{\kappa^2-1}{2} u^{-2/\kappa}$, we obtain
\begin{align}
    & \langle \Psi (t=0) |T(w=x-t)|\Psi(t=0)\rangle \nonumber\\
    =& -\frac{4\ap^2 \kappa^2}{(w^2+\ap^2)^2} \frac{1}{\langle O(u_1) O(\bar{u}_1) O(u_2) O(\bar{u}_2) \rangle_{\mathbb{C}}^{c}} \nonumber\\
    & \left[
    u^2 \sum_{i=1}^2 
    \left\{ h \left( \frac{1}{(u-u_i)^2} + \frac{1}{(u-\bar{u}_i)^2} \right) + \frac{1}{u-u_i}\de_{u_i} + \frac{1}{u-\bar{u}_i}\de_{\bar{u}_i} 
    \right\}
    -\frac{c}{24} \left(1-\frac{1}{\kappa^2}\right)
    \right] \nonumber\\
    & \langle O(u_1) O(\bar{u}_1) O(u_2) O(\bar{u}_2) \rangle_{\mathbb{C}}^{c}.
    \label{em-tensor-op}
\end{align}

\subsubsection{Stress tensor in holographic CFT} \label{sec:EMtensor-holCFT}

Let us evaluate the stress tensor in the holographic CFT dual to our AdS/BCFT model. In our analysis of  the dual gravitational setup, we assumed that the EOW brane does not have any source to the  bulk scalar field dual to the local operator $O(u)$ 
(for the analysis in the presence of non trivial scalar field configuration, refer to \cite{Suzuki:2022xwv}).  This means that  its  one-point function in the UHP vanishes: 
$\langle O(u,\bar{u}) \rangle_{\mathrm{UHP}} =0$. Therefore  $\langle O(u_1) O(\bar{u}_1) O(u_2) O(\bar{u}_2) \rangle_{\mathbb{C}}^{c}$  is factorized into the product of two point functions $\la O(u_1)O(u_2)\lb \cdot \la O(\bar{u}_1)O(\bar{u}_2)\lb$ for any parameter region of $u_1$ and $u_2$. This leads to the conclusion
\ba
\langle O(u_1) O(\bar{u}_1) O(u_2) O(\bar{u}_2) \rangle_{\mathbb{C}}^{c}
\propto \frac{1}{|u_1-u_2|^{\Delta_O} |\bar{u}_1-\bar{u}_2|^{\Delta_O}}.
\label{holcorb}
\ea
Notice that this is equivalent to the chiral two-point function with the doubled conformal dimension $\tilde{\Delta}\equiv 2\Delta_O$. (Fig.\ref{fig:ope}) This can be understood as follows.
The primary operators $O$ at the origin and the infinity constituting the ket and bra states are doubled via the doubling trick, up to the normalization. Let us denote these doubled states as $|\Psi\rangle \rightarrow |\Psi \Psi^\ast\rangle$ and $\langle \Psi| \rightarrow \langle\Psi \Psi^\ast|$.
Then, the equivalence between the four-point correlator and the two-point correlator with a doubled conformal dimension implies that we should regard the local operator and its mirror, $|\Psi\Psi^\ast\rangle$ and $\langle \Psi\Psi^\ast |$, as composite operators $|\tilde{\Psi}\rangle$ and $\langle \tilde{\Psi}|$, where tilded operators have a double conformal dimension $\tilde{\Delta}=2\Delta_O$ than the original ones. As we will see, this identification is necessary for the first law of entanglement entropy and the correct correspondence with the holographic calculation.

\begin{figure}
    \centering
    \includegraphics[width=15cm]{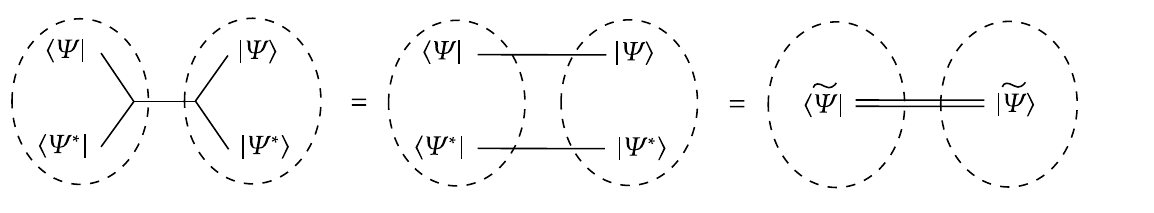}
    \caption{Left: The two-point function in the BCFT equals to the chiral four-point function. Middle: Since the holographic dual in concern has a vanishing one-point function, the contractions of operators in dashed circles cannot contribute to the correlator on their own. Right: The primary operator $\Psi$ and its mirror operator $\Psi^\ast$ should be regarded as a single operator $\tilde{\Psi}$, whose conformal dimension is $\tilde{\Delta}=2\Delta_O$.
    }
    \label{fig:ope}
\end{figure}

%Note also that 
Since we take the limit $\ti{\ep}\to 0$ which leads to
$w_1\to-i\ap$ and $w_2\to i\ap$ in the $w$-plane, we find $u_1\to 0$ and $u_2\to \infty$. 
Thus, this leads to the following result at any time $t$ by plugging (\ref{holcorb}) into 
(\ref{em-tensor-op}):
\ba
\la T(w)\lb=\frac{c\ap^2 (\kappa^2-1)}{6(w^2+\ap^2)^2}
-\frac{4\Delta_O \ap^2 \kappa^2}{(w^2+\ap^2)^2}.
\ea
This leads to the stress tensor in Lorentzian signature via
$T_{--}=-\frac{1}{2\pi}T(w)$:
\ba
&& T_{--}=s_{BCFT}(\Delta_O)\cdot \frac{\ap^2}{\pi((x-t)^2+\ap^2)^2},\no
&& s_{BCFT}(\Delta_O)\equiv-\frac{c}{12}(\kappa^2-1)+2\kappa^2\Delta_O
=\kappa^2 \tilde{\Delta} +\frac{c}{12}(1-\kappa^2).  \label{EMgencft}
\ea
Notice that the part $-\frac{c}{12}(\kappa^2-1)$ in $s_{BCFT}$ can be regarded as the Casimir energy part which is negative and the other part $2\kappa^2\Delta_O$ corresponds to the local operator excitation. It is also useful to note that when the local excitation is light $\Delta_O\ll c$, we find
\ba
s_{BCFT}(\Delta_O)\simeq \Delta_O.
\ea

Now we would like to compare the BCFT result (\ref{EMgencft}) with the gravity dual result (\ref{efluxx}) at $\eta=1$. It is straightforward to see that these two are identical $s_{BCFT}(\Delta_O)=s_{AdS}(\Delta_{AdS},1)=\Delta_{AdS}$ via the relation (\ref{relasd}). Furthermore, by solving \eqref{EMgencft} for $\tilde{\Delta}=2\Delta_O$ and using the fact $\kappa=\eta_0$, we obtain\footnote{It is worth to note that $s_{BCFT}$ and $s_{AdS}|_{\eta=\eta_0}$ as a function of $\tilde{\Delta}$ and $\Delta_{AdS}$ respectively are inverse to each other.}
\begin{equation}
    2\Delta_O=\tilde{\Delta}=s_{AdS}(s_{BCFT}(\Delta_O),\eta_0) =s_{AdS}(\Delta_{AdS},\eta_0).
\end{equation}
This is nothing but \eqref{eq:AdS EM tensor}. In this way, we can perfectly reproduce the previous holographic stress tensor from the present BCFT approach.

\subsubsection{Stress tensor in the free scalar CFT}

It is also helpful to compare our energy flux in the holographic CFT with that in a free scalar CFT.  Consider the $c=1$ CFT of an uncompactified scalar in two dimensions. The real scalar field $\phi(u,\bar{u})$ has the operator product expansion (OPE) in the presence of the boundary (the real axis in the $u$ coordinates) with the Neumann boundary condition:
\ba
\la \phi(u,\bar{u}) \phi(u',\bar{u}') \lb_N=-\log |u-u'|^2-\log |u-\bar{u}'|^2.
\ea
For the Dirichlet boundary condition, we have 
\ba
\la \phi(u,\bar{u}) \phi(u',\bar{u}') \lb_D=-\log |u-u'|^2+\log |u-\bar{u}'|^2.
\ea

For the local primary operator we choose,
\ba
O(u_1,\bar{u}_1)=e^{ik\phi(u_1,\bar{u}_1) },\ \ \ \ O^\dagger(u_2,\bar{u}_2)=e^{-ik\phi(u_2,\bar{u}_2) },
\ea
both of which have the conformal dimension $\Delta_O=2h=k^2$.

The two-point functions for these  boundary conditions read
\ba
&& \la O(u_1,\bar{u}_1)O^\dagger(u_2,\bar{u}_2)\lb_N
=\frac{|u_1-\bar{u}_1|^{\Delta_O}|u_2-\bar{u}_2|^{\Delta_O}}
{|u_1-u_2|^{2\Delta_O}|u_1-\bar{u}_2|^{2\Delta_O}},\no
&& \la O(u_1,\bar{u}_1)O^\dagger(u_2,\bar{u}_2)\lb_D
=\frac{|u_1-\bar{u}_2|^{2\Delta_O}}{|u_1-u_2|^{2\Delta_O}|u_1-\bar{u}_1|^{\Delta_O}|u_2-\bar{u}_2|^{\Delta_O}}.\label{twoptsc}
\ea

The expectation value of the stress tensor $T(u)=-\frac{1}{2}\de_u\phi\de_u\phi$ 
is evaluated for each boundary condition as follows
\ba
&& \la T(u)\lb_N=\frac{\Delta_O}{2}\left(\frac{1}{u-u_1}+\frac{1}{u-\bar{u}_1}
-\frac{1}{u-u_2}-\frac{1}{u-\bar{u}_2}\right)^2,\no
&& \la T(u)\lb_D=\frac{\Delta_O}{2}\left(\frac{1}{u-u_1}-\frac{1}{u-\bar{u}_1}
-\frac{1}{u-u_2}+\frac{1}{u-\bar{u}_2}\right)^2.
\ea
They perfectly agree with those obtained from the two point functions (\ref{twoptsc}) by applying the conformal Ward identity (\ref{yyy}). 

By taking our limit $u_1\to 0$ and $u_2\to \infty$, we obtain 
\be
\la T(u)\lb_N\simeq \frac{2\Delta_O}{u^2},\ \ \ \ \ 
\la T(u)\lb_D\simeq 0.
\ee
They take different values when compared with the previous holographic BCFT result (obtained by plugging \eqref{holcorb} in \eqref{em-tensor-op} and take the $u_1\rightarrow 0$ and $u_2\rightarrow \infty$ limits):
\ba
 \la T(u)\lb_{Hol}\simeq \frac{\Delta_O}{u^2}.  \label{holbcft}
\ea

Finally, by using the conformal map of the stress tensor 
(\ref{em-tensor-op}), or explicitly 
\ba
\la T(w)\lb=-\frac{4\ap^2\kappa^2}{(w^2+\ap^2)^2}\left[u^2\la T(u)\lb-\frac{c}{24}(1-\kappa^{-2})\right],
\ea
we obtain the physical stress tensor $\la T(w)\lb$. In this way we found that the energy flux is sensitive to both the types of CFTs and the boundary conditions.

\subsection{Stress tensor after rescaling by $\eta$}
The stress tensor in the holographic CFT matches the holographic computation given in \eqref{efluxx} even after the rescaling by an arbitrary $\eta$ \eqref{rescaleex}.

The rescaling by $\eta$ maps $\chi$ and $\kappa=\eta_0$ to $\chi/\eta$ and $\kappa_{\eta}\equiv\kappa/\eta$ respectively since $M$ in $\chi$ becomes $M^\prime$ \eqref{newmass} and the location of the induced boundary $\theta={\pi}/{\kappa}$ becomes $\theta^\prime=\eta{\pi}/{\kappa}$.\footnote{Note that even though $\kappa_{\eta=1}\equiv\kappa$ is written as a function of $\chi$ (\eqref{eq:vp} and \eqref{eq:kappa}), $\kappa_\eta$ does not equal to $\kappa|_{\chi\rightarrow\chi/\eta}$. This is because $\kappa$ is defined as a location of the boundary, which is directly rescaled by $\eta$, not a mere function of $\chi$.} This is consistent with \eqref{Identifyap}.

After replacing $\kappa$ in \eqref{EMgencft} by $\kappa_\eta={\kappa}/{\eta}$,
we exactly reproduce \eqref{efluxx} through \eqref{eq:AdS EM tensor}.

Alternatively, this can be explicitly confirmed by the conformal mapping \eqref{conf-map-eta}
\begin{equation}
    w_{\eta}=\ap\tan\left(\eta \arctan \left(\frac{w}{\ap}\right)\right),
    \quad
    \bar{w}_{\eta}=\ap\tan\left(\eta \arctan \left(\frac{\bar{w}}{\ap}\right)\right)
\end{equation}
where $w_{\eta}=x^\prime-t^\prime$, the $w$ coordinates after the rescaling by $\eta$ (same for the anti-holomorphic one).
Then,
\begin{align}
    \langle T(w_\eta) \rangle &= \left(\frac{dw}{dw_\eta}\right)^2 \langle T(w) \rangle +\frac{c}{12} \{w; w_\eta \} \nonumber\\
    &=\left[\frac{\kappa^2}{\eta^2}\tilde{\Delta}+\frac{c}{12}\left(1-\frac{\kappa^2}{\eta^2}\right)\right]\frac{-2\ap^2}{(w_\eta^2+\ap^2)^2}.
\end{align}
This is nothing but $\kappa\rightarrow\frac{\kappa}{\eta}$ in \eqref{EMgencft} as mentioned above.

In particular, when $\eta=\eta_0=\kappa$,
\begin{equation}
    s_{BCFT}(\Delta_O)=2\Delta_O
\end{equation}
and \eqref{eq:AdS EM tensor} is readily confirmed.

\section{The BCFT calculation -- entanglement entropy}\label{sec:BCFT-EE-calc}
In this section, we calculate entanglement entropy in the holographic BCFT of concern. In the first part, we compute the vacuum entanglement entropy with the induced boundary $x=Z(t)$ \eqref{disco} as well as the original boundary $x=0$. Then, in the second part, we discuss entanglement entropy with both the boundaries and a local operator insertion. Finally, we comment on the behavior of the entanglement entropy after the generic rescaling by $\eta$ \eqref{rescaleex}. We will see that all of these calculations perfectly reproduce the holographic results obtained in section 3.

\subsection{Entanglement entropy without a local operator}
%\textbf{Caution:} Same with the previous section. But for entanglement entropy, if the time difference between inserted points and intervals are large, probably we can just factorize the effect of inserted operators themselves like $\log <O^\dagger O>$. But still it contributes to the time evolution.
%Taking only the effect of the brane boundary
In this section, we  compute the vacuum entanglement entropy on the SD, i.e. only take the (induced) boundary effect into  account, and compare it with the holographic result discussed in Section \ref{sec:holoEE}. By using the conformal map  %between $w$ and $u$ given by \eqref{conf-map-w} and \eqref{conf-map-u} 
(Fig.\ref{fig:conf-map}), entanglement entropy of an interval $I=[x_A,x_B]$ ($0<x_A<x_B<Z(t)$) at time $t$ is given by
\begin{equation}
    S_{AB\ \mathrm{no\, op.}} =\lim_{n\rightarrow 1} \frac{1}{1-n} \log \mbox{Tr} \rho_{AB\ \mathrm{no\, op.}}^n,
\end{equation}
where
\begin{align}
    \mbox{Tr}\rho_{AB\ \mathrm{no\, op.}}^n &=\langle 0 | \sigma_n (w_A,\bar{w}_A) {\sigma}_n (w_B,\bar{w}_B) |0\rangle_{\mathrm{SD}} \nonumber \\
    &=\left(
    \left.\frac{du}{dw}\right\vert_{u=u_A} \left.\frac{d\bar{u}}{d\bar{w}}\right\vert_{\bar{u}=\bar{u}_A}
    \left.\frac{du}{dw}\right\vert_{u=u_B} \left.\frac{d\bar{u}}{d\bar{w}}\right\vert_{\bar{u}=\bar{u}_B}
    \right)^{\Delta_n /2}
    \langle 0 | \sigma_n(u_A,\bar{u}_A) {\sigma}_n(u_B,\bar{u}_B) |0\rangle_\mathrm{UHP}
\label{twist-correlator}
\end{align}
in terms of the reduced density matrix defined in \eqref{eq:reduced-rho}. $\Delta_n=\frac{c}{12}\left(n-\frac{1}{n}\right)$ is the conformal weight of the twist operators $\sigma_n$. Each twist operator is inserted at the endpoints of the interval $I$: $w_{A,B}=x_{A,B}+it_{E}=x_{A,B}-t$, $\bar{w}_{A,B}=x_{A,B}+t$. The $u$'s are coordinates after the conformal transformation, i.e. $u_{A,B}=u(w_{A,B})$ and $\bar{u}_{A,B}=\bar{u}(\bar{w}_{A,B})$.

To compute a correlation function on the UHP, we use the doubling trick~\cite{Cardy:1984bb,Recknagel:2013uja}:
\ba
\langle \sigma_n(u_A,\bar{u}_A) {\sigma}_n(u_B,\bar{u}_B) \rangle_\mathrm{UHP}=\langle \sigma_n(u_A) \sigma_n(\bar{u}_A) {\sigma}_n(u_B) {\sigma}_n(\bar{u}_B) \rangle_\mathrm{\mathbb{C}}^{c}.
\label{doubling}
\ea
%$\sim$ indicates this equivalence holds at the level of differential equations. 
Assuming the CFT is holographic, this four-point function can be evaluated via the vacuum conformal block in the large central charge limit~\cite{Hartman:2013mia,Asplund:2014coa}. The two possible OPE channels correspond to the connected and disconnected geodesics in the holographic calculation.

%To evaluate this four-point function, we employ the HHLL approximation assuming $n\rightarrow 1$ in the large-$c$ expansion followed by~\cite{Hartman:2013mia,Asplund:2014coa}. The four-point function \eqref{doubling} is calculated via the conformal block expansion. In the large-$c$ limit, the dominant intermediate states are given by the identity operator and its descendants. If we focus on the early and late time behavior, descendants become less contributing and hence the four-point function can be factorized as a product of two-point functions. The dominant contribution is given by either the $s$-channel or the $t$-channel expansion. They are given by
%\ba
%\mathrm{tr}\rho_A^n=
%\left|
%\left(\frac{2\ap}{1-\frac{\vp}{\pi}}\right)^{2} 
%\frac{u_1 u_2}{(w_1^2+\ap^2) (w_2^2+\ap^2)}
%\right|^{\Delta_n}
%\begin{dcases*}
%\frac{1}{|u_1-\bar{u}_1|^{\Delta_n} |u_2-\bar{u}_2|^{\Delta_n} }&($s$-channel)\\
%\frac{1}{|u_1-u_2|^{\Delta_n} |\bar{u}_1-\bar{u}_2|^{\Delta_n} }&($t$-channel)
%\end{dcases*},
%\label{renyi-bdy}
%\ea
%\footnote{This expression is up to an overall factor irrelevant to entanglement entropy.} 

%\subsubsection{Entanglement entropy in the absence of local operator}

%By using the conformal map (\ref{conf-map-w}) and  (\ref{conf-map-u}), we can calculate the entanglement entropy for an interval $A=[x_A,x_B]$ at time $t$ on the $w$-plane (note $w=x+i\tau=x-t$).  Note again that $x_A$ and $x_B$ can take values in the range $0<x_A<x_B<Z(t)$. We assume that there is no operator inserted and the CFT is a holographic one. Then 
The entanglement entropy reads $S_{AB\ \mathrm{no\, op.}}=\min \{S^{con}_{AB\ \mathrm{no\, op.}},S^{dis}_{AB\ \mathrm{no\, op.}}\}$, where
\ba
&& S^{con}_{AB\ \mathrm{no\, op.}}=\frac{c}{6}\log\frac{|u_A-u_B|^2}{\ep^2|u'_A||u'_B|},\no
&& S^{dis}_{AB\ \mathrm{no\, op.}}=\frac{c}{6}\log\frac{|u_A-\bar{u}_A||u_B-\bar{u}_B|}{\ep^2|u'_A||u'_B|}+2S_{bdy},
\label{eq:EEwoop}
\ea
where $u_{A,B}=u(w_{A,B})$ and  $u'_{A,B}=\left.\frac{du}{dw}\right|_{w=w_{A,B}}$. $S_{bdy}$ is the boundary entropy and $\ep$ is the UV cutoff. 

We plotted the behavior of the entanglement entropy in Fig.\ref{fig:heevac}. The connected entanglement entropy $S^{con}_{AB}$ for an interval $[x_A,x_B]$ shows a dip around the time $t\sim\frac{x_A+x_B}{2}$. This can be interpreted as the shock wave of negatve energy flux emitted by the falling massive object in AdS, whose trajectory looks like 
$x\sim t$. On the other hand, the disconnected entanglement entropy $S^{dis}_{AB}$ increases until the massive falling particle crosses the second minimal surface when $t\sim x_B$.

The rightmost plot is proportional to the behavior of the negative energy flux at $t=0$ as predicted by the first law of entanglement entropy (\ref{firstL}). Indeed we can confirm that in the small subsystem size  limit $|x_A-x_B|\to 0$, we have 
\ba
 \Delta S^{con}_{AB}\simeq \frac{c}{18}(1-\kappa^2)H(t,x_A)(x_A-x_B)^2,
\ea
which is completely  in accord with the result for the  energy momentum tensor (\ref{eq:EMtensor-bdy}).\footnote{To derive this, it is useful to see $H(t,x)=\frac{\ap^2}{\pi}\left(\frac{1}{(w^2+\ap^2)^2}+\frac{1}{(\bar{w}^2+\ap^2)^2}\right)$. Note that since $\Delta S_{AB}^{con}=-\frac{c}{72}|x_A-x_B|^2\left(\{u_A\;w_A\}+\{\bar{u}_A\;\bar{w}_A\}\right)$, the first law holds for any conformal map $u(w)$. Thus, even after the rescaling by $\eta$, it continues to hold.}

\begin{figure}
      \includegraphics[width=5.25cm]{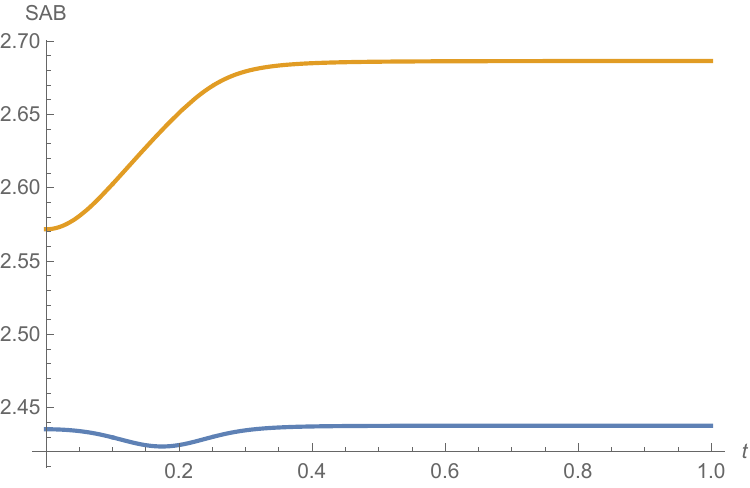}
      \includegraphics[width=5.25cm]{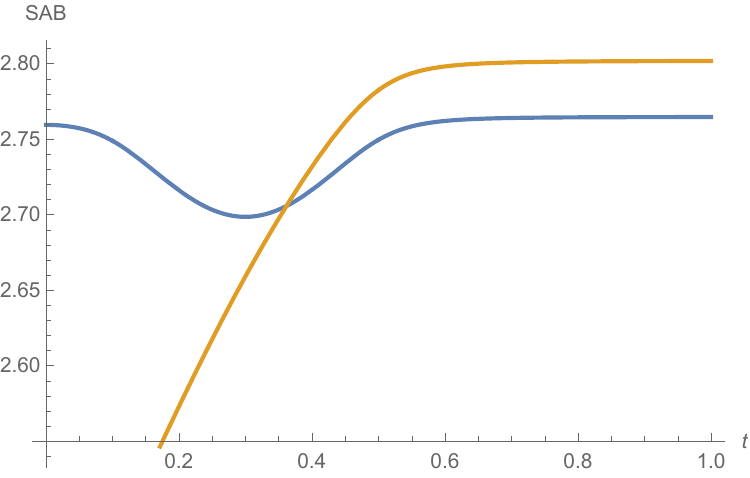}
         \includegraphics[width=5.25cm]{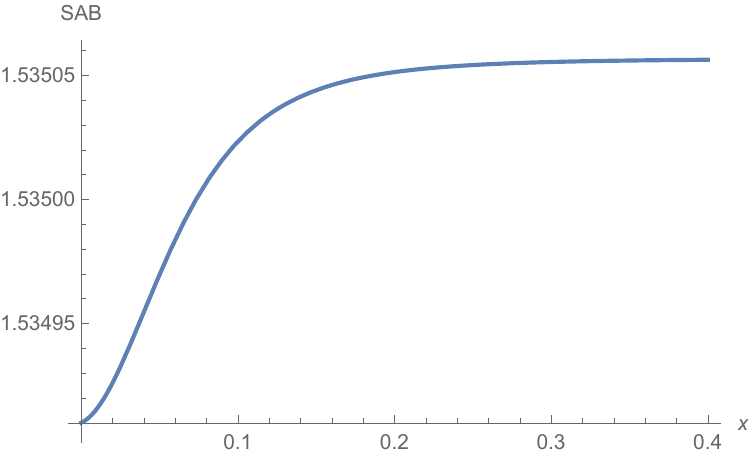}
        \caption{Plots of the entanglement entropy \eqref{eq:EEwoop} in the BCFT.
        We chose $\ap=0.1$,  $\chi=0.9$, and $\ep=0.0001$, where we have $Z(0)=\frac{\ap}{\tan\frac{\vp}{2}}\simeq 0.567$. In the left and middle plot, 
        we chose $I=[0.1,0.25]$ and $I=[0.1,0.5]$, respectively and we showed the entanglement entropy as a function of the time $t$. The blue and orange graph describe $S^{con}_{AB}$ and $S^{dis}_{AB}$ (we set $S_{bdy}=0$), respectively. In the right panel we plotted $S^{con}_{AB}$ at $t=0$ for the subsystem $I=[x,x+0.01]$ as a function of $x$. In these plots, we set $c=1$ and $S_{bdy}=0$.
        }
    \label{fig:heevac}
\end{figure}

%In contrast to $S^{con}_{AB}$, the behavior of $S^{dis}_{AB}$ is qualitatively similar to the holographic result (Fig.\ref{fig:Sdis1}), in which the effect of the local operator is already taken into the account. We can expect the operator effect is not crucial for the disconnected phase and only gives a small peak around $x=x_B$.\footnote{Note that the magnitude of $S^{dis}_{AB}$ seems to be different from the holographic one. This causes the discrepancy of the time of the phase transition from the connected phase to the disconnected phase between the CFT results and holographic ones. One possibility is that the boundary entropy gets slightly modified by the inclusion of the local quench so that $S^{dis}_{AB}$ differs by a constant.}

\subsection{Entanglement entropy with a local excitation}
%After the doubling trick, we can analyze the six point function as discussed in p.14-16 of \cite{Caputa:2015waa}.

In this section, we discuss entanglement entropy of a single interval $I=[x_A,x_B]$, taking the effect of the local operator insertions as well as the induced boundary into the account. %Since there involves technical difficulties regarding higher-point functions in general, the more detailed analysis is kept for the future work. Here we intend to address the possible derivation of holographic results from the holographic CFT.

\subsubsection{Factorization of entanglement entropy}

Entanglement entropy is calculated from
\begin{align}
\hspace{-15pt}
    \mbox{Tr} \rho_{AB}^n &= \bra{\Psi}\sigma (w_A,\bar{w}_A) \sigma (w_B,\bar{w}_B) \ket{\Psi}_{\mathrm{SD}} \nonumber\\
    &=\left[(2\kappa \ap)^4 \frac{u_A\bar{u}_A u_B\bar{u}_B}{(w_A^2+\ap^2)(\bar{w}_A^2+\ap^2)(w_B^2+\ap^2)(\bar{w}_B^2+\ap^2)} \right]^{\Delta_n /2} \bra{\Psi}\sigma (u_A,\bar{u}_A) \sigma (u_B,\bar{u}_B) \ket{\Psi}_{\mathrm{UHP}} \nonumber\\
    &=\left[(2\kappa \ap)^4 \frac{u_A\bar{u}_A u_B\bar{u}_B}{(w_A^2+\ap^2)(\bar{w}_A^2+\ap^2)(w_B^2+\ap^2)(\bar{w}_B^2+\ap^2)} \right]^{\Delta_n /2} \bra{\Psi\Psi^\ast} \sigma (u_A) \sigma (u_B) \sigma (\bar{u}_A) \sigma (\bar{u}_B) \ket{\Psi\Psi^\ast}_{\mathbb{C}}^c.
    \label{eq:2twists}
\end{align}
%where we again used the conformal map from the SD to the UHP $u_i=u(w_i)_{i=A,B}$ given by \eqref{conf-map-w} and \eqref{conf-map-u} (Fig.\ref{fig:conf-map}). 
%In the following, we will denote $\sigma (u_i)$ as $\sigma_i$ and $\sigma (\bar{u}_i)$ as $\bar{\sigma}_i$.

\eqref{eq:2twists} is written in terms of the chiral eight-point function and we cannot obtain a simple analytic form even in the large central charge limit in general. 
However, when the subregion size is sufficiently small or large,
there exists two phases corresponding to the connected and disconnected geodesics in the holographic entanglement entropy. %The former should dominate whenever $x_A\ (x_B)\gg x_B-x_A$ while the latter should dominate whenever $x_A\ (x_B)\ll x_B-x_A$.
%Here, we focus on the two phases mentioned above and hypothesize the dominant OPE contribution for each phase. 
The holographic entanglement entropy predicts the connected entropy dominates at early time and the disconnected entropy dominates at late time (Fig.\ref{fig:Sdis1}, \ref{fig:Sdis2}). The former corresponds to $x_A\ (x_B)\gg x_B-x_A$ or equivalently, $u_A\sim u_B$ (and $\bar{u}_A\sim\bar{u}_B$), while the latter corresponds to $x_A\ (x_B)\ll x_B-x_A$ or equivalently, $u_{A,B}\sim \bar{u}_{A,B}$. In terms of the OPE channels, they correspond to
\begin{align}
&\tikzset{every picture/.style={line width=0.75pt}}
\begin{tikzpicture}[baseline={([yshift=-3.5ex]current bounding box.center)},x=0.75pt,y=0.75pt,yscale=-1,xscale=1]
\draw    (42.67,44.6) -- (63.67,75.6) ;
\draw    (42.67,104.6) -- (63.67,75.6) ;
\draw    (63.67,75.6) -- (103.67,75.6) ;
\draw    (103.8,51.4) -- (83,25.4) ;
\draw    (103.8,51.4) -- (103.67,75.6) ;
\draw    (103.8,51.4) -- (126.6,25.8) ; 
\draw    (103.67,75.6) -- (205.67,75.6) ;
\draw    (205.8,51.4) -- (185,25.4) ;
\draw    (205.8,51.4) -- (205.67,75.6) ; 
\draw    (205.8,51.4) -- (228.6,25.8) ;
\draw    (265.31,106.13) -- (244.08,75.29) ;
\draw    (264.86,46.13) -- (244.08,75.29) ;
\draw    (244.2,75.4) -- (205.67,75.6) ;
\draw (6,32.6) node [anchor=north west][inner sep=0.75pt]    {$\langle \Psi |$};
\draw (1,92.6) node [anchor=north west][inner sep=0.75pt]    {$\langle \Psi ^{\ast } |$};
\draw (59.6,2.6) node [anchor=north west][inner sep=0.75pt]    {$\sigma ( u_{A})$};
\draw (166.6,2.6) node [anchor=north west][inner sep=0.75pt]    {$\sigma (\bar{u}_{A})$};
\draw (115.6,2.6) node [anchor=north west][inner sep=0.75pt]    {$\sigma ( u_{B})$};
\draw (221.4,2.6) node [anchor=north west][inner sep=0.75pt]    {$\sigma (\bar{u}_{B})$};
\draw (269.4,33.2) node [anchor=north west][inner sep=0.75pt]    {$|\Psi \rangle $};
\draw (270.8,93.2) node [anchor=north west][inner sep=0.75pt]    {$|\Psi ^{\ast } \rangle $};
\end{tikzpicture}
\quad\text{(connected phase)}
\intertext{and}
&
\tikzset{every picture/.style={line width=0.75pt}}
\begin{tikzpicture}[baseline={([yshift=-3.5ex]current bounding box.center)},x=0.75pt,y=0.75pt,yscale=-1,xscale=1]
\draw    (42.67,44.6) -- (63.67,75.6) ;
\draw    (42.67,104.6) -- (63.67,75.6) ;
\draw    (63.67,75.6) -- (103.67,75.6) ;
\draw    (103.8,51.4) -- (83,25.4) ;
\draw    (103.8,51.4) -- (103.67,75.6) ;
\draw    (103.8,51.4) -- (126.6,25.8) ; 
\draw    (103.67,75.6) -- (205.67,75.6) ;
\draw    (205.8,51.4) -- (185,25.4) ;
\draw    (205.8,51.4) -- (205.67,75.6) ; 
\draw    (205.8,51.4) -- (228.6,25.8) ;
\draw    (265.31,106.13) -- (244.08,75.29) ;
\draw    (264.86,46.13) -- (244.08,75.29) ;
\draw    (244.2,75.4) -- (205.67,75.6) ;
\draw (6,32.6) node [anchor=north west][inner sep=0.75pt]    {$\langle \Psi |$};
\draw (1,92.6) node [anchor=north west][inner sep=0.75pt]    {$\langle \Psi ^{\ast } |$};
\draw (59.6,2.6) node [anchor=north west][inner sep=0.75pt]    {$\sigma ( u_{A})$};
\draw (166.6,2.6) node [anchor=north west][inner sep=0.75pt]    {$\sigma ( u_{B})$};
\draw (115.6,2.6) node [anchor=north west][inner sep=0.75pt]    {$\sigma (\bar{u}_{A})$};
\draw (221.4,2.6) node [anchor=north west][inner sep=0.75pt]    {$\sigma (\bar{u}_{B})$};
\draw (269.4,33.2) node [anchor=north west][inner sep=0.75pt]    {$|\Psi \rangle $};
\draw (270.8,93.2) node [anchor=north west][inner sep=0.75pt]    {$|\Psi ^{\ast } \rangle $};
\end{tikzpicture}
\quad\text{(disconnected phase).}
\end{align}

\vspace{0.8cm}

As we discussed in Fig.\ref{fig:ope}, we can regard $|\Psi\Psi^\ast\rangle$ and $\langle \Psi\Psi^\ast |$ as excited states created by a single primary operator $|\tilde{\Psi}\rangle$ and $\langle \tilde{\Psi}|$, whose conformal dimension is $\tilde{\Delta}=2\Delta_O$  as we are considering the BCFT dual of the spacetime without the bulk matter profile and there is no source for the matter field on the EOW brane.
The limits $x_A\ (x_B)\gg x_B-x_A$ and $x_A\ (x_B)\ll x_B-x_A$ imply $\sigma(u_A) \sigma(u_B)\sim \mathbf{1}$ and $\sigma(u_A) \sigma(\bar{u}_A)\sim \mathbf{1}$ respectively. Then, the orthogonality of the two point function in any CFTs leads~\cite{Caputa:2015waa}
\begin{align}
    \bra{\Psi\Psi^\ast} \sigma (u_A) \sigma (u_B) \sigma (\bar{u}_A) \sigma (\bar{u}_B) \ket{\Psi\Psi^\ast}_{\mathbb{C}}^c 
    &= \bra{\tilde{\Psi}} \sigma (u_A) \sigma (u_B) \sigma (\bar{u}_A) \sigma (\bar{u}_B) \ket{\tilde{\Psi}}_{\mathbb{C}}^c \nonumber\\
    &= \sum_\ap \bra{\tilde{\Psi}} \sigma (u_A) \sigma (u_B) \ket{\ap}^c_{\mathbb{C}} \!\bra{\ap} \sigma (\bar{u}_A) \sigma (\bar{u}_B) \ket{\tilde{\Psi}}_{\mathbb{C}}^c \nonumber\\
    &\approx 
    \bra{\tilde{\Psi}} \sigma (u_A) \sigma (u_B) 
    \ket{\tilde{\Psi}}_{\mathbb{C}}^c
    \bra{\tilde{\Psi}}
    \sigma (\bar{u}_A) \sigma (\bar{u}_B) \ket{\tilde{\Psi}}_{\mathbb{C}}^c 
    \label{eq:conn-HHLL}
\end{align}
for the connected entropy and
\begin{align}
    \bra{\Psi\Psi^\ast} \sigma (u_A) \sigma (u_B) \sigma (\bar{u}_A) \sigma (\bar{u}_B) \ket{\Psi\Psi^\ast}_{\mathbb{C}}^c 
    &= \bra{\tilde{\Psi}} \sigma (u_A) \sigma (\bar{u}_A) \sigma (u_B) \sigma (\bar{u}_B) \ket{\tilde{\Psi}}_{\mathbb{C}}^c \nonumber\\
    &=\sum_\ap \bra{\tilde{\Psi}} \sigma (u_A) \sigma (\bar{u}_A) \ket{\ap}^c_{\mathbb{C}} \!\bra{\ap} \sigma (u_B) \sigma (\bar{u}_B) \ket{\tilde{\Psi}}_{\mathbb{C}}^c \nonumber\\
    &\approx 
    \bra{\tilde{\Psi}} \sigma (u_A) \sigma (\bar{u}_A)
    \ket{\tilde{\Psi}}_{\mathbb{C}}^c
    \bra{\tilde{\Psi}}
    \sigma (u_B) \sigma (\bar{u}_B) \ket{\tilde{\Psi}}_{\mathbb{C}}^c 
\end{align}
for the disconnected entropy, where $\sum_\ap \ket{\ap}\!\bra{\ap}=\mathbf{1}$. After all, when we take $n \rightarrow 1$ the chiral eight-point function factorizes into the product of the heavy-heavy-light-light (HHLL) correlator\index{heavy-heavy-light-light conformal block} in the connected and disconnected limits.

Note that \eqref{eq:conn-HHLL} is a chiral correlator multiplied by its conjugate. This equals to the nonchiral, ordinary HHLL correlator, which indicates that only the operator insertion affects EE in the connected phase and the boundary effect plays no role other than doubling the conformal dimension due to the mirror operator. This picture is consistent with the holographic result.

\subsubsection{Computation of the chiral identity conformal block}
In the previous subsection, we have seen the entanglement entropies in the connected/ disconnected limits are given in terms of the chiral HHLL correlator $\bra{\tilde{\Psi}}\sigma_n \sigma_n \ket{\tilde{\Psi}}_{\mathbb{C}}^c$. In this subsection, we employ the identity block approximation as we discuss entanglement entropy, where $n\rightarrow 1$
%In the large-$c$ limit, such a correlator only involves the identity and its descendants as intermediate states
\cite{Asplund:2014coa,Caputa:2015waa,Sully:2020pza}. We denote the chiral identity conformal block as $G_n(z)\equiv \bra{\tilde{\Psi}}\sigma(z) \sigma(1) \ket{\tilde{\Psi}}_{\mathbb{C}}^c \propto \langle {\tilde{\Psi}}(\infty) \sigma(z) \sigma(1) {\tilde{\Psi}}(0)\rangle ^c _\mathbb{C}$.\footnote{$z$ in this subsection is the cross ratio and obviously different from $z$ in the Poincar\'e coordinates.}

%To avoid complication due to those factors, 
We will hereon focus on $\Delta \left.S_{AB}^{(n)}\right|_{\mathrm{SD}}$, the difference of (R\'enyi) entanglement entropy from that of the vacuum on the SD region $S_{AB\ \mathrm{no\, op.}}$. The vacuum entanglement entropy in the SD (i.e. the entanglement entropy without a local operator but with the induced boundary) is already computed in the previous section \eqref{eq:EEwoop}.\footnote{Note that this is different from the vacuum entanglement entropy in the CFT without boundaries; $\Delta S_{AB}\neq \Delta S_{AB}|_{\mathrm{SD}}$. This distinction becomes crucial when we discuss the first law of entanglement.}
The conformal factors cancel out in $\Delta \left.S_{AB}^{(n)}\right|_{\mathrm{SD}}$
and is given by the logarithm of the ratio of the chiral identity conformal blocks: %ore precisely speaking, the difference of entanglement entropy of an excited state by $O$ and vacuum with the original boundary $x=0$ and the induced one \eqref{disco}, is given by
\begin{align}
    \Delta \left.S_{AB}^{(n)}\right|_{\mathrm{SD}}&=\frac{1}{1-n}\log \frac{\mbox{Tr}\left.\rho_{AB}^n\right|_{\mathrm{SD}}}{\mbox{Tr}\left.\rho_{AB,\mathrm{vac}}^n\right|_{\mathrm{SD}}} \nonumber\\
    &=\frac{1}{1-n}\log 
    \begin{dcases}
    %\frac{G_n(z)}{G_n^{(0)}(z)}\quad &\text{(i)}\\
    \left\vert\frac{G_n(z_{{con}})}{G_n^{(0)}(z_{{con}})}\right\vert^2\quad &\text{(connected phase)}\\
    \frac{G_n(z_A)}{G_n^{(0)}(z_A)}\frac{G_n(z_B)}{G_n^{(0)}(z_B)}\quad &\text{(disconnected phase)}
    \end{dcases}
    ,
    \label{eq:EE-op}
\end{align}
where $G_n^{(0)}(z)=\bra{0}\sigma(z) \sigma(1) \ket{0}_{\mathbb{C}}^c$ and the cross ratio $z_{con},z_{A,B}$ is respectively given by $u_B/u_A$ and $\bar{u}_{A,B}/u_{A,B}$.

Since the holomorphic and anti-holomorphic parts are factorized in the Virasoro block \cite{Fitzpatrick:2014vua}, the chiral identity block is just a square root of the usual identity block. The ratio is given by~\cite{Asplund:2014coa}
\begin{align}
    \frac{G_n(z)}{G_n^{(0)} (z)}&= \left(\frac{1}{|\ap_O|^2} |z|^{1-\ap_O} \left|\frac{1-z^{1-\ap_O}}{1-z}\right|^2 \right)^{-\Delta_n/2} %,\quad n\rightarrow 1 
    \nonumber\\
    &=\left| \frac{z^{\ap_O /2}-z^{-\ap_O /2}}{\ap_O (z^{1/2}-z^{-1/2})}\right|^{-\Delta_n},
    \label{eq:conf-block}
\end{align}
where $\ap_O=\sqrt{1-12\frac{\tilde{\Delta}}{c}}=\sqrt{1-24\frac{\Delta_O}{c}}$. Here we emphasize again that the conformal dimension in $\ap_O$ is twice as large as the one discussed in \cite{Asplund:2014coa,Hartman:2013mia} since the local operator in the HHLL correlator in concern is $\tilde{\Psi}$, whose conformal dimension is $\tilde{\Delta}=2\Delta_O$.

By rewriting $z=e^{i\omega}$ and $\bar{z}=e^{-i\bar{\omega}}$,\footnote{$\bar{\omega}$ is not the complex conjugate of $\omega$ but is defined from $\bar{z}$.} \eqref{eq:conf-block} reduces to
\begin{equation}
    \frac{G_n(z)}{G_n^{(0)} (z)}= \left( \frac{\sin \frac{\ap_O \omega}{2}}{\ap_O \sin \frac{\omega}{2}}
    \frac{\sin \frac{\ap_O \bar{\omega}}{2}}{\ap_O \sin \frac{\bar{\omega}}{2}} \right)^{-\Delta_n/2}.
    \label{eq:conf-block-sin}
\end{equation}

\subsubsection{Connected entropy}
When $z=u_B/u_A\equiv z_{con}$ (connected phase), after the analytical continuation to the Lorentzian time,
we have
\begin{align}
    z_{con}&=\left(\frac{x_A-t+i\ap}{x_A-t-i\ap}\frac{x_B-t-i\ap}{x_B-t+i\ap}\right)^\kappa. %\nonumber\\
    %&=\left[e^{2i\left(\arccos{f_-(x_1)} - \arccos{f_- (x_2)} \right)} \right]^\kappa.
\end{align}
%
%Since $0<\arccos{f_-(x_1)} - \arccos{f_- (x_2)}<\pi$,
%\begin{equation}
%    \omega_\mathrm{conn.}=2\kappa\left(\arccos{f_-(x_1)} - \arccos{f_- %(x_2)} \right).
%\end{equation}
%
Similarly, by replacing $t\rightarrow -t$,
\begin{align}
    \bar{z}_{con}&=\left(\frac{x_A+t+i\ap}{x_A+t-i\ap}\frac{x_B+t-i\ap}{x_B+t+i\ap}\right)^\kappa. %\nonumber\\
    %&=\left[e^{2i\left(\arccos{f_+(x_1)} - \arccos{f_+ (x_2)} \right)} \right]^\kappa.
\end{align}
%Since $0<\arccos{f_+(x_1)} - \arccos{f_+ (x_2)}<\pi/2$,
%\begin{equation}
%    \bar{\omega}_\mathrm{conn.}=2\kappa\left(\arccos{f_+(x_1)} - \arccos{f_+ (x_2)} -\pi \right).
%\end{equation}
%so that the argument is within $(-2\pi,0]$.
Using the coordinate transformation \eqref{thrmap}, the cross ratios can be written in terms of the global coordinates:
\begin{equation}
    z_{con}=\left(\frac{e^{i(\theta_A-\theta_B)}}{e^{i(\tau_A-\tau_B)}}\right)^\kappa, \quad \bar{z}_{con}=\left(e^{i(\theta_A-\theta_B)}e^{i(\tau_A-\tau_B)}\right)^\kappa.
\end{equation}
We can immediately read out $\omega$ and $\bar{\omega}$ from this.
Assuming the trivial branch, we can show
\begin{equation}
    \frac{G_n(z_{con})}{G_n^{(0)} (z_{con})}= \left[\frac{\cos(\ap_O \kappa(\tau_A-\tau_B))-\cos(\ap_O\kappa (\theta_A-\theta_B))}{\ap_O^2 \left(\cos(\kappa(\tau_A-\tau_B))-\cos(\kappa (\theta_A-\theta_B))\right)}
    \right]^{-\Delta_n/2}.
    \label{eq:conn-EE-CFT}
\end{equation}
by some simple trigonometric algebra.

This analysis leads to the final expression of the growth of connected entanglement entropy:
\ba
\Delta S_{AB}^{con,{CFT}}=\frac{c}{6}\log \left[\frac{|u_1||u_2|}{|w_1-w_2|^2|u'_1||u'_2|\ap_O^2}
\cdot \left|\left(\frac{u_1}{u_2}\right)^{\f{\ap_O}{2}}-\left(\frac{u_1}{u_2}\right)^{-\frac{\ap_O}{2}}\right|^2\right]. \label{CFTconEE}
\ea

On the other hand, our previous gravity dual result \eqref{connected HEE} can be rewritten as follows:
\ba
\Delta S^{con,AdS}_{AB}=\frac{c}{6}\log\left[\frac{\left|\left(\frac{u_1}{u_2}\right)^{\frac{\chi}{2\kappa}}-\left(\frac{u_1}{u_2}\right)^{-\frac{\chi}{2\kappa}}\right|^2}{\chi^2\left|\left(\frac{u_1}{u_2}\right)^{\frac{1}{2\kappa}}-\left(\frac{u_1}{u_2}\right)^{-\frac{1}{2\kappa}}\right|^2}\right].
\label{AdSconEE}
\ea

%Note that since
%\ba
%\frac{\chi}{\kappa}=2\s{1-\frac{12\Delta_{AdS}}{c}}-1\approx %1-\frac{12\Delta_{AdS}}{c}\approx %\sqrt{1-24\frac{\Delta_{AdS}}{c}},
%\label{eq:leading-approx}
%\ea
%the CFT calculation here and the holographic result matches to %the leading order in $\Delta_O/c$ if we take %$\Delta_{AdS}=\Delta_O$. This is consistent with %\eqref{eq:leading-approx}. 

The BCFT result (\ref{CFTconEE}) exactly agrees with the holographic one  by identifying the parameters as follows:
(\ref{AdSconEE}) 
\begin{equation}
    \ap_O=\frac{\chi}{\kappa}.   \label{Identifyap}
\end{equation}
Since $\kappa=\eta_0$, this is precisely what we have observed for $\chi$ under the rescaling by $\eta_0$ \eqref{eq:chi-transform}.
Thus, it implies $12\frac{\tilde{\Delta}}{c}=\frac{M^\prime}{R^2}$ when $\eta=\eta_0$. This is exactly equivalent to the relation we argued \eqref{eq:AdS EM tensor}. In this way, We have explicitly confirmed the perfect matching between the BCFT and the gravity dual by looking at entanglement entropy as well as the stress tensor.

To see the first law of entanglement, we need to subtract the vacuum contribution in the CFT given by
\begin{equation}
    S_{AB}^{{vac}}=\frac{c}{6}\log \frac{|x_B-x_A|^2}{\epsilon^2}
\end{equation}
from the connected entropy discussed above.\footnote{The vacuum contribution is equivalently given by the full contribution with $\Delta_O=0$ and $\kappa=1$.} What we computed in this subsection is $\Delta \left.S_{AB}\right|_\mathrm{SD}$, which is the full contribution $S_{AB}$ minus the contribution with the induced boundary but with no operator $S_{AB\ \mathrm{no\, op.}}$.
We can explicitly confirm that $\Delta S_{AB}%\equiv S^{con,CFT}_{AB}-S_{AB}^{{vac}}
=\Delta_O \left.S_{AB}\right|_\mathrm{SD} + S_{AB\ \mathrm{no\, op.}} -S_{AB}^{{vac}}$ in the short distance limit satisfies the first law with the stress tensor \eqref{EMgencft} within the current CFT.
Via the first law (\ref{firstL}), \eqref{CFTconEE} leads to the energy density 
\ba
T_{tt}=\frac{c}{\pi}\left[-\frac{1}{6}(\kappa^2-1)+4\kappa^2\frac{\Delta_O}{c}\right]\cdot H(t,x),
\ea
which perfectly agrees with the stress tensor (\ref{EMgencft}), which was directly computed from the CFT.

\subsubsection{Disconnected entropy}
When $z=\bar{u}/u\equiv z_{dis}$ (disconnected phase), after the analytical continuation to the Lorentzian time,
we have
\begin{align}
    z_{dis}=\bar{z}_{dis}^{-1}&=\left(\frac{x-t-i\ap}{x-t+i\ap}\frac{x+t-i\ap}{x+t+i\ap} e^{-2i\varphi} \right)^\kappa \nonumber\\
    %&=\left[e^{2i\left(-\arccos{f_-(x)} - \arccos{f_+ (x)} + \varphi\right)} \right]^\kappa.
\end{align}

Using the coordinate transformation \eqref{thrmap}, the cross ratio can be written in terms of the global coordinates. In contrast to the connected entropy, we leave the branch choice from $w^\prime$ to $u$ and the cross ratio unspecified. Then,\footnote{Although $z$ and $\bar{z}$ can independently have a different branch, we consider $\omega=\bar{\omega}$ here, which turns out to be the one corresponding to the holographic entanglement entropy.}
\begin{equation}
    z_{dis}=e^{2\kappa i(\theta+\varphi-l\pi)} e^{-2\pi m} \Rightarrow \omega_{dis}=2\kappa \left(\theta+\varphi-\pi l -\frac{\pi}{\kappa} m \right) \quad (l,m\in\mathbb{Z}).
\end{equation}
It follows that
\begin{align}
    &\frac{G_n(z_{dis})}{G_n^{(0)} (z_{dis})} \nonumber\\
    =&\left( \frac{
    \displaystyle\sin \frac{\ap_O \omega_{dis}}{2}
    }{
    \displaystyle\ap_O \sin \frac{\omega_{dis}}{2}
    }
    \frac{
    \displaystyle\sin \frac{\ap_O \bar{\omega}_{dis}}{2}
    }{
    \displaystyle\ap_O \sin \frac{\bar{\omega}_{dis}}{2}
    } \right)^{-\Delta_n/2} %\nonumber\\
    =\left(
    \frac{\sin \left(\ap_O \kappa \left(
    \displaystyle\theta -\pi l +\varphi - \frac{\pi}{\kappa} m  
    \right)\right)}{\ap_O \sin \left( 
    \displaystyle\kappa \left(\theta -\pi l +\varphi - \frac{\pi}{\kappa} m  \right)\right)}
    \right)^{-\Delta_n} 
    \label{eq:dis-first-br}\\
    =&\left( \frac{
    \displaystyle\sin \frac{\ap_O (-\omega_{dis})}{2}
    }{
    \displaystyle\ap_O \sin \frac{-\omega_{dis}}{2}
    }
    \frac{
    \displaystyle\sin \frac{\ap_O (-\bar{\omega}_{dis})}{2}
    }{
    \displaystyle\ap_O \sin \frac{-\bar{\omega}_{dis}}{2}
    } \right)^{-\Delta_n/2} %\nonumber\\
    =\left(
    \frac{\sin \left(\ap_O \kappa \left(
    \displaystyle \pi l -\theta -\varphi +\frac{\pi}{\kappa} m
    \right)\right)}{
    \displaystyle \ap_O \sin \left(\kappa \left(\pi l -\theta -\varphi +\frac{\pi}{\kappa} m \right)\right)
    }
    \right)^{-\Delta_n}.
    \label{eq:dis-next-br}
\end{align}
As we have discussed for the connected entropy, the identification (\ref{Identifyap}) $\ap_O=\chi/\kappa$ reproduces the holographic result for the disconnected entropy as well by choosing $(l,m)=(-1,1)$ for \eqref{eq:dis-first-br} and $(l,m)=(1,0)$ for \eqref{eq:dis-next-br}; each choice of the branch corresponds to $\theta^{min}=\theta$ and $(2-1/\chi)\pi -\theta$, respectively. $l$ corresponds to choosing an appropriate branch so that $0\le \arg u\le \pi$ is satisfied; This is nothing but choosing an appropriate $\theta^{min}$. To justify the branch choices of $m$ within the CFT, we need a careful consideration of the monodromy, which we discuss in Appendix \ref{app:branch}.

%where
%\begin{equation}
%    f_{\pm} (x)= \frac{x\pm t}{\sqrt{(x\pm t)^2 +\ap^2}}.
%\end{equation}
%By some calculations, we can show $-\pi<-\arccos{f_-(x)} - \arccos{f_+ (x)}<0$ for $\forall t\in \mathbb{R}, x>0$. Since $\varphi \ll 1$ in the large-c limit (sparse spectrum), this inequality is basically holds. \textcolor{red}{(But it slightly shifts the inequality. Doesn't it affect the final result?)} Since when we derived the conformal map \eqref{eq:conf-map}, we assumed the argument is within the fundamental domain $[0,2\pi]$, we should choose the branch so that this is satisfied. Consequently,
%\begin{equation}
%    \omega_\mathrm{discon.} (x)=2\kappa \left(\varphi-\arccos{f_-(x)} - \arccos{f_+ (x)} + \pi \right).
%\end{equation}
%Since $\bar{z}_\mathrm{discon.}=1/z_\mathrm{discon.}$, $\bar{\omega}_\mathrm{discon.}=-\omega_\mathrm{discon.}$.

\subsection{Entanglement entropy after rescaling by $\eta$}
As it has been mentioned in the previous sections, entanglement entropy after an arbitrary rescaling by $\eta$ \eqref{rescaleex} is simply given by the replacement $\chi\rightarrow\chi/\eta$. Since the entanglement entropy is written in terms of $\ap_O \kappa$, it becomes $\ap_O \kappa/\eta$ and this is consistent with $\chi$ becoming $\chi/\eta$ in holographic entanglement entropy.

\chapter{Entanglement in quantum many-body systems}\label{ch:3}
\renewcommand{\thesection}{\thechapter.\arabic{section}}
%!TEX root = ../thesis.tex
%*******************************************************************************
%****************************** Third Chapter **********************************
%*******************************************************************************
%\chapter{My third chapter}

%Entanglement in quantum many body systems
% **************************** Define Graphics Path **************************
\ifpdf
    \graphicspath{{Chapter3/Figs/Raster/}{Chapter3/Figs/PDF/}{Chapter3/Figs/}}
\else
    \graphicspath{{Chapter3/Figs/Vector/}{Chapter3/Figs/}}
\fi

\renewcommand{\thesection}{\thechapter.\arabic{section}}
\setcounter{section}{0}
\textit{This chapter contains both reviews and original materials. Useful reviews on tensor networks include~\cite{1130282270516945024,Bridgeman:2016dhh}. 
Section \ref{sec:tensor-network} is mostly reprinted from my second-year student research report and devoted to the review of tensor networks. Section \ref{sec:HED} and subsequent sections follow our original work with Hidetaka Manabe, who mainly performed numerical simulations, and Hiroaki Matsueda~\cite{Mori:2022xec}. The author of this dissertation has contributed to the original proposal of geometric entanglement distillation by pushing the bond cut surface to the minimal surface, revealing its relation to entanglement distillation from pseudo entropy, analytic calculations in multi-scale entanglement renormalization ansatz and matrix product states, proposing a numerical setup, leading the project as a first author, and writing the paper.}\\

So far we consider modifications of field theories by adding mass and interactions, or boundaries and excitations. In this chapter, we are motivated to reformulate holography itself based on tensor network from an operational perspective rather than making modifications. 
Our guiding principle is holographic entanglement entropy \eqref{eq:HEE-RT}, however, this formula itself only tells us about entanglement entropy (EE) and no more. Thus, it is interesting to consider if we can geometrically realize some information-theoretic operations to see the structure of entanglement shared among unit cells of quantum many-body states.
As a first step, we carefully examine the relationship between the geometry of tensor networks and the entanglement structure and interpret a certain geometric operation as entanglement distillation. We also consider a measure for how well the geometric operation can serve as entanglement distillation. Note that our target is arbitrary states represented by tensor networks and not limited to ground states.

In this chapter, we relate a geometric procedure using tensor networks to an operation of concentrating entanglement, namely, entanglement distillation~\cite{Mori:2022xec}. 
Section \ref{sec:intro-TN} provides our motivation to discuss tensor networks from an operational perspective toward a generalization of holography. We will also give a short summary of previous attempts and missing pieces. 
Section \ref{sec:tensor-network} is devoted to the review of tensor network methods. We provide definitions for matrix product states (MPS), projected entangled-pair states, and multi-scale entanglement renormalization ansatz (MERA). Next, in Section \ref{sec:HED}, we introduce our geometric distillation method using MERA as an example. We introduce reduced transition matrices and pseudo entropy to justify our proposal as an approximate entanglement distillation. We also propose a quantitative measure to evaluate how well this geometric approach works as entanglement distillation.
Then, in Section \ref{sec:HED-num} we numerically demonstrate our proposal in the random MERA. In the last section Section \ref{sec:HED-MPS}, we extend our proposal to an MPS in a canonical form and point out the structural similarity with MERA. Appendix \ref{app:TN-rep} explains the diagrammatic notations used in this chapter. 

\section{Motivation for the operational aspects of tensor networks}\label{sec:intro-TN}
Holographic duality was strongly motivated by the Ryu-Takayanagi (RT) formula \eqref{eq:HEE-RT}, where EE in holographic QFT is proportional to the area of the minimal surface in its gravity dual called the RT surface~\cite{Ryu:2006bv}. 
The RT formula is essentially a holographic extension of the famous Bekenstein-Hawking formula for black hole entropy~\cite{Bekenstein:1973ur,Bardeen:1973gs,Hawking:1975vcx}. A very important feature of a black hole is the presence of radiation of Hawking pairs inside and outside the event horizon. Then, the theory is described by the Bogoliubov transformation in superconductivity to connect both sides of the event horizon. Thus, the RT surface should be characterized by the condensation of entangled pairs from elementary objects, which have critical information about the holographic spacetime. Meanwhile, as we have seen in Section \ref{sec:EE-op-based}, EE can be defined operationally in quantum information. Entanglement entropy of a state asymptotically equals %to 
the number of extractable Einstein–Podolsky–Rosen (EPR) pairs via LOCC in the limit of a large number of state copies. This procedure of extraction is called entanglement distillation \eqref{eq:distillation}.

To further clarify the information-theoretic aspect of holography, it is important to understand how the RT formula is derived from the operation-based definition of EE. However, the previous derivation of the RT formula~\cite{Lewkowycz:2013nqa} relies on the state-based definition and the relation to the operational definition remains unclear (although see \cite{Bao:2018pvs} for some progress regarding one-shot entanglement distillation). {The} bit thread {formalism}, a mathematically equivalent formulation of the RT formula, suggests EPR pairs across the RT surface~\cite{Freedman:2016zud}. While this picture supports the operational definition in holography, the physical origin of the EPR pairs is still unknown in contrast to the case of a black hole.
Although the RT formula is derived in holography based on the gravitational path integral~\cite{Lewkowycz:2013nqa}, such an operational derivation  of the RT formula has not been accomplished (although see \cite{Bao:2018pvs} for some progress regarding one-shot entanglement distillation). The mathematically equivalent formulation of the RT formula called bit threads~\cite{Freedman:2016zud} supports this operational picture where EPR pairs are condensed across the RT surface, nevertheless, the physical origin of these pairs is not clear in contrast to the case of black hole.
In this chapter, we address this issue in terms of quantum operational techniques based on tensor networks, which we will describe in Section \ref{sec:tensor-network}.

A recent proposal for a better understanding of the RT surface stated that the maximally entangled states characterize the surface~\cite{Freedman:2016zud,Agon:2018lwq,Agon:2021tia,Rolph:2021hgz,Chen:2018ywy,Cui:2015pla}. The proposal suggests that the surface may emerge from entanglement distillation by a deformation of the boundary. 
One of the goals of this chapter is to provide a concrete method to achieve this procedure in the MERA and discuss a possible extension to other tensor networks such as MPS.
%\red{Intuitively vs. Precisely speaking -> should remedy the whole paragraph?}
%
%Motivated by the possible relationship with distillation and the minimal bond cut surface, we examine geometric operations in tensor networks with and without a holographic direction. 
In special circumstances, previous literature has established the relation between the discrete version of the RT formula and a (one-shot) entanglement distillation in tensor networks. These tensor networks are perfect~\cite{Pastawski:2015qua} or special tree tensor networks~\cite{Bao:2018pvs,Lin:2020yzf,Yu:2020zwk,Lin:2020ufd}. Using the isometric property of their composing tensors, we can show the state equals a collection of EPR pairs across the minimal surface via the so-called greedy algorithm.
However, these tensor networks are still inadequate to achieve conformally invariant states, which are usually assumed in holography. 
For instance, a correlation function in perfect tensor networks does not decay as the distance increases and its entanglement spectrum is flat. This is contradictory to the result for CFTs. Thus, we mainly focus on MERA in this chapter as it is known to efficiently approximate critical ground states. Furthermore, it has a capacity to express various wave functions via a variational optimization, which is also missing in the holographic tensor network toy models in previous literature. Despite MERA being neither a perfect nor a tree tensor network, our method enables us to discuss entanglement distillation in the MERA. Moreover, we claim that the methodology is also applicable to an MPS.
There is no direct bulk/boundary correspondence in MPS since it lives on the lattice of our target model. Nevertheless, when we define a partial system, a minimal bond cut surface can always be defined as the edge of the partial system. By appropriately distilling over each matrix, we can find a state close to the EPR pair. Our goal is to show analytical and numerical evidence for these procedures in relation to a minimal bond cut surface and EPR pairs.

\section{Tensor networks}\label{sec:tensor-network}
In this section, we briefly review the tensor network method, a geometrical representation of variational wave functions.\footnote{We focus on the Hamiltonian formulation here. But we can make a similar discussion in the Lagrangian formalism by considering a cut in the partition function to define a reduced density or transition matrix (which will be defined later).}
Refer to Appendix \ref{app:TN-rep} for the definition of the graphical expressions employed in this chapter.

\subsection{Quantum many-body system}
A \textbf{quantum many-body system}\index{quantum many-body system} is described by a linear combination of a tensor product of states at each site. \underline{In this report, we consider a \emph{finite}-dimensional Hilbert} \underline{space unless otherwise noted.} Each site $j$ on a lattice (composed of $N$ sites) is a $d$-level quantum system equipped with $d$ dimensional Hilbert space $\mathcal{H}_j$ (for instance $\frac{d-1}{2}$-spin system\footnote{The representation of the state is of course not restricted to SU(2).}) with the basis $\{\ket{s_j}\}_{\scaleto{s_j=1,2,\cdots d\mathstrut}{9pt}}$, called \emph{qudits}\footnote{They are called qubits in $d=2$ and qutrits in $d=3$.}. Then any \emph{pure} state $\rho_\Psi=\dyad{\Psi}$ of the lattice is given by
\begin{align}
| \Psi \rangle = \sum _ { s _ { 1 } ,\cdots  s _ { N } = 1 } ^ { d } \Psi ^ { s _ { 1 }  \cdots s _ { N }} \bigotimes _ { j } \left| s _ { j } \right\rangle
\qq{where} \bigotimes_s \ket{s_j}&\equiv\ket{s_1}\otimes\ket{s_2}\otimes\cdots\otimes\ket{s_N}
\label{eq:manybody}\\
&\equiv\ket{s_1 s_2 \cdots s_N}.\nonumber
\end{align}
$\ket{\Psi}$ is characterized by $d^N$ complex coefficients $\Psi _ { s _ { 1 }  \cdots s _ { N }} \in \mathbb{C}$. This is nothing but the many-body wave function in the basis $\{\bigotimes_s \ket{s_j}\}_j$. We can regard these coefficients as a rank $N$ tensor if the basis is fixed\footnote{Usually there is a U$(d^N)$ gauge redundancy of the basis transformation. If we stick to the site-basis representation of the state, then the state is already gauge fixed site by site thus we only need U$(d)$ gauge fixing.}.
As long as the basis is fixed and known, specifying this tensor determines the target pure state. We intend to decompose the tensor into some pieces of smaller tensors. This collection of tensors is called \textbf{tensor network (TN)} \index{tensor network}\index{TN}. The motivation of tensor network is originally an economical computation of wave functions of interest. As we can see from the rank of the tensor in \eqref{eq:manybody}, the computational cost of any observables in a many-body system, such as a norm, correlation functions and expectation values, scales exponentially with the system size. The purpose of the tensor network representation of wave functions (mostly in condensed matter physics and the lattice gauge theory) is to suppress the exponential cost to a polynomial order. 

Although a typical usage of tensor networks in condensed matter physics is reducing the cost in numerical computations, the philosophy of tensor networks is very useful in examining the entanglement structure even analytically.
A tensor network can be constructed by contracting internal bonds between tensors defined on each lattice site of our model system. The hidden degrees of freedom carried by the internal bonds represent how nonlocal quantum entanglement is shared between two distant sites. By controlling the dimension, we can sequentially increase the resolution of the variational optimization to obtain the true ground state. Mathematically, the network of the tensors is reformulated as projected entangled-pair states (PEPS)~\cite{Verstraete:2004cf}, in which we define maximally entangled states among artificial degrees of freedom on each bond and then {take} some physical mapping on each site. The well-known matrix product state (MPS) is a one-dimensional (1D) version of PEPS.
Since holography, in particular, the RT formula also suggests encoding the entanglement structure into some geometry, it has been discussed if
a certain class of tensor networks can simulate the structure of holography, e.g. see~\cite{Swingle:2009bg,Pastawski:2015qua,Hayden:2016cfa}. 
Our intention is to study tensor networks from an entanglement perspective to discuss a more general framework of holography, not limited to the aforementioned specific models.
For this purpose, we review several representative classes of tensor networks in the following subsections. We mostly focus on the $(1+1)$-dimensional systems for simplicity.

\subsection{Matrix product states (MPS)}\label{sec:mps}
In this section, the most fundamental class of tensor network called a \textbf{matrix product state (MPS)}~\cite{Ostlund:1995zz} is introduced.

The singular value decomposition (SVD) \eqref{eq:svd} can decompose each tensor into smaller tensors. By performing the SVDs $n$ times on a one-dimensional $n$-body state, we obtain a chain of matrices. This class of tensor network is called a \textbf{matrix product state (MPS)}.\index{matrix product state}\index{MPS |see matrix product state } By truncating the bond dimension, i.e., essentially applying low-rank approximations to each matrix, the MPS becomes an efficient representation to some one-dimensional quantum system. The procedure is as follows~\cite{Bridgeman:2016dhh}.
\begin{enumerate}
\item Consider a one-dimensional $n$-body state. Each external leg shows the degrees of freedom at each site.
\begin{equation}
| \psi \rangle = \sum _ { s _ { 1 } ,\cdots  s _ { N } = 0 } ^ { d-1 } \psi ^ { s _ { 1 }  \cdots s _ { N }} \bigotimes _ { j } \left| s _ { j } \right\rangle= \includegraphics[valign=c, clip, width=2.5cm]{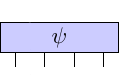}
\end{equation}

\item Perform the SVD ($\psi$ as a $d\times d^{N-1}$ matrix) from the left and separate the leftmost site from others:
\begin{equation}
\includegraphics[valign=c, clip, width=2.5cm]{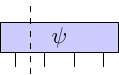}=\includegraphics[valign=c, clip, width=4.2cm]{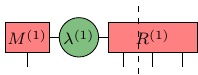}.
\end{equation}

\item In the same manner, perform the SVD ($R^{(1)}$ as a $dr^{(1)} \times d^{N-2}$ matrix) to separate the leftmost site in $R^{(1)}$ ($r^{(1)}$ is the dimension of the internal leg between $M^{(1)}$ and $R^{(1)}$):
\begin{equation}
\includegraphics[valign=c, clip, width=4.2cm]{mps3-mast.pdf}=\includegraphics[valign=c, clip, width=6.2cm]{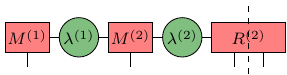}.
\end{equation}

\item Repeat the procedure above until all sites are separated:
\begin{align}
\includegraphics[valign=c, clip, width=6.2cm]{mps4.pdf}=&\includegraphics[valign=c, clip, width=7.8cm]{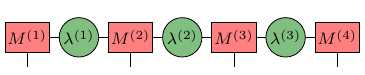}.\nonumber\\
&\hspace*{3.5cm}(N=4\ \mathrm{case})
\end{align}

\item Finally merge a singular value matrix $\lambda^{(i)}$ into each nearest matrix $M^{(i)}$ at each site\footnote{One could merge $\lambda^{(i)}$ into $M^{(i-1)}$ or $\sqrt{\lambda^{(i)}}$ into both $M^{(i-1)}$ and $M^{(i)}$ instead.}.
\begin{equation}
\includegraphics[valign=c, clip, width=6cm]{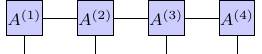}
\end{equation}

\item To reduce the exponential cost, we truncate the dimension of all the internal legs by the bond dimension $\chi \le \min_{i}(r^{(i)})$.
\end{enumerate}
Although in the procedure above, we performed the SVDs from the left, it does not matter where one starts the SVD from.

To sum up, the MPS approximation with the bond dimension $\chi$ of a $n$-body state (including the summation of external legs with the basis) is given by
\begin{equation}
\hspace*{-1cm}
\ket{\psi}_{1\cdots N}=\sum_{ s _ { 1 } ,\cdots  s _ { N } = 0 }^{d-1} \sum_{\alpha_1, \cdots \alpha_N=1}^\chi A\indices{^{(1)\, s_1}_{\alpha_1}} A\indices{^{(2)\, s_2\, \alpha_1} _{\alpha_2}} \cdots A\indices{^{(N-1)\, s_{N-1}\, \alpha_{N-1}} _{\alpha_N}} A^{(N)\, s_N\, \alpha_N} \ket{s_1 s_2 \cdots s_{N-1} s_N}_{1\cdots N},
\end{equation}
where $A\indices{^{(i)\, s_i\ \alpha} _{\beta}}$ is a projector at site $i$ from two virtual indices $\alpha$ and $\beta$ to a physical index $s_i$.
When the external indices are fixed, the wave function (a component of a state) is a matrix product of $A$'s This is why the state is called the \emph{matrix product} state. Strictly speaking, two $A$'s at two endpoints are vectors, not matrices. We often split the leftmost and rightmost vectors into sets of a matrix and a vector:
\begin{equation}
A\indices{^{(1)\, s_1}_{\alpha_1}}=:B_{\alpha_0} A\indices{^{(1)\, s_1\, \alpha_0} _{\alpha_1}},
\end{equation}
where $\bra{B_0}=\bra{\bm{e}}^{\alpha_0} B_{\alpha_0}$ is the leftmost \emph{boundary} covector, which specifies the left boundary condition. Similarly,
\begin{equation}
A^{(N)\, s_N\, \alpha_N} =:A\indices{^{(N)\, s_N\, \alpha_N} _{\alpha_{N+1}}} B^{\alpha_{N+1}},
\end{equation}
where $\ket{B_{N+1}}=B^{\alpha_{N+1}}\ket{\bm{e}}_{\alpha_{N+1}}$ is the rightmost \emph{boundary} vector, which specifies the right boundary condition.

In this way we can separate the boundary expressed as two vectors at the end from the bulk expressed as a chain of matrices with external legs $\{s_i\}$:
\begin{equation}
\ket{\psi}_{1\cdots N}=\sum_{ s _ { 1 } ,\cdots  s _ { N } = 0 }^{d-1} \sum_{\alpha_0, \cdots \alpha_{N+1}=1}^\chi \bra{B_0} A\indices{^{(1)\, s_1}} \cdots A^{(N)\, s_N} \ket{B_{N+1}} \ket{s_1 \cdots s_N}_{1\cdots N},
\end{equation}
where $\bra{B_0}$ and $\ket{B_{N+1}}$ are the boundary vectors with one virtual bond; $A\indices{^{(1)\, s_1}} \cdots A^{(N)\, s_N}$ are the bulk described by a product of matrices with virtual bonds, which are implicitly contracted, and an external leg (physical degrees of freedom) for each; $\ket{s_1 s_2 \cdots s_{N-1} s_N}$ is the bulk basis vector.\footnote{The terminology \textit{bulk} and \textit{boundary} here is obviously different from that in holography.}

\subsubsection{MPS with a mixed boundary state conditions}
So far we considered a state with an \emph{open} boundary condition. For the MPS with a general boundary condition, a probabilistic mixture of boundary states, i.e., a mixed boundary state $Q=\ket{\bm{e}}_{\alpha_{N+1}}Q\indices{^{\alpha_{N+1}}_{\alpha_0}}\bra{\bm{e}}^{\alpha_0}$, gives the most general setup for the MPS state. It is important to include this boundary operator $Q$ to the variational parameters. This makes the MPS representation nontrivial: For example, even if the system is translationally invariant, each tensor component is not translationally invariant. $Q$ consists of some important information about the state such as the quantum number.
\begin{definition}
The \textbf{matrix product state (MPS)}\index{matrix product state}\index{MPS |see matrix product state } with a boundary condition given by a boundary operator $Q$ is given by
\begin{equation}
\ket{\psi}_{1\cdots N}=\sum_{ s _ { 1 } ,\cdots  s _ { N } = 0 }^{d-1} \Tr_{\mathrm{bond}} (A\indices{^{(1)\, s_1}} \cdots A^{(N)\, s_N} Q) \ket{s_1 \cdots s_N}_{1\cdots N}.
\label{eq:mps}
\end{equation}
\end{definition}
This definition of the MPS naturally includes a state with an open boundary condition when $Q$ is a pure state, $Q=\ket{B_{N+1}}\bra{B_0}$, and a periodic boundary condition when $Q$ is a completely mixed state, $Q=\mathbf{1}$.

%normalization
\subsubsection{Normalization and canonical form in MPS}
When a tensor network contains \emph{no loop}, the normalization condition $\braket{\psi}=1$ can be made into a local condition. For example, the norm of a MPS with an open boundary condition $\ket{\psi}$ given in (\ref{eq:mps}) is
\begin{align}
&\phantom{=}\braket{\psi}_{1\cdots N}\nonumber\\
&=\sum_{ s _ { 1 } ,\cdots  s _ { N } = 0 }^{d-1} \bra{B_0^\ast} A\indices{^{(1) \dagger}_{s_1}} \cdots A\indices{^{(N) \dagger}_{s_N}} \ket{B_{N+1}^\ast} \bra{B_0} A\indices{^{(1)\, s_1}} \cdots A^{(N)\, s_N} \ket{B_{N+1}}\nonumber \\
&=\left(\bra{B_0} \otimes \bra{B_0^\ast}\right) \left(\sum_{ s _ { 1 } = 0 }^{d-1} A\indices{^{(1)\, s_1}} \otimes A\indices{^{(1) \dagger}_{s_1}}\right) \cdots \left(\sum_{ s _ { N } = 0 }^{d-1} A^{(N)\, s_N} \otimes A\indices{^{(N) \dagger}_{s_N}}\right) \left(\ket{B_{N+1}} \otimes \ket{B_{N+1}^\ast}\right).
\end{align}
Note $\dagger$ is taken for the physical indices only, otherwise it is just a complex conjugation. Graphically, the norm is first given by
\begin{equation}
\includegraphics[valign=c, clip, width=9cm]{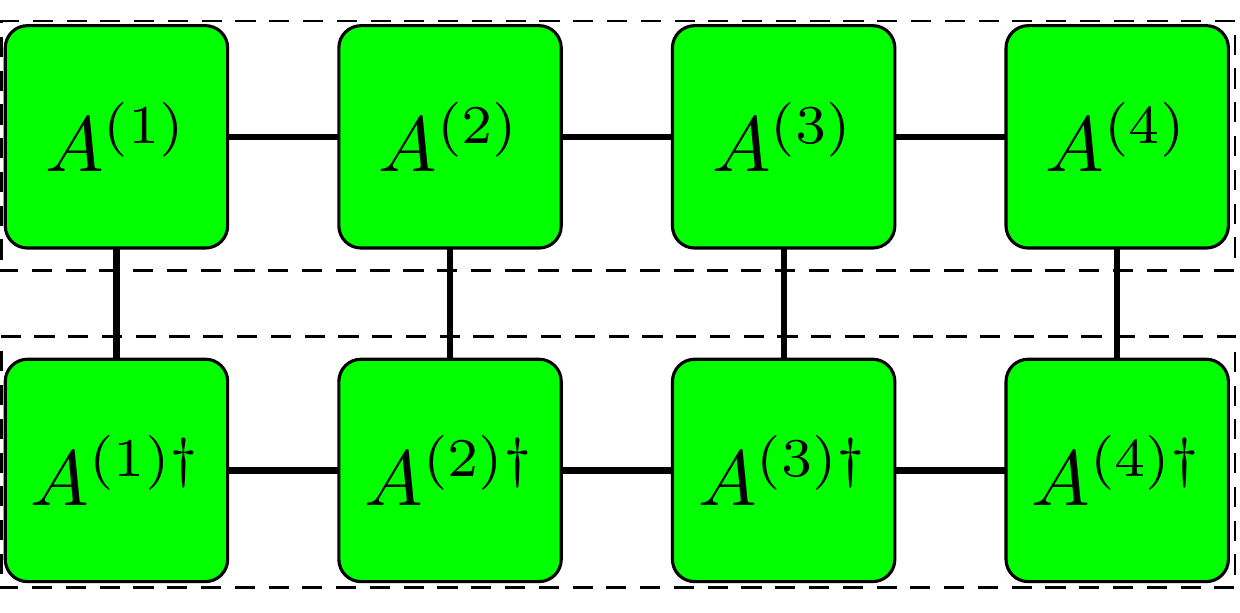}\ .
\end{equation}
(The boundary vectors are absorbed into the bulk matrices at the end.)
Then, we contracted physical indices first and took the trace of a matrix product of $\chi^2 \times \chi^2$ matrices $\{ \sum_{ s _ { i } = 0 }^{d-1} A\indices{^{(i)\, s_i}} \otimes A\indices{^{(i) \dagger}_{s_i}} \}$:
\begin{equation}
\includegraphics[valign=c, clip, width=9cm]{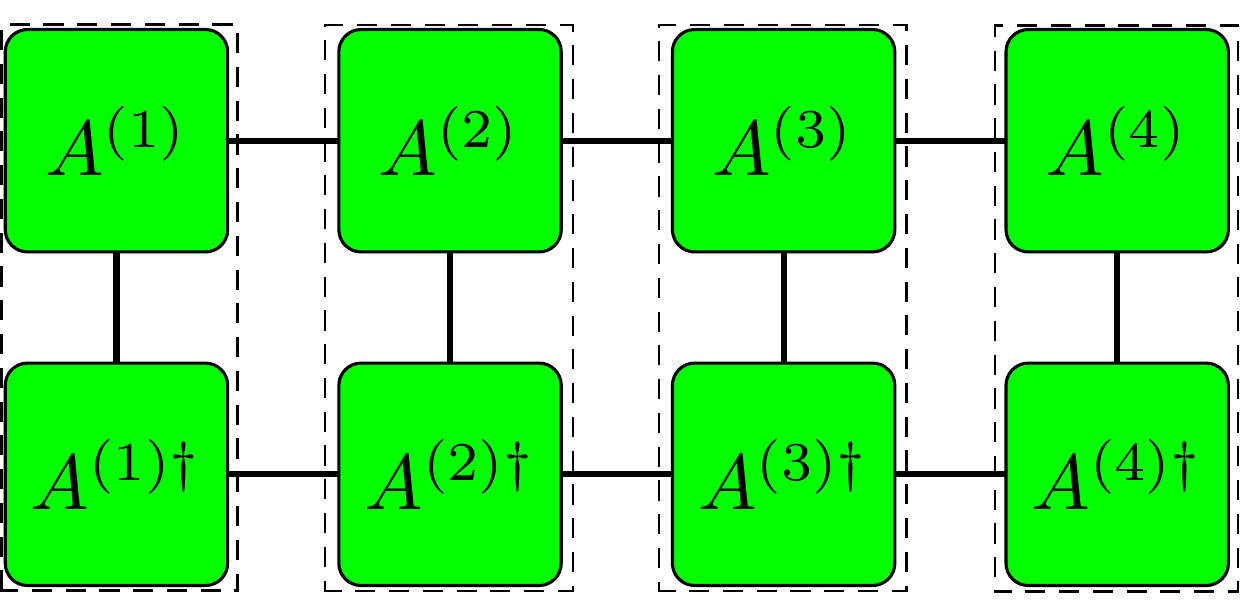}.
\end{equation}
%canonical form
We can compute the norm efficiently by gauge fixing, i.e., imposing a \emph{local} normalization condition on $\sum_{ s _ { i } = 0 }^{d-1} A\indices{^{(i)\, s_i}} \otimes A\indices{^{(i) \dagger}_{s_i}}$. We restrict to the case of an MPS with an \emph{open} boundary condition. Although it is necessary to use the SVD for the gauge fixing to a canonical form as follows, we cannot divide the tensor network in two if the tensor network has a loop. Since the periodic MPS involves a loop, no canonical form is possible.\index{canonical form}
\begin{itemize}
\item \underline{Left-canonical form}\\
We can perform the SVD on the leftmost vector as $d\times \chi$ matrix by the gauge redundancy. It gives
\begin{align}
\includegraphics[valign=c, clip]{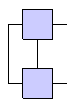}
&=\sum_{ s _ { 1 } = 0 }^{d-1} A\indices{^{s_1}_{\alpha}} \otimes A\indices{^{\dagger}_{s_1\, \beta}}\\
&=U\indices{_\alpha ^{\tilde{\gamma}}}\sigma_{\tilde{\gamma}} V\indices{^\dagger _{\tilde{\gamma}} ^{s_1}} V\indices{_{s_1} ^{\tilde{\rho}}} \sigma_{\tilde{\rho}} U\indices{^\dagger_{\tilde{\rho}\, \beta}}\\
&=\left(\sum_{\mathrm{bond}} \sigma_{\tilde{\alpha}}^2\right) \mathbf{1}_{\mathrm{bond}}=\left(\sum_{\mathrm{bond}} \sigma_{\tilde{\alpha}}^2\right)\includegraphics[valign=c, clip]{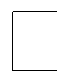}.
\label{eq:canonical}
\end{align}
Then we can `close the zipper' from left to right, iteratively using (\ref{eq:canonical}) every time. After the first SVD, the tensor contraction involves one additional internal leg contraction. We regard $A$ as $d\chi \times \chi$ matrix for the SVD. From the normalization condition $\braket{\psi}=1$, we can fix the numerical factor $\left(\sum_{\mathrm{bond}} \sigma_{\tilde{\alpha}}^2\right)^N=1$. As a result, we have the \textbf{left-canonical form}\index{left-canonical form}, in which each matrix is imposed the \textbf{left normalization}\index{left normalization} (\ref{eq:canonical}) as a gauge fixing condition (the \textbf{left gauge}\index{left gauge}). (\ref{eq:canonical}) implies $A$ after gauge fixing is an isometric tensor $A^\dagger A=\mathbf{1}$. Thus the left-canonical form is also known as the \textbf{left-isometric form}. By the QR decomposition\index{QR decomposition}, in which a matrix is decomposed into a product of an isometric matrix $Q$ and an upper-triangular matrix $R$, one can similarly show we can take the left-canonical form for any MPS\footnote{Usually, the QR decomposition is computationally faster than the SVD.}.

\item \underline{Right-canonical form}\\
Where one starts gauge fixing is arbitrary. If one starts from the rightmost matrix, one obtains the \textbf{right-canonical} or \textbf{right-isometric} form.\index{right-canonical form}\index{right-isometric form}
The \textbf{right gauge}\index{right gauge} is given by the \textbf{right normalization}:\index{right normalization}
\begin{equation}
\scalebox{-1}[1]{\includegraphics[valign=c, clip]{canonical1.pdf}}=\scalebox{-1}[1]{\includegraphics[valign=c, clip]{canonical2.pdf}}.
\end{equation}

\item \underline{Mixed-canonical form}\\
Similarly, one can start gauge fixing from the both ends. Finally one ends up one's contraction at a site or on a bond in the middle. The former is called the \textbf{(site-centered) mixed-canonical form}\index{site-centered mixed-canonical form}\index{mixed-canonical form} or \textbf{site-canonical form}\index{site-canonical form} and the latter is called the \textbf{(bond-centered) mixed-canonical form}\index{bond-centered mixed-canonical form} or \textbf{bond-canonical form}\index{bond-canonical form}~\cite{SCHOLLWOCK201196}. The site or bond in the middle is known as the \textbf{orthogonality center}\index{orthogonality center} or \textbf{center of orthogonality}\index{center of orthogonality}. The mixed-canonical form is commonly used in numerical computations.
\end{itemize}
Any canonical forms are of course equivalent under the gauge transformations (by matrix multiplications and matrix redefinitions). Gauge fix to these canonical forms drastically reduce the computational cost to solve the norm of a state, the expectation value of an operator, the correlation function, etc. Note that these gauges are partial gauge fixing. The unitarity is used in the derivation and thus there still exists a gauge freedom of local basis transformations by a unitary matrix. This remaining freedom is used to increase numerical stability.

%approximation error
The MPS representation reduces the computational cost instead of allowing some approximation errors due to the truncation by the bond dimension $\chi$. The error by the SVD is evaluated as follows. Consider the SVD of a matrix $\psi$:
\begin{equation}
\psi=M\indices{^{(1)\, s_1}_{\alpha_1}} \lambda_{\alpha_1} R\indices{^{(1)\, s_2 \cdots s_N\, \alpha_1}}.
\end{equation}
The approximation error is
\begin{align}
\norm{\psi - \sum_{\alpha_1=1}^\chi M\indices{^{(1)\, s_1}_{\alpha_1}} \lambda_{\alpha_1} R\indices{^{(1)\, s_2 \cdots s_N \, \alpha_1}} }&=\norm{\sum_{\alpha_1 = \chi +1}^{d^N} M\indices{^{(1)\, s_1}_{\alpha_1}} \lambda_{\alpha_1} R\indices{^{(1)\, s_2 \cdots s_N \, \alpha_1}} }\\
&< \norm{\sum_{\alpha_1 = \chi +1}^{d^N} M\indices{^{(1)\, s_1}_{\alpha_1}} \lambda_{\chi + 1} R\indices{^{(1)\, s_2 \cdots s_N \, \alpha_1}} }\\
&\le \lambda_{\chi+1}\times (\mathrm{const.}).
\end{align}
where the norm is defined by the Hilbert-Schmidt or Frobenius norm (with respect to the physical indices), i.e. $\| A \|\equiv \Tr A^\dagger A$. This error bound is known as \textbf{Eckart-Young theorem}\index{Eckart-Young theorem}\footnote{This result is further generalized to all unitarily invariant norms by Mirsky~\cite{10.1093/qmath/11.1.50}.}. The error is $\mathcal{O}(\lambda_{\chi+1})$. When $\lambda_{\chi+1}=0$, the result is error-free. In general, we set the bond dimension as the ansatz and do not know the exact values of singular values. Thus we extrapolate in $1/\chi$ and the convergence of the result should be checked.

%cost
How much cost is reduced by the MPS? By construction, we obtain the maximum Schmidt rank=the maximum number of singular values (parameters)=$d^{N/2}$, for an MPS when we decompose the middle tensor. Hence, the bond dimension $\chi$ is usually taken to be subexponential in $N$: $\mathit{O}(\mathrm{poly}(N))$ or $\mathit{O}(1)$. We evaluated the deviation by this truncation above and confirmed the error is small enough, however, does this truncation really suppress the exponential computational cost? The complete N-body wave function is specified by $d^N$ tensor components. Therefore a naive computation takes an exponential time with respect to the system size. In the MPS, each matrix has two virtual legs and one physical leg; each matrix has $\mathit{O}(\chi^2 d)$ parameters. Since the $N$-body MPS has $N$ matrices, the total number of parameters~cost is $\mathit{O}(N\chi^2 d)$. As $\chi$ is subexponential in N, the computational cost is indeed suppressed to be polynomial in $N$.

The MPS is based on the SVD between nearest two sites. Therefore the MPS is an efficient representation for a one-dimensional quantum system in which the nearest neighbor interaction (entanglement) is dominant.\footnote{We say a class of tensor network efficiently describes the ground state of a quantum system if the tensor network representation with a finite, $\mathit{O}(1)$ or $\mathit{O}(\mathrm{poly}(N))$, bond dimension well approximates the ground state with a small error than other classes of tensor network.} Indeed, EE of the ground state of a gapped system is roughly a constant~\cite{Evenbly_2011}, consistent with the area law.

\subsection{Projected Entangled-Pair States}
The \textbf{projected entangled-pair state (PEPS)}\index{projected entangled-pair state}\index{PEPS} is another type of tensor networks~\cite{Verstraete_2008,Kraus_2010,Corboz_2010}. For a spatially one-dimensional system, it equals MPS. In the following, we show how to construct a one-dimensional PEPS. One can readily see the PEPS is equivalent to the MPS from the final expression. Also, one will find no difficulty to generalize the procedure to the higher dimensional case; the PEPS in more than one dimension can be made in the same manner.

\begin{enumerate}
\item First, prepare a virtual, maximally entangled pair $| \phi \rangle _ { i , i + 1 } \in \mathcal{H}\indices{_\chi^{\otimes 2}}$ (e.g. singlet) over each nearest two sites.
\begin{align}
| \phi \rangle _ { i , i + 1 } &= \sum _ { \alpha = 1 } ^ { \chi } \frac { 1 } { \sqrt { \chi } } | \alpha \rangle _ { i } \otimes | \alpha \rangle _ { i + 1 },\phantom{aa}
\mathrm{up\ to\ a\ unitary\ transformation}\\
&=\includegraphics[clip,width=43pt,valign=c]{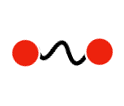}.
\end{align}

\item Then, apply a projector from two virtual degrees of freedom at each physical site onto a physical (real) degrees of freedom $P_i : \mathcal{H}\indices{_\chi^{\otimes 2}} \rightarrow \mathcal{H}_{d_i}$.
\begin{equation}
\ket{\psi}= \chi ^ { N/2 } \left( P _ { 1 } \otimes P _ { 2 } \otimes \cdots \otimes P _ { N } \right) \left(|\phi\rangle_{1,2}\otimes |\phi\rangle_{2,3}\otimes \cdots\otimes| \phi \rangle_{N,1}\right)
\end{equation}
(when the periodic boundary condition is imposed).
\begin{figure}[H]
\centering
\includegraphics[clip,width=0.5\linewidth]{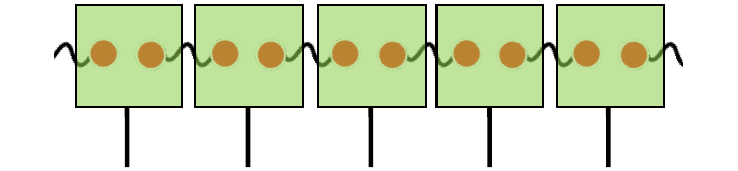}
\caption{The one-dimensional PEPS with physical degrees of freedom shown by orange dots}
\end{figure}

\item If one explicitly writes down the component of the projector, one can see the whole PEPS equals to the MPS.
\begin{equation}
\mathit{P} _ { j } = \sum _ { s_j = 1 } ^ { d } \sum _ { \alpha , \beta = 1 } ^ { \chi } \left({ A ^ { (j)\, s_j } }\right) _ { \alpha\beta } | s_j \rangle _ { j } \otimes \langle \alpha | _ { j } \otimes \langle \beta | _ { j },
\end{equation}
where $\mathrm{span}( \{\ket{s_j}_j\}) = \mathcal{H}_{d_j}$, a physical Hilbert space at site $j$, and $\mathrm{span}(\{\ket{\alpha}_j \}) =\mathcal{H\indices{_{\chi}}}$, a virtual Hilbert space at site $j$.
\end{enumerate}
Since the PEPS prepares the maximally entangled state between \emph{nearest-neighbor} sites and then `reduces' entanglement by a projector at each site, the PEPS formalism is suitable for a ground state of a Hamiltonian dominated by \emph{short-range} interactions.

Although we considered the finite PEPS so far, we can consider an infinitely long PEPS known as the \textbf{infinite PEPS (iPEPS)}\index{infinite projected entangled-pair state}\index{iPEPS}. If one chooses a fundamental unit cell, $ABC$ in an MPS for example, then one can repeat the sequence infinitely many times $\cdots ABCABCABC\cdots$ to get the iPEPS which describes the ground state of the thermodynamic limit of the system.

The PEPS construction is closely related to the idea mentioned in Section \ref{sec:intro-TN}: the condensation of entangled pairs at the RT surface in holography, except that the PEPS do not contain the extra holographic dimension. 
However, the multi-scale entanglement renormalization ansatz (MERA)~\cite{Vidal:2008zz} has an extra holographic dimension representing the coarse graining of information. Such tensor networks on a hyperbolic lattice 
with an extra dimension have been proposed to be toy models of holography~\cite{Swingle:2009bg,Swingle:2012wq,Beny:2011vh,Pastawski:2015qua,Yang:2015uoa,Czech:2015kbp,SinaiKunkolienkar:2016lgg,Hayden:2016cfa,Evenbly:2017hyg,Jahn:2017tls,Bhattacharyya:2017aly,Bao:2018pvs,Qi:2018shh,Milsted:2018san,Steinberg:2020bef,Jahn:2020ukq,Jahn:2021kti} and have facilitated an information-theoretic understanding of holography. Thus, we will examine the formulation of MERA in the next subsection.

\subsection{Multi-scale Entanglement Renormalization Ansatz (MERA)}\label{sec:mera}
The \textbf{multi-scale entanglement renormalization ansatz (MERA)}\index{multi-scale entanglement renormalization ansatz}\index{MERA} is a class of tensor network proposed by Vidal~\cite{Vidal:2008zz,vidal2009entanglement,Corboz_2009,Corboz_2010}. The MERA has an efficient expression for other systems which the MPS nor the PEPS is not suitable for. The MERA is depicted in Fig.\ref{fig:mera}. If one sees this tensor network \emph{from bottom to top}, the original sites in bottom (UV) get renormalized to top (IR) (in the sense of entanglement). In this sense, the MERA is a \textbf{real-space renormalization group flow}\index{real-space renormalization group}. This is why this network is called MERA. The MERA consists of three kinds of tensors: The blue square denotes the \textbf{disentangler}\index{disentangler}, a $\chi^2 \times\chi^2$ two-body unitary matrix
\begin{equation}
\includegraphics[valign=c,height=1.5cm]{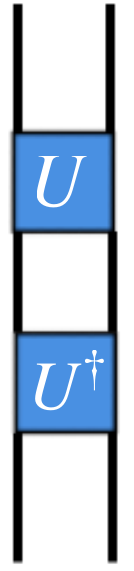}=\includegraphics[valign=c,height=1.5cm]{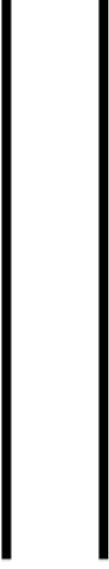}\ ,
\end{equation}
which cuts short-range entanglement between two sites when the MERA is seen from bottom to top\footnote{When the MERA is seen \emph{from top to bottom}, the disentangler is sometimes called the entangler. In this point of view, the MERA can be regarded as a \textbf{quantum circuit}\index{quantum circuit} when the isometries are unitarized with additional ancillae (reference states) $\ket{0}$. The MERA as a quantum circuit entangles a separable input $\ket{0}\otimes\cdots\otimes\ket{0}$ and outputs a nontrivially entangled state. Because of the tensor properties in the MERA, an operator expectation value or a correlation function is efficiently computable, in contrast to the usual quantum circuits, in which the contractions of unitaries within the causal cone of the operator(s) gives an exponential cost~\cite{PhysRevLett.112.240502}.}. The green triangle denotes an \textbf{isometry} (isometric tensor)\index{isometry}
\begin{equation}
\includegraphics[valign=c,height=1.5cm]{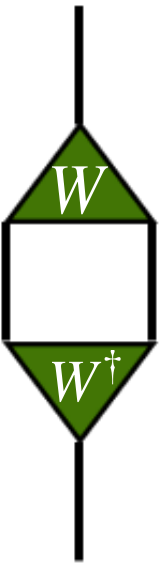}=\includegraphics[valign=c,height=1.5cm]{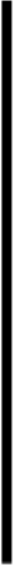}\ ,
\end{equation}
in which three sites are coarse-grained into one effective site. The red circle denotes the \textbf{top tensor}\index{top tensor} on top. From the other local normalization conditions of isometries and disentanglers, the top tensor is also normalized. It is important that all tensors are locally normalized so that the computational cost for the contractions in the calculation of the norm and correlation functions can be efficiently performed. This is contrary to the PEPS, in which the exact contraction takes an exponential cost.

Apparently, the MERA takes the \emph{long-range} entanglement into account by its hierarchical structure. The network is scale invariant in infinite MERA (scale invariant MERA) when all isometries/disentanglers are taken common. In the finite range MERA, the scale invariance breaks down at the correlation length, a special length scale associated with the number of layers. Also, the translational invariance is present by construction.
\begin{figure}[htbp]
\centering
    \includegraphics[width=0.7\linewidth]{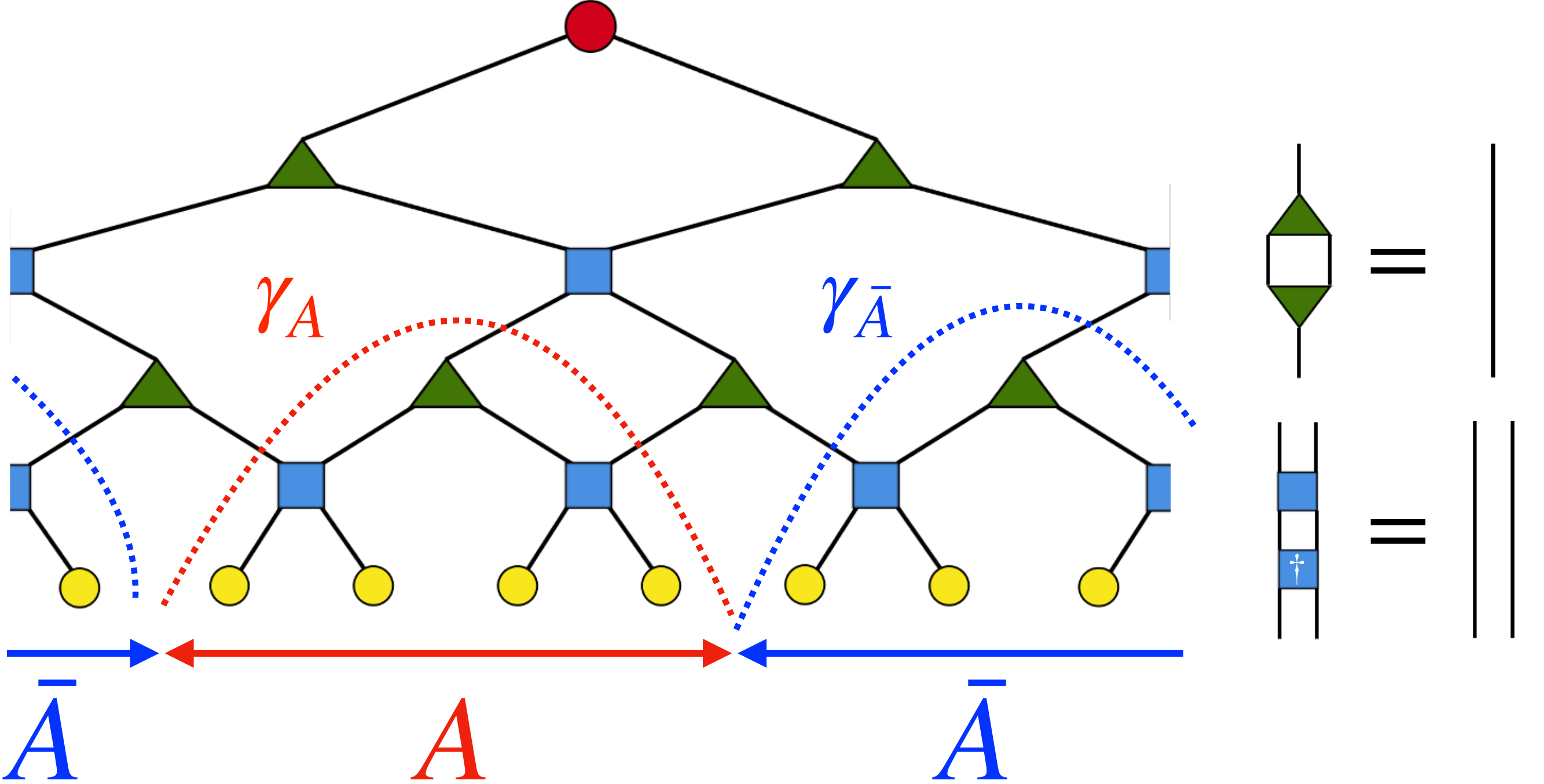}
    \caption{{A %periodic 
    MERA tensor network is composed of binary unitaries (blue squares), isometries (green triangles), and a top tensor (red circle). Yellow circles represent %unspecified 
    physical indices. $A$ and $\bar{A}$ denote a subregion and its complement{, respectively}. For this symmetric bipartition, both $\gamma_A$ and $\gamma_{\bar{A}}$ become minimal bond cut surfaces $\gamma_\ast$.}}
    \label{fig:mera}
\end{figure}

While the MPS and PEPS are efficient tensor network-represented states for a ground state with short-range entanglement, the MERA provides an efficient expression for a ground state with long-range entanglement, e.g. a ground state of a critical or gapless system.

Not only the MERA describes a gapless ground state, the emergent direction in the MERA is appealing. For example, although we deal with a spatially one-dimensional state, the MERA has a logarithmic scale extending vertically (Fig.\ref{fig:mera}). This reminds us of the holographic emergent spacetime and its hyperbolicity in the AdS/CFT.

\section{Entanglement distillation for {MERA}}\label{sec:HED}

In the following sections, we define geometric operations in a tensor network and relate it with entanglement distillation\index{entanglement distillation}. We focus on MERA in this and next sections.
We consider a binary MERA state $\ket{\Psi}$ represented by Fig.\ref{fig:mera}.
%As shown in Fig.\ref{fig:causal-cone}, isometric regions 
{The isometric regions shaded blue in Fig.\ref{fig:causal-cone}}
%{corresponding to a subregion $A$ and its complement $\bar{A}$}
%in the MERA 
are called future or exclusive causal cones~\cite{Vidal:2008zz,Evenbly:2007hxg,evenbly2013quantum,Czech:2015kbp}.
We denote them by {$\mathcal{C}(A)$ for a subregion $A$ and $\mathcal{C}(\bar{A})$ for the complement $\bar{A}$.}
%, respectively. 
{Their} edges %of them 
are denoted by $\gamma_A\equiv\partial\mathcal{C}(A)$ and $\gamma_{\bar{A}}\equiv\partial\mathcal{C}(\bar{A})$. We call the smaller one, a minimal bond cut surface $\gamma_\ast=\min(\gamma_A,\gamma_{\bar{A}})$. 
This surface $\gamma_\ast$ in {MERA} corresponds to the RT surface, a minimal surface in a holographic spacetime. From the PEPS perspective, there are EPR pairs across the surface. 
{Since isometries do not affect entanglement, the EPR pairs carry all of the entanglement of the state if all the projection tensors are isometries.} 
This is true for {a} perfect tensor {network}~\cite{Pastawski:2015qua}, which consists of isometries. %Isometries are maps which do not affect entanglement of the state. 
In contrast, 
{{MERA} has nontrivial projection degrees of freedom carried by each tensor.} 
%{Then,} 
{As a result,} this naive %picture
{view} of EPR pairs across the surface %is 
{becomes} subtle.
%for MERA as it has nontrivial projection degrees of freedom carried by each tensor like PEPS.

When the state is described by a MERA, given the fixed bond dimension $\chi$, entanglement entropy\index{entanglement entropy} satisfies the following inequality:
\begin{equation}
    S(\rho_A)\le (\text{\# of bond cuts by }\gamma_\ast)\times \log\chi.
    \label{eq:TN-EE}
\end{equation}
When \eqref{eq:TN-EE} is saturated, it is interpreted as a discrete version of the RT formula~\eqref{eq:HEE-RT}.

\begin{figure}[h]
    \centering
    \includegraphics[width=0.7\linewidth]{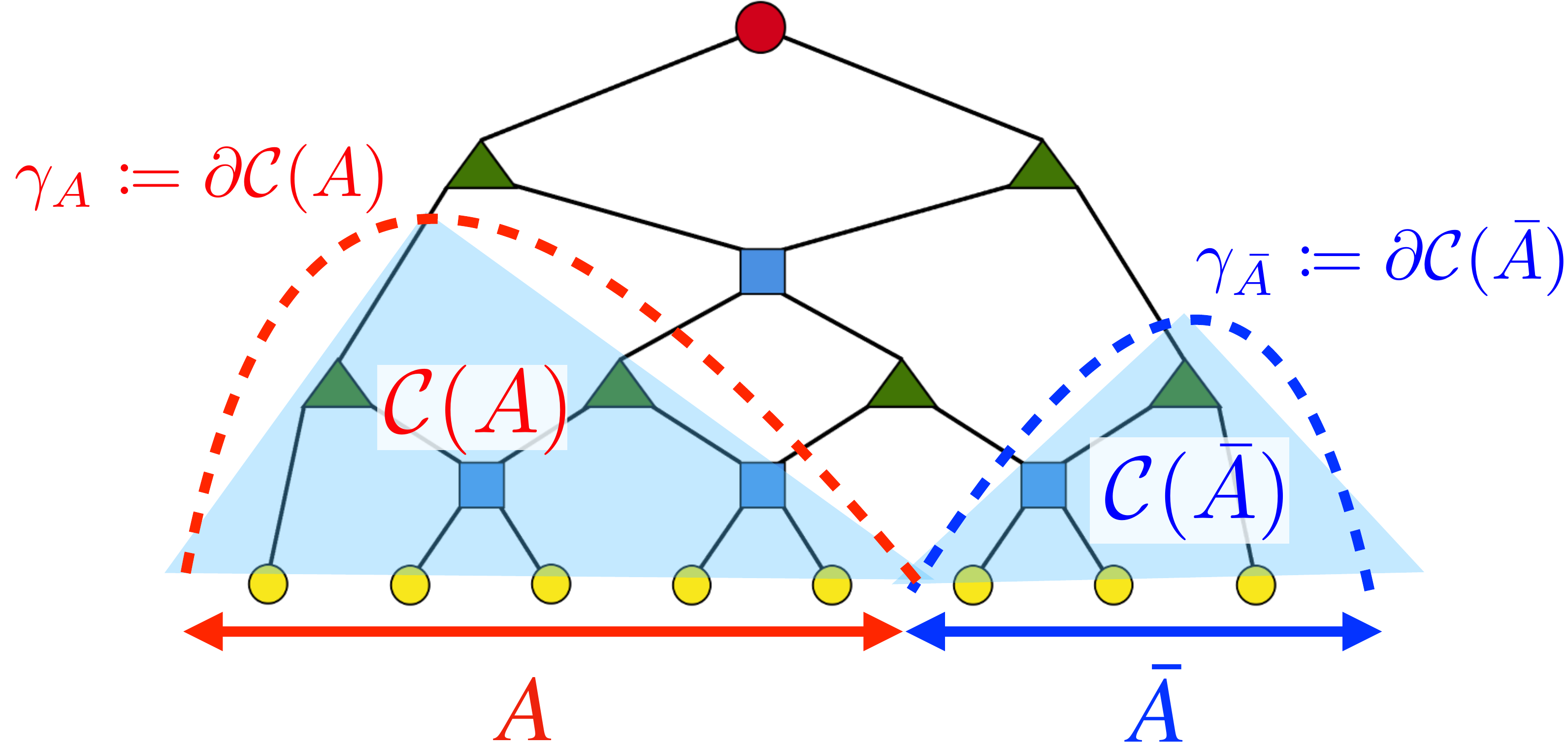}
    \caption{{{In a MERA tensor network}, the future or exclusive causal cone $\mathcal{C}(A)$ [$\mathcal{C}(\bar{A})$] of a subregion $A$ [$\bar{A}$] covers tensors that can %influence 
    {affect} only $A$ [$\bar{A}$] seen from the top to %the 
    {bottom}. The edge of $\mathcal{C}(A)$ [$\mathcal{C}(\bar{A})$] is called a causal cut~\cite{Czech:2015kbp} or a minimal curve~\cite{Swingle:2009bg,Swingle:2012wq} and is denoted by $\partial\mathcal{C}(A)$ [$\partial\mathcal{C}(\bar{A})$]. In the {aforementioned} example, %above, 
    the minimal bond cut surface $\gamma_\ast$ is given by $\gamma_{\bar{A}}$.}}
    %isometryとかの説明してない
    \label{fig:causal-cone}
\end{figure}

\begin{figure*}[t]
    \centering
    \includegraphics[width=\linewidth]{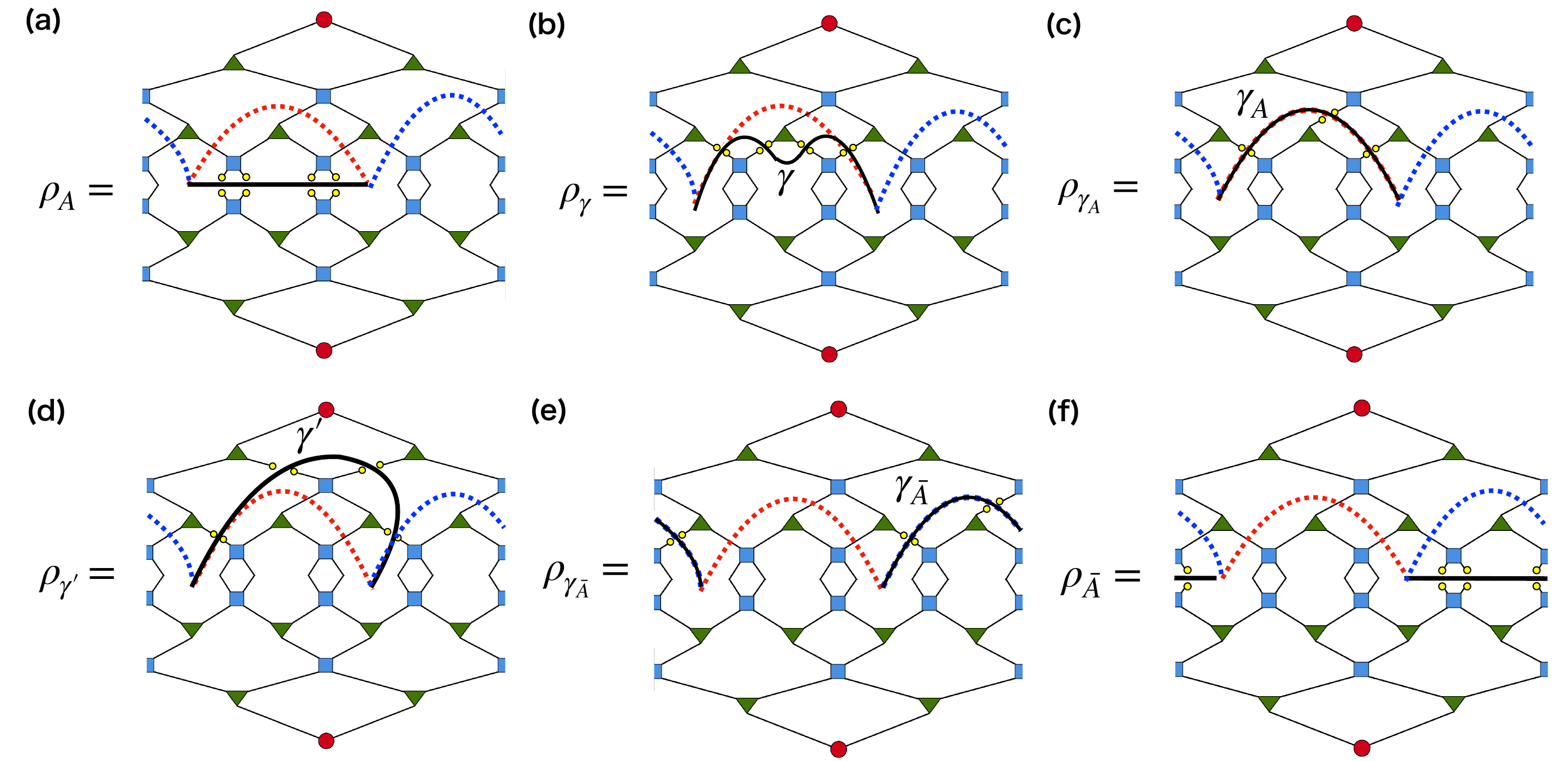}
    \caption{
    Reduced transition matrices corresponding to various foliations. %Yellow circles denote unspecified indices.
    %Given a subregion $A$, there exist two minimal bond cut surfaces $\gamma_A$ (red dashed curve) and $\gamma_{\bar{A}}$ (blue dashed curve) 
    {When the subsystem $A$ is half of the whole system,} there are two minimal bond cut surfaces $\gamma_\ast=\gamma_A, \gamma_{\bar{A}}$.
    (a) Cutting the physical bonds of $A$ in $\braket{\Psi}$ gives $\rho_A$.
    (b) The foliation $\gamma$ is pushed toward $\gamma_A$.
    (c) The foliation equals %to 
    $\gamma_A$.
    (d) The foliation is pushed toward the other minimal bond cut surface $\gamma_{\bar{A}}$.
    (e) The foliation reaches $\gamma_{\bar{A}}$.
    (f) Finally, the foliation cuts the physical bonds in $\bar{A}$ and it gives the reduced density matrix $\rho_{\bar{A}}$.
    %Green curves denote a family of foliations for the tensor network. When the foliation is $\gamma_1$ or $\gamma_6$, the distilled density matrix is given by $\rho_A$ or $\rho_{\bar{A}}$ respectively. This tensor network has two minimal bond cut (RT) surfaces $\gamma_2$ and $\gamma_3$. When we optimize a state using this, we expect the density matrices on the foliations becomes homogeneously close to the maximally mixed state as the foliation approaches to the RT surface.
    }
    \label{fig:HED}
\end{figure*}

%In the following, we present a protocol extracting a state close to the EPR pair from the original MERA state $\ket{\Psi}$ without losing entanglement. We denote a set of linear operators on a Hilbert space $\mathcal{H}$ by $\mathcal{L}(\mathcal{H})$. In step 1, we define an initial state. In step 2, we discuss geometric operations in tensor network. In step 3 and 4, we discuss how it would be related to entanglement distillation by considering a conservation of entanglement and the closeness to the EPR pair.

In the following, we first present a %\blue{natural} 
way to define a state on % a bond cut surface lying on 
{a surface across} internal bonds in {the} MERA. Then, such a state is shown to preserve the amount of entanglement %by appropriately choosing 
{with an appropriate choice of} a family of bond cut surfaces. 
%When \red{the degrees of freedom of the state} decreases as we change the location of the bond cut surface, we call it entanglement distillation. 
{As the minimal bond cut surface has the least number of bonds, we expect {the entanglement} per bond is concentrated to be {maximal}. Thus, we identify pushing a bond cut surface toward the minimal surface as entanglement distillation.} 
We quantify the process by examining the trace distance between each state and an EPR pair. %A set of linear operators on a Hilbert space $\mathcal{H}$ is denoted by $\mathcal{L}(\mathcal{H})$.

%\red{"by appropriately choosing a family of bond cut..." The subject is not human so it's weird.}

%\begin{enumerate}
    %\item 
    Given a MERA state $\ket{\Psi}$ (Fig.\ref{fig:mera}), its reduced density matrix $\rho_A$ for a subregion $A$ is obtained by cutting the physical bonds on $A$ in the norm $\braket{\Psi}$ as shown in Fig.\ref{fig:HED}~(a). 
    
    {In the following,} we consider a deformed surface $\gamma$ from $A$ such that the endpoints are common, $\partial\gamma=\partial A$. {This is a discrete version of the homology condition.} We call such a surface a \textit{foliation}\index{foliation}. {As an initial condition}, we have $\gamma=A$. A minimal bond cut surface $\gamma_\ast$ {equals} %to 
    a foliation with a minimum number of bond cuts, i.e. %. Thus,
    $\dim\mathcal{H}_\gamma\ge\dim\mathcal{H}_{\gamma_\ast}$, {where $\mathcal{H}_\gamma$ is the Hilbert space of bonds across $\gamma$}.
    
    %\item 
    Deforming $\gamma$ from $A$, we obtain a norm $\braket{\Psi}$ with bonds cut on $\gamma$. For example, when we choose a foliation $\gamma$ {as} shown in Fig.\ref{fig:HED} (b), the tensor network defines a reduced transition matrix\index{reduced transition matrix}~\cite{Nakata:2020luh}
    \begin{equation}
        \rho_\gamma = \tr_{\bar{A}}\left(\ket{\Psi(\gamma)} \bra{\Phi(\gamma)} \right)\in \mathcal{L}(\mathcal{H}_\gamma),
        \label{eq:red-trans}
    \end{equation}
    {where $\mathcal{L}(\mathcal{H})$ denotes a set of linear operators on a Hilbert space $\mathcal{H}$.}
    Fig.\ref{fig:ketbra} %gives
    {shows} the states $\ket{\Psi(\gamma)}\in \mathcal{H}_\gamma \otimes \mathcal{H}_{\bar{A}}$ and $\bra{\Phi(\gamma)}\in \mathcal{H}^\ast_\gamma \otimes \mathcal{H}^\ast_{\bar{A}}$. It immediately follows that $\braket{\Phi(\gamma)}{\Psi(\gamma)}=\tr \rho_\gamma=1$ for an arbitrary foliation $\gamma$. 
    $\bra{\Phi(\gamma)}$ and $\left|\Psi(\gamma)\right>$ are created by adding and removing tensors $M_\gamma$ bounded by $A$ and $\gamma$ in the tensor network representation:
    \begin{equation}
        \begin{aligned}
        \bra{\Phi(\gamma)}&=\bra{\Psi} M_\gamma,\\
        M_\gamma \ket{\Psi(\gamma)} &= \ket{\Psi}.
        \end{aligned}
        \label{eq:braket-rel}
    \end{equation}
    For example, if we consider a configuration {shown in} Fig.\ref{fig:HED}~(b), $M_\gamma=U_1\otimes U_2$ where $U_{1,2}$ are {shown} %given 
    in Fig.\ref{fig:ketbra}~(b). 
    
    Using the relation \eqref{eq:braket-rel}, we can show {that} any reduced transition matrices $\rho_\gamma$ have common positive eigenvalues with the original reduced density matrix $\rho_A$. This can be shown as follows. We denote $\tr_{\bar{A}} \left(\ket{\Psi(\gamma)} \bra{\Psi} \right)$ by $S_\gamma$ and the positive eigenvalues and eigenvectors of $\rho_\gamma$ are denoted by $\{\lambda_n\}_n$ and $\{\ket{n}_\gamma\}_n$. Then,
    \begin{equation}
        \rho_\gamma \ket{n}_\gamma = S_\gamma M_\gamma \ket{n}_\gamma = \lambda_n \ket{n}_\gamma.
        \label{eq:gamma}
    \end{equation}
    By multiplying $M_\gamma$ from {the} left, we obtain
    \begin{equation}
        M_\gamma S_\gamma M_\gamma \ket{n}_\gamma = \lambda_n M_\gamma \ket{n}_\gamma,
    \end{equation}
    whereas $M_\gamma S_\gamma = \tr_{\bar{A}} \left(M_\gamma\ket{\Psi(\gamma)} \bra{\Psi} \right)=\rho_A$ from \eqref{eq:braket-rel}. Since \eqref{eq:gamma} is by definition nonzero, $M_\gamma \ket{n}_\gamma \neq 0$. Therefore the positive eigenvalues of $\rho_A$ coincide with those of $\rho_\gamma$ for {an} arbitrary $\gamma$.
    
    %A reduced transition matrix for any foliation $\gamma$ is related to the original reduced density matrix $\rho_A$ via a similarity transformation
    %\begin{equation}
        %\rho_{\gamma}=M_\gamma^{{-1}}\rho_{A}M_\gamma,
    %    \blue{\rho_{\gamma}=M_\gamma^{+}\rho_{A}M_\gamma.}
    %    \label{eq:sim-trf}
    %\end{equation}
    %\blue{Since $\rank \rho_\gamma\neq \rank\rho_A$ in general, $M_\gamma$ is a rectangular matrix and $M_\gamma^{+}$ is a Moore-Penrose pseudo inverse matrix of $M_\gamma$, which is defined by
    %\begin{align*}
    %    M_\gamma^{+}&=\bar{V}\hat{m}^{+}\bar{U}^\dagger\\
    %    \hat{m}^{+}&=\mathrm{diag}(\hat{m}^{-1},0,0,\cdots),
    %\end{align*}
    %where the unitaries $\bar{U},\bar{V}$ and the singular value matrix $\hat{m}$ are defined from the singular value decomposition
    %\begin{equation}
    %    M_\gamma\equiv \bar{U}\hat{m} \bar{V}^\dagger.
    %\end{equation}
    %}
    %where 
    %\blue{In the tensor network representation,} the operators $M_\gamma$ and \blue{$M_\gamma^{+}$} represent addition and removal of tensors to create \blue{$\bra{\Phi(\gamma)}$ from $\bra{\Psi}$ and $\left|\Psi(\gamma)\right>$ from $\left|\Psi\right>$}, respectively. %\footnote{\blue{We emphasize that the inverse alone is not always well-defined. Togather with the tensor network state, its operation is determined as an addition or removal of tensors.}} 
    %For example, if we consider a configuration Fig.\ref{fig:HED}~(b), $M_\gamma=U_1\otimes U_2$ where $U_{1,2}$ are given in Fig.\ref{fig:ketbra}~(b). 

    \begin{figure}[h]
    \centering
    \includegraphics[width=0.5\linewidth]{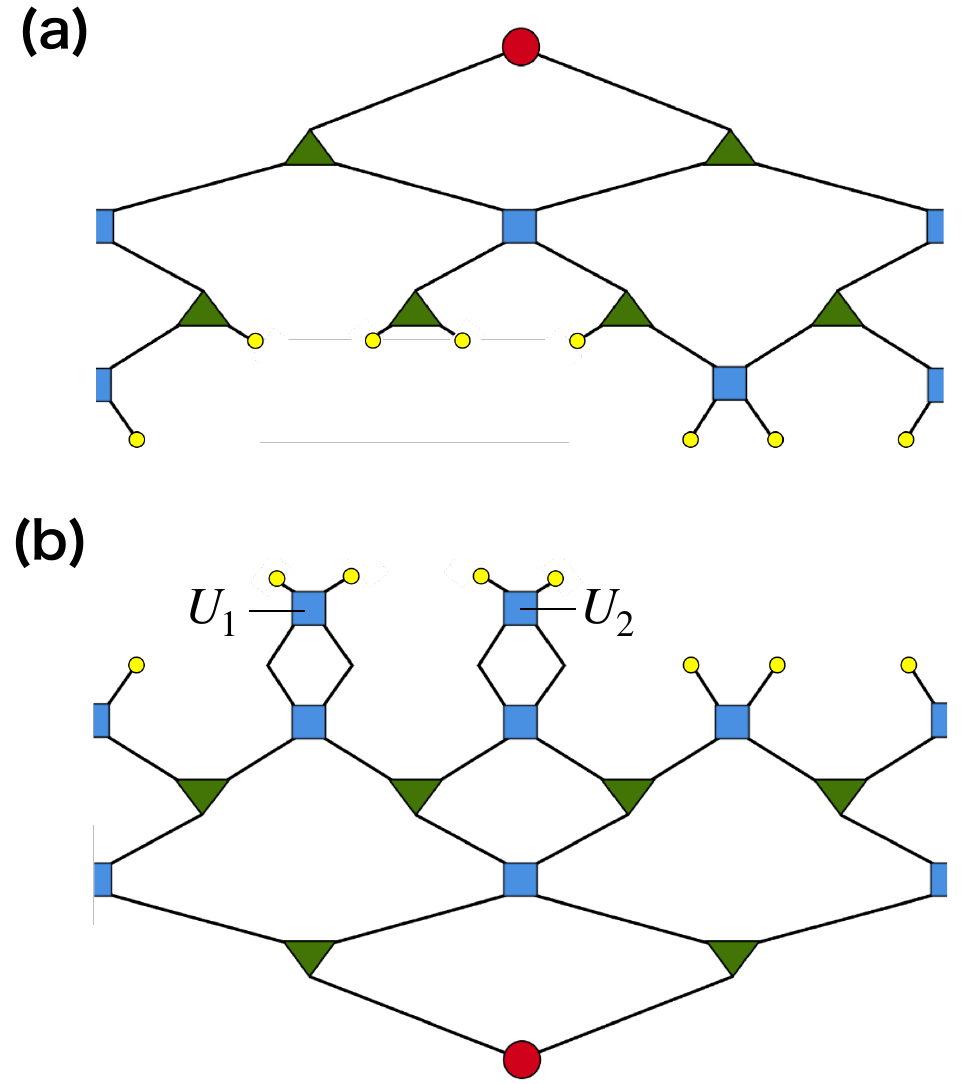}
    \caption{{When the foliation $\gamma$ is chosen as {shown in} Fig.\ref{fig:HED}~(b), $\ket{\Psi(\gamma)}$ is given by (a) and $\bra{\Phi(\gamma)}$ is given by (b). They are related to the original state $\ket{\Psi}$ by either removing or adding tensors $U_1\otimes U_2\in \mathcal{L}(\mathcal{H}_A)$.}}
    \label{fig:ketbra}
    \end{figure}
    
    %\item
    %The equivalence \eqref{eq:sim-trf} implies that any reduced transition matrices $\rho_\gamma$ have common semi-positive eigenvalues. Then, it follows that 
    {Since $\rho_A$ and $\rho_\gamma$ share common positive eigenvalues, it immediately follows that} 
    the von-Neumann entropy of a reduced transition matrix $S(\rho_\gamma)$ %as 
    {known} as \textbf{pseudo entropy}\index{pseudo entropy}~\cite{Nakata:2020luh}, {equals} %to 
    entanglement entropy:
    \begin{equation}
    S(\rho_\gamma)=S(\rho_A),\quad \forall \gamma \quad \mathrm{s.t.}\quad \partial\gamma=\partial A.
    \label{eq:pseudo}
    \end{equation}
    This identity is interpreted in two ways. %First, this 
    {The first} is a %sort 
    {type} of {the} bulk/boundary correspondence\index{bulk/boundary correspondence} like the RT formula. While the right-hand side represents the {entanglement} entropy of the boundary quantum state $\ket{\Psi}$, the left-hand side is given as a function of an operator in the bulk. %Second, this 
    {The second} is interpreted as a \textbf{conservation of entanglement}\index{conservation of entanglement} during the deformation of $\gamma$. From the PEPS perspective, $S(\rho_\gamma)$ effectively counts the amount of entanglement carried by bonds across $\gamma$. Then, the equality \eqref{eq:pseudo} indicates the amount of entanglement across each foliation is retained during the deformation of $\gamma$. 
    
    %\item
    Throughout the procedure, the number of %the 
    {bond} cuts at $\gamma$ changes and it minimizes at a minimal bond cut surface $\gamma_\ast$. Thus, the diluted entanglement over $\ket{\Psi}$ is concentrated into a smaller number of strongly entangled bonds across $\gamma_\ast$. Next, we evaluate the degree of this concentration in terms of the trace distance.
    
%\end{enumerate}

%Previous studies~\cite{Bao:2018pvs,Lin:2020yzf,Yu:2020zwk,Lin:2020ufd} discuss a distilled state {by cutting out a part of a tensor network state} in the light of surface/state correspondence~\cite{Miyaji:2015yva}. In contrast, we consider changing the location of the bond cut surface in the norm instead of a state. Nevertheless, this protocol becomes equivalent to the previous ones by contracting tensors in some cases such as tree~\cite{Bao:2018pvs}, perfect~\cite{Pastawski:2015qua}, or dual-unitary tensor networks~\cite{Bertini:2018fbz,Bertini:2019lgy}.

{Before moving on, let us comment on} the %relation 
%{similarity} and the difference 
{similarities and differences} between our procedure and previous proposals. In our procedure, we define %a state 
{a reduced transition matrix} on each foliation and identify pushing the foliation as entanglement distillation. %Comparing to
{Compared with} the previous studies of entanglement distillation in holography~\cite{Pastawski:2015qua,Bao:2018pvs}, %this 不明確
{pushing the foliation} can be regarded as a %kind 
{type} of \textbf{operator pushing}\index{operator pushing}. In~\cite{Pastawski:2015qua}, an operator pushing of %{some} 
{an} operator $O$ through an isometry\index{isometry} $V_{\mathrm{iso}}$ is defined by
\begin{equation}
    OV_{\mathrm{iso}}=V_{\mathrm{iso}}\Tilde{O},
\end{equation}
where $\Tilde{O}=V^\dagger_{\mathrm{iso}} O V_{\mathrm{iso}}$. %\blue{Meanwhile,} 
{While $O$ is usually state-independent,} 
in our procedure, the pushed operator is %now a state itself
{the reduced transition matrix defined from the state. The mapping between two reduced transition matrices $\rho_\gamma$ and $\rho_{\gamma^\prime}$ on the foliations $\gamma$ and $\gamma^\prime$ respectively is an operator pushing, i.e.}%:
\begin{equation}
    \rho_\gamma M = M \rho_{\gamma^\prime},
\end{equation}
where %$M=M_\gamma M_{\gamma^\prime}^{-1}$ from \eqref{eq:sim-trf}. 
{$M$ represents tensors bounded by $\gamma$ and $\gamma^\prime$.} 
{Although our procedure can be interpreted as a %kind 
{type} of operator pushing, one} %the
important difference is that $M$ is not necessarily isometric while $V_{\mathrm{iso}}$ was assumed to be isometric or unitary. 
{
This difference arises because our procedure deals with a reduced transition matrix rather than a state vector. For a state vector, the only operations %which 
{that} preserve entanglement entropy are isometry and unitary {ones}. This requirement severely restricts possible tensor network states. They must be composed of perfect tensors\index{perfect tensors}~\cite{Pastawski:2015qua} or dual unitaries\index{dual unitary}~\cite{Bertini:2018fbz,Bertini:2019lgy} or isometric tree tensor networks\index{tree tensor network}~\cite{Bao:2018pvs,Lin:2020yzf,Yu:2020zwk,Lin:2020ufd}. Such states can be distilled by removing the composing tensors %known as 
{by applying} a \textbf{greedy algorithm}\index{greedy algorithm}. %greedy algorithm 
%On the other hand,
In our procedure, we deal with a reduced transition matrix. The operations %which 
{that} preserve entanglement \eqref{eq:pseudo} are not limited to isometries. In this way, we can consider entanglement distillation\index{entanglement distillation} using reduced transition matrices on various bond cut surfaces in a more general tensor network like {MERA}, which has nonisometric $M$.} %Remarkably,} 
%Since one can remove the isometric degrees of freedom from the state vector as much as possible under LOCC, a complete distillation of EPR pairs across the minimal bond cut surface is easy to see in perfect tensor networks~\cite{Pastawski:2015qua} or at least the constituting tensors are dual unitaries~\cite{Bertini:2018fbz,Bertini:2019lgy}, i.e. unitaries which remain unitary even after a partial transposition. 
%This is called a greedy algorithm in~\cite{Pastawski:2015qua}. However, a general tensor networks on a hyperbolic lattice like MERA has nonisometric tensors as well. Then, the greedy algorithm only works partially. 
%In our approach, instead of deforming a state vector~\cite{Miyaji:2015yva,Bao:2018pvs,Lin:2020yzf,Yu:2020zwk,Lin:2020ufd}, we consider deforming a reduced density matrix to a reduced transition matrix by pushing a foliation inside the tensor network. 
{This} enables us to consider a state on an arbitrary bond cut surface even beyond the region a greedy algorithm can reach (called a \textbf{bipartite residual region}\index{bipartite residual region}~\cite{Pastawski:2015qua} or \textbf{causal shadow}\index{causal shadow}~\cite{Lewkowycz:2019xse} in the literature) while retaining the amount of entanglement $S(\rho_\gamma)$.
%When inside causal cone, it is equivalent to greedy algorithm.
% Or if the constituting tensors are made from dual-unitary tensors~\cite{Bertini:2018fbz,Bertini:2019lgy},

To evaluate how much entanglement is distilled from the original state $\ket{\Psi}$, we should quantify the closeness of a properly defined state across $\gamma$ to a maximally entangled state (the EPR pair).
However, since $\rho_\gamma$ is an operator, we cannot compare it with the EPR state %straightforwardly
{directly}. Thus, we define a distilled state on $\gamma$ as a state vector in $\mathcal{H}_\gamma\otimes\mathcal{H}_\gamma $ using the same idea with the purification (known as the \textbf{Choi state}\index{Choi state}),
\begin{equation}
    \ket{\rho_\gamma^{1/2}}\equiv {\mathcal{N}_\gamma} \sqrt{\dim\mathcal{H}_\gamma}
    (\rho_\gamma^{1/2}\otimes \mathbf{1})
    \ket{\mathrm{EPR}_\gamma},
    \label{eq:purif}
\end{equation}
where $\mathcal{N}_\gamma=\Big[{\tr(\rho_\gamma^{\dagger\, 1/2}\rho_\gamma^{1/2})}\Big]^{-1/2}$ and 
$\ket{\mathrm{EPR}_\gamma}=({\dim\mathcal{H}_\gamma})^{-1/2} \sum_{i=1}^{\dim\mathcal{H}_\gamma} \ket{i}\otimes\ket{i}$.
Then, we can define the closeness between the distilled {and EPR states} %state and the EPR state 
as the \textbf{trace distance}\index{trace distance} between them:
\begin{equation}
    D_\gamma\equiv \sqrt{1-\big|{\langle \mathrm{EPR}_\gamma | \rho_\gamma^{1/2}\rangle}\big|^2}.
    \label{eq:trace_distance}
\end{equation}
%Based on 
{On the basis of} this trace distance, we propose that the minimal bond cut surface $\gamma_\ast$ provides entanglement distillation such that $\ket{\rho_{\gamma}^{1/2}}$ becomes closest to the EPR pair $\ket{\mathrm{EPR}_\gamma}$ among other foliations $\gamma$.

For a later discussion, let us further rewrite \eqref{eq:trace_distance}. 
First, $\rho_A$ is represented by
\begin{equation}
    [\rho_A]_{IJ}=\sum_{\alpha^\prime=1}^r S_{I\alpha^\prime} \sigma_{\alpha^\prime}^2 S^\dagger_{\alpha^\prime J},
    \label{eq:svd-rho}
\end{equation}
where $S$ is an isometry, $\sigma$ is a singular value matrix\index{singular value matrix} of $\ket{\Psi}$, and $r$ is the Schmidt rank. Note that $r\le \dim\mathcal{H}_{\gamma_\ast}$. %Then, by combining \eqref{eq:svd-rho} with \eqref{eq:sim-trf}, we obtain
%\begin{equation}
%    \blue{\rho_\gamma=M_\gamma^{+} S\sigma^2 S^\dagger M_\gamma.}
%\end{equation}
%Using this, 
{Then, as the positive eigenvalues are common between $\rho_\gamma$ and $\rho_A$,} 
the inner product in $D_\gamma$ can be written as
\begin{align}
    \langle \mathrm{EPR}_\gamma | \rho_\gamma^{1/2}\rangle 
    %&=\frac{\mathcal{N}_\gamma}{\sqrt{\dim\mathcal{H}_{\gamma}}}
    %\tr \left(M_\gamma S\sigma S^\dagger M_\gamma^{-1} \right) \nonumber\\
    &=\frac{\mathcal{N}_\gamma}{\sqrt{\dim\mathcal{H}_{\gamma}}}
    \tr \rho_\gamma^{1/2} \nonumber\\
    &=\frac{\mathcal{N}_\gamma}{\sqrt{\dim\mathcal{H}_{\gamma}}}
    \sum_{\alpha^\prime=1}^r \sigma_{\alpha^\prime} \label{eq:trace-dist}\\
    &\le \frac{\mathcal{N}_\gamma}{\sqrt{\dim\mathcal{H}_{\gamma}}} r^{1/2} \sqrt{\sum_{\alpha^\prime=1}^r \sigma_{\alpha^\prime}^2} \nonumber\\
    &=\mathcal{N}_\gamma\sqrt{\frac{r}{\dim\mathcal{H}_{\gamma}}}.
    \label{eq:tr-dist-ineq}
\end{align}
The last line comes from the normalization $\tr\rho_A=1$. The inequality is saturated only when $\sigma\propto \mathbf{1}$.
We can further rewrite \eqref{eq:trace-dist} in terms of the $n$-th R\'enyi entropy\index{R\'enyi entropy}
\[
S_n\equiv \frac{1}{1-n}\log\tr\rho_A^n =\frac{1}{1-n}\log \sum_{\alpha^\prime=1}^r \sigma_{\alpha^\prime}^{2n}.
\]
Since $S_{1/2}=2\log \sum_{\alpha^\prime=1}^r \sigma_{\alpha^\prime}$, \eqref{eq:trace-dist} is rewritten as
\begin{equation}
    \big|{\langle \mathrm{EPR}_\gamma | \rho_\gamma^{1/2}\rangle}\big|^2=\frac{\mathcal{N}_\gamma^2}{{\dim\mathcal{H}_{\gamma}}}e^{{S_{1/2}}}.
    \label{eq:trace-dist-renyi}
\end{equation}
In any cases, the $\gamma$-dependence in $D_\gamma$ only appears through $\mathcal{N}_\gamma$ and $\dim\mathcal{H}_\gamma$.

When $\gamma\subset \mathcal{C}(A)\cup \mathcal{C}(\bar{A})$, $M_\gamma$ is either %{an} isometry or unitary
isometric or unitary. {Thus, we can apply a standard greedy algorithm\index{greedy algorithm} %applies
in this case. Since the tensors inside the causal cones are reduced to {an} identity after contractions,  a removal of tensors in $\mathcal{C}(A)\cup \mathcal{C}(\bar{A})$ from a state is equivalent to pushing the foliation in $\mathcal{C}(A)\cup \mathcal{C}(\bar{A})$. This means we can perform entanglement distillation {that is} perfectly consistent with the previous proposals. Let us see this from the view point of the trace distance.} By contracting isometries and unitaries in the MERA, this %implies
{indicates that}
    \begin{equation}
        \rho_\gamma^\dagger=\rho_\gamma
        \Rightarrow\mathcal{N}_\gamma=1.
        \label{eq:density-matrix}
    \end{equation}

From these expressions, the following statements can be derived for $\forall \gamma\subset \mathcal{C}(A)\cup\mathcal{C}(\bar{A})$. 
First, the inner product \eqref{eq:trace-dist} monotonically increases as we push $\gamma$ toward a minimal bond cut surface $\gamma_\ast$. This is because $\log\dim\mathcal{H}_\gamma$ is proportional to the number of bonds cut by $\gamma$. Then, from the definition \eqref{eq:trace_distance}, the trace distance monotonically decreases
\begin{equation}
    D_{\gamma^\prime} - D_\gamma < 0
    \label{eq:trace-dist-diff}
\end{equation}
as we push $\gamma$ to $\gamma^\prime$ toward a minimal bond cut surface $\gamma_\ast$. 
Second, 
\begin{equation}
    \gamma\neq\gamma_\ast \Rightarrow D_\gamma>0
    \label{eq:trace-pos}
\end{equation}
since from \eqref{eq:tr-dist-ineq}
\begin{equation}
    \langle \mathrm{EPR}_\gamma | \rho_\gamma^{1/2}\rangle \le \sqrt{\frac{r}{\dim\mathcal{H}_\gamma}}<1,
\end{equation}
where we used $r\le\dim\mathcal{H}_{\gamma_\ast}<\dim\mathcal{H}_\gamma$. The first statement \eqref{eq:trace-dist-diff} supports distilling a state closer to the EPR state by pushing $\gamma$ toward $\gamma_\ast$. The second statement \eqref{eq:trace-pos} indicates that we cannot have $D_\gamma=0$ (distillation of the EPR pair) unless $\gamma=\gamma_\ast$. The vanishing trace distance is equivalent to either a flat entanglement spectrum
\begin{equation}
    r=\dim\mathcal{H}_{\gamma_\ast}\qq{and}\sigma\propto\mathbf{1}
    \label{eq:schmidt}
\end{equation}
from \eqref{eq:tr-dist-ineq} or 
\begin{equation}
    S_{1/2}=\log\dim\mathcal{H}_{\gamma_\ast}
    \label{eq:renyi-1/2}
\end{equation}
from \eqref{eq:trace-dist-renyi}, which is expected from holography~\cite{Bao:2018pvs}.

\section{Numerical results for {random MERA}}\label{sec:HED-num}
In this section, we demonstrate the %above 
{aforementioned} procedure of entanglement distillation in a particular MERA called the random MERA by again calculating a trace distance. The \textbf{random MERA}\index{random MERA} is prepared with \textbf{Haar random unitaries}\index{Haar random unitaries} {$U^{\alpha\beta}_{\gamma\delta}$}, where each index runs from $1$ to $\chi$. Isometries are given by $W^\alpha_{\beta\gamma}=U^{\alpha\, 1}_{\beta\gamma}$ and the top tensor is given by $T_{\alpha\beta}=U^{11}_{\alpha\beta}$.

The random MERA\index{random MERA} is particularly suitable to %examine 
verify our {proposal of entanglement distillation}. Preceding studies have pointed out that a random tensor network can saturate \eqref{eq:TN-EE} in the large bond dimension limit, realizing a discrete version of the RT formula~\cite{Hayden:2016cfa,Swingle:2012wq,Qi:2018shh,Kudler-Flam:2019wtv}. %We would like to see if our method can really 
{The goal of our method is to} extract pure EPR pairs on $\gamma_\ast$ in this limit as expected from holography, {and more importantly, whether this way of entanglement distillation really works even in a finite bond dimension, which is less trivial}.

Numerical calculations have been done for $8$-site and $16$-site random MERAs with bond dimension $\chi$. {For the $8$-site MERA,} we choose a subregion $A$ and foliations as shown in Fig{s}.\ref{fig:mera} and %Fig.
\ref{fig:HED}. 
{For the $16$-site MERA, we choose $6$-site and $8$-site subregions and foliations at their minimal bond cut surfaces.} 
Then, we calculate the trace distance \eqref{eq:trace_distance} to %look at 
{investigate} the closeness between each state and the EPR state. We change the value of $\chi$ to see the %tendency 
{trend} of distillation in the large-$\chi$ limit. Tensor network contractions were performed using quimb\index{quimb}~\cite{Gray2018} and cotengra\index{cotengra}~\cite{Gray2021}.

\begin{figure}[h]
    \centering
    \includegraphics[width=0.7\linewidth]{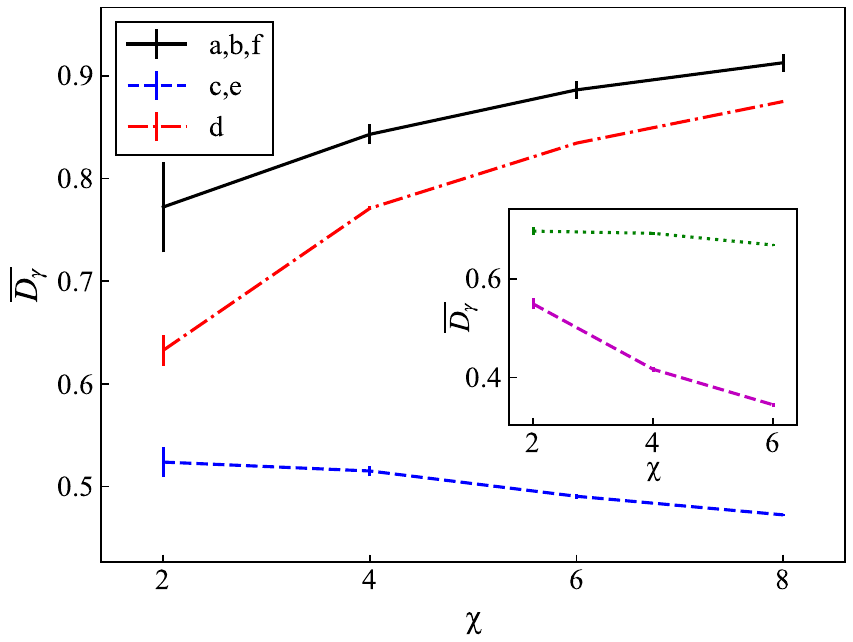}
    \caption{%The 
    {Random}-averaged trace distance  $\overline{D_\gamma}$ %between $\ket{\rho_\gamma^{1/2}}$
    %and the EPR state in the same Hilbert space on
    for each foliation in Fig.\ref{fig:HED} with respect to the bond dimension $\chi$ for the $8$-site random MERA. The inset is $\overline{D_\gamma}$ on the minimal bond cut surface for the $16$-site random MERA (green dotted line with an $8$-site subregion and purple dashed line with a $6$-site subregion).  %between $\ket{\rho_\gamma^{1/2}}$ associated to each foliation $\gamma$ in Fig.\ref{fig:HED} and the EPR state of the same rank.
    }
    \label{fig:trace_distance}
\end{figure}

Fig.\ref{fig:trace_distance} shows the random-averaged trace distance $\overline{D_\gamma}$ for each foliation $\gamma$ in the $8$-site random MERA. Each $\overline{D_\gamma}$ is calculated using {ten} samples. The trace distances for the states on %the 
foliation{s} (a) and (b) are the same due to the equivalence up to a unitary transformation on $\rho_\gamma$. The distances for the states on (a) and (f) are also the same %due to the reflection symmetry. 
{as $A$ and $\bar{A}$ are complement to each other in the pure state.}
It is the same for (c) and (e) {which are related via a common unitary transformation from (a) and (f), respectively}. The state on %the 
{foliation} (d) corresponding to neither $\rho_{A}$, $\rho_{\bar{A}}$ nor $\rho_{\gamma_\ast}$ has a trace distance in between others. {Note that any greedy algorithms can never reach %the 
{foliation} (d) but our method enables us to compute the trace distance even for such a case in a well-defined manner.} 
We can see the foliation  $\gamma=\gamma_\ast$ (c,e) exhibits the smallest trace distance among all the foliations for bond dimensions from 2 to 8. %This tendency agrees with the previous observation in the large-$\chi$ 
The trace distances for (c,e) monotonically decrease as the bond dimension increases, which is consistent with~\cite{Hayden:2016cfa}. %This tendency is
{These %tendencies
{trends} are} also seen in the situation of the $16$-site random MERA (Fig.\ref{fig:trace_distance} inset). This indicates this distillation procedure succeeds on the minimal bond cut surfaces $\gamma_\ast$ for each bond dimension {even when the bond dimension is not large}. 
%\blue{Foliations other than the minimal bond cut surfaces have clearly distinct trends for their trace distances. Those trace distances increase} 
{However,}
%On the other hand, 
the trace distance on the other foliations increases 
as we increase the bond dimension. In this way, the minimal bond cut surface can be characterized from the perspective of distillation.

The behavior in the large-$\chi$ limit can be analytically understood as follows. The previous study~\cite{Hayden:2016cfa} found
\begin{equation}
    \lim_{\chi\rightarrow\infty}\overline{S_n}=\log\dim\mathcal{H}_{\gamma_\ast}
\end{equation}
for a non-negative integer. Assuming its analytical continuation to $n=1/2$ 
\begin{equation}
    \lim_{\chi\rightarrow\infty}\overline{S_{1/2}}=\log\dim\mathcal{H}_{\gamma_\ast},
\end{equation}
holds as expected from holography~\cite{Hayden:2016cfa}, \eqref{eq:trace-dist-renyi} and the Jensen's inequality leads
\begin{align}
    \lim_{\chi\rightarrow\infty}\overline{\big|{\langle \mathrm{EPR}_{\gamma_\ast} | \rho_{\gamma_\ast}^{1/2}\rangle}\big|^2} &=
    \lim_{\chi\rightarrow\infty}\frac{1}{\dim\mathcal{H}_{\gamma_\ast}}\overline{\exp(S_{1/2})} \nonumber\\
    &\ge\lim_{\chi\rightarrow\infty}\frac{1}{\dim\mathcal{H}_{\gamma_\ast}}\exp({\overline{S_{1/2}}})
    = 1.
\end{align}
Since the inner product between normalized states is at most one, we can conclude the distilled state approaches the EPR state for a large bond dimension. Even at a finite $\chi$, the existence of a gap between the distance for $\gamma=\gamma_\ast$ and others is consistent with \eqref{eq:trace-dist-diff}.

\section{Entanglement distillation for MPS}\label{sec:HED-MPS}
Numerically we have seen our distillation method indeed works for the random MERA. To look for a possible extension to other classes of tensor networks{,} %than MERA, let us 
{we} focus on {MPS}, which belongs to a different criticality from {MERA}.

\begin{figure}[h]
    \centering
    \includegraphics[width=0.7\linewidth]{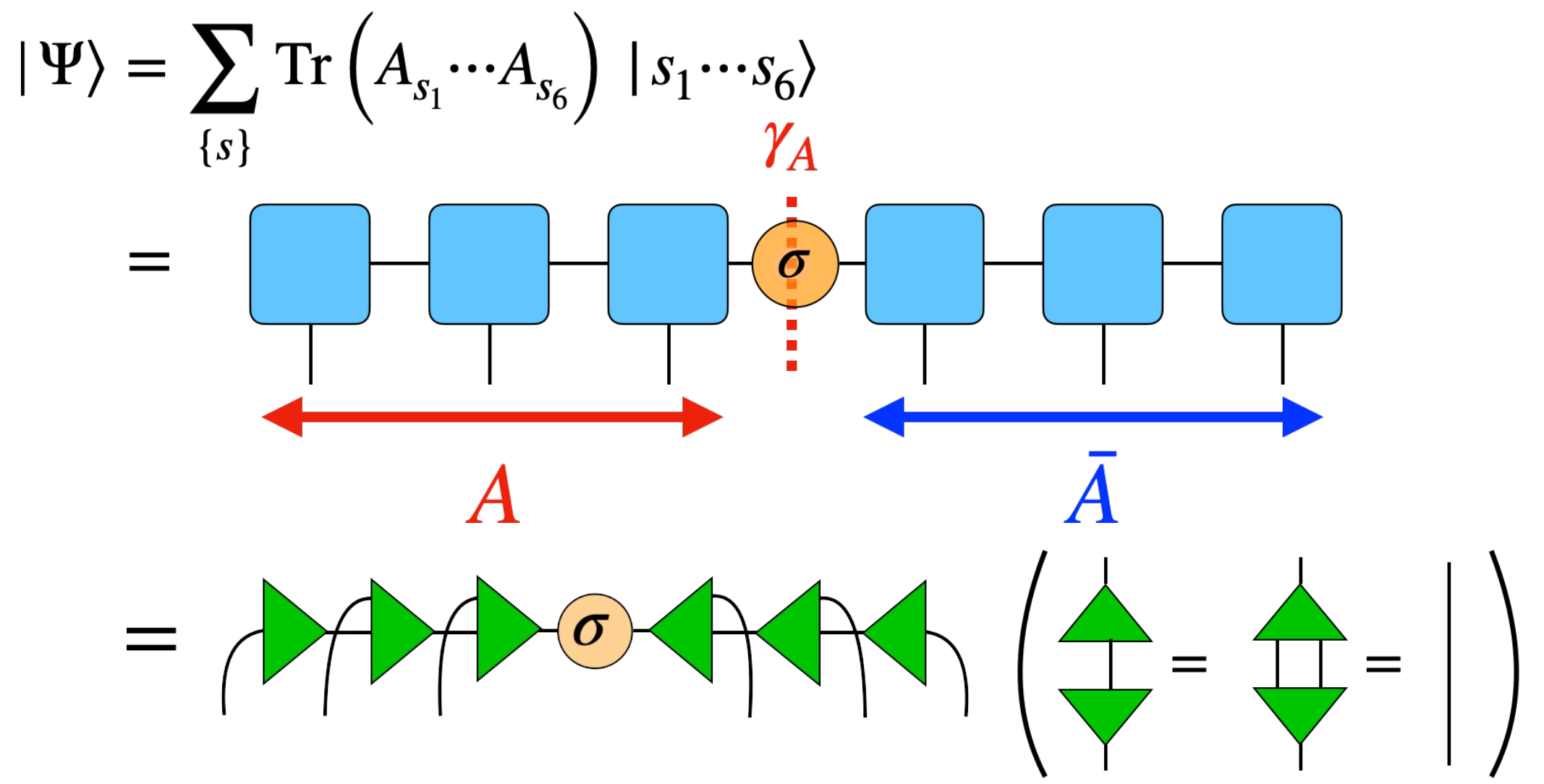}
    \caption{A matrix product state $\ket{\Psi}$ is shown in the first and second lines. The third line represents its mixed canonical form~\cite{SCHOLLWOCK201196} via a successive singular value decomposition for each matrix. Every matrix in the form is isometric (green triangle) and the singular value matrix $\sigma$ is placed in the center. The boundary between $A$ and $\bar{A}$ is denoted by $\gamma_A$.}
    \label{fig:MPS}
\end{figure}

%The 
{An} MPS\index{matrix product state} with open boundaries is shown in {the first and second lines in} Fig.\ref{fig:MPS}. For simplicity, we focus on the case when the subregion $A$ is on the left and its complement $\bar{A}$ is on the right. The boundary between $A$ and $\bar{A}$ is denoted by $\gamma_A$. The bond dimension for internal bonds is denoted by $\chi$. Since we can always transform an MPS in a mixed canonical form, the third line of Fig.\ref{fig:MPS} follows. In the mixed canonical form, every matrix is isometric except for the singular value matrix\index{singular value matrix} $\sigma$.

\begin{figure}[h]
    \centering
    \includegraphics[width=\linewidth]{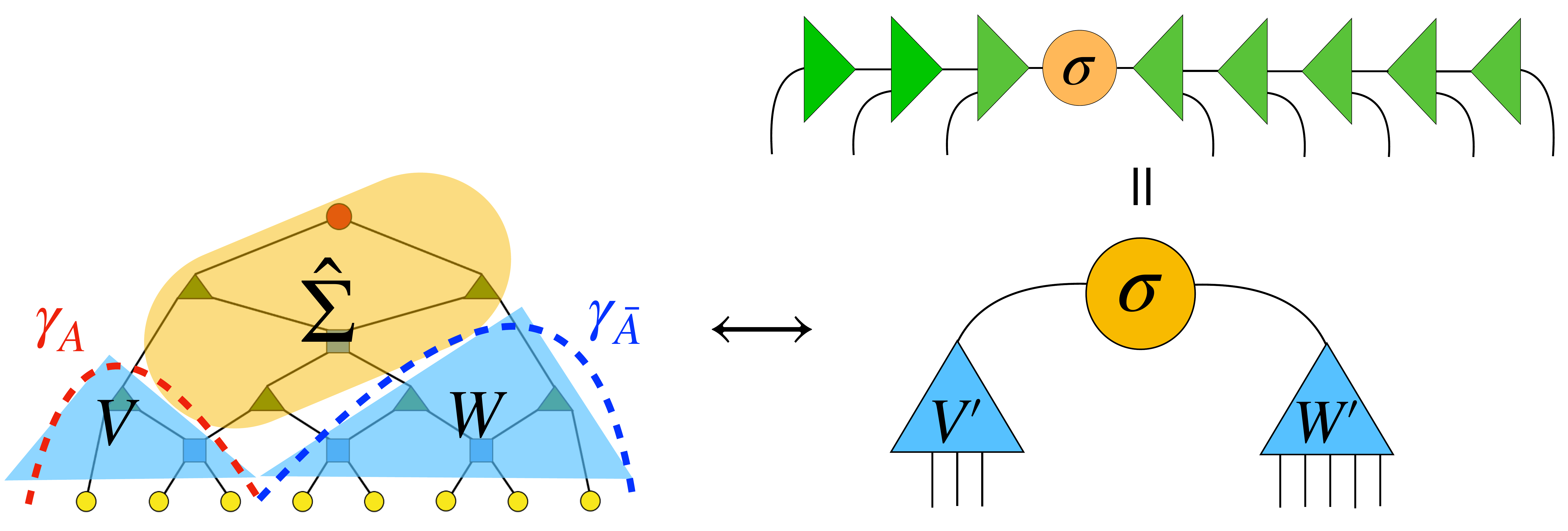}
    \caption{{MERA} can be divided into two isometries, $V$ and $W$, and the %rest
    {remaining} $\hat{\Sigma}${.} %\red{are}
    {This structure of MERA is} in analogy with {that of} MPS, whose isometry on the left (right) of $\sigma$ is collectively denoted by $V^\prime$ ($W^\prime$). {The lower right tensor network is equivalent to a so-called one-shot entanglement distillation tensor network discussed in~\cite{Bao:2018pvs,Lin:2020ufd}.}}
    \label{fig:MERA_SVD}
\end{figure}

Fig.\ref{fig:MERA_SVD} shows a structural similarity between {MPS} {in the form} and {MERA}. The isometric parts of {the} MPS, $V^\prime,W^\prime$, correspond to {those} of {the} MERA, $V,W$, in $\mathcal{C}(A),\mathcal{C}(\bar{A})$. The singular value matrix $\sigma$ in {the} MPS corresponds to $\hat{\Sigma}$ in {the} MERA (or {the} Python's lunch in a holographic context~\cite{Brown:2019rox}). From this {view}point, % of view, 
the MPS is not only another class of tensor networks than MERA, but a simpler model sharing a common isometric structure with {the} MERA.
In the following, we will consider the MPS analogue of the entanglement distillation in {MERA} and compare the results between {the two}. %MPS and MERA.

\begin{figure}
    \centering
    \includegraphics[width=0.5\linewidth]{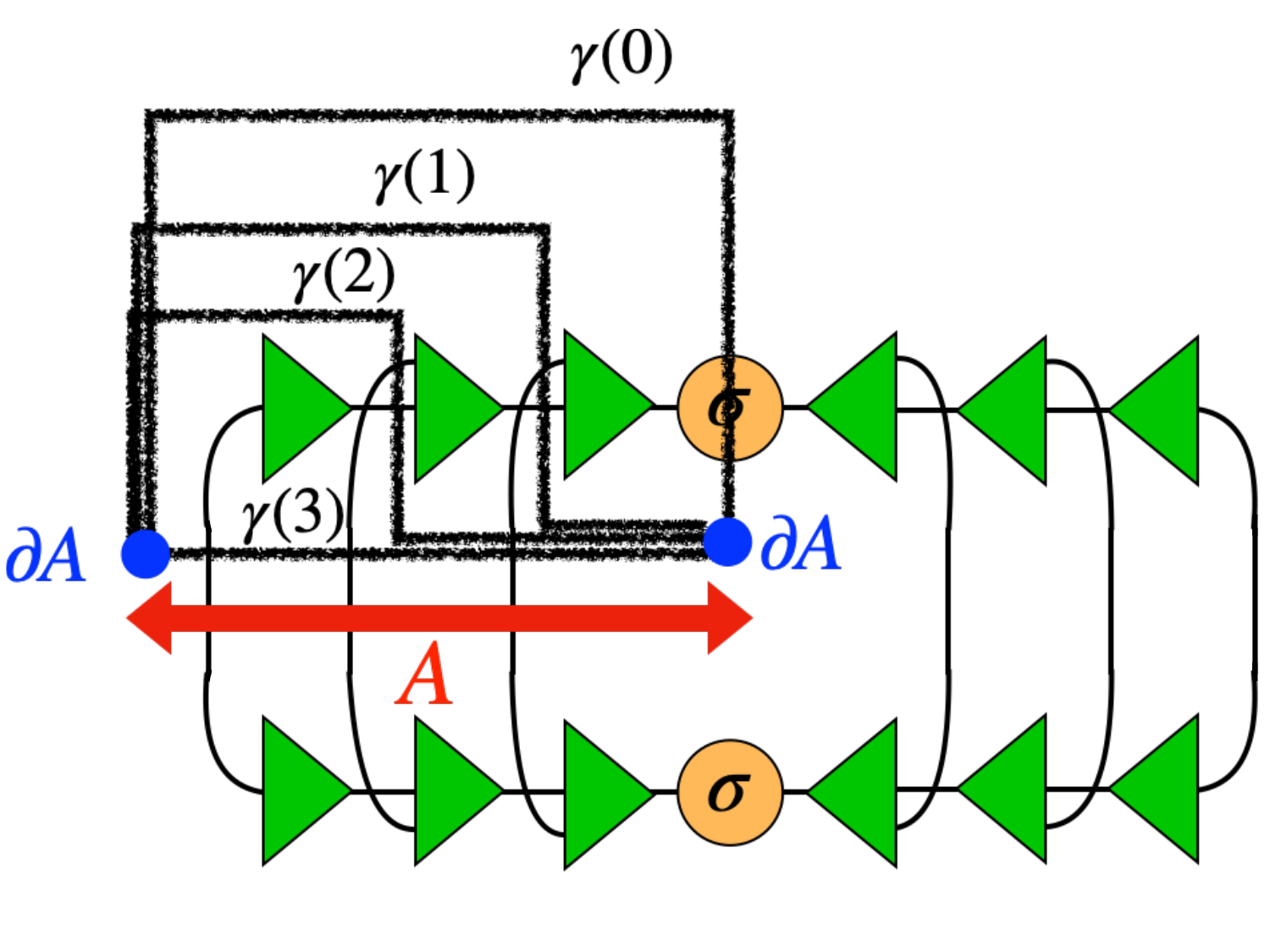}
    \caption{Foliations {interpolating the boundary subregion $A$ and the minimal bond cut surface $\gamma_A$} are denoted by $\{\gamma(\tau)\}$, where $\tau$ is an integer parametrizing each foliation.}
    \label{fig:MPS_fol}
\end{figure}

Through the correspondence in Fig.\ref{fig:MERA_SVD}, we can consider foliations\index{foliation} in {MPS} similar to %that 
{those} within $\mathcal{C}(A)$ in {MERA}. Fig.\ref{fig:MPS_fol} shows a family of foliations {$\{\gamma(\tau)\}_\tau$} in {the} MPS such that their endpoints are always fixed at the boundary of the subregion $\partial A$. Then, $\gamma_A$ can be characterized as a minimal bond cut surface, a foliation that cuts the minimum number of bonds in $\braket{\Psi}$. The foliations are chosen so that the location of the internal bond cut becomes monotonically closer to the minimal bond cut surface $\gamma_A$. The number $\tau$ specifies the number of matrices between the foliation and $\gamma_A$, parametrizing each foliation $\gamma(\tau)$. The previous discussion for the trace distance $D_\gamma$ only relies on the diagonalization of $\rho_A$ and a similarity transformation between $\rho_\gamma$. Thus, \eqref{eq:trace-dist-renyi} is %not restricted to MERA but is 
{also} {applicable} to {MPS}. %as well. 
Given the R\'enyi-$1/2$ entropy\index{R\'enyi entropy} $S_{1/2}$ of $\rho_A$, the trace distances\index{trace distance} are
\begin{align}
    D_{\gamma(\tau)}&=\sqrt{1-\frac{e^{S_{1/2}}}{\chi^{\tau+1}}},\quad \tau=0,1,2\\
    D_{\gamma(3)}&=D_{\gamma(2)}
\end{align}
since $\dim\mathcal{H}_{\gamma(\tau)}=\chi^{\tau+1}$ for $\tau=0,1,2$ and $\dim\mathcal{H}_{\gamma(3)}=\dim\mathcal{H}_{\gamma(2)}$. From this, it is apparent that the distance $D_\gamma$ decreases as the foliation approaches $\gamma_A$, i.e. $\tau$ decreases.
However, for the MPS case, it is more explicit to check {entanglement} distillation by following a state on each foliation.
The resulting reduced transition matrix on $\gamma(\tau)$ is represented by Fig.\ref{fig:MPS_red}. It can be written as 
\begin{equation}
    \rho_{\gamma{(\tau)}}= V_{\gamma{(\tau)}} \sigma^2 V_{\gamma{(\tau)}}^\dagger
\end{equation}
using an isometry $V_{\gamma{(\tau)}}$ composed of $\tau$ layers of isometries. Since only isometries act on the singular value matrix, the entanglement spectrum does not change {while the size of each $\rho_{\gamma(\tau)}$ decreases} during the distillation, which is in accordance with the necessary condition for entanglement distillation. When the foliation is a minimal bond cut surface ($\tau=0$), $V_\gamma$ becomes an identity matrix. This %concludes 
{indicates that} the distilled state via our procedure becomes diagonal $\sum_{\alpha=1}^\chi \sigma_\alpha \ket{\alpha\alpha}$ by removing isometric, redundant {degrees of freedom} from the original state.
In particular, the EPR state $\ket{\mathrm{EPR}_\chi}$ is distilled on $\gamma_A$ whenever $\sigma\propto\mathbf{1}$. Examples of such a state described by {the} MPS includes the {thermodynamic limit of the} valence-bond-solid state, i.e. the ground state of a gapped Hamiltonian called {the} AKLT model~\cite{Affleck:1987cy,Affleck:1987vf}. 

\begin{figure}
    \centering
    \includegraphics[width=0.4\linewidth]{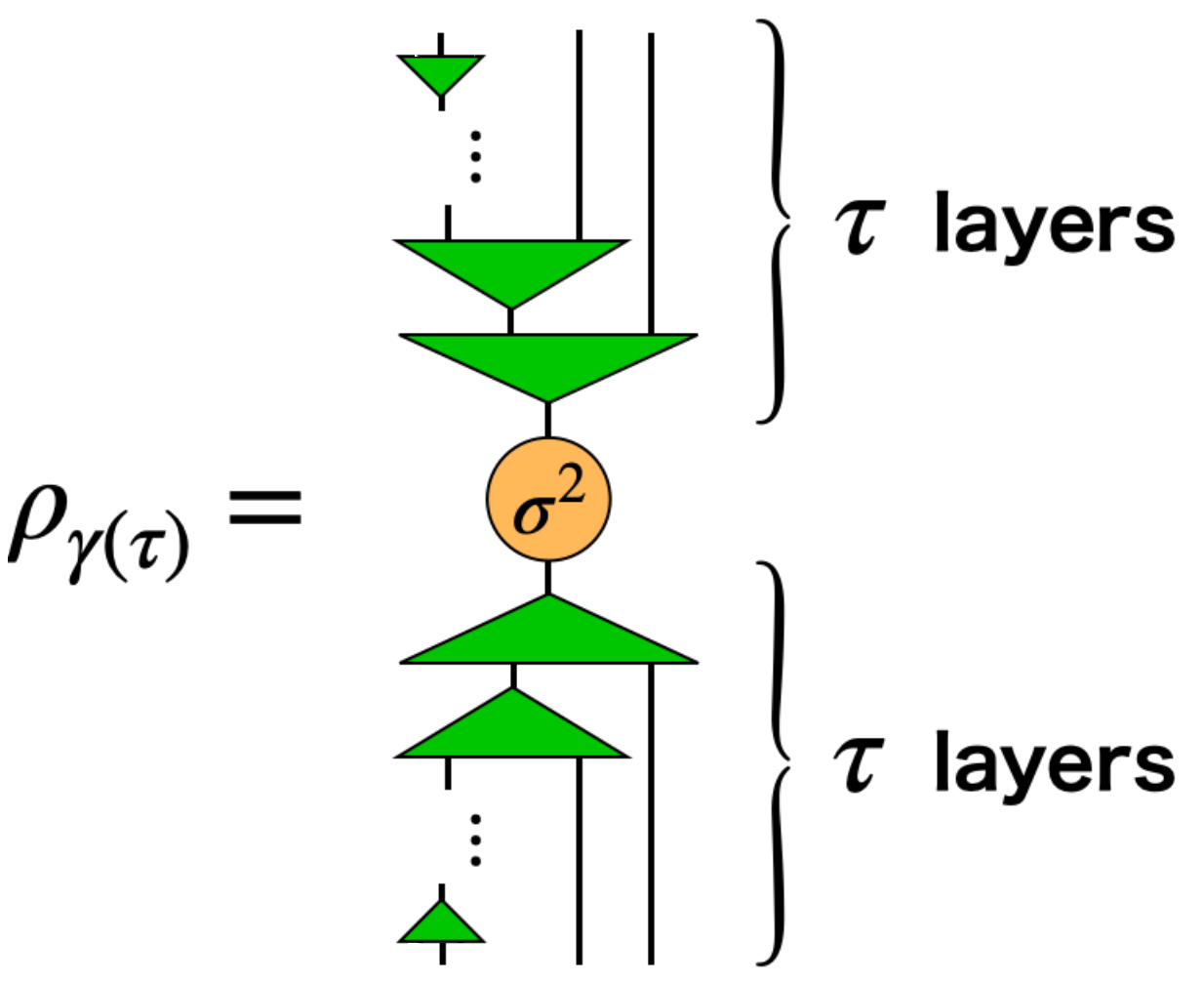}
    \caption{%The 
    {Reduced} transition matrix associated with the foliation $\gamma(\tau)$ in Fig.\ref{fig:MPS_fol}.}
    \label{fig:MPS_red}
\end{figure}

Despite the difference of criticality, the distillation in {MPS} has common features compared %to 
{with} that in {MERA}. In a general MERA, we discussed the monotonicity of the trace distance \eqref{eq:trace-dist-diff} toward $\gamma_\ast$. Furthermore, the distillation of the EPR state in {MERA} was equivalent to the flat entanglement spectrum \eqref{eq:schmidt}. All of these were %just 
shown for {the} MPS as well.

%After all, 
{Overall,} our distillation procedure in {MERA} can be extended to the MPS, where pushing the foliation toward the minimal bond cut surface corresponds to removing extra {degrees of freedom}, and a distillation of EPR pairs yields a flat entanglement spectrum.

%\section{Implementing tensor networks in quantum circuits}
%\mynote{Add if I have time}

%\subsection{Tensor network-inspired entanglement harvesting in quantum many-body computations}
%\mynote{Add if I have time}

%\input{Chapter3/appendix3}
\chapter{Conclusion and discussion}\label{ch:discussion}
% **************************** Define Graphics Path **************************
\ifpdf
    \graphicspath{{Conclusion/Figs/Raster/}{Conclusion/Figs/PDF/}{Conclusion/Figs/}}
\else
    \graphicspath{{Conclusion/Figs/Vector/}{Conclusion/Figs/}}
\fi

\section{Summary}
In Introduction, we asked three questions:
\begin{itemize}
\setlength{\parskip}{0pt} % 段落間
\setlength{\itemsep}{3pt} % 項目間
    \item[\underline{Q1.}] How can entanglement entropy be evaluated in generic QFTs? (Chapter \ref{ch:1})
    \item[\underline{Q2.}] Can we add any additional ingredients in the %standard
    AdS/CFT correspondence to realize more nontrivial spacetime and dynamics? How can we formulate it? (Chapter \ref{ch:2}, \ref{ch:2-2})
    \item[\underline{Q3.}] Can we generalize the formalism of holography itself in light of the entanglement structure? (Chapter \ref{ch:3})
\end{itemize}
We answered these questions in this dissertation with help of QFTs, holography, and tensor networks.
\begin{itemize}
\setlength{\parskip}{0pt} % 段落間
\setlength{\itemsep}{3pt} % 項目間
    \item[\underline{A1.}] We developed a new technique, the $\mathbb{Z}_M$ lattice-like gauge theory on Feynman diagrams, to compute entanglement entropy in generic interacting QFTs based on the orbifold method in free QFTs. This enables us to calculate the single twist contributions from flat-space propagators and vertices, where the latter contribution is identified by us for the first time. These contributions are resummed to all orders by considering the generalized 1PI diagrams. We found they are written in terms of the renormalized two-point functions of fundamental and composite operators in the end. The dominance of such contributions in entanglement entropy was discussed in light of the Wilsonian renormalization group.
    \item[\underline{A2.}] Yes, the AdS/BCFT correspondence can realize a nontrivial spacetime on the EOW brane. We studied a time-dependent setup by considering an operator local quench in BCFTs. 
    It has been pointed out that the ordinary prescription leads to a problem of a self-intersecting brane and the breaking down of the correspondence itself. However, we showed that the usual relation between the bulk and boundary is modified and our correct prescription solves the aforementioned problem by creating a black hole before the problem occurs. We checked our prescription by matching the energy-momentum tensor and entanglement entropy in the AdS and BCFTs through careful treatment of the boundary excitation.
    \item[\underline{A3.}] Probably yes. Given a tensor network representation of quantum many-body states, we established a systematic, quantitative way to investigate how geometry of tensor networks is related to entanglement from an operational perspective.
    Using the concept of pseudo entropy, we showed that the geometry is linked to entanglement distillation. This enables us to geometrically concentrate entanglement of the original state without being destroyed or created. We also evaluated this geometric entanglement distillation quantitatively and showed that the relation holds in various tensor networks.
\end{itemize}

%%%%%%%%%%%%%%%%%%%%%%%%

\section{Future work}
There are several possible directions. We list a few in the following subsections.
\subsection{Calculation of multiple twist contributions}
In Chapter \ref{ch:1}, we discussed single twist contributions. Computing multiple twist contributions in entanglement entropy is difficult since multiple complex integrals or sums in them make the analytical continuation intractable. However, there is room to relax this difficulty by considering the 'continuum' limit $M\rightarrow \infty$. Then, one could replace the summation over the twist $k_j\equiv 2\pi j/M$ by integration of $k$. This is just like a discrete Fourier transform to a continuous Fourier transform. However, this will only compute Reny\'i-0 (max) entropy, not entanglement entropy.\footnote{This calculation is still very ambiguous due to the infinite-dimensional nature of the QFT Hilbert space. Since entanglement entropy itself depends on the regularization scheme, it is not clear how to regulate the divergent integrals. One possibility is to use the same cutoff regularization as the free QFTs. Another possibility is to consider the problem in the original picture of the Riemann surface. The $M\rightarrow \infty$ limit corresponds to zero periodicity in the conical picture. This naively suggests that one can compute Reny\'i-0 entropy from a dimensional reduction. As we will discuss shortly, this regularization scheme dependence is basically equivalent to the choice of the smoothness parameter $\varepsilon$ and we expect the scheme dependence to be small enough for the calculation of entanglement entropy.} Interestingly, we have a formula relating entanglement entropy to a slightly modified version of the Reny\'i-0 entropy of the $n$ copies of the state. This is known as the \textbf{asymptotic equipartition property (AEP)}\index{asymptotic equipartition property}\index{AEP |see asymptotic equipartition property }~\cite{Tomamichel_2009,Wang:2021ptw,Akers:2020pmf}:
\begin{equation}
    \lim_{\varepsilon\rightarrow 0} \lim_{N\rightarrow \infty} \frac{1}{N} S_0^\varepsilon (\rho_A^{\otimes N}) = S(\rho_A).
    \label{eq:AEP}
\end{equation}
$S_0^\varepsilon$ is the \textbf{$\varepsilon$-smoothed max entropy}\index{$\varepsilon$-smoothed max entropy} is defined as the infimum of the Reny\'i entropy for an $\varepsilon$-separated reduced density matrix $\sigma$~\cite{Akers:2020pmf,Konig_2009,1365269,renner2005security}
\begin{equation}
    S_0^\varepsilon(\rho)=\inf_{\sigma\in \mathcal{B}^\varepsilon (\rho)} S_0 (\sigma),
    \label{eq:smooth-ent}
\end{equation}
where infimum ranges over all possible $\sigma$ within an $\varepsilon$-ball $\mathcal{B}^\varepsilon(\rho)$. The distance measure is usually a trace distance but other choices are also valid up to $O(\varepsilon)$~\cite{Akers:2020pmf}.
Intuitively, an $\varepsilon$-separated reduced density matrix is given by cutting off small eigenvalues less than $\varepsilon$. Then, it is straightforward to show such a state has an $O(\varepsilon)$ trace distance from the original reduced density matrix. 
Since \eqref{eq:AEP} involves multiple copies of the original state and a homogeneous cutoff $\epsilon$ on its eigenvalues, we anticipate we could calculate the left-hand side by promoting the theory to the $O(N)$ vector model for instance, and put a UV cutoff to the collective momentum. This smoothing procedure makes entanglement entropy UV finite even in QFTs (see \cite{Akers:2020pmf} and (11) in~\cite{Wilming:2018rvz}).
The remaining subtlety is the infimum in \eqref{eq:smooth-ent}. Since the state after such a momentum regularization is not guaranteed to be the infimum, we can only give an upper bound to entanglement entropy. Nevertheless, we expect the deviation to be only $O(\varepsilon)$ and in particular, negligible in the leading order in $G_N$ for holographic theories~\cite{Czech:2014tva,Bao:2018pvs}.

If we could resolve the issues related to the regularization scheme, we would be in principle able to compute entanglement entropy in generic interacting QFTs order by order. It is also interesting if this type of smoothed entropies can be also resummed to all orders and expressed in terms of renormalized quantities.

\subsection{Application of the AdS/BCFT correspondence to quantum cosmology}
The AdS/BCFT correspondence guides us to a holographic realization of the brane world, which might describe our universe~\cite{VanRaamsdonk:2021qgv,Antonini:2019qkt,Antonini:2022blk}. But how probable does our universe emerge on the brane? To answer this, we consider the \textbf{bubble nucleation}\index{bubble nucleation} in the five-dimensional spacetime due to \textbf{vacuum decay}\index{vacuum decay}~\cite{Coleman:1980aw} to realize our universe on the domain wall. In such models, we can explicitly compute the nucleation rate by the semiclassical analysis based on the Euclidean gravitational action. However, there remains a subtlety in the calculation due to the presence of gravity. By using the AdS/BCFT correspondence, we intend to study it from a UV-complete BCFT and possibly its quantum gravity correction as well. To realize a bubble in the AdS/BCFT setup, we glue two spacetimes together on the brane~\cite{Chen:2020uac,Chen:2020hmv,Anous:2022wqh,Grimaldi:2022suv,Baig:2022cnb}. The advantage of this construction method is the model independence by the \textbf{thin-wall approximation}\index{thin-wall approximation}. Hence, we expect we can obtain a universal expression for the nucleation rate independent of the details of theories.
The nucleation rate can be computed from the partition function of the BCFTs. Although similar analyses based on holography have been done (\cite{Barbon:2011zz,Bachas:2021fqo} for examples), they are either focusing on the gravitational calculation in the asymptotically AdS spacetimes or before the invention of the AdS/BCFT correspondence thus the CFT dual calculation was unclear.

What is even more exciting and of great relevance to our work~\cite{Kawamoto:2022etl} is the \textbf{catalytic effect}\index{catalytic effect} induced by black holes. It has been discussed that the existence of black holes catalyzes bubble nucleation. In this dissertation, we discussed a local operator quench in the AdS/BCFT correspondence, in which the conformal dimension of the operator is below the black hole threshold. We intend to extend this to the black hole regime and study the catalytic effect from BCFTs. These studies are work in progress with Naritaka Oshita and Issei Koga.

\subsection{Communication between brane and bulk through the AdS/BCFT correspondence and the intermediate picture}
The AdS/BCFT correspondence allows us to consider a brane world in a holographic setup. If we integrate out the bulk, we obtain a brane coupled to a bath at the asymptotic boundary. This picture is called a \textbf{brane perspective}\index{brane perspective} or the \textbf{intermediate picture}\index{intermediate picture}. In particular, if the CFT on the brane is also holographic, we can think three descriptions are equivalent: the bulk, the dual BCFT, and the intermediate picture. This doubly holographic setup has also been discussed in the context of black holes and the island formula\index{island formula} (AMMZ model\index{AMMZ model} \cite{Almheiri:2019hni,Almheiri:2019psf}).\footnote{For reviews of recent progress in quantum gravity, in particular black holes related to replica wormholes\index{replica wormholes}, island, and JT gravity\index{JT gravity}, see \cite{Almheiri:2020cfm,Maxfield:2021,Kundu:2021nwp,Sarosi:2017ykf,Chen:2021lnq} as well as the original articles, often referred to as the east coast paper~\cite{Almheiri:2019qdq} and the west coast paper~\cite{Penington:2019kki}.}\footnote{There is a proposal in string theory as well as these bottom-up constructions~\cite{Karch:2022rvr}.} Physics on the brane is obviously constrained by the bulk causality. Suppose an observer on the brane wants to communicate with the other observer on the asymptotic boundary to complete a certain quantum task between those two. The time for its completion is bounded by the bulk causality. Then, from double holography, we are interested in understanding it from the brane causality, which is more nontrivial due to the existence of the bulk shortcut and nonlocality in its effective theory~\cite{Omiya:2021olc}.

Furthermore, there are some cases where signals cannot meet each other in the intermediate picture while they can in the bulk. Then, how can the quantum task be possible in the intermediate picture? To answer this question, we think the ``\textbf{connected wedge theorem}\index{connected wedge theorem}'' proposed in \cite{May:2019yxi} and later extended in \cite{May:2019odp,May:2021zyu,May:2021nrl,May:2022clu} is useful. In these papers, quantum tasks whose inputs and outputs are at the asymptotic boundary are discussed. From \textbf{quantum cryptography}\index{quantum cryptography}, it is known that we need a certain amount of mutual information to complete the task whenever signals cannot meet each other. Holography tells us this is equivalent to a connected entanglement wedge~\cite{Headrick:2014cta,Jafferis:2015del} for some regions. This has been proven with the \textbf{focusing conjecture}\index{focusing conjecture} in AdS~\cite{May:2019odp}. These studies indicate that general relativity knows quantum information and vice versa. However, they are limited to cases with asymptotic observers. We intend to study how quantum tasks can be realized in holographic QFTs and also from the perspective of a brane observer~\cite{Neuenfeld:2021wbl,Neuenfeld:2021bsb,Omiya:2021olc,Suzuki:2022xwv,Almheiri:2019hni} by investigating whether the connected wedge theorem holds in this case. Mutual information is expected to be calculated by the RT formula, however, the standard prescription no longer applies as the subregions we need to consider on the brane and the asymptotic boundary are not connected by the usual entanglement wedge enclosed by geodesics. Since the connectedness of the entanglement wedge has more information compared to entanglement entropy, this direction is also intriguing from the perspective of understanding the entanglement structure. We are currently working in this direction with Beni Yoshida.\footnote{Following the submission of this thesis, our work mentioned here has been published~\cite{Mori:2023swn}.}

Since a brane can be behind the black hole horizon, we could also consider a case in which some portion of signals is scrambled inside the black hole for an evaporating black hole. This case is harder than the abovementioned case as it inevitably includes a backreaction to spacetime. We think the generic behavior can be captured from the \textbf{out-of-time-ordered correlators (OTOCs)}\index{out-of-time-ordered correlator}\index{OTOC |see out-of-time-ordered correlator } in CFTs~\cite{Shenker:2013pqa} as well as a measurement-device-independent quantum key distribution in quantum cryptography~\cite{Lo_2012}.
We strongly believe these information-theoretic analyses help solve all the research challenges and understand holographic spacetime from observers.

\subsection{Tensor network-based entanglement harvesting in quantum circuits and relativistic quantum information}
The geometric entanglement distillation presented in Chapter \ref{ch:3} can be implemented in quantum circuits\index{quantum circuits}~\cite{pirsa_22100111}. For example, a MPS can be written as unitary operators acting on some qubits and ancillary qubits with some postselections\index{postselection}\footnote{Note that postselections can be realized probabilistically by a projective measurement.} by maximally entangled states (Fig.\ref{fig:MPS-QC}). Using this implementation, we can rewrite the geometric distillation in MPS at the minimal surface (Fig.\ref{fig:MPS_fol}) as a unitary process with postselections on the ground state. See Fig.\ref{fig:MPS-QC-distil} for the case of $\chi=2$. Note that we do not need to express the ground state by MPS anymore. Instead, we need to optimize the unitaries so that the resulting state becomes close to the EPR state. Since this uses the ground state as a resource like \textbf{quantum annealing}\index{quantum annealing}~\cite{deFalco:1998an,Albash_2018} while the operations are \textbf{gate-based}\index{gate-based}~\cite{Deutsch1989QuantumCN}, it is a hybrid style of quantum computations assisted by tensor network, which has not been explored much.
\begin{figure}
    \centering
    \includegraphics[width=\linewidth]{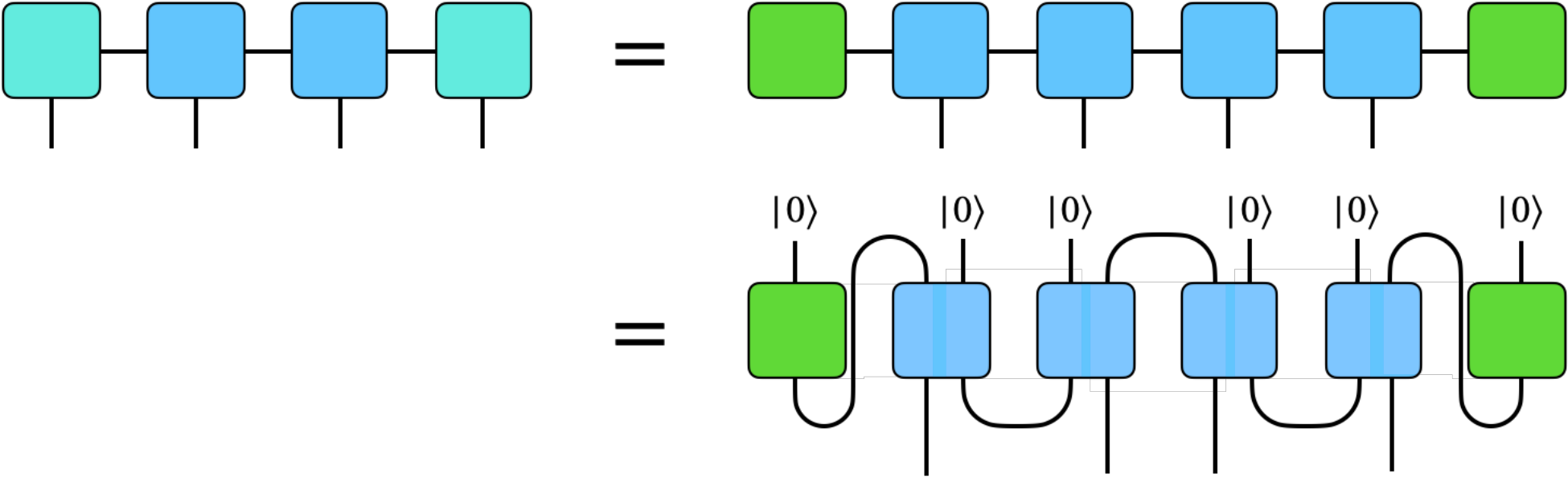}
    \caption{A MPS can be written as suitable one-qubit and two-qubit unitaries acting on and postselected by some ancillary qubits and maximally entangled states \eqref{eq:epr-tensor}.}
    \label{fig:MPS-QC}
\end{figure}
\begin{figure}
    \includegraphics[width=\linewidth]{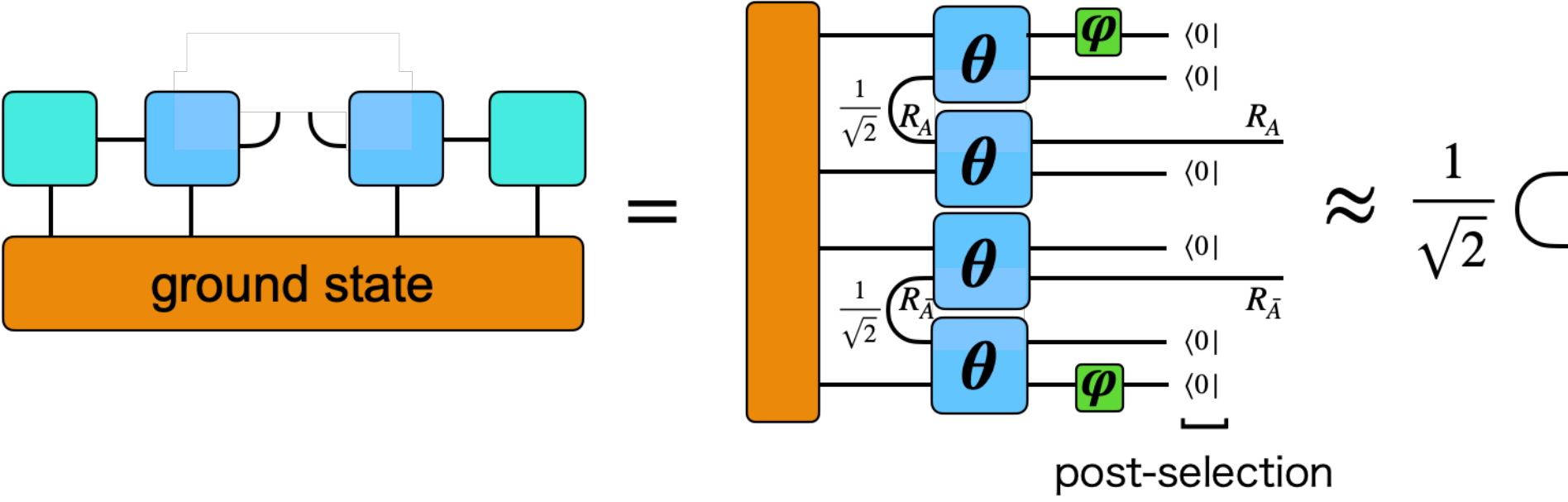}
    \caption{A geometric distillation at the minimal bond cut surface in MPS can be recast as a unitary evolution and postselections on the ground state. $R_A$ and $R_{\bar{A}}$ denote the Hilbert space of one of the qubits of the auxiliary EPR state within the subregion $A$ and $\bar{A}$, respectively. After the postselections, we have two qubits belonging to $R_A$ and $R_{\bar{A}}$ that are originally in the reference system.}
    \label{fig:MPS-QC-distil}
\end{figure}

The parameter optimization is expected to be completed by classical communications since the parameters are just c-numbers. The unitaries and postselections are local within each subregion. After all, we expect this indeed provides a variational LOCC algorithm for the entanglement extraction. As an entanglement distillation protocol, this is not efficient as it consumes two EPRs to generate one EPR. Instead, this operation transfers the internal bipartite entanglement in a many-body ground state to maximal entanglement between auxiliary qubits. Thus, we think this is a tensor network example of \textbf{entanglement harvesting}\index{entanglement harvesting} from a many-body ground state, which in most cases has been discussed in relativistic quantum information\index{relativistic quantum information} utilizing the \textbf{Unruh-de Witt-type detectors}\index{Unruh-de Witt(-type) detectors}~\cite{Reznik:2002fz,Salton:2014jaa,Henderson:2017yuv,Martin-Martinez:2015qwa,Reznik:2003mnx,VerSteeg:2007xs,Martin-Martinez:2013eda}. We hope this new perspective from tensor networks can help construct a better detector and discuss information-theoretic bounds for the amount of extractable entanglement. We are working in this direction with Erickson Tjoa.

\subsection{The continuum limit of geometric distillation}
Although we revealed the relation between geometry and entanglement distillation in Chapter \ref{ch:3}, the geometry is discrete. This is unsatisfactory from the original motivation of extending holography. Our universe is continuous (at least up to the Planck scale)! What would be the counterpart of entanglement distillation in tensor network? One possibility is to take the tensor network picture seriously. In particular, MERA has some interpretations as a series of discrete conformal transformations~\cite{Milsted:2018san,Milsted:2018vop,Milsted:2018yur}. Since the previous study considers a homogeneous conformal transformation, it is anticipating an inhomogeneous deformation like the \textbf{sine-square deformation (SSD)}\index{sine-square deformation}\index{SSD |see sine-square deformation }~\cite{10.1143/PTP.122.953}. Another possibility is an inhomogeneous \textbf{$T\overline{T}$ deformation}\index{$T\overline{T}$ deformation}~\cite{McGough:2016lol} as its holographic dual is expected to be a finite radius CFT obtained by 'pushing' the boundary into the bulk, which resembles the geometric distillation procedure in tensor networks.

\appendix
\chapter{Appendices for Chapter I}
\ifpdf
    \graphicspath{{Chapter1/Figs/Raster/}{Chapter1/Figs/PDF/}{Chapter1/Figs/}}
\else
    \graphicspath{{Chapter1/Figs/Vector/}{Chapter1/Figs/}}
\fi
\graphicspath{{./Chapter1/Figs/}}
\section{Quantum states}\label{app:quantum}
%\section{Appendix: Quantum states, fields, and operator algebras}\label{app:quantum}
%\mynote{I can add descriptions for quantum fields and operator algebra}
%\subsection{Quantum states}
Say $\mathcal{H}$ is $n$-dimensional complex vector space $\mathbb{C}^n$, a set of normalized states spans complex codimension-one unit sphere $S^{2n-1}$ in $\mathbb{C}^n$. Normalized \textbf{pure states}\index{pure states} are \textbf{rays}, obtained from $S^{2n-1}$ with \textit{suitable} identification i.e. the \textbf{projective Hilbert space} $P(\mathcal{H}_n)$.

\begin{definition}
	
	A (projective) \textbf{ray}\index{ray} [$\ket{\psi}$] is an equivalence class of state vectors (ket) in a complex Hilbert space $\mathcal{H}$ defined by
	\begin{equation}
		\ket{\psi}\sim \lambda \ket{\psi},\ \lambda\in \mathbb{C}.
	\end{equation}
	For a normalized ($\braket{\psi}{\psi}=1$) state, it is simply an equivalence class up to a global phase:
	\begin{equation}\label{eq:ray}
		\ket{\psi}\sim e^{-i\alpha}\ket{\psi},\ \alpha\in \mathbb{R}.
	\end{equation}
	Bra vectors can be defined in the same way but with a dual Hilbert space, $\mathcal{H^\ast}:\mathcal{H}\rightarrow \mathbb{C}$, instead of $\mathcal{H}$.
\end{definition}

\begin{definition}
	The \textbf{projective Hilbert space}\index{projective Hilbert space} $P(\mathcal{H})$ of a (complex) Hilbert space $\mathcal{H}$ is the set of rays. In terms of mathematics, it is \textbf{projectivization} of $\mathcal{H}$.
\end{definition}

Therefore, a pure state is not an element of $\mathcal{H}$. Nevertheless, throughout the thesis, we simply denote a pure state as $\ket{\psi}\in\mathcal{H}$. This abuse of notation is just for the sake of simplicity.

While pure states are commonly used in quantum mechanics and isolated quantum systems, quantum states are not limited to those. When one tries to describe a \textbf{mixed state}\index{mixed state}, a probabilistic mixture of pure states, the notion of \textbf{density matrices} is necessary. In the following, a set of linear operators on $\mathcal{H}$ is denoted by $\mathcal{L}(\mathcal{H})$.
\theoremstyle{definition}
\begin{definition}\label{def:dens}
	A \textbf{density matrix (operator)}\index{density matrix}\index{density operator|see {density matrix}} is an element of \textbf{state space} defined below.

	\textbf{State space}\index{state space} $\mathcal{S}(\mathcal{H})$ is a subset of all Hermitian operators\footnote{A Hermitian operator $A$ is defined as a linear operator on a complex Hilbert space s.t. $A^\dagger = A$} $\rho\in\mathcal{ L }(\mathcal{H})_h$ satisfying two following conditions:
	\begin{enumerate}
		\item $\rho\ge 0$,
		\item $\Tr \rho =1$.
	\end{enumerate}
In short,
\begin{equation}
	\mathcal{S}(\mathcal{H}):=\{\rho\in\mathcal{ L }(\mathcal{H})_h | \rho\ge 0, \Tr\rho =1\},
\end{equation}
where $\rho$ is called a \textbf{density matrix}.
\end{definition}

A density matrix can describe a probabilistic mixture (or statistical ensemble). This becomes clear from the alternative definition below.

\theoremstyle{theorem}
\begin{theorem}\label{thm:def-dens}
	\underline{\textbf{Definition of density matrix as a probabilistic mixture}}
	
	Instead of the Definition \ref{def:dens}, a density matrix\index{density matrix} can be alternatively defined as a probabilistic mixture (convex combination) of pure states (projection operators on each basis) $\ket{\phi_i}\bra{\phi_i}$:
	\begin{equation}\label{eq:dens}
		\rho=\sum_i p_i \ket{\phi_i}\bra{\phi_i},
	\end{equation}
	where $p_i \in \mathbb{R}_{\ge 0}$ is a probability of obtaining $\rho=\ket{\phi_i}\bra{\phi_i}$ after a basis measurement (one of the projective measurements); consequently, $\sum_i p_i =1$ and $\sum_i \ket{\phi_i}\bra{\phi_i} = \bm{1}$.
\end{theorem}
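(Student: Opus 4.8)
The plan is to prove that the two characterizations of a density matrix coincide, i.e.\ that an operator admits a representation of the form \eqref{eq:dens} if and only if it lies in the state space $\mathcal{S}(\mathcal{H})$ of Definition \ref{def:dens}. Since the statement is an equivalence, I would split the argument into two implications and treat them independently, handling the straightforward inclusion first and reserving the spectral-theoretic content for the converse.

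First I would verify the easy direction: any convex combination $\rho = \sum_i p_i \ket{\phi_i}\bra{\phi_i}$ of normalized pure states, with $p_i \ge 0$ and $\sum_i p_i = 1$, satisfies the three conditions of Definition \ref{def:dens}. Hermiticity follows because each rank-one projector $\ket{\phi_i}\bra{\phi_i}$ is self-adjoint and the weights $p_i$ are real, so the linear combination is again self-adjoint. Positivity is immediate from $\mel{\psi}{\rho}{\psi} = \sum_i p_i \abs{\braket{\psi}{\phi_i}}^2 \ge 0$ for every $\ket{\psi}$, using $p_i \ge 0$. The trace condition follows from linearity and cyclicity of the trace, $\Tr\rho = \sum_i p_i \Tr(\ket{\phi_i}\bra{\phi_i}) = \sum_i p_i \braket{\phi_i}{\phi_i} = \sum_i p_i = 1$, where the normalization of each $\ket{\phi_i}$ is used.

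The converse is the substantive direction and is where I would invoke the spectral theorem. Given $\rho \in \mathcal{S}(\mathcal{H})$, self-adjointness guarantees an orthonormal eigenbasis $\{\ket{e_i}\}$ with real eigenvalues $\lambda_i$, so that $\rho = \sum_i \lambda_i \ket{e_i}\bra{e_i}$. The condition $\rho \ge 0$ forces $\lambda_i = \mel{e_i}{\rho}{e_i} \ge 0$, and $\Tr\rho = \sum_i \lambda_i = 1$ identifies $\{\lambda_i\}$ as a probability distribution. Setting $p_i = \lambda_i$ and $\ket{\phi_i} = \ket{e_i}$ then reproduces the form \eqref{eq:dens}, which closes the equivalence.

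The hard part, and the point deserving genuine care, is the passage from finite- to infinite-dimensional $\mathcal{H}$. In finite dimensions the spectral theorem is elementary, but in general one must justify that $\rho$ is diagonalizable with a discrete spectrum; positivity together with $\Tr\rho = 1$ in fact forces $\rho$ to be trace-class and hence compact, so the spectral theorem for compact self-adjoint operators applies and the eigenvalue series converges. I would also emphasize that the decomposition \eqref{eq:dens} is \emph{not} unique --- non-orthogonal ensembles can yield the same $\rho$ --- so the theorem asserts only the \emph{existence} of a pure-state decomposition, with the spectral one serving as a canonical representative that simultaneously supplies the probabilistic interpretation of the weights $p_i$.
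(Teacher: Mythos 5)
Your proof is correct and follows essentially the same route as the paper's: the spectral (eigenvalue) decomposition together with positivity and the trace condition for one direction, and a direct verification of positivity and normalization for the other. The additional remarks on the infinite-dimensional case and non-uniqueness are sound but go beyond the paper, which restricts to finite-dimensional Hilbert spaces.
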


\begin{proof}
	$ $%for line break after "Proof."
	\begin{enumerate}[label=(\arabic*)]
	\item \underline{Definition \ref{def:dens}$\Rightarrow$Theorem \ref{thm:def-dens}}
	
	Since $\rho\ge 0$, $\rho\in\mathcal{S}(\mathcal{H}_d)$ can be expressed as $\rho=\sum_{i=1}^d p_i \ket{\phi_i}\bra{\phi_i},\ ^\forall p_i\ge 0$ (eigenvalue decomposition) with a complete basis set $\{\ket{\phi_i} \}_{i=1,\cdots d}$. The normalization, $\Tr \rho=1$, yields $\sum_{i=1}^d p_i =1$.
	
	\item \underline{Theorem \ref{thm:def-dens}$\Rightarrow$Definition \ref{def:dens}}
	
	For any vector $\ket{\xi}\in\mathcal{H}$,
	\begin{equation}
		\bra{\xi}\rho\ket{\xi}=\sum_i p_i |\braket{\phi_i}{\xi}|^2 \ge 0 \Rightarrow \rho\ge 0.
	\end{equation}
	Since $\ket{\phi_i}$ is normalized, $\Tr \rho=1$ is equivalent to $\sum_i p_i =1$.
	\end{enumerate}
\end{proof}

Under this definition (\ref{eq:dens}), $\rho$ is a \textbf{mixed state} if each probability is strictly less than 1. It is written as a strictly ({convex combination}) or nontrivial probabilistic combination of other pure states.\footnote{$\mathcal{S}(\mathcal{H})$ is a convex set of the real vector space $\mathcal{ L }(\mathcal{H})_h$.}

\textbf{Mixed states} are states other than pure states. Thus it is necessary to redefine a pure state as density matrix. The density matrix formalism naturally involves above mentioned pure states as $\rho=\ket{\psi}\bra{\psi}\in \mathcal{S}(\mathcal{H})$. Equivalent definitions for pure states are as follows:
\begin{enumerate}
	\item $\rho$ is a pure state: $\rho=\ket{\psi}\bra{\psi}$
	\item $\rho^2=\rho$
	\item The \textbf{purity} is unity: $\Tr \rho^2 =1$
	\item $\rho\in\mathcal{H}_d$ has one eigenvalue (probability) 1 and $(d-1)$-degenerate eigenvalues 0
	\item $\rho$ is an extreme point of $\mathcal{S}(\mathcal{H})$.
\end{enumerate}
From \#1, the pure density matrix is apparently a projector. (\#1$\Rightarrow$\#2) From the normalization, \#2$\Rightarrow$\#3. \#4 is understood as \#1 in (\ref{eq:dens}). \#5 follows from \textit{reductio ad absurdum}.

\section{Quantum measurements}\label{app:meas}
The most generic measurement process in quantum theories is known as \textbf{positive operator valued measure (POVM) measurements}\index{POVM measurements}. For its proof, refer \cite{nielsen_chuang_2010}. Here we intend to how quantum measurements are described in the POVM formalism.

%Given an observable $\mathcal{M}$, there is always a spectral decomposition such that
%\begin{equation}
%    \mathcal{M}=\sum_k m_k E_k,
%\end{equation}
%where $\{m_k\}_k$ is a set of eigenvalues
Quantum measurements are specified by a set of \textbf{POVM elements}\index{POVM element} $\{E_m\in \mathcal{L}(\mathcal{H})\}_m$. $m\in\mathcal{M}$ is a label for the measurement outcomes $\mathcal{M}$. The probability of outcome $m$ under a state $\rho$, $\mathrm{Prob}(m|\rho)$, is given by
\begin{equation}
    \Tr(E_m \rho).
\end{equation}
Thus, $E_m$ span all possible measurements, i.e. the completeness relation
\begin{equation}
    \sum_m E_m=\mathbf{1}
\end{equation}
holds,
and they are positive operators such that $E_m\ge 0$. $\sqrt{E_m}$ is called a \textbf{measurement operator}\index{measurement operator}.

In a \textbf{direct measurement}\index{direct measurement} of an observable $\hat{A}\in\mathcal{L}(\mathcal{H})_h$,
the probability of outcome $m$ is the probability of obtaining an eigenvalue $a_m$ under a state $\rho$. This follows the Born rule, 
\begin{equation}
    \mathrm{Prob}(m|\rho)=\Tr (\hat{P}_m \rho),
\end{equation}
where $\hat{P}_m$ is a projection operator onto the eigenspace of $\hat{A}$ with eigenvalue $a_m$. Note that the observable $\hat{A}$ has a spectral decomposition
\begin{equation}
    \hat{A}=\sum_m a_m \hat{P}_m.
\end{equation}
Since every POVM element is a projector in a direct measurement, %of some observable, 
this is classified as a \textbf{projection valued measure (PVM) measurement}\index{PVM measurement} or a \textbf{projective measurement}\index{projective measurement}.
When the eigenvalues are nondegenerate, the spectral decomposition equals the eigenvalue decomposition. Such a measurement is called the \textbf{basis measurement}\index{basis measurement}. 

On the other hand, when not all the POVM elements are projectors, this is not a PVM measurement. For example, in a qubit system, a measurement with the POVM elements
\begin{equation}
    E_1 = \dyad{0}, \quad E_2 = \dyad{+} = \frac{(\ket{0}+\ket{1})(\bra{0}+\bra{1})}{2}, \quad E_3=\mathbf{1}-E_1-E_2,
\end{equation}
is not a PVM measurement.

In an \textbf{indirect measurement}\index{indirect measurement}, the measurement process is truly described by the POVM formalism, not by the PVM formalism, when we focus on the target system.
This is actually a very common measurement in experiments, where we couple the target system to an ancillary system like a detector. What we really measure is observables for the ancilla (\textbf{meter observable}\index{meter observable}). The indirect measurement is described as follows. First, besides the target system $S$, prepare an ancillary system $R$. Then, the initial state $\rho\otimes\sigma$ for the entire system $S\cup R$ evolves by a unitary evolution operator $U(t)$,
\begin{equation}
    \rho\otimes\sigma\mapsto U(t) (\rho\otimes\sigma) U(t)^\dagger.
\end{equation}
Followed by the tensor network representation introduced in Appendix \ref{app:TN-rep}, the state after time $t$ is
\begin{equation}
    \includegraphics[height=3.9cm, valign=c, clip]{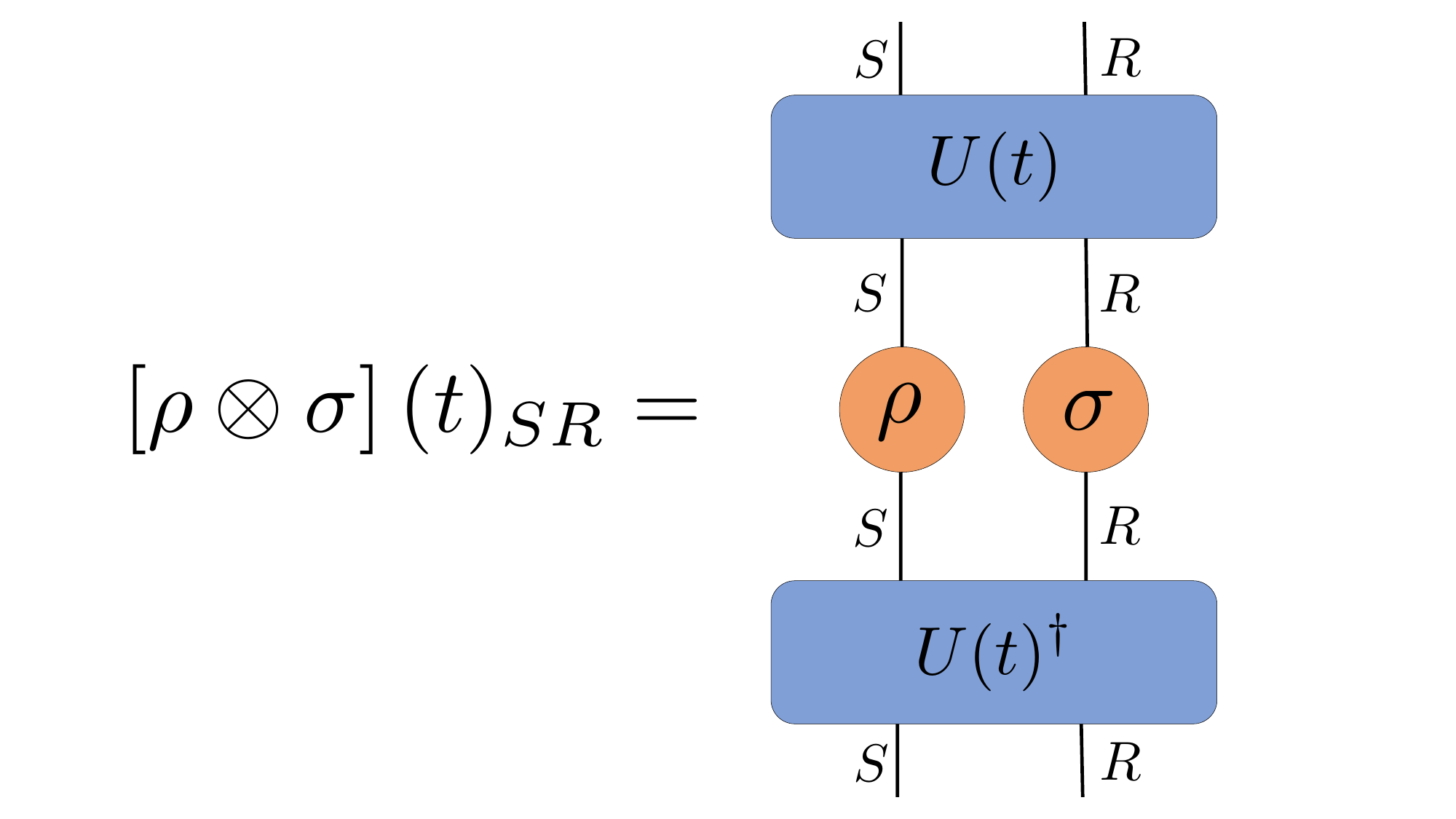}.
\end{equation}
Then, the probability of outcome $m$ upon the projective measurement of the ancilla $R$ is given by
\begin{equation}
    \includegraphics[height=6cm, valign=c, clip]{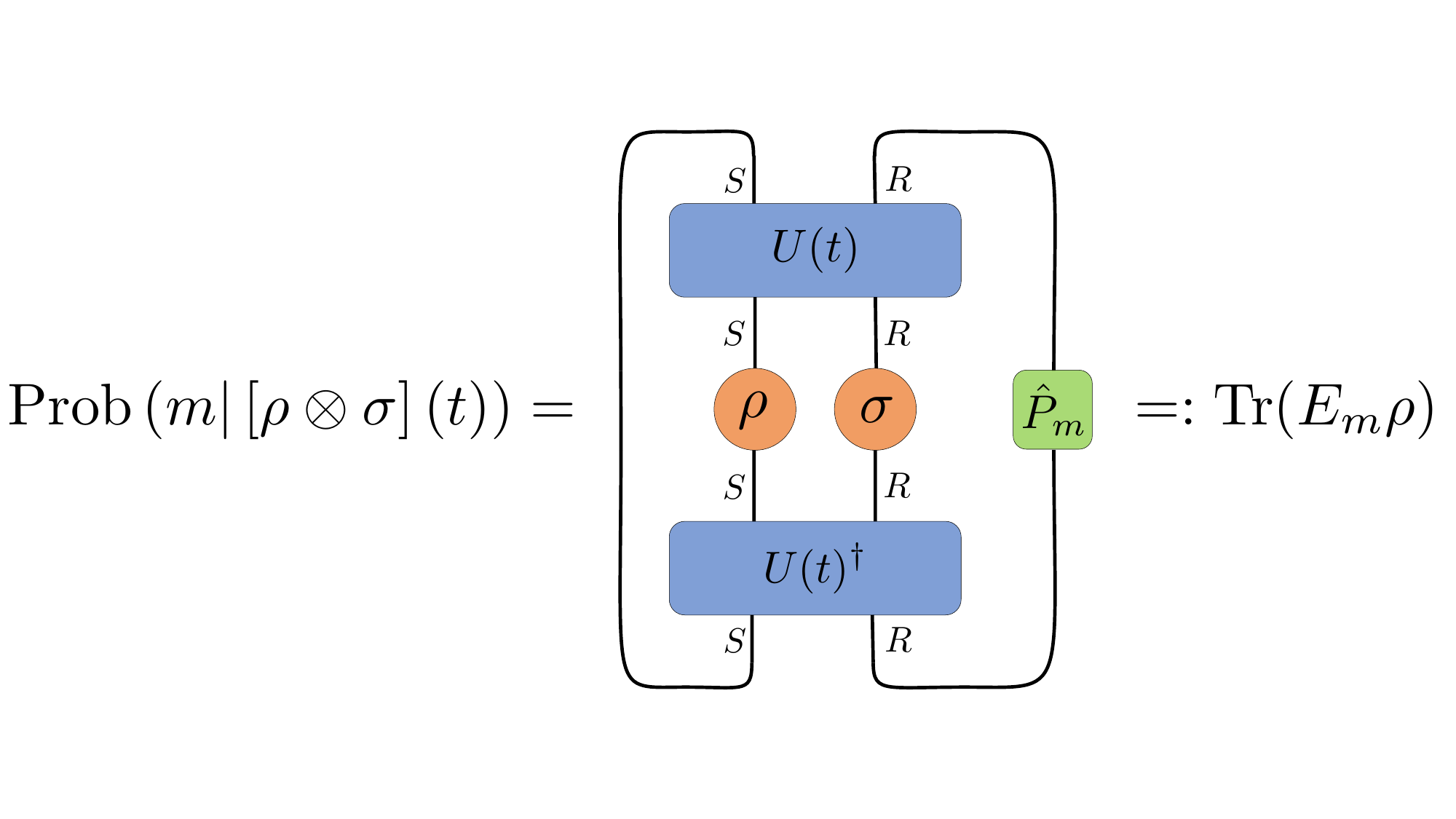}.
\end{equation}
Consequently, the POVM element is 
\begin{equation}
    \includegraphics[height=4.4cm, valign=c, clip]{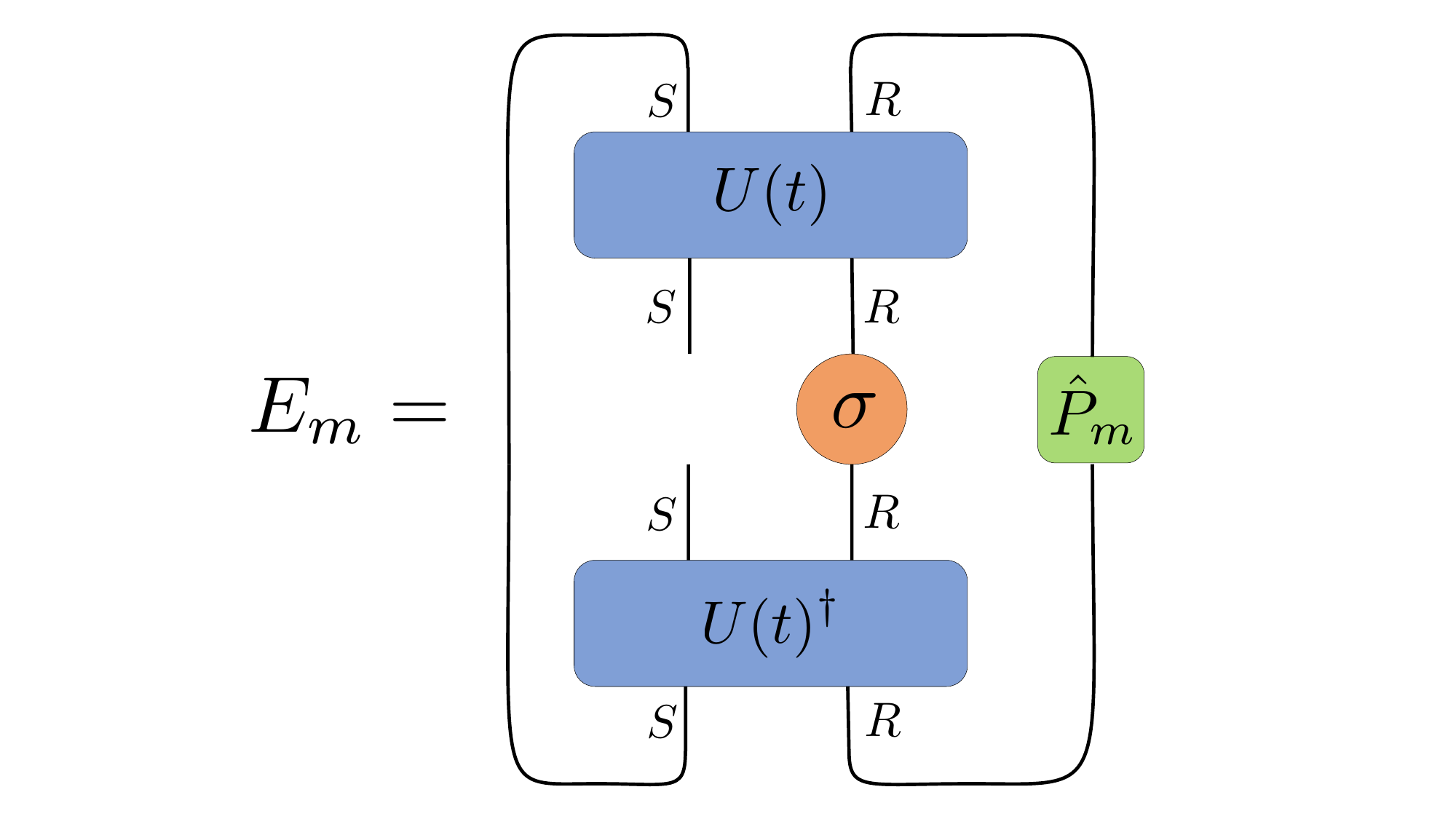}.
\end{equation}
Thus, indirect measurements are POVM measurements. Conversely, any POVM measurements can be realized by an indirect measurement~\cite{BB09394904}.

So far we have discussed the measurement process itself. What will be the state after the measurements? There are two types of measurements. One is the \textbf{selective measurement}\index{selective measurement}, where you obtain information on the outcome. In particular, if the POVM element is a projection operator $\hat{P}_m$ (i.e. a PVM measurement), the state after the measurement of outcome $m$ under a state $\rho$ is given by
\begin{equation}
    %\rho\mapsto 
    \rho_m\equiv \frac{\hat{P}_m \rho \hat{P}_m}{\Tr(\hat{P}_m\rho)}.
    \label{eq:proj-meas}
\end{equation}
This is also known as the von Neumann's projective measurement.\footnote{$\hat{P}_i \hat{P}_j= \hat{P}_i \delta_{ij}$ is implicity assumed. If not, the selective measurement is called the L\"{u}der's projective measurement.}
The other type of measurements is the \textbf{non-selective measurement}\index{non-selective measurement}, where you do not obtain information on the outcome. Instead, the state after the measurement process is given by a probabilistic mixture
\begin{equation}
    \rho\mapsto \rho^\prime = \sum_m \mathrm{Prob}(m|\rho) \rho_m.
    \label{eq:non-select}
\end{equation}

Note that POVM itself does not tell us states after the measurement. It only tells us about the probability distribution of obtaining outcome. We could have two different POVMs describing the same physical measurement process. To precisely discuss states after measurements and evolutions, we should better use the notion of CP-instruments\index{CP-instruments}, which will be explained in Appendix \ref{app:CP-inst}.

\nomenclature[z-quant]{POVM}{positive operator valued measure}
\nomenclature[z-quant]{PVM}{projection valued measure}

\section{Quantum channels}\label{app:quant-channel}
Any quantum evolutions $\Lambda$ must map a density matrix to another. In quantum information theory, it is called a \textbf{quantum channel}\index{quantum channel} in general. In the following, we discuss the necessary and sufficient conditions for a map to be a quantum channel and how it can be understood as a quantum evolution naturally. Note that quantum evolutions involve measurement processes described in the previous section as well \cite{R300000001-I031534480-00}.

\subsection{Completely positive trace-preserving (CPTP) maps}\label{sec:CPTP}
When we allow discarding a part of the system (e.g. environment) or measurements via a detector, the space of the initial density matrix $\mathcal{S}(\mathcal{H}_A)$ need not coincide with the final one $\mathcal{S}(\mathcal{H}_B)$. 
The addition and multiplication rules of probability lead to the affinity of $\Lambda$. %that is, %defined as follows.
\theoremstyle{definition}
\begin{definition}
\underline{Affinity of a quantum channel $\Lambda$}

For any $\rho,\sigma\in \mathcal{S}(\mathcal{H}_A)$ and $p\in [0,1]$, the \textbf{affinity}\index{affinity} of a quantum channel $\Lambda:\mathcal{S}(\mathcal{H}_A)\rightarrow \mathcal{S}(\mathcal{H}_B)$ means
\begin{equation}
    \Lambda(p\rho+(1-p)\sigma)=p\Lambda(\rho)+(1-p)\Lambda(\sigma)\in\mathcal{S}(\mathcal{H}_B).
    \label{eq:affine}
\end{equation}
This is also called a convex-linear map on the set of density matrices~\cite{nielsen_chuang_2010}.
\end{definition}

Although $\Lambda$ is defined only on the set of density matrices $\mathcal{S}(\mathcal{H}_A)$, one can find a natural extension to a set of linear operators $\mathcal{L}(\mathcal{H}_A)$ by the following procedure. 

First, we see any linear operators can be expressed as a linear combination of density matrices. Given an orthonormal basis $\{\ket{i}\}_i$ of $\mathcal{H}_A$, a linear operator $\hat{O}\in\mathcal{L}(\mathcal{H}_A)$ is written as
\begin{equation}
    \hat{O}= \sum_{i,j} O_{ij} \dyad{i}{j}. 
    \label{eq:op-basis}
\end{equation}
The problem is whether we can express $\dyad{i}{j}$ in terms of density matrices. Let us define four state vectors,
\begin{align}
    \ket{+}_{ij} &= \frac{1}{\sqrt{2}}(\ket{i}+\ket{j})\\
    \ket{-}_{ij} &= \frac{1}{\sqrt{2}}(\ket{i}-\ket{j})\\
    \ket{R}_{ij} &= \frac{1}{\sqrt{2}}(\ket{i}+i\ket{j})\\
    \ket{L}_{ij} &= \frac{1}{\sqrt{2}}(\ket{i}-i\ket{j}),
\end{align}
and the corresponding density matrices $\rho(\varphi_{ij})\equiv \ket{\varphi}_{ij} \bra{\varphi}_{ij}$, where $\varphi= +,-,R,L$. It can be straightforwardly shown that
\begin{equation}
    \dyad{i}{j}=\frac{\rho(+_{ij})-\rho(-_{ij}) + i \left[\rho(R_{ij})-\rho(L_{ij})\right]}{2}.
\end{equation}
By plugging this in \eqref{eq:op-basis}, any linear operators can be expanded as a (complex) linear combination of density matrices.

Second, this expansion can be used to naturally extend a quantum channel $\Lambda:\mathcal{S}(\mathcal{H}_A)\rightarrow\mathcal{S}(\mathcal{H}_B)$ to $\tilde{\Lambda}:\mathcal{L}(\mathcal{H}_A)\rightarrow\mathcal{L}(\mathcal{H}_B)$, a linear extension of $\Lambda$, as follows:
\begin{equation}
    \tilde{\Lambda}(\hat{O})\equiv \sum_{i,j} O_{ij} \frac{\Lambda(\rho(+_{ij}))- \Lambda(\rho(-_{ij})) + i \left[\Lambda(\rho(R_{ij}))-\Lambda(\rho(L_{ij}))\right]}{2}.
\end{equation}
Then, $\Tilde{\Lambda}$ becomes a linear map \eqref{eq:affine} with an arbitrary complex number $p$ while $\Lambda$ was only defined for $p\in [0,1]$.
Note that for $\rho\in\mathcal{S}(\mathcal{H}_A)$, $\Tilde{\Lambda}(\rho)=\Lambda(\rho)$ \cite{R300000001-I031534480-00}.

Based on the linear extension of the quantum channel, $\Tilde{\Lambda}$ is a \textbf{trace-preserving}\index{trace-preserving} map for any linear operators. This follows from the linearity of the trace operation and
\begin{equation}
    \Tr_B(\Tilde{\Lambda}(\rho))=\Tr_B({\Lambda}(\rho))=1=\Tr_A(\rho),\quad \forall \rho\in\mathcal{S}(\mathcal{H}_A).
\end{equation}
Furthermore, $\Tilde{\Lambda}$ maps a positive operator to another positive operator. Mathematically, it is called a \textbf{positive}\index{positive} map. The positivity is shown as follows. Any positive operator $\hat{A}$ can be written as $a\rho$, where $a=\Tr \hat{A}$ and $\rho=\hat{A}/a$. From the positivity, $\rho$ is a density matrix. From the affinity,
\begin{equation}
    \Tr[ \tilde{\Lambda}(\hat{A})] = a\Tr[ \Lambda(\rho)]\ge 0.
\end{equation}
In addition, we require that the positivity of $\Tilde{\Lambda}$ holds for a composite system for a consistent description of the quantum evolution. In other words, $\tilde{\Lambda}\otimes \mathbf{1}$ is a positive map for any linear operators on $\mathcal{H}_A\otimes \mathcal{H}_R$, where $\mathbf{1}$ denotes the identity map on the reference system $R$. The identity operation should not do anything, thus $\tilde{\Lambda}\otimes \mathbf{1}$ should be also a valid quantum channel that maps a density matrix to another. A map with this type of positivity is called a \textbf{completely positive (CP)}\index{completely positive}\index{CP|see completely positive } map.\footnote{Note that even if a map is positive, it is not necessarily completely positive. One such example is partial transpose, which is used to define entanglement negativity \cite{nielsen_chuang_2010,BB09394904}.}
Overall, a general quantum channel is a \textbf{completely positive, trace-preserving (CPTP) map}\index{completely positive trace-preserving}\index{CPTP|see completely positive trace-preserving }.\footnote{It is sometimes called a trace-preserving, completely positive (TPCP) map.}

\subsection{Kraus representation}
One of the useful representations of a CPTP map is known as the \textbf{Kraus representation}\index{Kraus representation}. $\tilde{\Lambda}:\mathcal{L}(\mathcal{H}_A)\rightarrow\mathcal{L}(\mathcal{H}_B)$ is CP if and only if there exists a \textbf{Kraus operator}\index{Kraus operator} $V_k:\mathcal{H}_A\rightarrow\mathcal{H}_B$ such that $\Tilde{\Lambda}$ has a \textbf{Kraus representation},
\begin{equation}
    \tilde{\Lambda} (\hat{O}) = \sum_k V_k \hat{O} V_k^\dagger, \quad \forall\hat{O}\in \mathcal{L}(\mathcal{H}_A).
\end{equation}
In addition to this, trace preservation requires 
\begin{equation}
    \sum_k V_k^\dagger V_k=\mathbf{1}_A.
\end{equation}
One can confirm a Kraus representation of $\tilde{\Lambda}$ is indeed CP by diagonalizing $V_k$. See \cite{BB09394904,R300000001-I031534480-00} for proof that any CPTP maps have a Kraus representation.

In the context of measurement processes, $V_k$ is a \textbf{measurement operator}\index{measurement operator}. For instance, the measurement operator for the projective measurement \eqref{eq:proj-meas} is $\hat{P}_m$.

\subsection{Stinespring representation}
At first sight, it is not so obvious whether the CPTP map or equivalently, a Kraus representation really describes a generic quantum evolution. In this section, we see this is indeed true by rewriting a Kraus representation as a unitary evolution of a composite system.

A Kraus representation is equivalent to a so-called \textbf{Stinespring representation} \cite{BB09394904,R300000001-I031534480-00},
\begin{equation}
    \Lambda(\rho)=\Tr_{AR} \left[U (\rho\otimes\sigma\otimes \rho_R) U^\dagger \right], 
\end{equation}
where $R$ is a reference system. $\sigma$ is an initial state of $B$. $\rho_R$ is an initial state of $R$, which can be taken to be pure. $U$ is a unitary evolution on the joint system $ABR$. The Stinespring representation indicates that any CPTP maps are interpreted as a unitary evolution of some joint system and vice versa. In the tensor network representation, it can be easily seen that Kraus and Stinespring representations correspond to each other \cite{Bridgeman:2016dhh}. Let us choose the initial state for $R$ a pure state $\rho_R=\dyad{0}$. Then,
\begin{equation}
\begin{aligned}
    \Lambda(\rho)&=\Tr_{AR} \left[U (\rho\otimes\sigma\otimes \rho_R) U^\dagger \right] \\[1.1ex]
    &= \includegraphics[height=3.5cm, valign=c]{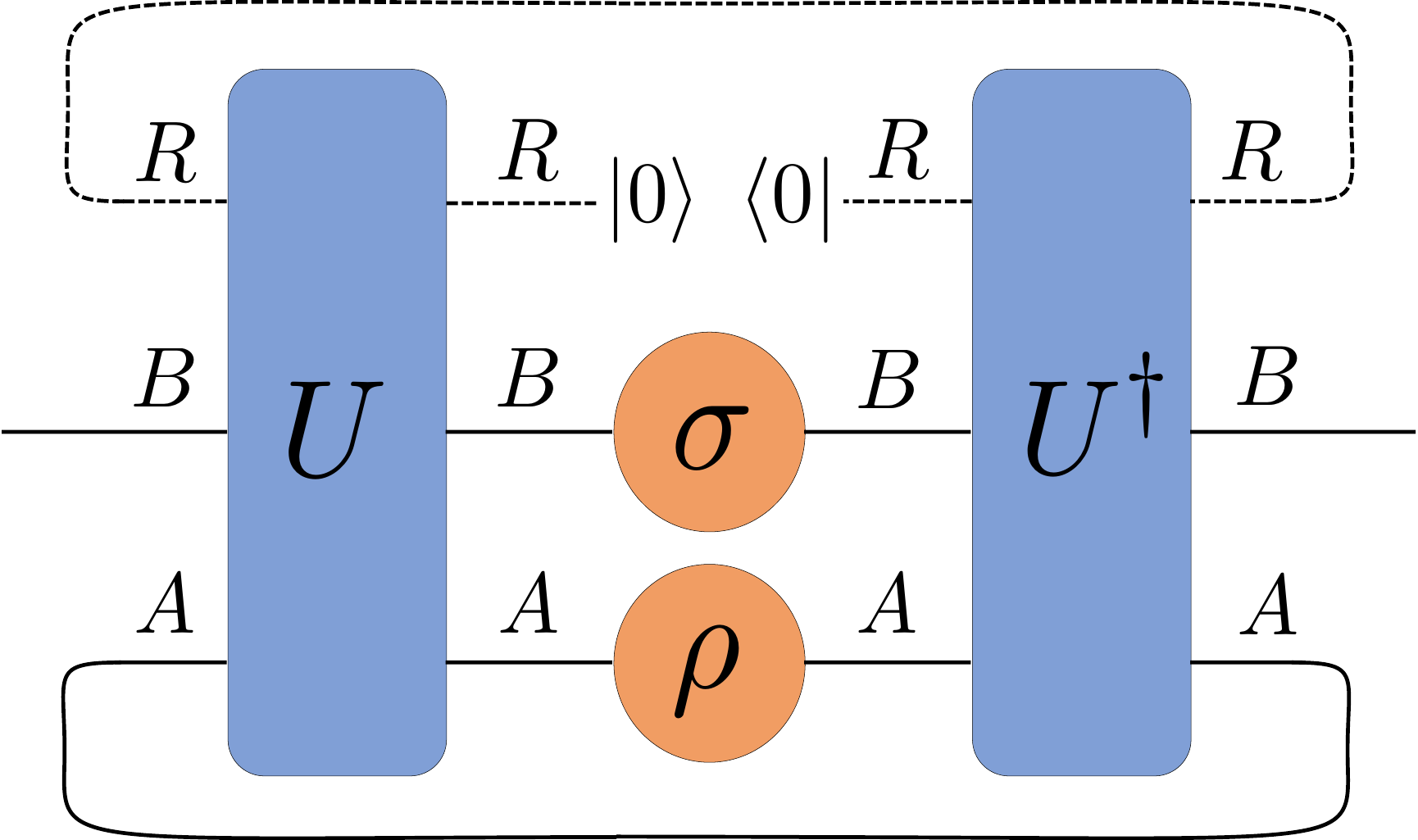} \\[1.1ex]
    &= \includegraphics[height=3.3cm, valign=c]{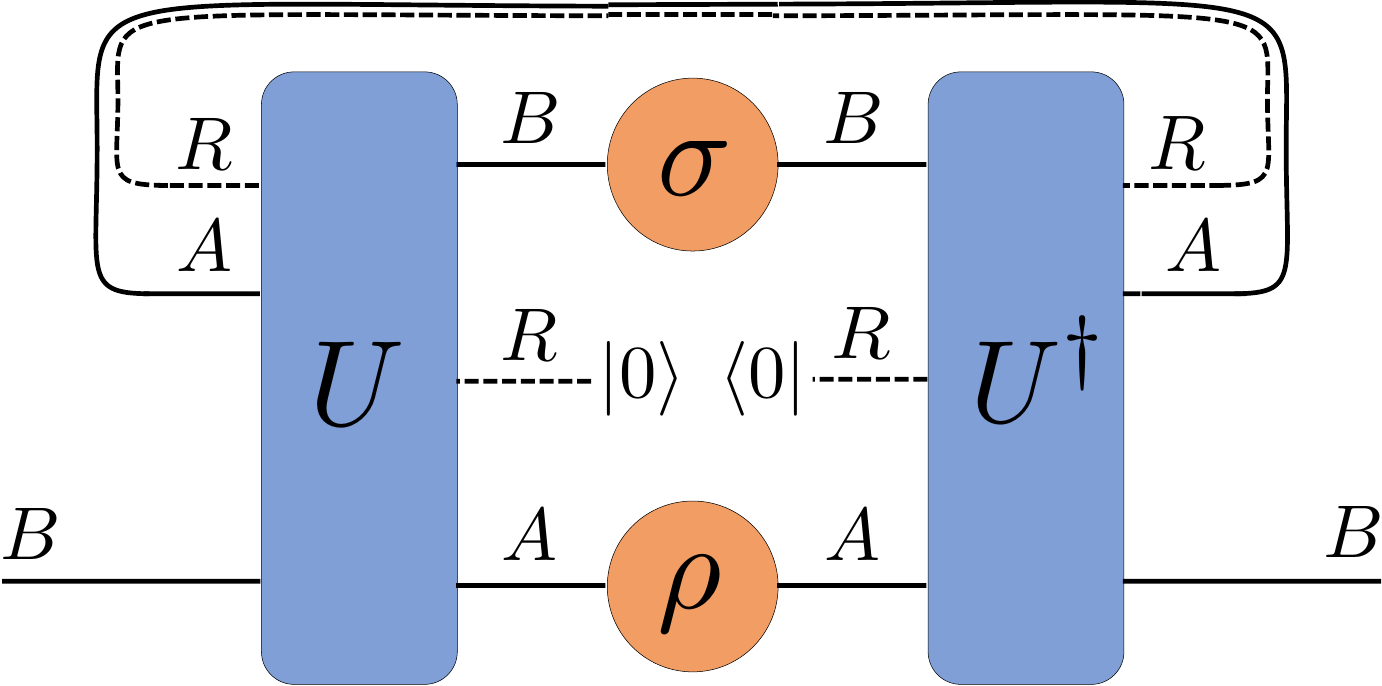} \\[1.1ex]
    &= \includegraphics[height=3.3cm, valign=c]{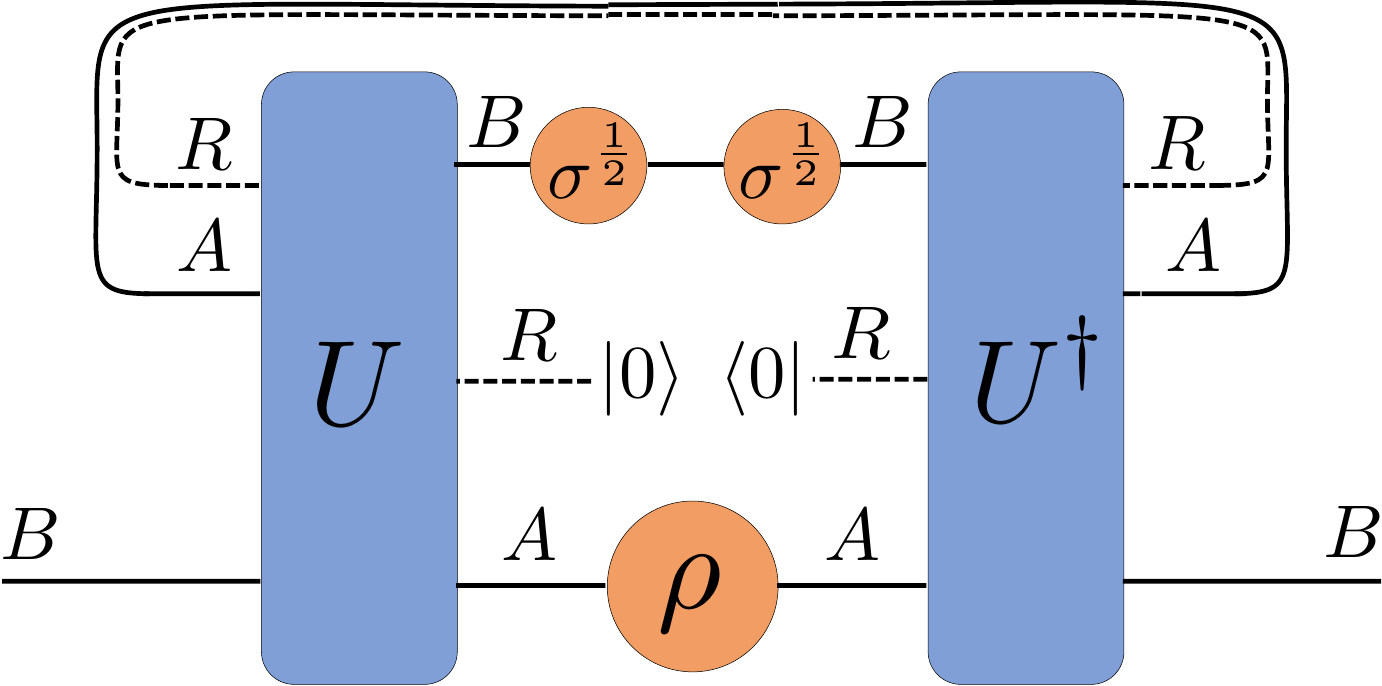} \\[1.1ex]
    &\equiv \sum_k V_k \rho V_k^\dagger.
    %= \includegraphics[height=3.3cm, valign=c]{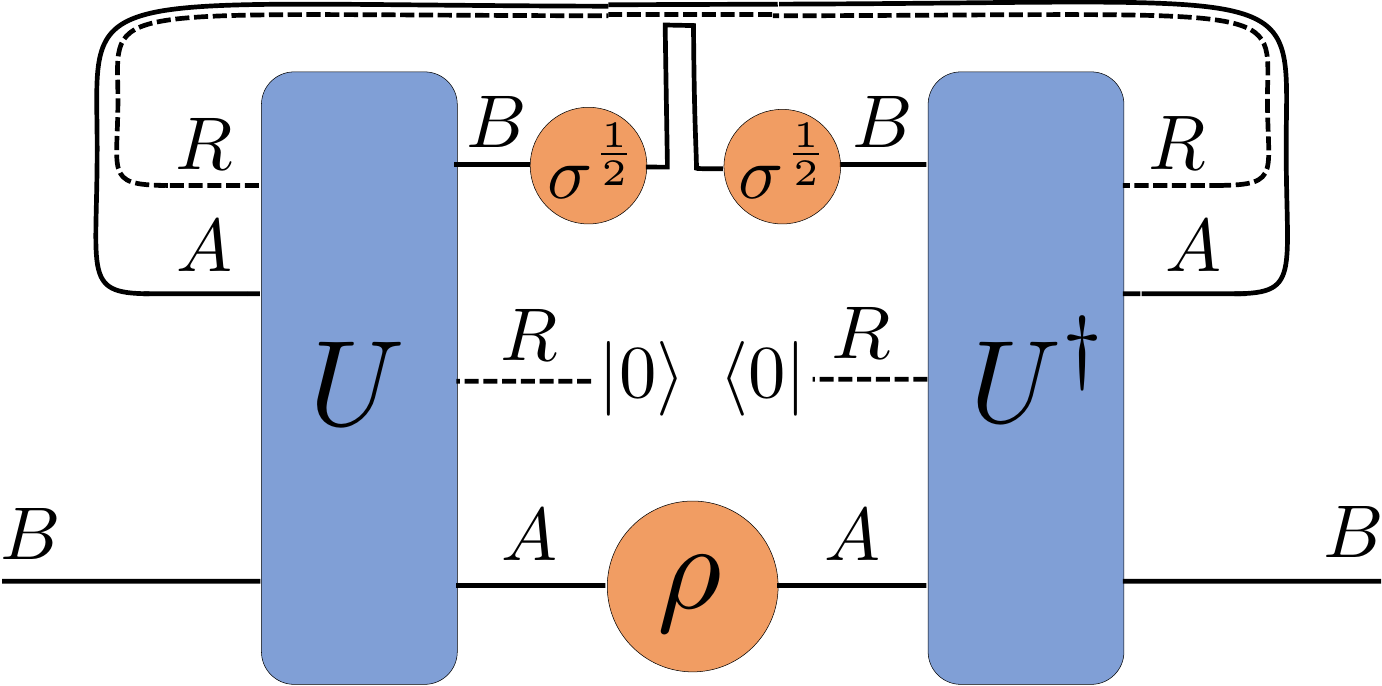}.
\end{aligned}
\end{equation}
We changed the order of legs in the third line and we split the density matrix in half in the fourth line. From this tensor network representation, we can straightforwardly obtain the corresponding Kraus operator
\begin{equation}
    V_k = \includegraphics[height=4cm, valign=c,clip]{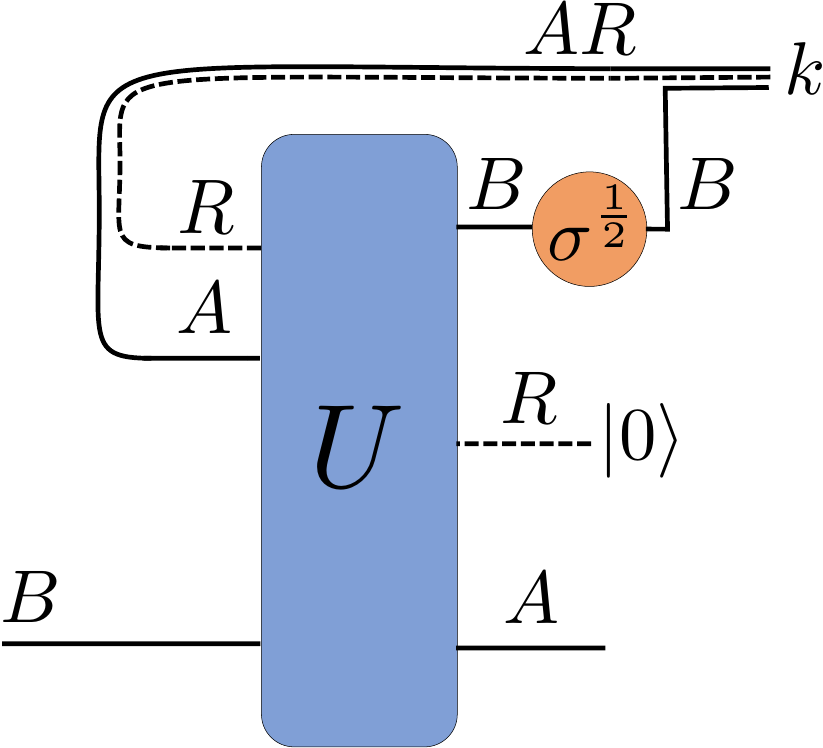}.
\end{equation}
The trace-preserving property is also evident since
\begin{equation}
\begin{aligned}
    \sum V_k^\dagger V_k &= \includegraphics[height=3.8cm, valign=c,clip]{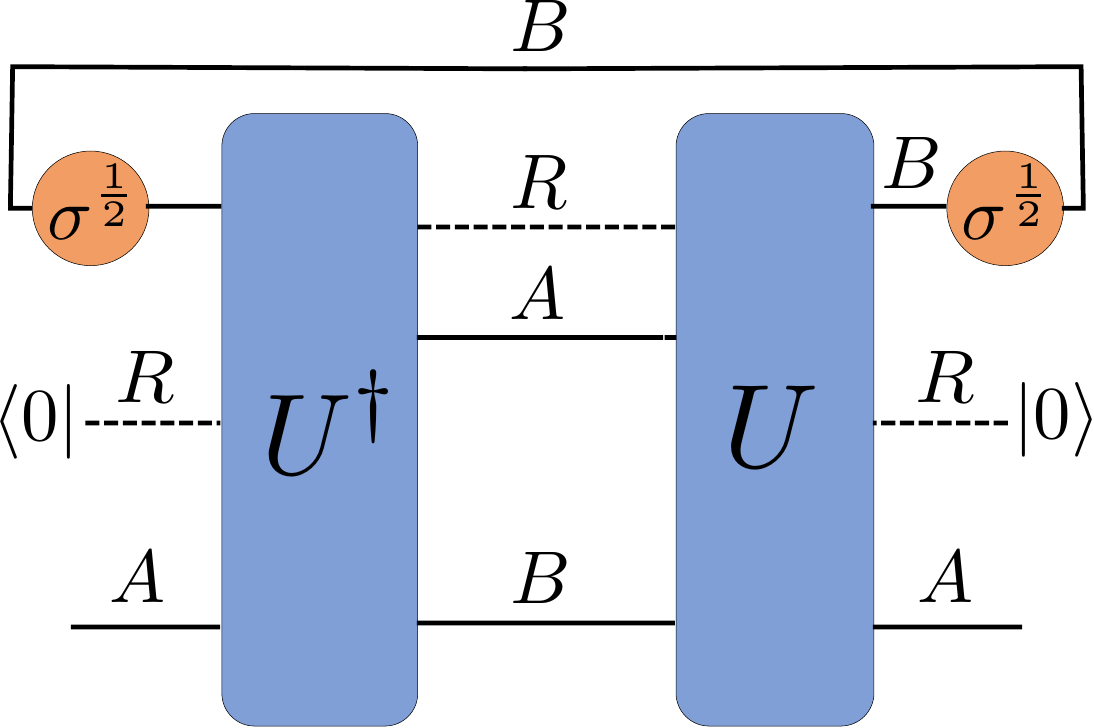}\\[1.5ex]
    &= \includegraphics[height=3.8cm, valign=c,clip]{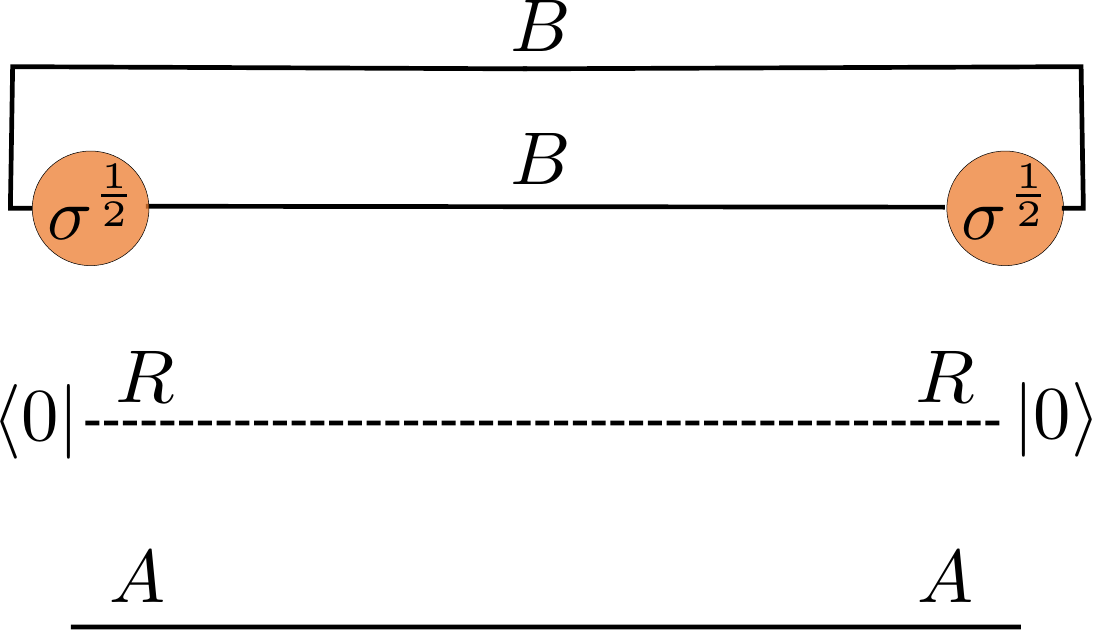}\\[1.5ex]
    &= \Tr\sigma \braket{0} \mathbf{1}_A= \mathbf{1}_A.
\end{aligned}
\end{equation}

\section{CP-instrument}\label{app:CP-inst}
We classified various measurement processes in Appendix \ref{app:meas}. We also briefly mentioned states after selective or non-selective measurements. To describe a state after an indirect measurement, or more generally, all possible measurements including time evolutions by CPTP maps, it is useful to introduce \textbf{CP-instruments} instead of a description by measurement operators.
For this purpose, it is essential to describe the measurement process as a CPTP map.

Suppose we have a CPTP map $\Lambda$ corresponding to a non-selective measurement \eqref{eq:non-select} whose outcomes are labeled by $\{m\}$. This indicates
\begin{equation}
    \Lambda=\sum_m \Lambda_m,
\end{equation}
where
\begin{equation}
    \Lambda_m(\rho)= \mathrm{Prob}(m|\rho) \rho_m.
\end{equation}
Note that even though $\Lambda$ is CPTP, $\Lambda_m$ is CP but not trace-preserving since $\Lambda_m(\rho)$ is an unnormalized density matrix. This type of maps is useful since it possesses affinity \eqref{eq:affine} (or linearity in its linear extension). Formally speaking, any measurement processes are described by a \textbf{CP-instrument}\index{CP-instrument}. It is a set of linear CP maps $\{\Lambda_m\}_m$ such that $\sum_m \Lambda_m$ is a CPTP map.

Using this, a non-selective measurement is specified by $\Lambda$ while a selective measurement is specified by $\Lambda_m$.

Since $\Lambda$ has a Kraus representation, an element of the CP-instrument $\Lambda_m$ also has the following decomposition
\begin{equation}
    \Lambda_m (\rho) = \sum_k V_k^{(m)} \rho V_k^{(m)\, \dagger}
\end{equation}
with $V_k^{(m)}:\mathcal{H}_A\rightarrow \mathcal{H}_B$
From the trace preservation,
\begin{equation}
    \mathbf{1}_A=\sum_m \sum_k V_k^{(m)\, \dagger} V_k^{(m)}. 
\end{equation}

%In this section, we explain so-called \textbf{CP-instruments}, where we can deal with all possible measurement processes with an arbitrary CPTP evolution in a coherent manner.

\chapter{Appendices for Chapter II}
\ifpdf
    \graphicspath{{Chapter1/Figs/Raster/}{Chapter1/Figs/PDF/}{Chapter1/Figs/}}
\else
    \graphicspath{{Chapter1/Figs/Vector/}{Chapter1/Figs/}}
\fi
\graphicspath{{./Chapter1/Figs/}}
\section{Entanglement entropy as thermal entropy}\label{app:thermal-EE}
When the subregion is a half space, the domain of dependence of the subregion is nothing but the \textbf{Rindler wedge}\index{Rindler wedge}. This choice of the subregion makes the computation of EE manageable as we can identify the replica parameter $n$ in the replica trick with the Rindler temperature $\beta$, which is a period of the imaginary time around the origin \cite{Nishioka_2007}. This is why EE of a half space is also known as the Rindler entropy\index{Rindler entropy}~\cite{Witten:2018xfj}. When the original Rindler wedge has the inverse temperature $\beta=2\pi$, its $n$-fold cover has the inverse temperature $n\beta$. The thermal entropy\index{thermal entropy} is given by
\begin{align}
S&=-\beta F+\beta E\nonumber \\
&=\log Z_1(\beta) +\beta \pdv{\beta} (\beta F)\nonumber\\
&=\log Z_1 (\beta)-\beta \pdv{\beta}\log Z_1(\beta) \nonumber\\
&=\left.\left[\log Z_n (\beta)-\beta \pdv{\beta}(n\beta)\pdv{(n\beta)}\log Z_n (\beta)\right]\right\vert_{n\rightarrow 1}.\nonumber
\intertext{Since $\beta$ is a constant,}
S&=\log Z_1 (\beta) -\left.\pdv{n} \log Z_n(\beta)\right\vert_{n\rightarrow 1}.
\end{align}
The last expression exactly equals that from the replica trick \eqref{e:EE_n}.

\section{The Euclidean partition function for various spin fields}\label{app:spin}
In this appendix, we calculate the 1-loop Euclidean partition function\index{1-loop Euclidean partition function} for scalar, vector, and fermionic fields, following~\cite{Kabat:1995eq}. The Euclidean partition function is given by \eqref{eq:part-fn-general}. We drop the saddle-point contribution in the following discussion. 

For a real scalar field with mass $m_0$,
\begin{equation}
    \hat{\mathcal{D}}=-\Box+m_0^2
\end{equation}
and 
\begin{equation}
    Z={\det} ^{-1/2} (-\Box+m_0^2).
\end{equation}

For the Abelian gauge field in the covariant gauge with an IR regulator $m_0$, the partition function is a product of two contributions. One comes from the spin-$1$ gauge field
\begin{equation}
    {\det}^{-1/2} \left[
    g^{IJ}(-\Box+m_0^2)-R^{IJ}+\left(1-\frac{1}{\xi}\right)\nabla^I\nabla^J
    \right] \quad (I,J=1,\cdots, d+1),
\end{equation}
where $R^{IJ}$ is the Ricci curvature tensor and $g^{IJ}$ is the metric tensor.
The other comes from the Faddeev-Popov ghost %given by
\begin{equation}
    {\det} \left[-\Box+m_0^2
    \right].
\end{equation}
When there is no curvature contribution, the partition function is just that of $(d-1)$ ($(d+1)$ without ghosts) scalar fields, corresponding to the $(d-1)$ ($(d+1)$ without ghosts) physical polarizations of the gauge field. When the curvature contributes, the free energy differs by the nonzero spin contribution.

For a fermionic field with mass $m_0$,
\begin{equation}
    Z=\int \mathcal{D}\psi\mathcal{D}\Bar{\psi} e^{-\int \Bar{\psi}(i\slashed{\partial}-m_0)\psi}=\det(i\slashed{\partial}-m_0).
\end{equation}
Since $(i\slashed{\partial}-m_0)$ is Hermitian,
\begin{align}
    Z&= {\det}^{1/2}\left[(-i\slashed{\partial}-m_0)(i\slashed{\partial}-m_0)\right] \nonumber\\
    &= {\det}^{1/2}\left[\frac{1}{2}\acomm{\gamma^I}{\gamma^J}\partial_I\partial_J +m_0^2\right].
\end{align}
Since the gamma matrices satisfy the following Clifford algebra in the Euclidean signature
\begin{equation}
    \acomm{\gamma^I}{\gamma^J}=-2\delta^{IJ},
\end{equation}
the partition function equals
\begin{equation}
    Z={\det}^{1/2}(-\Box+m_0^2).
\end{equation}

\nomenclature[a-dim]{$g_{\mu\nu}$}{a metric tensor}
\nomenclature[a-dim]{$R_{\mu\nu}$}{a Ricci tensor}
\nomenclature[g-x]{$\xi$}{the covariant gauge}                                             
% first letter G is for Greek Symbols
\nomenclature[g-g]{$\gamma^I$}{the gamma matrix}                                             
% first letter G is for Greek Symbols
\nomenclature[x-l]{$\Box$}{the Laplace-Beltrami differential operator} % first letter X is for Other Symbols
\nomenclature[a-dim]{$\hat{\mathcal{D}}$}{the quadratic coupling in the Euclidean action} % first letter X is for Other Symbols
\nomenclature[a-dim]{$G$}{a propagator}
\nomenclature[x-n]{$\nabla$}{the covariant derivative} % first letter X is for Other Symbols

\section{Spinor representation of $SO(2)\subset SO(d+1)$ and the number of degrees of freedom for each spin}\label{app:spinor}
In this appendix, we present how the Dirac fermion transforms under the $SO(2)\subset SO(d+1)$ rotation around the $x_\parallel$ axis. After the Wick rotation, we denote the gamma matrices for $x_\perp$, $x_\parallel$, $\tau$ by $\gamma_1$, $\gamma_2$ to $\gamma_d$, $\gamma_{d+1}$, respectively.
%on the two-plane perpendicular to $x_\parallel$ by $\gamma_1$ and $\gamma_{d+1}$. 

%Let us consider the $SO(2)\subset SO(3)$ case. We call $x_\perp, \tau, x_\parallel$ the x,y,z-direction, respectively. The gamma matrices are defined as $\gamma_I=i\sigma_I$ ($I=x,y,z$ in this case). The $\theta$-rotation matrix around the z-axis is given by
%\begin{equation}
%    e^{-\frac{i}{2}\theta \sigma_z}= e^{\frac{i}{2}\theta \gamma_x \gamma_y}=e^{i\theta \frac{\comm{\gamma_x}{\gamma_y}}{4}}.
%    \label{eq:spinor-trf}
%\end{equation}
%The last equality follows from $\acomm{\gamma_x}{\gamma_y}=0$.
%\eqref{eq:spinor-trf} is indeed the spinor representation of $SO(3)$.
The $SO(d+1)$ rotation is driven by
\begin{equation}
    \exp\left[{-\frac{i}{2}\theta_{IJ}S^{IJ}}\right],\quad S^{IJ}=\frac{i}{4}\comm{\gamma^I}{\gamma^J}=\frac{i}{2}\gamma^I\gamma^J.
\end{equation}
Then, for the $SO(2)\subset SO(d+1)$ rotation by $\theta$, it becomes
\begin{equation}
    \exp\left[{\frac{1}{2}\theta \gamma_1 \gamma_{d+1}}\right].
\end{equation}
By taking $\gamma_1=i\sigma_x$ and $\gamma_{d+1}=i\sigma_y$,\footnote{This is valid as long as we consider the 2-rotation.}
we obtain the familiar $SU(2)$ element
\begin{equation}
    \exp\left[{-\frac{i}{2}\theta \sigma_z}\right].
\end{equation}
%The eigenvalues of $\gamma_1 \gamma_{d+1}/2$ are $s=\pm 1/2$.

Note that when $\theta=2\pi n/M$,
\begin{equation}
    \exp\left[{\frac{\pi n}{M} \gamma_1 \gamma_{d+1}}\right].
    \label{eq:fermion-factor}
\end{equation}

In the calculation of EE, we need to count the number of components $N_{dof}$ for each spin after the irreducible decomposition~\cite{He:2014gva}.
$N_{dof}$ for fermions is the number of Majorana fermions. For real scalars, $N_{dof}$ is just the number of fields. For a $(d+1)$-dimensional U(1) gauge field without ghosts, $A_I$ transforms as a vector for $I=1,d+1$ and scalar for $I=2,\cdots,d$. Thus, $N_{dof}=d+1$ while spins are $s_1=1,s_{d+1}=-1$ and otherwise zero. For the spin-2 tensor $g_{IJ}$, it involves the rank-2 tensor $g_{ab}$, vectors $g_{ai}$, and scalars $g_{ij}$ under $SO(2)$, where $a=1,d+1$ and $i=1,\cdots,d$. The rank-2 tensor is further decomposed into the sum of a traceless symmetric part ($s=\pm 2$) and a scalar proportional to the identity ($s=0$). The vectors $g_{ai}$ have spins $s=\pm 1$ and $N_{dof}=d-1$ for each $s$. The scalars $g_{ij}$ have $N_{dof}=(d-1)d/2$. In total, the number of components for each spin is
\begin{equation}
    \abs{s}=2 : 2,\quad \abs{s}=1 : 2(d-1), \quad s=0: \frac{d^2-d+1}{2}.
\end{equation}

\section{Area laws for R\'{e}nyi entropy and entanglement capacity}
\label{appenarea}
In this appendix, we show that the area law\footnote{
	In general, the \textbf{area law}\index{area law} means $O(|\partial A|)$; the quantity scales at most as the area~\cite{Eisert:2008ur}.
} $S_{EE}\propto \mathrm{vol}(\partial A)$ ($\sim V_{d-1}$ in our setup) holds at the level of \textbf{R\'{e}nyi entropy}\index{R\'{e}nyi entropy} $S_n\equiv\frac{1}{1-n}\log \Tr \rho_A^n$. 
%We can apply the same discussion performed in our previous papers~\cite{Iso:2021vrk,Iso:2021rop} deriving the area law for EE.
To apply the orbifold method introduced in {Section \ref{s:ZM}}, we rewrite $S_{n=1/M}$ in terms of free energy $F^{(M)}$ on $\mathbb{R}^2/\mathbb{Z}_M\times \mathbb{R}^{d-1}$:
\begin{equation}
	S_{1/M}=\frac{1}{M-1}\left(F^{(1)}-MF^{(M)}\right).
	\label{eq:renyi}
\end{equation}
Based on the $\mathbb{Z}_M$ gauge theory on Feynman diagrams introduced in {Section \ref{s:ZM-gauge}}, the number of independent twists is given by the number of loops $L$ in the Feynman diagrams even though every propagator is originally twisted. Let us denote the number of initial twists or equivalently the number of propagators by $P$. Twists other than the independent ones can be eliminated by the redundancy at vertices. The number of such redundant twists is given by $P-L=V-1$, where the number of vertices is denoted by $V$. As a result, the $1/M$ factor from each vertex is almost canceled except one by the trivial summation for $(V-1)$ redundant twists. Furthermore, the overall momentum conservation yields the sum of $L$ twisted momenta equal to the original one. In short, any Feynman diagrams contributing to $F^{(M)}$ are expressed as
\begin{align}
	\frac{V_{d-1}}{M}\sum_{\{m\}}\int\prod_{l=1}^L\left[\frac{d^2\bm{p}_l}{(2\pi)^2}\right]I(\{\bm{p}\};\{m\})
	\delta^2 \left(\sum_{l=1}^L (1-\hat{g}^{m_l}) \bm{p}_l \right),
	\label{eq:area-law}
\end{align}
where  
$\sum_{\{m\}}$ is a summation over all twists; each from $0$ to $M-1$. $I(\{\bm{p}\};\{m\})$ is some function of momenta and twists.

When all $m$'s are zero, no momenta are twisted. Such diagrams constitute nothing but $F^{(1)}/M$. Although this contribution in $F^{(M)}$ is proportional to $V_{d+1}$ and seemingly violates the area law, it is canceled in $S_{1/M}$ in 
\eqref{eq:renyi}. Other configurations of twists include at least one nonzero twist. As a result, the argument of the delta function in \eqref{eq:area-law} is always nonzero and it combined with $I$ carries a nontrivial dependence in $M$ after the summation over twists. Unless an explicit calculation is done, we do not know the precise $M$ dependence of \eqref{eq:area-law}
or $S_{1/M}$. Nevertheless, since terms contributing to $S_{1/M}$ always have nonzero arguments of the delta function, there is no more volume factor other than $V_{d-1}$. 
If $M$ can be analytically continued to $M=1/n$, this completes the proof of the area law\index{area law} for R\'{e}nyi entropy $S_{n}$. 

The proof above only depends on the technique of Feynman diagrams. Thus the area law for R\'{e}nyi entropy is proven for any locally interacting QFTs, given a half space as a subregion.

It is worthwhile to note that the area law\index{area law} for R\'{e}nyi entropy immediately implies the area law for the \textbf{capacity of entanglement}\index{capacity of entanglement}~\cite{PhysRevLett.105.080501,deBoer:2018mzv},
\begin{align}
	C_A&\equiv \lim_{n\rightarrow 1} n^2 \pdv[2]{n}\log \Tr \rho_A^n\nonumber\\
	&=\left.\pdv[2]{n}\left[(1-n)S_n\right]\right\vert_{n\rightarrow 1}\nonumber\\
	&={2}\left.\pdv{S_{1/M}}{M}\right\vert_{M\rightarrow 1}
\end{align}
as well as EE since $C_A$ is linear in R\'{e}nyi entropy. Since $C_A$ is alternatively written as the fluctuation of the modular Hamiltonian $-\log\rho_A$, it is more sensitive to the change of dominant contributions in the replicated geometry and recently discussed in the context of the black hole evaporation~\cite{Kawabata:2021hac,Okuyama:2021ylc,Kawabata:2021vyo}. It is interesting if we can compute such quantities in interacting theories and follow the behavior of higher orders in $M$.

Although the area law itself is intuitive for physicists as entanglement across the boundary $\partial A$ should be dominant for any local QFTs, %its proof
{the proof of this} is difficult; a general proof is known only for gapped systems in $(1+1)$ dimensions~\cite{Hastings:2007iok}; %Since EE at the one-loop level is given in terms of the propagator, 
the area law at the one-loop level is derived through a modified propagator~\cite{Solodukhin:2011gn,Nesterov:2010yi}. It is remarkable that we can show the area law of both EE and R\'{e}nyi entropy in any locally interacting theories to all orders in the perturbation theory.

As a further generalization, it is intriguing to relax several assumptions and see how the EE and R\'{e}nyi entropy deviates from the area law. In our setup, $\partial A$ is smooth, the interactions are local, and the system is translationally invariant.  %It is known in some cases that if the entangling surface $\partial A$ has a singular geometry, the area law gets violated (see~\cite{Bueno:2019mex} for example). 
%It is known in some cases not satisfying the above features, the area law gets violated.
{Some cases are known where the above features are not satisfied and the area law is violated.} For example, when the entangling surface $\partial A$ has a singular geometry, a logarithmic correction appears (see~\cite{Bueno:2019mex} for an example). 
{For (non-)Fermi liquid theories~\cite{Ogawa:2011bz}, another logarithmic violation of the area law is known.} For nonlocal~\cite{Shiba:2013jja} or non-translationally invariant~\cite{Vitagliano:2010db,Ram_rez_2014} systems, the volume law instead of the area law of EE has been confirmed. {To see the transition from the area law to the volume law, Lifshitz theories~\cite{He:2017wla,MohammadiMozaffar:2017nri,Gentle:2017ywk} might be an interesting playground as it possesses nonlocal features in some limit.}

\section{Interpretation of a twisted propagator in position space}
\label{app:twist}
In Section \ref{sec:int-orb}, we discussed a twisted propagator\index{twisted propagator} in the position and momentum spaces. The physical interpretation of the twisted propagator becomes clearer in the position space compared to the momentum space.
We demonstrate below that a twisted propagator is  pinned at the boundary. 
For this purpose, 
it is convenient to introduce the center-of-mass and relative coordinates: $X=(x+y)/2$, $r=x-y$. 
A twisted propagator with $m\neq 0$ in the position space is written as
\aln{
G_{0} (\hat{g}^m x -y)  &=  
G_{0} ( \hat{g}^{m/2} \bm{x} - \hat{g}^{-m/2} \bm{y} ; r_\parallel) % \nonumber \\ 
= G_0 (\cos \theta_m \bm{r} + 2 \sin \theta_m (\epsilon \bm{X}) ; r_\parallel) 
\nonumber \\
&= e^{\cot \theta_m \hat{R}_{\bm{X}} /2} G_0 (2 \sin \theta_m  \bm{X} ;  r_\parallel) \nonumber\\
&=\frac{e^{\cot \theta_m \hat{R}_{\bm{X}} /2} }{4 \sin^2 \theta_m}  
  \int \frac{ d^{d-1} k_\parallel}{(2\pi)^{d-1}}
	\frac{	e^{ i k_\parallel \cdot r_\parallel}}{\left(-\partial^2_{\bm{X}}/4 \sin^2 \theta_m \right) + M^2_{k_\parallel}} 
\delta^2 (\bm{X}) ,
\label{twistedpropagator}
}
where 
\aln{
[\epsilon \bm{X}]_i &= \sum_j \epsilon_{ij}X_j \qq{($\epsilon_{ij}$: {the two-dimensional Levi-Civita symbol}),} \\
\hat{R}_{\bm{X}}&=\bm{r}\cdot(\epsilon \partial_{\bm{X}}), \  \
\theta_m= \frac{m\pi}{M},  \  \
M_{k_\parallel}^2 =k_\parallel^2+m^2_{0}.
}
\eqref{twistedpropagator} can be written in  a derivative expansion  on the delta function
with respect to $\partial^2_{\bm{X}}/M^2_{k_\parallel}$. 
When we consider a diagram with a single twist $m$ on a  %either of the 
propagator, it is formally written as
\aln{
\int d^{d+1}x\, d^{d+1}y\, G_0(\hat{g}^mx-y)F(r),\qq{where} r=x-y.
}
The integrand other than the twisted propagator only depends on $r$ due to the translational invariance.
%It is due to the translation invariance in the part other than the twisted propagator. 
With \eqref{twistedpropagator} and the partial integration, we can drop all the $\partial_{\bm{X}}$ in the expression. 
Therefore, in this case, we can replace the propagator in the diagram as
\aln{
G_0(\hat{g}^mx-y)&\rightarrow\frac{1}{4\sin^2\theta_m}\int\frac{d^{d-1}k_\parallel}{(2\pi)^{d-1}}
\frac{e^{ik_\parallel\cdot r_\parallel}}{M_{k_\parallel}^2}
\delta^2(\bm{X})\nonumber\\
&=\frac{1}{4\sin^2\theta_m}G_0^{\text{bdry}}(r_\parallel)\delta^2(\bm{X}).
\label{e:efftwistprop}
}
%Here, 
$G_0^{\text{bdry}}$ is an ordinary propagator but 
its propagation is restricted only to the directions parallel to the boundary. Note that the area law of EE $S_A\propto V_{d-1}$ can be naturally understood from the integration of the free transverse direction.

Now the physical meaning of \eqref{e:efftwistprop} is clear. Since the boundary of the subregion rests at the origin of the orbifold, %the twisted propagator has its 
the midpoint $\bm{X}$ of the twisted propagator is constrained on the boundary. Note that the propagator itself is not trapped on the boundary since the relative coordinate ${\bm r}$ is not constrained at all.   Rather, ${\bm r}$-dependence 
completely disappears from the twisted propagator.
Hence, it can be seen as a ``pinned propagator'' with  two loose endpoints on $A$ and $\bar{A}$.
This shows that the twisted propagator reflects a 
correlation between two points that are symmetrically distant from the boundary (Fig.\ref{f:pinnedprop}). 
In this sense, we can identify contributions to EE from a single twisted propagator as the
quantum correlation of two-point functions.  
%%%%%%%%%%%%%%%%%%%%%%%%%%%%%%%%%%%%%%%%
\begin{figure}[h]
\centering
\includegraphics[width=8cm]{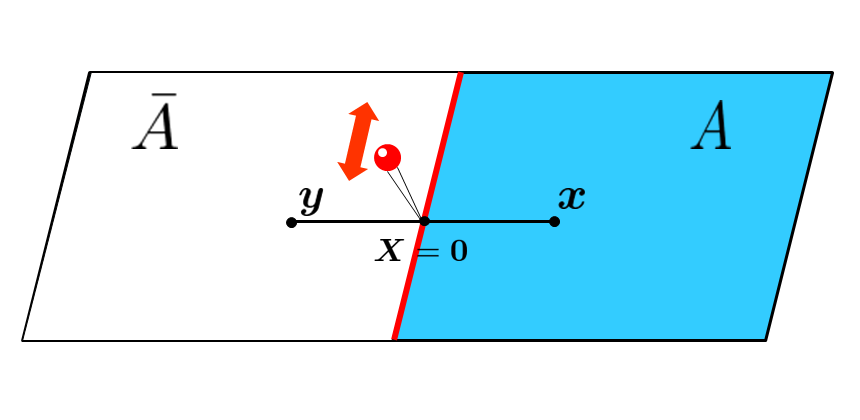}
\caption{
An illustrative picture of a propagator pinned on the boundary. Its midpoint $\bm{X}=(\bm{x}+\bm{y})/2$ is constrained on the boundary $\partial A$
while two end points move freely.
}
\label{f:pinnedprop}
\end{figure}
%%%%%%%%%%%%%%%%%%%%%%%%%%%%%%%%%%%%%%%%%%%

%%%%%%%%%%%%%%%%%%%%%%%%%%%%%%%%%%%%%%%%%%%

%%%%%%%%%%%%%%%%%%%%%%%%%%%%%%%%%%%%%%%%%%%%%
\section{Proof of the EE formulae of the vertex contributions}
\label{appencomp}
In Section \ref{s:vertexEE} and its subsequent sections, we have used the general formula for the vertex contributions\index{vertex contributions} to EE, such as \eqref{e:EEcomp}. 
In this appendix, we prove that this formula gives all the contributions of a single vertex twist. 
In the case of the propagator contributions, the general formula is given by {\eqref{e:EE2PIprop}} 
and the proof that all the single twist contributions are summarized by  the 1-loop expression is given 
in the 2PI framework as discussed in Section \ref{sec:prop-2PI}. 
For the vertex contributions, when auxiliary fields can be introduced, the proof is same, however, 
in general cases when various channels in the opened vertices are mixed, we need a different proof.
In this appendix, we give a diagrammatic proof. 

First, let us remind of the redundancies of assigning the flux $m$ of the plaquette %in the 1-loop diagram in Fig.\ref{Fig2}
to a twist of the propagators in the 1-loop diagram in {Fig.\ref{Fig2}}. In this case, due to the $\mathbb{Z}_M$ gauge invariance at each vertex 
connecting propagators, the flux can twist only one of the propagators; not more than one like \eqref{eq:1-loop-overcount}, 
and this gives the coefficient $1/n$  in the expansion of {\eqref{e:EE2PIprop}}. 
The same happens for the vertex contributions. 
The configurations illustrated in {Fig.\ref{f:verttwist}} are interpreted as the  
vertex contributions to EE, but  a similar redundancy will occur when 
the corresponding composite operators form a 1-loop type diagram. 
Thus, in order for the proof, we will take the following two steps: 
(i) summing all the vertex contributions as if all of them are independent
and  then, (ii) taking account of the redundancies to obtain the correct vertex contributions. 
This two-step proof  shows that only the 1-loop type contributions in {\eqref{EE-generalform2}} survive. 
The proof is similar to the one based on the 2PI formalism.

Let us begin with the fundamental relation between the free energy and an $n$-point vertex. 
Suppose that we have an $n$-point interaction vertex whose action is given by
\ga{
	(\text{action})=\frac{1}{2}\int d^{d+1}x\,\phi G_0^{-1}\phi+\cdots+
	\frac{\lambda_n}{n} \int \prod_{i=1}^n d^{d+1}x_i 
	\, V_{n0}(x_1,\cdots,x_n)\phi(x_1)\cdots\phi(x_n),\\
	V_{n0}(x_1,\cdots,x_n)=\int d^{d+1}y\,\prod_{i=1}^{n}\delta^{d+1}(y-x_{i}).
}
Then, we have the equation
\aln{
	\frac{\delta F}{\delta V_{n0}(x_1,\cdots,x_n)}=\frac{\lambda_n}{n}\langle\phi(x_1)\cdots\phi(x_n)\rangle ,
	\label{nptfunc} 
} 
where  $F$ is the free energy and the right-hand side is the exact $n$-point function multiplied by the coupling constant. 

In order to evaluate the EE contributions from twisting vertices, let us first sum all the contributions
as if  they were  independent.  This can be done by taking a variation 
of bubble diagrams (free energy) with respect to the tree-level interaction vertex, 
and then reconnecting the endpoints by a set of free propagators as in the leftmost figure in Fig.\ref{compredundancy}.
%%%%%%%%%%%%%%%%%%%%%%%%%%%%%%%%%%%%%%%%%%  
\begin{figure}[b]
	\centering
	%\hspace*{-1.2cm}
	\includegraphics[width=15cm]{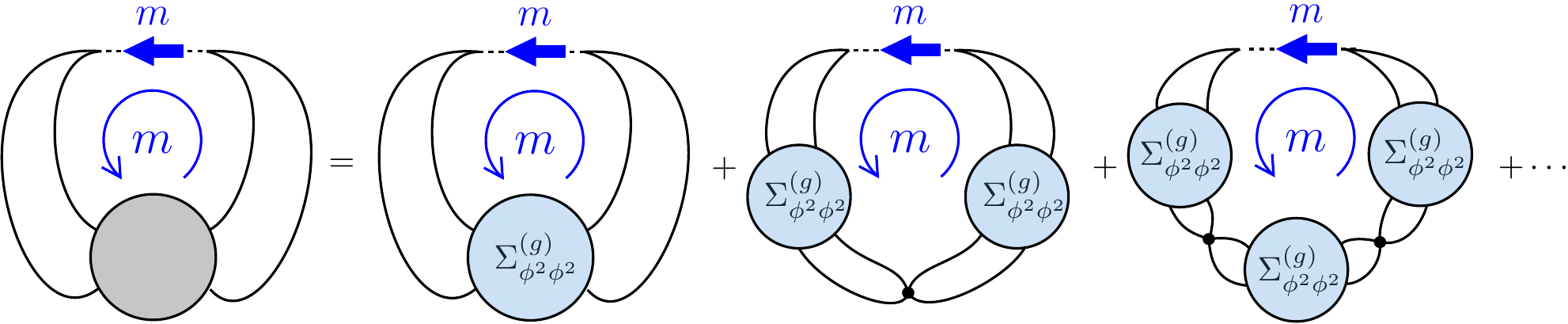}
	\caption{Graphical expression of \eqref{EEvertnaivek} and \eqref{compexpand} for $k=2$  in the $\phi^4$ theory. 
		%An example for twist configurations which would correspond to \eqref{compexpand} with $[\phi^2]$ in the $\phi^4$ theory. 
		The gray blob on the left-hand side is the exact four-point correlation function. The dotted line denotes a
		twisted delta function to open the vertex. The twist of the diagram is given by the flux $m$ in the center circle. 
		%essentially stems from the center circle. 
		On the right-hand side, the twist is made associated with the propagator of a composite operator in the opened vertex. 
		%be associated alternatively to opening either of the nodal vertices between the g-1PI parts. 
		If we {would} open all the vertices on the circled line and take all the contributions to EE, 
		it would give an overcounting of EE. 
		%We have to take just one of them, otherwise we confront the overcounting.   
	}
	\label{compredundancy}
\end{figure}
%%%%%%%%%%%%%%%%%%%%%%%%%%%%%%%%%%%%%%%%%%
%The choice of the organizing is equivalent to those of opening a vertex. For example, $\phi^6\to(\phi^2,\phi^4),\,(\phi^3,\phi^3)$ and so on. One obtains the vertex corrections by inserting the summation of twisted delta functions into the opened vertex, and taking the $M$-derivative. 

According to \eqref{nptfunc}, 
if we naively summed all the contributions to EE from 
opening all the $\lambda_n$-vertices, EE would be given by
\aln
{
	S_{\lambda_n}^{\text{(naive)}}=&-\int d^{d+1}x\,d^{d+1}y\,\partial_M\left(\sum_{m=1}^{M-1}\delta^{d+1}(\hat{g}^m x-y)\right)\Bigg|_{M\to1}\,\nonumber\\
	&\times\frac{\lambda_n}{n}\frac{1}{2}\Bigl\{C^{(n)}_{2,n-2}\,\langle\,[\phi^2](x)\,[\phi^{n-2}](y)\,\rangle+C^{(n)}_{3,n-3}\,\langle\,[\phi^3](x)\,[\phi^{n-3}](y)\,\rangle+\cdots\Bigr\}\nonumber\\
	=&-\frac{V_{d-1}}{12}\int\frac{d^{d-1}k_\parallel}{(2\pi)^{d-1}}\frac{\lambda_n}{n}\Bigl\{C^{(n)}_{2,n-2}\,G_{\phi^2\phi^{n-2}}(\bm{0};k_\parallel)+C^{(n)}_{3,n-3}\,G_{\phi^3\phi^{n-3}}(\bm{0};k_\parallel)+\cdots\Bigr\},
	\label{EEvertnaive}
}
where 
$C^{(n)}_{2,n-2}$, $C^{(n)}_{3,n-3}$, $\cdots$ are combinatorial factors 
to reconnect the endpoints. Endpoints can be decomposed into two sets as in Fig.\ref{compredundancy}
and then regarded as a composite operator. 
% for the way to form the  composite operators by organizing the endpoints of the $n$-point function. 
In the definition of $C^{(n)}_{2,n-2}$ etc., 
we distinguish the left and right sets of  endpoints, $x$ and $y$, for later convenience, 
and thus divide by 2 in the second line of \eqref{EEvertnaive}
to avoid an overcounting. 

Here note that
we should not include a decomposition of  $n$ endpoints $\phi^n$ into $(\phi,\phi^{n-1})$ %$\phi^n\to(\phi,\phi^{n-1})$
because it does not correspond to opening a vertex, rather it generates a non-1PI diagram in the 
ordinary sense. %, and 
Such contributions would  lead to an overcounting of the propagator contributions. 
Also, note  that we should not consider a reconnection of $n$ endpoints 
in which some of them do not participate in the propagation of the composite operator;
e.g. a diagram such that a pair of endpoints forms a closed loop and the other $n-2$ endpoints are
decomposed into two sets to form the propagator of the composite operator.  
This kind of diagram 
is  absorbed into the renormalizations of the coupling constant $\lambda_{n-2}$. 

Now let us go to step 2 to obtain the correct vertex contributions. 
$S_{\lambda_n}^{(\text{naive})}$ is not the correct one because of the redundancies we neglected.  
Let us consider the effects of redundancies  separately for each type of composite operator  in \eqref{EEvertnaive}. 

For simplicity, let us consider the $G_{\phi^k\phi^k}$-type contribution in \eqref{EEvertnaive}
which  emerges from a $2k$-point vertex by decomposing into 2 sets of $k$ and $k$. 
We simply  assume it is not mixed with other operators here. 
The simplest example  is $[\phi^2]$ in the $\phi^4$ theory, as described in Section \ref{s:vertexEE}. 
If the redundancies were neglected, 
the contribution to EE from this operator naively would take the form
\aln{
	\left.S_{\lambda_{2k}}^{\text{(naive)}}\right|_{\phi^k\phi^k}
	=-\frac{V_{d-1}}{12}\int\frac{d^{d-1}k_\parallel}{(2\pi)^{d-1}}\,\frac{\lambda_{2k}}{2k}\,C^{(2k)}_{k,k}\,G_{\phi^k\phi^k}(\bm{0};k_\parallel). 
	\label{EEvertnaivek}
}
Reflecting the twisting of the composite operator, 
the Green function is restricted to $\bm{k}=\bm{0}$ modes. 
The Green function can be expanded with respect to the g-1PI\index{{g-1PI}} self-energy {introduced in Section \ref{s:vertexEE}} as\footnote{
	For a consistent expansion, we have defined the combinatorics factors ($C$'s) by distinguishing the two endpoints.}
\aln{
	G_{\phi^k\phi^k}=\Sigma^{(g)}_{\phi^k\phi^k}&+\Sigma^{(g)}_{\phi^k\phi^k}\left(-\frac{\lambda_{2k}}{2k}C^{(2k)}_{k,k}\right)\Sigma^{(g)}_{\phi^k\phi^k}\nonumber\\
	&+\Sigma^{(g)}_{\phi^k\phi^k}\left(-\frac{\lambda_{2k}}{2k}C^{(2k)}_{k,k}\right)\Sigma^{(g)}_{\phi^k\phi^k}\left(-\frac{\lambda_{2k}}{2k}C^{(2k)}_{k,k}\right)\Sigma^{(g)}_{\phi^k\phi^k}+\cdots.
	\label{compexpand}
}
The correct formula must take the redundancies caused by $\mathbb{Z}_M$ gauge invariance
into account. Such redundancies occur in \eqref{compexpand} when there are 
more than one $\Sigma^{(g)}_{\phi^k\phi^k}$ as shown in  Fig.\ref{compredundancy}. 
The coefficients of these terms in \eqref{compexpand} overcount the effects of the twist. 
%The summation of them is therefore overcounting. In fact, we have considered opening all the vertices by considering the variation (\ref{nptfunc}). It means that \eqref{EEvertnaivek} has overcounted contributions in non g-1PI diagrams. In particular, the term consisting of $m$ g-1PI parts has acquired the extra factor of $m$. 

The resolution to avoid overcounting   is simple. 
For the term consisting of $m$ g-1PI parts in \eqref{EEvertnaive}, we should divide it by $m$ (the divide-by-multiplicity method). 
Consequently, 
by replacing $G_{\phi^k\phi^k}$ in the naive estimation \eqref{EEvertnaivek} with
\aln{
	\Sigma^{(g)}_{\phi^k\phi^k}&+\frac{1}{2}\,\Sigma^{(g)}_{\phi^k\phi^k}\left(-\frac{\lambda_{2k}}{2k}C^{(2k)}_{k,k}\right)\Sigma^{(g)}_{\phi^k\phi^k}\nonumber\\
	&+\frac{1}{3}\,\Sigma^{(g)}_{\phi^k\phi^k}\left(-\frac{\lambda_{2k}}{2k}C^{(2k)}_{k,k}\right)\Sigma^{(g)}_{\phi^k\phi^k}\left(-\frac{\lambda_{2k}}{2k}C^{(2k)}_{k,k}\right)\Sigma^{(g)}_{\phi^k\phi^k}+\cdots,
	\label{complogexpand}
}
we get the correct contributions to EE as 
\aln{
	\left.S_{\lambda_{2k}}\right|_{\phi^k\phi^k}
	=-\frac{V_{d-1}}{12}\int\frac{d^{d-1}k_\parallel}{(2\pi)^{d-1}}\,\ln (1-\left(-\frac{\lambda_{2k}}{2k}C^{(2k)}_{k,k}\right)\Sigma^{(g)}_{\phi^k\phi^k}). 
}
This leads {\eqref{e:EEcomp2}}. 
In the 2PI formalism, the result is interpreted that only the 1-loop diagram provides single 
twist contributions of propagators and all the other diagrams {cancel} each other. 
In the above discussions, we did not separate diagrams into 1-loop and others, but 
instead used the very basic relation of {\eqref{nptfunc}}. 
Then, using the property of the $\mathbb{Z}_M$ redundancy, the logarithmic factor for the 1-loop diagram
naturally appears. 

The above discussion can be straightforwardly generalized to more general composite operators with operator mixings. 
When we have a set of operators $\{\mathcal{O}_a\}$ 
by opening vertices, we consider   g-1PI self-energies $\Sigma^{(g)}_{\mathcal{O}_a\mathcal{O}_b}$ 
and a matrix generalization of the nodal structure of   $(\lambda_n/n)\times C^{(n)}_{ab}$. 
It is also straightforward  when the fundamental fields are mixed with other operators; 
it is sufficient to consider  $\hat{\Sigma}^{(g)}\hat{G}_0$ in the formulation. 
As a result, we arrive at the unified form of \eqref{EE-generalform2}.
\chapter{Appendices for Chapter III\&IV}
\ifpdf
    \graphicspath{{Chapter2/Figs/Raster/}{Chapter2/Figs/PDF/}{Chapter2/Figs/}}
\else
    \graphicspath{{Chapter2/Figs/Vector/}{Chapter2/Figs/}}
\fi

\section{Worldline formalism of QFT}\label{app:worldline}
The \textbf{worldline formalism}\index{worldline formalism}~\cite{Corradini:2015tik} relates a usual second quantized formalism of QFT to the first quantized formalism, i.e. a relativistic point particle. This is achieved by the Schwinger proper time parametrization of the propagator and an auxiliary field\index{auxiliary field} (=Lagrange multiplier\index{Lagrange multiplier}) imposing the primary constraint\index{primary constraint}. For simplicity, we focus on the free scalar QFT.

The transition amplitude (propagator) and the Euclidean action of the relativistic point particle are
\begin{align}
    \braket{x(\lambda)}{x(0)}&= \int \mathcal{D}x e^{-I_{part}[x]} \quad (\mathcal{D}x=dx)\nonumber\\
    I_{part}[x]&=m\int\dd{s}=m\int \dd{\lambda} \sqrt{-g_{\mu\nu} \dot{x}^\mu \dot{x}^\nu}\equiv \int \dd{\lambda} L_{part},
    \label{eq:point-part-action}
\end{align}
where $s$ is the proper distance and $\lambda$ is an arbitrary choice of parametrization of the particle trajectory $x^\mu=x^\mu(\lambda)$. The conjugate momentum is given by
\begin{equation}
    p_\mu=\frac{\delta L_{part}}{\delta\dot{x}^\mu}=\frac{m\dot{x}_\mu}{\sqrt{-\dot{x}^2}}.
\end{equation}
The primary constraint is a constraint purely from the conjugate momentum: $p^2+m^2=0$. It must be imposed with the Lagrange multiplier. Denoting it by $h$, the action with the explicit dependence on the conjugate momentum is given by
%\footnote{
%}
\begin{equation}
    I_{part}[x,p,h]=\int\dd{\lambda} \qty[-p\cdot \dot{x} + \frac{h}{2}(p^2+m^2)].
    \label{eq:point-part-action2}
\end{equation}
(The signs in the action are determined from replacing $i\lambda_L$ by $\lambda$ in the Lorentzian action
$
    \int\dd{\lambda_L} \qty[p_L\cdot \dot{x} - \frac{h}{2}(p_L^2+m^2)],
$
where $p_L$ is determined from the variation with respect to $\partial_{\lambda_L} x$. We take the negative sign of the Lagrange multiplier so that the canonical Hamiltonian has a positive sign.)
Indeed, we can obtain the original action \eqref{eq:point-part-action} after eliminating the auxiliary field $h$ and integrating the conjugate momentum $p$ out.

Our goal is to express the QFT propagator in terms of this new point particle action \eqref{eq:point-part-action2}.
In the same manner as we did for the free energy in Section \ref{sec:heat-kernel}, the QFT propagator can be written as follows:
\begin{align}
    \ev{\phi(x)\phi(x')} &= \bra{x}\int\dd{h} e^{-h G^{-1}}\ket{x'} \\
    &= \int\dd{h} e^{-hm^2} \mel{x}{e^{h\square}}{x'},
\end{align}
where we denoted the bare mass by $m$ here. We can further rewrite this using the conjugate momentum $p$:
\begin{align}
    \ev{\phi(x)\phi(x')} &= \int\dd{h} \int \frac{dp}{2\pi} e^{-ip(x-x')-h(p^2+m^2)} \\
    &= \int\dd{h} \int \frac{dp}{2\pi} e^{-\int_0^1\dd{\lambda} \qty[-p\cdot \dot{x} + h (p^2+m^2) ]}\Big|_{x(0)=x',\, x(1)=x} \\
    &= \int\limits_{\substack{x(0)=x',\\ x(1)=x}} \! \mathcal{D}x \int \mathcal{D}h \mathcal{D}p \, e^{-I_{part}[x,p,e]} \quad \qty(\mathcal{D}h=\dd{h},\, \mathcal{D}p=\frac{\dd{p}}{2\pi}).
\end{align}
This is what we want. After integrating $h$ and $p$, we can show (up to normalization)
\begin{equation}
    \ev{\phi(\mathfrak{p}_1)\phi(\mathfrak{p}_2)}=\int_{\mathfrak{p}_1}^{\mathfrak{p}_2} \mathcal{D}\mathfrak{p} e^{m\int \dd{s}[\mathfrak{p}]},
\end{equation}
where $\mathfrak{p}$ and $\mathfrak{p}'$ denote spacetime points.

\section{Geodesic approximation of Euclidean two-point function of CFT}\label{app:geodesic-approx}
The \textbf{geodesic approximation}\index{geodesic approximation}~\cite{Balasubramanian:1999zv} predicts the bulk two-point function is well approximated by the geodesics $d(A,B)$ connecting two points $A$ and $B$ as follows \eqref{eq:geodesic-approx}:
\begin{equation}
    \ev{\phi(X_A)\phi(Y_B)}\approx \frac{\Gamma(\Delta)}{\Gamma(\Delta+1-d/2)}e^{-\Delta d(A,B)/R},
\end{equation}
In this appendix, we demonstrate the above geodesic approximation correctly reproduces the CFT two-point function \eqref{eq:CFT-2pt} via the extrapolate dictionary \eqref{eq:dict}. For a detailed calculation on geodesics, see Appendix \ref{sec:Embedd}.

Since two points at the asymptotic boundary are largely separated due to hyperbolicity, \eqref{eq:geodesic-xi} is approximated by
\begin{equation}
    d(A,B)\approx R\log\frac{2}{\xi} \Leftrightarrow e^{d(A,B)/R}\approx \frac{2}{\xi}.
\end{equation}
Furthermore, it can be written in the Euclidean embedding coordinates, where $X_j^{(E)}=X_j$ for $j=1,\cdots, d-1$ and $X_{j}^{(E)}=iX_j$ for $j=d+1$ in \eqref{eq:geodesic-embedding}:
\begin{align}
    \frac{1}{\xi}=-\frac{X_A\cdot X_B}{R^2}
    %=\frac{X_0(A)X_0(B)-X_{d+1}^{(E)}(A)X_{d+1}^{(E)}(B)}{R^2}
    =-\frac{1}{R^2}\qty(-\frac{X_{+ A}X_{- B}+X_{- A}X_{+ B}}{2} + \sum_i X_{i\, A}^{(E)} X_{i\, B}^{(E)}),
\end{align}
where $X_{\pm}\equiv X_0 \pm X_{d}$ and $i=1,\cdots, d-1, d+1$. Using the extrapolate dictionary \eqref{eq:dict}, the (Euclidean) CFT two-point function is written as
\begin{align}
    \ev{O(P_A)O(P_B)} &\approx \frac{\Gamma(\Delta)}{\Gamma(\Delta+1-d/2)} \qty(\epsilon^2 \frac{2}{\xi})^{-\Delta} \\
    &= \frac{\Gamma(\Delta)}{\Gamma(\Delta+1-d/2)} \qty[ \frac{1}{R^2}\qty(P_{+ A}P_{- B}+P_{- A}P_{+ B} - 2\sum_i P_{i\, A}P_{i\, B})]^{-\Delta},
    \label{eq:CFT-geodesic}
\end{align}
where $P_{\pm}\equiv \lim_{z\rightarrow \epsilon} (\epsilon X_{\pm})$ and $P_i \equiv \lim_{z\rightarrow \epsilon} (\epsilon X_i^{(E)})$ are the embedding coordinates specifying the boundary points~\cite{Kaplan:2016,Penedones:2016voo,Tamaoka:2019pqo}.\footnote{Note that while the bulk embedding coordinates $X$'s are divergent at the asymptotic boundary, $P$'s remain finite as $\epsilon\rightarrow 0$.}

To see whether \eqref{eq:CFT-geodesic} really reproduces the CFT two-point function \eqref{eq:CFT-2pt}, we introduce a particular parametrization.\footnote{We can instead work in the embedding coordinates at the asymptotic boundary $P$ to derive the CFT two-point function. While symmetry is manifest in this approach, here we stick to a particular coordinate system to derive the familiar form \eqref{eq:CFT-2pt}.} Since the UV cutoff is simple in Poincar\'e coordinates, we work in these coordinates. After Wick rotations $X_{d+1}^{(E)}=i X_{d+1},\, t_E=it$, \eqref{eq:poincare-coords} becomes
\begin{equation}
    \begin{split}
        X_0 &= R \frac{\alpha^2+z^2+\bm{x}^2+t_E^{2}}{2\alpha z}\\
        X_{d+1}^{(E)} &= R \frac{t^{(E)}}{z}\\
        X_{\bar{i}} &= R \frac{x^{\bar{i}}}{z} \quad (\bar{i}=1,\cdots,d-1) \\
        X_d & = R \frac{-\alpha^2+z^2+\bm{x}^2+t_E^{2}}{2\alpha z}.
    \end{split}
\end{equation}
At the asymptotic boundary $z=\epsilon\ll R$, they become
\begin{equation}
    \begin{split}
        P_+ &= \epsilon \left.\qty(X_0 + X_{d+1}^{(E)})\right|_{z=\epsilon} = R \frac{x^2}{\alpha}\\
        P_- &= \epsilon \left.\qty(X_0 - X_{d+1}^{(E)})\right|_{z=\epsilon} = R \alpha\\
        P_i & = Rx^i \quad (i=1,\cdots, d-1,d+1)
    \end{split}
    ,
\end{equation}
where $x^i=(x^{\bar{i}}, t^{(E)})$ and $x^2=\sum_i (x^i)^2 = \bm{x}^2+t^{(E)\, 2}$. Substituting these into \eqref{eq:CFT-geodesic}, we obtain
\begin{equation}
    \ev{O(P_A)O(P_B)} \approx \frac{\Gamma(\Delta)}{\Gamma(\Delta+1-d/2)}
    \frac{1}{(x_A-x_B)^{2\Delta}},
\end{equation}
which is consistent with the CFT result \eqref{eq:CFT-2pt} up to the normalization. If one takes the $\Delta\rightarrow \infty$ limit, the prefactor approaches unity.

\section{Gravity action for AdS}\label{app:grav-action}
The Lorentzian gravity action\index{Lorentzian gravity action} of AdS is given by\footnote{In the Euclidean signature, where $t_E=it$ and the Euclidean action is $(-i)$ times the Lorentizian action: $I_E=-iI_L$, the overall sign flips since $iI_L=i\int \dd{t} L= -\int\dd{t_E} (-L)$.}
\begin{equation}
    I_{grav}=\frac{1}{16\pi G_N} \int \dd[d+1]{x} \sqrt{g} \qty(\mathcal{R}+\frac{d(d-1)}{R^2})+\frac{1}{8\pi G_N}\int_{\partial \mathrm{AdS}}\dd[d]{x} \sqrt{\gamma} K + I_{ct},
\end{equation}
where $\gamma=\det(\gamma_{ab})$ is the determinant of the induced metric\index{induced metric} at the asymptotic boundary and $K=\gamma^{ab}\nabla_a n_b$ is the \textbf{extrinsic curvature}\index{extrinsic curvature} ($n^a$ is the normal vector to the asymptotic boundary). The first term is the usual \textbf{Einstein-Hilbert action}\index{Einstein-Hilbert action} with a cosmological constant, the second term is the \textbf{Gibbons-Hawking (GH) term}\index{Gibbons-Hawking term}\index{GH term |see Gibbons-Hawking term }. It is responsible for canceling the terms involving the derivative of the metric perturbation when we consider the variation of $\mathcal{R}$~\cite{ammon_erdmenger_2015,Wald:1984rg}. The remaining terms in the variation are proportional to the metric perturbation, thus it vanishes if we impose the Dirichlet boundary condition\index{Dirichlet boundary condition} at the asymptotic boundary $\partial\mathrm{AdS}$. (Later, we choose a different boundary condition on the second boundary so that the variation from the GH term vanishes.)
%The Gibbons-Hawking term does not affect to the bulk equation of motion since it is a surface term. Nevertheless it is 
The \textbf{counterterm}\index{counterterm} action $I_{ct}$ is determined so that it cancels divergences to all orders with a finite number of counterterms at the asymptotic boundary~\cite{deHaro:2000vlm}. This procedure is known as \textbf{holographic renormalization}\index{holographic renormalization}. For reviews, see~\cite{Skenderis:2002wp}, Section 5.5 and 6.3.2 of~\cite{ammon_erdmenger_2015}, and references therein. For example, the IR divergence (infinite volume) of AdS (i.e. UV divergence in the dual CFT) can be removed by
\begin{equation}
    I_{ct}=\frac{1}{8\pi G_N} \frac{d-1}{R}\int\dd[d]{x}\sqrt{\gamma}.
\end{equation}
When $d=2$, this is sufficient to remove all the divergences~\cite{Balasubramanian:1999re}. However, in higher dimensions, this is not enough. When $d\le 6$, the counterterm action is given by
\begin{equation}
    I_{ct}=\frac{1}{8\pi G_N}\int\dd[d]{x} \sqrt{\gamma} \qty[\frac{d-1}{R}+\frac{R}{2(d-2)}\mathcal{R}-\frac{R^3}{2(d-2)^2(d-4)}\qty(\mathcal{R}^{ab}\mathcal{R}_{ab}-\frac{d}{4(d-1)}\mathcal{R}^2)+\cdots],
\end{equation}
where curvatures are made of $\gamma_{ab}$, to remove at least power-law divergences~\cite{Natsuume:2014sfa}.

\section{Derivation of conformal Ward identity}\label{app:conf-ward}
In the following, we derive the \textbf{conformal Ward identity}\index{conformal Ward identity}
\begin{equation}
    \frac{1}{2\pi i} \oint_{z_i} \dd{z} \epsilon^z (z) T(z) \Phi(z_i) = \delta_\epsilon \Phi(z_i).
    \label{eq:CWI}
\end{equation}
$\oint_z$ means the integration contour is a curve going around $z$ counterclockwise. (We will explain other notations later.)

Let us consider arbitrary field(s) $X$. Given the action $I[\phi,g]$ of the field $\phi$ and a background metric $g_{\mu\nu}$, the expectation value of $X$ is given by
\begin{equation}
    \ev{X}=\int\mathcal{D}\phi X e^{-I[\phi,g]}.
    \label{eq:evX}
\end{equation}
This is equivalent to
\begin{equation}
    \int\mathcal{D}\phi' X' e^{-I[\phi',g]}
    \label{eq:evX2}
\end{equation}
as we just reparametrize the field ($X'\equiv X(\phi')$).
If we choose $\phi'=\phi+\delta_\epsilon \phi$ (field after an infinitesimal conformal (Weyl) transformation\index{Weyl transformation} $z'=z+\epsilon^z(z)$), \eqref{eq:evX} = \eqref{eq:evX2} becomes
\begin{align}
    \ev{X} &= \int\mathcal{D}\phi' (X+\delta_\epsilon X) e^{-I[\phi+\delta_\epsilon \phi,g]} \\
    &= \int\mathcal{D}\phi (X+\delta_\epsilon X) e^{-I[\phi+\delta_\epsilon \phi,g]} \\
    &= \int\mathcal{D}\phi (X+\delta_\epsilon X) e^{-I[\phi,g]} e^{-\delta_\epsilon I}\\
    &= \int\mathcal{D}\phi (X+\delta_\epsilon X) e^{-I[\phi,g]} (1-\delta_\epsilon I)\\
    &= \ev{X} + \ev{\delta_\epsilon X} - \ev{\delta_\epsilon I}.
    \label{eq:variation-X}
\end{align}
In the second line, we assumed the path integral measure is diffeomorphism\index{diffeomorphism} invariant.\footnote{For advanced readers: This means the theory has no \textbf{gravitational anomaly}\index{gravitational anomaly}.} Note that since the metric is not dynamical in the calculation,
\begin{equation}
    \delta_\epsilon I = I[\phi,g] +\int \dd[d]{x} \frac{\delta I}{\delta \phi(x)} \delta_\epsilon \phi(x).
\end{equation}

Meanwhile, the variation of a CFT action should vanish when both the metric and field are dynamical due to the diffeomorphism\index{diffeomorphism} invariance:
\begin{equation}
    0=\delta I \equiv \delta_\epsilon I +\int \dd[2]{x} \frac{\delta I}{\delta g_{\mu\nu}(x)}\delta_\epsilon g_{\mu\nu}(x).
    \label{eq:diff-action}
\end{equation}
The infinitesimal diffeomorphism of the metric is written as\footnote{The minus sign is due to the pullback for the variation of metric.}
\begin{equation}
    \delta_\epsilon g_{\mu\nu}=-(\nabla_\mu \epsilon_\nu+\nabla_\nu \epsilon_\mu).
\end{equation}
Furthermore, the variation of the action with respect to metric is related to the \textbf{energy-momentum tensor}\index{energy-momentum tensor}, defined as
\begin{equation}
    T^{\mu\nu}\equiv \frac{2}{\sqrt{g}}\frac{\delta I}{\delta g_{\mu\nu}}.
\end{equation}
Then, taking the symmetric property of the energy-momentum tensor into account, \eqref{eq:diff-action} is reduced to
\begin{equation}
    0=\delta_\epsilon I - \int\dd[2]{x} \sqrt{g} (\nabla_\mu\epsilon_\nu) T^{\mu\nu}.
\end{equation}
Substituting $\delta_\epsilon I$ from the above equation into \eqref{eq:variation-X}, we obtain
\begin{equation}
    \ev{\delta_\epsilon X} = \int \dd[2]{x}\sqrt{g} \nabla_\mu\epsilon_\nu(x) \ev{T^{\mu\nu}(x) X}.
\end{equation}
If we choose $X=\prod_{i=1}^n \Phi_i(x_i)$, where $\Phi_i$ is a local (composite) field, we obtain
\begin{equation}
    \sum_i \ev{\Phi_1(x_1)\cdots \delta_\epsilon\Phi_i (x_i)\cdots\Phi_n(x_n)} = \int \dd[2]{x}\sqrt{g} \nabla_\mu\epsilon_\nu(x) \ev{T^{\mu\nu}(x) \prod_i \Phi_i(x_i)}.
    \label{eq:CWI-before}
\end{equation}
Let us consider a case where the nonzero component of the vector field $\epsilon$ is only $\epsilon=\epsilon^z \partial_z$ and otherwise zero. 
Then, it follows that for the flat Euclidean space $\dd{s}^2=\dd{(x+it_E)}\dd{(x-it_E)}=\dd{z}\dd{\bar{z}}$, \eqref{eq:CWI-before} is given by\footnote{From the metric, 
$g_{z\bar{z}}=g_{\bar{z}z}=1/2$ and
$\epsilon_{\bar{z}}=g_{\bar{z}z}\epsilon^z=\epsilon^z/2$. Furthermore, the conformal invariance implies the nonzero components of the energy-momentum tensor are $T^{\bar{z}\bar{z}}(z)$ and $T^{zz}(\bar{z})$.}
\begin{equation}
    \sum_i \ev{\Phi_1(z_1,\bar{z}_1)\cdots \delta_\epsilon\Phi_i (z_i,\bar{z}_i)\cdots\Phi_n(z_n,\bar{z}_n)} = \frac{1}{4i}\int\dd{\bar{z}}\wedge\dd{z} \partial_{\bar{z}}\epsilon^z(z,\bar{z}) \ev{T^{\bar{z}\bar{z}}(z) \prod_i \Phi_i(z_i,\bar{z}_i)},
\end{equation}
where we used $\dd[2]{x}\equiv \dd{x}\wedge \dd{t_E}=\frac{1}{2i} \dd{\bar{z}\wedge\dd{z}}$.\footnote{For a more pedagogical lecture, see Section 3.3 of~\cite{BB3163561X}.}

Let us consider a case $\epsilon^z$ is holomorphic everywhere except at insertion points $(z_1,\bar{z}_1),\cdots,(z_n,\bar{z}_n)$. By using Stokes' theorem\index{Stokes' theorem}, it reduces to a contour integral along a curve $C$ going around all the insertion points counterclockwise:\footnote{Since we apply Stokes' theorem to the region outside of insertion points, the contour runs counterclockwise for the outside region. This is the reason why we have $-C$ in the second line.}
\begin{align}
    \sum_i \ev{\Phi_1(z_1)\cdots \delta_\epsilon\Phi_i (z_i)\cdots\Phi_n(z_n)} & = \frac{1}{4i}\int \partial_{\bar{z}} \ev{\epsilon^z(z) T^{\bar{z}\bar{z}}(z) \prod_i \Phi_i(z_i)}\dd{\bar{z}}\wedge\dd{z}\\
    & = \frac{1}{4i}\oint_{-C} \dd{z} \ev{\epsilon^z(z) T^{\bar{z}\bar{z}}(z) \prod_i \Phi_i(z_i)} \qq{(holomorphic part)}.
\end{align}
Finally, defining the holomorphic energy-momentum tensor by $T(z)=-2\pi T_{zz}=-\pi T^{\bar{z}\bar{z}}(z)/2$, we obtain
\begin{equation}
    \sum_i \ev{\Phi_1(z_1)\cdots \delta_\epsilon\Phi_i (z_i)\cdots\Phi_n(z_n)} = \frac{1}{2\pi i}\oint_C \dd{z} \ev{\epsilon^z(z) T(z) \prod_i \Phi_i(z_i)}.
\end{equation}
The contour $C$ can be divided into small pieces winding around each insertion point. Thus, at the operator level, \eqref{eq:CWI} holds.

\section{Calculation of the energy-momentum tensor in the replicated space}\label{app:replica-calc-EM}
This section provides explicit calculations to derive \eqref{eq:replica-EM}.
Starting from $\Sigma_n$, a replicated space, we perform two conformal transformations:
\begin{equation}
    z\mapsto \zeta\equiv\frac{z-u}{z-v},
    \label{eq:conf-map1}
\end{equation}
mapping to a conical geometry like the right of Fig.\ref{fig:euclid-pathint}. Then,
\begin{equation}
    \zeta\mapsto w\equiv \zeta^{1/n},
    \label{eq:conf-map2}
\end{equation}
mapping to a complex plane $\mathbb{C}$. Since the Schwarzian term is generally cumbersome to calculate, we consider composing two relatively simple conformal transformations. In the following, we omit the superscript ' showing the field transformation.

Since \eqref{eq:conf-map2} is a bit simpler if we consider the $w$-derivatives of $\zeta=w^n$. Thus, we want the Schwarzian\index{Schwarzian |see Schwarzian derivative } to be $\{\zeta;w\}$, not $\{w;\zeta\}$. Following \eqref{eq:EM-trf-rule}, we can simplify the equation with $\ev{T(w)}_{\mathbb{C}}=0$:
\begin{align}
    \ev{T(\zeta)} &= -\frac{c}{12} \qty(\frac{dw}{d\zeta})^2 \{\zeta;w\} \\
    &=  -\frac{c}{12} \frac{1}{n^2 w^{2n-2}} \qty[\frac{(n-1)(n-2)}{w^2} - \frac{3}{2} \qty(\frac{n-1}{w})^2] \\
    &= \frac{c}{24}\qty(1-\frac{1}{n^2})\frac{1}{w^{2n}}= \frac{c}{24}\qty(1-\frac{1}{n^2})\frac{1}{\zeta^{2}}= \frac{c}{24}\qty(1-\frac{1}{n^2})\frac{(z-v)^2}{(z-u)^{2}}.
    \label{eq:EM-tensor-conf-map-2}
\end{align}
Note that
\begin{equation}
    \frac{(w^n)^{(m+k)}}{(w^n)^{(m)}}=\frac{(n-m)\cdots (n-(m+k-1))}{w^k}.
\end{equation}

For \eqref{eq:conf-map1}, the calculation is very simple since it is a global conformal transformation so that the Schwarzian vanishes. Aiming at equating with \eqref{eq:EM-tensor-conf-map-2},
\begin{align}
    \ev{T(\zeta)} &= \qty(\frac{dz}{d\zeta})^2 T(z) \\
    &=  \frac{(z-v)^4}{(u-v)^2} T(z).
    \label{eq:EM-tensor-conf-map-1}
\end{align}
By equating \eqref{eq:EM-tensor-conf-map-1} and \eqref{eq:EM-tensor-conf-map-2}, we obtain \eqref{eq:replica-EM}.

\section{Embedding the global AdS to the flat Space}\label{sec:Embedd}
The $(d+2)$-dimensional \textbf{anti-de Sitter (AdS) spacetime}\index{anti-de Sitter spacetime}\index{AdS} is obtained as embedding\index{embedding formalism}  to flat spacetime $\mathbf{R}^{2,d+1}$ with  
 \begin{equation}
     \Vec{X}\cdot \Vec{X}= - (X_0)^2 - (X_{d+2})^2 + (X_1)^2 + \cdots + (X_{d+1})^2 = - R^2 \label{eq:hyper}
 \end{equation}
The problem of finding geodesics $\Vec{X} (s)$ in  AdS is thus a variational problem with the constraint \eqref{eq:hyper}. 
In addition, since we focus on the spacelike geodesic (not null $ds^2=0$), we also need to impose the mass shell condition\index{mass shell condition}
\begin{equation}
    \Vec{P}^2+m^2=0,
    \label{eq:mass-shell}
\end{equation}
where $\Vec{P}$ is a conjugate momentum for $\Vec{X}$. Given the Lagrangian for the geodesic
\begin{equation}
    L_0=m\sqrt{\dot{\Vec{X}}\cdot \dot{\Vec{X}}},
\end{equation}
the conjugate momentum is given by
\begin{equation}
    \Vec{P}=\pdv{L_0}{\dot{\Vec{X}}},
\end{equation}
where $\dot{}=\frac{\dd}{\dd s}$.
Note that the canonical Hamiltonian\index{canonical Hamiltonian} ${\Vec{P}}\cdot\dot{\Vec{X}}-L_0$ is zero. However, since \eqref{eq:mass-shell} is a \textbf{primary constraint}\index{primary constraint}, i.e. it is derived only from the definition of the conjugate momentum, the actual time evolution is generated by the following Hamiltonian:
\begin{equation}
    H=\frac{h}{2}(\Vec{P}^2+m^2)+h\lambda(s)F(\vec{X}),   \quad
     F(\Vec{X})=  \Vec{X}\cdot \Vec{X} + R^2.
\end{equation}
$h\lambda (s)$ and $h/2$ are Lagrange multipliers\index{Lagrange multiplier} associated with each constraint \eqref{eq:hyper} and \eqref{eq:mass-shell}, respectively. The action is simply given by $S=\int\dd s (\vec{P}\cdot \dot{\vec{X}}-H)$. By integrating $\vec{P}$ in the path integral, the new action is
\begin{equation}
    S=\int \dd s \left[\frac{1}{2h}\dot{\Vec{X}}\cdot \dot{\Vec{X}} -\frac{h}{2}m^2 -h\lambda(s)F(\vec{X})\right].
\end{equation}
By rescaling the affine parameter $s\rightarrow s/h$, finally the action becomes
 \begin{align}
     S = \int ds L,\quad L = \frac{1}{2} \dot{\Vec{X}}\cdot \dot{\Vec{X}} - \lambda(s) F(\Vec{X}),
 \end{align}
where we dropped the constant term $-m^2/2$ in the Lagrangian. This is nothing but the action of a free particle constrained on the surface \eqref{eq:hyper}.

Equations of the motion for $\Vec{X},\lambda$ are thus given by  
 \begin{align}
     \Ddot{\Vec{X}}+ 2 \lambda \Vec{X} = 0, \label{eq:two-deriv}\\%\quad 
     \Vec{X}\cdot \Vec{X} + R^2 =0 \label{eq:constraint}
 \end{align}
 Taking derivatives two times  of the constraint gives 
 \begin{equation}
     \Ddot{\Vec{X}} \cdot \Vec{X} = - \dot{\Vec{X}}\cdot \dot{\Vec{X}} = -1.
     \label{eq:constraint2}
 \end{equation}
 The second equality holds because 
 the line element is given by
 \begin{equation}
     ds^2=-\eta_{AB} d{X}^A d{X}^B.
 \end{equation}
% we focus on spacelike geodesics
%\red{
%since the geodesic distance is given by
%\begin{equation}
%    \int \sqrt{-\eta_{AB} d{X}^A d{X}^B} = \int ds \sqrt{- \dv{\vec{X}}{s}\cdot\dv{\vec{X}}{s}} = \int ds,
%\end{equation}
%where the second equality holds as $s$ is the proper distance parametrizing the geodesic.
%}

By plugging \eqref{eq:two-deriv} into \eqref{eq:constraint2} and using \eqref{eq:constraint}, the Lagrange multiplier is fixed to be %now
%
%\red{
%Then,
%\begin{equation}
%    \Ddot{\vec{X}}=-\frac{\vec{X}}{\vec{X}\cdot\vec{X}}=\frac{\vec{X}}{R^2}.
%\end{equation}
%}
% 
% This fixes the Lagrange multiplier to be,
%
 \begin{equation}
     \lambda(s)= -\frac{1}{2 R^2},
 \end{equation}
and 
 \begin{align}
 \Ddot{\Vec{X}}-\frac{1}{R^2}\Vec{X}=0 \;  \Rightarrow \; 
 \Vec{X}(s) = \Vec{p} \cosh{\frac{s}{R}} + \Vec{q} \sinh{\frac{s}{R}}.
 \end{align}
 The constraint implies $\Vec{p}$ and $\Vec{q}$ satisfy
 \begin{equation}
     \Vec{p}\cdot \Vec{p} =  -\Vec{q}\cdot \Vec{q}=- R^2, \quad  \Vec{p}\cdot \Vec{q} = 0.
 \end{equation}
    We can freely set $s=0$ at the one end of the geodesic (call it A). Thus 
    \begin{equation}
         \Vec{p}= \Vec{X}_A
    \end{equation}
    We call another end as B, and by taking the inner product 
    of $\Vec{X}_B= \Vec{p} \cosh{\frac{s_B}{R}} + \Vec{q} \sinh{\frac{s_B}{R}}$ and $\Vec{X}_A$
    \begin{equation}
        \cosh{\frac{s_B}{R}} = -\frac{\Vec{X}_A\cdot\Vec{X}_B}{R^2}\equiv \xi^{-1}
        \label{eq:geodesic-embedding}
    \end{equation}
    and we obtain
    \begin{equation}
        \Vec{q} = \frac{1}{\sinh{\frac{s_B}{R}}}{\Vec{X}_B- \cosh{\frac{s_B}{R}}\Vec{X}_A}.
    \end{equation}
    
    Since $s=0$ at point A, the geodesic distance between A and B is simply given by $d(A,B)=s_B$. Thus, in terms of the embedding coordinates, it is written as
    \begin{equation}
        d(A,B)=s_B=R\cosh^{-1}{\xi^{-1}}=R \log\left(\frac{1+\sqrt{1-\xi^2}}{\xi}\right)
        \label{eq:geodesic-xi}
    \end{equation}
    by inverting \eqref{eq:geodesic-embedding}.
%The length of the geodesics between A and B $d(A,B)$ is just the difference of parameter 
%\begin{equation}
%     d(A,B)= \int_A^B \sqrt{d\Vec{X}\cdot d \Vec{X}}=s_B
%\end{equation}

Let us use the above formalism to compute the length of the geodesic connecting two points in global $AdS_{3}$, where the metric takes the form \eqref{Gmet}. The relation between  global coordinates\index{global coordinates} $(\tau, \theta, r)$  and embedding coordinates\index{embedding coordinates} $\vec{X}$ is 
\begin{align}
    X_{0}= \sqrt{R^2 +r^2}\cos{\tau}, \quad
    X_{1} = r \sin{\theta},\quad
    X_{2} = -r \cos{\theta},\quad
    X_{3}= \sqrt{R^2 +r^2}\sin{\tau}.
\end{align}

This relation allows us to write the relevant inner product $-R^2/\xi=\Vec{X}_A\cdot \Vec{X}_B$ in terms of the global coordinates
\begin{align}\label{AppendixA geodsics}
    \Vec{X}_A\cdot \Vec{X}_B& =  -{X_A}_0 {X_B}_0 -{X_A}_3 {X_B}_3+{X_A}_1 {X_B}_1+{X_A}_2 {X_B}_2 \\
    &= -\sqrt{\left(R^2+r_A^2\right)\left(R^2+{r_B^2}\right)} \cos(\tau_A-\tau_B) + r_A r_B\cos(\theta_A-\theta_B).
\end{align}
Then, we conclude the geodesic length\index{geodesic length} between two generic points is given by 
\begin{align}\label{AppendixA EE}
    d(A,B)=& R \cosh^{-1}{\left(\sqrt{\left(1+\frac{r_A^2}{R^2}\right)\left(1+\frac{r_B^2}{R^2}\right)} \cos(\tau_A-\tau_B) - \frac{r_A r_B}{R^2}\cos(\theta_A-\theta_B)\right)}.
\end{align}
In the case $r_A,r_B \gg R$, recalling $\cosh^{-1}{x}= \log{(x+\sqrt{x^2-1})}$, we obtain
\begin{equation}
    d(A,B)= R\log{\left(\frac{2r_A r_B}{R^2}\left(\cos(\tau_A-\tau_B)-\cos(\theta_A-\theta_B)\right)\right)}.
\end{equation}

%\section{Detail of the boundary curve}
\section{Details of the gravity-inducing boundary curve}\label{app:curve}
%\subsection{Gravity induced boundary}
 From (\ref{mappo}), we can associate $x$ with $\theta$ for each fixed time $t$ at the conformal boundary. By solving the fourth equation of (\ref{mappo}), which is just a quadratic equation for $x$, we can get 
 \begin{equation}\label{theta to x}
      x= \alpha \cot{\theta}\left(-1\pm \sqrt{1+ \tan^2{\theta}\left(1+\frac{t^2}{\alpha^2}\right)}\right).
 \end{equation}
The branch should be selected so that the map (\ref{theta to x}) is smooth for $\theta\in [-\pi,\pi)$. Thus for $-\frac{\pi}{2}< \theta < \frac{\pi}{2}$, we take $+$ sign; for $-\pi < \theta < -\frac{\pi}{2}$ and $\frac{\pi}{2}< \theta < \pi$, we take $-$ sign. In the other region, we take the sign so that the map has $2 \pi$ periodicity. Please see Fig.\ref{fig:theta to x}.
\begin{figure}[h]
    \centering
    \includegraphics[width = 100mm]{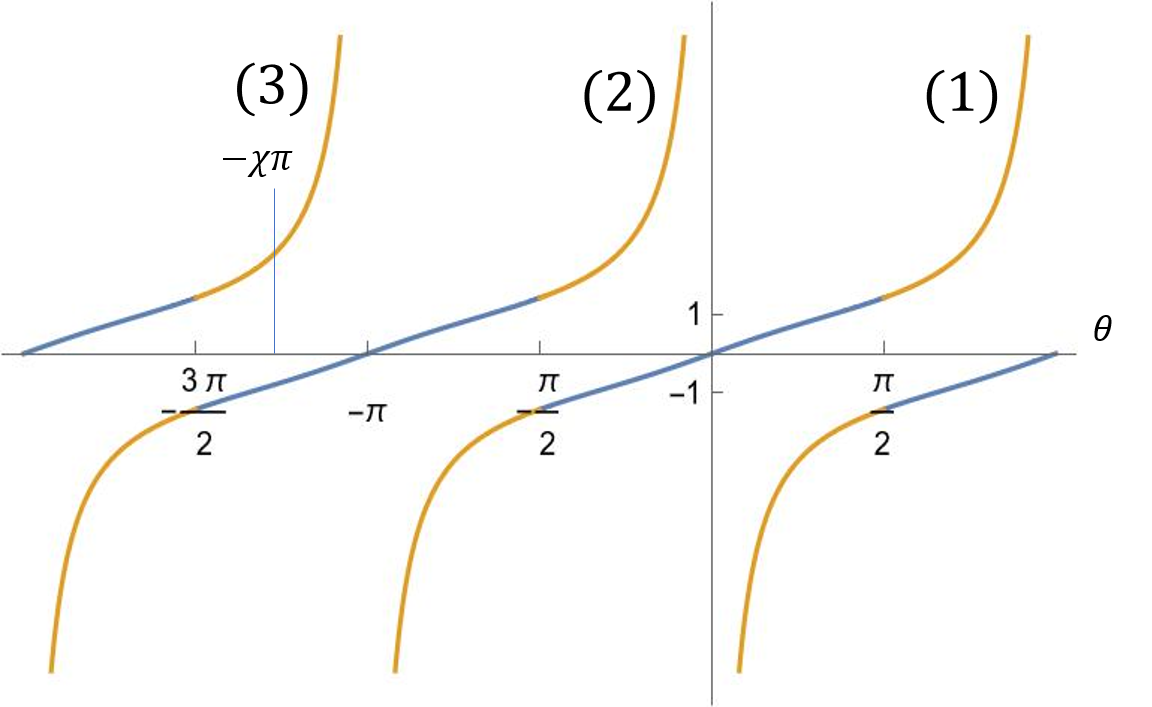}
    \caption{The blue curve corresponds to $+$ sign and the orange curve corresponds to $-$ of (\ref{theta to x}). We should take (1) and (3) branch for example.}
    \label{fig:theta to x}
\end{figure}
 Especially we have the gravity-inducing boundary $Z(t)$  at $\theta = -\frac{\pi}{\chi}$, where
 \begin{align}
     Z(t) = \left \{
     \begin{array}{ll}
        \alpha \cot{(\frac{\pi}{\chi})} \left(1+ \sqrt{1+\tan^2{(\frac{\pi}{\chi})}(1+\frac{t^2}{\alpha^2})}\right), & \frac{2}{3} < \chi < 1  \\
        \alpha \cot{(\frac{\pi}{\chi})} \left(1 - \sqrt{1+\tan^2{(\frac{\pi}{\chi})}(1+\frac{t^2}{\alpha^2})}\right).   &  \frac{1}{2} \leq \chi < \frac{2}{3}
     \end{array}
     \right.
 \end{align}
 For $\chi \leq \frac{1}{2}$, $Z(t)\leq 0$ and it looks there is no physical region for the holographic CFT. 
 \\The critical point $\chi = \frac{1}{2}$  can be seen when we consider whether the map between the tilde global coordinate$(\tilde{r},\ti{\theta},\ti{\tau})$ and the tilde Poincar\'e$(\ti{z},\ti{x},\ti{t})$ is well-defined.
%$\gamma=\tan\left(\pi\s{\frac{R^2}{R^2-M}}\right)$ is only positive for a limited value of $M$. $\gamma$ is not necessarily positive.
%Furthermore, the correct brane profile is obtained only when $0\le M<\frac{3}{4}R^2$ by carefully inspecting the coordinate transformation as we see in the following.
By a careful inspection of coordinate transformation \eqref{mappo}, choosing a particular domain for $\theta$ (or $\tilde{\theta}$) corresponds to considering a brane profile with a fixed winding number.
Let us first discuss the $-\pi\le \theta <\pi$ case in detail and comment briefly about other branches at last.% and the next subsection.

In the original global AdS spacetime with the massive particle, we chose the domain of $\theta$ to be $[-\pi,\pi)$. This results in the domain of $\tilde{\theta}$ \eqref{tildtheta}.
Since we discuss the profile of the EOW brane by bringing the solution from the Poincar\'e metric via \eqref{mappo} from the metric with $\tilde{\theta}$ \eqref{deficitg}, we are implicitly assuming the existence of the map between them. However, due to the deficit angle in $\tilde{\theta}$ and hence not covering the whole Poincar\'e spacetime, such a single-valued map does not exist for some values of $M>0$ given a fixed domain of $\theta$.

\begin{figure}
  \centering
  \includegraphics[width=13cm]{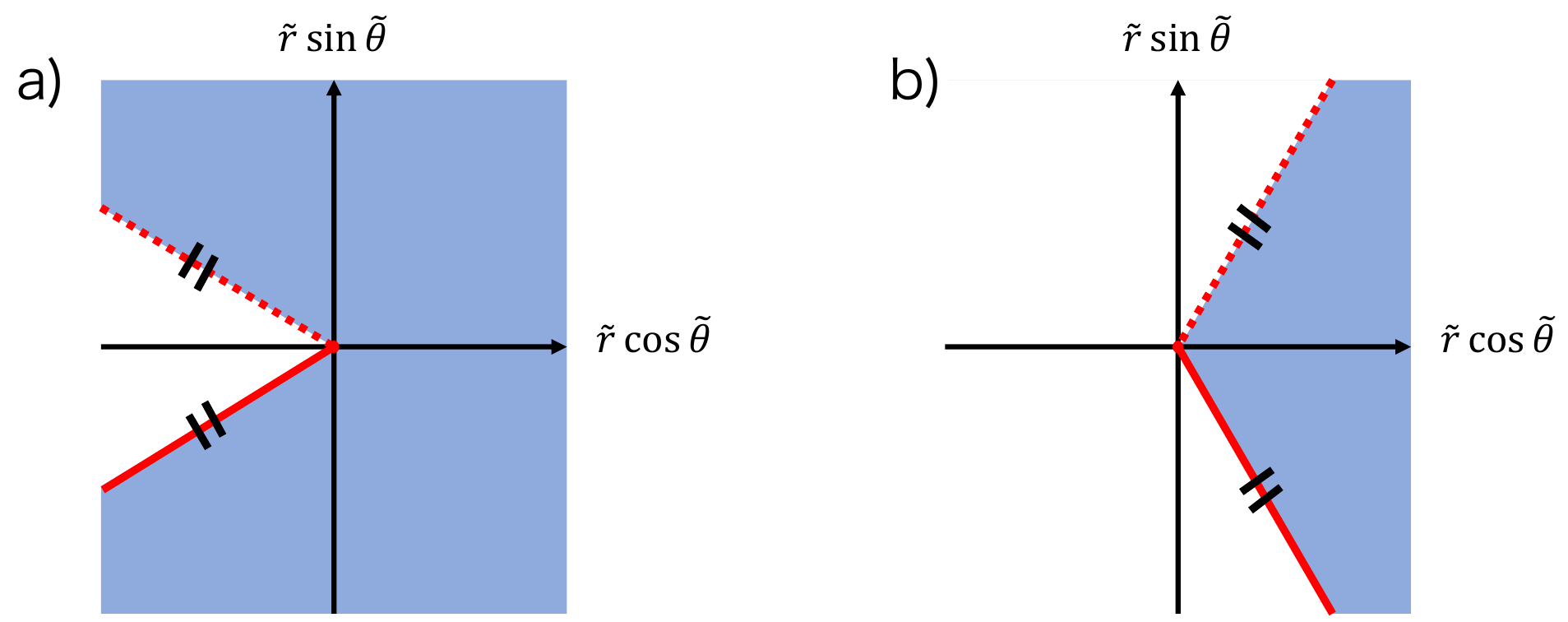}
  \caption{The allowed region for $-\pi \chi\le \tilde{\theta} < \pi \chi$ when a) $\frac{1}{2}\le\chi(<1)$ and b) $\chi\ge\frac{1}{2}$. The strike-though denotes an identification of two lines.}
\label{fig:deficit-geom}
\end{figure}

The deficit geometry in the rescaled coordinates $(\tilde{r},\tilde{\theta})$ is shown in Fig.\ref{fig:deficit-geom}.
From \eqref{mappo}, we expect the rescaled metric \eqref{deficitg} to be mapped to the Poincar\'e metric
\ba
ds^2=R^2\left(\frac{d\tilde{z}^2-d\tilde{t}^2+d\tilde{x}^2}{\tilde{z}^2}\right).
\ea
Let us focus on the relation
\ba
\tan\tilde{\theta}=\frac{\alpha^2 -\tilde{z}^2-\tilde{x}^2+\tilde{t}^2}{2\alpha\tilde{x}},
\label{tan}
\ea
By dividing into two cases according to the value of  ${\chi}<1$.

First, let us consider 
$\frac{1}{2}\le{\chi}$
as shown in a) in Fig.\ref{fig:deficit-geom}, where we need some case analyses.
For $-\frac{\pi}{2}<\ti{\theta}<\frac{\pi}{2}$, $\cos\tilde{\theta}\ge 0 \Leftrightarrow \tilde{z}^2+\tilde{x}^2-\tilde{t}^2\le \alpha^2$, there is no constraint on $\tan \tilde{\theta}$ and the conformal boundary region  $-\frac{\pi}{2}<\ti{\theta}<\frac{\pi}{2}$ is mapped to the conformal boundary region $x^2< t^2 +\alpha^2$. On the other hand, for $\cos\tilde{\theta} < 0$, $\tan \tilde{\theta}$ is constrained due to \eqref{tildtheta} as follows.
\ba
\begin{cases}
\tan\tilde{\theta} < \tan\left(\pi \chi\right) <0 & \text{for}\ \tilde{x}>0 \quad (\tilde{\theta}>\pi/2),\\
\tan\tilde{\theta} \ge \tan\left(-\pi \chi\right) >0 & \text{for}\ \tilde{x}<0 \quad (\tilde{\theta}<-\pi/2).
\end{cases}
\label{cond-for-tan}
\ea
In our setup, we are interested in the region $x>0$, which corresponds to $0<\ti{\theta}<\chi \pi$ and only consider that region for a while. Rewriting \eqref{cond-for-tan} in terms of Poincar\'e coordinates using \eqref{tan},
\ba
\tilde{x}\ge -\alpha \tan \left(\pi \chi \right)
+\sqrt{\alpha^2\tan^2 \left(\pi \chi\right)+\alpha^2+\tilde{t}^2}
\ea
and
\ba
\tilde{x}^2-\tilde{t}^2\ge \alpha^2
\ea
for $\tilde{x}\ge 0$ on the conformal boundary $\tilde{z}=0$. %Actually 
There is a region that satisfies both inequalities. The region $ \frac{\pi}{2}\geq \frac{\pi}\chi$ is mapped to the region in the tilde Poincar\'e coordinates satisfying the above inequalities.

Next, we consider $\chi\le 1/2$
as shown in b) in Fig.\ref{fig:deficit-geom}. We need
\ba
\tan\left(-\pi \chi\right) \le \tan\tilde{\theta}< \tan\left(\pi \chi\right).
\ea
However, we have no solution satisfying this in Poincar\'e coordinates.

From these analyses, we require
\ba
\frac{1}{2}<\chi = \sqrt{\frac{R^2-M}{R^2}} (\le 1) \quad \Longleftrightarrow \quad 0\le M <\frac{3}{4}R^2 (<R^2)
\label{restrictionM}
\ea
for the existence of the map from the rescaled coordinates $(\tilde{\tau}, \tilde{r},\tilde{\theta})$ to the Poincar\'e coordinates $(\tilde{t},\tilde{z},\tilde{x})$. Since we focus on the brane profile obtained in this manner, we restrict to \eqref{restrictionM} for $M<R^2$.
Within this domain, the boundary of the BCFT is indeed given by $\tilde{\theta}=0$ and $2\pi-\pi \chi$ for $M>0$.\footnote{For $M=0$, the boundary is given by $\tilde{\theta}=0,-\pi$.}

%\textbf{Puzzle:} 
The nontrivial endpoint of the EOW brane comes from $\tilde{\theta}=-\pi$. However, this is outside of the domain for $M>0$ as we chose $(-\pi<)-\pi \chi\le \tilde{\theta} < \pi \chi(<\pi)$. Within this domain, it seems that there is only one solution $\tilde{\theta}=0$. However, there should be two solutions from the viewpoint of the global patch. This paradox is resolved when you consider a proper origin of the angle such that the entire brane is within the fundamental domain.

The periodicity of $\tilde{\theta}$ given by $2\pi \chi$ is greater than $\pi/2$ regarding the previous analysis. This implies that by shifting the domain of $\tilde{\theta}$, we can always have two nodes in $\sin\tilde{\theta}$. It means the initial choice of $-\pi\le \theta<\pi$ may not be appropriate; however, for $1/2<\chi$ all configurations are connected continuously and we will not cause a problem by evading using periodicity as we have discussed.

\section{Branch choice in the identity conformal block}\label{app:branch}
In addition to the branch of $w^\prime$ discussed above, we need to choose the correct branch\index{branch} to analytically continue to the Lorentzian correlator \cite{Asplund:2014coa}. To see the monodromy\index{monodromy} around $z,\bar{z}=1$, we expand $z$ and $\bar{z}$ around $1$ provided $\ap$ is sufficiently small compared to other quantities. Since $e^{-i\varphi}$ factor only shifts the imaginary part, it is irrelevant to the discussion here. The cross ratios are expanded as follows:
\begin{align}
    z_{dis}&\propto\left(\frac{x-t-i\ap}{x-t+i\ap}\frac{x+t-i\ap}{x+t+i\ap}  \right)^\kappa &=1-\frac{4i\kappa x}{(x-t)(x+t)} \ap + O(\ap^2) \label{eq:1}\\
    \bar{z}_{dis}&\propto\left(\frac{x-t-i\ap}{x-t+i\ap}\frac{x+t-i\ap}{x+t+i\ap}  \right)^{-\kappa} 
    &=1+\frac{4i\kappa x}{(x-t)(x+t)} \ap + O(\ap^2)\label{eq:2}\\
    z_{con}&\propto\left(\frac{x_1-t+i\ap}{x_1-t-i\ap}\frac{x_2-t-i\ap}{x_2-t+i\ap}\right)^\kappa 
    &=1+\frac{2i\kappa (x_2-x_1)}{(x_1-t)(x_2-t)} \ap + O(\ap^2) \label{eq:3}\\
    \bar{z}_{con}&\propto\left(\frac{x_1+t+i\ap}{x_1+t-i\ap}\frac{x_2+t-i\ap}{x_2+t+i\ap}\right)^\kappa
    &=1+\frac{2i\kappa (x_2-x_1)}{(x_1+t)(x_2+t)} \ap + O(\ap^2). \label{eq:4}
\end{align}

When the sign of the imaginary part changes, $z$ or $\bar{z}$ moves to another sheet. From \eqref{eq:1}, $z_{dis}\rightarrow e^{-2\pi i} z_{dis}$ for $t\gg x$. From \eqref{eq:2}, $\bar{z}_{dis}\rightarrow e^{2\pi i} \bar{z}_{dis}$ for $t\gg x$. These branches for the disconnected entropy correspond to $m=0$ to $m=1$ in \eqref{eq:dis-first-br} and \eqref{eq:dis-next-br}. The transition between the two branches is determined so that the whole expression is continuous.

From \eqref{eq:3}, $z_{con}\rightarrow e^{2\pi i} z_{con}$ for $x_1\ll t\ll x_2$. Finally, from \eqref{eq:4}, there is no change in the branch as the sign of the imaginary part is always positive. For the connected entropy, by considering the Euclidean branch corresponding to the dominant OPEs\index{OPE |see operator product expansion} due to the approximation of the conformal block\index{conformal block}. Then, it should cancel the additional phase during $x_1\ll t\ll x_2$ and leads to the trivial branch \eqref{eq:conn-EE-CFT}.
\chapter{Appendix for Chapter V}
\ifpdf
    \graphicspath{{Chapter3/Figs/Raster/}{Chapter3/Figs/PDF/}{Chapter3/Figs/}}
\else
    \graphicspath{{Chapter3/Figs/Vector/}{Chapter3/Figs/}}
\fi

\section{Tensor network representation of quantum mechanics}\label{app:TN-rep}
It is useful to introduce a graphical representation of states and operators in quantum mechanics for various simplifications. In particular, such representations are known as \textbf{tensor networks}\index{tensor networks} in the context of quantum many-body systems. In the following, the basis of Hilbert space is fixed unless otherwise noted.

A ket (bra) is specified by a (co)vector. There is a one-to-one map with a fixed set of basis:
\begin{align}
    \ket{\psi}&=\psi^i \ket{\bm{e}_i}\mapsto \psi^i \\
    \bra{\psi}&=\bra{\bm{e}^i} \psi_i \mapsto \psi_i ,
\end{align}
where $\{\ket{\bm{e}_i}\}_i$ and $\{\bra{\bm{e}^i}\}_i$ are basis vectors for $\mathcal{H}$ and $\mathcal{H}^\ast$, respectively. The tensor network representation of a state (either bra or ket) is
\begin{equation}
\includegraphics[width=1cm, valign=c, clip]{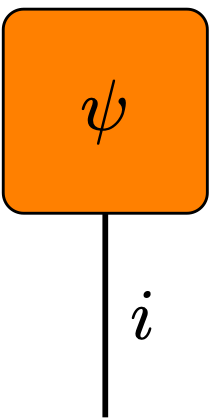}\quad \text{or}\quad
\includegraphics[width=2cm, valign=c, clip]{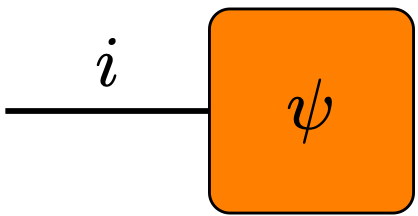}
\label{eq:ket-TN}
\end{equation}
as a (co)vector has one free index $i$. A line carrying an index is called a leg or \textbf{bond}\index{bond}. Although we will not care about the orientation of the leg, If we write the ket like \eqref{eq:ket-TN}, the bra is denoted by a tensor with a leg in the opposite orientation,
\begin{equation}
\includegraphics[width=1cm, valign=c, clip]{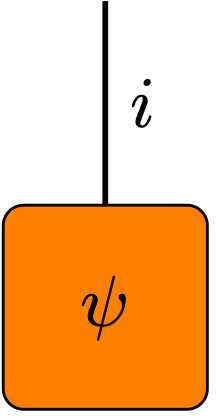}\quad \text{or}\quad
\includegraphics[width=2cm, valign=c, clip]{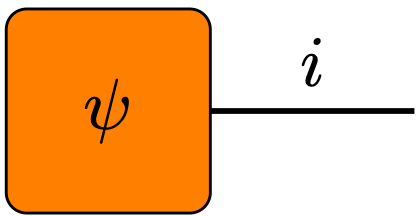}.
\end{equation}
We usually omit the index and simply write the ket vector as
\begin{equation}
    \ket{\psi}=\includegraphics[width=1cm,valign=c,clip]{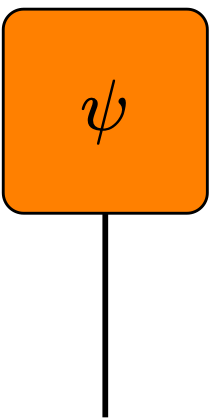}.
\end{equation}

An operator $\hat{O}:\mathcal{H}\rightarrow\mathcal{H}$ has a tensor network representation in the same manner. Since
\begin{equation}
    \hat{O}=\ket{\bm{e}_i}O^i_{\ j} \bra{\bm{e}_j},
\end{equation}
it is described by a tensor with two legs: a matrix $O^i_{\ j}$:
\begin{equation}
\includegraphics[width=1.2cm, valign=c, clip]{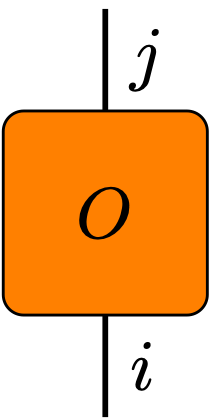}
\quad\text{or}\quad
\includegraphics[width=2cm, valign=c, clip]{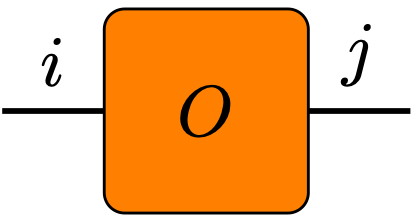}\quad .
\end{equation}

An EPR state \eqref{eq:epr-state} is denoted by a curve
\begin{equation}
    \frac{1}{\sqrt{d}}\includegraphics[width=1cm, valign=c, clip]{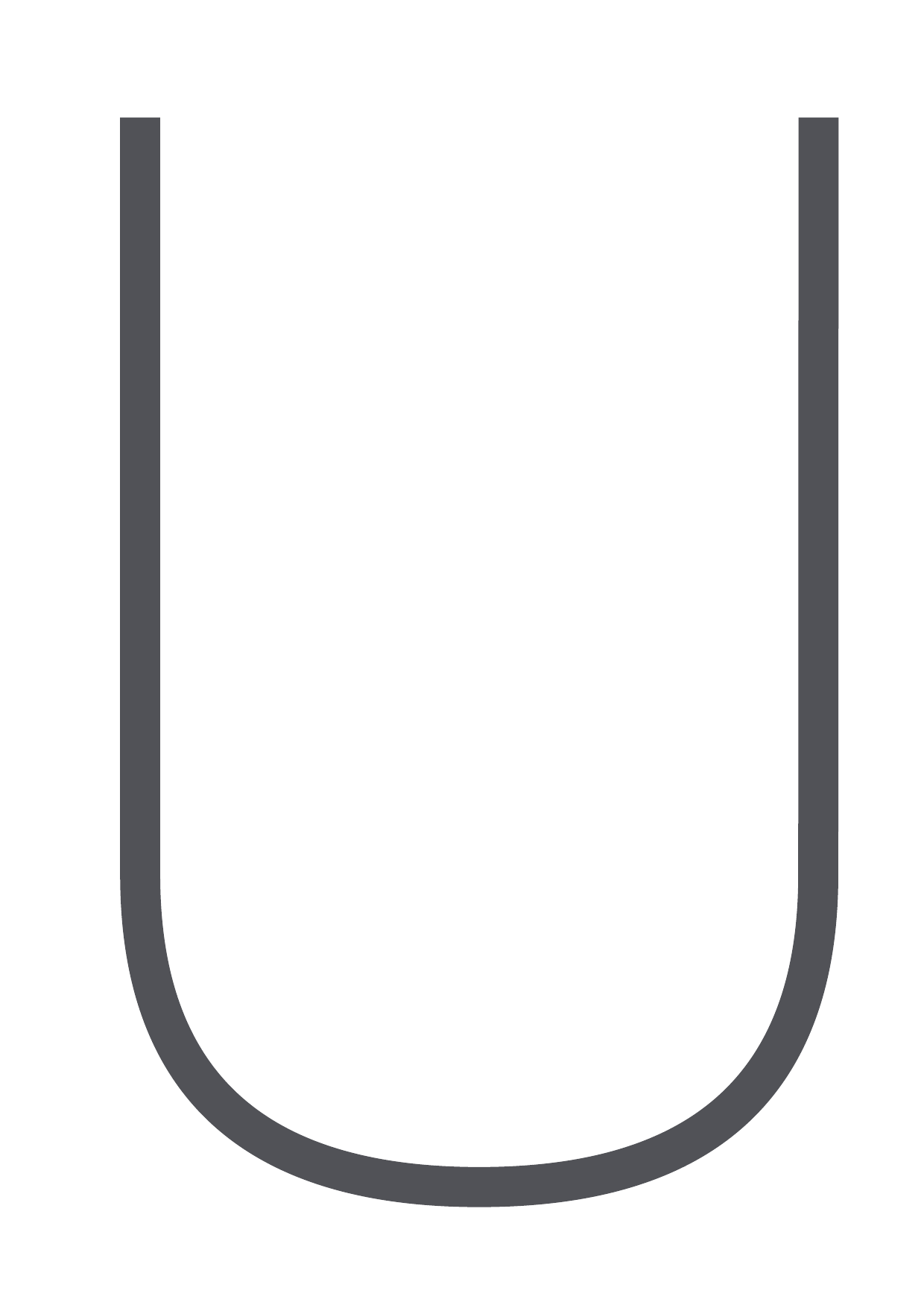}.
    \label{eq:epr-tensor}
\end{equation}

To contract indices, we can just connect the two corresponding legs. For example,
\begin{equation}
    \bar{\phi}^i=T^i_{\ jk} \phi^{(A)\, j}\phi^{(B)\, k}
    \quad
    \Leftrightarrow
    \quad
    \tikzset{every picture/.style={line width=0.75pt}}
    \begin{tikzpicture}[x=0.75pt,y=0.75pt,yscale=-1,xscale=1,baseline=-55]
    %\path (68,68); %set diagram left start at 68, and has height of 148
    \draw  [color={rgb, 255:red, 0; green, 0; blue, 0 }  ,draw opacity=0.33 ][fill={rgb, 255:red, 74; green, 144; blue, 226 }  ,fill opacity=0.67 ] (104,27.6) .. controls (104,23.95) and (106.95,21) .. (110.6,21) -- (130.4,21) .. controls (134.05,21) and (137,23.95) .. (137,27.6) -- (137,47.4) .. controls (137,51.05) and (134.05,54) .. (130.4,54) -- (110.6,54) .. controls (106.95,54) and (104,51.05) .. (104,47.4) -- cycle ;
    \draw  [color={rgb, 255:red, 0; green, 0; blue, 0 }  ,draw opacity=0.33 ][fill={rgb, 255:red, 248; green, 231; blue, 28 }  ,fill opacity=1 ] (140.99,109) -- (106.01,68.98) -- (176.01,69.02) -- cycle ;
    \draw  [color={rgb, 255:red, 0; green, 0; blue, 0 }  ,draw opacity=0.33 ][fill={rgb, 255:red, 74; green, 144; blue, 226 }  ,fill opacity=0.67 ] (146,27.6) .. controls (146,23.95) and (148.95,21) .. (152.6,21) -- (172.4,21) .. controls (176.05,21) and (179,23.95) .. (179,27.6) -- (179,47.4) .. controls (179,51.05) and (176.05,54) .. (172.4,54) -- (152.6,54) .. controls (148.95,54) and (146,51.05) .. (146,47.4) -- cycle ;
    \draw    (121,54) -- (121,69) ;
    \draw    (162,54) -- (162,69) ;
    \draw    (140.99,109) -- (140.99,124) ;
    \draw  [color={rgb, 255:red, 0; green, 0; blue, 0 }  ,draw opacity=0.33 ][fill={rgb, 255:red, 74; green, 144; blue, 226 }  ,fill opacity=0.67 ] (19,52.6) .. controls (19,48.95) and (21.95,46) .. (25.6,46) -- (45.4,46) .. controls (49.05,46) and (52,48.95) .. (52,52.6) -- (52,72.4) .. controls (52,76.05) and (49.05,79) .. (45.4,79) -- (25.6,79) .. controls (21.95,79) and (19,76.05) .. (19,72.4) -- cycle ;
    \draw    (34.99,79) -- (34.99,94) ;
    \draw (141,89) node    {$T$};
    \draw (123,35) node    {$\phi ^{( A)}$};
    \draw (165,35) node    {$\phi ^{( B)}$};
    \draw (171,62) node  [font=\footnotesize]  {$k$};
    \draw (131,62) node  [font=\footnotesize]  {$j$};
    \draw (142,133) node  [font=\small]  {$i$};
    \draw (35,61) node    {$\overline{\phi }$};
    \draw (36,103) node  [font=\small]  {$i$};
    \draw (78,68) node    {$=$};
    \draw (200,80) node  {.};
\end{tikzpicture}
\end{equation}
In the basis-independent notation, this corresponds to $\ket{\overline{\phi}}=\hat{T}\ket{\phi^{(A)}}\otimes\ket{\phi^{(B)}}$. $\hat{T}$ is a linear map from $\mathcal{H}\otimes\mathcal{H}$ to $\mathcal{H}$. The contracted indices are called \textbf{virtual}\index{virtual bonds} or \textbf{internal bonds}\index{internal bonds} while uncontracted indices carried by the original state in $\mathcal{H}$ are called \textbf{physical bonds}\index{physical bonds}.
The dimension of contracted indices is called the \textbf{bond dimension}\index{bond dimension} $\chi$.

%With these graphical notations at hand, some identities in linear algebra can be understood intuitively.

\nomenclature[g-p]{$\chi$}{bond dimension}   

%\section{Holographic tensor networks}

%\subsection{MERA}
%[Milsted-Vidal]
%\subsection{Perfect tensor networks}
%\subsubsection{Operator pushing}

%\subsection{Random tensor networks}

%\section{Operational interpretation of pseudo entropy}

% ********************************** Back Matter *******************************
% Backmatter should be commented out, if you are using appendices after References
%\backmatter

% ********************************** Bibliography ******************************
\begin{spacing}{0.9}

% To use the conventional natbib style referencing
% Bibliography style previews: http://nodonn.tipido.net/bibstyle.php
% Reference styles: http://sites.stat.psu.edu/~surajit/present/bib.htm

\bibliographystyle{JHEP}
\cleardoublepage
\bibliography{References/references2,References/EE,References/Island,References/master,References/HED} % Path to your References.bib file

% If you would like to use BibLaTeX for your references, pass `custombib' as
% an option in the document class. The location of 'reference.bib' should be
% specified in the preamble.tex file in the custombib section.
% Comment out the lines related to natbib above and uncomment the following line.

%\printbibliography[heading=bibintoc, title={References}]

\end{spacing}

% ********************************** Appendices ********************************

%\begin{appendices} % Using appendices environment for more functunality

%\include{Appendix1/appendix1}
%\include{Appendix2/appendix2}

%\end{appendices}

% *************************************** Index ********************************
%\addcontentsline{toc}{chapter}{\indexname}
\fontsize{12}{11}\selectfont\raggedright
\printthesisindex % If index is present

\end{CJK}
\end{document}